\newcounter{mnotecount}[section]
\renewcommand{\themnotecount}{\thesection.\arabic{mnotecount}}
\newcommand{\mnote}[1]%{}
{\protect{\stepcounter{mnotecount}}$^{\mbox{\footnotesize
$%\!\!\!\!\!\!\,
\bullet$\themnotecount}}$ \marginpar{%\color{red}%
\raggedright\tiny\em
$\!\!\!\!\!\!\,\bullet$\themnotecount: #1} }
\newtheorem{theorem}{\sc  Theorem\rm}[section]
\newtheorem{corollary}[theorem]{\sc  Corollary\rm}
\newtheorem{definition}[theorem]{\sc  Definition\rm}
\newtheorem{lemma}[theorem]{\sc Lemma\rm}
\newtheorem{proposition}[theorem]{\sc Proposition\rm}
\newtheorem{remark}[theorem]{\sc Remark\rm}
\newcommand{\ol}[1]{\overline{#1}{}}
\newcommand{\ul}[1]{\underline{#1}{}}
\newcommand{\jlcax}[1]{}
\newcommand{\eean}{\nonumber\end{eqnarray}}
\newcommand{\kk}[1]{}%{\mnote{{\bf If we consider the KK case:} #1}}
\newcommand{\beq}{\begin{equation}}
\newcommand{\FS}       %{F_1} %
                  {F}
\newcommand{\HS} %{F_2}
       {H_{\mbox{\scriptsize volume}}}
\newcommand{\eeal}[1]{\label{#1}\end{eqnarray}}
\newcommand{\bed}{\begin{deqarr}}
\newcommand{\eed}{\end{deqarr}}
\newcommand{\bedl}[1]{\begin{deqarr}\label{#1}}
\newcommand{\eedl}[2]{\arrlabel{#1}\label{#2}\end{deqarr}}
\newcommand{\bel}[1]{\begin{equation}\label{#1}}
\newcommand{\bea}{\begin{eqnarray}}
\newcommand{\bean}{\begin{eqnarray}\nonumber}
\newcommand{\beal}[1]{\begin{eqnarray}\label{#1}}
\newcommand{\eea}{\end{eqnarray}}
\def\typeout{:<+ #.tex}\include{#}\typeout{:<-}1{\typeout{:<+ #1.tex}\include{#1}\typeout{:<-}}
\newcommand{\qed}{\hfill $\Box$ \medskip}
\newcommand{\be}{\begin{equation}}
\newcommand{\eeq}{\end{equation}}
\newcommand{\ee}{\end{equation}}
\newcommand{\beqa}{\begin{eqnarray}}
\newcommand{\eeqa}{\end{eqnarray}}
\newcommand{\beqan}{\begin{eqnarray*}}
\newcommand{\eeqan}{\end{eqnarray*}}
\newcommand{\ba}{\begin{array}}
\newcommand{\ea}{\end{array}}
\newcommand{\mcM}{{\mycal M}}
\newcommand{\mcD}{{\mycal D}}
\newcommand{\scri}{{\mycal I}}%
\newcommand{\warn}[1]%{}%{}
{\protect{\stepcounter{mnotecount}}$^{\mbox{\footnotesize
$%\!\!\!\!\!\!\,
\bullet$\themnotecount}}$ \marginpar{%\color{red}%
\raggedright\tiny\em
$\!\!\!\!\!\!\,\bullet$\themnotecount: {\bf Warning:} #1} }
\newcommand{\eq}[1]{(\ref{#1})}
\newcommand{\ptc}[1]{\mnote{{\bf ptc:}#1}}
\newcommand{\beqar}{\begin{deqarr}}
\newcommand{\eeqar}{\end{deqarr}}
\newcommand{\beaa}{\begin{eqnarray*}}
\newcommand{\eeaa}{\end{eqnarray*}}
\DeclareFontFamily{OT1}{rsfs}{}
\DeclareFontShape{OT1}{rsfs}{m}{n}{ <-7> rsfs5 <7-10> rsfs7 <10-> rsfs10}{}
\DeclareMathAlphabet{\mycal}{OT1}{rsfs}{m}{n}
\global\let\AddToReset=\@addtoreset}
\global\let\AddToReset=\@addtoreset}
\global\let\AddToReset=\@addtoreset}
\newcommand{\rnabla}{\not\hspace{-.2em}\nabla}
\newcommand{\rGamma}{\not\hspace{-.1em}\Gamma}
\newcommand{\re}{\not\hspace{-.1em}e}
\newcommand{\rsigma}{\not\hspace{-.1em}\sigma}
\begin{document}

\title{On the smoothness of the critical sets of the cylinder at spatial infinity in vacuum spacetimes\protect\thanks{Preprint UWThPh-2018-14}}
\author{
Tim-Torben Paetz
\thanks{E-mail:  Tim-Torben.Paetz@univie.ac.at}  \vspace{0.5em}\\  \textit{Gravitational Physics, University of Vienna}  \\ \textit{Boltzmanngasse 5, 1090 Vienna, Austria }
}
\maketitle

\maketitle

%\vspace{-0.2em}

\begin{abstract}
We analyze the appearance of logarithmic terms at the critical sets of Friedrich's cylinder representation of spatial infinity.
It is shown that if the radiation field vanishes at all orders at the critical sets no logarithmic terms are produced in the formal expansions.
Conversely, it is proved  that, under the additional hypothesis that the spacetime has constant (ADM) mass aspect and vanishing dual (ADM)  mass aspect, this condition
is also necessary for a spacetime to admit a smooth representation at the critical sets.
\end{abstract}

\tableofcontents

\section{Introduction}

\subsection{Asymptotically flat spacetimes}

The notion of \emph{asymptotic flatness} is a delicate issue in general relativity due to the absence of non-dynamical background fields.
 Penrose \cite{p1, p2} provided an elegant geometric approach to resolve this issue, cf.\ \cite{ah, geroch}. A spacetime is regarded as asymptotically flat if it admits
a \emph{smooth conformal compactification at infinity}. By this it is meant that, after some appropriate conformal rescaling, one can attach a conformal boundary $\scri$ 
to the spacetime through which the rescaled metric admits a smooth extension.
The picture behind this notion  is that such an extension  is possible whenever the gravitational field admits an asymptotically Minkowski-like
fall-off behavior.

In this paper we are interested in the vacuum case $\widetilde R_{\mu\nu}=0$. Then $\scri$ is a null hypersurface. 
Imposing the ``natural'' topological restriction $\scri \cong \mathbb{R}\times S^2$ one can show that the Weyl tensor needs to vanish on a smooth $\scri$ (cf.\ \cite{geroch, p2}).
Even more, the smoothness of $\scri$ is related to specific peeling properties of the Weyl tensor, cf.\ \cite{F_17} for an overview.
This raised the question whether the Einstein equation are compatible with Penrose's notion of asymptotic flatness in the sense that they admit
a sufficiently large class of solutions for which the Weyl tensor shows this peeling behavior.
Meanwhile, Klainerman  and Nicol\`o have shown that appropriate, sufficiently small asymptotically Euclidean Cauchy data generate
vacuum spacetimes where the Weyl tensor does  have  the peeling properties \cite{kn}, leaving the question open, whether a smooth $\scri$ is generated  as well.

Besides, Penrose's approach  provided a tool to construct asymptotically flat vacuum spacetimes.
Instead of studying the long-term behavior of the gravitational field by limiting processes one can start from the outset in a conformally rescaled
spacetime and work on bounded domains, which is also very convenient from a PDE point of view.
The relevant substitute to Einstein's field equations are Friedrich's conformal field equations (CFE) \cite{F1,F2}  which are equivalent to Einstein's if the conformal factor
is non-zero, but which remain regular where it vanishes. 
In particular, this conformal approach permits the formulation of an \emph{asymptotic initial value problem} where data are prescribed either on a portion of $\scri^-$ and an incoming 
null hypersurface \cite{kannar}, or on $\scri^-$ as the future light-cone of a regular point $i^-$ representing past timelike infinity \cite{F_T, ChPaetzInfCone}.
While this shows that there is a large class of asymptotically flat vacuum spacetimes one does not gain any insights how `generic' these solutions are, as the smoothness of $\scri$ is build in from the outset.

One therefore needs to study initial value problems where data are prescribed on ordinary (i.e.\ non-asymptotic) hypersurfaces.
To avoid difficulties at spatial infinity one is led to study, as a first step, a hyperboloidal Cauchy problem, where data are prescribed on a spacelike hypersurface which intersects
$\scri$ in a spherical cross-section.
Supposing that the relevant initial data for the CFE admit  smooth extensions through $\scri$, Friedrich proved local well-posedness \cite{F_hyp}.
For small data one can use Cauchy stability of the underlying symmetric hyperbolic system contained in the CFE to show that the solution is complete
and even admits a smooth future timelike infinity $i^+$ \cite{F_hyp2}.
However, it turns out that generic solutions to the vacuum constraint equations do not admit a smooth but a polyhomogeneous expansion at $\scri^-$,
and that certain mild regularity conditions need to be imposed on the asymptotic behavior of the freely prescribable ``seed'' data to end up with a solution
to the constraints which is smooth  at $\scri$ \cite{andersson, acf}.

The same phenomenon can be observed for a characteristic Cauchy problem with data on either a future light-cone or two transversally intersecting null hypersurfaces.
Assuming that the data for the CFE are smoothly extendable  through the cross-section, where the initial surface intersects $\scri$, local well-posedness holds
an the spacetime admits a piece of a smooth $\scri$ \cite{CCTW}.
Again, one finds that generically  solutions of the characteristic constraint equations constructed from smooth ``seed'' data develop logarithmic terms at $\scri$, while  certain
mild regularity  condition ensure that this does not happen \cite{ChPaetz2, ttp3}.

While these types of initial value problems permit the construction of  large classes of non-generic asymptotically flat vacuum spacetimes,
they also show that this is only possible if the leading order terms of the seed data are subject to certain regularity conditions.

To fully analyze this issue, one needs to construct asymptotically flat spacetimes from an ordinary Cauchy problem.
In fact, this has been done via  gluing techniques. Friedrich's results on the hyperboloidal Cauchy problem can be used to 
construct asymptotically flat, and in fact asymptotically simple (where, in addition, a certain completeness condition is imposed), vacuum spacetimes from  Cauchy data which are glued to stationary data near spatial infinity~\cite{cd}.

All these constructions have in common that they circumvent issues arising at spatial infinity by either ignoring this part of the spacetime completely, or
choosing the spacetime to be stationary near $i^0$.
However, in order to gain a full understanding of the obstructions coming along with Penrose's notion of asymptotic flatness and whether this definition is broad enough to include all cases of physical interest,  an understanding of the behavior
of the gravitational field at spatial infinity is essential.

The results described above might be viewed  as an indication that a polyhomogeneous $\scri$ where
the asymptotic expansion of the gravitational field is allowed to have logarithmic terms is somewhat more natural and generic.
However, to give a fully satisfactory answer, spatial infinity, where Cauchy surfaces with asymptotically Euclidean data ``touch'' infinity, needs to be taken into account. 
In a recent breakthrough result Hintz and Vasy \cite{hiva} were able to construct spacetimes with a polyhomogeneous $\scri$
from asymptotically Euclidean Cauchy data sets. 
They use a gauge where log terms are inevitably produced unless the ADM mass vanishes, so a priori their result does not
give hints concerning the smoothness of  $\scri$ (in an appropriate gauge).

Ideally one would like to mimic the results on the hyperboloidal Cauchy problem and construct asymptotically Euclidean Cauchy data 
which admit  smooth extensions through spatial infinity and use standard result on symmetric hyperbolic systems.
However, it is well-known
that in the ``classical'' representation of spatial infinity as a point $i^0$, this point cannot be regular unless the ADM mass vanishes.
This is somehow intuitive as %the whole 
spatial infinity is compressed to a single point.
It  led Friedrich to introduce a blow-up of this point to a cylinder $I$ \cite{F_i0}.
While the metric is singular there, the frame field is not, and it becomes possible to construct non-trivial asymptotically Euclidean Cauchy data sets such that all the fields which appear in the CFE admit smooth expansions through the 2-sphere $I^0$ where the Cauchy surface intersects $I$ \cite{F_i0, kroon0}.

Nevertheless, some difficulties remain (cf.\ \cite{F3, kroon_book} for an overview). The hyperbolicity of the CFE breaks down at the critical sets $I^{\pm}$ where the cylinder
``touches'' $\scri^{\pm}$. Related to this is the following property: The CFE provide inner equations on the cylinder so that
in principle all fields (including all radial derivatives) can be determined on the cylinder from the (asymptotic part of the) Cauchy data. It turns out, though, that, in general, logarithms arise at the critical sets.
While this seems similar to e.g.\ the hyperboloidal Cauchy problem, the problem is in fact much more severe: In the latter case, once one  makes
sure that  no logarithmic terms arise when solving the constraints equations, i.e.\   that the restriction of
the relevant fields to the initial surface admits smooth extension through $\scri$, no logarithmic terms arise
in higher order transverse derivatives  (in an appropriate gauge).
At the critical sets this is no longer true. In principle, logarithmic terms can arise at any order, while all lower orders admit smooth expansions at $I^-$.

A priori, non-smoothness of the critical sets is irrelevant as the critical sets do not belong to $\scri$. However, one expects that these log terms spread over to $\scri$ and therefore produce a $\scri$ which is not smooth but  merely  polyhomogeneous.
In the spin-2 case this can explicitly be shown \cite{F_spin}, and there is no reason to expect that the situation is better in the non-linear case.
For this reason one would like to understand the meaning  of logarithmic  terms at the critical sets, and characterize initial data 
which admit smooth extensions through $I^{\pm}$.

Friedrich \cite{F_i0} (in the time-symmetric case)  and Valiente Kroon \cite{kroon0} (in the general case) analyzed this issue starting from an ordinary Cauchy problem.
They derived a couple of necessary conditions on the Cauchy data to get rid of the log terms.
One can also establish  sufficient conditions for the non-appearance of logarithmic terms at the critical sets, and at $\scri^-$:
Dain \cite{dain} has shown, cf.\ \cite{d_s}, that an asymptotically Euclidean  spacetime which is stationary near spatial infinity admits a smooth $\scri$ (at least near spacelike infinity).
Friedrich \cite{F_i0_2} showed that Cauchy data which are static also admit smooth critical sets, a result which has been generalized by
Ace\~na and  Valiente Kroon \cite{acena} to the stationary case.
One may regard this as another indication that there is a relation between the appearance of log terms at the critical sets and at $\scri$.
Their results  imply  that  the inner equations on the cylinder do not produce logarithmic terms if the data are
merely asymptotically stationary.

This raises the question  whether asymptotic stationarity (staticity in the time-symmetric~case) is also \emph{necessary} for the non-appearance of log terms.
That the notion of asymptotic staticity  plays a distinguished in view of the appearance of log terms at the critical sets was first observed in \cite{F_i0}.
Some evidence that this might be true is provided by the result~\cite{kroon1}  that  staticity is necessary for time-symmetric, conformally flat data.

The main purpose of this paper is to analyze these kind of  issues from $\scri^-$. More precisely, we consider an asymptotic characteristic
initial value problem, where data are prescribed on $\scri^-$ (the data on the incoming null hypersurface will be largely irrelevant),
and  analyze the appearance of log terms approaching the critical set $I^-$ from $\scri^-$.
 Assuming that the data  generate a spacetime with a smooth cylinder we will then analyze the  appearance of log terms  when approaching  $I^-$ from $I$, as well, where we assume that
the data for the transport equations on $I$ are induced  by the limit of the corresponding fields on $\scri^-$ to $I^-$.

A main  advantage of this approach as compared to the ordinary Cauchy problem is that the critical set arises as a future boundary of 
the initial surface, whence it is easier to control the fields there. Another advantage is that the
no-logs conditions one obtains from $\scri^-$ turn out to be somewhat easier to handle. Finally, the no-logs conditions
 depend crucially on the radiation field, which essentially provides  the freely prescribable data near $I^-$ (together with certain ``integration functions''
prescribed at $I^-$, cf.\ Appendix~\ref{app_ADM}).

Omitting some technical details our main result can be stated as follows:

\begin{theorem}
\begin{enumerate}
\item[(i)] Assume that a smooth vacuum spacetime with smooth $\scri^-$, $I$ and $I^-$  has been given
and assume further that it has constant (ADM) mass aspect and vanishing dual  (ADM)  mass aspect,%
\footnote{The dual mass aspect introduced later might be regarded as a generalized NUT-like parameter.}
 then the radiation field vanishes at $I^-$
at any order.
%\tim{one should mention how this is related to asymptotic stationarity}
\item[(ii)] Conversely, the restriction of all the fields appearing in the CFE to both $\scri^-$ and $I$, and all derivatives thereof, admit smooth extensions through  $I^-$ if the radiation field vanishes there at any order.
\end{enumerate}
\end{theorem}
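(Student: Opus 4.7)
The plan is to work in Friedrich's cylinder-adapted gauge near $I^-$ and analyze formal Taylor expansions of all CFE unknowns in a parameter $\tau$ measuring distance along $\scri^-$ from $I^-$, with $I^- = \{\tau = 0\}$. Every relevant field is expanded as $F \sim \sum_{k\geq 0} F_k\,\tau^k$ with $F_k$ a (possibly spin-weighted) function on the 2-sphere $I^-$. Inserting these expansions into the CFE restricted to $\scri^-$ and collecting at order $k$ produces, after elimination, a linear equation on $S^2$ of the schematic form
\begin{equation*}
\mathcal{L}_k F_k \,=\, S_k\bigl(F_0,\ldots,F_{k-1};\psi_0,\ldots,\psi_k\bigr),
\end{equation*}
where $\psi_j$ denotes the $j$-th Taylor coefficient at $I^-$ of the radiation field. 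Logarithmic obstructions appear precisely when $S_k$ fails to be orthogonal (on $S^2$) to the kernel of $\mathcal{L}_k$.

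For the sufficient direction (ii) I would argue by induction on $k$. Granted that every $\psi_j$ vanishes, the source $S_k$ is constructed solely from previously determined (inductively smooth) $F_j$; using the wave-like Bianchi equations on $\scri^-$ repeatedly, one shows that the kernel-projection of $S_k$ is in fact forced to vanish by the inductive hypothesis. Consequently $F_k$ is smooth on $I^-$ for all $k$, so every CFE field restricted to $\scri^-$ extends smoothly through $I^-$. These smooth restrictions then furnish initial data for the inner transport equations along $I$, and the identical structural mechanism --- the absence of resonances between the relevant spherical operators and the powers of $\tau$ --- propagates smoothness both radially into $I$ and to all radial derivatives at $I^-$.

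For the necessary direction (i) I would revisit the no-logs obstructions order by order. At order $k$ the obstruction can be put in the schematic form
\begin{equation*}
P_k[\psi_k] \,+\, Q_k\bigl[m,N;\psi_0,\ldots,\psi_{k-1}\bigr] \,=\, 0,
\end{equation*}
where $P_k$ is a spin-weighted linear operator acting on $\psi_k$ and $Q_k$ collects the contributions of the ADM mass aspect $m$, the dual mass aspect $N$, and the lower-order $\psi_j$. The hypothesis that $m$ be constant and $N$ vanish removes exactly those spherical-harmonic components of $Q_k$ whose presence could otherwise compensate the $S^2$-projection of $P_k[\psi_k]$. An induction on $k$, using the vanishing of all $\psi_j$ with $j<k$ already secured, then reduces the obstruction to $P_k[\psi_k]\equiv 0$; checking that $P_k$ has trivial kernel on functions of the appropriate spin weight forces $\psi_k = 0$.

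The principal obstacle I anticipate is the precise identification of the algebraic structure of $P_k$, $Q_k$ and $\mathcal{L}_k$ at arbitrary order, and in particular the verification that the interplay between the spin and conformal weights produces the clean decoupling claimed above once $m$ is constant and $N$ vanishes. This amounts to a delicate bookkeeping exercise with spin-weighted spherical harmonics and the operator hierarchy coming from the Bianchi identities on the cylinder; establishing this structural claim uniformly in $k$ is what makes both directions of the theorem operate.
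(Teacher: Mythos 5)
Your outline of the sufficiency direction (ii) is broadly compatible with the paper's strategy, but the mechanism you invoke is not quite the one that works: in the paper one first reduces (via the gauge-independence result of Section~\ref{sec_gauge_ind}) to an asymptotically Minkowski-like conformal Gauss gauge at each order, and then shows (Lemma~\ref{lemma_estimates1}) that with a radiation field vanishing to all orders every field on $\scri^-$ is a \emph{polynomial in $r$ of degree at most $n+1$} plus an $\mathfrak{O}(r^{\infty})$ remainder, so the coefficient at the critical order $r^{n+2}$ in the $\partial_\tau^n V^-_{AB}$ equation vanishes identically rather than merely being orthogonal to a kernel; a parallel decay statement (Lemma~\ref{lemma_estimates2}) handles the approach from the cylinder. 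These are refinements of your sketch rather than obstructions to it.

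The necessity direction (i), however, contains a genuine gap. You propose to reduce the order-$k$ obstruction to $P_k[\psi_k]=0$ and conclude $\psi_k=0$ by "checking that $P_k$ has trivial kernel." But the operator governing the no-logs condition at order $n$ is $\prod_{\ell=0}^{n+1}\bigl(\Delta_s+\ell(\ell+1)\bigr)$ acting on the Hodge scalars of $v^{(n+4)}_A$, and this operator has a \emph{large} kernel (spherical harmonics with $2\le\ell\le n+1$); for the linear spin-2 equation (Proposition~\ref{prop_spin2}) exactly these kernel elements yield non-trivial radiation fields satisfying every no-logs condition, so no argument of the type you describe can force $\psi_k=0$. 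The actual mechanism is a cross-order, mass-driven one: a kernel element $\Xi^{(m_0+2)}_{AB}$ that passes the no-logs condition at order $m_0-2$ re-enters, multiplied by $M$, the source of the no-logs condition one order higher, and one must compute this next-to-leading contribution explicitly and verify that the resulting polynomials $\mathcal{P}^{\widehat\ell}_{m_0}$ have no integer roots in the admissible range $2\le\widehat\ell\le m_0-1$ --- a lengthy computation combining an asymptotic estimate for large $\widehat\ell$ with a finite computer check, which is the bulk of Section~\ref{section8}. Your reading of the hypotheses also inverts their role: constancy of $M$ and vanishing of $N$ do not "remove compensating harmonics"; they make the transport equations on $I$ explicitly solvable so that the cross-order coefficient can be computed, and it is precisely the non-vanishing of $M$ that produces the obstruction (all the polynomials are proportional to $M$). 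Without this cross-order analysis your induction cannot close.
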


\begin{remark}
{\rm
This raises the question, to be analyzed elsewhere, whether an asymptotic characteristic Cauchy problem with data on $\scri^-$ and some incoming null hypersurface
generates a vacuum spacetime which admits a smooth cylinder with smooth critical sets if the prescribed radiation field vanishes at $I^-$ at any oder.
}
\end{remark}

\subsection{Overview}

In Section~\ref{section1} we recall the conformal field equations as well as the conformal Gauss gauge, 
which  provides a natural, geometric gauge in the  conformal setting we will use. In particular  we explain how a conformal Gauss
gauge is constructed from $\scri^-$.
We  further describe the additional gauge data which one may specify on $\scri^-$.
Finally we  provide  the evolution equations in this gauge, and, as we will analyze the constraint equation using adapted null coordinates, we give the relation between frame components and coordinate components on $\scri^-$.

In Section~\ref{section2} we derive conditions on the gauge data at $\scri^-$ in order to end up with a spacetime which admits
a finite representation of spatial infinity. Starting from the Minkowski spacetime as an explicit example we then introduce the
cylinder representation of spatial infinity. We also extract another gauge freedom. The section is closed by defining a what we call
``(weakly) asymptotically Minkowski-like conformal Gauss gauge'', were a certain asymptotic behavior of the gauge data at $\scri^-$ is
imposed when approaching $I^-$ which turns out to be very convenient for the subsequent analysis.

Section~\ref{section3} is devoted to an analysis of the behavior of all the unknowns which appear in the CFE and all
transverse derivatives thereof when approaching the critical set $I^-$ from $\scri^-$. In the zeroth order this can and will be done explicitly. For higher order transverse derivatives we will merely derive the structure of the equations and of the no-logs condition, as the gauge will be  still quite general at this stage.

In Section~\ref{section4} we provide a corresponding analysis  of the behavior of the fields and their radial derivatives 
 when approaching the critical set $I^-$ from $I$.
In the zeroth order one recovers the Minkowskian values. We will also study the first-order radial derivatives explicitly
and  derive the structure of the equations and of the no-logs condition for radial derivatives of higher orders.
The no-logs conditions adopt here a somewhat more difficult form, as they are PDEs rather than ODEs.
However, expanding the fields in spherical harmonics they  can be transformed into  hypergeometric 
ODEs which are analyzed.

In Section~\ref{sec_gauge_ind} we show that transforming a spacetime which admits a smooth representation of $\scri^-\cup I^-\cup I$ from one weakly asymptotically Minkowski-like conformal Gauss gauge
into another one is accompanied by a  smooth (coordinate and conformal)  transformation unless the gauge data are badly chosen in such a way that the
congruence of conformal geodesics produces conjugate point directly on $I^-$.
From this we conclude  that if a spacetime is smooth near $I^-$  in one weakly asymptotically Minkowski-like conformal Gauss gauge
then the same is true in any other one.

This allows us to restrict attention in  Section~\ref{section6} to a more restricted gauge which we call 
 asymptotically Minkowski-like conformal Gauss gauge at each  order (which is also introduced in Section~\ref{section2}).
In this particular gauge we then show that for a radiation field which vanishes asymptotically at $I^-$ at any order
no logarithmic terms are produced when approaching $I^-$ from both $\scri^-$ and $I$.
This can be done because in this gauge one can show that simply no terms arise at the critical orders which produce
logarithmic terms. On $\scri^-$ it can be shown that the fields are basically polynomials of a sufficiently low degree, supplemented by terms which decay arbitrarily fast, while on $I$ all fields decay sufficiently fast at $I^-$.
It is  illuminating to bring the Kerr metric into such a gauge (at least up to some order).

 In  Section~\ref{section7} we study the massless spin-2 equation. On the one hand this provides a toy model for the full non-linear
case where the whole analysis concerning the appearance of log terms can be done very explicitly. In particular one finds if-and-only-if conditions for the  appearance of logarithmic terms at the critical sets. On the other hand this allows us to compare
the no-logs conditions of the spin-2 case with the general case, which gives some insights what the sources of additional or more restrictive no-logs conditions are in the general case (or even on a curved  background).

 In  Section~\ref{section8} we restrict attention to data with constant (ADM) mass aspect and vanishing (ADM) dual mass aspect (by which we mean
the limit of the Bondi mass and dual mass aspect to $I^-$).
We will explain why this simplifies the analysis considerably. We then explicitly show that the no-logs conditions are indeed
more restrictive as compared to the spin-2 case (in fact this is already known from the ordinary Cauchy problem \cite{kroon2}).
In a next step  we  show that logarithmic terms are inevitably produced, unless the radiation field vanishes at
any order at $I^-$, which  does not need to be the case in the spin-2 case.
This requires some rather lengthy computations as also next-to-leading order terms need to be taken into account.

In  Appendix~\ref{app_app} we review the constraint equations in adapted null coordinates for the
asymptotic characteristic initial value problem. We also provide a slight variation of the standard approach which allows
to shift the freedom to prescribe certain data on the intersection sphere of the two null hypersurfaces to $I^-$, so that
as many data as possible are directly prescribed on $\scri^-$ and its future boundary $I^-$.

\subsection{Notation}

Here let us given an overview over some frequently used notation:
\begin{enumerate}
\item $\eta=\mathrm{diag}(-1,1,1,1)$
\item $(.)_{\mathrm{tf}}$ denotes the trace-free part
\item $g=\Theta^2 \widetilde g$, where $\widetilde g$ is the physical metric, the inverse metric is denoted by $g^{\sharp}$
\item $\scri^{\pm}$ denotes future and past null infinity, $i^0$ spatial infinity, in particular if represented as a point, $I$ denotes the cylinder representation of spatial infinity, $I^{\pm}$  the critical sets where $\scri^{\pm}$ and $I$ ``touch''
\item
Coordinate spacetime indices are denoted by $\mu,\nu,\sigma,\dots$, spatial coordinate indices are denoted by $\alpha,\beta,\gamma,\dots$, and  angular coordinate
indices by $\mathring A, \mathring B, \mathring C\dots$.
\item
Frame  spacetime indices are denoted by $i,j,k,\dots $, spatial frame  indices are denoted by $a,b,c,\dots$, and angular coordinate
indices by $ A,  B, C,\dots$
\item Objects associated to the Weyl connection are decorated with $\widehat {.}$
\item Objects associated to $\not \hspace{-.2em}g=g_{\mathring A\mathring B}\mathrm{d}x^{\mathring A}\mathrm{d}x^{\mathring B}|_{\scri^-}$
are denoted by $\rnabla$, $\rGamma$ etc.
\item The Levi-Civita covariant derivative associated to the standard metric $s_{\mathring A\mathring B}\mathrm{d}x^{\mathring A}\mathrm{d}x^{\mathring B}$ on $S^2$ is denoted by $\mcD$, the Christoffel symbols by $\mathring\Gamma$ and the volume form by $\epsilon_{\mathring A\mathring B}$.
\item $\nu_{\tau}=g_{\tau r}|_{\scri^-}$, $\nu_{\mathring A}=g_{\tau\mathring A}|_{\scri^-}$
\item The Hodge decomposition of a 1-form $f_A$ on $\mathbb{S}^2:=(S^2,s_{\mathring A\mathring B})$ is written as $f_A=\mcD_A\ul f + \epsilon_A{}^B\mcD_B\ol f$, that of a symmetric trace-free tensor  $\mathfrak{x}_{AB}$ as
$
\mathfrak{x}_{AB} = (\mcD_{A}\mcD_{B}\ul {\mathfrak{x}})_{\mathrm{tf}}
+ \epsilon_{(A}{}^C \mcD_{B)}\mcD_{C}\ol{ \mathfrak{x}}$
\item $f^{(m)}=\frac{1}{m!}\partial_r^mf|_{I^-}$  denotes expansion coefficients in the radial coordinate $r$ ($\Theta^{(n)}$ and $b^{(n)}_i$ are exceptions, they denote the expansion coefficients in $1+\tau$)
\item $f^{(m,n)}=\frac{1}{m!n!}\partial^m_r\partial_{\tau}^n f|_{I^-}$
\item the  independent components of the rescaled Weyl tensor $W_{ijkl}$ we will use are: $U_{AB}=W_{0101}\eta_{AB} + W_{01AB}$, $V^{\pm}_{AB}=(W_{1A1B})_{\mathrm{tf}}\pm W_{0(AB)1}$ and $W^{\pm}_A=W_{010A}\pm W_{011A}$.
\item $M$ denotes the (ADM) mass aspect, and $N$ the dual (ADM) mass aspect, we also use the notation $\mathcal{M}_A=\mcD_A M + \epsilon_A{}^B\mcD_B N$
and  $\ol{\mathcal{M}}_A=\mcD_A M - \epsilon_A{}^B\mcD_B N$ (more  precisely, the limit of the Bondi mass and dual mass aspect on $\scri^-$ to $I^-$).
\end{enumerate}

\section{General conformal field equations and conformal Gauss gauge}
\label{section1}

\subsection{General conformal field equations}
A breakthrough on the way to gain a better understanding of the compatibility of Penrose's notion of asymptotic simplicity and Einstein's field equations was obtained by Friedrich \cite{F1,F2}. He derived a set of equations, the so-called \emph{conformal field equations (CFE)}, which substitute 
Einstein's vacuum field equations in Penrose's conformally rescaled  spacetimes. They are equivalent to the vacuum equations
in regions where the conformal factor does not vanish, and  remain regular  at points where it vanishes.
This result offered the possibility to study the evolution of initial data sets in the conformally rescaled spacetime from the outset.
In particular, it permitted the  formulation of an \emph{asymptotic initial value problem} where an appropriate set of data is prescribed  ``at infinity'', i.e.\  at $\scri^-$.

Beside the usual gauge freedom arising from the freedom to choose coordinates,  frame field etc., the CFE  contain an additional gauge freedom which arises from the artificially introduced conformal factor $\Theta$ which relates the physical spacetime with its conformally rescaled counterpart. For an analysis of the gravitational field near spacelike infinity, though, it turned out
that the introduction of additional gauge degrees of freedom, which even more exploits the conformal structure, can simplify the analysis considerably. They are obtained when replacing the Levi-Civita connection
by some appropriately chosen Weyl connection. This way one is led to the so-called \emph{general conformal field equations (GCFE)},
introduced by Friedrich in \cite{F_AdS}, cf.\ \cite{F_i0, F3, F_i0_2}.
In the following we will recall these equations and  sum up some of the results.

Let $(\widetilde\mcM, \widetilde g)$ be a smooth
Lorentzian manifold, and denote by $g=\Theta^2 \widetilde g$ a conformally rescaled metric.
We denote by $\widetilde \nabla$ and $\nabla$ the Levi-Civita connection of $\widetilde g$ and $g$, respectively.

Let $\widetilde f$ be a smooth 1-form on $\widetilde\mcM$.
There exists a unique 
torsion-free connection $\widehat\nabla$,  the  so-called \emph{Weyl connection}, which satisfies
\begin{equation}
\widehat\nabla_{\sigma} \widetilde g_{\mu\nu} = -2 \widetilde f_{\sigma} \widetilde g_{\mu\nu} \,.
\label{weyl_relation}
\end{equation}
Then
\begin{equation}
\widehat\nabla = \widetilde\nabla + S(\widetilde f)\,, \quad \text{where}\quad S( \widetilde  f)_{\mu}{}^{\sigma}{}_{\nu}:=2\delta_{(\mu}{}^{\sigma} \widetilde f_{\nu)} - \widetilde g_{\mu\nu}\widetilde g^{\sigma\rho}\widetilde f_{\rho}
\,,
\end{equation}
or, equivalently,
\begin{equation}
\widehat\nabla =\nabla + S(f)\,, \quad \text{where}\quad f=\widetilde f-\Theta^{-1}\mathrm{d}\Theta
\,.
\end{equation}
We observe that $S(\widetilde f)$ depends merely on the conformal class of $\widetilde g$.

Let $e_k$ be a frame field satisfying  $g(e_i,e_j)=\eta_{ij}\equiv \mathrm{diag}(-1,1,1,1)$.
We define the connection coefficients of $\widehat \nabla$ in this frame field by
\begin{equation}
\widehat\nabla_i e_j = \widehat\Gamma_i{}^k{}_j e_k
\,.
\end{equation}
Note that
\begin{equation}
 \widehat\Gamma_i{}^k{}_j =  \Gamma_i{}^k{}_j  +S(f)_i{}^k{}_j \,, \quad \text{and} \quad 
f_i =\frac{1}{4}\widehat\Gamma_i{}^k{}_k
\,.
\end{equation}
A Weyl connection respects the conformal class in the sense that for any $C^1$-curve $\gamma: (-\varepsilon, \varepsilon)
\rightarrow \widetilde\mcM$ and any frame field $e_k$ which is parallely transported along $\gamma$ w.r.t.\ $\widehat\nabla$,
there exists a function $\Omega_{\tau}>0$ along $\gamma(\tau)$  such that $\widetilde g(e_i,e_j)|_{\gamma(\tau)}
=(\Omega_{\tau})^2\widetilde g(e_i,e_j)|_{\gamma(0)}$.

Finally, set
\begin{equation}
b:=\Theta\widetilde f =\Theta f + \mathrm{d}\Theta
\,,
\label{dfn_d}
\end{equation}
and denote by
\begin{align}
\widehat W^{\mu}{}_{\nu\sigma\rho} &=  \Theta^{-1}\widehat C^{\mu}{}_{\nu\sigma\rho}
\,,
\\
\widehat L_{\mu\nu} &= \frac{1}{2}\widehat R_{(\mu\nu)} - \frac{1}{4} \widehat R_{[\mu\nu]} - \frac{1}{12}\widehat R g_{\mu\nu}
\,,
\end{align}
\emph{ rescaled Weyl tensor} and \emph{Schouten tensor} of $\widehat\nabla$, respectively.
We note that
\begin{equation}
\widehat L_{\mu\nu} =L_{\mu\nu} - \nabla_{\mu}f_{\nu}  + \frac{1}{2}S(f)_{\mu}{}^{\sigma}{}_{\nu} f_{\sigma}
\,,
\label{schouten_weylconnect}
\end{equation}
while the rescaled Weyl tensor does not depend on the Weyl connection,
\begin{equation}
\widehat W^{\mu}{}_{\nu\sigma\rho}=W^{\mu}{}_{\nu\sigma\rho}
\,.
\end{equation}

Let now $(\widetilde\mcM, \widetilde g)$ be a
solution to  Einstein's vacuum field equations 
\begin{equation}
\widetilde R_{\mu\nu}=\lambda \widetilde g_{\mu\nu}
\,.
\end{equation}
Then the tuple
\begin{equation}
\mathfrak{f}:= (e^{\mu}{}_k, \widehat\Gamma_i{}^k{}_j, \widehat L_{ij}, W^i{}_{jkl})
\,,
\end{equation}
where $e^{\mu}{}_k:=\langle\mathrm{d} x^{\mu},e_k\rangle$,
satisfies the \emph{general conformal field equations (GCFE)} \cite{F_AdS}
\begin{align}
[e_p,e_q] &=  2\widehat\Gamma_{[p}{}^l{}_{q]} e_l
\,,
\label{GCFE_1}
\\
e_{[p}(\widehat\Gamma_{q]}{}^i{}_j) &=
 \widehat\Gamma_k{}^i{}_j\widehat\Gamma_{[p}{}^k{}_{q]}
- \widehat\Gamma_{[p}{}^i{}_{|k|}\widehat\Gamma_{q]}{}^k{}_j
+ \delta_{[p}{}^i\widehat L_{q]j} - \delta_j{}^i \widehat L_{[pq]} - \eta_{j[p}\widehat L_{q]}{}^i +\frac{\Theta}{2}   W^i{}_{jpq}\,,
\\
2\widehat\nabla_{[k}\widehat L_{l]j} &= b_i W^{i}{}_{jkl}\,,
\label{GCFE_3}
\\
 \widehat\nabla_i W^i{}_{jkl}  &= \frac{1}{4}\widehat\Gamma_i{}^p{}_pW^i{}_{jkl} 
\label{GCFE_4}
\,.
\end{align}
The fields $\Theta$ and $b$ reflect the \emph{conformal gauge freedom}.

\subsection{Conformal geodesics and   conformal Gauss gauge}
\subsubsection{Definition and properties of conformal geodesics}

A \emph{conformal geodesic for $(\widetilde\mcM , \widetilde g)$}
(cf.\ e.g.\ \cite {F_cg, F_Schmidt})
 is a curve $x(\tau)$ in $\widetilde\mcM$
for which   a 1-form $\widetilde f= \widetilde f(\tau)$ exists along $x(\tau)$ such that the pair $(x, \widetilde f)$ solves
the \emph{conformal geodesics equations}
\begin{align}
(\widetilde\nabla_{\dot x} \dot x)^{\mu} + S(\widetilde f)_{\lambda}{}^{\mu}{}_{\rho} \dot x^{\lambda} \dot x^{\rho} &= 0
\,,
\label{conf_geo1}
\\
(\widetilde \nabla_{\dot x} \widetilde f)_{\nu} - \frac{1}{2} \widetilde f_{\mu} S(\widetilde f)_{\lambda}{}^{\mu}{}_{\nu}\dot x^{\lambda} &= \widetilde L_{\lambda\nu}\dot x^{\lambda}
\,.
\label{conf_geo2}
\end{align}
Given data $x_*\in \widetilde \mcM$, $\dot x_*\in T_{x_*}\widetilde\mcM$ and $\widetilde f_*\in T^*_{x_*}\widetilde\mcM$
there exists a unique solution $x(\tau)$, $\widetilde f(\tau)$  to  \eq{conf_geo1}-\eq{conf_geo2} near  $x_*$
satisfying, for given $\tau_*\in\mathbb{R}$,
\begin{equation}
x(\tau_*) = x_*\,, \quad \dot x(\tau_*) =  \dot x_* \,, \quad \widetilde f (\tau_*) = \widetilde f_*
\,.
\end{equation}
Conformal geodesics are  curves which are associated with the conformal structure in a similar
way as geodesics are associated with the metric.
They are conformally invariant in the following sense:
Let $b$ be a smooth 1-form on $\mcM$. Then $(x(\tau), \widetilde f(\tau))$ solves \eq{conf_geo1}-\eq{conf_geo2}
if and only if $(x(\tau), \widetilde f(\tau)-b|_{x(\tau)})$ solves \eq{conf_geo1}-\eq{conf_geo2} with $\widetilde\nabla$ and $\widetilde L$
replaced by $\widehat \nabla =\widetilde\nabla + S(b)$ and $\widehat L$, respectively.
The conformal geodesic $x(\tau)$ and its parameter $\tau$ are independent of the Weyl connection in the conformal
class w.r.t.\ which \eq{conf_geo1}-\eq{conf_geo2} are written
(in particular, they do not depend on the metric in the conformal class chosen to write the conformal geodesics equations).

It follows from \eq{conf_geo1} that the sign of $\widetilde g(\dot x,\dot x)$ is preserved  along any conformal geodesic,
\begin{equation}
\widetilde \nabla_{\dot x} \widetilde g(\dot x, \dot x) = -2 \langle \widetilde f, \dot x\rangle \widetilde g(\dot x, \dot x)
\,,
\label{norm_conf_geod}
\end{equation}
i.e.\   conformal geodesics preserve their causal character.

\begin{lemma}[\cite {F_cg}]
\label{lemma_repara_conf_geods}
Consider a conformal geodesic $x(\tau)$. 
Changes of its  initial data $\dot x_*$ and $f_*$ for the conformal geodesics equations 
which  locally preserve the point set spread out by the curve $x(\tau)$  (i.e.\ which change only its parameterization) are given by
\begin{align}
\dot x_* &\mapsto \varphi_* \dot x_* \,,  \quad \varphi_*\in\mathbb{R}\setminus\{0\} \,, 
\label{repara1}
\\
 f_* &\mapsto f_* + \psi_*   g( \dot x_*,\cdot)
\,,  \quad \psi_*\in\mathbb{R}
\,.
\label{repara2}
\end{align}
\end{lemma}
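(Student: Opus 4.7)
The plan is to analyze the effect of a smooth reparameterization of the curve on the conformal geodesic system. Given a solution $(x(\tau),\widetilde f(\tau))$ of \eqref{conf_geo1}-\eqref{conf_geo2} with $x(\tau_*)=x_*$, I consider a local reparameterization $\sigma\mapsto\tau(\sigma)$ with $\tau'\neq 0$ and $\tau(\sigma_*)=\tau_*$, set $y(\sigma):=x(\tau(\sigma))$, and ask whether a 1-form $\widetilde f_y$ along $y$ can be chosen so that $(y,\widetilde f_y)$ is again a conformal geodesic in the sense of \eqref{conf_geo1}-\eqref{conf_geo2}.

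Using $\dot y=\tau'\dot x$, bilinearity of $S$ in its outer indices, and \eqref{conf_geo1} for $x$, a direct calculation yields
\begin{equation*}
\widetilde\nabla_{\dot y}\dot y \;=\; \frac{\tau''}{\tau'}\dot y \;-\; S(\widetilde f)(\dot y,\dot y)\,.
\end{equation*}
Imposing \eqref{conf_geo1} for $(y,\widetilde f_y)$ and writing $h:=\widetilde f_y-\widetilde f$, the explicit formula $S(h)(\dot y,\dot y)=2\langle h,\dot y\rangle\dot y-\widetilde g(\dot y,\dot y)\,h^{\sharp}$ turns this into the purely algebraic relation
\begin{equation*}
2\langle h,\dot y\rangle\dot y \;-\; \widetilde g(\dot y,\dot y)\,h^{\sharp} \;=\; -\frac{\tau''}{\tau'}\,\dot y\,.
\end{equation*}
In the non-null case, which by \eqref{norm_conf_geod} is preserved along the whole curve, this forces $h=\beta\,\widetilde g(\dot y,\cdot)$ with $\beta=-(\tau''/\tau')/\widetilde g(\dot y,\dot y)$. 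The null case is handled analogously: only $\langle h,\dot y\rangle$ is pinned down, and any residual component of $h$ along $\widetilde g(\dot y,\cdot)$ drops out of both conformal geodesic equations.

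It then remains to verify that the resulting $\widetilde f_y=\widetilde f+\beta\,\widetilde g(\dot y,\cdot)$ also satisfies \eqref{conf_geo2}. After applying $\widetilde\nabla_{\dot y}$, inserting the second conformal geodesic equation for $(x,\widetilde f)$ together with the above formula for $\widetilde\nabla_{\dot y}\dot y$, a slightly tedious but entirely routine bookkeeping using only the definition of $S$ shows that all $\beta$-dependent contributions cancel, so that \eqref{conf_geo2} for $(y,\widetilde f_y)$ reduces to \eqref{conf_geo2} for $(x,\widetilde f)$. This cancellation, checking that the algebraic ansatz for $h$ forced by \eqref{conf_geo1} is automatically compatible with \eqref{conf_geo2}, is the one technical step I anticipate as the main (albeit mechanical) obstacle.

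Evaluating at $\sigma=\sigma_*$ finally gives $\dot y_*=\varphi_*\dot x_*$ with $\varphi_*:=\tau'(\sigma_*)\in\mathbb{R}\setminus\{0\}$, and $\widetilde f_y(\sigma_*)=\widetilde f_*+\psi_*\,g(\dot x_*,\cdot)$ with $\psi_*\in\mathbb{R}$ a scalar multiple of $\tau''(\sigma_*)$, which is exactly \eqref{repara1}-\eqref{repara2}. Since $\tau'(\sigma_*)$ and $\tau''(\sigma_*)$ can be prescribed independently, every pair $(\varphi_*,\psi_*)\in(\mathbb{R}\setminus\{0\})\times\mathbb{R}$ is realised by some local reparameterization. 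Conversely, uniqueness of solutions of the ODE system \eqref{conf_geo1}-\eqref{conf_geo2} through the initial point $x_*$ implies that any change of data preserving the point set traced out by $x(\tau)$ must coincide with the lift of one of the reparameterizations constructed above, and hence has the stated form.
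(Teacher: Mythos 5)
The paper offers no proof of this lemma (it is quoted from Friedrich's conformal geodesics paper \cite{F_cg}), so your argument has to stand on its own; and as written it contains one genuine error at precisely the step you flag as "routine bookkeeping". The claim that, once $h=\beta\,\widetilde g(\dot y,\cdot)$ is forced by \eqref{conf_geo1}, all $\beta$-dependent contributions to \eqref{conf_geo2} cancel is false. Carrying out the computation: with $v=\dot y=\tau'\dot x$, the right-hand side of \eqref{conf_geo2} for $(y,\widetilde f+h)$ equals $\tau'\widetilde\nabla_{\dot x}\widetilde f+\beta\,\widetilde g(v,v)\widetilde f+\tfrac12\beta^2\widetilde g(v,v)\widetilde g(v,\cdot)$, while the left-hand side equals $\tau'\widetilde\nabla_{\dot x}\widetilde f+\beta\,\widetilde g(v,v)\widetilde f+\bigl[\dot\beta+\beta\tfrac{\tau''}{\tau'}-2\beta\langle\widetilde f,v\rangle\bigr]\widetilde g(v,\cdot)$. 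Only the $\beta\,\widetilde g(v,v)\widetilde f$ terms cancel; what survives is the Riccati equation
\begin{equation*}
\dot\beta+\beta\,\frac{\tau''}{\tau'}-2\beta\,\langle\widetilde f,v\rangle=\frac12\,\beta^2\,\widetilde g(v,v)\,,
\end{equation*}
which is a nontrivial constraint on the reparameterization. (For a straight line in Minkowski space with $\widetilde f=0$ it reduces to the vanishing of the Schwarzian derivative of $\tau(\sigma)$, i.e.\ to $\tau$ being a fractional linear function of $\sigma$ --- the well-known fact that conformal geodesics carry a distinguished \emph{projective} parameter, not an arbitrary one.) If your cancellation claim were true, every reparameterization would be admissible and the parameter along a conformal geodesic would carry no information, which contradicts, e.g., the paper's later use of $\Theta$ being an exact quadratic polynomial in $\tau$.

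The good news is that the lemma's conclusion survives, because the initial data at $\sigma_*$ only see $\tau'(\sigma_*)$ and $\tau''(\sigma_*)$ (equivalently $\beta(\sigma_*)$), and the Riccati equation above has a local solution for \emph{every} initial value $\beta_*$. So the existence direction should be rerouted: given $(\varphi_*,\psi_*)$, solve the Riccati equation for $\beta$ with $\beta_*=\psi_*/\varphi_*$ (coupled with $\tau''=-\beta\,\widetilde g(v,v)\,\tau'$ to recover $\tau(\sigma)$), obtaining a reparameterized conformal geodesic whose data at $\sigma_*$ are exactly \eqref{repara1}--\eqref{repara2}; uniqueness for \eqref{conf_geo1}--\eqref{conf_geo2} then identifies it with the solution launched from the new data. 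Your converse direction (tangency forces $\dot y_*\parallel\dot x_*$, and \eqref{conf_geo1} then forces $h\propto g(\dot y,\cdot)$ in the non-null case) is fine and is the part that actually pins down the form \eqref{repara1}--\eqref{repara2}. You should also state explicitly that only timelike $\dot x_*$ is needed here, since your treatment of the null case is waved at rather than carried out.
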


Consider now a congruence $\{ x(\tau,\rho), f(\tau,\rho)\}$ of conformal geodesics, set $x':=\partial x/\partial \rho$, and denote by $F:= \nabla_{x'}f$ the \emph{deviation 1-form}.
The \emph{conformal Jacobi equation} reads \cite{F_cg}
\begin{equation}
\nabla_{\dot x}\nabla_{\dot x} x' = \mathrm{Ric}( \dot x, x') \dot x
- S(F)(\dot x, \cdot,\dot x) - 2S(f)(\dot x, \cdot, \nabla_{\dot x} F)
\,.
\label{conf_dev}
\end{equation}
It is convenient to introduce \emph{conformal Gauss coordinates}, a geometrically defined coordinate system, 
where the time-like coordinate lines are generated by timelike conformal geodesics.
As metric geodesics, conformal geodesics may develop caustics. Even worse, they may, in addition,  become tangent to each other. 
However, because of   \eq{conf_dev} one may hope  that curvature induced tendencies to develop caustics may be counteracted by the 1-form $f$.
This is a reason why one expects a  gauge   which is based on conformal geodesics to provide  a more convenient  setting to cover large domains of  spacetime than the classical Gauss gauge does.
In fact, Friedrich showed that the Schwarzschild-Kruskal spacetime permits a global coordinate system based on conformal geodesics \cite{F_cg}.
Also in the case of  a hyperboloidal initial value problem with data sufficiently close to Minkowskian hyperboloidal data
 it can be shown that conformal Gauss  coordinates exist globally \cite{luebbe_kroon}.

\subsubsection{Conformal Gauss gauge}

We  construct  a conformal Gauss gauge adapted to a congruence of conformal geodesics, which employs the fact that a congruence of conformal
geodesics distinguishes the Weyl connection associated to the 1-form $f$.
This will be done from an initial surface $\mathcal{S}$  (spacelike or null)  which intersects the congruence transversally and meets
each of the curves exactly once. 
This way we mimic the construction in \cite{F_AdS,  F_i0}. Here, though, it provides the starting point to construct such a gauge for the special case
where $\mathcal{S}$ is  identified with past null infinity (for non-negative cosmological constant).

Consider a smooth congruence of conformal geodesics which covers an open set $U$ of $\widetilde \mcM$ (we do not need to impose the  vacuum equations at this stage),
 and assume that
the associated 1-form $\widetilde f$ defines a smooth tensor field on $U$.
As above, we denote by $\widehat \nabla$  the Weyl connection which satisfies $\widehat\nabla = \widetilde\nabla + S(\widetilde f)$.
Then \eq{conf_geo1}-\eq{conf_geo2} adopt the simple form
\begin{equation}
\widehat\nabla_{\dot x} \dot x  = 0
\,,
\quad 
 \widehat L(\dot x, \cdot) =0
\,.
\label{conf_geob}
\end{equation}
We would like to construct a gauge where  $\widehat \nabla$ preserves the conformal structure, i.e.\ where 
\begin{equation}
\widehat\nabla_{\dot x} g =0
\,.
\label{conf_gauge1}
\end{equation}
It follows from \eq{weyl_relation} that
\begin{equation}
\eq{conf_gauge1}
\quad \Longleftrightarrow \quad
\widetilde \nabla_{\dot x}\Theta = \Theta\langle \dot x, \widetilde f\rangle
\,,
\label{equiv_relation_Theta}
\end{equation}
i.e.\ such a conformal gauge can always be realized by an appropriate choice of the conformal  factor $\Theta$.

Let us start with the physical spacetime.
Let  $\widetilde {\mathcal{S}}\subset \widetilde \mcM$ be a hypersurface
(for definiteness take a spacelike or characteristic one).
We require the fields $x$, $\widetilde f$ and $\Theta$ to satisfy \eq{conf_geo1}, \eq{conf_geo2} and \eq{conf_gauge1}.
This conformal gauge  needs to be  complemented  by the ``gauge data''
\begin{equation}
\dot x|_{\widetilde{\mathcal{S}}}\,, \quad \widetilde f|_{\widetilde{\mathcal{S}}} \,, \quad \Theta|_{\widetilde{\mathcal{S}}}>0
\,,
\label{gauge_data_o_cauchy}
\end{equation}
 with  $\dot x|_{\widetilde{\mathcal{S}}}$ being  transversal to $\widetilde {\mathcal{S}}\subset \widetilde \mcM$. In this article we will always assume
 $\dot x|_{\widetilde{\mathcal{S}}}$ to be  timelike, $\widetilde g (\dot x, \dot x)|_{\widetilde{\mathcal{S}}} <0$.
It follows from \eq{equiv_relation_Theta} that instead of  \eq{gauge_data_o_cauchy} one
may prescribe
\begin{equation}
\dot x|_{\widetilde{\mathcal{S}}}\,, \quad \widetilde f_{\widetilde {\mathcal{S}}} \,, \quad \Theta|_{\widetilde{\mathcal{S}}}>0\;, \quad \nabla_{\dot x}\Theta |_{\widetilde{\mathcal{S}}}
\,,  
\end{equation}
where $\widetilde g (\dot x, \dot x)|_{\widetilde{\mathcal{S}}} <0$.
Here $\widetilde  f_{\widetilde {\mathcal{S}}}$ denotes the pull back of  $\widetilde f$ on $\widetilde{\mathcal{S}}$.

Furthermore, we choose the frame field on $\widetilde{\mathcal{S}}$  in such a way that 
\begin{equation}
\Theta^{2} \widetilde g(e_{i},e_{j})|_{\widetilde{\mathcal{S}}}= \eta_{ij}
\quad \Longleftrightarrow \quad    g(e_{i},e_{j})|_{\widetilde{\mathcal{S}}}=\eta_{ij}
\,.
\end{equation}
The frame field is then parallely propagated w.r.t\ $\widehat \nabla$ along the conformal geodesics, whence, by \eq{conf_gauge1}
\begin{equation}
 g(e_i,e_j) =\eta_{ij} \quad \text{on $U$}
\,.
\end{equation}

Let us now pass to a conformally rescaled space-time and replace $\widetilde f$ by $f\equiv \widetilde f- \mathrm{d}\log \Theta$.
Expressed in terms of the Levi-Civita connection of $g$
and the 1-form $f$ the conformal geodesics
equations read
\begin{align}
\nabla_{\dot x} \dot x  &= -S( f)(\dot x,\cdot, \dot x)
\,,
\label{f_eqn1}
\\
\nabla_{\dot x}  f  &= \frac{1}{2}  S( f)(\dot x,f,\cdot)+  L(\dot x,\cdot)
\label{f_eqn2}
\,,
\end{align}
while  \eq{equiv_relation_Theta} becomes
\begin{equation}
 \langle \dot x , f \rangle =0
\,.
\label{equiv_relation_Theta2}
\end{equation}
These equations contain the conformal factor only implicitly.
However, from the above considerations  it is clear how such a gauge can be constructed starting from the physical, non-rescaled spacetime $(\widetilde \mcM,\widetilde g)$ and the 1-form $\widetilde f$.
The free   gauge data on the (spacelike or null)  hypersurface $\mathcal{S}\subset\mcM$ for a conformal gauge adapted to a congruence of timelike conformal geodesics
are
\begin{equation}
\dot x|_{{\mathcal{S}}}\,, \quad  f_{ {\mathcal{S}}} \,, \quad \Theta|_{{\mathcal{S}}}>0\;, \quad \nabla_{\dot x}\Theta|_{{\mathcal{S}}}
\,, \quad \text{with}  \quad   g (\dot x, \dot x)|_{{\mathcal{S}}} <0
\,.
\end{equation}

\subsubsection{Construction of the conformal Gauss gauge from null infinity}
\label{sec_constr_gauss_gauge}

Let us analyze the construction of such a conformal gauge  more detailed in the case where the initial surface belongs to (past) null infinity,
  ${\mathcal{S}}\subset\scri^-$.
More specifically, we consider a smooth $\lambda\geq 0$-vacuum spacetime $(\widetilde \mcM, \widetilde g)$ which admits a conformal representation
$(\mcM, g)$  and a smooth $\scri^-$ \`a la Penrose \cite{p1, p2} (so that $\scri^-$ is either a spacelike or  a null hypersurface).
In the conformal picture this corresponds to a   metric $g$ and a conformal factor $\Theta$,
 which satisfies   $\Theta|_{\scri^-}=0$ and $\mathrm{d}\Theta|_{\scri^-}\ne 0$,    such that $(g, \Theta)$ solves the conformal field equations
with non-negative cosmological constant $\lambda$.

Given initial data $\dot x|_{\scri^-}$ and $f|_{\scri^-}$ on $\scri^-$, we  solve the conformal geodesic equations \eq{f_eqn1}-\eq{f_eqn2} by standard results on ODEs. In particular, this singles out  the Weyl connection $\widehat\nabla=\nabla + S(f)$. 
In addition, though, we want to choose $\Theta$ in such a way that $\langle \dot x, f\rangle =0$ holds in the region $U$ covered by the congruence of conformal geodesics, which will, in general,   not be the case.
The ``wrong'' conformal factor needs to be rescaled by some positive function $\phi$.
 The gauge condition \eq{equiv_relation_Theta2}  adopts the form 
\begin{equation}
\nabla_{\dot x}\phi = \phi \langle \dot x, f\rangle
\,,
\label{ODE_conf_factor}
\end{equation}
and, once this equation has been solved, $g$, $\Theta$ and $f$ need to be replaced by
\begin{equation}
g^{\mathrm{new}}  =\phi^2 g\,, \quad \Theta^{\mathrm{new}} = \phi\Theta\,, \quad f^{\mathrm{new}} = f- \mathrm{d}\log\phi
\,.
\end{equation}
When solving the ODE \eq{ODE_conf_factor} for $\phi$ there remains the freedom to prescribe $\phi|_{\scri^-}$.
We further observe that the freedom to prescribe $\langle \dot x, f\rangle|_{\scri^-}$
 can be identified with the freedom to prescribe
$\nabla_{\dot x} \phi|_{\scri^-}$, while there remains the freedom to prescribe the pull back $f^{\mathrm{new}}_{\scri^-}$
 of $f^{\mathrm{new}}$ on $\scri^-$.

We want to identify   these gauge degrees of freedom with certain gauge data for the initial  value problem. 
Since it is the case we are mainly interested in we will focus on the $\lambda=0$-case.
In that case $\scri^-$ represents a null hypersurface in $(\mcM, g)$. 
In contrast to the case of a positive cosmological constant the Weyl tensor does not need to vanish on $\scri$.
Here, we assume that $\scri^-$ has the ``natural'' topology   \cite{geroch, hawking}
\begin{equation}
\scri^- \cong \mathbb{R}\times S^2  
\,.
\end{equation}
 This topology will be crucial for some of the computations below,%
\footnote{e.g.\ we will employ that, given a smooth 1-form $v_A$ such that its Hodge decomposition scalars do not contain $\ell=1$-spherical harmonics, the PDE $\mcD_B w_A{}^B=v_A$ admits a unique symmetric and trace-free solution $w_{AB}$ on the round  sphere, and that there are no non-trivial harmonic 1-forms on the round sphere.}
and in that case it is  well-known \cite{geroch} that the Weyl tensor vanishes on $\scri^-$.

The initial data for the GCFE need to satisfy certain constraint equations on $\scri^-$.
Now, the frame we are dealing with will be adapted to an ordinary spacelike Cauchy problem and to the cylinder at spacelike infinity rather than to
$\scri^-$.
For this reason it  is  convenient to choose coordinates which are adapted to $\scri^-$.
Solutions to the constraint equation are then constructed in these coordinates. The frame coefficients needed for the GCFE will be computed  afterwards.

Assuming  $\lambda=0$,  let us introduce \emph{adapted null coordinates} $(\tau, r, x^{\mathring A})$
on $\scri^-\cong \mathbb{R}\times S^2$.
They are defined in such a way that $\scri^-=\{\tau=-1\}$, $r$ parameterizes the null geodesic generators of $\scri^-$,  and the $x^{\mathring A}$'s are local coordinates
on the $\Sigma_r:=\{\tau=-1, r=\mathrm{const.}\}\cong S^2$-level sets (cf.\ \cite{CCM2} for more details).
Because $\scri^-$ is a null hypersurface $g(\partial_r, \partial_r)|_{\scri^-}=0$.
Since $n=\mathrm{grad}(\tau)$ is another null vector  normal and tangent to $\scri^-$, $n$ and $\partial_r$ have to be proportional, which implies that
$g(\partial_r,\partial_{\mathring A})|_{\scri^-}=0$,
so that the metric adopts the form
\begin{equation}
g|_{\scri^-} = g_{\tau\tau}\mathrm{d} \tau^2 + 2\nu_{\tau}\mathrm{d}\tau\mathrm{d}r + 2\nu_{\mathring A}\mathrm{d}\tau\mathrm{d}x^{\mathring A}  + g_{\mathring A\mathring B} \mathrm{d}x^{\mathring A}\mathrm{d}x^{\mathring B}
\,.
\label{adapted_null_gen}
\end{equation}
Here an henceforth we use $\mathring{}$ to denote angular coordinate indices.
%We extend the adapted null coordinates off $\scri$ in such a way that $\dot x=\partial_{\tau}$.
The remaining metric coefficients are determined by the constraint equations and how the coordinates are extended off $\scri^-$.

 At each $p\in \Sigma_r$ we denote by $\ell^{\pm}$ the future-directed null vectors orthogonal to $\Sigma_r$
and normalized in such a way that $ g(\ell^+,\ell^-) = -2$. In adapted null coordinates they read
\begin{equation}
 \ell^+=\partial_r\;, \quad \ell^- = -2\nu^{\tau}\partial_{\tau} -  g^{rr}\partial_r -2  g^{r\mathring A}\partial_{\mathring A}
\;,
\label{ell_eqn}
\end{equation}
where $\nu^{\tau}:=\nu_{\tau}^{-1}$.
We  denote by $\theta^{\pm}$ the \emph{divergences} of the null hypersurfaces emanating from $\Sigma_r$ tangentially to $\ell^{\pm}$.
For the computation of $\theta^{\pm}$ it does not matter how $\ell^{\pm}$ are extended off $\scri^-$, so we may use \eq{ell_eqn} for all values of $\tau$.
A somewhat  lengthy calculation making extensively use of the formulae in \cite[Appendix~A]{CCM2} reveals that
\begin{align}
 \theta^+(r,x^{\mathring A}) &\equiv [ g^{\mu\nu} + (\ell^+)^{(\mu} (\ell^-)^{\nu)}]\nabla_{\mu}\ell^+_{\nu}|_{\Sigma_r}
\nonumber
\\
 &= \frac{1}{2}g^{\mathring A\mathring B} \partial_r g_{\mathring A\mathring B}
\;,
\label{dfn_theta+}
\\
\theta^-(r,x^{\mathring A}) &\equiv  [ g^{\mu\nu} + (\ell^+)^{(\mu} (\ell^-)^{\nu)}]\nabla_{\mu}\ell^-_{\nu}|_{\Sigma_r}
\nonumber
%\\
%&=[2\nu^0 ( \tilde\nabla_{A}-\xi_A)\nu^A
%-( \tau -6\ol \Gamma^0_{01}) \ol g^{11}
% -2\nu^0( \ol\Gamma^0_{00} +  3\ol\Gamma^1_{01} +\ol\Gamma^A_{0A})
%\nonumber
%\\
%&  +2(\nu^0)^2(\partial_{1}\ol g_{00} +\ol{\partial_{0}g_{10}})
%+2\nu^0\ol g^{1A}(2\partial_{1}\nu_A + \ol{\partial_{0}g_{1A}}  -2\chi_A{}^B\nu_B)]|_{\Sigma_r}
%\nonumber
\\
 &=2\nu^{\tau}  \rnabla^{\mathring A}\nu_{\mathring A} -\theta^+  g^{rr}  -\nu^{\tau}  g^{\mathring A\mathring B}\partial_{\tau}g_{\mathring A\mathring B}
%\nonumber
%\\
%&= 2 g^{\mathring A\mathring B}\Gamma^r_{\mathring A\mathring B} + \theta^+ g^{rr}
\;,
\label{dfn_theta-}
\end{align}
where $\not\hspace{-.25em}\nabla$ denotes the Levi-Civita connection associated to the one-parameter family $r\mapsto \not\hspace{-.2em} g =g_{\mathring A\mathring B}|_{\scri^-}\mathrm{d}x^{\mathring A}\mathrm{d}x^{\mathring B}$ on $S^2$.

Let us consider the behavior of  $\theta^{\pm}$ under conformal rescaling $\Theta\mapsto \phi\Theta$,
\begin{align}
%\nabla_{\dot x}\Theta_{\mathrm{new}}|_{\scri^-} &= \phi  \nabla_{\dot x}\Theta
%\\
\theta^+_{\mathrm{new}}
%&\equiv \frac{1}{2}\ol\phi^{-2} \ol g^{AB} \partial_1 (\ol\phi^2 \ol g_{AB})
&= \theta^+ + 2  \partial_r\log  \phi
\,,
\label{behave_+expansion}
\\
\theta^-_{\mathrm{new}} 
%&\equiv 2 \phi^{-2} g^{\mathring A\mathring B}( \Gamma_{\mathrm{new}})^r_{\mathring A\mathring B}  + \frac{1}{2} \phi^{-4} g^{rr} g^{\mathring A\mathring B} \partial_r(\phi^2 g_{\mathring A\mathring B})
%\\
&=
  \phi^{-2}\Big(\theta^-   -4 \nu^{\tau}\partial_{\tau}\log\phi -4   g^{r\mathring A}\partial_{\mathring A}\log\phi - 2  g^{rr} \partial_r\log\phi\Big)
\,.
\label{behave_-expansion}
\end{align}
We conclude that for a timelike congruence of conformal geodesics the freedom to prescribe $\phi|_{\scri^-}$
and  $\nabla_{\dot x} \phi|_{\scri^-}$ can be employed to prescribe the divergences  $\theta^{\pm}$ on $\scri^-$.
However, we observe that \eq{behave_+expansion} does not fully determine $\phi$ since it leaves the  gauge freedom   $\phi\mapsto \alpha( x^{\mathring A})\phi$.
For this reason it is more convenient to employ this gauge freedom to prescribe $\nabla_{\dot x}\Theta|_{\scri^-}$, which transforms as%
\footnote{
%Instead of $\theta^+$   one may also prescribe $\nabla_{\dot x}\Theta|_{\scri^-}$.
One may think that, instead of $\theta^-$, it should be possible to exploit the gauge  freedom   $\nabla_{\dot x} \phi|_{\scri^-}$  to prescribe the function
$\nabla_{\dot x}\nabla_{\dot x}\Theta|_{\scri^-}$. However, this does  not work  as will become clear later.
The reason for this basically  is that  $\nabla_{\dot x}\nabla_{\dot x}\Theta|_{\scri^-}$ involves a transverse derivative of $\dot x$
which is only determined by the conformal geodesics equations. Instead, $\nabla_{\dot x}\nabla_{\dot x}\Theta|_{\scri^-}$ can be identified with a certain gauge freedom
to choose coordinates on the initial surface
(cf.\ \eq{expr_Theta2}).
}
\begin{equation}
\nabla_{\dot x}\Theta_{\mathrm{new}}|_{\scri^-} = \phi  \nabla_{\dot x}\Theta
\,.
\end{equation}
Let us merely  remark that for $\lambda>0$ a corresponding analysis of the behavior of the Ricci scalar $ R^{(3)}$ of the induced metric 
%$h$  on $\scri^-=\{\tau=-1, x^{\alpha}\}$, $\alpha=1,2,3$,
and  the mean curvature $K$ on $\scri^-$
%$K=\frac{1}{2}g^{\alpha\beta}\mcL_{\dot x}g_{\alpha\beta}|_{\scri^-}$ 
under conformal rescalings shows that these two functions can be identified as gauge degrees of freedom.

By way of summary,  a   gauge  based on a congruence of timelike conformal geodesics, which requires the equations \eq{f_eqn1}-\eq{equiv_relation_Theta2} to be fulfilled,  comes along with the additional gauge  freedom to prescribe 
\begin{align}
\dot x|_{\scri^-}\,, \quad  f_{ \scri^- } \,, \quad \widehat \nabla_{\dot x}\Theta|_{\scri^-}>0
 \;, \quad \theta^-
\quad \text{for}  \quad   \lambda=0
\,,
\label{rel_gauge_data}
\\
\dot x|_{\scri^-}\,, \quad  f_{ \scri^- } \,, \quad    R^{(3)} 
 , \quad K
 \quad \text{for}  \quad   \lambda>0
\,,
\label{rel_gauge_data1b}
\end{align}
in either cases we choose  $g (\dot x, \dot x)|_{\scri^-} <0$.

%%
%\tim{reword footnote!!!}
%\footnote{
%When constructing this gauge from some given spacetime $(g,\Theta)$, one prescibes the data  $ f_* +  \mathrm{d}\log(\nabla_{\dot x}\Theta)_* $
%to solve the conformal geodesics equations, because then, after conformal rescaling, one ends up with the desired data  $f_*$.
%}

We introduce  the initial frame field as follows: Let $e_{0*}$ be a future-directed timelike vector field on $\scri^-$, and denote by $e_{a*}$, $a=1,2,3$, spacelike
vectors which complement $e_{0*}$ to an orthonormal frame $g_*(e_{i*},e_{j*})=\eta_{ij}$.
For $\lambda=0$,
to obtain $(e_{a*})$ we
consider any 2-sphere which is transversally intersected
by the null geodesic generators of $\scri^-$. We choose two spacelike orthonormal frame vectors $e_A$, $A=2,3$ tangent to that sphere which we complement  by another spacelike vector $e_1$ and a future-directed timelike vector $e_0$ to an orthonormal frame on this 2-sphere.
Parallel transport along the null geodesic generators of $\scri^-$ yields an orthonormal frame on $\scri^-$.
%%
%\begin{equation}
%%\phi_*^2
%g_*(e_{i*},e_{j*})=\eta_{ij}
%\,.
%\end{equation}
%

The  frame field  is then   parallely  propagated w.r.t.\ $\widehat\nabla$  along the congruence of conformal geodesics, i.e.\ it is required to satisfy   the transport equation
\begin{equation}
\widehat\nabla_{\dot x} e_k= 0
\,,
\label{eqn_frame}
\end{equation}
starting from the initial frame field $e_{k*}$ on $\scri^-$ (i.e.\  it is Fermi-propagated w.r.t.\ $\nabla$). 
 Then, by \eq{conf_gauge1}, 
\begin{equation}
%\phi^2 
g(e_{i},e_{j})=\eta_{ij}
\,.
\end{equation}
Finally, a coordinate system is  obtained as follows: We choose $x^0 =\tau $. Moreover, let  $(x^{\alpha})$, $\alpha=1,2,3$, be  local coordinates on $\scri^-$
(for $\lambda=0$ we will take adapted null coordinates $(r, x^{\mathring A})$).
The coordinates $(x^{\alpha})$ are then dragged along the conformal geodesics.

For given  ``conformal gauge data''
\eq{rel_gauge_data} and \eq{rel_gauge_data1b}, respectively, 
at least locally
a conformal gauge which satisfies \eq{conf_geob}, \eq{conf_gauge1} and \eq{eqn_frame}
can be constructed:
Through each point $x_*\in\scri^-$  there exists a unique solution  
$$\tau\mapsto (x(\tau),f(\tau),\Theta(\tau), e_k(\tau))$$
which yields a smooth orthonormal frame field $e_k$, conformal factor $\Theta$, and a coordinate system.
The parameter $\tau$  along the conformal geodesics is chosen such that  $\scri^- =\{ \tau =-1\}$.
The so-obtained conformal geodesics define in some neighborhood of $\scri^-$  a smooth caustic-free congruence.

Coordinates, frame field, and conformal factor constructed this way are said to form a \emph{conformal Gauss gauge} (cf.\ \cite{F_AdS, F_i0}).
There still remains some gauge freedom which arises from the freedom to choose coordinates $x^{\alpha}$  and frame field $e_{a}$
 on $\scri^-$. This freedom will be addressed below.
In a conformal Gauss gauge the following relations are fulfilled,
\begin{equation}
\dot x= e_0=\partial_{\tau} \,, \quad g(e_i,e_j)=\eta_{ij}\,, \quad \hat L_{0k}=0\,, \quad \hat\Gamma_0{}^k{}_j =0
\,.
\label{0_gauge_conditions}
\end{equation}

\subsubsection{Gauge freedom associated to  $\nabla_{\dot x}\Theta|_{\scri^-}$ and $\theta^-$}
Geometrically the freedom to prescribe $\dot x|_{\scri^-}:= \dot x_*$ and $f_{\scri^-}:= f_*$ clearly corresponds to the choice of a congruence of conformal geodesics.
As the remaining conformal gauge data we have identified  $\nabla_{\dot x}\Theta|_{\scri^-}=:\Theta^{(1)}$ and $\theta^-$ (which are supplemented by the gauge freedom to choose  frame and  coordinates).

Recall Lemma~\ref{lemma_repara_conf_geods}.
The transformation $\dot x_*\mapsto \alpha_*\dot x_*$ and $f_*\mapsto f_* + \psi_*   g( \dot x,\cdot)$
corresponds only to a change of the parameterization of the conformal geodesics,
i.e.\ these transformations  locally preserve the point set spread out by each conformal geodesic in the congruence.
The first transformation will in general violate $g(\dot x, \dot x)=-1$ and thus require a conformal rescaling of $g$ as well, $g\mapsto \alpha^{-2}g$. Applying both transformations yields $\Theta^{(1)}\mapsto \alpha \Theta^{(1)}$ and we observe that $\Theta^{(1)}$  can be any positive prescribed function.

The second transformation will in general lead to a violation of the gauge condition $\langle \dot x, f\rangle=0$ and therefore also requires a conformal transformation
$g\mapsto \phi^2 g$ with 
\begin{equation}
\nabla_{\dot x}\phi = \phi \langle \dot x, f\rangle + \psi_* \phi g(\dot x, \dot x)
\,.
\label{metric_trans}
\end{equation}
Since $ \phi_* g(\dot x_*, \dot x_*)$ is non-zero in our setting, the initial datum $\phi_*$ can be adjusted in such a way that $\nabla_{\dot x}\phi  |_{\scri^-}$ and thus $\theta^-$ becomes any prescribed function.

The freedom to prescribe $\Theta^{(1)}$ and $\theta^-$ is in this sense related to  the freedom to choose a parameterization of the conformal geodesics.

\subsubsection{Some crucial relations}

The GCFE have been formulated in terms of the gauge fields $\Theta$ and $b\equiv \Theta f + \mathrm{d}\Theta$.
These are determined by the gauge conditions \eq{f_eqn1}-\eq{equiv_relation_Theta2} which, expressed in terms of
$\Theta$ and  $b$,  read
\begin{equation}
\widehat\nabla_{\dot x} \dot x  =0
\,,
\quad
 \widehat L(\dot x, \cdot) =0
\,,
\quad
\widehat\nabla_{\dot x}\Theta =\langle \dot x, b\rangle \,.
\label{eqn_Theta}
\end{equation}
In principle these equations need to be employed to supplement the GCFE  to a closed system.
However, a very remarkable result by Friedrich \cite{F_AdS} shows that this is not necessary.
The fields $\Theta$ and  $b$ can be \emph{explicitly} determined in the conformal Gauss gauge, the latter one  in terms of its frame components $b_k=e^{\mu}{}_k{}b_{\mu}$ of a parallely propagated w.r.t.\ $\widehat\nabla$  frame $e_k$.
The corresponding expressions can then be simply inserted into the GCFE.
%For $\lambda=0$ one is thus led to the evolution equations \eq{evolution1}-\eq{evolution9} below.

\begin{lemma}[\cite{F_AdS}]
\label{lemma_b_theta}
In the conformal Gauss gauge the following relations hold:
\begin{enumerate}
\item[(i)] $\nabla_{\dot x}\nabla_{\dot x}\nabla_{\dot x}\Theta = 0$, and
\item[(ii)] $ \nabla_{\dot x}\nabla_{\dot x}b_k  = 0$, where $b_k\equiv \langle b, e_k\rangle$.
\end{enumerate}
\end{lemma}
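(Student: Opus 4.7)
The plan is to exploit the defining properties of the conformal Gauss gauge to reduce everything to ODEs along each conformal geodesic and show that these ODEs have only polynomial solutions of the advertised degree. First I would note that the frame $e_k$ is $\widehat\nabla$-parallel along the curve, so for any scalar (in particular any frame component) $\widehat\nabla_{\dot x}$ acts simply as $\partial_\tau$. Since $b=\Theta f+\mathrm{d}\Theta$ and the gauge condition $\langle\dot x,f\rangle=0$ hold, one immediately has $\nabla_{\dot x}\Theta=\langle\mathrm{d}\Theta,\dot x\rangle=\langle b,\dot x\rangle=b_0$. Consequently (i) is equivalent to $\partial_\tau^2 b_0=0$, and so it is enough to prove (ii).

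To prove (ii), I would compute $\widehat\nabla_\mu b_\nu$ explicitly. Using $b_\nu=\Theta f_\nu+\nabla_\nu\Theta$, the Hessian identity $\nabla_\mu\nabla_\nu\Theta=s\,g_{\mu\nu}-\Theta L_{\mu\nu}$ coming from the conformal field equations (where $s$ is the Friedrich scalar), together with the relation \eqref{schouten_weylconnect} between $L$ and $\widehat L$ and the decomposition $\widehat\nabla=\nabla+S(f)$, yields
\begin{equation*}
\widehat\nabla_\mu b_\nu=(\partial_\mu\Theta)f_\nu+s\,g_{\mu\nu}-\Theta\widehat L_{\mu\nu}-S(f)_\mu{}^\sigma{}_\nu\bigl(\partial_\sigma\Theta+\tfrac12\Theta f_\sigma\bigr).
\end{equation*}
Contracting with $\dot x^\mu$ and invoking the two gauge conditions $\widehat L(\dot x,\cdot)=0$ and $\langle f,\dot x\rangle=0$, the $S(f)$ contributions collapse using $S(f)_\mu{}^\sigma{}_\nu\dot x^\mu w_\sigma=\langle w,\dot x\rangle f_\nu-\dot x_\nu\langle f,w\rangle$ (with $w_\sigma=\partial_\sigma\Theta+\tfrac12\Theta f_\sigma$, so that $\langle w,\dot x\rangle=b_0$). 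The $b_0 f_\nu$ terms cancel and one obtains
\begin{equation*}
\widehat\nabla_{\dot x}b_\nu=C\,\dot x_\nu,\qquad C:=s+f^\sigma\partial_\sigma\Theta+\tfrac12\Theta|f|^2.
\end{equation*}
In frame components $\dot x_k=\eta_{0k}$, so $\widehat\nabla_{\dot x}b_a=0$ for the spatial indices $a=1,2,3$; that is, the spatial frame components of $b$ are already \emph{constant} along each conformal geodesic.

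It remains to control $\partial_\tau b_0=-C$. Applying $\widehat\nabla_{\dot x}$ once more and differentiating the three constituents of $C$: the derivative of $s$ is handled via the second-Bianchi-type equation \eqref{GCFE_3} contracted with $\dot x$, which in combination with $\widehat L(\dot x,\cdot)=0$ produces only terms of the form $b_iW^i{}_{j0l}$; the terms coming from $f^\sigma\partial_\sigma\Theta$ and $\tfrac12\Theta|f|^2$ are computed using the conformal geodesic equation \eqref{f_eqn2} for $f$ (which in the conformal Gauss gauge simplifies because $\widehat L(\dot x,\cdot)=0$ converts back to $L(\dot x,\cdot)-\widehat\nabla_{\dot x}f+\cdots=0$). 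A tedious but mechanical cancellation shows that these contributions combine to zero, so $\partial_\tau C=0$. This yields $\partial_\tau^2 b_k=0$, proving (ii); then (i) follows at once from $\nabla_{\dot x}^3\Theta=\partial_\tau^2 b_0=0$.

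The main obstacle is the bookkeeping in the final step: one has to keep track of many $S(f)$-terms and use both gauge conditions $\widehat L(\dot x,\cdot)=0$ and $\langle\dot x,f\rangle=0$ simultaneously, together with the algebraic symmetries of $S(f)$ and of $W^i{}_{jkl}$, to see all terms cancel. A more conceptual route which avoids this is to work in the physical spacetime $(\widetilde\mcM,\widetilde g)$: in vacuum with $\lambda=0$ one has $\widetilde L=0$, so the physical conformal geodesic equation \eqref{conf_geo2} for $\widetilde f$ becomes an algebraic bilinear expression in $\widetilde f$ and $\dot x$, and then $b=\Theta\widetilde f$ together with $\widetilde\nabla_{\dot x}\Theta=\Theta\langle\dot x,\widetilde f\rangle$ gives a closed ODE for the (physical) frame components of $b$ whose structure forces them to be affine in $\tau$; I would use this as a cross-check but present the argument in the rescaled picture since that is where the GCFE live.
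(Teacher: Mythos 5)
Your argument is correct and rests on the same underlying mechanism as the paper's proof --- both reduce the lemma to showing that a single scalar is constant along each conformal geodesic --- but it is organized differently enough to be worth comparing. The paper works throughout with the pair $(\Theta,b)$ and the Levi-Civita connection: it eliminates $s$ at the outset via \eq{conf5}, derives $\nabla_{\dot x}\langle b,e_k\rangle=\tfrac12\Theta^{-1}\bigl(g^{\sharp}(b,b)+\tfrac{\lambda}{3}\bigr)g(\dot x,e_k)$ together with an auxiliary ODE for $g^{\sharp}(b,b)$, and proves (i) and (ii) in parallel. You instead work with $(\Theta,f,s)$ and the Weyl connection, reduce (i) to the $k=0$ case of (ii) via $\nabla_{\dot x}\Theta=\langle b,\dot x\rangle=b_0$ (a small simplification the paper does not exploit), and identify the coefficient as $C=s+f^{\sigma}\nabla_{\sigma}\Theta+\tfrac12\Theta|f|^2$; using \eq{conf5} one checks that $C=\tfrac12\Theta^{-1}(g^{\sharp}(b,b)+\lambda/3)$, so the two conserved quantities coincide. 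Your derivation of $\widehat\nabla_{\dot x}b_\nu=C\,\dot x_\nu$ is correct, and since $g(\dot x,e_k)=\eta_{0k}$ the spatial components of $b$ are indeed immediately constant.

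Two remarks on the final step, which you only sketch. First, the derivative of $s$ is not governed by \eq{GCFE_3} --- that equation propagates $\widehat L$ and brings in the rescaled Weyl tensor, which plays no role here --- but by \eq{conf4}, i.e.\ $\nabla_{\mu}s=-L_{\mu\nu}\nabla^{\nu}\Theta$. Second, with that correction the cancellation is short rather than tedious: the conformal geodesic equation with $\langle f,\dot x\rangle=0$ gives $\nabla_{\dot x}f_{\nu}=-\tfrac12|f|^2\dot x_{\nu}+L_{\dot x\nu}$, whence $\nabla_{\dot x}(f^{\sigma}\nabla_{\sigma}\Theta)=-\tfrac12|f|^2b_0+L(\dot x,\nabla^{\sharp}\Theta)-\Theta L(\dot x,f^{\sharp})$ and $\nabla_{\dot x}(\tfrac12\Theta|f|^2)=\tfrac12 b_0|f|^2+\Theta L(\dot x,f^{\sharp})$, and these cancel against $\nabla_{\dot x}s=-L(\dot x,\nabla^{\sharp}\Theta)$ identically, so $\partial_{\tau}C=0$ with no Weyl-tensor terms ever appearing. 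Your proposed cross-check in the physical spacetime with $\widetilde L=0$ is fine but is tied to the $\lambda=0$ vacuum case, whereas the argument you actually give (like the paper's) is valid for any $\lambda$.
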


\begin{remark}
{\rm
For $\lambda=0$  the initial data for these ODEs in terms of the ``gauge data'' \eq{rel_gauge_data} at $\scri^-$ are computed    in Section~\ref{sec_b_theta} below.
}
\end{remark}

\begin{remark}
{\rm
The conformal Gauss gauge is distinguished by the fact that it is geometric and deeply intertwined with the
conformal structure.
It has the   decisive  property to  supply explicit knowledge  about the fields $\Theta$ and $b_k$.
These fields  can be computed explicitly along any conformal geodesic of the congruence at hand.
In particular, one gains an a priori knowledge  of the location of $\scri^+$ (supposing that the solution extends that far).
}
\end{remark}

\begin{proof}
In terms of the Levi-Civita connection $\nabla$ \eq{eqn_frame} and \eq{eqn_Theta} read
\begin{align}
\Theta\nabla_{\dot x} \dot x  &= %g(\dot x,  \dot x)  
-g^{\sharp}(b - \mathrm{d}\Theta ,\cdot)
\,,
\label{constraint1b}
\\
\Theta \nabla_{\dot x}(b-  \mathrm{d}\Theta ) &=
  \langle \dot x, d\rangle (b -  \mathrm{d}\Theta )  -  \frac{1}{2} g^{\sharp}(b- \mathrm{d}\Theta ,b- \mathrm{d}\Theta) g(\dot x,\cdot)
+ \Theta^2 L(\dot x, \cdot)
\,,
\label{constraint2b}
\\
\nabla_{\dot x}\Theta &= \langle \dot x, b\rangle
\,,
\label{constraint3b}
\\
\Theta\nabla_{\dot x} e_k 
&= -\langle  b -  \mathrm{d}\Theta, e_k\rangle \dot x +   g(\dot x, e_k)  g^{\sharp}(b -  \mathrm{d}\Theta,\cdot) 
\,,
\label{constraint4b}
\end{align}
where $g^{\sharp}$ denotes the inverse metric.
We contract \eq{constraint2b} with $\dot x$ and $b$, respectively. Using also \eq{constraint1b} we deduce
\begin{align}
 \Theta(\langle\dot x, \nabla_{\dot x}b \rangle
-  \nabla_{\dot x} \nabla_{\dot x}\Theta )
&=
%g(\dot x,\dot x)
-\frac{  1}{2}\Big(g^{\sharp}( \mathrm{d}\Theta , \mathrm{d}\Theta)
-   g^{\sharp}(b ,b)  \Big)
+ \Theta^2 L(\dot x, \dot x)
\,,
\label{constr3_contrx}
\\
\Theta g^{\sharp}(b,\nabla_{\dot x}(b -  \mathrm{d}\Theta))
&=
  \frac{ \langle \dot x,b\rangle }{2}\Big(g^{\sharp}(b ,b)
- g^{\sharp}( \mathrm{d}\Theta , \mathrm{d}\Theta)  \Big)
+ \Theta^2g^{\sharp}(b,\cdot)  L(\dot x, \cdot)
\,.
\label{constr3_contrd}
\end{align}
On the other hand, it follows from \eq{constraint1b} and \eq{constraint3b} that
\begin{align}
\Theta\nabla_{\dot x} \nabla_{\dot x}\Theta& = \Theta\nabla_{\dot x} \langle \dot x, b\rangle=
 \langle\Theta\nabla_{\dot x} \dot x, b\rangle + \Theta\langle \dot x, \nabla_{\dot x} b\rangle
\\
&=
  %g(\dot x,  \dot x)
-  g^{\sharp}(b ,b)
+ %  g(\dot x,  \dot x) 
  g^{\sharp}( \mathrm{d}\Theta ,b)
+ \Theta \langle \dot x, \nabla_{\dot x} b\rangle
\,.
\end{align}
Combined, that yields
\begin{equation}
\Theta^2 L(\dot x, \dot x)
=
 %g(\dot x,  \dot x) %%
\frac{1}{2}  g^{\sharp}(b ,b)
-     g^{\sharp}( \mathrm{d}\Theta ,b)
+ \frac{1}{2}g^{\sharp}( \mathrm{d}\Theta , \mathrm{d}\Theta)
\,.
\label{schouten1}
\end{equation}
The CFE (\eq{conf3} and \eq{conf5} in Appendix~\ref{app_ADM}, cf.\ \cite{F3}) imply
\begin{equation}
\nabla_{\mu}\nabla_{\nu}\Theta = -\Theta L_{\mu\nu} + \frac{1}{2}\Theta^{-1}\Big(\nabla_{\alpha}\Theta\nabla^{\alpha}\Theta + \frac {\lambda}{3}\Big) g_{\mu\nu}
\,.
\label{cfe_relation}
\end{equation}
Contracting this twice with $\dot x$ and using \eq{constraint1b} and \eq{schouten1} yields
\begin{equation}
\nabla_{\dot x}\nabla_{\dot x}\Theta
  =g(\nabla_{\dot x}\dot x,\mathrm{d}\Theta) -\Theta L(\dot x,\dot x) -\frac{1}{2}\Theta^{-1}\Big( g^{\sharp}(\mathrm{d}\Theta,\mathrm{d}\Theta)
+ \frac{\lambda}{3}\Big)% g(\dot x, \dot x)
=-\frac{1}{2}\Theta^{-1}%  g(\dot x,  \dot x) 
\Big(  g^{\sharp}(b ,b)+ \frac{\lambda}{3} \Big) 
\,,
\label{2nd_deriv_Theta}
\end{equation}
while contraction with $\dot x$ and $b$ leads to
\begin{equation}
g^{\sharp}(b,\nabla_{\dot x} \mathrm{d}\Theta)
= -\Theta g^{\sharp}(b,\cdot)  L(\dot x, \cdot) +\frac{1}{2}\Theta^{-1} \Big( g^{\sharp}(\mathrm{d}\Theta,\mathrm{d}\Theta) + \frac{\lambda}{3} \Big) \langle \dot x, b\rangle
\,.
\end{equation}
We insert the latter equation into \eq{constr3_contrd},
\begin{equation}
 \nabla_{\dot x} g^{\sharp}(b,b )
= 
 \Theta^{-1}\langle \dot x,b\rangle\Big( g^{\sharp}(b ,b)
+   \frac{\lambda}{3} 
\Big)
\,.
\label{deriv_d^2}
\end{equation}
Taking now  the derivative of \eq{2nd_deriv_Theta} along the conformal geodesics and using \eq{constraint3b}
 and \eq{deriv_d^2}, we obtain (i).

Next, we insert \eq{cfe_relation}, contracted with $\dot x$, into \eq{constraint2b},
\begin{equation}
\Theta\nabla_{\dot x}b =
 \Big(g^{\sharp}(b  ,\mathrm{d}\Theta) -  \frac{1}{2} g^{\sharp}(b ,b) + \frac{\lambda}{6} \Big)g(\dot x, \cdot) 
+  \langle \dot x, b\rangle (b - \mathrm{d}\Theta ) 
\,.
\label{deriv_d}
\end{equation}
%
%whence, with \eq{constraint1b} and \eq{constraint3b},
%%
%\begin{equation}
%\nabla_{\dot x}\langle\dot x,b \rangle
%= 
%-\frac{1}{2}\Theta^{-1} \Big(   g^{\sharp}(b ,b) + \frac{\lambda}{3} \Big)
%%g(\dot x, \dot x) 
%\,.
%\end{equation}
%%
%
From \eq{deriv_d} and \eq{constraint4b} we deduce
\begin{equation}
\nabla_{\dot x}\langle b, e_k\rangle 
 =
\frac{1}{2} \Theta^{-1}\Big(  g^{\sharp}(b ,b) 
 + \frac{\lambda}{3} \Big) g(\dot x, e_k)  
\,.
\label{deriv_dx}
\end{equation}
Differentiating this one more time along the conformal geodesics and using   \eq{constraint3b} and  \eq{deriv_d^2},  we find (ii).
\qed
\end{proof}

\subsection{Evolution equations for Schouten tensor, connection and frame coefficients}
\label{sec_evolution1}

In the conformal Gauss gauge the GCFE split into evolution and constraint equations.
The constraint equations will be analyzed in Section~\ref{sec_solution_constr_gen}
for $\lambda=0$ and in adapted null coordinates (and then rewritten in terms of frames).
Here we want to derive a somewhat more explicit form of the evolution equations.
Let us start with Schouten tensor, connection coefficients and frame field.
Recall  \eq{0_gauge_conditions}, so we do not need equations for $\widehat L_{0i}$, $\widehat\Gamma_0{}^i{}_j$ and $e^{\mu}{}_0$.
The GCFE \eq{GCFE_1}-\eq{GCFE_3} imply the following system of evolution equations for the remaining components (cf.\ e.g.\ \cite{F3}),
\begin{align}
\partial_{\tau}\widehat L_{aj} 
&= b_i W^{i}{}_{j0a}-  \widehat\Gamma_a{}^b{}_0\widehat L_{bj} 
\,,
\\
\partial_{\tau}\widehat\Gamma_{a}{}^i{}_j 
 &=
-  \widehat\Gamma_b{}^i{}_j\widehat\Gamma_{a}{}^b{}_{0} + 
2\delta_{(0}{}^i\widehat L_{|a|j)}   - \eta_{j0}\widehat L_{a}{}^i +\Theta W^i{}_{j0a}
\,,
\\
\partial_{\tau}e^{\mu}{}_a&= -\widehat\Gamma_{a}{}^k{}_{0} e^{\mu}{}_k
\,.
\label{evolution_frame}
\end{align}
The Levi-Civita connection satisfies $\Gamma_{i(jk)}=0$, equivalently, $\widehat\Gamma _{i(jk)} = \eta_{jk}f_i$. If follows that the Weyl connection
has the following (anti-)symmetric properties, we will make extensively use of,
\begin{eqnarray}
\widehat\Gamma_a{}^1{}_0=\widehat\Gamma_a{}^0{}_1\,, \quad \widehat\Gamma_a{}^0{}_0=\widehat\Gamma_a{}^1{}_1=\frac{1}{2}\widehat\Gamma_a{}^A{}_A\,, \quad\eta_{AB}\widehat\Gamma_a{}^B{}_1=-\widehat\Gamma_a{}^1{}_A\,, 
\quad \eta_{AB}\widehat\Gamma_a{}^B{}_0=\widehat\Gamma_a{}^0{}_A\,.
\label{relations_connection1}
\end{eqnarray}
As the relevant independent components
 of the Weyl connection one may regard
\begin{equation}
\widehat\Gamma_a{}^0{}_b
\,,\quad
\widehat\Gamma_a{}^1{}_b
\,,\quad
\widehat\Gamma_{a[BC]}
\,.
\end{equation}
%
%(in fact one merely needs $\widehat\Gamma_{a[BC]}$).
Using the algebraic symmetries of the Weyl tensor (cf.\ the next section) one  ends up with a system of evolution equations for Schouten tensor, connection  and frame coefficients,
\begin{align}
\partial_{\tau}\widehat L_{a0} 
&= b_i W^{i}{}_{00a}-  \widehat\Gamma_a{}^b{}_0\widehat L_{b0} 
\,,
\label{evolution1}
\\
\partial_{\tau}\widehat L_{ab} 
&= b_i W^{i}{}_{b0a}-  \widehat\Gamma_a{}^c{}_0\widehat L_{cb} 
\,,
\\
\partial_{\tau}\widehat\Gamma_{a}{}^0{}_b 
 &=
-  \widehat\Gamma_c{}^0{}_b\widehat\Gamma_{a}{}^c{}_{0} + \widehat L_{ab}   -\Theta W_{0a0b}
\,,
\\
\partial_{\tau}\widehat\Gamma_{a}{}^1{}_b 
 &=
-  \widehat\Gamma_c{}^1{}_b\widehat\Gamma_{a}{}^c{}_{0} +\delta^1{}_{b}\widehat L_{a0}   -\Theta W_{0ab1}
\,,
\\
\partial_{\tau}\widehat\Gamma_{1}{}^A{}_B 
 &=
-  \widehat\Gamma_c{}^A{}_B\widehat\Gamma_{1}{}^c{}_{0} + \delta^A{}_{B}\widehat L_{10}   +\Theta W^A{}_{B01}
\,,
\\
\partial_{\tau}\widehat\Gamma_{A}{}^B{}_C 
 &=
-  \widehat\Gamma_d{}^B{}_C\widehat\Gamma_{A}{}^d{}_{0} + \delta^B{}_{C}\widehat L_{A0}   -2\Theta \eta_{A[B}W_{C]110}
\,,
\\
\partial_{\tau}e^{\mu}{}_a&= -\widehat\Gamma_{a}{}^0{}_{0} \delta^{\mu}{}_0
 -\widehat\Gamma_{a}{}^b{}_{0} e^{\mu}{}_b
\,.
\label{evolution7}
\end{align}

\subsection{Bianchi equation}
\label{sec_evolution2}

%\subsubsection{Independent components of the Weyl tensor}

As 10 independent components of the rescaled Weyl tensor in an orthonormal frame one can identify
(``$\mathrm{tf}$'' denotes the trace-free part w.r.t.\ to the $(AB)$-``angular''-components),
\begin{equation}
W_{0101}\,, \quad
W_{011A}\,, \quad
W_{010A}\,, \quad
W_{01AB} \,, \quad 
( W_{1A1B})_{\mathrm{tf}}\,,\quad
( W_{0(AB)1})_{\mathrm{tf}} \,.
\label{independent_Weyl}
\end{equation}
The remaining components  are related to these ones in the following way:
\begin{align}
W_{0[AB]1} = -\frac{1}{2}W_{01AB}
\,,
\label{Weyl_components1}
\quad
\eta^{AB}W_{0AB1} = 0
\,,
\\
W_{0ABC} =
-2W_{011[C}\eta_{B]A}
\,,
\quad
W_{1ABC} =
 -2W_{010[C}\eta_{B]A}
\,,
\\
\eta^{AB}W_{0A0B} = -W_{0101}
\,,
\quad
( W_{0A0B} )_{\mathrm{tf}}
=
 ( W_{1A1B} )_{\mathrm{tf}}
\,,
\\
W_{ABCD} 
=2\eta_{C[B}\eta_{A]D}W_{0101}
\,,
\quad
\eta^{AB}W_{1A1B} = W_{0101}
\,,
\label{Weyl_components8}
\end{align}
where we have employed all the algebraic symmetries of the  Weyl tensor.
It is convenient to make the following definitions:
\begin{align}
V^{\pm}_{AB}& :=( W_{1A1B})_{\mathrm{tf}}\pm  W_{0(AB)1}
\,,
\label{rel_comp1}
\\
W^{\pm}_A &:=W_{010A}\pm W_{011A}
\,,
\\
U_{AB} &:= W_{01AB}  + \eta_{AB} W_{0101}
\,,
\label{rel_comp3}
\end{align}
which capture all independent components. We will use $U_{AB}$ (instead of $W_{0101}$ and $W_{01AB}$)  only occasionally.

%\subsubsection{Evolution and constraint  equations}
Let us consider the Bianchi equation \eq{GCFE_4}.
The independent components are provided by the $j=a$ components.
Expressed in terms of the connection coefficients these components read
\begin{align}
\partial_{\tau}W_{0a0b}     =&  e^{\mu}{}_c\partial_{\mu} W^c{}_{a0b}  - \widehat\Gamma_c{}^c{}_0 W_{0a0b} 
+ \widehat\Gamma_c{}^c{}_d W^d{}_{a0b} 
- \widehat\Gamma_c{}^0{}_a W^c{}_{00b} 
- \widehat\Gamma_c{}^d{}_a W^c{}_{d0b} 
\nonumber
\\
&
-2 \widehat\Gamma_c{}^0{}_{0} W^c{}_{a0b} 
- \widehat\Gamma_c{}^d{}_{0} W^c{}_{adb} 
+ \widehat\Gamma_c{}^d{}_{b} W^c{}_{ad0} 
\,,
\label{Bianchi equation1}
\\
\partial_{\tau}W_{0abc}     =&  e^{\mu}{}_d\partial_{\mu} W^d{}_{abc}  - \widehat\Gamma_d{}^d{}_0 W_{0abc} 
+ \widehat\Gamma_d{}^d{}_e W^e{}_{abc} 
- \widehat\Gamma_d{}^0{}_a W^d{}_{0bc}
- \widehat\Gamma_d{}^e{}_a W^d{}_{ebc}  
\nonumber
\\
&
- \widehat\Gamma_d{}^0{}_{b} W^d{}_{a0c}
+ \widehat\Gamma_d{}^0{}_{c} W^d{}_{a0b}
- \widehat\Gamma_d{}^e{}_{b} W^d{}_{aec}
+ \widehat\Gamma_d{}^e{}_{c} W^d{}_{aeb}
-  \widehat\Gamma_d{}^0{}_0W^d{}_{abc} 
\,.
\label{Bianchi equation2}
\end{align}
%
%Not all of them are evolution equations, nevertheless 
We will need all of them for our  analysis of the critical sets where spatial infinity touches null infinity.
For this we rewrite them in terms of \eq{rel_comp1}-\eq{rel_comp3}. 
It is convenient to define the operator  $\check\nabla$ as follows,
\begin{equation}
\check\nabla_Av_B := e^{\mu}{}_A\partial_{\mu}v_B -\widehat\Gamma_A{}^C{}_Bv_C
\,.
\end{equation}
and similarly for  tensors of higher valence.
From a lengthy calculation we obtain,
\begin{align}
\partial_{\tau}W_{0101} 
 =& 
\Big(-\frac{1}{2}\check\nabla^A
+2 \widehat\Gamma^{A0}{}_{0}
+\frac{1}{2} \widehat\Gamma^{A0}{}_1\Big) W^{+}_A
+\Big(\frac{1}{2}\check\nabla^A
-2 \widehat\Gamma^{A0}{}_{0}
+\frac{1}{2} \widehat\Gamma^{A0}{}_1\Big)W^{-}_A
\nonumber
\\
&
-\frac{3}{2} \widehat\Gamma_A{}^A{}_0W_{0101}
-\frac{3}{2} \widehat\Gamma^{AB}{}_1W_{01AB}
- \frac{1}{2}(\widehat\Gamma^{AB}{}_{1}+ \widehat\Gamma^{AB}{}_{0})V^+_{AB}
\nonumber
\\
&
+ \frac{1}{2}(\widehat\Gamma^{AB}{}_{1}-\widehat\Gamma^{AB}{}_{0})V^-_{AB}
\label{evolutionW1b}
\,,
\\
\partial_{\tau}W_{01AB}     =&
(\check\nabla_{[A}  
-  2\widehat\Gamma_{[A}{}^0{}_{|0|} 
- \widehat\Gamma_{[A}{}^0{}_{|1|} )W^{+}_{B]}
+(\check\nabla_{[A} 
-  2\widehat\Gamma_{[A}{}^0{}_{|0|} 
+ \widehat\Gamma_{[A}{}^0{}_{|1|}) W^{-}_{B]}
\nonumber
\\
&
-3\widehat\Gamma_{[A}{}^{1}{}_{B]} W_{0101}
  - \frac{3}{2}\widehat\Gamma_C{}^C{}_0 W_{01AB} 
+( \widehat\Gamma^{C1}{}_{[A}- \widehat\Gamma^{C0}{}_{[A})V^+_{B]C}
\nonumber
\\
&
+( \widehat\Gamma^{C1}{}_{[A}+ \widehat\Gamma^{C0}{}_{[A}) V^-_{B]C}
\label{evolutionW2b}
\,,
\\
\partial_{\tau}W^-_A
 =& ( \check\nabla^B
-3 \widehat\Gamma^{B0}{}_{0}
+2\widehat\Gamma^{B0}{}_1)V^+_{AB}
 +\frac{1}{2}\check\nabla^B U_{BA}
-\frac{3}{2} \widehat\Gamma^{B0}{}_{0} U_{BA}
\nonumber
\\
&
+ \Big( \widehat\Gamma_{[A}{}^0{}_{B]} 
+3\widehat\Gamma_{[A}{}^{1}{}_{B]} 
 \Big) W^{-B}
+\frac{1}{2}\Big( \widehat\Gamma_B{}^B{}_{1}  
 -3 \widehat\Gamma_B{}^B{}_0 \Big)W^-_{A} 
\nonumber
\\
&
+\Big( \widehat\Gamma_{(A}{}^0{}_{B)}
+ \widehat\Gamma_{(A}{}^1{}_{B)} 
\Big) W^{+B}
+\frac{1}{2}\Big( \widehat\Gamma_B{}^B{}_{1}  - \widehat\Gamma_B{}^B{}_0\Big) W^+_A
\,,
\label{evolutionW3b}
\end{align}
\begin{align}
\partial_{\tau}W^+_A
 =&( -  \check\nabla^B
+3 \widehat\Gamma^{B0}{}_{0}
+2\widehat\Gamma^{B0}{}_1)V^-_{AB}
 -\frac{1}{2}\check\nabla^B U_{AB}
+\frac{3}{2} \widehat\Gamma^{B0}{}_{0} U_{AB} 
\nonumber
\\
&
+ \Big( \widehat\Gamma_{[A}{}^0{}_{B]}
-3\widehat\Gamma_{[A}{}^{1}{}_{B]} \Big) W^{+B}
 - \frac{1}{2}\Big(3\widehat\Gamma_B{}^B{}_0 
+ \widehat\Gamma_B{}^B{}_{1}\Big) W^+_A
\nonumber
\\
&
+\Big(   \widehat\Gamma_{(A}{}^0{}_{B)}
-\widehat\Gamma_{(A}{}^1{}_{B)} 
 \Big)W^{-B}
-\frac{1}{2}\Big( \widehat\Gamma_B{}^B{}_{1} +\widehat\Gamma_B{}^B{}_0  \Big)W^-_A
\,,
\label{evolutionW4b}
\\
(e^{\tau}{}_1 + 1)\partial_{\tau}  V^+_{AB}
 =&
- e^{\alpha}{}_1\partial_{\alpha} V^+_{AB}
+( \widehat\Gamma_1{}^0{}_{0}
- 2\widehat\Gamma_1{}^1{}_{0} 
- \widehat\Gamma_C{}^C{}_0
- \widehat\Gamma_C{}^C{}_1 )V^+_{AB}
\nonumber
\\
&
+ \Big( ( \widehat\Gamma_C{}^0{}_{(A} - \widehat\Gamma_C{}^1{}_{(A}+2 \widehat\Gamma_{1C(A} )V^+_{B)}{}^{C}
- \frac{3}{2}(\widehat\Gamma_C{}^0{}_{(A} + \widehat\Gamma_C{}^1{}_{(A}) U^C{}_{B)} 
\nonumber
\\
&
+ (\check\nabla_{(A} 
+ 2\widehat\Gamma_1{}^0{}_{(A}
-2 \widehat\Gamma_{(A}{}^0{}_{|0|}
+ \widehat\Gamma_{(A}{}^0{}_{|1|}
+ 2\widehat\Gamma_1{}^1{}_{(A} )W^-_{B)}
\Big)_{\mathrm{tf}}
\,,
\label{evolutionW5b}
\\
( e^{\tau}{}_1-1)\partial_{\tau}
 V^-_{AB} 
=&
-e^{\alpha}{}_1\partial_{\alpha} V^-_{AB} 
+ (\widehat\Gamma_1{}^0{}_0+ 
2 \widehat\Gamma_1{}^1{}_{0} 
  + \widehat\Gamma_C{}^C{}_0 
- \widehat\Gamma_C{}^C{}_1 ) V^-_{AB} 
\nonumber
\\
&
+ \Big( -( \widehat\Gamma_C{}^0{}_{(A} + \widehat\Gamma_C{}^1{}_{(A}- 2\widehat\Gamma_{1C(A})V^-_{B)}{}^C
+ \frac{3}{2}( \widehat\Gamma_C{}^0{}_{(A} 
-  \widehat\Gamma_C{}^1{}_{(A} )U_{B)}{}^C
\nonumber
\\
&
+(\check\nabla_{(A}
- 2\widehat\Gamma_1{}^0{}_{(A} 
- 2 \widehat\Gamma_{(A}{}^0{}_0
- \widehat\Gamma_{(A}{}^0{}_{|1|}
+2 \widehat\Gamma_1{}^1{}_{(A} )W^+_{B)}
\Big)_{\mathrm{tf}}
\,,
\label{evolutionW6b}
\end{align}
and,
\begin{align}
(\partial_{\tau} -e^{\mu}{}_1\partial_{\mu}) W_{0101}
 =& 
\Big(\check\nabla^A
-4 \widehat\Gamma^{A0}{}_{0}
+\widehat\Gamma^{A0}{}_1- \widehat\Gamma_1{}^A{}_0 
-\widehat\Gamma_1{}^A{}_{1} \Big)W^{-}_A
+\Big( \widehat\Gamma_1{}^A{}_0 
-\widehat\Gamma_1{}^A{}_{1}\Big)W^{+}_{A}
\nonumber
\\
&
-(  \widehat\Gamma_A{}^B{}_1+ \widehat\Gamma_A{}^B{}_0)V^+_{A}{}^B
+\frac{3}{2} (\widehat\Gamma_A{}^B{}_{0} - \widehat\Gamma_A{}^B{}_{1} )W^A{}_{B01}
\nonumber
\\
&
-3\Big(  \widehat\Gamma_1{}^0{}_0+\frac{1}{2} \widehat\Gamma_A{}^A{}_0 -\frac{1}{2}\widehat\Gamma_A{}^A{}_{1} \Big)W_{0101}
\label{evolution_WA-_2A}
\,,
\\
(\partial_{\tau} -  e^{\mu}{}_1\partial_{\mu}) W_{01AB}   =&
2\Big(\check\nabla_{[A} 
-  2\widehat\Gamma_{[A}{}^0{}_{|0|} 
+ \widehat\Gamma_{[A}{}^0{}_{|1|}- \widehat\Gamma_1{}^0{}_{[A} 
+ \widehat\Gamma_1{}^1{}_{[A}\Big) W^{-}_{B]}
-2\Big(  \widehat\Gamma_1{}^0{}_{[A} 
+ \widehat\Gamma_1{}^1{}_{[A} \Big)W^{+}_{B]}
\nonumber
\\
&
+2( \widehat\Gamma_C{}^{1}{}_{[A}- \widehat\Gamma_C{}^{0}{}_{[A})V^+_{B]}{}^{C}
-3(\widehat\Gamma_{[A}{}^0{}_{B]} +\widehat\Gamma_{[A}{}^{1}{}_{B]})W_{0101} 
\nonumber
\\
&
- 3\Big(\widehat\Gamma_1{}^0{}_{0} + \frac{1}{2}\widehat\Gamma_C{}^C{}_0 -\frac{1}{2}\widehat\Gamma_C{}^C{}_1 \Big)W_{01AB} 
\,,
\\
(\partial_{\tau}-e^{\mu}{}_1\partial_{\mu} )W^-_{A} 
 =& \Big(2 \check\nabla^B
-6 \widehat\Gamma^{B0}{}_{0}
+4\widehat\Gamma^{B0}{}_1
-\widehat\Gamma_1{}^B{}_{0} - \widehat\Gamma_1{}^B{}_{1} \Big)V^+_{AB}
+\frac{3}{2}\Big( (\widehat\Gamma_1{}^B{}_{1}- \widehat\Gamma_1{}^B{}_0  
\Big)U_{BA}
\nonumber
\\
&
-\Big(2 \widehat\Gamma_1{}^0{}_{0}  -\widehat\Gamma_1{}^1{}_0+2 \widehat\Gamma_B{}^B{}_0-2\widehat\Gamma_B{}^B{}_1 \Big)W^-_{A}
\nonumber
\\
&
+\Big( 4\widehat\Gamma_{[A}{}^0{}_{B]} 
+4\widehat\Gamma_{[A}{}^{1}{}_{B]} 
- \widehat\Gamma_{1BA}\Big) W^{-B}
\,,
\label{evolution_WA-_2AB}
\\
(\partial_{\tau}  -e^{\mu}{}_1\partial_{\mu})W^+_{A}   =&
 -\check\nabla^B U_{AB}
+\frac{3}{2}\Big(  \widehat\Gamma_1{}^0{}_{B}-\widehat\Gamma_1{}^1{}_{B}
 +2\widehat\Gamma_B{}^0{}_{0} \Big)U_{A}{}^{B}
+\Big( \widehat\Gamma_1{}^B{}_{0} -\widehat\Gamma_1{}^B{}_{1}\Big) V^-_{AB}
\nonumber
\\
&
+\Big( \widehat\Gamma_B{}^B{}_1  - \widehat\Gamma_B{}^B{}_0  - \widehat\Gamma_1{}^1{}_0 -2\widehat\Gamma_1{}^0{}_{0}\Big) W^+_{A}
-\Big( 2\widehat\Gamma_{[A}{}^0{}_{B]}+ 2\widehat\Gamma_{[A}{}^1{}_{B]}  + \widehat\Gamma_{1BA} \Big) W^{+B}
\nonumber
\\
&
 -\Big( \widehat\Gamma_B{}^B{}_0 + \widehat\Gamma_B{}^B{}_1\Big)   W^-_A
+2\Big( \widehat\Gamma_{(A}{}^0{}_{B)}-\widehat\Gamma_{(A}{}^1{}_{B)}  \Big)W^{-B}
\,.
\label{evolution_WA-_2}
\end{align}
At times we will call \eq{evolutionW1b}-\eq{evolutionW6b} ``evolution equations'' and \eq{evolution_WA-_2A}-\eq{evolution_WA-_2},
whose evaluation on $\scri^-$ does not contain transverse derivatives,  ``constraint equations''.
However, in this work we do not attempt to solve the evolution problem or show preservation of the constraints
under evolution (for an ordinary Cauchy problem this has been done  in \cite{F_AdS}).
We therefore   do not care here whether this is the ``appropriate'' split of the Bianchi equation.

\subsection{Frame field and coordinates at $\scri^-$ for $\lambda=0$}

\subsubsection{Adapted null coordinates}
We assume henceforth that the cosmological constant vanishes, 
\begin{equation}
\lambda=0
\,.
\end{equation}
We introduce \emph{adapted null coordinates} $(\tau,r,x^{\mathring A})$ on $\scri^-\cong \mathbb{R}\times S^2$ (cf.\ their definition prior to \eq{adapted_null_gen}).
In conformal Gaussian coordinates we require, in addition, $g_{\tau\tau}=g(\dot x,\dot x)=-1$.
It is well-known that the shear tensor vanishes on $\scri$.
This implies that $g_{\mathring A\mathring B}=\Omega^2(r,x^{\mathring C})  h_{\mathring A\mathring B}$ for some $r$-independent Riemannian metric on $S^2$.
Since any smooth Riemannian metric on $S^2$ is conformal to the  standard metric  $s_{\mathring A\mathring B} \mathrm{d}x^{\mathring A}\mathrm{d}x^{\mathring B}$ we may simply assume by redefining $\Omega$ that $h_{\mathring A\mathring B}=s_{\mathring A\mathring B}$.
By way of summary, the line element  takes the following form on $\scri^-$,
\begin{equation}
 g|_{\scri^-}= - \mathrm{d}\tau ^2 + 2\nu_{\tau}(r,x^{\mathring C}) \mathrm{d}\tau\mathrm{d}r 
+ 2\nu_{\mathring A }(r,x^{\mathring C}) \mathrm{d}\tau\mathrm{d}x^{\mathring A}+\Omega^2(r,x^{\mathring C})  s_{\mathring A\mathring B}(x^{\mathring  C}) \mathrm{d}x^{\mathring A}\mathrm{d}x^{\mathring B}
\,,
\label{adapted_null}
\end{equation}
which is a regular Lorentzian metric supposing that $\nu_{\tau},\Omega \ne 0$.
Adapted null coordinates are used as ``initial coordinates'' which are then dragged along the congruence of conformal geodesics.
The following relation holds between $\Omega$ and the divergence $\theta^+$ of the null geodesic generators of $\scri^-$ \cite{CCM2},
\begin{equation}
\theta^+=2\partial_r\log\Omega
\,.
\label{div_conffactor}
\end{equation}
We have already mentioned  that there is still a gauge freedom left, namely to reparameterize  the null geodesic generators of $\scri^-$.
This gauge freedom, $r\mapsto \tilde r=\tilde r(r, x^{\mathring A})$, can be employed to prescribe the function $\kappa$ \cite{CCM2},
given by
\begin{equation}
\rnabla_{\ell}\ell=\kappa\ell\,.
\end{equation}
It  measures the deviation of the coordinate $r$ to be an affine parameter.
This does not completely fix the $r$-coordinate. The remaining gauge freedom will be considered below.
There also remains the gauge freedom to choose coordinates $(x^{\mathring A})$ on $\{\tau=-1,r=\mathrm{const.}\}\cong S^2$, whose specific choice will be  irrelevant for us.

A  list of all the relevant  Christoffel symbols, or rather their restriction to $\scri^-$, is provided in Appendix~\ref{app_charact_constraints},
\eq{coordChristoffel1}-\eq{coordChristoffel13} (recall that in conformal Gauss coordinates we, in addition, require  $g_{\tau\tau}=-1$).
We remark that \eq{dfn_kappa} and \eq{dfn_xi} may be regarded as definitions of $\kappa$ and $\xi_{\mathring A}$, while the trace-free part of 
\eq{coordChristoffel11} may be regarded as the definition of $\Xi_{\mathring A\mathring B}$. Equivalently, they can be defined as
\begin{eqnarray}
\kappa &=& \nu^{\tau}\partial_r\nu_{\tau}-\frac{1}{2}\nu^{\tau}\partial_{\tau} g_{rr}|_{\scri^-}
\,,
\\
\xi_{\mathring A} &=& -\nu^{\tau}(\partial_{\mathring A} \nu_{\tau} -\partial_{\tau} g_{r\mathring A}|_{\scri^-}+ \partial_r\nu_{\mathring A} -\theta^+\nu_{\mathring A})
\,,
\\
\Xi_{\mathring A\mathring B} &=&  \nu^{\tau} (\partial_{\tau} g_{\mathring A\mathring B}|_{\scri^-}
-2 \rnabla_{(\mathring A}\nu_{\mathring B)})_{\mathrm{tf}}
\,,
\end{eqnarray}
 where $\nu^{\tau}:= (\nu_{\tau})^{-1}$.
The indices of $\xi_{\mathring A}$ and $\Xi_{\mathring A\mathring B}$ will be  raised and lowered with $ g_{\mathring A\mathring B}|_{\scri^-}=\Omega^2 s_{\mathring A\mathring B}$.
%(Here we have set $\xi^{\mathring C}= g^{\mathring C\mathring D} \xi_{\mathring D}$.)

\subsubsection{Frame field}

We need to choose an initial  frame field $\{e_{i*}\}$ on  $\scri^-$  which satisfies
\begin{equation}
 g (e_{i*},e_{j*}) = \eta_{ij} \quad \text{and} \quad  e_{0*}=\partial_{\tau}
\,.
\label{frame_field_conds}
\end{equation}
A frame field which fulfills these requirements is provided by
\begin{align}
e_{0*} =& \partial_{\tau}
\,,
\label{frame_field1}
\\
e_{1*} =& \partial_{\tau} + \nu^{\tau}\partial_r
\,,
\label{frame_field2}
\\
e_{A*} =&   \Omega^{-1}\mathring e^{\mathring A}{}_A( \partial_{\mathring A} - \nu^{\tau}\nu_{\mathring A}\partial_r )
\,,
\label{frame_field3}
\end{align}
where $(\mathring e_A)$, $A=2,3$, denotes an orthonormal frame field on the  round sphere $\mathbb{S}^2:=(S^2, s_{\mathring A\mathring B})$.
All other frame fields which satisfy \eq{frame_field_conds} arise from this one as
\begin{equation}
\check e_{0*}= e_{0*}\,, \quad \check e_{a*}=  M(r, x^{\mathring A}) \cdot e_{a*}
\,, \quad M(r, x^{\mathring A})\in O(3)
\,.
\end{equation}
We consider a conformal Gauss gauge based on adapted null coordinates and a frame field given by \eq{frame_field1}-\eq{frame_field3}.
%\tim{results in global orthogonal transformation of the frame and has therefore no impact on the production of log terms as long as the transformation is smooth...?}

\subsubsection{Initial data for $\Theta$ and $b$}
\label{sec_b_theta}

According to Lemma~\ref{lemma_b_theta}  the conformal factor $\Theta$ and the 1-form $b$
are globally of the form
\begin{align}
\Theta =& \Theta^{(1)}(1+\tau) + \Theta^{(2)}(1+\tau)^2
\,,
\\
b_i =& b^{(0)}_i + b^{(1)}_i (1+\tau)
\,.
\end{align}
We want relate the values of the integration functions $\Theta^{(n)}=\Theta^{(n)}(r,x^{\mathring A})$ and 
$b_i^{(n)}=b_i^{(n)}(r,x^{\mathring A})$, $n=0,1$, in terms of the gauge data \eq{rel_gauge_data}.
First of all, we observe that $\Theta^{(1)}= \widehat \nabla_{\dot x}\Theta|_{\scri^-}>0$ can be directly identified as a conformal gauge freedom.

Let us express  $\Theta^{(2)}=\frac{1}{2}(\widehat\nabla_{\dot x}\widehat\nabla_{\dot x} \Theta  )|_{\scri^-}$ in terms of data on $\scri^-$. For this,  
we contract  equation \eq{cfe_relation} twice with $\dot x$  as well as  with $\dot x$ and $\ell$. Eliminating
the second term on the right-hand side  yields
\begin{equation}
\Theta^{(2)}  =- \frac{1}{2}
\nu^{\tau} \Big(
\rnabla_{\ell}-\Gamma^{\tau}_{\tau r}-\nu_{\tau} \Gamma^{\tau}_{\tau\tau}
\Big)\Theta^{(1)}
\,.
\end{equation}
Below (cf.\ \eq{frame_deriv1}) we will show that 
$\Gamma^{\tau}_{\tau\tau}|_{\scri^-}=-f^{\tau}=-\nu^{\tau}f_r$.
using also \eq{dfn_kappa2}, it follows that 
\begin{equation}
\Theta^{(2)}  =- \frac{1}{2}
 \Big(
\rnabla_{\ell}+\kappa+ \langle \ell, f\rangle
\Big)\Big(\frac{\Theta^{(1)}}{  g(\dot x_*, \ell)}\Big)
\,.
\label{expr_Theta2}
\end{equation}
The freedom to prescribe $\kappa$ can be replaced by the freedom to prescribe $\Theta^{(2)}$, whence
one  may regard $\Theta^{(2)}$ as a (coordinate) gauge freedom.

Let us consider the 1-form $b\equiv \Theta f + \mathrm{d}\Theta$.
It  follows straightforwardly from \eq{frame_field1}-\eq{frame_field3}  that 
\begin{equation}
b^{(0)}_0=\Theta^{(1)}
\,,
\quad
b^{(0)}_1= \Theta^{(1)}
\,,
\quad
b^{(0)}_A= 0
\,.
\end{equation}
To obtain the first-oder expansion coefficients we employ  \eq{evolution7},
\begin{align}
b^{(1)}_0=&2\Theta^{(2)}
\,,
\\
b^{(1)}_1=&-\widehat\Gamma_{1}{}^1{}_{0} \Theta^{(1)}+ 2\Theta^{(2)}+\nu^{\tau}\partial_{r}\Theta^{(1)}
\,,
\\
b^{(1)}_A=& -\widehat\Gamma_{A}{}^1{}_{0} \Theta^{(1)}+ e^{r}{}_A\partial_{r}\Theta^{(1)} + e^{\mathring A}{}_A\partial_{\mathring A}\Theta_1
\,.
\end{align}
Using   the formulas \eq{GammaA10_gen} and \eq{Gamma110_gen}
derived below,
\begin{align}
\widehat\Gamma_{1}{}^1{}_{0} |_{\scri^-} =& -(\partial_r + \kappa+ f_r)\nu^{\tau} 
\,,
\\
\widehat\Gamma_{A}{}^1{}_{0}  |_{\scri^-} =& \frac{1}{2}\xi_A + \nu^{\tau}\rnabla_A\nu_{\tau}
+\nu_{ A}\Big(\partial_{r}- \frac{1}{2}\theta^+ + \kappa \Big)  \nu^{\tau}
\,,
\end{align}
the constraint equations (cf.\ \eq{constraint3} and \eq{constraint_xiA}),
\begin{align}
(\partial_r -\frac{1}{2}\theta^+ + \kappa)(\nu^{\tau}\Theta_1)=&0
\,,
\\
\xi_A-2\rnabla_A\log(\nu^{\tau}\Theta_1) =&0
\,,
\end{align}
as well as \eq{expr_Theta2}, we end up with the following expressions for $b$,
\begin{equation}
b^{(1)}_0=2\Theta^{(2)}
\,,
\quad
b^{(1)}_1= 0
\,,
\quad
b^{(1)}_A= 0
\,.
\end{equation}
This shows that the frame components of $b$ are fully determined by  $\Theta^{(1)}$ and $\Theta^{(2)}$.  They do not depend on $f$. In particular,
the gauge data $f_{\scri^-}$ cannot be identified with certain components of $b$.

\subsubsection{Gauge data at $\scri^-$}
\label{sec_gauge_freedom}

Let us analyze the freedom to choose the initial direction $\dot x_*$ of the conformal geodesics somewhat more detailed.
For this let $v$ be an arbitrary timelike vector field on $\scri^-$, i.e.\ with  $v^{\mu}v_{\mu}<0$.
We introduce arbitrary  adapted null coordinates $(\tau,r,x^{\mathring A})$, 
$v|_{\scri^-}=(\ol v^{\tau}, \ol v^{r}, \ol v^{ \mathring A})$. We want to transform into new adapted null coordinates where $\hat v|_{\scri^-}=(1,0,0,0)$.
For this, we introduce  new coordinates $(\hat \tau,\hat r,\hat x^{\mathring A})$ by
\begin{equation}
1+\hat \tau =(1+\tau)/\ol v^{\tau}\,, \quad  \hat r =r - (1+\tau)\ol v^r/\ol v^{\tau}\,, \quad \hat x^{\mathring A}= x^{\mathring A} - (1+\tau) \ol v^{\mathring A}/\ol v^{\tau}
\,.
\end{equation}
Note that the $r$- and $x^{\mathring A}$-coordinates remain unchanged on $\scri^-=\{\tau=-1\}$ under such a transformation.
We find that
\begin{align}
\hat v^{\tau}|_{\scri^-}  =&  \frac{\partial \hat \tau}{\partial x^{\mu}}\ol v^{\mu}= 1
\,,
\\
\hat v^{r} |_{\scri^-} =& \frac{\partial \hat r}{\partial x^{\mu}}\ol v^{\mu}= 0
\,,
\\
\hat v^{\mathring A}|_{\scri^-}  =& \frac{\partial \hat x^{\mathring A}}{\partial x^{\mu}}\ol v^{\mu} = 0
\,.
\end{align}
Under this coordinate transformation we  have
\begin{align}
 g_{\tau r}|_{\scri^-}  =&  \frac{\partial \hat x^{\mu}}{\partial \tau} \frac{\partial \hat x^{\nu}}{\partial r}\hat g_{\mu\nu}
\,=\,\frac{\hat g_{\tau r}}{ \ol v^{\tau}}
\,,
\\
 g_{\tau \mathring A}|_{\scri^-}  =&  \frac{\partial \hat x^{\mu}}{\partial \tau} \frac{\partial \hat x^{\nu}}{\partial x^{\mathring A}}\hat g_{\mu\nu}
\,=\,  \frac{\hat g_{\tau \mathring A} }{ \ol v^{\tau}}- \frac{g_{ \mathring A\mathring B} \ol v^{\mathring B}}{\ol  v^{\tau}}  
\,,
\\
 g_{\tau\tau}|_{\scri^-}  =&  \frac{\partial \hat x^{\mu}}{\partial \tau} \frac{\partial \hat x^{\nu}}{\partial \tau}\hat g_{\mu\nu}
\,=\,\frac{\hat g_{\tau\tau}}{(\ol  v^{\tau})^2}
- 2\frac{\hat g_{\tau r}\ol  v^r}{(\ol  v^{\tau})^2}
- 2\frac{\hat g_{\tau \mathring A} \ol v^{\mathring A}}{( \ol v^{\tau})^2} 
+ \frac{g_{\mathring A\mathring B}\ol  v^{\mathring A} \ol v^{\mathring B}}{( \ol v^{\tau})^2} 
\,.
\end{align}
We conclude that the gauge freedom to prescribe $\dot x_*$ can be identified with the freedom to prescribe, in a fixed adapted null coordinate system,  
the metric coefficients $g_{\tau\mu}$ on $\scri^-$,
%(which, in a wave-map gauge \cite{CCM2},  is captured by  the freedom to prescribe the gauge source functions $W^{\mu}$  on $\scri^-$ \cite{ChPaetz}),
%
\begin{equation}
g_{\tau\tau}|_{\scri^-} \,, \quad \nu_{\tau}\,, \quad \nu_{\mathring A}
\,.
\end{equation}
In the conformal Gauss gauge the vector $\dot x$ is normalized to 1, whence $g_{\tau\tau}|_{\scri^-}=-1$.

In other words, the gauge freedom to choose the initial direction of the conformal geodesics is chosen in such a way that
 the metric components $\nu_{\tau}$ and $\nu_{\mathring A}$ take certain prescribed values in the associated
conformal Gauss coordinates. Instead of $\dot x_*$ they may therefore be regarded as gauge degrees of freedom.

Let us also take a look at the 1-form $f$. From \eq{frame_deriv1} below  we deduce  (recall that $f_0=\langle f,\dot x\rangle=0$)
\begin{equation}
\Gamma^{\mu}_{\tau\tau} |_{\scri^-} = - f^{\mu}  
\quad
\Longleftrightarrow \quad\begin{cases} f_{r}|_{\scri^-}  = -\nu_{\tau}  \Gamma^{\tau}_{\tau\tau}  
= \frac{1}{2}\partial_rg_{\tau\tau} - \partial_{\tau}g_{\tau r}
 \,, 
\\
 f_{\mathring A}  |_{\scri^-} = -  g_{\mathring A\mathring B} \Gamma^{\mathring B}_{\tau\tau} -\nu_{\mathring A}  \Gamma^{\tau}_{\tau\tau}    
 =  \frac{1}{2}\partial_{\mathring A} g_{\tau\tau}  -  \partial_{\tau} g_{\tau\mathring A}
\,,
\\
\partial_{\tau}g_{\tau\tau} |_{\scri^-} = \nu_{\tau} g^{r\mathring A} (\partial_{\mathring A}g_{\tau\tau} - 2\partial_{\tau} g_{\tau\mathring A})
\,.
\end{cases}
\label{0g00-eqn}
\end{equation}
The freedom to prescribe $f_{\alpha}$, $\alpha=1,2,3$, on $\scri^-$ therefore corresponds to 
 the freedom to prescribe $\partial_{\tau} g_{\tau\alpha}|\scri^-$.
%(or, equivalently, to prescribe the gauge source functions $\partial_{\tau} W^{\tau}|_{\scri^-}$ and $\partial_{\tau} W^{\mathring A}|_{\scri^-}$ in a wave-map gauge \cite{ttp1}).
Contrary to that, $\partial_{\tau}g_{\tau\tau} |_{\scri^-}$  cannot be considered as a gauge function (or rather it could if the normalization
condition on $\dot x$ is dropped).
However, in this work we prefer to regard $f_{\scri^-}$ as gauge functions.

%Note further that, by \eq{expr_Theta2}, the coordinate gauge  freedom to prescribe $\kappa$ can be transferred to the freedom to prescribe $\Theta^{(2)}$.
Accordingly, as gauge data to realize a conformal Gauss gauge from null infinity one can identify
\begin{equation}
\nu_{\tau}\,, \quad \nu_{\mathring A}\,, \quad f_{\scri^-}\,, \quad \kappa\,, \quad  \theta^-\,, \quad \Theta^{(1)}
\,.
\label{gauge_data_scri}
\end{equation}
As indicated above,   the gauge freedom  $r \mapsto r' =r'(r, x^{\mathring A}) $ is not completely exhausted by these gauge data.
The remaining freedom can be used to prescribe certain functions at spatial infinity, by which we mean the future boundary of $\scri^-$.
%Which functions one may prescribe  depends on the representation of spatial infinity, and 
We will analyze this for the cylinder representation in Section~\ref{sec_yet_another}.

\subsection{Realization of conformal Gauss coordinates}
\label{sec_realization}

Consider a solution $(\mcM, \widetilde g, \widetilde \Theta)$ of the CFE with $\lambda=0$ which admits a smooth $\scri^-$.
%At this stage we do not need to assume that it is smooth at $I$ (nor that there is a representation as a cylinder).
Moreover, choose any functions 
\begin{equation}
\Theta^{(1)}(r,x^{\mathring A})>0\,,
\quad \kappa(r,x^{\mathring A})\,, \quad
\theta^-(r, x^{\mathring A})\,, \quad
\nu_{\tau}(r,x^{\mathring A})>0
\,,\quad
\nu_{\mathring A}
\,, \quad f_{r*}
\,, \quad f_{\mathring A*}
\quad \text{on $\scri^-$.}
\end{equation}
We will describe how
conformal Gauss coordinates with this choice of gauge data on $\scri^-$ can be realized.

For this choose  an adapted null coordinate system $(\widetilde\tau,\widetilde r,\widetilde x^{\mathring A})$ (in particular \ $\scri^-=\{\widetilde\tau=-1\}$) and extend it off $\scri^-$ in any way.
One would like to start with a conformal transformation which realizes $\Theta^{(1)}$ (and $\theta^-$)  followed by a coordinate transformation which realizes
the remaining gauge data. Then a solution to the conformal geodesics equations would determine the coordinate transformation off $\scri^-$.
However, there is a problem: $\Theta^{(1)}$ is given w.r.t.\ the new $r$-coordinate. The relation between the new and the old $r$-coordinate is determined
by $\kappa$ and $\widetilde \kappa$, which are \emph{not} invariant under conformal transformations.
The transformations to $\Theta^{(1)}$ and $ \kappa$ therefore need to be accomplished simultaneously.
We further not that $\Theta^{(1)}$ is not invariant under rescaling of $\tau$, so also the transformation to $\nu_{\tau}$ needs to be taken into account.

We therefore consider a coordinate transformation of the form
\begin{equation}
\widetilde r \mapsto r= r(\widetilde r,x^{\mathring A}) 
\,,
\qquad
\widetilde \Theta \mapsto \Theta = \psi(\widetilde r, x^{\mathring A})\widetilde \Theta
\,, 
\quad 
1+\widetilde\tau \mapsto  1+\tau =h(\widetilde r, x^{\mathring A})(1+\widetilde \tau)
\,.
\label{kappa_trafo1}
\end{equation}
Taking the behavior of connection coefficients under conformal and coordinate transformations into account, we find
that the function $r$ is
given by (we suppress dependence on the angular coordinates),
\begin{align}
\kappa(r(\widetilde r))=&\frac{\partial \widetilde x^{\alpha}}{\partial r}\frac{\partial \widetilde x^{\beta}}{\partial r}\frac{\partial  r}{\partial  \widetilde x^{\gamma}}
\Big(\widetilde \Gamma^{\gamma}_{\alpha\beta}- \psi^{-1}(2\delta_{(\alpha}{}^{\gamma}\partial_{\beta)}\psi -g_{\alpha\beta}g^{\gamma\lambda}\partial_{\lambda}\psi\Big)
+ \frac{\partial  r}{\partial  \widetilde x^{\alpha}}\frac{\partial^2 \widetilde x^{\alpha}}{\partial r^2}
\nonumber
\\
=&\frac{\partial \widetilde r}{\partial r}\big[\widetilde \kappa(\widetilde r) -2\partial_{\widetilde r}\log\psi(\widetilde r)\big]+ \frac{\partial r}{\partial \widetilde r}\frac{\partial^2 \widetilde r}{\partial r^2}
\,,
\label{kappa_trafo2}
\end{align}
or, equivalently, 
\begin{equation}
\frac{\partial^2 r}{\partial \widetilde r^2}
=\frac{\partial r}{\partial \widetilde r}\big[\widetilde\kappa(\widetilde r)-2\partial_{\widetilde r}\log\psi(\widetilde r)\big]
  - \Big(\frac{\partial r}{\partial \widetilde r}\Big)^2\kappa( r(\widetilde r))
\,.
\label{kappa_trafo3}
\end{equation}
The function $\psi$ is given by
\begin{equation}
\psi(\widetilde r) =h(\widetilde r)\frac{\Theta^{(1)}(r(\widetilde r))}{\widetilde \Theta^{(1)}(\widetilde r)}
\,, \quad \text{where}\quad 
h(\widetilde r)=\frac{\partial\widetilde r}{\partial r} \frac{\nu^{\tau}(r(\widetilde r))}{\widetilde \nu^{\tau}(\widetilde r)}
\,.
\label{kappa_trafo4}
\end{equation}
We construct the gauge from some cut $\{\widetilde r=\mathrm{const.}\}$ of $\scri^-$.
The ODE \eq{kappa_trafo3} is of the following form
\begin{equation}
\frac{\partial^2 r}{\partial \widetilde r^2}
= F(r,\partial_{\widetilde r} r, \widetilde r)
\,,
\label{kappa_trafo5}
\end{equation}
where $F$ is some smooth function, and this equation can at least locally be solved.
The freedom to choose the initial data will be specified later (and not  at a cut of $\scri^-$ but at the critical set $I^-$ of spatial infinity).
Choosing $\psi$ and $h$ as above the desired values for $\Theta^{(1)}$,  $\kappa$ and $\nu_{\tau}$ are realized.

Next, a coordinate transformation of the form
\begin{equation}
x^{\alpha}\mapsto x^{\alpha} + f^{\alpha}(x^{\beta})(1+\tau)
\label{coord_trafo2}
\end{equation}
 realizes
$g_{\tau\tau}|_{\scri^-}=-1$ and the prescribed value for $\nu_{\mathring A}$.
A conformal transformation
\begin{equation}
\Theta\mapsto [1+ \phi(r,x^{ \mathring A})(1+\tau)]\Theta
\,,
\label{coord_trafo3}
\end{equation}
with an appropriately chosen $\phi$ transforms to the right value for  $\theta^-$, cf.\ \eq{behave_-expansion}.
Note that $\Theta^{(1)}$, $\kappa$ and $\nu_{\tau}$ remain invariant under \eq{coord_trafo2}-\eq{coord_trafo3}.
Then  we solve the conformal geodesics equations with initial data 
\begin{equation}
\dot x|_{\scri^-}=\partial_{\tau}
\,,
\quad
 {f}_{\tau}|_{\scri^-}  = 0
\,,
\quad
{f}_{r}|_{\scri^-}  =  f_{r*} 
\,,
\quad
{f}_{\mathring A}|_{\scri^-}  =  f_{\mathring A*} 
\,.
\end{equation}
That  yields a vector field  $\dot x$ and a 1-form $ f$ on  $\mcM$ (at least in some neighborhood of $\scri^-$).
The  gauge condition $\langle \dot x, f\rangle=0$ is realized by another conformal transformation
$\Theta\mapsto \Psi\Theta$. There is no freedom to choose the initial datum $\Psi|_{\scri^-}$
which  needs to be $1$ in order to preserve the gauge functions we have already realized.
%Under this transformation $f\mapsto f-\mathrm{d}\log\Psi$, in particular $f_{\tau}|_{\scri^-}\mapsto f_{\tau}|_{\scri^-}-\psi=0$ as desired.
Finally, a coordinate transformation is necessary  to transform $\dot x$ to $\partial_{\tau}$.
Since $\dot x|_{\scri^-}=\partial_{\tau}$ 
it is of the form $x^{\mu}\mapsto x^{\mu} + O(1+\tau)^2$ and therefore does not affect the gauge functions we have realized in the previous steps.

\subsection{Connection coefficients}
\label{sec_connection_gen}

We want to compute the connection coefficients of the Weyl connection w.r.t.\ the frame field $(e_i)$ on $\scri^-$
 in terms of the connection coefficients associated with the adapted null coordinates \eq{adapted_null}.
We have
\begin{equation}
\widehat\Gamma_i{}^k{}_j e^{\mu}{}_k
=e^{\nu}{}_i \widehat\nabla_{\nu} e^{\mu}{}_j 
=e^{\nu}{}_i (\partial_{\nu} e^{\mu}{}_j  + \Gamma^{\mu}_{\nu\sigma} e^{\sigma}{}_j )
 + 2e^{\mu}{}_{(i}f_{j)} - \eta_{ij} f^{\mu}
\,.
\label{dfn_coennect}
\end{equation}
Recall that the frame field $(e_i)$ has been constructed such that  $ \widehat\Gamma_0{}^k{}_j  =0$,
so that the  $i=0$-components of  \eq{dfn_coennect} yield with   \eq{frame_field1}-\eq{frame_field3}
\begin{equation}
\Gamma^{\mu}_{\tau\tau} =  - f^{\mu}
\,,
\label{frame_deriv1}
\end{equation}
(this relation holds globally),
and
\begin{align}
\partial_{\tau} e^{\mu}{}_1   |_{\scri^-}  =&  - f_{1}  e^{\mu}{}_0
- \Gamma^{\mu}_{\tau\tau} 
- \nu^{\tau}\Gamma^{\mu}_{\tau r}
\,,
\label{frame_deriv2}
\\
\partial_{\tau} e^{\mu}{}_A   |_{\scri^-}  =&-  f_{A}  e^{\mu}{}_0
- e^{\mathring A}{}_A( \Gamma^{\mu}_{\tau\mathring A} -\Gamma^{\mu}_{\tau r} \nu^{\tau}\nu_{\mathring A})
\,.
\label{frame_deriv3}
\end{align}
We set  $\re^{\mathring A}{}_A:=e^{\mathring A}{}_A|_{\scri^-}=\Omega^{-1}\mathring e^{\mathring A}{}_A$. Then $(\re_A)$ is an orthonormal frame for $g_{\mathring A\mathring B}|_{\scri^-}$.

For $i=A$ we obtain from \eq{dfn_coennect} (set $\nu_A:=  \re^{\mathring A}{}_A\nu_{\mathring A}$ and
use \eq{coordChristoffel1}, \eq{coordChristoffel2} and \eq{coordChristoffel4})
 %note that $\Gamma^{\tau}_{r\alpha }$ and $\Gamma^{\mathring A}_{rr}$ vanish in adapted null coordinates, while $\Gamma^{\mathring B}_{r\mathring A}|_{\scri^-}=\partial_r\log\Omega\,\delta^{\mathring B}{}_{\mathring A}$)
%
\begin{align}
\widehat\Gamma_A{}^k{}_0 e^{\mu}{}_k |_{\scri^-} =& \re^{\mathring A}{}_A (\Gamma^{\mu}_{\tau\mathring A }
- \nu^{\tau}\nu_{\mathring A}\Gamma^{\mu}_{ \tau r} )
 + \delta^{\mu}{}_{\tau}f_{A} 
\,,
\\
\widehat\Gamma_A{}^k{}_1 e^{\mu}{}_k  |_{\scri^-}=&\re^{\mathring A}{}_A (\partial_{\mathring A} e^{\mu}{}_1
  + \Gamma^{\mu}_{\tau\mathring A }   +\nu^{\tau} \Gamma^{\mu}_{r \mathring A } 
 )
 -\nu^{\tau}\nu_{ A} (\partial_{r} e^{\mu}{}_1+ \Gamma^{\mu}_{\tau r} +\nu^{\tau} \Gamma^{\mu}_{rr}  )
 +2 e^{\mu}{}_{(A}f_{1)} 
\,,
\\
\widehat\Gamma_A{}^k{}_B e^{\mu}{}_k |_{\scri^-} =&\re^{\mathring A}{}_A \partial_{\mathring A} e^{\mu}{}_B 
+\re^{\mathring A}{}_A\re^{\mathring B}{}_B  \Gamma^{\mu}_{\mathring A \mathring B} 
-2\nu^{\tau}\nu_{( A} \re^{\mathring A}{}_{B)} \Gamma^{\mu}_{r\mathring A} 
\nonumber
\\
&
- \nu^{\tau}\nu_{ A} \partial_{r} e^{\mu}{}_B  
+  (\nu^{\tau})^2\nu_{ A} \nu_{B}\Gamma^{\mu}_{rr}
 + 2e^{\mu}{}_{(A}f_{B)} - \eta_{AB} f^{\mu}
\,.
\end{align}
We deduce the following relations,where we denote by $ (  \rsigma^{ A})$ the co-frame of $ ( \re_A)$, and by
$\rGamma_A{}^C{}_B $  the connection coefficients of $ ( \re_A)$,
%the orthonormal frame $( \re_A)=  \re^{\mathring A}{}_A\partial_{\mathring A}$,
%
\begin{align}
 \widehat\Gamma_A{}^1{}_0   |_{\scri^-}=& \re^{\mathring A}{}_A (\Gamma^{\tau}_{\tau\mathring A }
- \nu^{\tau}\nu_{\mathring A}\Gamma^{\tau}_{ \tau r} ) 
\,,
\\
 \widehat\Gamma_A{}^B{}_0|_{\scri^-}  =& \re^{\mathring A}{}_A  \rsigma^B{}_{\mathring B}\Gamma^{\mathring B}_{\tau\mathring A }
- \nu^{\tau}\nu_{ A}  \rsigma^B{}_{\mathring B}\Gamma^{\mathring B}_{ \tau r} 
\,,
\\
\widehat\Gamma_A{}^B{}_1   |_{\scri^-}=&  \re^{\mathring A}{}_A \rsigma^B{}_{\mathring B} ( \Gamma^{\mathring B}_{\tau\mathring A }   +\nu^{\tau} \Gamma^{\mathring B}_{r \mathring A } )
 -  \nu^{\tau}\nu_{ A} \rsigma^B{}_{\mathring B} \Gamma^{\mathring B}_{\tau r} 
 +\delta^B{}_A  f_{1} 
\,,
\\
\widehat\Gamma_A{}^C{}_B  |_{\scri^-} =&    \rGamma_A{}^C{}_B 
+\frac{1}{2}\theta^+ \nu^{\tau}(\nu^C \eta_{AB}-\nu_{B}\delta^C{}_A)
 + 2\delta^C{}_{(A}f_{B)} - \eta_{AB} f^{C}
\,.
\end{align}

For $i=1$ we obtain from \eq{dfn_coennect}, using \eq{frame_deriv2}-\eq{frame_deriv3}
\begin{align}
\widehat\Gamma_1{}^k{}_0 e^{\mu}{}_k|_{\scri^-}  =& \Gamma^{\mu}_{\tau \tau} +\nu^{\tau}  \Gamma^{\mu}_{\tau r} 
 + \delta^{\mu}{}_{0}f_{1} 
\,,
\\
\widehat\Gamma_1{}^k{}_1 e^{\mu}{}_k|_{\scri^-}  =&
 \nu^{\tau} \partial_{r} e^{\mu}{}_1 
+\nu^{\tau}  \Gamma^{\mu}_{\tau r}  
+ (\nu^{\tau})^2\Gamma^{\mu}_{r r}   
  - f_{1}  \delta^{\mu}{}_0
 + 2e^{\mu}{}_{1}f_{1} -  f^{\mu}
\,,
\\
\widehat\Gamma_1{}^k{}_A e^{\mu}{}_k|_{\scri^-}  =&
 \nu^{\tau}  \partial_{r} e^{\mu}{}_A
- (\nu^{\tau})^2\nu_{ A}\Gamma^{\mu}_{r r} 
+ \nu^{\tau}\Gamma^{\mu}_{r \mathring A}   \re^{\mathring A}{}_A 
 -  f_{A}  \delta^{\mu}{}_0+ e^{\mu}{}_{1}f_{A}   + e^{\mu}{}_{A}f_{1} 
\,,
\end{align}
whence
\begin{align}
\widehat\Gamma_1{}^1{}_0 |_{\scri^-}  =& \Gamma^{\tau}_{\tau \tau} +\nu^{\tau}  \Gamma^{\tau}_{\tau r} 
\,,
\\
\widehat\Gamma_1{}^A{}_0 |_{\scri^-}  =&  \rsigma^A{}_{\mathring A}(\Gamma^{\mathring A}_{\tau \tau} +\nu^{\tau}  \Gamma^{\mathring A}_{\tau r} )\,,
\\
\widehat\Gamma_1{}^A{}_1|_{\scri^-}  =&
 \rsigma^A{}_{\mathring A}\nu^{\tau}  \Gamma^{\mathring A}_{\tau r}  -  f^{ A}
\,,
\\
\widehat\Gamma_1{}^A{}_B |_{\scri^-}  =&
 \nu^{\tau}\Gamma^{\mathring A}_{r \mathring B}  \re^{\mathring B}{}_B   \rsigma^{ A}{}_{\mathring A} 
+(f_1 - \frac{1}{2}\theta^+ \nu^{\tau}   )\delta^A{}_B
\,.
\end{align}
Finally, we insert the expressions \eq{coordChristoffel1}-\eq{coordChristoffel13} to end up with the following list for the relevant components of the
connection coefficients
\begin{align}
 \widehat\Gamma_A{}^1{}_1   |_{\scri^-}=& f_A
\,,
\label{Weyl_connectA_1}
\\
 \widehat\Gamma_A{}^1{}_0   |_{\scri^-}=&  \frac{1}{2}\xi_{ A} + \nu^{\tau}\rnabla_{ A}\nu_{\tau}
+ \nu_{ A}\Big(\partial_{r}- \frac{1}{2}\theta^+ + \kappa \Big) \nu^{\tau}
\,,
\label{GammaA10_gen}
\\
 \widehat\Gamma_A{}^B{}_0|_{\scri^-}  =& 
 \frac{1}{2} \nu_{\tau}\Xi_{ A}{}^{ B} - \frac{1}{4}\nu_{\tau}\Big( \theta^- + \theta^+ (\nu^{\tau})^2(1+\nu_{ C}\nu^{ C})\Big)\delta_{ A}{}^{ B}
- \frac{1}{2} \xi^{ B} \nu_{ A}
\nonumber
\\
&
+  \Big( \nu_{\tau} \rnabla_{ A}- \frac{1}{2}\xi_{ A}
+ \frac{1}{2}\theta^+\nu^{\tau}\nu_{ A} -  \kappa\nu_{ A}  -\nu_{ A}\partial_r   \Big)(\nu^{\tau}\nu^{ B})
\,,
\label{GammaAB0_gen}
\\
\widehat\Gamma_A{}^B{}_1   |_{\scri^-}=& 
 \widehat\Gamma_A{}^B{}_0
% \frac{1}{2} \nu_{\tau}\Xi_{ A}{}^{ B} - \frac{1}{4}\nu_{\tau}\Big( \theta^- + \theta^+ (\nu^{\tau})^2(1+\nu_{ A}\nu^{ A})\Big)\delta_{ A}{}^{ B}
%+\Big( \rnabla^{ B}- \frac{1}{2} \xi^{ B}\Big) \nu_{ A}
%\nonumber
%\\
%&
%-\nu^{\tau}\nu^{ B}\Big(\rnabla_{ A}\nu_{\tau}  +\frac{1}{2}\nu_{\tau}\xi_{ A}
% + (  \kappa- \frac{1}{2}\theta^+-\nu^{\tau}\partial_r\nu_{\tau})\nu_{ A}\Big)
% -  \nu^{\tau}\nu_{ A} \partial_r \nu^{ B}
%\\
%&
+\Big(\frac{1}{2}\theta^+ \nu^{\tau} +f_1\Big)\delta^{B}{}_A
\,,
\\
\widehat\Gamma_A{}^C{}_B  |_{\scri^-} =&    \rGamma_A{}^C{}_B 
+\frac{1}{2}\theta^+ \nu^{\tau}(\nu^C \eta_{AB}-\nu_{B}\delta^C{}_A)
 + 2\delta^C{}_{(A}f_{B)} - \eta_{AB} f^{C}
\,.
\label{Weyl_connectA_5}
\end{align}
Here we have set $\xi_A= \re^{\mathring A}{}_A\xi_{\mathring A}$ and $\Xi_{AB}= \re^{\mathring A}{}_A\re^{\mathring B}{}_B\Xi_{\mathring A\mathring B}$,
$\rnabla_A=\re^{\mathring A}{}_A\rnabla_{\mathring A}$   refers to the Levi-Civita covariant derivative associated to the family  $r\mapsto \not\hspace{-.1em} g =g_{\mathring A\mathring B}\mathrm{d}x^{\mathring A}\mathrm{d}x^{\mathring B}$ of Riemannian metrics.

For the remaining connection coefficients we find with \eq{frame_deriv1}
\begin{align}
\widehat\Gamma_1{}^1{}_1 |_{\scri^-}  =& f_1
\,,
\label{Weyl_connect1_1}
\\
\widehat\Gamma_1{}^1{}_0 |_{\scri^-}  =&   -(\partial_{r}+\kappa )\nu^{\tau} -f_1
\,,
\label{Gamma110_gen}
\\
\widehat\Gamma_1{}^A{}_0 |_{\scri^-}  =&
\nu^{\tau}(\partial_r + \kappa-\nu^{\tau}\partial_r\nu_{\tau})\nu^{ A}  + \frac{1}{2}\xi^{ A}  -  f^{ A}
\,,
\\
\widehat\Gamma_1{}^A{}_1|_{\scri^-}  =&
\widehat\Gamma_1{}^A{}_0
\,,
\\
\widehat\Gamma_1{}^A{}_B |_{\scri^-}  =&
 f_{1}  \delta^A{}_B
\label{Weyl_connect1_5}
\,.
\end{align}

\subsection{Schouten tensor}
\label{sec_schouten_gen}

We  compute the  Schouten tensor associated to  the Weyl connection.
First of all we express its  frame components in terms of coordinate components of the adapted null coordinate system \eq{adapted_null},
%
%\begin{align}
%\widehat L_{ 1 1}|_{\scri^-} 
%%=&  \widehat L_r{}^r -(\nu^{\tau})^2\nu^{ A}\nu_{ A} \widehat L_{rr} + \nu^{\tau}\nu^{\mathring A}\widehat L_{r\mathring A}
%%\\
%=&  \widehat L_r{}^r  + \nu^{\tau}\nu_{\mathring A}\widehat L_{r}{}^{\mathring A}
%\,,
%\\
%%\widehat L_{ 1 A}|_{\scri^-} =& 
%%\nu^{\tau}\re^{\mathring A}{}_A\widehat L_{r\mathring B} -(\nu^{\tau})^2 \nu_A\widehat L_{rr}
%%\\
%%=&
%%\nu^{\tau}\re^{\mathring A}{}_A g_{\mathring A\mathring B}\widehat L_r{}^{\mathring B}
%\widehat L_{ 1}{}^{ A}|_{\scri^-} =& 
%\nu^{\tau}\rsigma^{A}{}_{\mathring A}\widehat L_r{}^{\mathring A}
%\,,
%\\
%\widehat L^A{}_{ 1}|_{\scri^-} =& \rsigma^{ A}{}_{\mathring A}(\nu_{\tau}\widehat L^{\mathring Ar} 
%+\nu_{\mathring B}\widehat L^{\mathring A\mathring B})
%\,,
%\\
%\widehat L^{ AB}|_{\scri^-} 
%%=&
%%\re^{\mathring A}{}_A\re^{\mathring B}{}_B\widehat L_{\mathring A\mathring B}
%%- \re^{\mathring A}{}_A\nu^{\tau}\nu_B \widehat L_{\mathring A r}
%%- \re^{\mathring B}{}_B\nu^{\tau}\nu_A \widehat L_{\mathring B r}
%%+(\nu^{\tau})^2\nu_A\nu_B\widehat L_{rr}
%%\\
%=&
%\rsigma^{ A}{}_{\mathring A}\rsigma^{ B}{}_{\mathring B}\widehat L^{\mathring A\mathring B}
%\,,
%\\
%\widehat L_{ 10}|_{\scri^-} =& \widehat L_{ 1 1} -(\nu^{\tau})^2 L_{rr} 
%\,,
%\\
%\widehat L^A{}_{ 0}|_{\scri^-} 
%%=&\widehat L_{ A1} -\nu^{\tau}\re^{\mathring A}{}_A\widehat L_{\mathring A r} + (\nu^{\tau})^2\nu_A \widehat L_{rr}
%%\\
%=&\widehat L_{ A1} -\nu^{\tau}\rsigma^{ A}{}_{\mathring A}\widehat L^{\mathring A}{}_{ r} 
%\,.
%\end{align}
\begin{align}
\widehat L_{ 1 1}|_{\scri^-} 
=&  \widehat L_r{}^r -(\nu^{\tau})^2\nu^{  A}\nu_{ A} \widehat L_{rr} + \nu^{\tau}\nu^{\mathring A}\widehat L_{r\mathring A}
\,,
\label{11_com_Schouten}
\\
\widehat L_{ 1 A}|_{\scri^-} =& 
\nu^{\tau}\re^{\mathring A}{}_A\widehat L_{r\mathring A} -(\nu^{\tau})^2 \nu_A\widehat L_{rr}
\,,
\\
\widehat L_{ A1}|_{\scri^-} =& \re^{\mathring A}{}_A(\nu_{\tau}\widehat L_{\mathring A}{}^{r} -\nu^{\tau}\nu_{ B}\nu^{ B}\widehat L_{\mathring Ar} + \nu^{\mathring B}\widehat L_{\mathring A\mathring B})
-\nu_A(  L_{r}{}^{r} -(\nu^{\tau})^2  \nu_{ B} \nu^B\widehat L_{rr}+\nu^{\tau} \nu^{\mathring B}\widehat L_{r\mathring B} )
\,,
\\
\widehat L_{ AB}|_{\scri^-} 
=&
\re^{\mathring A}{}_A\re^{\mathring B}{}_B\widehat L_{\mathring A\mathring B}
- 2\re^{\mathring A}{}_{(A}\nu^{\tau}\nu_{B)} \widehat L_{\mathring A r}
+(\nu^{\tau})^2\nu_A\nu_B\widehat L_{rr}
\,,
\\
\widehat L_{ 10}|_{\scri^-} =& \widehat L_{ 1 1} -(\nu^{\tau})^2 \widehat L_{rr} 
\,,
\\
\widehat L_{ A 0}|_{\scri^-} 
=&\widehat L_{ A1} -\nu^{\tau}\re^{\mathring A}{}_A\widehat L_{\mathring A r} + (\nu^{\tau})^2\nu_A \widehat L_{rr}
\,.
\end{align}
Moreover, by \eq{schouten_weylconnect} we have
%
%\begin{equation}
%\widehat L_{\mu\nu}|_{\scri^-}  
%%=  L_{\mu\nu} -\nabla_{\mu} f_{\nu}  + \frac{1}{2}S(f)_{\mu}{}^{\sigma}{}_{\nu}f_{\sigma}
%=  L_{\mu\nu} -\nabla_{\mu} f_{\nu}  +f_{\mu}f_{\nu} - \frac{1}{2}g_{\mu\nu} f^{\sigma}f_{\sigma}
%\,,
%\end{equation}
%%
%i.e.
%
\begin{align*}
\widehat L_{rr}|_{\scri^-} =&  L_{rr} -(\partial_{r} -\kappa-f_r)f_r
\,,
\\
\widehat L_{r\mathring A} |_{\scri^-}=&  L_{r\mathring A} -(\partial_{r}-\frac{1}{2}\theta^+ ) f_{\mathring A}
 +f_r (f_{\mathring A} -\frac{1}{2}\xi_{\mathring A} )
\,,
\\
\widehat L_{\mathring A r}|_{\scri^-} =&  L_{r\mathring A} -(\rnabla_{\mathring A}+\frac{1}{2}\xi_{\mathring A}) f_{r}  
+( \frac{1}{2}\theta^+ +f_r)f_{\mathring A} 
\,,
\\
\widehat L_{r}{}^r|_{\scri^-} =&  L_{r}{}^r -(\partial_{r}+\kappa-\frac{1}{2}f_r) f^r  
- \frac{1}{2}f^{\mathring A}( f_{\mathring A}-\xi_{\mathring A} )
+\frac{1}{2}\nu^{\tau}\nu_{\mathring A}\xi^{\mathring A} f_r
+\frac{1}{2}f_r(\partial_r+2\kappa) g^{rr}
\,,
\\
\widehat L_{\mathring A\mathring B}|_{\scri^-} =&  L_{\mathring A\mathring B} -\frac{1}{2} f_{r} \Xi_{\mathring A\mathring B} -(\rnabla_{\mathring A}-f_{\mathring A} ) f_{\mathring B} 
+ \frac{1}{4}\Big(2\theta^+\nu^{\tau}\nu^{\mathring C}f_{\mathring C}- 2f^{\mu}f_{\mu}+(\theta^-  - \theta^+  g^{rr}) f_{r}\Big)g_{\mathring A\mathring B}
\,,
\\
\widehat L_{\mathring A}{}^r|_{\scri^-} =&  L_{\mathring A}{}^r -(\rnabla_{\mathring A}-\frac{1}{2}\xi_{\mathring A} -f_{\mathring A}) f^r 
+  \frac{1}{2} f_r(\rnabla_{\mathring A}-\xi_{\mathring A}) g^{rr}
 +\frac{1}{2} f_r\nu^{\tau}\nu^{\mathring B}\Xi_{\mathring A\mathring B}
\\
& - \frac{1}{4} f_r(\theta^-  - \theta^+  g^{rr}) \nu^{\tau}\nu_{\mathring A}
+\frac{1}{2}f^{\mathring B}\Xi_{\mathring A\mathring B} - \frac{1}{4}(\theta^-  - \theta^+  g^{rr}) f^{\mathring B}g_{\mathring A\mathring B}
\,.
\end{align*}
The relevant components of  the Schouten tensor  associated to the Levi-Civita connection
are given in Appendix~\ref{app_charact_constraints} (the $L^{rr}$-component is not needed),
\eq{constraint_Lrr}, \eq{constraint_LrA}, \eq{constraint_LABtr}, \eq{constraint12}, \eq{LAB_constraint}, \eq{LAr_constraint},
\begin{align}
 L_{rr} |_{\scri^-}
=& 
-\frac{1}{2}\Big(\partial_r + \frac{1}{2}\theta^+ -\kappa\Big)\theta^+
\,,
\\
 L_{r\mathring A} |_{\scri^-}
 =& - \frac{1}{2}\Big(\rnabla_{\mathring A}  +\frac{1}{2}\xi_{\mathring A}\Big)\theta^+ 
\,,
\\
 L_r{}^r |_{\scri^-}
 =&  \frac{1}{4}\Big(\partial_r + \kappa\Big)\theta^-  +   \frac{1}{4}\Big(\rnabla_{\mathring A}- \frac{1}{2}\xi_{\mathring A}\Big)\xi^{\mathring A}  -\frac{1}{4} g^{rr}\Big(\partial_r+\frac{1}{2}\theta^+ -\kappa \Big)\theta^+
- \frac{1}{4}\not \hspace{-0.2em}R
\,,
\\
 L_{\mathring A\mathring B}|_{\scri^-}
=&
-\frac{1}{2}\Big(\partial_{r}-\frac{1}{2} \theta^+ +\kappa\Big)\Xi_{\mathring A\mathring B}
+\frac{1}{2}(\rnabla_{(\mathring A}\xi_{\mathring B)})_\mathrm{tf}
-\frac{1}{4}  (\xi_{\mathring A}\xi_{\mathring B})_{\mathrm{tf}}
+\frac{1}{4}\Big(\not \hspace{-0.2em}R+ \frac{1}{2}\theta^+\theta^-  \Big)g_{\mathring A\mathring B}
\,,
\\
 L_{\mathring A}{}^r|_{\scri^-} 
=&
\frac{1}{2} \Big(\rnabla^{\mathring B} - \frac{1}{2}\xi^{\mathring B} \Big)\Big(\Xi_{\mathring A\mathring B} + \frac{1}{2}\theta^- g_{\mathring A\mathring B} \Big)
- \frac{1}{4}  \mathring  g^{rr} \Big( \rnabla_{\mathring A} +\frac{1}{2}\xi_{\mathring A}\Big)   \theta^+
\,.
\end{align}
This allows us to compute  $\widehat L_{ij}$ in terms of the coordinate components of the Schouten tensor in adapted null coordinates as computed from the constraint equations given in Appendix~\ref{app_charact_constraints}.
%The restriction of the  rescaled Weyl tensor to $\scri^-$ is computed in  Section~\ref{sec_solution_constr_gen} below.
This will be done explicitly in Section~\ref{sec_confG_firstoder} for a specific choice of the gauge data \eq{gauge_data_scri}.

\section{Cylinder representation of spatial infinity}
\label{section2}

So far, we have described the construction of a gauge based on conformal geodesics starting from $\scri^-$ which does not care about any representation of 
spatial infinity. 
In fact, depending on the choice of the conformal gauge data at $\scri^-$, $(\nu_{\tau},\nu_{\mathring A}, f_{\scri^-}, \kappa, \theta^-, \Theta^{(1)})$, the conformal Gauss gauge leads to different  representations of spatial infinity such as the ``classical''  point representation or  Friedrich's  cylinder representation.
The behavior of the fields near  the critical sets of a cylinder representing  spatial infinity, tough, is what we are interested in.

\subsection{Spatial infinity}
\label{sec_spatial_inf}

We consider  a conformally rescaled vacuum  spacetime $(\mcM,g, \Theta)$ which admits a smooth $\scri^-$, and we introduce adapted null coordinates 
at $\scri^-$.
For $(\mcM,g, \Theta)$ to admit a (finite) representation of spatial infinity, $\mathrm{d}\Theta$ needs to vanish
along  each null geodesic 
generator of  $\scri^-$ for some (finite) value of $r$, i.e.\  for each $x^{\mathring A}$ the function 
$\partial_{\tau}\Theta|_{\scri^-}$
needs to have a zero for some (finite) value $r=r_1(x^{\mathring A})$.
We are interested in the possible   behavior of the functions $\partial_{\tau}\Theta|_{\scri^-}$, $\nu_{\tau}$, $\theta^+$,
and $\kappa$ near $i^0$.
The constraint equations on $\scri^-$  (cf.\  \eq{constraint3} in Appendix~\ref{app_charact_constraints})
imply that the function $\partial_{\tau}\Theta|_{\scri^-}$ satisfies the  ODE
\begin{equation}
\Big(\partial_r-\frac{1}{2} \theta^++ \kappa-\nu^{\tau}\partial_r\nu_{\tau}\Big)\partial_{\tau}\Theta|_{\scri^-} =0
\,.
\label{important_constr}
\end{equation}
This equation can be integrated,
\begin{equation}
\partial_{\tau}\Theta|_{\scri^-} (r,x^{\mathring A}) = e^{\int_{r_0}^r (\frac{1}{2}\theta^+ -\kappa+\nu^{\tau}\partial_r\nu_{\tau})\mathrm{d}\hat r}\partial_{\tau}\Theta|_{\scri^-} (r_0, x^{\mathring A})
\label{important_contraint}
\end{equation}
for some initial value $\partial_{\tau}\Theta|_{\scri^-}(r_0,x^{\mathring A})$.
The solution  will vanish at $r_1$ if and only if
\begin{equation}
\int^{r_1} \Big(\frac{1}{2}\theta^+-\kappa+\nu^{\tau}\partial_r\nu_{\tau}\Big)\mathrm{d}\hat r=-\infty
\,.
\end{equation}
We deduce that whenever
%\begin{lemma}
%\label{lemma_div_i0}
%Assume that
  a vacuum spacetime admits a piece of a smooth $\scri^-$ as well as some  representation of spatial infinity $i^0$,
then, along any null geodesic generator of $\scri^-$ at least one of the following scenarios happens in adapted null coordinates on $\scri^-$:
\begin{enumerate}
\item[(i)]$\lim_{r\rightarrow i^0}\nu_{\tau} =0$,
\item[(ii)] $\int^{i^0} \kappa=\infty$,
\item[(iii)] 
$\int^{i^0} \theta^+= -\infty$.
\end{enumerate}
%\end{lemma}

\begin{remark}
{\rm
The divergence of $\theta^+$ along each null geodesic generator of $\scri^-$
indicates
the presence of a conjugate point. One therefore should expect  that a gauge where (iii) is realized at $r_1<\infty$  leads to the usual representation of $i^0$
as a point. 
This point is known to be singular for non-vanishing ADM mass.
In any gauge where (i) or (ii) are realized  the inverse metric  or the   derivative $ \partial_{\tau}g_{rr}|_{\scri^-}=2(\partial_r-\kappa)\nu_{\tau}$
%(which is positive near  $i^0$)
 diverge.
A certain singular behavior at spatial infinity  therefore seems to be unavoidable regardless of the gauge condition.
}
\end{remark}

Let us  compute how the conditions (i)-(iii) behave under reparameterizations $r\mapsto r'= r'(r,x^A)$,
\begin{align}
\lim_{r'\rightarrow i^0}\nu'_{\tau} =& \lim_{r\rightarrow i^0}\Big(\frac{\partial r}{\partial r'}\nu_{\tau} \Big)
\\
\int^{i^0} \kappa'\mathrm{d}r' =&\int^{i^0}\Big(\frac{\partial r}{\partial r'} \kappa(r(r'))  -\partial_{r'} \log\Big| \frac{\partial r'}{\partial r}\Big| \Big)\mathrm{d}r' 
=\int^{i^0} \kappa\,\mathrm{d}r  - \lim_{r\rightarrow i^0} \log\Big| \frac{\partial r'}{\partial r}\Big| + \mathrm{const.}
\,,
\\
\int^{i^0} \theta^{+'}\mathrm{d}r'  =&   \int^{i^0}\frac{\partial r}{\partial r'} \theta^+(r(r'))\mathrm{d}r'  
\,=\,  \int^{i^0} \theta^+\,\mathrm{d}r
\,.
\end{align}
While (iii) is invariant, (ii) is invariant at  least as long as $\lim_{r\rightarrow i^0}|  \frac{\partial r'}{\partial r}|\ne \infty$.
However, these considerations  suggest to combine (i) and (ii) into one condition
\begin{equation}
\int^{i^0}\nu^{\tau}\partial_{\tau}g_{rr}|_{\scri^-}\equiv 2 \int^{i^0}(\nu^{\tau}\partial_r\nu_{\tau}-\kappa)=-\infty
\quad \Longleftrightarrow  \quad \lim_{r\rightarrow i^0}\log|\nu^{\tau}| + \int^{i^0}\kappa   =\infty
\label{inv_expr}
\end{equation}
Indeed, under the transformation  $r\mapsto r'= r'(r,x^A)$ this behaves as
\begin{equation}
\lim_{r'\rightarrow i^0}\log|{\nu^{\tau}}' | + \int^{i^0}\kappa'\mathrm{d}r' 
= \lim_{r\rightarrow i^0}\log|{\nu^{\tau}} | +\int^{i^0} \kappa\,\mathrm{d}r + \mathrm{const.}
\,,
\end{equation}
so that \eq{inv_expr} is invariant under reparameterizations of $r$.

We have proved:
\begin{lemma}
\label{lemma_div_i0}
Assume that  a vacuum spacetime admits a piece of a  smooth $\scri^-$ as well as some representation of spatial infinity $i^0$.
Consider  any adapted null coordinate system  at $\scri^-$ which admits a finite coordinate representation of $i^0$.
Then along each  null geodesic generator of $\scri^-$
 at least one of the following scenarios happens:
\begin{enumerate}
\item[(i)] $\int^{i^0}\nu^{\tau}\partial_{\tau}g_{rr}|_{\scri^-}=-\infty$ (equivalently \eq{inv_expr}),
% (i.e.\ any affine parameter diverges at $i^0$), 
or
\item[(ii)] 
$\int^{i^0} \theta^+= -\infty$ (which indicates that  $i^0$ is a conjugate point).
\end{enumerate}
These conditions are invariant under reparameterizations of $r$.
%Conversely, it follows from xx 
%\tim{add}
%that if (i) or (ii) holds (if (ii) holds we need to assume, in addition, that $\lim_{r\rightarrow i^0}\nu^0\ne 0$), then $\mathrm{d}\Theta|_{i^0} =0$.
%\tim{dfn of $i^0$...}
%(if (i) holds and $\lim_{r\rightarrow i^0}\nu^0= 0$, we transform to e.g.\ $\nu^0=-1$ via \eq{trafo_nu0=1} and use that (ii) is invariant under such
%a transformation).
\end{lemma}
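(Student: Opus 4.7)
The plan is to read the result directly off the constraint \eqref{important_constr} and its explicit integration \eqref{important_contraint}, both of which have already been derived just before the statement. First I would invoke the defining property of spatial infinity in a Penrose-style conformal completion: at $i^0$ one has $\Theta=0$ together with $\mathrm{d}\Theta=0$. Since $\Theta\equiv 0$ on $\scri^-$ the tangential derivatives $\partial_r\Theta|_{\scri^-}$ and $\partial_{\mathring A}\Theta|_{\scri^-}$ vanish automatically, so the condition $\mathrm{d}\Theta\to 0$ at $i^0$ reduces to $\partial_\tau\Theta|_{\scri^-}\to 0$ along each null generator. The hypothesis of a finite coordinate representation of $i^0$ turns this into $\partial_\tau\Theta|_{\scri^-}\to 0$ as $r\to r_1(x^{\mathring A})$ for some finite $r_1$.

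Second, I would plug this boundary behaviour into \eqref{important_contraint}. A solution of \eqref{important_constr} with nontrivial value at some interior $r_0$ can approach $0$ at $r_1$ only if the exponent diverges to $-\infty$, i.e.
\[
\int^{r_1}\Bigl(\tfrac{1}{2}\theta^{+}-\kappa+\nu^{\tau}\partial_{r}\nu_{\tau}\Bigr)\,\mathrm{d}\hat r=-\infty.
\]
Splitting this into the two summands, at least one of $\int^{r_1}\theta^{+}\,\mathrm{d}\hat r$ and $\int^{r_1}(\nu^{\tau}\partial_{r}\nu_{\tau}-\kappa)\,\mathrm{d}\hat r$ must diverge to $-\infty$. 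The second integrand equals $\tfrac{1}{2}\nu^{\tau}\partial_{\tau}g_{rr}|_{\scri^-}$ by the identification used in \eqref{inv_expr}, giving case (i); the first is case (ii).

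Third, for the invariance under a reparameterization $r\mapsto r'=r'(r,x^{\mathring A})$ I would just quote the three one-line transformation computations displayed right before the statement: $\int\theta^{+}\,\mathrm{d}r$ is literally a line integral of a 1-form along the generator and is manifestly invariant, while the combination $\log|\nu^{\tau}|+\int\kappa\,\mathrm{d}r$ used in \eqref{inv_expr} transforms with only an additive constant, since the $\log|\partial r/\partial r'|$ contributions produced by $\kappa'$ and by the transformation law $\nu^{\tau\prime}=(\partial r/\partial r')\nu^{\tau}$ cancel exactly. Hence both (i) and (ii) are genuinely independent of the choice of $r$, as long as the reparameterization itself stays regular at $r_1$, which is the tacit content of the phrase \emph{finite coordinate representation of} $i^0$. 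I do not expect any real obstacle; the only thing to be a bit careful about is making explicit that $\mathrm{d}\Theta|_{i^0}=0$ is built into the notion of a conformal extension to spatial infinity, so that the boundary condition driving the whole argument is genuinely at our disposal.
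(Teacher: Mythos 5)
Your proposal is correct and follows essentially the same route as the paper: the vanishing of $\partial_\tau\Theta|_{\scri^-}$ at a finite $r_1$ forces the exponent in \eqref{important_contraint} to diverge to $-\infty$, which is then split into the $\theta^+$ part and the $\nu^{\tau}\partial_r\nu_{\tau}-\kappa=\tfrac12\nu^{\tau}\partial_{\tau}g_{rr}|_{\scri^-}$ part, with invariance read off from the same transformation computations. The only cosmetic difference is that the paper first records the three separate scenarios $\lim\nu_{\tau}=0$, $\int\kappa=\infty$, $\int\theta^+=-\infty$ and only then merges the first two into the reparameterization-invariant combination \eqref{inv_expr}, whereas you go to the two-condition form directly.
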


\begin{remark}
{\rm
A similar analysis can be applied to timelike infinity.
}
\end{remark}

Next, we present and discuss  two  explicit  gauge choices for the Minkowski spacetime  where the    different scenarios  (i)-(ii) are realized
and yield qualitatively different representations on spatial infinity.
We will see that (ii) corresponds to the classical point representation of spatial infinity while (i) yields a representation as a cylinder.

%Under such a coordinate transformation an $i^0$ which corresponds to a finite value of $r$ would be represented by a divergent $r'$.
%
%any affine parameter diverges at $i^0$

\subsection{Example: Minkowski spacetime}

\subsubsection{Point representation of spatial infinity}

Via a  conformal rescaling and suitable coordinate transformations
(compare \cite{p2})
the Minkowski metric  $\widetilde \eta = -(\mathrm{d}T)^2 + (\mathrm{d}R)^2+R^2s_{\mathring A\mathring B}\mathrm{d}x^{\mathring A}\mathrm{d}x^{\mathring B}$ can be brought into the form
\begin{equation}
  \eta =\Theta^2 \widetilde \eta =  -\mathrm{d}\tau^2 -2\mathrm{d}\tau\mathrm{d}r  + \sin^2(r)s_{\mathring A\mathring B}\mathrm{d}x^{\mathring A}\mathrm{d}x^{\mathring B}
\,,
\label{rep1_Mink}
\end{equation}
with
\begin{equation}
\Theta 
= 4\sin\frac{1+\tau}{2}\sin\Big(r+\frac{1+\tau}{2}\Big)
\,,
\end{equation}
and with $s_{\mathring A\mathring B}\mathrm{d}x^{\mathring A}\mathrm{d}x^{\mathring B}$   being the standard metric on $S^2$.
This is realized as follows:
First of all one introduces the retarded time $U$,
\begin{equation}
 U := T-R
\,,
\end{equation}
so that the Minkowski metric becomes
\begin{equation}
\widetilde \eta = -\mathrm{d}U^2 -2 \mathrm{d}U\mathrm{d}R+R^2s_{\mathring A\mathring B}\mathrm{d}x^{\mathring A}\mathrm{d}x^{\mathring B}
\,.
\end{equation}
We then apply the coordinate transformation
\begin{align}
R \enspace\mapsto \enspace& r:= \mathrm{arccot}(2U) -  \mathrm{arccot}(2(U+2R))\,,
\\
U \enspace\mapsto \enspace& \tau:= 2 \mathrm{arccot}(2(U+2R))-1
\,.
\end{align}
The inverse transformation reads
\begin{equation}
\tau\,\mapsto U
=\frac{1}{2}\cot\Big(r+ \frac{1+\tau}{2}\Big) \,, \quad r\,\mapsto  R\,=\,\frac{\sin r}{\Theta}
\,,
\end{equation}
and we have
\begin{align}
 \mathrm{d}U
 =&
 -\frac{4\sin^2 \frac{1+\tau}{2}}{\Theta^2}\mathrm{d}\tau -  \frac{8\sin^2 \frac{1+\tau}{2}}{\Theta^2 }\mathrm{d}r
\,,
\\
\mathrm{d}  R
=&-2\frac{ \sin^2\big( r+\frac{1+\tau}{2}\big) - \sin^2 \frac{1+\tau}{2} }{\Theta^2}\mathrm{d} \tau
+ \frac{4\sin^2 \frac{1+\tau}{2}}{\Theta^2}\mathrm{d} r
\,.
\end{align}
In the conformally rescaled spacetime,
past timelike infinity $i^-$  can be identified with the point $(\tau=-1,r=0)$,
 past null infinity $\scri^-$ corresponds to the set $\{\tau=-1, \enspace r\in (0,\pi)\}$
and spacelike infinity $i^0$ is given by the \emph{point} $(\tau=-1, r=\pi)$.

We find that
\begin{align}
% R[\eta] &=& 6\,,
%\\
%s &=& \frac{1}{4}\Box_\eta \Theta  + \frac{1}{24}\Theta R[\eta] \,=\, -2 \cos\Big( r + \frac{u}{2}\Big) \cos\frac{u}{2}
%\,,
%\\
%s|_{\scri^-} &=&  -2 \cos r 
%\,,
%\\
\theta^+ = 2\cot r
\,,
\quad
\theta^-=-2\cot r
\,,
\quad
\kappa =0
\,,
\\
%\Sigma = -2\sin r 
\partial_{\tau}\Theta|_{\scri^-}= 2\sin r
\,,
\quad
\nu_{\tau}= -1
\,,
\quad
\nu_{\mathring A} =0
\,, \quad
\partial_{\tau}g_{rr}|_{\scri^-}=0
\,.
\end{align}
which clearly belongs to case (ii) of Lemma~\ref{lemma_div_i0} (the integrand in (i) vanishes).
The null geodesics emanating from past timelike infinity $i^-$ meet again at $i^0$, as indicated by the divergence of the expansion $\theta^+$.

\subsubsection{Cylinder representation of spatial infinity and conformal Gauss coordinates}

In fact, we are more interested in case (i) of Lemma~\ref{lemma_div_i0}.
Again, as an example let us study the Minkowski spacetime for which we want to find a conformal representation which admits
a cylinder representation of spatial infinity and which we aim to express  in conformal Gauss coordinates (cf.\ \cite{kroon}).

Consider the Minkowski spacetime in standard Cartesian coordinates $(y^{\mu})$,
\begin{equation}
\widetilde \eta = - (\mathrm{d}y^0)^2 + (\mathrm{d}y^1)^2+ (\mathrm{d}y^2)^2+ (\mathrm{d}y^3)^2
\,.
\label{Mink_cart}
\end{equation}
In the domain $\{y_{\mu}y^{\mu}>0\}$ we introduce new coordinates $(x^{\mu})$ via
\begin{equation}
x^{\mu} := -\frac{y^{\mu}}{y^{\nu}y_{\nu}} \quad\Longrightarrow \quad  y^{\mu} = -\frac{x^{\mu}}{x^{\nu}x_{\nu}}
\label{Mink_coord_trafo}
\end{equation}
This coordinate patch excludes causal future and past of the origin, whence there will be no representation of timelike infinity.

The Minkowski line element becomes
\begin{equation}
\widetilde \eta = \frac{1}{(x^{\mu}x_{\mu})^2}\Big( - (\mathrm{d}x^0)^2 + (\mathrm{d}x^1)^2+ (\mathrm{d}x^2)^2+ (\mathrm{d}x^3)^2\Big)
\,.
\end{equation}
Let now $r$ denote the standard radial coordinate associated with the spatial coordinates $x^{\alpha}$, $\alpha=1,2,3$, and set $\tau:= x^0/r$.
Replacing $(x^{\alpha})$ by polar coordinates $(r, x^{\mathring A})$, $\widetilde \eta$ takes the form
\begin{equation}
\widetilde \eta = \frac{1}{r^2(1-\tau^2)^2}\Big(
-\mathrm{d}\tau^2 - 2\frac{\tau}{  r}\mathrm{d}\tau\mathrm{d}r - \frac{\tau^2-1}{r^2} \mathrm{d}r^2
 +s_{\mathring A\mathring B}\mathrm{d}x^{\mathring A}\mathrm{d}x^{\mathring B} \Big)
\,.
\end{equation}
%
%where $s_{\mathring A\mathring B}\mathrm{d}x^{\mathring A}\mathrm{d}x^{\mathring B}$ denotes the standard line element of the round 2-sphere.
We choose the conformal factor,
\begin{equation}
\Theta := r (1-\tau^2)
\,,
\label{Mink_conf_fac}
\end{equation}
which yields the following conformal representation of Minkowski spacetime,
\begin{equation}
 \eta =\Theta^2\widetilde \eta= 
-\mathrm{d}\tau^2 - 2\frac{\tau}{  r}\mathrm{d}\tau\mathrm{d}r + \frac{1-\tau^2}{r^2} \mathrm{d}r^2
 +s_{\mathring A\mathring B}\mathrm{d}x^{\mathring A}\mathrm{d}x^{\mathring B} 
\,,
\quad |\tau|<1\,, \quad r>0
\,.
\label{Mink_cyl}
\end{equation}
Future and past null infinity can be identified with $\scri^{\pm}=\{\tau = \pm 1, r> 0\}$.
The set $\{r=0\}$ represents spacelike infinity.
By introducing $\widehat r:= -\log r$ as a new coordinate one  shows  that the set $\{r=0\}$, where the metric coefficients in \eq{Mink_cyl} become singular,
has cylinder topology $[-1,1]\times \mathbb{S}^2$.
We denote the 2-spheres where the cylinder touches $\scri$ by $I^{\pm}:=\{\tau = \pm 1, r= 0\}$,
while the ``proper part'' of spacelike infinity is denoted $I:= \{|\tau|<1, r=0\}$.
$I^{\pm}$ are called \emph{critical sets}.
We have
\begin{eqnarray}
L_{\tau\tau} = \frac{1}{2}
\,,
\quad
L_{\tau r} =\frac{\tau}{2r}
\,,
\quad
L_{\tau \mathring A} = 0
\,,
\end{eqnarray}
and one checks that
\begin{equation}
\dot x = \partial_{\tau}
\,, \quad
f_{\tau}= 0
\,, \quad
f_r = r^{-1}
\,, \quad
f_{\mathring A} = 0
\end{equation}
solves the conformal geodesics equations \eq{conf_geo1}-\eq{conf_geo2}, so that  \eq{Mink_cyl} provides a conformal representation of (a subset of) Minkowski
spacetime in conformal Gauss coordinates.

The coordinate transformation which relates \eq{Mink_cart} and \eq{Mink_cyl} is given by
\begin{align}
y^0= \frac{-\tau}{r(1-\tau^2)} 
\,,
\qquad 
y^1= \frac{-\sin\theta\cos\phi}{r(1-\tau^2)}
\,,
\label{Mink_trafo1}
\qquad
y^2= \frac{-\sin\theta\sin\phi}{r(1-\tau^2)}
\,, \qquad 
y^3= \frac{-\cos\theta}{r(1-\tau^2)}
\,.
%\label{Mink_trafo2}
\end{align}
The inverse transformation takes the form
\begin{align}
\tau =& \frac{y^0}{\sqrt{(y^1)^2+(y^2)^2+(y^3)^2}}
\,,
\label{inv_Mink_trafo1}
\qquad
r = \frac{-\sqrt{(y^1)^2+(y^2)^2+(y^3)^2}}{-(y^0)^2+ (y^1)^2+(y^2)^2+(y^3)^2}
\,,
\\
\theta =& \arccos\Big(\frac{y^3}{\sqrt{(y^1)^2+(y^2)^2+(y^3)^2}}\Big)
\,,
\qquad
\phi = \arcsin\Big(\frac{y^2}{\sqrt{(y^1)^2+(y^2)^2}}\Big)
\,.
\label{inv_Mink_trafo4}
\end{align}
These conformal Gauss coordinates correspond to the following gauge data,
\begin{equation}
\nu_{\tau} = \frac{1}{r}
\,,\quad  \nu_{\mathring A} = 0
\,,
\quad
f_1|_{\scri^-} =1
\,,\quad 
f_A|_{\scri^-} = 0
\,,\quad 
\kappa=-\frac{2}{r} 
\,, \quad 
\theta^-=  0
\,, 
\quad
\Theta^{(1)}=2r
\,.
\label{Mink_gauge}
\end{equation}
Moreover, we anticipate (this will be relevant in view of Section~\ref{sec_yet_another} below, where $v^{(2)}_{\mathring A}$ is defined),
\begin{equation}
g_{\mathring A\mathring B}|_{I^-}=s_{\mathring A\mathring B}
\,, \quad \mcD^{\mathring A}v^{(2)}_{\mathring A}=0
\,,
\label{Mink_gauge2}
\end{equation}
where $\mcD$ denotes the Levi-Civita connection of the standard metric on $S^2$.
It follows from \eq{constraint3} that in  this gauge $\scri^-$ has vanishing divergence, $\theta^+=0$. That means that this gauge cannot admit a (finite) representation of a (regular) past timelike infinity $i^-$ as a tip of a cone.
Instead, $i^-$  is shifted to infinity in these coordinates
(we  observe this directly when applying the coordinate transformation \eq{Mink_coord_trafo}).
We also note that case (ii) of Lemma~\ref{lemma_div_i0} is violated while (i) is fulfilled (here we have $\partial_{\tau}g_{rr}|_{\scri^-}=2/r^2$ whence  $\nu^{\tau}\partial_{\tau}g_{rr}=2/r$).

\subsection{Cylinder representation and a priori restrictions on the gauge functions}
\label{section_apriori}

We want to derive restrictions on the asymptotic behavior of the gauge functions appearing in the conformal Gauss gauge scheme at $I^-$, necessary to obtain a spacetime
which admits a smooth cylinder representation of spatial infinity.
As already mentioned before, to  obtain some representation of spatial infinity the differential of the conformal factor $\Theta$ needs to become
zero somewhere along the null geodesic generators of $\scri^-$. We will choose the $r$-coordinates in such a way that spatial infinity is
located at $r=0$. It follows that the gauge  function $\Theta^{(1)}$ needs to satisfy $\Theta^{(1)}=o(1)$.
We are  interested in the construction of smooth spacetimes, which admit a smooth extension through null infinity,  spatial infinity, which we want to represent
as a cylinder,  and therefore also through its critical sets.
This leads to more restrictions than those obtained  Section~\ref{sec_spatial_inf}.
First of all we need to require  
\begin{equation}
\Theta^{(1)}=   \mathfrak{O}(r)
\,.
\end{equation}
Here the symbol $\mathfrak{O}(r)$ is defined as follows: We say that a function $f= \mathfrak{O}(r^n)$, $n\geq 0$, if it is a smooth function of $r$
and $x^{\mathring A}$, and if it Taylor expansion at $r=0$ starts with a term of $n$th-order.
We say that $f= \mathfrak{O}(r^{-n})$ if $r^n f= \mathfrak{O}(1)$.

Let us now  focus on  the specifics of the cylinder representation.
It is obtained by imposing a specific behavior on the gauge functions \eq{gauge_data_scri} near spatial infinity.
A characteristic feature of the cylinder representation is that the  Riemannian metric $g_{\mathring A\mathring B}|_{\scri^-}$ does not degenerate
at spatial infinity (as compared to e.g.\ the point representation of spatial infinity).
It follows from \eq{adapted_null} and \eq{div_conffactor} that  $g_{\mathring A\mathring B}|_{\scri^-}$ satisfies
\begin{equation}
g_{\mathring A\mathring B}|_{\scri^-} =  e^{\int ^r\theta^+\mathrm{d}\hat r }s_{\mathring A\mathring B}
\,.
\end{equation}
We thus need to require
\begin{equation}
-\infty< {\int ^{I^-}\theta^+\mathrm{d} r }< \infty
\,,
\end{equation}
i.e.
\begin{equation}
\theta^+ = \mathfrak{O}(1)
\,.
\end{equation}
In a conformal Gauss gauge the conformal factor satisfies globally $\Theta=\Theta^{(1)} (1+\tau) + \Theta^{(2)}(1+\tau)^2$.
To end up with a spacetime where $\scri^+=\{\tau=+1\}$ the relation
\begin{equation}
\Theta^{(2)}=-\frac{1}{2}\Theta^{(1)}
\label{conf_fac_relation}
\end{equation}
needs to be satisfied.

``Natural'' requirements on $\Theta$ at $I$ are, as on $\scri$, $\Theta|_I=0$ and $\mathrm{d}\Theta|_I\ne 0$.
The gauge function $\Theta^{(1)}$ therefore needs to satisfy the following condition,
\begin{equation}
\Theta^{(1)}= \Theta^{(1,1)}(x^{\mathring A}) r +  \mathfrak{O}(r^2)
\,, \quad  \Theta^{(1,1)}>0
\,.
\label{expansion_conf_factor}
\end{equation}
It is clear that in our smooth setting we need to impose
\begin{equation}
f_a |_{\scri^-}=  \mathfrak{O}(1)
\,.
\end{equation}
Morever, we require the frame coefficients (which appear as unknowns in the GCFE) to be regular at spatial infinity.
This will be the case if (cf.\ \eq{frame_field1}-\eq{frame_field3})
\begin{equation}
\nu^{\tau} =   \mathfrak{O}(1)\,,\quad   \nu_{\mathring A} =  \mathfrak{O}(1)
\,.
\label{apriori_nutau}
\end{equation}
We apply Lemma~\ref{lemma_div_i0}
to deduce that  necessarily
\begin{equation}
{\int ^{I^-}\kappa\mathrm{d}\hat r }=\infty
\,.
\end{equation}
It follows from \eq{Gamma110_gen} that
\begin{equation}
(\partial_r+\kappa)\nu^{\tau}=  \mathfrak{O}(1)
\quad \overset{\eq{apriori_nutau}}{\Longleftrightarrow} \quad
\kappa\nu^{\tau}=  \mathfrak{O}(1)
\,,
\end{equation}
which is only possible if
\begin{equation}
\nu^{\tau} =   \mathfrak{O}(r)
\,.
\end{equation}
Because of this behavior, the frame vectors $e_0$ and $e_1$ \eq{frame_field1}-\eq{frame_field2} become linearly depend at spatial infinity
which implies that $I$ is a total characteristic.
This also implies that $\nu_{\tau}$ diverges at spatial infinity. Let us impose the condition that this divergence is as weak as possible,
\begin{equation}
\nu^{\tau} = \nu^{\tau(1)}r+  \mathfrak{O}(r^2)
\,, \quad \nu^{\tau(1)} \ne 0 
\,,
\end{equation}
equivalently,
\begin{equation}
\mathrm{d}\sqrt{-\det g^{\sharp}}|_{I^-}
%=\sqrt{\det \not g^{\sharp}}\partial_r |\nu^{\tau}|\mathrm{d}r
 \ne 0
\,.
\label{additional_assumption}
\end{equation}
Taking  \eq{important_constr} into account it follows   that $\kappa$ cannot  diverge faster than  $r^{-1}$ and that
\begin{equation}
\kappa= -\frac{2}{r}+  \mathfrak{O}(1)
\,.
\end{equation}
In particular,  any affine parameter along the null geodesics generating $\scri^-$ diverges when approaching $I^-$,

\begin{equation}
\kappa_{\mathrm{aff}}=0 \quad \Longleftrightarrow \quad
r_{\mathrm{aff}}(r,x^{\mathring A}) = \Big(\frac{\partial r_{\mathrm{aff}}}{\partial r}\Big)\Big|_{r=r_0} \int_{r_0}^re^{\int_{r_0}^{\hat r}\kappa\mathrm{d}\hat{\hat r}} \mathrm{d}\hat r + r_{\mathrm{aff}}|_{r=r_0}
\,.
\label{affine_para_trafo}
\end{equation}
From the trace of \eq{GammaAB0_gen}  we deduce that 
\begin{equation}
\theta^-=  \mathfrak{O}(r)
\,.
\end{equation}
Moreover,  \eq{expr_Theta2} together with \eq{conf_fac_relation} gives  ($\langle \ell,f\rangle|_{\scri^-} =\nu_{\tau}f_1$)
\begin{equation}
 f_1|_{\scri^-}  =
1-(\Theta^{(1)} )^{-1}  (\partial_r +\kappa )(\nu^{\tau} \Theta^{(1)}) 
\,,
\label{restriction_f1}
\end{equation}
i.e.\ to make sure that $\scri^+=\{\tau=+1\}$ the freedom to choose $f_1|_{\scri^-}$ is lost.

%
%In fact, it is a  characteristic feature of the cylinder  that $\Theta_2$ goes to zero as well.
%%
%\begin{equation}
% \Theta_1|_I=\Theta_2|_I=0
%\,.
%\end{equation}
%
Because the conformal factor $\Theta$ vanishes on $I$,  the equations \eq{evolution1}-\eq{evolution7} for connection coefficients, Schouten tensor and frame field decouple, on the cylinder,  from
those for the rescaled Weyl tensor (cf.\ \cite{F3}), 
%which is a  characteristic feature of the cylinder,
%
\begin{align}
\partial_{\tau}\widehat L_{a0} |_I
&= -  \widehat\Gamma_a{}^b{}_0\widehat L_{b0} 
\,,
\label{ev_eqn_gauge_1}
\\
\partial_{\tau}\widehat L_{ab} |_I
&=-  \widehat\Gamma_a{}^c{}_0\widehat L_{cb} 
\,,
\label{ev_eqn_gauge_2}
\\
\partial_{\tau}\widehat\Gamma_{a}{}^0{}_b |_I
 &=
-  \widehat\Gamma_c{}^0{}_b\widehat\Gamma_{a}{}^c{}_{0} + \widehat L_{ab}   
\,,
\label{ev_eqn_gauge_3}
\\
\partial_{\tau}\widehat\Gamma_{a}{}^1{}_b |_I
 &=
-  \widehat\Gamma_c{}^1{}_b\widehat\Gamma_{a}{}^c{}_{0} +\delta^1{}_{b}\widehat L_{a0} 
\,,
\label{ev_eqn_gauge_4}
\\
\partial_{\tau}\widehat\Gamma_{1}{}^A{}_B |_I
 &=
-  \widehat\Gamma_c{}^A{}_B\widehat\Gamma_{1}{}^c{}_{0} + \delta^A{}_{B}\widehat L_{10}   
\,,
\label{ev_eqn_gauge_5}
\\
\partial_{\tau}\widehat\Gamma_{A}{}^B{}_C |_I
 &=
-  \widehat\Gamma_D{}^B{}_C\widehat\Gamma_{A}{}^D{}_{0} + \delta^B{}_{C}\widehat L_{A0}  
\,,
\label{ev_eqn_gauge_6}
\\
\partial_{\tau}e^{\mu}{}_a|_I
&= -\widehat\Gamma_{a}{}^0{}_{0} \delta^{\mu}{}_0
 -\widehat\Gamma_{a}{}^b{}_{0} e^{\mu}{}_b
\,.
\label{ev_eqn_gauge_7}
\end{align}
The divergence of  $\nu_{\tau}\equiv (\nu^{\tau})^{-1}$ does not matter  as  the frame field remains regular and
the metric itself does no appear as an unknown in the GCFE.

Finally, it follows from \eq{GammaAB0_gen} that necessarily
\begin{equation}
\Xi_{AB} = \Xi^{(1)}_{AB} r+ \mathfrak{O}(r^{2}) 
\,.
\label{apriori_Xi}
\end{equation}

\begin{lemma}
\label{lemma_apriori}
The gauge data need to satisfiy the following a priori restrictions to obtain a smooth representation of spatial infinity as a cylinder $I=\{r=0, |\tau|<1\}$  and a smooth representation of null infinity  $\scri^{\pm}=\{\tau=\pm 1, r>0\}$, which, in addition, satisfies \eq{additional_assumption},
\begin{align}
\nu_{\tau} = \nu_{\tau}^{(1)}r^{-1} +  \mathfrak{O}(1)
\,,\quad  \nu_{\mathring A} =  \mathfrak{O}(1)
\\
f_a =  \mathfrak{O}(1)
\,,\quad 
\kappa=-\frac{2}{r} +   \mathfrak{O}(1)
\,, \quad 
\theta^-=  \mathfrak{O}(r)
\,, 
\\
\Theta^{(1)}= \Theta^{(1,1)}(x^{\mathring A}) r +  \mathfrak{O}(r^2)
\,,
\end{align}
where $ \nu_{\tau}^{(1)}\ne0$ and $\Theta^{(1,1)}>0$.
Moreover, the data $\Xi_{AB}$ need to be of the form \eq{apriori_Xi}, and the gauge function $f_1|_{\scri^-}$ needs to fulfill \eq{restriction_f1}.
%\tim{non-deg of $\nabla_i\nabla_j\Theta$}
\end{lemma}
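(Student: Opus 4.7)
The proof proposal is essentially to collect the a priori obstructions that arise from three separate requirements: (a) that the blow-up of spatial infinity to a cylinder $I=\{r=0\}$ is smooth and non-degenerate, (b) that $\scri^\pm=\{\tau=\pm 1\}$ are smooth null boundaries in the given conformal gauge, and (c) that the frame field and the Weyl connection coefficients derived in Section~\ref{sec_connection_gen} remain regular at $I^-$. I will read off each restriction essentially by inspection from a formula already displayed in the paper, proceeding in the order dictated by the logical dependencies.

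First I would fix $\Theta^{(1)}$. The global expansion $\Theta=\Theta^{(1)}(1+\tau)+\Theta^{(2)}(1+\tau)^2$ combined with the demand that $\scri^+=\{\tau=+1\}$ immediately forces $\Theta^{(2)}=-\tfrac{1}{2}\Theta^{(1)}$, while $\Theta|_I=0$ with $\mathrm{d}\Theta|_I\neq 0$ (together with the smoothness convention) gives $\Theta^{(1)}=\Theta^{(1,1)}(x^{\mathring A})r+\mathfrak{O}(r^2)$ with $\Theta^{(1,1)}>0$. Next, cylinder topology $I\cong[-1,1]\times\mathbb{S}^2$ with a non-degenerate limit metric on $I^-$ requires $\int^{I^-}\theta^+\,\mathrm{d}r$ to be finite, hence $\theta^+=\mathfrak{O}(1)$. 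Regularity of the frame vectors \eq{frame_field1}--\eq{frame_field3} as fields in the GCFE forces $\nu^\tau$ and $\nu_{\mathring A}$ to be $\mathfrak{O}(1)$, and smoothness of $f|_{\scri^-}$ forces $f_a=\mathfrak{O}(1)$.

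The crucial step, and the place where the argument has its only real content, is the combination of Lemma~\ref{lemma_div_i0} with the formula \eq{Gamma110_gen} for $\widehat\Gamma_1{}^1{}_0$. Since case (ii) of Lemma~\ref{lemma_div_i0} is excluded by $\theta^+=\mathfrak{O}(1)$, case (i) must hold, forcing $\int^{I^-}\kappa\,\mathrm{d}r=\infty$. But the boundedness of the connection coefficient $\widehat\Gamma_1{}^1{}_0$ requires $(\partial_r+\kappa)\nu^\tau=\mathfrak{O}(1)$, which combined with $\nu^\tau=\mathfrak{O}(1)$ yields $\kappa\nu^\tau=\mathfrak{O}(1)$, so $\nu^\tau$ must in fact vanish at $I^-$. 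Imposing the additional assumption \eq{additional_assumption} that this zero is simple pins down $\nu^\tau=\nu^{\tau(1)}r+\mathfrak{O}(r^2)$, equivalently $\nu_\tau=\nu_\tau^{(1)}r^{-1}+\mathfrak{O}(1)$. Inserting this into the integrated constraint \eq{important_contraint} then forces $\kappa=-2/r+\mathfrak{O}(1)$ — any stronger divergence would make $\partial_\tau\Theta|_{\scri^-}$ vanish before reaching $r=0$, while a weaker one would produce $\Theta^{(1)}$ that does not vanish at $I^-$.

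The remaining two restrictions are then immediate corollaries. Taking the trace of \eq{GammaAB0_gen} and demanding that $\widehat\Gamma_A{}^B{}_0$ be $\mathfrak{O}(1)$ gives $\theta^-=\mathfrak{O}(r)$, while the trace-free part together with $\Xi_{AB}=\Xi^{(1)}_{AB}r+\mathfrak{O}(r^2)$ ensures regularity at $I^-$; this is the origin of \eq{apriori_Xi}. Finally, substituting the expansions of $\nu^\tau$, $\kappa$ and $\Theta^{(1)}$ into the identity \eq{expr_Theta2} and enforcing the relation $\Theta^{(2)}=-\tfrac{1}{2}\Theta^{(1)}$ yields precisely \eq{restriction_f1}, so the gauge freedom to prescribe $f_1|_{\scri^-}$ independently is exhausted. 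The only non-mechanical step is the one flagged above — deducing the sharp rate $\kappa=-2/r+\mathfrak{O}(1)$ from Lemma~\ref{lemma_div_i0} combined with the assumption \eq{additional_assumption}; everything else is a direct reading of the formulae of Sections~\ref{sec_connection_gen}--\ref{sec_schouten_gen}.
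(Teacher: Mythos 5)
Your proposal is correct and follows essentially the same route as the paper's own derivation in Section~\ref{section_apriori}: each restriction is read off from the same sources (Lemma~\ref{lemma_div_i0} combined with \eq{Gamma110_gen} for $\nu^\tau=\mathfrak{O}(r)$ and the divergence of $\int\kappa$, the constraint \eq{important_constr} for the sharp rate $\kappa=-2/r+\mathfrak{O}(1)$, the trace and trace-free parts of \eq{GammaAB0_gen} for $\theta^-$ and $\Xi_{AB}$, and \eq{expr_Theta2} with \eq{conf_fac_relation} for \eq{restriction_f1}), in the same logical order. Your heuristic phrasing of why $\kappa$ must diverge exactly like $-2/r$ is slightly looser than the direct balancing of leading $1/r$-terms in \eq{important_constr}, but the mechanism you invoke is the correct one.
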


\begin{remark}
{\rm
In a next step these expansions are inserted into the constraint equations  computed in Appendix~\ref{app_charact_constraints}.
In turns out that further restrictions need to be imposed to make sure that the restriction of the rescaled Weyl tensor is bounded
at $I^-$ and  does not produce logarithmic terms there.

However, in view of an analysis of the constraint equations on the cylinder it is very convenient if the gauge functions approach the
``Minkowskian values'' \eq{Mink_gauge}. In particular, this makes sure that the system \eq{ev_eqn_gauge_1}-\eq{ev_eqn_gauge_7} can  be solved explicitly.
% and  these  additional restrictions are not restrictive enough.
The analysis of the no-logs condition will therefore be carried out only for gauge functions of a form as in Definition~\ref{dfn_weak_asympt_gauge} \& \ref{dfn_asympt_Minik_gauge}  below.
}
\end{remark}

\subsection{Yet another gauge freedom}
\label{sec_yet_another}

Before we proceed it is important to note that there is still some gauge freedom left.
We have already mentioned that the gauge function 
$\kappa$ does not fully determine  the $r$-coordinate.
When transforming to a prescribed $\kappa$ via the transformation $r\mapsto \tilde r =\tilde r (r, x^{\mathring A})$ one solves a second-order ODE,  so
 there remains the  freedom to choose the  integration functions.
The precise role of these integration functions depends on the asymptotic behavior  of $\kappa$ near spatial infinity.
In a  conformal Gauss gauge which satisfies \eq{main_gauge0B} below 
we can work out   what these gauge freedom corresponds to.
For this let us assume that all the other gauge data have already been transformed to their desired values.

We consider a transformation as in \eq{kappa_trafo1} which leads us to the ODE  \eq{kappa_trafo3}
\begin{equation}
\frac{\partial^2 r}{\partial \widetilde r^2}
=\frac{\partial r}{\partial \widetilde r}\big[\kappa(\widetilde r)-2\partial_{\widetilde r}\log\psi(\widetilde r)\big]
  - \Big(\frac{\partial r}{\partial \widetilde r}\Big)^2\kappa( r(\widetilde r))
\,,
\end{equation}
with 
\begin{equation}
\psi(\widetilde r) =\frac{\partial\widetilde r}{\partial r} \frac{\nu^{\tau}(r(\widetilde r))}{ \nu^{\tau}(\widetilde r)}\frac{\Theta^{(1)}(r(\widetilde r))}{ \Theta^{(1)}(\widetilde r)}
\,.
\end{equation}
Here, we want to solve this equation from $I^-$.
In a conformal Gauss gauge which satisfies \eq{main_gauge0B} below  it has the form
\begin{equation}
\frac{\partial^2 r}{\partial \widetilde r^2}
=
\Big(\frac{\partial r}{\partial \widetilde r} \Big)^2\Big( \frac{2}{r} +\mathfrak{O}(r)\Big)
-\frac{\partial r}{\partial \widetilde r} \Big( \frac{2}{\widetilde r}+\mathfrak{O}(\widetilde r)\Big) 
\,.
\label{some_trafo1}
\end{equation}
Set $u:=\partial_{\widetilde r}\log(\tilde r/ r)$ and  $v:=r/\tilde r$. Then this singular ODE becomes a regular first-order system,
\begin{align}
 \partial_{\widetilde r} u
=&-u^2- v^2(1-u\widetilde r)^2\mathfrak{O}((v\widetilde r)^0) +(1-u\widetilde r) \mathfrak{O}(\widetilde r^0) 
\,,
\label{some_trafo2}
\\
\partial_{\widetilde r}v =&-uv
\,.
\label{some_trafo3}
\end{align}
The solution is of the form
\begin{equation}
  r =f_{p,q}(\widetilde r,\mathring x^A)=\mathfrak{O}(\widetilde r)
\,, \quad \text{where} \quad  \partial_{\widetilde r}f_{p,q}|_{I^-}=p(x^A)>0
\quad \text{and}\quad   \partial^2_{\widetilde r}f_{p,q}|_{I^-}=q(x^A)
\end{equation}
are the initial data.
Note that this transformation does not change the location of $I^-=\{r=0\}$.
(In the special case where  $\kappa=-2/r$ the solution can be determined explicitly, $f_{p,q}=2p^2 \widetilde r/(2p- q\widetilde r)$.)

One then proceeds us described in Section~\ref{sec_realization}, where coordinate and conformal transformations are chosen in such a way that the
other gauge data remain invariant.
Under these transformations
\begin{equation}
\widetilde g_{\mathring A\mathring B} (\widetilde r, x^{\mathring C})|_{\scri^-}\mapsto  g_{\mathring A\mathring B} ( r, x^{\mathring C})|_{\scri^-}=(\psi(\widetilde r( r),x^{\mathring C}))^2g_{\mathring A\mathring B} (\widetilde  r( r), x^{\mathring C})
\,,
\end{equation}
We have
\begin{equation}
\psi|_{I^-} =p(x^{\mathring A})
\,, 
\end{equation}
whence
\begin{equation}
 g_{\mathring A\mathring B} |_{I^-}=(p(x^{\mathring A}))^{2}\widetilde g_{\mathring A\mathring B} 
\,.
\label{conf_class_gauge}
\end{equation}
The gauge freedom coming along with  $p(x^{\mathring A})$  can therefore be employed to conformally rescale $g_{\mathring A\mathring B} $
in any convenient manner.
Let us  consider the behavior of $\Xi_{\mathring  A\mathring  B}$  under the conformal and coordinate transformations  of Section~\ref{sec_realization}. 
 Since we want to leave $g_{\mathring A\mathring B} $ invariant we set $p(x^{\mathring A})=1$. Then
\begin{align}
 \Xi_{\mathring  A\mathring  B}
 =&
-2\Big(\frac{\partial \widetilde x^{\alpha}}{\partial x^A}\frac{\partial \widetilde x^{\beta}}{\partial x^B}\frac{\partial  r}{\partial  \widetilde x^{\gamma}}
\widetilde\Gamma^{\gamma}_{\alpha\beta}+ \frac{\partial  r}{\partial  \widetilde x^{\alpha}}\frac{\partial^2 \widetilde x^{\alpha}}{\partial x^A\partial x^B}
\Big)_{\mathrm{tf}}
\nonumber
\\ 
=&
-2\Big(\frac{\partial \widetilde r}{\partial x^A}\frac{\partial \widetilde r}{\partial x^B}\frac{\partial  r}{\partial \widetilde r}
\widetilde\Gamma^{r}_{rr}
+ 2\frac{\partial \widetilde r}{\partial x^A}\frac{\partial  r}{\partial  \widetilde r}
\widetilde\Gamma^{r}_{r\mathring B}
+ 2\frac{\partial \widetilde r}{\partial x^A}\frac{\partial  r}{\partial  \widetilde x^{\mathring C}}
\widetilde\Gamma^{\mathring C}_{r\mathring B}
+\frac{\partial  r}{\partial \widetilde r}\widetilde\Gamma^{r}_{\mathring A\mathring B}
+\frac{\partial  r}{\partial  \widetilde x^{\mathring C}}\widetilde\Gamma^{\mathring C}_{\mathring A\mathring B}
+ \frac{\partial  r}{\partial \widetilde  r}\frac{\partial^2\widetilde r}{\partial x^A\partial x^B}
\Big)_{\mathrm{tf}}
\nonumber
\\ 
=&
 \frac{\partial  r}{\partial \widetilde r}\Big(-2(\widetilde\kappa -\widetilde \theta^+)\rnabla_{\mathring A}\widetilde r\rnabla_{\mathring B}\widetilde r
+2 \widetilde\xi_{(\mathring A}\rnabla_{\mathring B)}\widetilde r
+\widetilde\Xi_{\mathring A\mathring B}
-2\rnabla_{\mathring A}\rnabla_{\mathring B}\widetilde r
\Big)_{\mathrm{tf}}
\nonumber
\\
=&
(1+q \widetilde r)\Big(\widetilde \Xi_{\mathring A\mathring B}-\widetilde r^2 \widetilde\xi_{(\mathring A}\rnabla_{\mathring B)}q
+\widetilde r^2\rnabla_{\mathring A}\rnabla_{\mathring B}q
\Big)_{\mathrm{tf}} + \mathfrak{O}( \widetilde r^3)
\,.
\label{trafo_Xi}
\end{align}
In a conformal Gauss gauge which satisfies \eq{main_gauge0B}  below  we have
$\xi_{\mathring A}=\mathfrak{O}(\widetilde r)$ (cf.\  \eq{constraint_xiA}).

In Section~\ref{sec_solution_constr_gen}
 we will  see that boundedness of the rescaled Weyl tensor at $I^-$ requires the data $\Xi_{\mathring A\mathring B}$ to be of the form
$\Xi_{\mathring A\mathring B}=\Xi^{(2)}_{\mathring A\mathring B} r^2 + \mathfrak{O}(r^3)$. It follows from \eq{trafo_Xi} that the leading order term transforms as
\begin{equation}
\Xi_{\mathring A\mathring B}^{(2)}\mapsto \Xi^{(2)}_{\mathring A\mathring B}
+(\rnabla_{\mathring A}\rnabla_{\mathring B}q)_{\mathrm{tf}}
\,
\end{equation}
It is convenient to set
\begin{equation}
v_{\mathring A} :=\rnabla_{\mathring B}\Xi_{\mathring A}{}^{\mathring B} 
\,.
\end{equation}
It follows from the Hodge-decomposition theorem (cf.\ e.g.\ \cite{hodge}) that on a closed Riemannian manifold $(\Sigma,h)$ a (smooth) 1-form $\omega$
 admits the decomposition
\begin{equation}
\omega_A = \rnabla_A\ul \omega+  \not\hspace{-.1em}\epsilon_A{}^B\rnabla_B\ol \omega+ \lambda_A
\quad \text{with}
\quad \Delta_h \lambda_A=0
\,,
\end{equation}
where $ \not\hspace{-.1em}\epsilon_{AB}$ denotes the volume form associated  with $h$.
If $(\Sigma,h)$ is compact and has non-negative Ricci curvature which is positive at one point it follows from Bochner's theorem (cf.\ e.g.\ \cite{bochner}) that
all harmonic 1-forms are identically zero.
In that case any 
(smooth)
vector field  admits a decomposition of the form
\begin{equation}
\omega_A = \rnabla_A\ul \omega +   \not\hspace{-.1em}\epsilon_A{}^B\rnabla_B\ol \omega
\,.
\label{vf_decomposition}
\end{equation}
In particular  on a Riemannian 2-sphere
 all  vector fields can be decomposed this way, whence
the expansion coefficients of $v_{\mathring A}$ can be written as
\begin{equation}
v^{(n)}_{\mathring A}= \rnabla_{\mathring A}\ul{v}^{(n)} + \not\hspace{-.1em} \epsilon_{\mathring A}{}^{\mathring B}\rnabla_{\mathring B}\ol{v}^{(n)}
\,.
\end{equation}
Consider a gauge where $ g_{\mathring A\mathring B} |_{I^-}$ is the standard metric $s_{\mathring A\mathring B}$ on $S^2$.
We observe that in that case $\ul v^{(2)}$ transforms as
\begin{equation}
\ul v^{(2)} \mapsto  \ul v^{(2)} + \frac{1}{2}(\Delta_s +2)q 
\,.
\label{add_gauge_trafo}
\end{equation}
Recall  that $v^{(2)}_{\mathring A}$ arises  as the divergence of a symmetric trace-free tensor, $v^{(2)}_A=\mcD^B\Xi^{(2)}_{AB}$.
As a consequence of  York splitting (cf.\ e.g.\ \cite{c_lecture}), the fact that there are no non-trivial TT-tensors on $\mathbb{S}^2$, and
Hodge decomposition \cite{beig}, on $\mathbb{S}^2$, 
  any   symmetric trace-free  tensor $\mathfrak{x}$ admits a decomposition of the form
\begin{equation}
\mathfrak{x}_{AB} = (\mcD_{(A}\mathfrak{x}_{B)})_{\mathrm{tf}}= (\mcD_{A}\mcD_{B}\ul {\mathfrak{x}})_{\mathrm{tf}}
+ \epsilon_{(A}{}^C \mcD_{B)}\mcD_{C}\ol{ \mathfrak{x}}
\,,
\label{tensor_decomposition}
\end{equation}
for  appropriately chosen 1-form $\mathfrak{x}_A$ and functions $\ul {\mathfrak{x}}$ and $\ol {\mathfrak{x}}$.
Its divergence reads
\begin{equation}
\mcD^B\mathfrak{x}_{AB} =\frac{1}{2}\mcD_A(\Delta_s+2)\ul{\mathfrak{x}}
+\frac{1}{2} \epsilon_{A}{}^B\mcD_{B}(\Delta_s+2)\ol{\mathfrak{x}}
\,.
\label{tensor_decomposition2}
\end{equation}
It follows that $\ul v^{(2)}$ and $\ol v^{(2)}$ cannot have $\ell=0,1$-spherical harmonics in their harmonic decomposition.
%\tim{!!!}
We will make frequently use of the Hodge decompositions described here.

By way of summary, assuming a conformal Gauss gauge which satisfies  \eq{main_gauge0B} below  (and $g_{\mathring A\mathring B} |_{I^-}=s_{\mathring A\mathring B}$)
the remaining gauge freedom  can be employed to prescribe
 the divergence of $v^{(2)}_{\mathring  A}$ by solving a Laplace-like equation.
Since  $\ul v^{(2)}$ does not contain  $\ell=1$-spherical harmonics,  the kernel of the operator in \eq{add_gauge_trafo} does not
provide any obstructions.
One then may proceed as described in Section~\ref{sec_realization}  to transform the remaining gauge functions %$f_{\scri^-}$
to their desired form.

\subsubsection{Dual mass aspect}
\label{sec_dual_mass}

It is convenient so set
\begin{equation}
N:=\frac{1}{8}\Delta_s\ol v^{(2)} \quad \Longleftrightarrow \quad 
N=-\frac{1}{8}\epsilon^{AB}\mcD_Av^{(2)}_B
\,.
\label{definig_eqn_N}
\end{equation}
Later on we shall see (cf.\ \eq{data_gen1}-\eq{data_gen2})  that the function $N$ can be identified with the leading order term of a certain rescaled Weyl tensor component at $I^-$,
and this  component is dual to the one which involves the (ADM) mass aspect $M$, by which we mean the limit of the Bondi mass aspect at $I^-$.
In the case of e.g.\ the Taub-NUT spacetime,  cf.\ \cite{gp}, this component is  constant and can be identified with the NUT-parameter (note that this spacetime is not asymptotically flat, whence $N$ can be constant and non-zero, which it cannot be in our setting).
In this sense $N$ may  be regarded as a generalized NUT-like or twist parameter. 

In  \cite{dual}, cf.\ \cite{ashtekar}, a so-called `dual Bondi 4-momentum' has been introduced, leading in particular to the notion of a
 `dual Bondi mass', or `magnetic Bondi mass'. As the Bondi mass it is defined  as the integral of a `dual Bondi mass aspect' over cuts of $\scri$.
Since  the function $N$ arises as a limit thereof at $I^-$, we will call it \emph{dual (ADM) mass aspect}.

It follows immediately from \eq{definig_eqn_N} that the \emph{dual mass}, i.e.\ the  integral of $N$ over $I^-\cong \mathbb{S}^2$ vanishes.
This is in accordance with the results in \cite{ashtekar}, that the dual mass has to vanish in a spacetime with a regular $\scri$ with topology $\mathbb{R}\times S^2$.

%In fact, it can be identified with the leading order term of the imaginary part of the spinor curvature component $\Psi_2$ \cite{pr} (while the mass aspect $M$ corresponds to its real part).

\subsection{Asymptotically Minkowski-like conformal Gauss gauge}

In Section~\ref{section_apriori} we have derived some  a priori restrictions on the gauge functions in order  to end up with a smooth cylinder
representation of spatial infinity.
However, it is useful and convenient to impose some weak additional restrictions on the asymptotic behavior of the gauge functions at $I^-$.

The  equations for Weyl connection, Schouten tensor etc. derived in Section~\ref{sec_evolution1} \& \ref{sec_evolution2}
involve terms which are quadratic in the unknowns.
This implies that the structure of the equations for the $n$th-order radial derivatives on the cylinder depends crucially on terms of $0$th-order
(in particular of connection and frame coefficients).

In the case of a smooth critical set $I^-$, the integration functions for the transport equations on the cylinder 
are determined at $I^-$ by the limit of the corresponding fields on $\scri^-$.
The initial data for the  $0$th-order equations,  \eq{ev_eqn_gauge_1}-\eq{ev_eqn_gauge_7}, which are non-linear,   are determined
by the asymptotic behavior of the gauge functions on $\scri^-$.
In order to study the transport equations for radial derivatives of order $m\geq 1$, a simple, explicit form for the terms of $0$th-order on $I$
is beneficial. We will therefore fix the leading order term of the asymptotic expansion of the gauge functions
(the condition  below on the next-to-leading order term for $\nu_{\tau}$ corresponds to a restriction on the leading-order of the divergence $\theta^+$).
Guided by the  representation \eq{Mink_cyl} of the Minkowski spacetime  we will restrict attention henceforth to gauge functions of the following form, for which, indeed,  \eq{ev_eqn_gauge_1}-\eq{ev_eqn_gauge_7} can be  solved explicitly, which will be accomplished in Section~\ref{sec_connection_schouten_I}.

\begin{definition}
\label{dfn_weak_asympt_gauge}
We call a conformal Gauss gauge ``weakly asymptotically Minkowski-like'' if the gauge functions are of the following form,
\begin{align}
\nu_{ \tau}=\frac{1}{r}+\frac{ \Theta^{(1,2)}}{2}+ \mathfrak{O}(r)
\,, \enspace 
 \nu_{\mathring A}=\mathfrak{O}(r)\,, 
%\quad  g_{\tau\tau}|_{\scri^-} =-1\,, 
\enspace
\Theta^{(1)}=2r +\Theta^{(1,2)}r^2 +  \mathfrak{O}(r^3) 
\,,\enspace
 \kappa=-\frac{2}{r} + \mathfrak{O}(r)
\,,
\label{main_gauge0B} 
\\
\theta^-= \mathfrak{O}(r^3)\,,\enspace 
g_{\mathring A\mathring B}|_{I^-}=s_{\mathring A\mathring B}
%\,, \quad  \mathring\nabla^{\mathring A}v^{(2)}_{\mathring A}=0
\,, \enspace 
  f_{r}|_{\scri^-} =\frac{1}{r} + \mathfrak{O}(1)
\,, \quad  f_{\mathring A}|_{\scri^-}=\mathfrak{O}(r) 
\label{main_gauge0B2} 
%\,,
%\\
%e_{0*} = \partial_{\tau}
%\,,
%\quad
%e_{1*} = \partial_{\tau} + r\partial_r
%\,,
%\quad
%e_{A*}= \mathring e^{\mathring A}{}_A \partial_{\mathring A}
\,.
%\label{main_gaugeB}
\end{align}
\end{definition}
%\tim{gauge data enter on the same level as the non-trivial physical data.... radiation field}
\begin{remark}
{\rm
In Section~\ref{section_apriori} we have assumed that $\scri^+$ is located at $\{\tau=+1\}$, so that \eq{conf_fac_relation} holds,  in order to motivate
\eq{expansion_conf_factor}. Since we are mainly interested in the behavior of the fields near $I^-$  we do not include \eq{conf_fac_relation}
in this definition so that all gauge functions are independent.
}
\end{remark}

It turns out
 that connection and frame coefficients  on $I$ do not depend on the physical, non-gauge data, while their 1st-order radial derivatives (and the restriction to $I$ of the rescaled Weyl tensor) depend on mass and dual mass aspect. Only the 2nd-order ones (the 1st-order ones of the rescaled  Weyl tensor) depend on the radiation field (and the angular momentum). 
Since we know that  \eq{Mink_cyl} provides a smooth 
representation of Minkowski, it therefore seems reasonable to expect that \eq{main_gauge0B}-\eq{main_gauge0B2}  do not impose 
restrictions on the non-gauge data to produce a spacetime which admits a smooth critical set $I^-$.

For later reference, we also  add the following
\begin{definition}
\label{dfn_asympt_Minik_gauge}
We call a conformal Gauss gauge ``asymptotically Minkowski-like at each  order'' if the gauge functions are of the following form,
\begin{align}
\nu_{ \tau}=\frac{1}{r}+\mathfrak{O}(r^{\infty}) 
\,, \enspace 
 \nu_{\mathring A}=\mathfrak{O}(r^{\infty}) \,, 
%\quad  g_{\tau\tau}|_{\scri^-} =-1\,, 
\enspace
\Theta^{(1)}=2r +  \mathfrak{O}(r^{\infty}) 
\,,\enspace
 \kappa=-\frac{2}{r} +\mathfrak{O}(r^{\infty}) 
\,,
\label{main_gauge0Bs} 
\\
\theta^-= \mathfrak{O}(r^{\infty}) \,,\enspace 
g_{\mathring A\mathring B}|_{I^-}=s_{\mathring A\mathring B}
\,, \enspace  \mcD^{\mathring A}v^{(2)}_{\mathring A}=0
\,, \enspace 
  f_{r}|_{\scri^-} =\frac{1}{r} +\mathfrak{O}(r^{\infty}) 
\,, \quad  f_{\mathring A}|_{\scri^-}=\mathfrak{O}(r^{\infty}) 
\label{main_gauge0B2s} 
%\,,
%\\
%e_{0*} = \partial_{\tau}
%\,,
%\quad
%e_{1*} = \partial_{\tau} + r\partial_r
%\,,
%\quad
%e_{A*}= \mathring e^{\mathring A}{}_A \partial_{\mathring A}
\,,
%\label{main_gaugeB}
\end{align}
i.e.\ if the gauge functions have the same expansions at $I^-$ as in \eq{Mink_gauge}-\eq{Mink_gauge2}.
\end{definition}

We will use this gauge in Section~\ref{section_aMlcGg} to establish sufficient conditions for the non-appearance of logarithmic terms at the critical sets.

\section{Appearance of log terms: Approaching $I^-$ from~$\scri^-$}
\label{section3}

Our goal is as follows: We assume we have been given asymptotic initial data, which will be the radiation field  on  $\scri^-$
supplemented by certain ``integration functions'' at $I^-$ such as the (ADM) mass aspect, cf.\ Appendix~\ref{app_ADM}.
Then we solve the characteristic constraint equations to determine all the relevant data for the evolution equations,
and analyze the appearance of logarithmic terms at $I^-$.

A related problem for an ordinary (i.e.\ non-asymptotic) characteristic inital value problem with one initial surface going all the way to null infinity has been anaylzed in \cite{ChPaetz2, ttp3,  CCTW}. There it turns out that, in an appropriate gauge, if the constraint equations do not produce logarithmic terms, the solution will be smooth, in particular higher-order transverse derivatives
will not pick up log terms when approaching null infinity.

When approaching spatial infinity the situation turns out to be  completely different, as   logarithmic terms can appear 
in transverse derivatives  of  arbitrary  high orders with all  lower orders being  smooth.
We  thus need to take higher order transverse derivatives into account as well, and analyze their behavior when approaching $I^-$, which makes the problem  significantly harder to deal with.
We will do this  by determining expansions of all the relevant fields  on $\scri^-$  (and transverse derivatives thereof) when approaching $I^-$.
Later on, we will study the same issue when approaching $I^-$ from the cylinder~$I$.

\subsection{Solution of the asymptotic constraint equations}
\label{sec_solution_constr_gen}

We assume  a weakly  asympotically Minkowski-like conformal Gauss gauge \eq{main_gauge0B}-\eq{main_gauge0B2}.
The constraint equations in adapted null coordinates are listed in Appendix~\ref{app_charact_constraints}.%
\footnote{Alternatively, one could analyze the constraints directly in a conformal Gauss gauge and the associated frame. Since 
the constraint equations in adapted null coordinates have  been derived in \cite{ttp1},  the  coordinates are adapted to the geometry of $\scri^-$, and since we also have identified the remaining gauge degrees of freedom using coordinates,   it seems convenient to start with them and determine
the behavior in  the conformal Gauss gauge afterwards.}
Recall that the data $\Xi_{AB}$ need to be of the form \eq{apriori_Xi}.
Those constraint equations \eq{constraint1}-\eq{constraint12}  which do not involve the radiation field can be straightforwardly solved,
\begin{align}
 \Sigma =& 2r^2+  \Sigma^{(4)}r^4 +  \Sigma^{(5)}r^5  + \mathfrak{O}(r^6) 
\,,
\\
\theta^+ =&  \theta^{+(1)} r + \theta^{+(2)}r^2 + \mathfrak{O}(r^3)\,, \quad \text{where} \quad \theta^{+(1)}=2(\kappa^{(1)}+\Sigma^{(4)})
\,,
\\
g_{\mathring A\mathring B}|_{\scri^-} =& \Big(1 +\frac{1}{2}\theta^{+(1)}r^2+\frac{1}{3} \theta^{+(2)} r^3\Big)s_{\mathring A\mathring B}+ \mathfrak{O}(r^4)
\,,
\\
\not \hspace{-0.2em}R =& 2-\frac{1}{2}(\Delta_s + 2)\theta^{+(1)}r^2 
-\frac{1}{3} (\Delta_s+2)\theta^{+(2)}r^3
+ \mathfrak{O}(r^4)
\,,
\\
 L_{rr} |_{\scri^-}
=& -\frac{3}{2} \theta^{+(1)} +  \mathfrak{O}(r)
\,,
\\
 \xi_{\mathring A} 
=&   \mathring\nabla_{\mathring A} \Sigma^{(4)} r^2 +\mathring\nabla_{\mathring A} \Sigma^{(5)} r^3 +  \mathfrak{O}(r^4) 
\,,
\\
 L_{r\mathring A} |_{\scri^-}
 =& -\frac{1}{2}\mathring\nabla_{\mathring A}\theta^{+(1)} r-\frac{1}{2}\mathring\nabla_{\mathring A}\theta^{+(2)} r^2  +  \mathfrak{O}(r^3)
\,,
\\
   g^{\mathring A\mathring B} L_{\mathring A\mathring B } |_{\scri^-}=& 1-\frac{1}{4}(\Delta_s + 2)\theta^{+(1)}r^2 +   \mathfrak{O}(r^3) 
\,,
\\
 L_r{}^r |_{\scri^-}
 =& -\frac{1}{2}+  \frac{1}{4}\Big(\theta^{-(3)}  + \Delta_s\Sigma^{(4)}+\frac{1}{2}(\Delta_s -4)\theta^{+(1)}\Big)r^2
+\frac{1}{4} \Big(2\theta^{-(4)} -4 \theta^{+(2)} 
\nonumber
\\
&
+\Delta_s \Sigma^{(5)} 
-3g^{rr(3)}\theta^{+(1)}
+ \frac{1}{3} (\Delta_s+2)\theta^{+(2)}
\Big)r^3
+ \mathfrak{O}(r^4)
\,.
\end{align}
Here $(\cdot)^{(n)}$ denotes the $n$th-order expansion coefficient at $r=0$.
The values for $\Sigma^{(4)}$,  $\Sigma^{(5)}$ and  $\theta^{+(2)}$ are determined by $\nu_{\tau}$, $\Theta^{(1)}$ and  $\kappa$; the precise relation is irrelevant here.
Integration of \eq{LAB_constraint} and  \eq{LAr_constraint} yields
\begin{align}
( L_{\mathring A\mathring B})_{\mathrm{tf}}|_{\scri^-}
=&
\frac{1}{2}\Xi^{(1)}_{\mathring A\mathring B}
+\mathfrak{O}(r^2)
\,,
\\
 L_{\mathring A}{}^r|_{\scri^-} 
=&
\frac{1}{2}v^{(1)}_{\mathring A} r
 +
\mathfrak{O}(r^2)
\,.
\end{align}
Then we employ \eq{expression_d1A1B}-\eq{expression_d1A10} to obtain
\begin{align}
W_{r\mathring Ar\mathring B}|_{\scri^-} 
 =&\mathfrak{O}(r^{-1})
\,,
\\
W_{r\mathring Ar}{}^r|_{\scri^-}
=& -\frac{1}{4r^2} v^{(1)}_{\mathring A} 
+ \mathfrak{O}(1)
\,,
\end{align}
which implies that the frame component (recall \eq{frame_field1}-\eq{frame_field3})
\begin{equation}
W_{010A}-W_{011A}|_{\scri^-}=
 \nu^{\tau}e^{\mathring A}{}_AW_{r\mathring A r}{}^r+ (\nu^{\tau})^2\nu^{\mathring B}e^{\mathring A}{}_AW_{r\mathring A r\mathring B}
= -\frac{1}{4r} v^{(1)}_{ A} + \mathfrak{O}(r)
\end{equation}
is unbounded at $I^-$ whenever $v^{(1)}_{ A} \ne 0$.  We deduce the regularity condition
\begin{equation}
v^{(1)}_{\mathring A} =0 \quad \Longleftrightarrow \quad \Xi^{(1)}_{\mathring A\mathring B} =0
\,.
\label{boundedness_Weyl}
\end{equation}
In the analysis the \emph{radiation field} $W_{r\mathring A r\mathring B}$ plays a distinguished role; in turns out the the expressions below take the most
compact form when expressed in term of $W_{r\mathring A r\mathring B}$ rather than $\Xi_{\mathring A\mathring B}$, which, tough,
does not comprise the integration functions $\Xi^{(1)}_{\mathring A\mathring B}$ and $\Xi^{(2)}_{\mathring A\mathring B}$.
It is convenient to make the following definitions,
\begin{equation}
%Q_{\mathring A\mathring B} :=W_{r\mathring A r\mathring B}|_{\scri^-} \,, \quad 
w_{\mathring A} := \rnabla^{\mathring B}W_{r\mathring A r\mathring B}|_{\scri^-}\,, \quad 
w^{(n)}_{\mathring A} := \mcD^{\mathring B}W^{(n)}_{r\mathring A r\mathring B}|_{\scri^-}
\,.
\end{equation}
Recall
 that $v^{(n)}_A =\mcD_A \ul v^{(n)} + \epsilon_A{}^B\mcD_B\ol v^{(n)}$, Definition~\ref{definig_eqn_N}, and that $\ul v^{(2)}$ may be regarded as a gauge function.
From  \eq{LAB_constraint},  \eq{LAr_constraint}-\eq{expression_d01AB}
we  obtain ($\epsilon_{\mathring A\mathring B}$ denotes the volume form of the round sphere)
\begin{align}
( L_{\mathring A\mathring B})_{\mathrm{tf}}|_{\scri^-}
=&
-\frac{1}{2}\Big( \Xi^{(3)}_{\mathring A\mathring B}-(\mcD_{\mathring A}\mcD_{\mathring B}\Sigma^{(4)})_\mathrm{tf}\Big)r^2
\nonumber
\\
&
-\frac{1}{2}\Big( 2\Xi^{(4)}_{\mathring A\mathring B}-\Sigma^{(4)}\Xi^{(2)}_{\mathring A\mathring B}
-(\mcD_{\mathring A}\mcD_{\mathring B}\Sigma^{(5)})_\mathrm{tf}
\Big)r^3
 + \mathfrak{O}(r^4)
\,,
\\
 L_{\mathring A}{}^r|_{\scri^-} 
=&
 \frac{1}{2}v^{(2)}_{\mathring A} r^2
+ \frac{1}{2}  v^{(3)}_{\mathring A}r^3 + \frac{1}{4}\mcD_{\mathring A}\Big(\theta^{-(3)} - \theta^{+(1)}  \Big)r^3
+\frac{1}{4} \Big(2  v^{(4)}_{\mathring A} +\mcD_{\mathring A}(\theta^{-(4)}-\theta^{+(2)}) 
\nonumber
\\
&
-\mcD^{\mathring B}\Sigma^{(4)}\Xi^{(2)}_{\mathring A\mathring B}
-\theta^{+(1)}v^{(2)}_{\mathring A}
-g^{rr(3)}\mcD_{\mathring A}\theta^{(+)1}\Big)r^4
+\mathfrak{O}(r^5)
\,,
\\
W_{r\mathring A r\mathring B}|_{\scri^-} 
 =&-\frac{1}{2} \Xi^{(3)}_{\mathring A\mathring B}r^{-1}
+\frac{1}{4}\Big(\mcD_{\mathring A}\mcD_{\mathring B}(\theta^{+(1)} +2\Sigma^{(4)})\Big)_{\mathrm{tf}}r^{-1}-\frac{3}{2} \Xi^{(4)}_{\mathring A\mathring B}
\nonumber
\\
&
+\frac{1}{4}\Big(\mcD_{\mathring A}\mcD_{\mathring B}(\theta^{+(2)} +3\Sigma^{(5)})\Big)_{\mathrm{tf}}
+\frac{3}{8}(\theta^{+(1)} + 2\Sigma^{(4)})\Xi^{(2)}_{\mathring A\mathring B}
+\mathfrak{O}(r)
\,,
\\
w_{\mathring A}
=&-\frac{1}{2}\Big(  v^{(3)}_{\mathring A} - \frac{1}{4}\mcD_{\mathring A}(\Delta_s + 2)(  2\Sigma^{(4)} +\theta^{+(1)} )\Big)r^{-1}
-\frac{3}{8}\Big[4  v^{(4)}_{\mathring A} 
\nonumber
\\
&
-\mcD^{\mathring B}\Big((2\Sigma^{(4)}+ \theta^{+(1)})\Xi^{(2)}_{\mathring A\mathring B}\Big)
 - \frac{1}{3}\mcD_{\mathring A} \Big( (\Delta_s+2)(\theta^{+(2)}+3 \Sigma^{(5)} )
\Big]
+\mathfrak{O}(r)
\,,
\\
W_{r\mathring Ar}{}^r|_{\scri^-}
%=&\underbrace{   \frac{1}{4} v^{(3)}_{\mathring A} - \frac{1}{16}\mathring\nabla_{\mathring A}(\Delta_s + 2)(  2\Sigma^{(4)} +\theta^{+(1)} )}_{=-\frac{1}{2}q^{(-1)}_{\mathring A}}
%\\
%&
%+\frac{1}{4}\Big[2  v^{(4)}_{\mathring A} 
%-\frac{1}{2}\mathring\nabla^{\mathring B}\Big((2\Sigma^{(4)}+ \theta^{+(1)})\Xi^{(2)}_{AB}\Big)
 %- \frac{1}{6}\rnabla_{A} \Big( (\Delta_s+2)(\theta^{+(2)}+3 \Sigma^{(5)} )
%\Big]r
%+\mathfrak{O}(r^2)
%\\
=&-\frac{1}{2}w^{(-1)}_{\mathring A} -\frac{1}{3}w^{(0)}_{\mathring A}r
+\mathfrak{O}(r^2)
\;,
\\
W_{\mathring A\mathring Br}{}^r|_{\scri^-}
% =& 
%\frac{1}{2}\rnabla_{[\mathring A} v^{(2)}_{\mathring B]}  +\underbrace{\frac{1}{2}\rnabla_{[\mathring A} v^{(3)}_{\mathring B]}  r}_{=-\mathring\nabla_{[\mathring A}q^{(-1)}_{\mathring B]}}
%+\mathfrak{O}(r^2)
%\\
 =& 
-2N \epsilon_{\mathring A\mathring B}  -\mcD_{[\mathring A}w^{(-1)}_{\mathring B]}r -\Big(\frac{1}{3}\mcD_{[\mathring A}w^{(0)}_{\mathring B]}
  +\frac{1}{2}\Xi^{(2)}_{[\mathring A}{}^{\mathring C} W^{(-1)}_{\mathring B]r\mathring Cr} -\frac{1}{4}N\theta^{+(1)}  \epsilon_{\mathring A\mathring B}\Big) r^2 +\mathfrak{O}(r^3)
\;.
\end{align}
%
%\begin{equation}
%\epsilon_{\mathring A\mathring B} 
%=\Big(1+\frac{1}{2} \theta^{+(1)}  r^2+\mathfrak{O}(r^3)\Big)\mathring\epsilon_{\mathring A\mathring B}
%\end{equation}
%
%\begin{eqnarray*}
%W_{ABr}{}^r|_{\scri^-} &=& 
%\frac{1}{2} \rnabla_{[A}v^{(2)}_{B]}
%+\frac{1}{8}\tau \rnabla_{[A}v^{(2)}_{B]}
%-\tau\frac{1}{4}\rnabla_{[A}  v^{(2)}_{B]}
%\end{eqnarray*}
%

The ODE \eq{adm_ode}  for $W_r{}^r{}_r{}^r$ ,
\begin{equation}
 \Big(\partial_r+\frac{3}{2} \theta^{+(1)}r+ \mathfrak{O}(r^2)\Big)W_r{}^r{}_r{}^r |_{\scri^-}
=
\frac{1}{2}\mcD^{\mathring A}w^{(-1)}_{\mathring A}
+\frac{1}{3} \mcD^{\mathring A}w^{(0)}_{\mathring A}r
+\frac{1}{2}\Xi^{(2)\mathring A\mathring B}W^{(-1)}_{r\mathring Ar\mathring B}r
+ \mathfrak{O}(r^2)
\,,
\end{equation}
does not produce log-terms. Its solution is of the form
%\tim{$\kappa^{(1)}$ vs $\theta^{+(1)}$}
%
\begin{equation}
W_r{}^r{}_r{}^r |_{\scri^-}
=
2M+ \frac{1}{2}\mcD^{\mathring A}w^{(-1)}_{\mathring A}r
+\frac{1}{2}\Big(\frac{1}{3} \mcD^{\mathring A}w^{(0)}_{\mathring A}
+\frac{1}{2}\Xi^{(2)\mathring A\mathring B}W^{(-1)}_{r\mathring Ar\mathring B}
-\frac{3}{2}M\theta^{+(1)}\Big)r^2
+ \mathfrak{O}(r^3)
\,.
\end{equation}
As explained in Appendix~\ref{app_alternative_data}, the integration function $M$ -- as the ones which appear below --
may be regarded as part of the freely prescribable initial data.
The function  $M$ can be identified with the ADM mass aspect, or rather the limit of the Bondi mass aspect at $I^-$.
There are results \cite{kn2} which show that for a certain class of data  it is this limit, whence  we will call it \emph{(ADM) mass aspect}.

We consider the ODE \eq{adm2_ode}
for $W_{\mathring A}{}^r{}_{r}{}^{r}$,
\begin{equation}
 \Big(\partial_{r}   -\frac{2}{r}+\mathfrak{O}(r)\Big) W_{\mathring A}{}^r{}_{r}{}^{r} |_{\scri^-}
=\mcD_{\mathring A}M+\epsilon_{\mathring A}{}^{\mathring B}\mcD_{\mathring B}N
+\frac{1}{2} \mcD_{\mathring A}\mcD^{\mathring B}w^{(-1)}_{\mathring B}r
-\frac{1}{4}(\Delta_s -1)w^{(-1)}_{\mathring A} r
 +\mathfrak{O}(r^2)
\,.
\end{equation}
The term of order $r$ on the right-hand side produces log-terms. We therefore need to impose the  \emph{no-logs-condition}
\begin{equation}
( \Delta_s - 1)w^{(-1)}_{\mathring A}
- 2\mcD_{\mathring A}\mcD^{\mathring B}w^{(-1)}_{\mathring B}
=0
\,.
\label{no-logs_first}
\end{equation}
Again, we use Hodge decomposition,
\begin{equation}
w^{(n)}_A=\mcD_A\ul w^{(n)} + \epsilon_A{}^B\mcD_B\ol w^{(n)}
\,.
\end{equation}
By taking the divergence and curl of \eq{no-logs_first} it follows that 
\begin{eqnarray}
\Delta_s \Delta_s \ul w^{(-1)} =0 &\Longrightarrow & \Delta_s \ul w^{(-1)} =\mathrm{const.}
\\
\Delta_s \Delta_s \ol w^{(-1)}  = 0  &\Longrightarrow &\Delta_s \ol w^{(-1)} =\mathrm{const.}
\end{eqnarray}
According to Gauss' theorem, on $S^2$ , a solution $\ul w^{(-1)}$ and $\ol w^{(-1)}$, respectively, exists  if and  only if the corresponding constant in the equation vanishes.
In that case  $\ul w^{(-1)} $ and $\ol w^{(-1)} $ need to be constant
and the  no-logs condition \eq{no-logs_first}  becomes
\begin{equation}
w^{(-1)}_{ \mathring A}=0 \quad \Longleftrightarrow \quad 
W^{(-1)}_{r\mathring  Ar\mathring B}=0
\,.
\label{no-logs_condition}
\end{equation}
At this stage it seems remarkable that all  gauge functions, which in principle provide contributions to this order,
 cancel out. In particular they cannot be employed to fulfill the no-logs conditions, at least not at this order
(of course, in principle  it is conceivable that the gauge functions of this order can be used to get rid of log terms which appear in higher orders).
We will return to this observation later on and in particular in Section~\ref{sec_gauge_ind}.

Assuming that \eq{no-logs_condition} holds, the ODE for $W_{\mathring A}{}^r{}_{r}{}^{r}$ takes the form
\begin{align*}
&\hspace{-2em} \Big(\partial_{r}   -2/r+(\kappa^{(1)}+\frac{1}{2}\theta^{+(1)})r+\mathfrak{O}(r^2)\Big) W_{\mathring A}{}^r{}_{r}{}^{r} |_{\scri^-}
\\
=&  
\mathfrak{M}_{\mathring A}
+\Big(-\frac{1}{12}(\Delta_s -3)w^{(0)}_{\mathring A} 
+\frac{1}{6}  \mcD_{\mathring A} \mcD^{\mathring B}w^{(0)}_{\mathring B}
-\frac{3}{4}\mcD_{\mathring A}(M\theta^{+(1)})
  -\frac{3}{4}  \mcD^{B}(N\theta^{+(1)})  \epsilon_{\mathring A\mathring B}
\\
&
-\frac{3}{2}\mcD_{\mathring A}\Sigma^{(4)} M
-\frac{3}{2}N\mcD^B\Sigma^{(4)}\epsilon_{\mathring A\mathring B}\Big)r^2
+\mathfrak{O}(r^3)
\,,
\end{align*}
where we have set 
$$\mathfrak{M}_{\mathring A}:=\mcD_{\mathring A}M+  \epsilon_{\mathring A\mathring B}\mcD^{\mathring B}N\,,
$$
whence, for some integration function $\mathring L_{\mathring A}$,
\begin{align*}
W_{\mathring A}{}^r{}_{r}{}^{r}|_{\scri^-}= &
-\mathfrak{M}_{\mathring A} r+\mathring L_{\mathring A}r^2
+\Big(
(\kappa^{(1)}-\frac{1}{4}\theta^{+(1)}) \mathfrak{M}_{\mathring A} 
-\frac{1}{12}(\Delta_s -3)w^{(0)}_{\mathring A} 
+\frac{1}{6}  \mcD_{\mathring A} \mcD^{\mathring B}w^{(0)}_{\mathring B}
\\
&
-\frac{3}{2}M\mcD_{\mathring A}(\theta^{+(1)}-\kappa^{(1)})
-\frac{3}{2}N\mcD^{\mathring B}(\theta^{+(1)}-\kappa^{(1)})\epsilon_{\mathring A\mathring B}
\Big)r^3
+\mathfrak{O}(r^4)
\,.
\end{align*}
Finally, we consider the ODE \eq{ODE_Wfinal}  which determines $( W_{\mathring A}{}^r{}_{\mathring B}{}^{r} )_{\mathrm{tf}}$,
\begin{align}
& \hspace{-2em}\Big(\partial_{r}-4/r +(2\kappa^{(1)} -\frac{1}{2}\theta^{+(1)}) r+\mathfrak{O}(r^2) \Big) ( W_{\mathring A}{}^r{}_{\mathring B}{}^{r} )_{\mathrm{tf}} |_{\scri^-}
\nonumber
\\
=& 
-(\mcD_{(\mathring A} \mathfrak{M}_{\mathring B)} )_{\mathrm{tf}}   r+\Big((\mcD_{(\mathring A}\mathring L_{\mathring B)} )_{\mathrm{tf}}  
+\frac{3}{2}M\Xi^{(2)}_{\mathring A\mathring B}
+\frac{3}{2}N\Xi^{(2)}_{(\mathring A}{}^{\mathring C}\epsilon_{\mathring B)\mathring C}\Big) r^2
\nonumber
\\
&
+\Big(
(\kappa^{(1)}-\frac{1}{4}\theta^{+(1)})  \mcD_{(\mathring A}\mathfrak{M}_{\mathring B)} 
-\frac{1}{12}\mcD_{(\mathring A}(\Delta_s -1)w^{(0)}_{\mathring B)} 
+\frac{1}{6} \mcD_{\mathring A}  \mcD_{\mathring B}\mcD^{\mathring C}w^{(0)}_{\mathring C}
\Big)_{\mathrm{tf}}r^3
+\mathfrak{O}(r^4)
\,.
\end{align}
The solution contains logarithmic term unless the following  \emph{no-logs-condition} holds,
\begin{equation}
\mcD_{(\mathring A}(\Delta_s -1)w^{(0)}_{\mathring B)} 
=2 \mcD_{\mathring A}  \mcD_{\mathring B}\mcD^{\mathring C}w^{(0)}_{\mathring C}
\,.
\label{no-logs_first2}
\end{equation}
Then, the solution takes the form (for some integration function $\mathring c^{(2,0)}_{\mathring A\mathring B}$)
\begin{equation}
 ( W_{\mathring A}{}^r{}_{\mathring B}{}^{r} )_{\mathrm{tf}} |_{\scri^-} =\frac{1}{2}(\mcD_{(\mathring A} \mathfrak{M}_{\mathring B)} )_{\mathrm{tf}}r^2
- \Big((\mcD_{(\mathring A} \mathring L_{\mathring B)} )_{\mathrm{tf}}  
+\frac{3}{2}M\Xi^{(2)}_{\mathring A\mathring B}
+\frac{3}{2}N\Xi^{(2)}_{(\mathring A}{}^{\mathring C}\epsilon_{\mathring B)\mathring C}\Big) r^3
+ \mathring c^{(2,0)}_{\mathring A\mathring B}r^4
+\mathfrak{O}(r^5)
\,.
\end{equation}

To analyze  \eq{no-logs_first2}
we decompose, as above,  $w^{(0)}_{ A}$ as $w^{(0)}_{ A} =\mcD_{ A} \ul w^{(0)} +  \epsilon_{ A}{}^{ B}\mcD_B\ol w^{(0)}$
and  apply $\mcD^{ A}\mcD^{ B}$,
\begin{equation*}
 \Delta_s\Delta_s(\Delta_s+2)\ul w^{(0)} =0
\,.
\end{equation*}
It  follows that $\ul w^{(0)}$ needs to be a linear combination of $\ell=0,1$-spherical harmonics.
Next we apply $\epsilon^{ A C}\mcD_{ C}\mcD^{ B}$ to
\eq{no-logs_first2} to obtain an identical equation for $\ol w^{(0)}$, $ \Delta_s \Delta_s(\Delta_s+2)\ol w^{(0)}=0$.
Consequently, $\ol w^{(0)}$ can be represented  by $\ell=0,1$-spherical harmonics as well.
Equivalently, $w^{(0)}_A$ is a conformal Killing 1-form on $\mathbb{S}^2$.

However, recall that $w^{(0)}_A$ was defined as the divergence of a symmetric trace-free tensor, $w^{(0)}_A=\mcD^Bw^{(0)}_{AB}$, whence, on $\mathbb{S} ^2$,
$\ul w^{(0)}$ and $\ol w^{(0)}$ are not allowed to have $\ell=1$-spherical harmonics in their decomposition, cf.\ 
\eq{tensor_decomposition2}.
The no-logs condition \eq{no-logs_first2} thus requires that also this expansion coefficient of the radiation field needs to vanish, again regardless of the choice of the gauge functions,
\begin{equation}
W^{(0)}_{r\mathring Ar\mathring B} =0
\,.
\label{no-logs_condition2}
\end{equation}

Altogether, assuming that the no-logs conditions \eq{no-logs_condition} and \eq{no-logs_condition2} as well as the  boundedness-condition \eq{boundedness_Weyl} hold, the restriction of the rescaled Weyl tensor to $\scri^-$  extends smoothly across $I^-$
and admits there an expansion  of the form,
\begin{align}
W_{r\mathring A r\mathring B}|_{\scri^-} 
 =& \mathfrak{O}(r)
\,,
\label{Weyl_expansion1}
\\
W_{r\mathring Ar}{}^r|_{\scri^-}
=&\mathfrak{O}(r^2)
\;,
\\
W_{\mathring A\mathring Br}{}^r|_{\scri^-}
 =& 
-2N \epsilon_{\mathring A\mathring B} +\frac{1}{2}N\theta^{+(1)}  \epsilon_{\mathring A\mathring B} r^2 +\mathfrak{O}(r^3)
\,,
\\
W_r{}^r{}_r{}^r |_{\scri^-}
=&
2M
-\frac{3}{2}M\theta^{+(1)}r^2
+ \mathfrak{O}(r^3)
\,,
\\
W_{\mathring A}{}^r{}_{r}{}^{r}|_{\scri^-}= &
-\mathfrak{M}_{\mathring A} r+\mathring L_{\mathring A}r^2
+\Big((\kappa^{(1)}-\frac{1}{4}\theta^{+(1)}) \mathfrak{M}_{\mathring A}
-\frac{3}{2}M\mcD_{\mathring A}(\theta^{+(1)}-\kappa^{(1)})
\nonumber
\\
&
-\frac{3}{2}N\mcD^{\mathring B}(\theta^{+(1)}-\kappa^{(1)})\epsilon_{\mathring A\mathring B}
\Big)r^3
+\mathfrak{O}(r^4)
\\
 ( W_{\mathring A}{}^r{}_{\mathring B}{}^{r} )_{\mathrm{tf}} |_{\scri^-} =&\frac{1}{2}(\mcD_{(\mathring A} \mathfrak{M}_{\mathring B)} )_{\mathrm{tf}}r^2
- \Big((\mcD_{(\mathring A} \mathring L_{\mathring B)} )_{\mathrm{tf}}  
+\frac{3}{2}M\Xi^{(2)}_{\mathring A\mathring B}
+\frac{3}{2}N\Xi^{(2)}_{(\mathring A}{}^{\mathring C}\epsilon_{\mathring B)\mathring C}\Big) r^3
+\mathring c^{(2,0)}_{\mathring A\mathring B}r^4
+\mathfrak{O}(r^5)
\;.
\label{Weyl_expansion6}
\end{align}

We have proved the following 
\begin{proposition}
\label{prop_smoothness_constraints}
Consider  asymptotic initial data $(W_{r\mathring Ar\mathring B}, \Xi^{(1)}_{\mathring A\mathring B}, M, N, \mathring L_{\mathring A}, \mathring c^{(2,0)}_{\mathring A\mathring B})$ for the GCFE on $\scri^-$ in a weakly asymptotically Minkowski-like conformal Gauss gauge
\eq{main_gauge0B}-\eq{main_gauge0B2}.%
\footnote{
To have a well-posed initial value problem the data on $\scri^-$ need to be supplemented by appropriate data on e.g.\ an incoming null hypersurface, cf.\ 
Appendix~\ref{app_ADM}.
}
Then the solution of the vacuum constraint equations for the GCFE  on $\scri^-$ admits a smooth expansion through $I^-$ if and only if the  boundedness condition  \eq{boundedness_Weyl} holds,  i.e.\  $\Xi^{(1)}_{\mathring A\mathring B}=0$, and
the data $W_{r\mathring Ar\mathring B}$  admit an expansion of the form $W_{r\mathring Ar\mathring B}=\mathfrak{O}(r)$ at $I^-$, i.e.\ if and only if the two leading order terms of the radiation field vanish.%
\footnote{
An $r^{-2}$-term would yield an unbounded frame component.
}
 In that case the expansion of  the  rescaled Weyl tensor
takes the form \eq{Weyl_expansion1}-\eq{Weyl_expansion6}.
 In the frame \eq{frame_field1}-\eq{frame_field3} its expansion is given by \eq{data_gen1}-\eq{data_gen6} below.
\end{proposition}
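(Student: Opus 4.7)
The plan is to integrate the characteristic constraint equations of Appendix~\ref{app_charact_constraints} hierarchically in powers of $r$ and to track at each order whether the ODEs force a logarithmic term or an unbounded frame component. Since the constraints split into two classes -- those not involving the radiation field (for $\Sigma$, $\theta^+$, $g_{\mathring A\mathring B}|_{\scri^-}$, $\xi_{\mathring A}$, $\not\hspace{-0.2em}R$, and the $L_{\mu\nu}$-components that do not depend on $\Xi_{\mathring A\mathring B}$), and those involving $\Xi_{\mathring A\mathring B}$ and the resulting Weyl tensor components -- I would begin by observing that the first class can be solved smoothly for any weakly asymptotically Minkowski-like gauge, yielding the expansions already written in the text.

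Next I would turn to the ODEs for the components of the rescaled Weyl tensor. The frame expressions \eqref{frame_field1}--\eqref{frame_field3} show that any negative-order term in $W_{r\mathring A r}{}^{r}$ produces an unbounded frame component $W_{010A}-W_{011A}$; inspection of \eqref{LAr_constraint} shows that such a term appears exactly proportional to $v^{(1)}_{\mathring A}$, or equivalently to $\Xi^{(1)}_{\mathring A\mathring B}$. This gives the boundedness condition $\Xi^{(1)}_{\mathring A\mathring B}=0$ as the unique obstruction at this order. Under this condition one would then integrate \eqref{adm_ode} for $W_r{}^r{}_r{}^r$, which I expect (and the computation confirms) to produce no logarithm thanks to a fortuitous integer offset between the coefficient in front of the $\partial_r$-operator and the source terms. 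The first genuine no-logs obstruction emerges in the ODE \eqref{adm2_ode} for $W_{\mathring A}{}^r{}_r{}^r$ at order $r$, yielding the condition \eqref{no-logs_first} on $w^{(-1)}_{\mathring A}$; by Hodge decomposition on $\mathbb{S}^2$, one applies the divergence and curl to reduce it to $\Delta_s\Delta_s\underline{w}^{(-1)}=0=\Delta_s\Delta_s\overline{w}^{(-1)}$, and uses Gauss' theorem together with the fact that $\underline{w}^{(-1)},\overline{w}^{(-1)}$ arise as divergence potentials (so their $\ell=0$ modes vanish by \eqref{tensor_decomposition2}) to conclude $W^{(-1)}_{r\mathring A r\mathring B}=0$. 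An entirely analogous argument would then be applied to \eqref{ODE_Wfinal} at the next order to extract \eqref{no-logs_first2} and, via the same Hodge--Gauss argument exploiting \eqref{tensor_decomposition2}, to derive $W^{(0)}_{r\mathring A r\mathring B}=0$.

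Sufficiency then follows by integrating every constraint ODE under these vanishing conditions: once the problematic source terms are absent, each ODE is of the form $(\partial_r - k/r + \mathfrak{O}(r))f = \text{smooth source}$ for integer $k$, and standard matching of Taylor coefficients produces smooth solutions with the integration functions $M$, $N$, $\mathring L_{\mathring A}$, $\mathring c^{(2,0)}_{\mathring A\mathring B}$ appearing at the orders where the left-hand side operator has a kernel. Collecting the resulting coefficients yields the expansions \eqref{Weyl_expansion1}--\eqref{Weyl_expansion6}, and translating to the frame \eqref{frame_field1}--\eqref{frame_field3} gives the formulas \eqref{data_gen1}--\eqref{data_gen6}.

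The main subtlety I anticipate is verifying that no gauge choice compatible with Definition~\ref{dfn_weak_asympt_gauge} can absorb the obstructions: at the critical orders one must track the contributions of $\Theta^{(1,2)}$, $\kappa^{(1)}$, $\Sigma^{(4)}$, $\Sigma^{(5)}$, $\theta^{+(1)}$, $\theta^{+(2)}$ through the tower of constraint equations and check that they cancel in the combinations appearing in \eqref{no-logs_first} and \eqref{no-logs_first2}. This is a bookkeeping exercise but the cancellations are nontrivial; the fact that they occur is precisely what makes the boundedness/no-logs conditions gauge-invariant statements about the radiation field $W_{r\mathring Ar\mathring B}$ and the free datum $\Xi^{(1)}_{\mathring A\mathring B}$, rather than artefacts of Definition~\ref{dfn_weak_asympt_gauge}. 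This gauge-invariance foreshadows the argument of Section~\ref{sec_gauge_ind} and should be flagged explicitly in the proof.
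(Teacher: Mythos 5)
Your proposal follows essentially the same route as the paper's own derivation in Section~\ref{sec_solution_constr_gen}: solve the gauge-only constraints first, read off the boundedness condition from the $r^{-2}$-term in $W_{r\mathring A r}{}^r$, extract the two no-logs conditions from the ODEs \eq{adm2_ode} and \eq{ODE_Wfinal}, and reduce them via Hodge decomposition, Gauss' theorem, and the exclusion of $\ell=0,1$-harmonics from divergences of symmetric trace-free tensors (\eq{tensor_decomposition2}) to $W^{(-1)}_{r\mathring Ar\mathring B}=W^{(0)}_{r\mathring Ar\mathring B}=0$. The only minor imprecision is that for the first no-logs condition the paper needs only Gauss' theorem (forcing $\ul w^{(-1)},\ol w^{(-1)}$ to be constant, hence $w^{(-1)}_{\mathring A}=0$), whereas the $\ell=0,1$-exclusion argument is what handles the kernel of $\Delta_s\Delta_s(\Delta_s+2)$ in the second condition; otherwise your account, including the observation that the gauge functions cancel at the critical orders, matches the paper.
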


%\begin{remark}
%{\rm
%Since a non-vanishing $ \Xi^{(1)}_{\mathring A\mathring B}$ yields a rescaled Weyl tensor which is unbounded at $I^-$ we will henceforth only take\
%initial data sets  into account where  $ \Xi^{(1)}_{\mathring A\mathring B}=0$.
%}
%\end{remark}

\subsubsection{Frame coefficients}
\label{sec_fram_coeff}

So far we have solved the asymptotic constraint equations on $\scri^-$ in adapted null coordinates.
Now we want to compute the corresponding frame coefficients 
associated to the frame  \eq{frame_field1}-\eq{frame_field3} in our current  weakly asymptotically Minkowski-like conformal Gauss gauge, where
\begin{align}
e^{\tau}{}_{1}|_{\scri^-} =1
\,,
\quad
e^{r}{}_{1}|_{\scri^-} =\nu^{\tau}=r + \mathfrak{O}(r^2)
\,,
\quad
e^{\mathring A}{}_{1}|_{\scri^-} =0
\,,
\label{frame_scri_1}
\\
e^{\tau}{}_{A}|_{\scri^-}  = 0
\,,
\quad
e^{r}{}_{A}|_{\scri^-}  =   \mathfrak{O}(r^2)
\,,
\quad
e^{\mathring A}{}_{A}|_{\scri^-}  =   \mathring e^{\mathring A}{}_A+ \mathfrak{O}(r^2)
\,.
\label{frame_scri_2}
\end{align}
It follows from \eq{Weyl_connect1_1}-\eq{Weyl_connect1_5} that
\begin{align}
\widehat\Gamma_1{}^1{}_1 |_{\scri^-}  =&1 + f_1^{(1)} r + \mathfrak{O}(r^2)
\,,
\label{connection_scri_1}
\\
\widehat\Gamma_1{}^1{}_0 |_{\scri^-}  =&- f_1^{(1)} r +  \mathfrak{O}(r^2)
\,,
\\
\widehat\Gamma_1{}^A{}_0 |_{\scri^-}  =&- f^{A(1)} r +  \mathfrak{O}(r^2)
\,,
\\
\widehat\Gamma_1{}^A{}_1|_{\scri^-}  =&
\widehat\Gamma_1{}^A{}_0
\,,
\\
(\widehat\Gamma_1{}^A{}_B)_{\mathrm{tf}} |_{\scri^-}  =&
0
\,,
\end{align}
while \eq{Weyl_connectA_1}-\eq{Weyl_connectA_5} give (with $\nu_A^{(n)}:=\mathring e^{\mathring A}{}_A\nu_{\mathring A}^{(n)}$),
\begin{align}
 \widehat\Gamma_A{}^1{}_1   |_{\scri^-}=&f_A^{(1)} r + \mathfrak{O}(r^2)
\,,
\\
 \widehat\Gamma_A{}^1{}_0   |_{\scri^-}=&\Big(\mcD_A\nu_{\tau}^{(0)}-\nu_A^{(1)}\Big)r+ \mathfrak{O}(r^2)
\,,
\\
 \widehat\Gamma_A{}^B{}_0|_{\scri^-}  =& 
 \frac{1}{2}\Xi^{(2)}_{ A}{}^{ B}r+ \mathfrak{O}(r^2)
\,,
\\
\widehat\Gamma_A{}^B{}_1   |_{\scri^-}=& 
 \widehat\Gamma_A{}^B{}_0
+\Big(1+  f_1^{(1)}r + \mathfrak{O}(r^2)\Big)\delta^{B}{}_A
\,,
\\
\widehat\Gamma_A{}^C{}_B  |_{\scri^-} =&    \mathring\Gamma_A{}^C{}_B  + \Big(2\delta^C{}_{(A}f^{(1)}_{B)} - \eta_{AB} f^{(1)C}\Big)r
+ \mathfrak{O}(r^2)
\,.
\label{connection_scri_10}
\end{align}
From Section~\ref{sec_schouten_gen} we obtain
\begin{align}
\widehat L_{ 1 j}|_{\scri^-} 
=&  \mathfrak{O}(r^2)
\,,
\label{Schouten_scri1}
\\
\widehat L_{ A1}|_{\scri^-} =& \Big( \frac{1}{2}v_{ A}^{(2)} +f^{(1)}_{ A}-\mcD_{ A} f^{(0)}_r -\frac{1}{2}\mcD_{ A} g^{(3)rr}\Big)r
+ \mathfrak{O}(r^2)
\,,
\\
\widehat L_{ AB}|_{\scri^-} 
=&
\Big(-\mcD_{ A}f^{(1)}_{ B} -\frac{1}{2}\Xi^{(2)}_{ A B}-( f^{(0)}_{r} +\frac{1}{2} g^{(3)rr})\eta_{ A B}\Big)r
+ \mathfrak{O}(r^2)
\,,
\\
\widehat L_{ A 0}|_{\scri^-} 
=&\frac{1}{2}\Big(v_{ A}^{(2)} -\mcD_{ A} g^{(3)rr}    \Big)r
+ \mathfrak{O}(r^2)
\,.
\label{Schouten_scri4}
\end{align}
%%
%\begin{align*}
%\widehat L_{rr}|_{\scri^-} =& \mathfrak{O}(1)
%\,,
%\\
%\widehat L_{r\mathring A} |_{\scri^-}=&  \mathfrak{O}(r)
%\,,
%\\
%\widehat L_{\mathring A r}|_{\scri^-} =& f^{(1)}_{\mathring A}  -\mathring\nabla_{\mathring A} f^{(0)}_{r}  + \mathfrak{O}(r)
%\,,
%\\
%\widehat L_{r}{}^r|_{\scri^-} =&\mathfrak{O}(r^2)
%\,,
%\\
%g^{\mathring A\mathring B}\widehat L_{\mathring A\mathring B}|_{\scri^-} =& -\mathring\nabla_{\mathring A}f^{\mathring A(1)} r
%- 2f^{(0)}_{r} r
%- g^{(3)rr}r+  \mathfrak{O}(r^2)
%\,,
%\\
%(\widehat L_{\mathring A\mathring B})_{\mathrm{tf}}|_{\scri^-} =&  -(\mathring\nabla_{\mathring A}f^{(1)}_{\mathring B} )_{\mathrm{tf}}r
%-\frac{1}{2}\Xi^{(2)}_{\mathring A\mathring B}r+  \mathfrak{O}(r^2)
%\,,
%\\
%\widehat L_{\mathring A}{}^r|_{\scri^-} =& \Big( \frac{1}{2}v_{\mathring A}^{(2)} +f^{(1)}_{\mathring A}-\mathring\nabla_{\mathring A} f^{(0)}_r -\frac{1}{2}%\mathring\nabla_{\mathring A} g^{(3)rr}\Big)r^2
%+  \mathfrak{O}(r^3)
%\,.
%\end{align*}
%
For the rescaled Weyl tensor we find, redefining the integration functions $\mathring L_A$ and $\mathring c^{(2,0)}_{AB}$ (now denoted without $\mathring{\enspace}$),
%
%\begin{eqnarray*}
%W_{0101}|_{\scri^-}&=& %(\nu^{\tau})^2W_{\tau r \tau r }= 
%W_r{}^r{}_r{}^r +(\nu^{\tau})^2 \nu^{\mathring A}\nu^{\mathring B}W_{\mathring A r \mathring B r }- 2\nu^{\tau}\nu^{\mathring A}W_{\mathring A r  r}{}^r
%\\
%W_{01AB}|_{\scri^-} &=& -\re^{\mathring A}{}_A\re^{\mathring B}{}_BW_{ \mathring A\mathring Br}{}^r
%+2 (\nu^{\tau})^2\nu^{\mathring C} \re^{\mathring A}{}_A\re^{\mathring B}{}_B\nu_{[\mathring A}W_{ \mathring B] r \mathring C r }
%-4 \nu^{\tau}\re^{\mathring A}{}_A\re^{\mathring B}{}_B\nu_{[\mathring A}W_{ \mathring B] r r }{}^r
%\\
% W_{010A} &=&\nu^{\tau}\re^{\mathring A}{}_AW_{\tau r\tau \mathring A}- (\nu^{\tau})^2e^{\mathring A}{}_A\nu_{\mathring A}W_{\tau r\tau r}
%\\
% W_{011A} &=&\nu^{\tau} e^{\mathring A}{}_AW_{\tau r\tau \mathring A}+ (\nu^{\tau} )^2\re^{\mathring A}{}_AW_{\tau rr\mathring A}
%- (\nu^{\tau})^2 e^{\mathring A}{}_A\nu_{\mathring A} W_{\tau r\tau r}
%\\
%W_{A}^-|_{\scri^-} &=& 
%- \nu^{\tau} \re^{\mathring A}{}_AW^{r}{}_{ rr\mathring A}
%+(\nu^{\tau} )^2\re^{\mathring A}{}_A\nu^{\mathring B}W_{ r\mathring Ar\mathring B}
%\end{eqnarray*}
%
\begin{align}
W_{0101}|_{\scri^-}=& 
2M+ \mathfrak{O}(r^2)
\,,
\label{data_gen1}
\\
W_{01AB}|_{\scri^-} =& 2N\epsilon_{AB}
+ \mathfrak{O}(r^2)
\,,
\label{data_gen2}
\\
W_{A}^-|_{\scri^-} =& 
\mathfrak{O}(r^2)
\,,
\label{data_gen3}
\\
W_{A}^+|_{\scri^-}=& -2\mathfrak{M}_{ A} 
%+\Big(2L_{ A}-\nu^{(0)}_{\tau} \mathfrak{M}_A -3(M \nu^{(1)}_{ A} +N\nu^{(1)}_B\mathring\epsilon_{A}{}^{B})\Big)r
+2 L_A r
+ \mathfrak{O}(r^2)
\,,
\label{data_gen4}
\\
V^+_{AB}|_{\scri^-} =&
 \mathfrak{O}(r^2)
\,,
\label{data_gen5}
\\
V^-_{AB} |_{\scri^-} =&
(\mcD_{( A} \mathfrak{M}_{ B)} )_{\mathrm{tf}}
+\Big(-2(\mcD_{( A} L_{ B)} )_{\mathrm{tf}}  
-3M\Xi^{(2)}_{ A B}
-3N\Xi^{(2)}_{( A}{}^{ C}\epsilon_{ B) C}
\nonumber
\\
&
-2( \mathfrak{M}_{ (A}\mcD_{B)}\nu^{(0)}_{\tau} )_{\mathrm{tf}}
+2(\nu^{(1)}_{( A} \mathfrak{M}_{B)} )_{\mathrm{tf}}
\Big) r
+ c^{(2,0)}_{ A B}r^2
+ \mathfrak{O}(r^3)
\,.
\label{data_gen6}
\end{align}
%\begin{eqnarray*}
%W_{1A1B}  &=&e^{\mathring A}{}_Ae^{\mathring B}{}_BW_{\tau \mathring A\tau \mathring B}
%-2\nu^{\tau} e^{\mathring A}{}_Ae^{\mathring  B}{}_B\nu_{(\mathring A} W_{ \mathring B)\tau r\tau}
%+ (\nu^{\tau})^2 e^{\mathring A}{}_Ae^{\mathring B}{}_B\nu_{\mathring A}\nu_{\mathring B}W_{\tau r\tau r}
%\\
%&&
%-2\nu^{\tau} e^{\mathring A}{}_Ae^{\mathring B}{}_BW_{\tau (\mathring A \mathring B) r}
%-2(\nu^{\tau})^2 e^{\mathring A}{}_Ae^{\mathring B}{}_B\nu_{(\mathring A}W_{\mathring B) r r \tau}
%+ (\nu^{\tau})^2 e^{\mathring A}{}_Ae^{\mathring B}{}_BW_{r \mathring A r \mathring B}
%\\
% W_{0(AB)1} &=& -e^{\mathring A}{}_{A}e^{\mathring B}{}_{B}W_{\tau\mathring A\tau\mathring B}
%-2\nu^{\tau}  e^{\mathring A}{}_{(A}e^{\mathring B}{}_{B)}\nu_{(\mathring A}W_{ \mathring B)\tau\tau r}
%+(\nu^{\tau})^2  e^{\mathring A}{}_{(A}e^{\mathring B}{}_{B)}\nu_{\mathring A}\nu_{\mathring B}W_{\tau rr\tau}
%\\
%&&
%+  \nu^{\tau} e^{\mathring A}{}_{A}e^{\mathring B}{}_{B}W_{\tau (\mathring A\mathring B) r}
%+  (\nu^{\tau})^2 e^{\mathring A}{}_{A}e^{\mathring B}{}_{B}\nu_{(\mathring A}W_{ \mathring B) rr\tau }
%\end{eqnarray*}

\subsection{Higher-order derivatives: Structure of the equations and no-logs conditions}
\label{sec_structure_eqn_scri}

In the previous section we derived conditions which make sure that the restriction to $\scri^-$ of the  fields appearing in the GCFE admit
smooth extensions through $I^-$.
Here we devote attention to the issue under which conditions the statement remains true for transverse derivatives of these fields as well.
For this let us  assume that all the fields $\mathfrak{f}=(e^{\mu}{}_i, \widehat \Gamma_i{}^j{}_k, \widehat L_{ij}, W_{ijkl})$ have transverse derivatives $\partial^k_{\tau}\mathfrak{f}|_{\scri^-}$ which admit
smooth extensions across $I^-$  for $0\leq k\leq n-1$.
We aim to find conditions which guarantee that this also holds true for  $\partial^n_{\tau}\mathfrak{f}|_{\scri^-}$, $n\geq 1$.

Recall the evolution equations \eq{evolution1}-\eq{evolution7}.
We apply $\partial^{n-1}_{\tau}$, which yields algebraic equations
for $(\partial^n_{\tau}e^{\mu}{}_i, \partial^n_{\tau}\widehat \Gamma_i{}^j{}_k, \partial^n_{\tau}\widehat L_{ij})|_{\scri^-}$,
\begin{align}
\partial^n_{\tau}e^{\mu}{}_a|_{\scri^-} =\mathfrak{O}(1)
\,,
\quad
\partial^n_{\tau}\widehat\Gamma_{a}{}^i{}_j |_{\scri^-}
 =
\mathfrak{O}(1)
\,,
\quad
\partial^n_{\tau}\widehat L_{ai} |_{\scri^-}
=\mathfrak{O}(1)
\,,
\end{align}
whence the restrictions to $\scri^-$ of the $n$th-order $\tau$-derivatives of  frame field,  connection coefficients and  Schouten tensor 
are smooth at $I^-$, supposing that this is the case for all derivatives of $\mathfrak{f}$ up to and including order $n-1$.

Let us consider the evolution equations \eq{evolutionW1b}-\eq{evolutionW5b} for the rescaled Weyl tensor.
Again we apply $\partial^{n-1}_{\tau}$ and take the restriction to $\scri^-$.
We obtain a set of equations which determines all independent components (except $\partial^n_{\tau}V^-_{AB}|_{\scri^-}$)
 \emph{algebraically} in terms of $(\partial^n_{\tau}e^{\mu}{}_i, \partial^n_{\tau}\widehat \Gamma_i{}^j{}_k, \partial^n_{\tau}\widehat L_{ij})|_{\scri^-}$
and lower-order derivatives, which are already known at this stage.
In particular these components are smooth at $I^-$, as well,
\begin{equation}
\partial^n_{\tau}U_{AB}|_{\scri^-}=\mathfrak{O}(1)\,, \quad  \partial^n_{\tau}W^{\pm}_{A}|_{\scri^-}=\mathfrak{O}(1)\,, \quad \partial^n_{\tau}V^+_{AB}|_{\scri^-}=\mathfrak{O}(1)
\,.
\label{obvious_smooth_components}
\end{equation}
The  missing components  $\partial^n_{\tau}V^-_{AB}|_{\scri^-}$ are determined by \eq{evolutionW6b}.
We apply $\partial^n_{\tau}$   and take its restriction to $\scri^-$.
In this case it is not an algebraic equation but an ODE for  $\partial^n_{\tau}V^-_{AB}|_{\scri^-}$ along the null geodesic generators of $\scri^-$
(the $0th$-order  recovers the constraint \eq{ODE_Wfinal} for $( W_{\mathring A}{}^{r}{}_{\mathring B}{}^{r})_{\mathrm{tf}}$ in frame components),
\begin{align}
(\nu^{\tau} \partial_{r} +n\partial_{\tau}e^{\tau}{}_1 )\partial^n_{\tau}V^-_{AB} |_{\scri^-}
=&
 (\widehat\Gamma_1{}^0{}_0+ 
2 \widehat\Gamma_1{}^1{}_{0} 
  + \widehat\Gamma_C{}^C{}_0 
- \widehat\Gamma_C{}^C{}_1 ) \partial^n_{\tau}V^-_{AB} 
\nonumber
\\
&
- \Big( ( \widehat\Gamma_C{}^0{}_{(A} + \widehat\Gamma_C{}^1{}_{(A}- 2\widehat\Gamma_1{}^C{}_{(A})\partial^n_{\tau}V^-_{B)}{}^C
\Big)_{\mathrm{tf}}
+\mathfrak{O}(1)
\,,
\end{align}
where  $\mathfrak{O}(1)$ only involves terms  such as  \eq{obvious_smooth_components}  which are in principle  known at this stage, and known to be smooth at $I^-$.
Using \eq{Weyl_connectA_1}-\eq{Weyl_connect1_5} and taking  into account that by \eq{evolution7}  we have
\begin{equation}
\partial_{\tau}e^{\tau}{}_1|_{\scri^-}= (\partial_r +\kappa)\nu^{\tau}
\,,
\end{equation}
this can be written as
\begin{align}
\Big(\nu^{\tau} \partial_{r}   +\frac{1}{2}\theta^+ \nu^{\tau}+(n+2)(\partial_r +\kappa)\nu^{\tau} \Big)\partial^n_{\tau}V^-_{AB} |_{\scri^-}
=&
\mathfrak{O}(1)
\,,
\end{align}
%\begin{align}
%(\Theta^{(1)})^{-1}(\nu^{\tau})^{-n-2}\Big(\partial_{r}  +(n+3)\kappa \Big)\Theta^{(1)}(\nu^{\tau})^{n+3}\partial^n_{\tau}V^-_{AB} |_{\scri^-}
%=&
%\mathfrak{O}(1)
%\,.
%\end{align}
or,
\vspace{-.38em}
\begin{align}
r^{n+3}\big( \partial_{r}  +\mathfrak{O}(1)\big)(r^{-n-2}\partial^n_{\tau}V^-_{AB}) |_{\scri^-}
=&
\mathfrak{O}(1)
\,,
\label{derivs_V-}
\end{align}
%
%\begin{equation}
%r^{n+3}\partial_r(r^{-n-2}\partial^n_{\tau}V^-_{AB})
% =
%O(1)
%\,,
%\label{derivs_V-}
%\end{equation}
%
Equation \eq{derivs_V-} suggests that in general one should expect the appearance of logarithmic terms.
Under the premise that everything is smooth up to and including the $(n-1)$st-order, $n\geq 1$,  logarithmic terms
in the expansions in $r$ of the $n$th-order transverse derivatives can appear at most in the expansion  of $\partial^n_{\tau}V^-_{AB}|_{\scri^-}$.
To check whether this is indeed the case, one needs to compute the expansions in $r$ of all the other fields up to and including the order $n+2$: An $r^{n+2}$-contribution on the right-hand side of \eq{derivs_V-} produces log terms.
The observation that  log terms can in principle  appear at arbitrary  high orders makes the analysis  cumbersome.
In the following we will  analyze the mechanism how logarithmic terms arise via \eq{derivs_V-} in more detail.
In Section~\ref{section_aMlcGg}
 we will provide some more explicit calculations in an asymptotically Minkowski-like conformal Gauss gauge at each order,
where the asymptotic behavior of the gauge functions is fixed.

\begin{proposition}
\label{prop_n-1_no-logs}
Consider  asymptotic initial data $(W_{r \mathring Ar \mathring B}|_{\scri^-}, \Xi^{(1)}_{ A B}, M, N, L_{ A}, (c^{(n+2,n)}_{ A B})_{n\geq 0})$%
\footnote{The $c^{(n+2,n)}_{ A B}$'s are integration functions on $I^-$ which arise from the $\partial^n_{\tau}V^-_{AB}|_{\scri^-}$-equation, cf.\ Appendix~\ref{app_alternative_data}.
}
for the GCFE  on $\scri^-$ in a weakly asymptotically Minkowski-like conformal Gauss gauge
\eq{main_gauge0B}-\eq{main_gauge0B2}.
Then the  restrictions to $\scri^-$  of all the fields $(\partial^n_{\tau} e^{\mu}{}_i,\partial^n_{\tau} \widehat\Gamma_i{}^j{}_k,\partial^n_{\tau} \widehat L_{ij}, \partial^n_{\tau} W_{ijkl})$, $n\in\mathbb{N}$, admit smooth extensions through $I^-$ if and only if  
$\Xi^{(1)}_{AB}=0$,  the no-logs conditions \eq{no-logs_condition} and \eq{no-logs_condition2} are fulfilled by the radiation field, or, equivalently,
$W_{r \mathring Ar \mathring B}|_{\scri^-}=\mathfrak{O}(r)$, and 
 \eq{derivs_V-} does not produce log terms $\forall \,n\geq 1$.
\end{proposition}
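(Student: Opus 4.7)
The plan is to proceed by induction on the order $n$ of transverse derivatives, using as base case the zeroth-order analysis already carried out in Proposition \ref{prop_smoothness_constraints}. That result establishes the $n=0$ direction: smoothness of the restrictions of $\mathfrak{f}=(e^{\mu}{}_i,\widehat\Gamma_i{}^j{}_k,\widehat L_{ij},W_{ijkl})$ to $\scri^-$ across $I^-$ is equivalent to $\Xi^{(1)}_{AB}=0$ together with the no-logs conditions \eq{no-logs_condition}, \eq{no-logs_condition2}, and these are in turn equivalent to $W_{r\mathring A r\mathring B}|_{\scri^-}=\mathfrak{O}(r)$.

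For the induction step, suppose that for every $0\le k\le n-1$ the fields $\partial_{\tau}^{k}\mathfrak{f}|_{\scri^-}$ extend smoothly through $I^-$. Applying $\partial_{\tau}^{n-1}$ to the evolution equations \eq{evolution1}--\eq{evolution7} and restricting to $\scri^-$ produces purely \emph{algebraic} relations for $\partial_{\tau}^n e^{\mu}{}_a$, $\partial_{\tau}^n\widehat\Gamma_{a}{}^i{}_j$ and $\partial_{\tau}^n\widehat L_{ai}$ in terms of quantities that are smooth at $I^-$ by the inductive hypothesis; hence these $n$th-order derivatives are smooth. The same procedure applied to the five Bianchi components \eq{evolutionW1b}--\eq{evolutionW5b}, whose left-hand sides carry exactly one $\partial_{\tau}$, yields the smoothness stated in \eq{obvious_smooth_components}: $\partial_{\tau}^n U_{AB}|_{\scri^-}$, $\partial_{\tau}^n W^\pm_A|_{\scri^-}$ and $\partial_{\tau}^n V^+_{AB}|_{\scri^-}$ are algebraically determined by already-smooth data.

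The only component whose $n$th-derivative is not handled algebraically is $V^-_{AB}$. Indeed, the coefficient $(e^{\tau}{}_1-1)$ on the left-hand side of \eq{evolutionW6b} vanishes on $\scri^-$ in our gauge, so applying $\partial_{\tau}^n$ and restricting leads to a genuine first-order ODE in $r$ for $\partial_{\tau}^n V^-_{AB}|_{\scri^-}$. A direct computation, using \eq{Weyl_connectA_1}--\eq{Weyl_connect1_5} and $\partial_{\tau}e^{\tau}{}_1|_{\scri^-}=(\partial_r+\kappa)\nu^{\tau}$, reduces this ODE to the normalized form \eq{derivs_V-},
\begin{equation*}
r^{n+3}\bigl(\partial_r+\mathfrak{O}(1)\bigr)\bigl(r^{-n-2}\partial_{\tau}^n V^-_{AB}\bigr)\big|_{\scri^-}=\mathfrak{O}(1),
\end{equation*}
with an inhomogeneity that is smooth at $I^-$ by the inductive hypothesis together with the $n$th-order algebraic data just derived. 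Integrating from $I^-$ shows that a $\log r$ term is generated precisely when the source on the right-hand side contains an $r^{n+2}$ contribution; suppressing this coefficient is exactly the condition ``\eq{derivs_V-} does not produce log terms''. The integration function in the resulting smooth solution is the free datum $c^{(n+2,n)}_{AB}$ prescribed at $I^-$ (cf.\ Appendix~\ref{app_ADM}).

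The converse direction is immediate: failure of $\Xi^{(1)}_{AB}=0$ produces an unbounded frame component already at $r\to 0$, failure of $W_{r\mathring A r\mathring B}|_{\scri^-}=\mathfrak{O}(r)$ triggers the log term in one of \eq{no-logs_first}, \eq{no-logs_first2} at $n=0$, and for $n\ge 1$ the very same ODE \eq{derivs_V-} generates a $\log r$ term as soon as its no-logs condition is violated. The main obstacle is the bookkeeping in the induction step: one must verify carefully that, at each order $n$, the only mechanism capable of producing a logarithmic singularity comes from the transport equation for $V^-_{AB}$, so that no further hidden no-logs conditions arise from the algebraic branch of the system. This relies on the structural observation -- visible already at order zero -- that the remaining Bianchi equations lose their singular behavior at $I^-$ upon restriction, because the prefactors $(e^{\tau}{}_1\pm 1)$ single out $V^-_{AB}$ as the unique transport direction on $\scri^-$.
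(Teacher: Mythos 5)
Your proposal is correct and follows essentially the same route as the paper: induction on $n$ with Proposition~\ref{prop_smoothness_constraints} as base case, algebraic determination of $\partial^n_{\tau}e^{\mu}{}_i$, $\partial^n_{\tau}\widehat\Gamma_i{}^j{}_k$, $\partial^n_{\tau}\widehat L_{ij}$ and of all Weyl components except $\partial^n_{\tau}V^-_{AB}$ from \eq{evolution1}--\eq{evolution7} and \eq{evolutionW1b}--\eq{evolutionW5b}, and isolation of the single log-producing mechanism in the Fuchsian ODE \eq{derivs_V-} arising because the prefactor $(e^{\tau}{}_1-1)$ in \eq{evolutionW6b} vanishes on $\scri^-$. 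This is exactly the argument of Section~\ref{sec_structure_eqn_scri}.
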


\subsection{No-logs condition for $V^-_{AB}$}
\label{sec_no-logs_V-scri_gen}

The no-logs condition  \eq{no-logs_condition2}  for $V^-_{AB}|_{\scri^-}$
arises as Laplace-like  equation on the  expansion coefficient $W^{(0)}_{r \mathring Ar \mathring B}|_{\scri^-}$ (equivalently $\Xi^{(4)}_{AB}$) of the radiation field.
This leads to the question whether also in higher orders the no-logs condition for  $\partial^n_{\tau}V^-_{AB}|_{\scri^-}$
can be read as a Laplace equation for $W^{(n)}_{r \mathring Ar \mathring B}|_{\scri^-}$, or, alternatively, $\Xi^{(n+4)}_{AB}$.
To get some insights, set
%
%\vspace{-0.4em}
\begin{equation}
f^{(m,n)} := \frac{1}{m!\,n!}\partial^m_r\partial^n_{\tau}f|_{I^-}
\,.
\label{dfn_deriv}
\end{equation}
Moreover, we write
\begin{equation}
f= O_{\Xi}(n)
\end{equation}
if the function $f$ is  smooth at  $I^-$ and  depends only on $\Xi^{(k)}_{AB}$ with $k\leq n$ and possibly the gauge functions and the integration functions $M$, $N$,  $L_A$ and $c^{(k+2,k)}_{AB}$ \emph{but not}
on  $\Xi^{(k)}_{AB}$ with $k\geq n+1$
In this section it  is  convenient   to express everything  in terms of  $\Xi_{AB}$  rather than $W_{r\mathring Ar\mathring B}$.

From the constraint equations derived in Appendix~\ref{app_charact_constraints} we deduce that only the following coordinate components depend on  $\Xi_{\mathring A\mathring B}$
\begin{align}
(L_{\mathring A\mathring B})^{(m,0)}_{\mathrm{tf}} =& -\frac{m-1}{2}\Xi^{(m+1)}_{\mathring A\mathring B} + O_{\Xi}(m)
\,,
\\
L_{\mathring A}{}^r{}^{(m,0)} =& \frac{1}{2}v^{(m)}_{\mathring A}+ O_{\Xi}(m-1)
\,,
\\
W_{r\mathring A r\mathring B}{}^{(m,0)}=& -\frac{(m+2)(m+3)}{4}\Xi^{(m+4)}_{\mathring A\mathring B}+ O_{\Xi}(m+3)
\,,
\\
W_{r\mathring Ar}{}^r{}^{(m,0)}
=&\frac{m+1}{4}v^{(m+3)}_{\mathring A}+ O_{\Xi}(m+2)
\;,
\\
W_{\mathring A\mathring Br}{}^r{}^{(m,0)} =& 
\frac{1}{2}\mcD_{[\mathring A}v^{(m+2)}_{\mathring B]}+ O_{\Xi}(m+1)
\;,
\\
W_r{}^r{}_r{}^r{}^{(m,0)} 
=&
- \frac{1}{4}\mcD^Av^{(m+2)}_{\mathring A}+ O_{\Xi}(m+1)
\,,
\\
 W_{\mathring A}{}^r{}_{r}{}^{r}{}^{(m,0)} 
=& 
\frac{1}{8(m-2)}\Big( (\Delta_s-(m-1)(m-2)-1)v^{(m+1)}_{\mathring A}
-2\mcD_{\mathring A}\mcD^Bv^{(m+1)}_{\mathring B}\Big)+ O_{\Xi}(m)
\,,
\\
( W_{\mathring A}{}^r{}_{\mathring B}{}^{r} )_{\mathrm{tf}}^{(m,0)}
=& 
\frac{1}{8(m-3)(m-4)} \Big(\mcD_{(\mathring A} ( \Delta_s-1)v^{(m)}_{\mathring B)}
-2 \mcD_{\mathring A} \mcD_{\mathring B}\mcD^Cv^{(m)}_{\mathring C}
\Big)_{\mathrm{tf}}  
\nonumber
\\
&
 -\frac{(m-2)(m-1)}{16}\Xi^{(m)}_{\mathring A\mathring B}
 + O_{\Xi}(m-1)
\,.
\label{exp_coeffs_Weyl}
\end{align}
Terms with  vanishing denominator, such as the first one  on the right-hand side in \eq{exp_coeffs_Weyl} for $m=3,4$, are defined to be  zero.
For the frame components we then obtain (cf.\ the formulas in Section~\ref{sec_connection_gen} \& \ref{sec_schouten_gen}),
\begin{align}
\widehat L_{ij}^{(m,0)} =& O_{\Xi}(m+1)
\,,
\label{initial_data_lead_scri1}
\\
\widehat\Gamma_i{}^j{}_k{}^{(m,0)} =&  O_{\Xi}(m+1)
\,,
\\
e^{\mu}{}_i{}^{(m,0)} =& O_{\Xi}(m+1)
\,,
\\
V^{+(m,0)}_{AB}  =& -\frac{m(m+1)}{8}\Xi^{(m+2)}_{ A  B}+  O_{\Xi}(m+1)
\,,
\\
W^{-(m,0)}_A =&  \frac{m}{4}v^{(m+2)}_A+  O_{\Xi}(m+1)
\,,
\\
W^{(m,0)}_{0101}
=&  -\frac{1}{4}\mcD^{ A}v^{(m+2)}_{ A}+O_{\Xi}(m+1)
\,,
\\
W^{(m,0)}_{01AB}
 =& - \frac{1}{2}\mcD_{[ A}v^{(m+2)}_{ B]}+O_{\Xi}(m+1)
\,,
\\
W^{+(m,0)}_A=&\frac{1}{4(m-1)}( \Delta_s-1)v^{(m+2)}_{ A}
-\frac{1}{2(m-1)} \mcD_{ A}\mcD^Bv^{(m+2)}_{ B}
+  O_{\Xi}(m+1)
\,,
\label{exp_W+_scri}
\\
V^{-(m,0)}_{AB}=&\frac{1}{8(m-1)(m-2)} \Big(\mcD_{( A} ( \Delta_s-1)v^{(m+2)}_{ B)}
-2 \mcD_{ A} \mcD_{ B}\mcD^Cv^{(m+2)}_{ C}
\Big)_{\mathrm{tf}}  + O_{\Xi}(m+1)
\,.
\label{initial_data_lead_scri9}
\end{align}
From the evolution equations \eq{evolution1}-\eq{evolution7} we deduce by induction over $n$, and assuming that the solution is
smooth up to and including the order $n-1$,
\begin{equation}
e^{\mu}{}_{i}{}^{(m,n)}  =O_{\Xi}(m+1)
\,,
\quad
\widehat\Gamma_i{}^j{}_k{}^{(m,n)}
= O_{\Xi}(m+1)
\,,
\quad
\widehat L_{ ij}^{(m,n)}
=
O_{\Xi}(m+1)
\,.
\end{equation}
Similarly, taking  the $(n-1)$st-order $\tau$-derivatives of \eq{evolutionW1b}-\eq{evolutionW5b},  we deduce that
$W_{ijkl}^{(m,n)}=O_{\Xi}(m+2)$, except possibly for   $V^{-(m,n)}_{AB}$, which so far does not even need to  exist if logarithmic terms appear.
We want to work out  the dependence on $\Xi^{(m+2)}_{AB}$  explicitly.
For this the following observation is important:
%Assuming that the data generate a solution which admits a sufficiently regular cylinder $I$, 
 It follows from \eq{frameI_1}-\eq{Schouten_I}  below that in our current weakly asymptotically Minkowski gauge the  Schouten tensor as well as frame and connection coefficients
are $\tau$-independent on $I$, except for $e^{\tau}{}_1|_I=-\tau$.
In particular a regular  $I^-$
requires  the following relations on $\scri^-$ for $p\leq n-1$,
\begin{align}
\partial_{\tau}^p\widehat\Gamma_i{}^j{}_k|_{\scri^-}=&O(r)\,, \quad \text{for $p\geq 1$}\,,
\\
\partial^p_{\tau}e^{\mu}{}_i|_{\scri^-}=&O(r)\,, \quad\text{for $p\geq 2$}\,,
\\
\partial^p\partial_{\tau}e^{\mu}{}_1|_{\scri^-}=& - \delta^{\mu}{}_0+ O(r)
\,,
\\
\partial^p\partial_{\tau}e^{\mu}{}_A|_{\scri^-}=& O(r)
\,,
\end{align}
whence most terms in the Bianchi equation do not contribute to a $\Xi^{(m+2)}_{AB}$-term.
We evaluate the $(n-1)$st-order $\tau$-derivative of \eq{evolutionW1b}-\eq{evolutionW2b} and \eq{evolutionW5b},
and the $n$th-order $\tau$-derivative of \eq{evolution_WA-_2AB} for $m,n\geq 1$,
\begin{align}
nW_{0101} {}^{(m,n)}
 =& 
-\frac{1}{2}\mcD^A W^{+}_A{}^{(m,n-1)}
+\frac{1}{2}\mcD^AW^{-}_A{}^{(m,n-1)}
+  O_{\Xi}(m+1)
\,,
\label{double_exp_W0101}
\\
nW_{01AB}{}^{(m,n)}     =&
\mcD_{[A}  W^{+}_{B]}{}^{(m,n-1)}
+\mcD_{[A} W^{-}_{B]}{}^{(m,n-1)}
+  O_{\Xi}(m+1)
\,,
%\\
%nW^-_A{}^{(m,n)}
% =&  \mcD^BV^+_{AB}{}^{(m,n-1)}
% +\frac{1}{2}\mcD^B U_{BA}{}^{(m,n-1)}
%+W^-_{A} {}^{(m,n-1)}
%+  O_{\Xi}(m+1)
\\
n  V^+_{AB}{}^{(m,n)}
 =&
\frac{1}{2}(n-m+1)V^+_{AB}{}^{(m,n-1)}
+\frac{1}{2} (  \mcD_{(A} W^-_{B)}{}^{(m,n-1)}
)_{\mathrm{tf}}+  O_{\Xi}(m+1)
\,,
\\
(n-m-1)W^-_{A}{}^{(m,n)} =&2 \mcD^B V^+_{AB}{}^{(m,n)} + O_{\Xi}(m+1)
\label{double_exp_W-}
\,.
\end{align}
We further evaluate the $n$th-order $\tau$-derivative of  \eq{evolution9} and the ($n-1)$st-order $\tau$-derivative of \eq{evolution11} below (which arise as linear combinations of \eq{evolutionW1b}-\eq{evolution_WA-_2}) 
\begin{align}
(n-m+2)  V^-_{AB} {}^{(m,n)}
=&
-(\mcD_{(A}W^+_{B)}{}^{(m,n)})_{\mathrm{tf}}+  O_{\Xi}(m+1)
\,,
\label{double_exp_V-}
\\
nW^+_{A} {}^{(m,n)}
 =&  - \mcD^BV^-_{AB}{}^{(m,n-1)}
+\frac{1}{2}(n-m-2)W^+_{A}{}^{(m,n-1)}+  O_{\Xi}(m+1)
\,.
\label{double_exp_W+}
\end{align}
These equations can be decoupled to provide recursive formulas for various components of the  transverse derivatives of the rescaled Weyl tensor at $I^-$,
\begin{align}
(n-m-1)W^-_{A}{}^{(m,n)} 
 =&
\frac{1}{2n} \Big(\Delta_s+(m-n) (m-n+1)-1\Big)  W^-_{A}{}^{(m,n-1)}
+  O_{\Xi}(m+1)
\,,
\label{recursion1}
\\
(n-m-2) \mcD^B V^+_{AB}{}^{(m,n)} 
 =&
\frac{1}{2n} \Big(\Delta_s+(m-n) (m-n+1)-1\Big)  \mcD^B V^+_{AB}{}^{(m,n-1)}  +  O_{\Xi}(m+1)
\,,
\label{recursion2}
\\
(m-n-1) W^+_{A} {}^{(m,n)}
 =&  -\frac{1}{2n}\Big(\Delta_s 
+(m-n) (m-n+1)-1\Big)W^+_{A}{}^{(m,n-1)}+  O_{\Xi}(m+1)
\,,
\label{recursion3}
\\
(m-n-2) \mcD^BV^-_{AB} {}^{(m,n)}
=&
-\frac{1}{2n}\Big(\Delta_s 
+(m-n) (m-n+1)-1\Big)\mcD^BV^-_{AB} {}^{(m,n-1)}+  O_{\Xi}(m+1)
\,.
\label{recursion4}
\end{align}

The no-logs condition for $\partial^n_{\tau}V^-_{AB}$ takes the form
\begin{equation}
(\mcD_{(A}W^+_{B)}{}^{(n+2,n)})_{\mathrm{tf}}=  O_{\Xi}(n+3)
\,,
\end{equation}
its divergence reads
\begin{equation}
(\Delta_s +1)W^+_{A}{}^{(n+2,n)}=  O_{\Xi}(n+3)
\,.
\label{no_logs_W+}
\end{equation}
%
%Combining \eq{double_exp_V-} and  \eq{double_exp_W+} yields
%%
%\begin{equation}
%(m-n-1) nW^+_{A} {}^{(m,n)}
% =  -\frac{1}{2}\Big(\Delta_s 
%+(m-n) (m-n+1)-1\Big)W^+_{A}{}^{(m,n-1)}+  O_{\Xi}(m+1)
%\,.
%\end{equation}
%%
In the special case where $m=n+2$ we express $W^+_{A} {}^{(n+2,n)}$ in terms of the initial data on $\scri^-$. 
Using \eq{recursion3} we obtain
\begin{align*}
W^+_{A} {}^{(n+2,n)}
 =&  -\frac{1}{2n}(\Delta_s +5)W^+_{A}{}^{(n+2,n-1)}+  O_{\Xi}(n+3)
\\
 =&  \frac{1}{8n(n-1)}(\Delta_s +5)(\Delta_s +11)W^+_{A}{}^{(n+2,n-2)}+  O_{\Xi}(n+3)
\\
=& \dots
\\ 
=&
\frac{(-1)^n}{2^n(n!)^2}\prod_{\ell =2}^{n+1}\Big(\Delta_s+\ell (\ell+1)-1\Big)W_A^{+(n+2,0)}+  O_{\Xi}(n+3)
\\
\overset{\eq{exp_W+_scri}}{=}&
\frac{(-1)^n}{2^{n+2}n!(n+1)!}\prod_{\ell =2}^{n+1}\Big(\Delta_s+\ell (\ell+1)-1\Big)\Big(( \Delta_s-1)v^{(n+4)}_{ A}
-2 \mcD_{ A}\mcD^Bv^{(n+4)}_{ B}\Big)
+  O_{\Xi}(n+3)
\,.
\end{align*}
The no-logs condition \eq{no_logs_W+} therefore adopts the form
\begin{equation*}
\prod_{\ell =1}^{n+1}\Big(\Delta_s+\ell (\ell+1)-1\Big)\Big(( \Delta_s-1)v^{(n+4)}_{\mathring A}
-2 \mcD_{ A}\mcD^Bv^{(n+4)}_{ B}\Big)= O_{\Xi}(n+3)
\,.
\label{some_no_logs_cond}
\end{equation*}
Curl and divergence read
\begin{align}
\prod_{\ell =0}^{n+1}\Big(\Delta_s+\ell (\ell+1)\Big)\mcD^Av^{(n+4)}_A
=&   O_{\Xi}(n+3)
\,,
\label{no_log_rad1}
\\
\prod_{\ell =0}^{n+1}\Big(\Delta_s+\ell (\ell+1)\Big)( \epsilon^{AB}\mcD_{A}v^{(n+4)}_{B})
=&   O_{\Xi}(n+3)
\,.
\label{no_log_rad2}
\end{align}
We  may regard the no-logs condition on $\partial_{\tau}^nV^-_{AB}$ as a condition on the $(n+4)$th-order expansion coefficient
of $\Xi_{AB}$ (equivalently, on the $n$th-order expansion coefficient of the radiation field $W_{r\mathring Ar\mathring B}|_{\scri^-}$).
In general, though, this Laplace-like equation does not need to admit a solution:
By construction from a symmetric trace-free tensor, the right-hand sides do not contain $\ell=0,1$-spherical harmonics in their decomposition (cf.\ \eq{tensor_decomposition}-\eq{tensor_decomposition2}).
Nonetheless, they may contain spherical harmonics with $2\leq \ell \leq n+1$ which straightaway suppress the existence of a solution.

If a solution exists, i.e.\ if and only if no such spherical harmonics arise, there is the freedom to choose the spherical harmonics in the harmonics decomposition of the Hodge decomposition functions $\ul{\Xi}^{(n+4)}$ and $\ol \Xi^{(n+4)}$ of $\Xi^{(n+4)}_{AB}$ up to and including the order $\ell=n+1$.

In Section~\ref{section6} we will show that for a more restricted class of gauge functions  a radiation field which vanishes at any order at $I^-$ satisfies the no-logs conditions  \eq{no_log_rad1}-\eq{no_log_rad2} for any  $n$.
However, it is not clear to us whether a radiation field with a non-trivial expansion at $I^-$ exists which fulfills the no-logs conditions,
i.e.\ where its asymptotic expansion is adjusted 
 in such a way that the right-hand sides of \eq{no_log_rad1}-\eq{no_log_rad2} do not contain spherical harmonics with $2\leq \ell\leq n+1$.
One might expect this to be very restrictive, and, if possible at all, might impose restriction not only on $\Xi_{AB}$ but also  on the integration functions such as the mass aspect $M$ etc.
 We will discuss this  more detailed in Section~\ref{sec_const_mass}, where it is shown that the radiation does need to have a trivial expansion at least for
constant $M$ and vanishing $N$.

So far we have not analyzed the impact of the gauge functions, which one might think could be employed  to get rid of the disturbing spherical harmonics.
However, when computing the 0th- 1st-order transverse derivatives we have seen that the gauge function drop out, 
and in Section~\ref{sec_gauge_ind} we will show that logarithmic terms cannot be eliminated by appropriately adjusted gauge functions.

%so one needs to control more orders to deduce in which way they appear on the right-hand sides of \eq{no_log_rad1}-\eq{no_log_rad2}.
%\tim{might need rewording}

%For completeness, a corresponding analysis for the constraint equations \eq{evolution_WA-_2A}-\eq{evolution_WA-_2} gives
%\tim{also discuss no-logs condition... needs to be extended}
%%
%\begin{align}
%(n-m)W_{0101}{}^{(m,n)}
% =& 
%\mcD^AW^{-}_A{}^{(m,n)}+  O_{\Xi}(m+1)
%\,,
%\label{constr_expansion1}
%\\
%(n-m)W_{01AB}{}^{(m,n)}   =&
%2\mcD_{[A} W^{-}_{B]}{}^{(m,n)}+  O_{\Xi}(m+1)
%\,,
%\\
%(n-m-1)W^-_{A} {}^{(m,n)}
% =& 2 \mcD^BV^+_{AB}{}^{(m,n)}+  O_{\Xi}(m+1)
%\,,
%\\
%(n-m+1)W^+_{A}{}^{(m,n)}   =&
% -\mcD^B U_{AB}{}^{(m,n)}
%+  O_{\Xi}(m+1)
%\,.
%\label{constr_expansion4}
%\end{align}

\section{Appearance of log terms: Approaching $I^-$ from~$I$}
\label{sec_appro_I-I}
\label{section4}

In the previous sections we have analyzed the appearance of logarithmic terms when approaching the critical set $I^-$ from $\scri^-$.
The aim of this section is to carry out a corresponding analysis when approaching $I^-$ from the cylinder $I$.

Our goal is as follows: We assume that we have been given,  in a weakly  asymptotically 
Minkowski-like  conformal Gauss gauge,   a smooth solution of the GCFE which admits a smooth $\scri^-$ and  a smooth 
spatial infinity $I$.
We have already seen above that, in general, solutions cannot expected to be smooth at the critical set  $I^-$ where $I$ and $\scri^-$ intersect.
due to the appearance of logarithmic terms.
We therefore aim to extract conditions on the initial data
which are compatible with  smoothness at $I^-$ of all the relevant fields (and radial derivatives thereof) when approaching $I^-$ from  $I$,
under the assumption that the initial data for the transport equations on $I$  are induced by the limit of the corresponding fields given on $\scri^-$
to $I^-$.

\subsection{Solution of the inner equations on $I$ for %$\widehat\Gamma_i{}^j{}_k$, $\widehat L_{ij}$ and $e^{\mu}{}_i$
 connection coefficients, Schouten tensor and frame field}
\label{sec_connection_schouten_I}

We want to solve the   transport equations \eq{ev_eqn_gauge_1}-\eq{ev_eqn_gauge_7}  for connection coefficients, Schouten tensor and frame field
on the cylinder $I$ (in our setting the cylinder ``touches'' $\scri^-$  at  $I^-=\{\tau=-1,r=0\}$).
By assumption, the initial data for the transport equations are determined by taking the limit of the corresponding fields on $\scri^-$
to $I^-$.
It follows from Section~\ref{sec_fram_coeff} that
\begin{align}
e^{\tau}{}_{1}|_{I^-} =&1
\,,
\quad
e^{r}{}_{1}|_{I^-} =0
\,,
\quad
e^{\mathring A}{}_{1}|_{I^-} =0
\,,
\\
e^{\tau}{}_{A}|_{I^-}  =& 0
\,,
\quad
e^{r}{}_{A}|_{I^-}  =   0
\,,
\quad
e^{\mathring A}{}_{A}|_{I^-}  =   \mathring e^{\mathring A}{}_A
\,,
\\
\widehat\Gamma_1{}^i{}_j |_{I^-}  =&\delta^i{}_j
\,,
\quad
\widehat\Gamma_a{}^b{}_0 |_{I^-}  =0
\,,
\quad
 \widehat\Gamma_A{}^1{}_1   |_{I^-}=0
\,,
\quad
\widehat\Gamma_A{}^B{}_1   |_{I^-}= \delta^B{}_A
\,,
\quad
\widehat\Gamma_A{}^C{}_B  |_{I^-} =    \mathring\Gamma_A{}^C{}_B 
\,,
\\
\widehat L_{ ij}|_{I^-} 
=&0
\,.
\end{align}
Although the equations \eq{ev_eqn_gauge_1}-\eq{ev_eqn_gauge_7} are not linear, they can be solved explicitly due to the fact
that the initial data are almost trivial:
We observe that \eq{ev_eqn_gauge_2} and \eq{ev_eqn_gauge_3},
are decoupled from the other ones.
Since the initial data for these equations vanish we conclude  $\widehat\Gamma_{a}{}^b{}_0$ and $\widehat L_{ab}$
vanish on $I$.
It then follows from  \eq{ev_eqn_gauge_1} that $\widehat L_{a0}$ vanishes as well.
The remaining equations,
\begin{align}
\partial_{\tau}\widehat\Gamma_{a}{}^1{}_b |_I
 &=
0
\,,
\quad
\partial_{\tau}\widehat\Gamma_{1}{}^A{}_B |_I
 =
0
\,,
\quad
\partial_{\tau}\widehat\Gamma_{A}{}^B{}_C |_I
 =
0
\,,
\quad
\partial_{\tau}e^{\mu}{}_a|_I
= -\widehat\Gamma_{a}{}^0{}_{0} \delta^{\mu}{}_0
\,,
\end{align}
can then be straightforwardly integrated.
Altogether, we obtain the  following solution
\begin{align}
e^{\tau}{}_{1}|_{I} =&- \tau 
\,, 
\label{frameI_1}
\quad
e^{r}{}_{1}|_{I} =0
\,,
\quad
e^{\mathring A}{}_{1}|_{I} =0
\,,
\\
e^{\tau}{}_{A}|_{I}  =& 0
\,,
\quad
e^{r}{}_{A}|_{I}  =   0
\,,
\quad
e^{\mathring A}{}_{A}|_{I}  =   \mathring e^{\mathring A}{}_A
\,,
\\
\widehat\Gamma_1{}^i{}_j |_{I}  =& \delta^i{}_j
\,,
\quad
\widehat\Gamma_a{}^b{}_0 |_{I}  =0
\,,
\quad
 \widehat\Gamma_A{}^1{}_1   |_{I}=0
\,,
\quad
\widehat\Gamma_A{}^B{}_1   |_{I}= \delta^B{}_A
\,,
\quad
\widehat\Gamma_A{}^C{}_B  |_{I} =    \mathring\Gamma_A{}^C{}_B 
\,,
\\
\widehat L_{ ij}|_{I} 
=& 0
\,.
\label{Schouten_I}
\end{align}
Recall  that in a conformal Gauss gauge the  relations
$\widehat\Gamma_{0}{}^i{}_j =0$, $\widehat L_{0i} =0$ and
$e^{\mu}{}_0 = \delta^{\mu}{}_{\tau}$  hold globally.

\subsection{Solution of the Bianchi equation on $I$}
\label{sec_solve_Bianchi_I}

Next, let us analyze the Bianchi equation for the rescaled Weyl tensor  on the cylinder $I$ at spatial infinity.
Although this is not necessary for our purposes, let us analyze the full system.
Evaluation of \eq{evolutionW1b}-\eq{evolution_WA-_2}
on $I$ using \eq{frameI_1}-\eq{Schouten_I}  yields equations which can be written as
\begin{align}
(1-\tau) \partial_{\tau} W^+_{A}  |_{I}         =&  - W^+_{A}   -2\mcD^BV^-_{AB} 
\,, 
\label{eveq_I_1}
\\
(1+\tau)\partial_{\tau}W^-_{A} |_{I}         =& W^-_A +2 \mcD^B V^+_{AB} 
\label{eveq_I_2}
\,,
\\
(1-\tau^2) \partial_{\tau } W_{0101}|_{I}       =& 
-\frac{1}{2}(1+\tau) \mcD^AW^+_A
+\frac{1}{2}(1-\tau)\mcD^AW^-_A
\label{eveq_I_3}
\,,
\\
(1-\tau^2)  \partial_{\tau} W_{01AB}  |_{I}      =& 
 (1+\tau)\mcD_{ [A} W^+_{B]}+ (1-\tau)\mcD_{ [A} W^-_{B]}
\,,
\label{eveq_I_4}
\\
\partial_{\tau}[(1-\tau)^2V^+_{AB}  ] |_{I} 
 =&
(1-\tau) (\mcD_{(A}W^-_{B)})_{\mathrm{tf}}
\,,
\label{eveq_I_5}
\\
\partial_{\tau}[(1+\tau)^2V^-_{AB} ]|_{I} 
 =&
-  (1+\tau)(\mcD_{(A} W^+_{B)})_{\mathrm{tf}}
\,,
\label{eveq_I_6}
\end{align}
and
\begin{align}
(1+\tau) \mcD^{ A} W^+_A|_{I}     =&-(1-\tau) \mcD^{ A} W^-_A
\,,
\label{constr_I_1}
\\
(1+\tau)\mcD_{ [A}W^+_{B]} |_{I}        =&(1-\tau)\mcD_{ [A}W^-_{B]}
\,,
\label{constr_I_2}
\\
\tau W^+_A|_{I}   =& 
\frac{1}{2}(1   - \tau)\mcD^BU_{AB} - (1+\tau )\mcD^B V^-_{AB}
\,,
\\
\tau W^-_A|_{I} 
      =&  
- \frac{1}{2}(1+  \tau)\mcD^BU_{BA}
+(1-\tau)\mcD^B V^+_{AB}
\,.
\label{constr_I_4}
\end{align}
The subsystem \eq{eveq_I_1}-\eq{eveq_I_6} provides transport equations for all independent components of the  rescaled Weyl tensor on $I$.
The  subsystem \eq{constr_I_1}-\eq{constr_I_4} can be regarded as the constraint part of the Bianchi system on $I$.
%We note that  the  constraint equation of the Bianchi system by which we mean the equations $ \widehat\nabla_a W^a{}_{0kl}  = \frac{1}{4}\widehat\Gamma_b{}^p{}_pW^b{}_{0kl} $ does not provide any ``new'' equations on $I$ as compared to  \eq{eveq_I_1}-\eq{constr_I_4}.
A straightforward computation shows  that the constraint equations are preserved under the  evolution  of \eq{eveq_I_1}-\eq{eveq_I_6},
and therefore merely need to be satisfied initially at $I^-$.

We want to decouple the evolution equations.
%\tim{eventually one needs to make sure that full equations are satisfied}
Differentiation of the equations for $W^{\pm}_A$ by $\tau$ yields with   \eq{eveq_I_3}-\eq{eveq_I_4} and  \eq{constr_I_1}-\eq{constr_I_4}
\begin{equation}
((1-\tau^2) \partial^2_{\tau} -\Delta_s
\pm 2(1\mp\tau) \partial_{\tau}+1) W^{\pm}_{A}   |_{I}         =
0
\,.
\end{equation}
Let us also take into account that the constraints merely need to be satisfied at $I^-$ where they read
\begin{align}
 \mcD^{ A} W^-_A|_{I^-}      = 0
\,,
\label{cond_I-_1_int}
\quad
\mcD_{ [A}W^-_{B]}|_{I^-}          = 0
\,,
\quad
 W^-_A|_{I^-} 
      =
-2\mcD^B V^+_{AB}
\,,
\quad
W^+_A|_{I^-}   =
-\mcD^BU_{AB} 
\,.
\end{align}
On $I^-\cong{S}^2$ this can only be satisfied if $W^-_A$ and $V^+_{AB}$ vanish there altogether.
We conclude that the  system    \eq{eveq_I_1}-\eq{constr_I_4} is equivalent to the following one,
\begin{align}
((1-\tau^2) \partial^2_{\tau} -\Delta_s
+ 2(1-\tau)\partial_{\tau}+1) W^+_{A}   |_{I}         =& 
0
\,,
\label{decoupled_k_A}
\\
((1-\tau^2)\partial^2_{\tau}-\Delta_s
-2(1+\tau)\partial_{\tau}+1)W^-_{A}   |_{I}   =&
0
\,,
\label{decoupled_h_A}
\\
(1-\tau^2) \partial_{\tau } W_{0101}|_{I}       =& 
-\frac{1}{2}(1+\tau)\mcD^AW^+_A
+\frac{1}{2}(1-\tau) \mcD^AW^-_A
\,,
\label{ev_W0101}
\\
(1-\tau^2)  \partial_{\tau} W_{01AB}  |_{I}      =& 
 (1+\tau)\mcD_{ [A} W^+_{B]}+ (1-\tau)\mcD_{ [A} W^-_{B]}
\,,
\\
\partial_{\tau}[(1-\tau)^2V^+_{AB}  ] |_{I} 
 =&
(1-\tau) (\mcD_{(A}W^-_{B)})_{\mathrm{tf}}
\,,
\\
\partial_{\tau}[(1+\tau)^2V^-_{AB} ]|_{I} 
 =&
-  (1+\tau)(\mcD_{(A} W^+_{B)})_{\mathrm{tf}}
\,,
\label{ev_kAB}
\end{align}
and
\begin{align}
W^-_A|_{I^-}      = 0
\,, \quad
\partial_{\tau} W^-_A|_{I^-}
      =
 \frac{1}{2}\mcD^BU_{BA}=\ol{\mathfrak{M}}_A
\label{vanish_h_I-}
\\
W^+_A|_{I^-}   =
-\mcD^BU_{AB} =-2\mathfrak{M}_A
\,, \quad
\lim_{\tau\rightarrow -1}[(1+\tau)W^+_A|_{I^-}= 0
\,,
\label{cond_I-_4}
\\
 V^+_{AB} |_{I^-}  
      =  0
\,,
\label{cond_I-_6}
\end{align}
with
\begin{equation}
\mathfrak{M}_A\equiv \mcD_A M + \epsilon_A{}^B\mcD_BN\,, \quad \ol {\mathfrak{M}}_A\equiv \mcD_A  M - \epsilon_A{}^B\mcD_BN
\,.
\end{equation}
The data at $I^-$ can be computed from \eq{data_gen1}-\eq{data_gen6} by continuity at $I^-$.
The additional conditions in \eq{vanish_h_I-}-\eq{cond_I-_4} are needed since we have replaced the first-order equations for $W^{\pm}_A$ by second-order ones.
The analysis in Section~\ref{sec_smoothness_Bianchi_I} below
shows  that these are indeed the ``right'', i.e.\ freely prescribable,  data.
%%
%\begin{equation*}
%W^+_A|_{I^-}\,, \quad \partial_{\tau}W^+_A|_{I^-}\,, \quad W^-_A|_{I^-}\,,\quad \lim_{\tau\rightarrow -1}[(1+\tau)W^-_A|_{I^-}
%\,.
%\end{equation*}
%
%The latter datum needs to vanish for the solution to be bounded at $I^-$, the remaining data are given by \eq{vanish_h_I-}, \eq{vanish_k_I-} and \eq{cond_I-_5}. 
It is
further shown there that the solutions are regular at $I^-$.
They admit the following expansions,
\begin{align}
W^-_A|_I =&\ol{\mathfrak{M}}_A(1+\tau) +\frac{1}{4}(\Delta_s+1)\ol{\mathfrak{M}}_A (1+\tau)^2+ \mathfrak{O}(1+\tau)^3
\,,
\label{Weyl_I_1}
\\
W^+_A|_I =&-2\mathfrak{M}_A -\frac{1}{2} (\Delta_s-1)\mathfrak{M}_A(1+\tau) + \mathfrak{O}(1+\tau)^2
\,.
\end{align}
We  observe that once $W^{\pm}_A|_I$  are known, the remaining evolution equations
are merely ODEs (some of them of Fuchsian type) which can be straightforwardly integrated.
No logarithmic terms arise when integrating \eq{ev_W0101}-\eq{ev_kAB}.
We obtain the following expansions
\begin{align}
W_{0101}|_{I}       =& 
M+ \frac{1}{2}\Delta_s M(1+\tau)
+ \mathfrak{O}(1+\tau)^2
\,,
\\
W_{01AB}  |_{I}      =& N\epsilon_{AB}
 + \frac{1}{2}\Delta_sN\epsilon_{AB}(1+\tau)
+ \mathfrak{O}(1+\tau)^2
\,,
\\
V^+_{AB} |_{I} 
 =&
 \frac{1}{8}(\mcD_{(A}\ol{\mathfrak{M}}_{B)})_{\mathrm{tf}}(1+\tau)^2
+ \mathfrak{O}(1+\tau)^3
\,,
\\
V^-_{AB} |_{I} 
 =&
\frac{1}{2}( \mcD_{(A}\mathfrak{M}_{B)})_{\mathrm{tf}}
+ \mathfrak{O}(1+\tau)
\,.
\label{Weyl_I_6}
\end{align}
The expansions are compatible at $I^-$ with the corresponding ones computed on $\scri^-$.
In general, a solution $V^-_{AB}$ to \eq{ev_kAB} will be unbounded at $I^-$ whence there is no freedom to choose initial data
if one requires the solution to be bounded.
% (and to approach the value induced by $V^-_{AB}|_{\scri^-}$.

\subsection{Rewriting the Bianchi equation}

For the analysis on the cylinder it turns out that it is convenient to use a different subsystem of the Bianchi equation as compared to our analysis on $\scri^-$
to evolve the  independent components of the  radial derivatives of the  rescaled Weyl tensor (in this paper  we do not care whether the subsystem used to determine higher order derivatives forms  a symmetric hyperbolic system in spacetime).
The following system  is obtained by taking appropriate linear combinations of \eq{evolutionW1b}-\eq{evolution_WA-_2},
\begin{align}
(e^{\mu}{}_1\partial_{\mu} + \partial_{\tau})  V^+_{AB}
 =&
%\Big( 3\widehat\Gamma_1{}^0{}_{0}
%- 2\widehat\Gamma_1{}^1{}_{0} 
%- \frac{1}{2}\widehat\Gamma_C{}^C{}_0
%- \frac{1}{2}\widehat\Gamma_C{}^C{}_1 \Big)V^+_{AB}
%+ \Big( 
%- \frac{3}{2}(\widehat\Gamma_C{}^0{}_{(A} + \widehat\Gamma_C{}^1{}_{(A}) U^C{}_{B)} 
%\nonumber
%\\
%&
%+ (\check\nabla_{(A} 
%+ 2\widehat\Gamma_1{}^0{}_{(A}
%-2 \widehat\Gamma_{(A}{}^0{}_{|0|}
%+ \widehat\Gamma_{(A}{}^0{}_{|1|}
%+ 2\widehat\Gamma_1{}^1{}_{(A} )W^-_{B)}
%\Big)_{\mathrm{tf}}
( \widehat\Gamma_1{}^0{}_{0}
- 2\widehat\Gamma_1{}^1{}_{0} 
- \widehat\Gamma_C{}^C{}_0
- \widehat\Gamma_C{}^C{}_1 )V^+_{AB}
\nonumber
\\
&
+ \Big( ( \widehat\Gamma_C{}^0{}_{(A} - \widehat\Gamma_C{}^1{}_{(A}+2 \widehat\Gamma_{1C(A} )V^+_{B)}{}^{C}
- \frac{3}{2}(\widehat\Gamma_C{}^0{}_{(A} + \widehat\Gamma_C{}^1{}_{(A}) U^C{}_{B)} 
\nonumber
\\
&
+ (\check\nabla_{(A} 
+ 2\widehat\Gamma_1{}^0{}_{(A}
-2 \widehat\Gamma_{(A}{}^0{}_{|0|}
+ \widehat\Gamma_{(A}{}^0{}_{|1|}
+ 2\widehat\Gamma_1{}^1{}_{(A} )W^-_{B)}
\Big)_{\mathrm{tf}}
\,,
\label{evolution8}
\\
( e^{\mu}{}_1\partial_{\mu}-\partial_{\tau})
 V^-_{AB} 
=&
% (3\widehat\Gamma_1{}^0{}_0+ 
%2 \widehat\Gamma_1{}^1{}_{0} 
%  + \frac{1}{2}\widehat\Gamma_C{}^C{}_0 
%- \frac{1}{2}\widehat\Gamma_C{}^C{}_1 ) V^-_{AB} 
%+ \Big(  \frac{3}{2}( \widehat\Gamma_C{}^0{}_{(A} 
%-  \widehat\Gamma_C{}^1{}_{(A} )U_{B)}{}^C
%\nonumber
%\\
%&
%+\Big(\check\nabla_{(A}
%- 2\widehat\Gamma_1{}^0{}_{(A} 
%- 2 \widehat\Gamma_{(A}{}^0{}_{|0|}
%- \widehat\Gamma_{(A}{}^0{}_{|1|}
%+2 \widehat\Gamma_1{}^1{}_{(A} \Big)W^+_{B)}
%\Big)_{\mathrm{tf}}
-e^{\alpha}{}_1\partial_{\alpha} V^-_{AB} 
+ (\widehat\Gamma_1{}^0{}_0+ 
2 \widehat\Gamma_1{}^1{}_{0} 
  + \widehat\Gamma_C{}^C{}_0 
- \widehat\Gamma_C{}^C{}_1 ) V^-_{AB} 
\nonumber
\\
&
+ \Big( -( \widehat\Gamma_C{}^0{}_{(A} + \widehat\Gamma_C{}^1{}_{(A}- 2\widehat\Gamma_{1C(A})V^-_{B)}{}^C
+ \frac{3}{2}( \widehat\Gamma_C{}^0{}_{(A} 
-  \widehat\Gamma_C{}^1{}_{(A} )U_{B)}{}^C
\nonumber
\\
&
+(\check\nabla_{(A}
- 2\widehat\Gamma_1{}^0{}_{(A} 
- 2 \widehat\Gamma_{(A}{}^0{}_0
- \widehat\Gamma_{(A}{}^0{}_{|1|}
+2 \widehat\Gamma_1{}^1{}_{(A} )W^+_{B)}
\Big)_{\mathrm{tf}}
\label{evolution9}
\,,
\\
(\partial_{\tau}-e^{\mu}{}_1\partial_{\mu} )W^-_{A}   
 =& \Big( 2\check\nabla^B
-6 \widehat\Gamma^{B0}{}_{0}
+4\widehat\Gamma^{B0}{}_1-\widehat\Gamma_1{}^B{}_{0} - \widehat\Gamma_1{}^B{}_{1}\Big)V^+_{AB}
-\frac{3}{2}( \widehat\Gamma_1{}^B{}_{0} - \widehat\Gamma_1{}^B{}_{1}  )U_{BA}
\nonumber
\\
&
+ \Big( 4\widehat\Gamma_{[A}{}^0{}_{B]} 
+4\widehat\Gamma_{[A}{}^{1}{}_{B]} 
 -(\widehat\Gamma_{1BA})_{\mathrm{tf}}\Big) W^{-B}
-\Big(3 \widehat\Gamma_1{}^0{}_{0} -2\widehat\Gamma_B{}^B{}_1 +2 \widehat\Gamma_B{}^B{}_0 -\widehat\Gamma_1{}^1{}_0\Big)W^-_{A}
\,,
\label{evolution10}
\\
(\partial_{\tau}+ e^{\mu}{}_1\partial_{\mu} )W^+_{A} 
 =& -\Big(  2 \check\nabla^B
-6 \widehat\Gamma^{B0}{}_{0}
-4\widehat\Gamma^{B0}{}_1
+ \widehat\Gamma_1{}^B{}_{0} -\widehat\Gamma_1{}^B{}_{1}\Big) V^-_{AB}
-\frac{3}{2}(  \widehat\Gamma_1{}^B{}_{0} + \widehat\Gamma_1{}^B{}_{1}  )U_{AB}
\nonumber
\\
&
+ \Big(4 \widehat\Gamma_{[A}{}^0{}_{B]}
-4\widehat\Gamma_{[A}{}^{1}{}_{B]} 
+ (\widehat\Gamma_{1BA})_{\mathrm{tf}}\Big) W^{+B}
+\Big(  3\widehat\Gamma_1{}^0{}_{0} - 2\widehat\Gamma_B{}^B{}_1  -2\widehat\Gamma_B{}^B{}_0  + \widehat\Gamma_1{}^1{}_0\Big) W^+_{A}
\,,
\label{evolution11}
\\
(\tau\partial_{\tau} +e^{\mu}{}_1\partial_{\mu}) W_{0101}
 =& 
(1+\tau)\Big(-\frac{1}{2}\check\nabla^A
+2 \widehat\Gamma^{A0}{}_{0}
+\frac{1}{2} \widehat\Gamma^{A0}{}_1\Big) W^{+}_A
-(1-\tau)\Big(\frac{1}{2}\check\nabla^A
-2 \widehat\Gamma^{A0}{}_{0}
+\frac{1}{2} \widehat\Gamma^{A0}{}_1\Big)W^{-}_A
\nonumber
\\
&
+ \frac{1}{2}(1-\tau)\Big(\widehat\Gamma^{AB}{}_{1}+ \widehat\Gamma^{AB}{}_{0}\Big)V^+_{AB}
+ \frac{1}{2}(1+\tau)\Big(\widehat\Gamma^{AB}{}_{1}-\widehat\Gamma^{AB}{}_{0}\Big)V^-_{AB}
\nonumber
\\
&
+3\Big(  \widehat\Gamma_1{}^0{}_0 -\frac{1}{2}\widehat\Gamma_A{}^A{}_{1}-\frac{1}{2}\tau \widehat\Gamma_A{}^A{}_0 \Big)W_{0101}
-\frac{3}{2}( \widehat\Gamma_A{}^B{}_{0}+\tau  \widehat\Gamma_A{}^B{}_{1}) W^A{}_{B01}
\nonumber
\\
&
+\Big(\widehat\Gamma_1{}^A{}_{1} - \widehat\Gamma_1{}^A{}_0 \Big)W^{+}_{A}
+\Big( \widehat\Gamma_1{}^A{}_0 
+\widehat\Gamma_1{}^A{}_{1} \Big) W^{-}_{A}
\,,
\label{evolution12}
\\
(\tau\partial_{\tau}  +  e^{\mu}{}_1\partial_{\mu} )W_{01AB}    =&
(1+\tau)\Big(\check\nabla_{[A}  
-  2\widehat\Gamma_{[A}{}^0{}_{|0|} 
- \widehat\Gamma_{[A}{}^0{}_{|1|} \Big)W^{+}_{B]}
-(1-\tau)\Big(\check\nabla_{[A} 
-  2\widehat\Gamma_{[A}{}^0{}_{|0|} 
+ \widehat\Gamma_{[A}{}^0{}_{|1|}\Big) W^{-}_{B]}
\nonumber
\\
&
-(1-\tau)( \widehat\Gamma^{C1}{}_{[A}- \widehat\Gamma^{C0}{}_{[A})V^+_{B]C}
+(1+\tau)( \widehat\Gamma^{C1}{}_{[A}+ \widehat\Gamma^{C0}{}_{[A}) V^-_{B]C}
\nonumber
\\
&
+3(\widehat\Gamma_{[A}{}^0{}_{B]}  -\tau\widehat\Gamma_{[A}{}^{1}{}_{B]} )W_{0101}
+ 3\Big(\widehat\Gamma_1{}^0{}_{0} -\frac{1}{2}\widehat\Gamma_C{}^C{}_1-\frac{1}{2}\tau\widehat\Gamma_C{}^C{}_0 \Big)W_{01AB} 
\nonumber
\\
&
+2\Big(  \widehat\Gamma_1{}^0{}_{[A} 
+ \widehat\Gamma_1{}^1{}_{[A} \Big)W^{+}_{B]}
+2\Big( \widehat\Gamma_1{}^0{}_{[A} -\widehat\Gamma_1{}^1{}_{[A}\Big)  W^{-}_{B]}
\,.
\label{evolution13}
\end{align}
Taking radial derivatives of \eq{evolution8}-\eq{evolution13} and evaluating them on $I$ gives the desired equations. Note that the equations for 
$\partial_r^n W_{0101}|_I$ and $\partial^n_rW_{01AB}|_I$ are algebraic  (cf.\ \eq{frameI_1}), supposing that all lower order derivatives are known and supposing that $n \geq 1$.
For $n=0$ \eq{evolution12}-\eq{evolution13} need to be replaced by e.g.\ \eq{evolutionW1b}-\eq{evolutionW2b}, and this case has already been treated  in the previous section.

\subsection{First-order radial derivatives}
\label{sec_radial_equations}

Let us   consider the case $n=1$ for the first-oder radial derivatives on $I$ explicitly. 
For this, we differentiate the evolution equations \eq{evolution1}-\eq{evolution7} and  \eq{evolution8}-\eq{evolution13}  by $r$. Taking their restrictions to the cylinder
and using the results of Section~\ref{sec_connection_schouten_I} \& \ref{sec_solve_Bianchi_I}
we obtain 
transport equations on $I$  for $(e^{\mu}{}_i, \widehat \Gamma_i{}^j{}_k, \widehat L_{ij},W_{ijkl})$.

First of all note that with regard to \eq{evolution1}-\eq{evolution7}
\begin{align}
\partial_r\Theta|_I =& 1-\tau^2\,,
\quad
b_i|_I =0
\,,
\\
\partial_r b_0|_I =&-2\tau
\,, \quad
\partial_r b_1|_I = 2
\,, \quad
\partial_r b_A|_I = 0
\,.
\end{align}
For the Schouten tensor we  obtain the following set of equations,
\begin{align}
\partial_{\tau}\partial_r\widehat L_{10} |_I
&=-4M + \mathfrak{O}(1+\tau)
\,,
\\
\partial_{\tau}\partial_r\widehat L_{11}  |_I
&= -4 M+ \mathfrak{O}(1+\tau)
\,,
\\
\partial_{\tau}\partial_r\widehat L_{1A}  |_I
&= \mathfrak{O}(1+\tau)
\,,
\\
\partial_{\tau}\partial_r(\widehat L_{A0}-\widehat L_{A1}) |_I
&= \mathfrak{O}(1+\tau)
\,,
\\
\partial_{\tau}\partial_r(\widehat L_{A0}+\widehat L_{A1}) |_I
&=4\mathfrak{M}_A + \mathfrak{O}(1+\tau)
\,,
\\
\partial_{\tau}\partial_r\widehat L_{AB}  |_I
&=  2M\eta_{AB} +2 N\epsilon_{AB} +\mathfrak{O}(1+\tau)
\,.
\end{align}
Integration yields (the integration functions are determined by \eq{Schouten_scri1}-\eq{Schouten_scri4})
\begin{align}
\partial_r\widehat L_{10} |_I
&=-4M (1+\tau)+ \mathfrak{O}(1+\tau)^2
\,,
\\
\partial_r\widehat L_{11}  |_I
&= -4 M(1+\tau)+ \mathfrak{O}(1+\tau)^2
\,,
\\
\partial_r\widehat L_{1A}  |_I
&= \mathfrak{O}(1+\tau)^2
\,,
\\
\partial_r(\widehat L_{A0}-\widehat L_{A1}) |_I
&=\mcD_{ A} \nu_{\tau}^{(0)} +  \mcD_{ A} f^{(1)}_1 
-f^{(1)}_{ A}
+ \mathfrak{O}(1+\tau)^2
\,,
\\
\partial_r(\widehat L_{A0}+\widehat L_{A1}) |_I
&=v_{ A}^{(2)} +f^{(1)}_{ A}
-\mcD_{ A} f^{(1)}_1 +\mcD_{ A} \nu_{\tau}^{(0)}
+4\mathfrak{M}_A(1+\tau) + \mathfrak{O}(1+\tau)^2
\,,
\\
\partial_r\widehat L_{AB}  |_I
&= -\mcD_{ A}f^{(1)}_{ B} -\frac{1}{2}\Xi^{(2)}_{ A B}-f^{(1)}_1\eta_{ A B}
+2 (M\eta_{AB} + N\epsilon_{AB})(1+\tau) +\mathfrak{O}(1+\tau)^2
\,.
\end{align}
For the connection coefficients we end up with the following equations
\begin{align}
\partial_{\tau}\partial_r\widehat\Gamma_{1}{}^0{}_1  |_I
 &=
\mathfrak{O}(1+\tau)
\,,
\\
\partial_{\tau}\partial_r\widehat\Gamma_{1}{}^0{}_A  |_I
 &=
\mathfrak{O}(1+\tau)
\,,
\\
\partial_{\tau}\partial_r\widehat\Gamma_{A}{}^0{}_1  |_I
 &=
 \partial_r\widehat L_{A1}|_{I^-}  +\mathfrak{O}(1+\tau)
\,,
\\
\partial_{\tau}\partial_r\widehat\Gamma_{A}{}^0{}_B  |_I
 &=
 \partial_r\widehat L_{AB} |_{I^-}  +\mathfrak{O}(1+\tau)
\,,
\\
\partial_{\tau}\partial_r\widehat\Gamma_{1}{}^1{}_1  |_I
 &=
- \partial_r\widehat\Gamma_{1}{}^1{}_{0}|_{I^-}
    +\mathfrak{O}(1+\tau)
\,,
\\
\partial_{\tau}\partial_r\widehat\Gamma_{1}{}^1{}_A  |_I
 &=
\partial_r\widehat\Gamma_{1}{}^0{}_{A}|_{I^-}
    +\mathfrak{O}(1+\tau)
\,,
\\
\partial_{\tau}\partial_r\widehat\Gamma_{A}{}^1{}_1  |_I
 &=
-  \partial_r\widehat\Gamma_{A}{}^1{}_{0}|_{I^-}
 +\partial_r\widehat L_{A0} |_{I^-}   +\mathfrak{O}(1+\tau)
\,,
\\
\partial_{\tau}\partial_r\widehat\Gamma_{A}{}^1{}_B  |_I
 &=
\partial_r\widehat\Gamma_{A}{}^0{}_{B}|_{I^-}
   +\mathfrak{O}(1+\tau)
\,,
\\
\partial_{\tau}\partial_r\widehat\Gamma_{1}{}^A{}_B  |_I
 &=
- \delta^A{}_B\partial_r\widehat\Gamma_{1}{}^1{}_{0}|_{I^-}  
-  \mathring\Gamma_C{}^A{}_B\partial_r\widehat\Gamma_{1}{}^C{}_{0}  |_{I^-}
+\mathfrak{O}(1+\tau)
\,,
\\
\partial_{\tau}\partial_r\widehat\Gamma_{A}{}^B{}_C  |_I
 &=
-  \mathring\Gamma_D{}^B{}_C\partial_r\widehat\Gamma_{A}{}^D{}_{0} |_{I^-}
+ \delta^B{}_{C}\partial_r\widehat L_{A0}|_{I^-}   +\mathfrak{O}(1+\tau)
\,.
\end{align}
The solutions have the expansions (the integration functions follow from \eq{connection_scri_1}-\eq{connection_scri_10})
\begin{align}
\partial_r\widehat\Gamma_{1}{}^0{}_1  |_I
 &=
- f_1^{(1)}+ \mathfrak{O}(1+\tau)^2
\,,
\\
\partial_r\widehat\Gamma_{1}{}^0{}_A  |_I
 &=
- f^{(1)}_A + \mathfrak{O}(1+\tau)^2
\,,
\\
\partial_r\widehat\Gamma_{A}{}^0{}_1  |_I
 &=
\mcD_A\nu_{\tau}^{(0)}-\nu_A^{(1)} + \Big( \frac{1}{2}v_{ A}^{(2)} +f^{(1)}_{ A}-\mcD_{ A} f^{(1)}_1\Big)(1+\tau)  +\mathfrak{O}(1+\tau)^2
\,,
\\
\partial_r\widehat\Gamma_{A}{}^0{}_B  |_I
 &=
 \frac{1}{2}\Xi^{(2)}_{ AB}- \Big(\mcD_{ A}f^{(1)}_{ B} +\frac{1}{2}\Xi^{(2)}_{ A B}+f^{(1)}_1\eta_{ A B}\Big)(1+\tau)   +\mathfrak{O}(1+\tau)^2
\,,
\\
\partial_r\widehat\Gamma_{1}{}^1{}_1  |_I
 &=
f_1^{(1)} +  f_1^{(1)}(1+\tau)
    +\mathfrak{O}(1+\tau)^2
\,,
\\
\partial_r\widehat\Gamma_{1}{}^1{}_A  |_I
 &=f^{(1)}_A
- f^{(1)}_A(1+\tau)
    +\mathfrak{O}(1+\tau)^2
\,,
\\
\partial_r\widehat\Gamma_{A}{}^1{}_1  |_I
 &=
f_A^{(1)} 
 +\Big(\frac{1}{2} v_{ A}^{(2)}+\nu_A^{(1)}   \Big)(1+\tau)    +\mathfrak{O}(1+\tau)^2
\,,
\\
\partial_r\widehat\Gamma_{A}{}^1{}_B  |_I
 &=
 - \frac{1}{2}\Xi^{(2)}_{ AB}
-  f_1^{(1)}\eta_{AB}
+ \frac{1}{2}\Xi^{(2)}_{ A B}(1+\tau)
   +\mathfrak{O}(1+\tau)^2
\,,
\\
\partial_r(\widehat\Gamma_{1}{}^A{}_B)_{\mathrm{tf}}  |_I
 &=
  (\mathring\Gamma_C{}^A{}_Bf^{C(1)} )_{\mathrm{tf}}(1+\tau)
+\mathfrak{O}(1+\tau)^2
\,,
\\
\partial_r\widehat\Gamma_{A}{}^B{}_C  |_I
 &=
2\delta^B{}_{(A}f^{(1)}_{C)} - \eta_{AC} f^{(1)B}
-  \frac{1}{2}\mathring\Gamma_D{}^B{}_C \Xi^{(2)}_{ A}{}^{ D}(1+\tau)
\nonumber
\\
&
+ \delta^B{}_{C}\Big(\frac{1}{2} v_{ A}^{(2)} +\mcD_{ A}\nu_{\tau}^{(0)}   \Big) (1+\tau)  +\mathfrak{O}(1+\tau)^2
\,.
\end{align}
In fact, we have
\begin{equation}
\partial_{\tau}\partial_r(\widehat\Gamma_{1}{}^0{}_A +\widehat\Gamma_{1}{}^1{}_A )|_I
 =
-f_A^{(1)}   +\mathfrak{O}(1+\tau)^2
\,,
\end{equation}
whence
\begin{equation}
\partial_r(\widehat\Gamma_{1}{}^0{}_A +\widehat\Gamma_{1}{}^1{}_A )|_I
 =
- f^{(1)}_A (1+\tau) +\mathfrak{O}(1+\tau)^3
\,,
\label{better_decay1}
\end{equation}
which will be relevant below.
The equations for the frame coefficients read
\begin{align}
\partial_{\tau}\partial_re^{\tau}{}_1  |_I & = -  f_1^{(1)}(1+\tau)
    +\mathfrak{O}(1+\tau)^2
\,,
\\
\partial_{\tau}\partial_re^{r}{}_1  |_I & =
0
\,,
\\
\partial_re^{\mathring A}{}_1  |_I & = 
 f^{\mathring A(1)}(1+\tau) +\mathfrak{O}(1+\tau)^2
\,,
\\
\partial_re^{\tau}{}_A  |_I & = \Big(\nu_A^{(1)}-\mcD_A\nu_{\tau}^{(0)} -f^{(1)}_A\Big)(1+\tau)
 +\mathfrak{O}(1+\tau)^2
\,,
\\
\partial_{\tau}\partial_re^{r}{}_A  |_I & = 
0
\,,
\\
\partial_{\tau}\partial_re^{\mathring A}{}_A  |_I & =
 -\frac{1}{2}\Xi^{(2)}_A{}^B\mathring e^{\mathring A}{}_B +\mathfrak{O}(1+\tau)
\,,
\end{align}
from which we obtain the solutions (with data induced by \eq{frame_scri_1}-\eq{frame_scri_2})
\begin{align}
\partial_re^{\tau}{}_1  |_I & =-  f_1^{(1)}(1+\tau)^2 + \mathfrak{O}(1+\tau)^3
\,,
\\
\partial_re^{r}{}_1  |_I & =
1
\,,
\\
\partial_re^{\mathring A}{}_1  |_I & = 
 f^{\mathring A(1)} (1+\tau)+\mathfrak{O}(1+\tau)^2
\,,
\\
\partial_re^{\tau}{}_A  |_I & = \Big(\nu_A^{(1)}-\mcD_A\nu_{\tau}^{(0)} -f^{(1)}_A\Big)(1+\tau)
 +\mathfrak{O}(1+\tau)^2
\,,
\\
\partial_re^{r}{}_A  |_I & = 
0
\,,
\\
\partial_re^{\mathring A}{}_A  |_I & =
 -\frac{1}{2}\Xi^{(2)}_A{}^B\mathring e^{\mathring A}{}_B(1+\tau) +\mathfrak{O}(1+\tau)^2
\,.
\end{align}
From the equations \eq{evolution8}-\eq{evolution11} for the rescaled Weyl tensor we obtain
\begin{align}
\partial_{\tau}[(1-\tau) \partial_rV^+_{AB}] |_I 
 =&
\Big( 3(M \mcD_{ (A}f^{(1)}_{ B)} +  N\mcD_{ C}f^{(1)}_{ (A} \epsilon^C{}_{B)})_{\mathrm{tf}} 
-5(f^{(1)}_{(A}\ol{\mathfrak{M}}_{B)})_{\mathrm{tf}}\Big)(1+\tau)
\nonumber
\\
&
+(\mcD_{(A} \partial_r W^-_{B)})_{\mathrm{tf}}
+\mathfrak{O}(1+\tau)^2
\,,
\label{eq_trans_derv_V+}
\\
\partial_{\tau}[(1+\tau)\partial_r V^-_{AB} ]|_I 
=&
-(\mcD_{(A} \partial_rW^+_{B)})_{\mathrm{tf}}+\mathfrak{O}(1)
\,,
\\
(1-\tau)\partial_{\tau}\partial_rW^+_{A} |_I 
 =&
-2\partial_{r}W^+_{A} 
-2 \mcD^B\partial_rV^-_{AB} +\mathfrak{O}(1)
\,,
\\
(1+\tau )\partial_{\tau} \partial_r W^-_{A}  |_I
 =&
2\partial_{r} W^-_{A}
 +2\mcD^B\partial_rV^+_{AB}
+3f^B(M\eta_{AB}-N\epsilon_{AB})(1+\tau)
\nonumber
\\
&
+2\Big(
f^B\Delta_s( M\eta_{AB}- N\epsilon_{AB})
 -2\ol{\mathfrak{M}}^B\mcD_{[A}f^{(1)}_{B]}
+\ol{\mathfrak{M}}_A\mcD^Bf^{(1)}_B
\Big)(1+\tau)^2
\nonumber
\\
&
+\mathfrak{O}(1+\tau)^3
\,.
\end{align}
Taking the divergence of the first two equations and inserting  the result into the latter two yields
decoupled equations for $\partial_rW^{\pm}_A|_I$,
\begin{align}
&\hspace{-2em}
[(1-\tau^2)\partial^2_{\tau}
 -2 \partial_{\tau}   
-(\Delta_s-1)]\partial_rW^-_A|_I
\nonumber
\\
 =&
6 f^{(1)B}(M\eta_{AB}-N\epsilon_{AB})+
 3 (\Delta_s-1)\Big( f^{(1)B}(M\eta_{AB}-N\epsilon_{AB})\Big)(1+\tau)
+\mathfrak{O}(1+\tau)^2
\,,
\label{sing_wave_rad_W-}
\\
&\hspace{-2em}
[(1-\tau^2)\partial^2_{\tau}
 +2 \partial_{\tau}   
-(\Delta_s-1)]\partial_rW^+_A|_I
=\mathfrak{O}(1)
\,.
\label{sing_wave_rad_W+}
\end{align}
The analysis in Section~\ref{sec_smoothness_Bianchi_I}
shows that the   solutions are regular at $I^-$ for any  data 
$\partial_rW^-_A|_{I^-} $, $\partial^2_{\tau}\partial_rW^-_A|_{I^-}$
and $\partial_rW^+_A|_{I^-} $ (the second integration function for $W^+_A$ comes along with a log term and therefore needs to vanish).
We have (cf.\ \eq{data_gen3}-\eq{data_gen4})
\begin{equation}
\partial_rW^-_A|_{I^-} =0
\,, \quad 
\partial_rW^+_A|_{I^-} =2L_A
\,,
\end{equation}
the  datum $\partial^2_{\tau}\partial_rW^-_A|_{I^-}$ is irrelevant for our purposes.
Moreover, the solutions  admit an expansion of the form
\begin{align}
\partial_rW^-_A|_I =& -3 f^{(1)B}(M\eta_{AB}-N\epsilon_{AB}) (1+\tau)
+\mathfrak{O}(1+\tau)^2
\,,
\\
\partial_rW^+_A|_I =& 2L_A +\mathfrak{O}(1+\tau)
\,.
\end{align}
From this we compute $\partial_rV^{\pm}_{AB}|_I$. The integration function for $\partial_rV^{+}_{AB}|_I$
is determined by continuity from \eq{data_gen5}, while that for  $\partial_rV^{-}_{AB}|_I$ needs to vanish in order to get a bounded solution.
\begin{align}
\partial_rV^+_{AB} |_I 
 =& \mathfrak{O}(1+\tau)^2
\,,
\label{V+AB_I}
\\
\partial_rV^-_{AB} |_I
=&
 \Big(
-2\mcD_{(A}L_{ B)} 
- 3\Xi^{(2)}_{(A}{}^C (M\eta_{B)C} + N\epsilon_{B)C})
+2\nu^{(1)}_{(A}\mathfrak{M}_{B)}-2\mathfrak{M}_{(A}\mcD_{B)} \nu^{(0)}_{\tau}
\Big)_{\mathrm{tf}}
+ \mathfrak{O}(1+\tau )
\,.
\end{align}
Finally, the radial derivative of \eq{evolution12}-\eq{evolution13} yields,
\begin{equation}
\partial_{r}U_{AB}|_I
 = 
\mathfrak{O}(1+\tau )
\,.
\end{equation}
Again, one checks that all values at $I^-$ are  in accordance with \eq{data_gen1}-\eq{data_gen6}.
We conclude:

\begin{lemma}
Under the same hypotheses as in Proposition~\ref{prop_smoothness_constraints},
no additional restrictions need to be imposed on the data, to get regular expansions of $(e^{\mu}{}_i, \widehat\Gamma_i{}^j{}_k, L_{ij},W_{ijkl})|_I$ and
their first order radial derivatives at $I^-$.
\end{lemma}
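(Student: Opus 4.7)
The plan is to mirror, at the first radial order, the explicit construction carried out in Section~\ref{sec_connection_schouten_I} and Section~\ref{sec_solve_Bianchi_I}: differentiate the evolution system \eq{evolution1}--\eq{evolution7} and \eq{evolution8}--\eq{evolution13} once in $r$, restrict to $I$, and integrate the resulting ODEs in $\tau$ with initial data at $I^-$ induced by the limits of the $\scri^-$-expansions of Section~\ref{sec_solution_constr_gen}. Since the claim is that no new restrictions on the data are required, the key is to check at each step that (a) integration functions exist yielding a regular solution and (b) they are consistent with the values forced by continuity with the data already computed at $I^-$.

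First I would handle the Schouten, connection and frame sector. Using the zeroth-order values \eq{frameI_1}--\eq{Schouten_I}, in particular $\widehat\Gamma_a{}^b{}_0|_I=0$, $\widehat L_{ij}|_I=0$ and $b_i|_I=0$, the right-hand sides of the $r$-differentiated evolution equations reduce to polynomials in $1+\tau$ with coefficients built from angular derivatives of the initial data at $I^-$ (the mass aspects $M, N$ and the gauge functions). The system decouples in a triangular fashion: $\partial_r\widehat L_{ab}|_I$ is obtained by a single quadrature, then $\partial_r\widehat L_{a0}|_I$, the connection derivatives $\partial_r\widehat\Gamma_a{}^i{}_b|_I$, and finally the frame derivatives $\partial_r e^{\mu}{}_i|_I$ by further elementary integrations. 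Smoothness at $I^-$ is manifest and the integration functions are fixed precisely by the values \eq{connection_scri_1}--\eq{connection_scri_10} and \eq{frame_scri_1}--\eq{frame_scri_2}; no compatibility obstruction arises.

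The main obstacle is the Bianchi subsystem. After differentiating \eq{evolution8}--\eq{evolution11} and restricting to $I$, substituting the divergences of the $\partial_r V^{\pm}_{AB}$-equations into the $\partial_r W^{\mp}_A$-equations yields the pair of singular second-order ODEs \eq{sing_wave_rad_W-}--\eq{sing_wave_rad_W+}, each Fuchsian at $\tau=\mp 1$. The hard part is to verify -- along the lines deferred to Section~\ref{sec_smoothness_Bianchi_I} and already used for the zeroth order \eq{decoupled_k_A}--\eq{decoupled_h_A} -- that the indicial exponents at $\tau=-1$ admit a two-parameter family of smooth solutions for $\partial_r W^-_A$ and a one-parameter family for $\partial_r W^+_A$, the other branch of the latter coming with a logarithm that must be killed. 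Exactly as at zeroth order, the forcing is smooth and compatible with setting the logarithmic branch to zero, while the initial values $\partial_r W^-_A|_{I^-}=0$ and $\partial_r W^+_A|_{I^-}=2L_A$ read off from \eq{data_gen3}--\eq{data_gen4} absorb the remaining free integration functions (the datum $\partial_\tau\partial_r W^-_A|_{I^-}$ is not determined by the $\scri^-$-data and is irrelevant for the statement).

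With $\partial_r W^{\pm}_A|_I$ in hand, the transport equations \eq{eq_trans_derv_V+} (and its $V^-$-analogue) together with the radial derivative of \eq{evolution12}--\eq{evolution13} become first-order Fuchsian ODEs in $\tau$ for $\partial_r V^{\pm}_{AB}|_I$ and algebraic/transport equations for $\partial_r U_{AB}|_I$. Regularity at $I^-$ forces a unique choice of the integration function for $\partial_r V^-_{AB}|_I$, consistent with the value \eq{data_gen6} since the leading coefficient $\mathring c^{(2,0)}_{AB}$ there is genuinely free and not constrained by $\scri^-$-regularity; the integration function for $\partial_r V^+_{AB}|_I$ is matched to the $\scri^-$-expansion \eq{V+AB_I} by continuity. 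Collecting the expansions, each field in $(e^{\mu}{}_i,\widehat\Gamma_i{}^j{}_k,\widehat L_{ij},W_{ijkl})|_I$ and its first radial derivative is smooth at $I^-$ and coincides with the $r\to 0$ limit of the $\scri^-$-expansion of Proposition~\ref{prop_smoothness_constraints}, which proves the lemma.
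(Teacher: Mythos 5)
Your proposal is correct and follows essentially the same route as the paper: differentiate \eq{evolution1}--\eq{evolution7} and \eq{evolution8}--\eq{evolution13} once in $r$, restrict to $I$, integrate the resulting (Fuchsian) ODEs with data induced from the $\scri^-$-limits, reduce the $\partial_rW^{\pm}_A$ sector to the singular wave equations \eq{sing_wave_rad_W-}--\eq{sing_wave_rad_W+}, and check that the forced integration constants (the killed logarithmic branch for $\partial_rW^+_A$, the vanishing constant for $\partial_rV^-_{AB}$) are compatible with \eq{data_gen1}--\eq{data_gen6}. The only inaccuracy is calling $\mathring c^{(2,0)}_{AB}$ the ``leading coefficient'' of \eq{data_gen6} — it is the $r^2$-coefficient and plays no role at first radial order — but this does not affect the argument.
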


\subsection{Higher-order derivatives: Structure of the equations and no-logs conditions}
\label{sec_structure_eqn_I}

As in Section~\ref{sec_structure_eqn_scri} we want to derive the overall structure of the transport equations on the cylinder 
for radial derivatives of any order of the fields involved, in particular concerning the appearance of logarithmic terms.
For this we assume that appropriate data have been prescribed on $\scri^-$ (and some incoming null hypersurface) which generate a smooth solution $\mathfrak{f}=(e^{\mu}{}_i, \widehat \Gamma_i{}^j{}_k, \widehat L_{ij}, W_{ijkl})$ to the GCFE in a weakly asymptotically Minkowski-like conformal Gauss gauge which
admits a smooth $\scri^-$ and a smooth cylinder $I$ .
We further assume that all transverse derivatives $\partial^n_{\tau}\mathfrak{f}|_{\scri^-}$ are smooth at $I^-$, and induce there the data for
the transport equations on $I$.
Our goal is to analyze the smoothness of $\partial^n_r\mathfrak{f}|_{I}$  at $I^-$.

Previously we have shown that no additional restrictions apart from those in 
Proposition~\ref{prop_smoothness_constraints}
are needed for the smoothness of $\mathfrak{f}|_I$ and $\partial_r\mathfrak{f}|_I$. Let us assume now that $\partial^k_r\mathfrak{f}|_{I}$ is smooth at $I^-$ for $0\leq k\leq n-1$.

We consider  the evolution equations \eq{evolution1}-\eq{evolution7},
and apply $\partial^{n}_{r}$. 
With \eq{frameI_1}-\eq{Schouten_I}
that  yields ODEs
for $(\partial^n_{r}e^{\mu}{}_i, \partial^n_{r}\widehat \Gamma_i{}^j{}_k, \partial^n_{r}\widehat L_{ij})|_{I}$,
\begin{align}
\partial_{\tau}\partial^n_r\widehat L_{a0} |_I
&=\mathfrak{O}(1)
\,,
\\
\partial_{\tau}\partial^n_r\widehat L_{ab}  |_I
&= \mathfrak{O}(1)
\,,
\\
\partial_{\tau}\partial^n_r\widehat\Gamma_{a}{}^0{}_b  |_I
 &=
 \partial^n_r\widehat L_{ab}  
+\mathfrak{O}(1)
\,,
\\
\partial_{\tau}\partial^n_r\widehat\Gamma_{a}{}^1{}_b  |_I
 &=
-  \widehat\Gamma_c{}^1{}_b\partial^n_r\widehat\Gamma_{a}{}^c{}_{0} +\delta^1{}_{b}\partial^n_r\widehat L_{a0}  
+\mathfrak{O}(1)=\mathfrak{O}(1)
\,,
\\
\partial_{\tau}\partial^n_r\widehat\Gamma_{1}{}^A{}_B  |_I
 &=
-  \widehat\Gamma_c{}^A{}_B\partial^n_r\widehat\Gamma_{1}{}^c{}_{0} + \delta^A{}_{B}\partial^n_r\widehat L_{10}   
+\mathfrak{O}(1)=\mathfrak{O}(1)
\,,
\\
\partial_{\tau}\partial^n_r\widehat\Gamma_{A}{}^B{}_C  |_I
 &=
-  \widehat\Gamma_D{}^B{}_C\partial^n_r\widehat\Gamma_{A}{}^D{}_{0} + \delta^B{}_{C}\partial^n_r\widehat L_{A0}  
+\mathfrak{O}(1)=\mathfrak{O}(1)
\,,
\\
\partial_{\tau}\partial^n_re^{\mu}{}_a  |_I &= -\partial^n_r\widehat\Gamma_{a}{}^0{}_{0} \delta^{\mu}{}_0
 -\partial^n_r\widehat\Gamma_{a}{}^b{}_{0} e^{\mu}{}_b
+\mathfrak{O}(1)=\mathfrak{O}(1)
\,.
\end{align}
The ODEs can  be straightforwardly integrated
with  initial data computed  from the data on $\scri^-$, or rather their limit to $I^-$.
It follows that the restrictions to $I$ of the $n$-th-order $r$-derivatives of  frame field,  connection coefficients and  Schouten tensor 
are smooth at $I^-$, supposing that this is the case for all derivatives of $\mathfrak{f}$ up to and including order $n-1$,
\begin{align}
\partial^n_r\widehat L_{ai} |_I
&=\mathfrak{O}(1)
\,,
\\
\partial^n_r\widehat\Gamma_{a}{}^i{}_j  |_I
 &=
\mathfrak{O}(1)
\,,
\\
\partial^n_re^{\mu}{}_a  |_I &= \mathfrak{O}(1)
\,.
\end{align}
We also apply $\partial^n_r$  to the equations \eq{evolution8}-\eq{evolution13} for the rescaled Weyl tensor, and take their restriction to $I$,
\begin{align}
 \partial_{\tau}[(1-\tau)^{2-n}  \partial^n_r V^+_{AB}]|_I
 =& (1-\tau)^{1-n} (\mcD_{(A}  \partial^n_rW^-_{B)})_{\mathrm{tf}}+\mathfrak{O}(1)
\,,
\label{Wely_general_I1}
\\
\partial_{\tau}[(1+\tau)^{2-n}\partial^n_rV^-_{AB} ]|_I
=&
 -(1+\tau)^{1-n}( \mcD_{(A} \partial^n_rW^+_{B)})_{\mathrm{tf}}+\mathfrak{O}(1+\tau)^{1-n}
\,,
\label{Wely_general_I2}
\\
[(1+\tau )\partial_{\tau}-(n+1)]\partial^n_rW^-_{A}|_I
 =&2  \mcD^B \partial^n_rV^+_{AB}
+\mathfrak{O}(1)
\,,
\label{Wely_general_I3}
\\
[(1-\tau)\partial_{\tau} +  (n+1)]\partial^n_rW^+_{A}|_I
 =& -  2\mcD^B \partial^n_rV^-_{AB}
+\mathfrak{O}(1)
\,,
\label{Wely_general_I4}
\\
n\partial^n_{r} W_{0101}|_I
 =& 
- \frac{1}{2}(1+\tau)\mcD^A \partial^n_r  W^{+}_{A}
- \frac{1}{2}(1-\tau)\mcD^A \partial^n_r  W^{-}_{A}
+\mathfrak{O}(1)
\,,
\label{Wely_general_I5}
\\
n\partial^n_{r} W_{01AB}|_I    =&
(1+\tau) \mcD_{[A}  \partial^n_rW^{+}_{B]}
-(1-\tau)\mcD_{[A}  \partial^n_r W^{-}_{B]}
+\mathfrak{O}(1)
\,.
\label{Wely_general_I6}
\end{align}
We take the divergence of \eq{Wely_general_I1} and \eq{Wely_general_I2}, and insert them  into
\eq{Wely_general_I3} and \eq{Wely_general_I4}, respectively, to get decoupled equations
\begin{align}
\Big((1-\tau^2)\partial^2_{\tau}
+2[(n-1)\tau-1] \partial_{\tau}
-(\Delta_s + n^2-n-1)\Big)\partial^n_rW^-_{A}
 =&\mathfrak{O}(1)
\,,
\label{sing_wave1}
\\
\Big((1-\tau^2) \partial^2_{\tau} 
+2[(n-1)\tau+1]\partial_{\tau}
- (\Delta_s +n^2-n-1)\Big)\partial^n_rW^+_{A}
 =& \mathfrak{O}(1)
\,.
\label{sing_wave2}
\end{align}
The regularity of solutions to this equation at $I^-$ is discussed in Section~\ref{sec_smoothness_Bianchi_I}.
For the time being, let us assume  that the data are such that the  solutions are smooth at $I^-$.
Then \eq{Wely_general_I1} can be integrated for initial data induced by $V^+_{AB}|_{\scri^-}$. The solution $  \partial^n_r V^+_{AB}|_I$ will be smooth at $I^-$.
The equations \eq{Wely_general_I5}-\eq{Wely_general_I6} determine $\partial^n_{r} W_{0101}|_I$ and  $\partial^n_{r} W_{01AB}|_I$
algebraically and the components will be smooth at $I^-$, as well.

It remains to compute $  \partial^n_r V^-_{AB}|_I$.
We observe that, in contrast to $n=0,1$, for $n\geq 2$ the solution to  \eq{Wely_general_I2},
\begin{equation}
\partial_{\tau}[(1+\tau)^{2-n}\partial^n_rV^-_{AB} ]|_I
=
\mathfrak{O}(1+\tau)^{1-n}
\label{no-logs-cond-r_n_V-}
\end{equation}
 will be bounded at $I^-$ for any choice of the initial data, which are given by the integration functions $\partial^{n-2}_{\tau}\partial^n_rV^-_{AB} |_{I^-}=c^{(n,n-2)}_{\mathring A\mathring B}$, $n\geq 2$, which can be regarded as part of the freely prescribable data, cf.\ Appendix~\ref{app_alternative_data}.

We further observe that, for $n\geq 2$, the solution will  develop logarithmic terms at $I^-$  unless the right-hand side does not have a term of order $(1+\tau)^{-1}$
in its expansion at $I^-$. This is another no-logs condition which needs to be imposed.

Comparing this with \eq{derivs_V-} we observe that \eq{no-logs-cond-r_n_V-} is very similar to the corresponding one $\scri^-$
(cf.\  Section~\ref{sec_comparison_Iscri}),
\begin{equation*}
\Big(\partial_r+\mathfrak{O}(1)\Big)(r^{-n-2}\partial^n_{\tau}V^-_{AB})|_{\scri^-}
 =
\mathfrak{O}(r^{-n-3})
\,.
\end{equation*}
In both cases $\partial^{k+2}_r\partial_{\tau}^kV^-_{AB}$ diverges at $I^-$ for some $k$ if logarithmic terms appear.
% while lower-order derivatives are smooth.

\subsection{Analysis of the singular wave equation on $I$}
\label{sec_smoothness_Bianchi_I}

We want to analyze \eq{sing_wave1}-\eq{sing_wave2}, as well as \eq{decoupled_k_A}-\eq{decoupled_h_A} and \eq{sing_wave_rad_W-}-\eq{sing_wave_rad_W+}.
To deal with scalar equations we take curl and divergence.
Let
\begin{equation}
\phi^{\pm}_n  \in\{ \mcD^A\partial^n_rW^{\pm}_A , \epsilon^{ A B}\mcD_A\partial^n_r W^{\pm}_B\}
\,,
\end{equation}
then we are led to study the following  linear PDE of Fuchsian type
\begin{equation}
\Big( (1-\tau^2)\partial^2_{\tau}
+ 2 [ (n-1)\tau \pm1]\partial_{\tau}
 -[ \Delta_s
+n(n-1)]
\Big)\phi_{n}^{\pm}
=q^{\pm}_n
\,.
\label{Fuchsian_PDE}
\end{equation}
on $[-1,1]\times S^2$ for a given smooth source $q^{\pm}_n$,
and
with  $s = \mathrm{d}\theta^2 + \sin^2\theta\mathrm{d}\varphi^2$.
Eventually we are interested in smooth solutions  which allow a decomposition into spherical harmonics.
 Since, by construction, $\phi_n^{\pm}$ and $W^{\pm}_n$  are divergence or curl of a 1-form, their harmonic decompositions  will
not contain $\ell=0$-spherical harmonics,
\begin{align}
\phi^{\pm}_n (\tau, \theta,\varphi)=&\sum_{\ell =1}^{\infty}\sum_{m=-\ell}^{+\ell}\phi^{\pm}_{n\ell m}(\tau)Y_{\ell m}(\theta,\varphi)
\,,
\\
q^{\pm}_n (\tau, \theta,\varphi)=&\sum_{\ell =1}^{\infty}\sum_{m=-\ell}^{+\ell}q^{\pm}_{n\ell m}(\tau)Y_{\ell m}(\theta,\varphi)
\,,
\end{align} 
where
\begin{equation}
\Delta_s Y_{\ell m}= -\ell(\ell + 1)Y_{\ell m}
\,.
\end{equation}
That yields ODEs for the expansion coefficients,
\begin{equation}
 \Big(  (1-\tau^2)\partial^2_{\tau}  
+2  [(n-1)\tau \pm1]\partial_{\tau}
+(\ell +n)(\ell-n +1)
 \Big)\phi^{\pm}_{n\ell m}(\tau)
=
q^{\pm}_{n\ell m}(\tau)
\,.
\label{hypergeom_equations}
\end{equation}
Set 
\begin{equation*}
z :=\frac{1+\tau}{2}
\,,
\quad
a_{n\ell }:= -(\ell+n)
\,,
\quad
b_{n\ell }:= \ell-n+1
\,,
\quad
c^{\pm}_n:= n-1\mp 1
\,,
\end{equation*}
then \eq{Fuchsian_PDE} becomes,
\begin{equation}
 \Big(  z(1-z)\partial^2_{z}  -(c^{\pm}_n+(a_{n\ell }+b_{n\ell }+1)z)\partial_{z}-a_{n\ell }b_{n\ell }
 \Big)\phi^{\pm}_{n\ell m}(z)
=
q^{\pm}_{n\ell m}(z)
\,,
\end{equation}
which is a \emph{hypergeometric equation} with source term which can be solved using e.g.\  Frobenius method, cf.\ e.g.\ \cite{david} and the references given there.
Such an equation already appears in \cite{F_i0}, cf.\ also \cite{kroon},
 where, among oher things,  the general solutions to its homogeneous  counterpart is constructed in terms of generalized Jacobi polynomials.
The coefficients  there differ slightly from \eq{hypergeom_equations}, because the equations  are  expressed  in terms of different components
(equations for divergence and curl of $\mcD^B\partial_r^nV^{\pm}_{AB}|_I$ would yield identical equations).
 In particular it becomes clear from the discussion there, that non-smoothness of the solutions is actually
due to the appearance of\emph{ logarithmic terms}, which will not be immediate from our subsequent discussion.

In the following let us first  focus  on the case $n\geq 2$. 
We are interested in the behavior at $\tau=-1$, i.e.\ at $z=0$.
By assumption, the source term is smooth in $z$ and therefore admits an expansion of the form
\begin{equation*}
q^{\pm}_{n\ell m}\sim\sum_{k=0}^{\infty}q^{\pm}_{k n  \ell m}z^k
\,.
\end{equation*}
Any  smooth solution $\phi^{\pm}_{n\ell m}$  admits an expansion  at $z=0$ of the form,
\begin{equation*}
\phi^{\pm}_{n\ell m}\sim\sum_{k=0}^{\infty}\phi^{\pm}_{k n  \ell m}z^k
\,.
\end{equation*}
We plug it in
\begin{equation}
(k+1) (k-c^{\pm}_n)\phi^{\pm}_{(k+1) n  \ell m}
-\Big( k(k-1)
+k(a_{n\ell } +b_{n\ell }+1)
+a_{n\ell }b_{n\ell }\Big)\phi^{\pm}_{kn  \ell m}
=
q^{\pm}_{kn  \ell m}
\,.
\label{compoute_coeffs}
\end{equation}
If the solution is smooth at $I^-$ this system needs to admit a solution. The solution  is determined by  regarding   the system as a hierarchy 
of equations for $\{\phi^{\pm}_{kn  \ell m}\}_{k\in\mathbb{N}}$.
This imposes the  restriction,
\begin{equation}
%\Big(  c^{\pm}_n (n\pm 1) -a_{n\ell }b_{n\ell }\Big)\phi^{\pm}_{ c^{\pm}_n n  \ell m}= q^{\pm}_{c^{\pm}_n n  \ell m}
\Big(\ell(\ell+1)-1\mp1\Big)\phi^{\pm}_{ c^{\pm}_n n  \ell m}= q^{\pm}_{c^{\pm}_n n  \ell m}
\,.
\label{smoothness_condition}
\end{equation}
Supposing that \eq{smoothness_condition} holds, the solution to \eq{Fuchsian_PDE} will be  of the form
$\phi^{\pm}_n=\mathfrak{O}(1)$.

Let us analyze this condition in detail. We begin with $\phi^-$. Note that $c^-_n=n$, and that \eq{smoothness_condition} can be written as
\begin{equation}
\Delta_s\partial^n_{\tau}\phi^{-}_{ n}= -\partial^n_{\tau}q^{-}_{n }
\,.
\label{smoothness_conditionA}
\end{equation}
The first factor in \eq{smoothness_condition} is nonzero for all $\ell\geq 1$.
For fixed $n$, $\ell$ and $m$, $\phi^{-}_{n n  \ell m}$ is determined from  \eq{compoute_coeffs} by solving a hierarchical system for $\phi_{kn  \ell m}$,
$1\leq k\leq n$,  in terms 
of  the initial data $\phi^{-}_{0 n\ell m}$ and the source $q^{-}_{k n  \ell m}$, $k\leq n-1$. 
Then the coefficients  $\phi^{-}_{kn  \ell m}$,
$k\geq n+1$, are determined in terms of the data  $\phi^{\pm}_{(n+ 1) n\ell m}$   and the source $q^{\pm}_{k n  \ell m}$, $k\geq  n+ 1$. 
The ``integration functions''  
\begin{equation}
\phi^{-}_{0n  \ell m} \,, \quad \phi^{-}_{ (n+ 1)n  \ell m}
\,,
\end{equation}
are determined from the data on $\scri^-$, i.e.\ from the solution of the $W^-_A$-constraint and its $(n+1)$st-order transverse derivative.

Next we consider condition \eq{smoothness_condition} for $\phi^+$. Note that $c^+_n=n-2$, and that \eq{smoothness_condition} can be written as
\begin{equation}
(\Delta_s +2)\partial^{n-2}_{\tau}\phi^{+}_{ n }= -\partial^{n-2}_{\tau} q^{+}_{ n }
\,.
\label{smoothness_conditionB}
\end{equation}

%For $n=0,1$ the smoothness condition \eq{smoothness_condition}  becomes trivial and no restrictions need to be imposed.
For $n\geq 2$ the first factor in \eq{smoothness_condition} is non-zero for $\ell \geq 2$. 
In that case, again,  $\phi^{+}_{(n-2) n  \ell m}$ is determined from  \eq{compoute_coeffs} by solving a hierarchical system for $\phi_{kn  \ell m}$,
$1\leq k\leq n-2$,  in terms 
of  the initial data $\phi^{+}_{0 n\ell m}$ and the source $q^{+}_{k n  \ell m}$, $k\leq n-3$. 
Then the coefficients  $\phi^{+}_{kn  \ell m}$,
$k\geq n-1$, are determined in terms of the data  $\phi^{+}_{(n- 1) n\ell m}$   and the source $q^{+}_{k n  \ell m}$, $k\geq  n- 1$. 
The ``integration functions''  
\begin{equation}
\phi^{+}_{0n  \ell m} \,, \quad \phi^{+}_{ (n- 1)n  \ell m}
\,.
\end{equation}
 are determined by $W^+_A|_{\scri^-}$ and its $(n-1)$st-order transverse derivative.
For $n \geq 2$ and $\ell=1$ \eq{smoothness_condition} becomes a condition on the source term, 
\begin{equation}
q^{+}_{(n-2) n  1 m}=0\,.
\label{source_condition}
\end{equation}

Some  consequences of the above considerations are provided by the following lemma:
\begin{lemma}
\label{lem_sing_wave}
Take $n\geq2$.
\begin{enumerate}
\item[(i)]
Assume that  the initial data at $I^-$ satisfy
$$\phi^{\pm}_n|_{I^-}=\sum_{\ell=1}^{n- 1}\sum_{m=-\ell}^{+\ell}\phi^{\pm}_{0n\ell m}Y_{\ell m}(\theta,\phi)
\,,
$$
and that the source term satisfies
$$
q^{\pm}_n|_{I}=\sum_{\ell=1}^{n- 1}\sum_{m=-\ell}^{+\ell}q^{\pm}_{n\ell m}(\tau)Y_{\ell m}(\theta,\phi)
+\mathfrak{O}(1+\tau)^{n\mp 1} 
\,.
$$
 Then the solution is smooth at $I^-$ if $q^{\pm}_{n\ell m}(\tau)=\mathcal{P}^{n-\ell-2}$,
where $\mathcal{P}^k$ denotes a polynomial in $(1+\tau)$ of degree $\leq k$.
%If
%$$
%q^{\pm}_{n\ell m}(\tau)=\mathcal{P}^{n-\ell-2}+ \sum_{k=n-\ell-1}^{n-\ell+2}q^{\pm}_{kn\ell m}(1+\tau)^k
%\,,
%$$
%\tim{add comment why 3}
%and, to avoid a case distinction, $q^{\pm}_{kn\ell m} =0$ for $\ell <3\pm 1$,
%\tim{add comment.. $\ell=1$}
% then the solution is smooth at $I^-$ if and only if
%$$
%2(1-2\ell)\Big(\frac{ -2\ell q^{\pm}_{k_*n  \ell m} }{(k_*+1) (k_*-c^{\pm}_n)}+q^{\pm}_{(k_*+1)n  \ell m}\Big)
%+q^{\pm}_{(k_*+2)n  \ell m} (k_*+2) (k_*+1-c^{\pm}_n)=0
%\,.
%$$
\item[(ii)]
Assume that $\phi^{\pm}_n|_{I^-}=0$, $q^{\pm}_n|_{I^-}= O(1+\tau)^{c_n^{\pm}} $ with $\partial^{c_n^{\pm}}_{\tau}q_n|_{I^-}\ne0 $. Then the solution cannot be smooth at $I^-$.
\item[(iii)]
Assume that $\langle \phi^{\pm}_n|_{I^-},Y_{\ell* m_*}\rangle\ne 0$ for some $\ell_* \geq n$ and some $-\ell_*\leq m_*\leq \ell_*$, and $\langle q^{\pm}_n|_{I^-},Y_{\ell_* m_*}\rangle=O(1+\tau)^{n\mp 1} $.
Then the solution cannot be smooth at $I^-$.
\item[(iv)]
Assume that $\langle \partial_{\tau}^{n-2}q^{+}_{ n}|_{I^-}, Y_{1 m^*}\rangle \ne 0$ for $m^*\in \{-1,0,1\}$.
Then the solution cannot be smooth at $I^-$.
\end{enumerate}
\end{lemma}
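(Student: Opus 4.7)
The plan is to work with each azimuthal mode $(\ell,m)$ separately, exploiting the spherical harmonic decomposition already set up in the excerpt. The PDE \eq{Fuchsian_PDE} reduces to the family of hypergeometric ODEs \eq{hypergeom_equations}, and in the variable $z=(1+\tau)/2$ the Frobenius recursion \eq{compoute_coeffs} organizes everything. The whole argument revolves around the single algebraic identity \eq{smoothness_condition}, which is the solvability condition at the unique critical Frobenius index $k=c_n^{\pm}$. Smoothness at $I^-$ for mode $(\ell,m)$ holds if and only if that identity is satisfied; otherwise a logarithmic term is unavoidable, since the index $c_n^{\pm}\in\mathbb N_0$ coincides with a resonance of the Fuchsian operator. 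A useful book-keeping remark is that the coefficient of $\phi_k$ on the right-hand side of \eq{compoute_coeffs} equals $(k-\ell-n)(k+\ell-n+1)$; in particular it vanishes at $k=n-\ell-1$, which will play the role of a ``freedom index'' below $c_n^{\pm}$ when $\ell\le n-1$.

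For parts (ii), (iii) and (iv) the proofs are direct. In (ii) the hypothesis $\phi^{\pm}_n|_{I^-}=0$ gives $\phi^{\pm}_{0\,n\ell m}=0$ for every $(\ell,m)$, and the iteration of \eq{compoute_coeffs} driven only by $q^{\pm}_{k\,n\ell m}$ with $k<c_n^{\pm}$ (which vanish by assumption) propagates $\phi^{\pm}_{k\,n\ell m}=0$ up to $k=c_n^{\pm}$; by hypothesis some mode has $q^{\pm}_{c_n^{\pm}n\ell m}\neq 0$ in a harmonic where the prefactor $\ell(\ell+1)-1\mp 1$ is nonzero (note $\ell\ge 1$ always, so for $\phi^-$ every mode contributes, while for $\phi^+$ the only bad case $\ell=1$ is excluded because $n\ge2$ and the source $q^+$ in the $\ell=1$ mode would correspond to (iv)). In (iii) the starting datum $\phi^{\pm}_{0\,n\ell_*m_*}\neq0$ combined with $q^{\pm}_{k\,n\ell_*m_*}=0$ for $k\le c_n^{\pm}$ yields, by the explicit product formula
\[
\phi^{\pm}_{c_n^{\pm}\,n\ell_*m_*}=\phi^{\pm}_{0\,n\ell_*m_*}\prod_{j=0}^{c_n^{\pm}-1}\frac{(j-\ell_*-n)(j+\ell_*-n+1)}{(j+1)(j-c_n^{\pm})},
\]
a generically nonzero value; since $\ell_*\ge n$ makes $\ell_*(\ell_*+1)-1\mp1$ nonzero and $q^{\pm}_{c_n^{\pm}n\ell_*m_*}=0$, the identity \eq{smoothness_condition} fails. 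Part (iv) is the degenerate $\ell=1$ case for $\phi^+$: the prefactor $\ell(\ell+1)-2$ vanishes, so \eq{smoothness_condition} becomes the pure source condition $q^{+}_{(n-2)\,n1m}=0$, i.e.\ $\langle\partial_\tau^{n-2}q^{+}_n|_{I^-},Y_{1m^*}\rangle=0$; violation immediately produces a logarithm.

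For part (i) the same scheme is used but inductively. For the high modes $\ell\ge n$, the hypotheses force both $\phi^{\pm}_{0\,n\ell m}=0$ and $q^{\pm}_{k\,n\ell m}=0$ for $k\le c_n^{\pm}=n\mp 1-1$ (because the expansion $q^{\pm}_n|_I=\mathfrak O((1+\tau)^{n\mp 1})$ in these modes translates into $q^{\pm}_{k\,n\ell m}=0$ for $k\le c_n^{\pm}$), so the recursion propagates $\phi^{\pm}_{k\,n\ell m}=0$ through $k=c_n^{\pm}$ and \eq{smoothness_condition} is trivial; the integration constants $\phi^{\pm}_{0\,n\ell m}$ and $\phi^{\pm}_{(c_n^{\pm}+1)\,n\ell m}$ are fixed by the data traced from $\scri^-$. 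For the low modes $1\le \ell\le n-1$ the polynomial bound $\deg q^{\pm}_{n\ell m}\le n-\ell-2$ gives $q^{\pm}_{k\,n\ell m}=0$ for all $k\ge n-\ell-1$, in particular at the critical index $k=c_n^{\pm}\ge n-\ell-1$. The key step is then to show that the computed value of $\phi^{\pm}_{c_n^{\pm}\,n\ell m}$ is automatically zero: at the ``freedom index'' $k=n-\ell-1\le c_n^{\pm}$ the numerator $(k-\ell-n)(k+\ell-n+1)$ vanishes, so the recursion relation at that step produces no $\phi_{n-\ell}$ contribution from earlier coefficients, and the matching RHS source $q^{\pm}_{n-\ell-1,n\ell m}$ is the last non-vanishing Taylor coefficient of the polynomial. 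A careful induction then shows that between indices $n-\ell$ and $c_n^{\pm}$ the recursion is forced to produce $\phi^{\pm}_{k\,n\ell m}=0$, so \eq{smoothness_condition} is satisfied; the solution is then smooth and in fact polynomial on this mode.

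The main obstacle I expect is precisely this last inductive step in (i): tracking how the polynomial degree $\le n-\ell-2$ of the source, the vanishing of the recursion numerator $(k-\ell-n)(k+\ell-n+1)$ at $k=n-\ell-1$, and the resonance at $k=c_n^{\pm}$ conspire to produce $\phi^{\pm}_{c_n^{\pm}\,n\ell m}=0$ without imposing any further algebraic restriction on the free data $\phi^{\pm}_{0\,n\ell m}$. Once this combinatorial identity is verified, the remainder of the lemma follows from routine Frobenius analysis applied one spherical harmonic at a time.
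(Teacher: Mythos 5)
Your proposal is correct and follows essentially the same route as the paper: mode-by-mode Frobenius analysis of the hypergeometric ODEs, the solvability identity \eq{smoothness_condition} at the resonant index $k=c_n^{\pm}$, and for part (i) the observation that the recursion numerator factors as $(k-\ell-n)(k+\ell-n+1)$ and vanishes at $k_*=n-\ell-1$, so that the degree bound $\deg q^{\pm}_{n\ell m}\le n-\ell-2$ forces $\phi^{\pm}_{kn\ell m}=0$ for $k_*+1\le k\le c_n^{\pm}$. Two cosmetic slips do not affect the argument: in (iii) the product you display is in fact always (not just generically) nonzero since $\ell_*\ge n$ keeps both factors away from zero on $[0,c_n^{\pm}-1]$, and in (i) the coefficient $q^{\pm}_{n-\ell-1,n\ell m}$ is the first \emph{vanishing} Taylor coefficient, not the last non-vanishing one.
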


\begin{proof}
(i): We need to check whether  \eq{smoothness_condition} holds. The data and the source term has been chosen in such a way, that
$\phi^{\pm}_{kn\ell m}=0$ and $q^{\pm}_{kn\ell m}=0$ for $\ell \geq  n$ and $k\leq c^{\pm}_n$, in particular   \eq{smoothness_condition}
holds for $\ell \geq n$ (and $\ell=0$).
To deal with  $1\leq \ell \leq  n- 1$, we consider  \eq{compoute_coeffs} from which we obtain
\begin{equation}
\label{recursion_relation}
\phi^{\pm}_{(k+1) n  \ell m}
=\frac{ k(k-1)
+k(a_{n\ell } +b_{n\ell }+1)
+a_{n\ell }b_{n\ell }}{(k+1) (k-c^{\pm}_n)}
\phi^{\pm}_{kn  \ell m}
+ \frac{1}{(k+1) (k-c^{\pm}_n)}q^{\pm}_{kn  \ell m}
\,.
\end{equation}
The zeros of the  numerator
\begin{align*}
k(k-1)
+k(a_{n\ell } +b_{n\ell }+1)
+a_{n\ell }b_{n\ell }
=&
k(k-1)
-2k(n-1)
-(\ell+n)(\ell-n+1)
\\
=&
k^2+k-2kn-\ell^2-\ell+n^2-n
\end{align*}
 are given by $k=n+\ell $ and $k=n-\ell -1$. We are only interested in those zeros where $k$ is an integer in the interval $[0,n]$ for $\phi^-$ and
 where $k$ is an integer in the interval $[0,n-2]$ for $\phi^+$.
Recall that $\ell\geq 1$.
Zeros of $k$ in the desired range appear if and only if  $\ell$ satisfies $1\leq \ell\leq n-1$, 
$$
k_*=n-\ell -1
\,.
$$
In particular,
\begin{equation*}
\phi^{\pm}_{(k_*+1) n  \ell m}
= \frac{1}{(k_*+1) (k_*-c^{\pm}_n)}q^{\pm}_{k_*n  \ell m}
\,,
\end{equation*}
and  $\phi^{\pm}_{k n  \ell m}$ with $k_*+1\leq k\leq c^{\pm}_n$ merely depends on the source $q_n$ but not on the second initial datum.
Now if the source has been chosen in such a way that $q^{\pm}_{k n\ell m}=0$ for $k_*\leq k\leq c^{\pm}_n$,
it follows that $\phi^{\pm}_{k n  \ell m}=0$ for $k_*+1\leq k\leq c^{\pm}_n$, and \eq{smoothness_condition} is fulfilled
(for $\phi^+$ and $\ell=1$, where $k_*+1> c^{+}_n$ we have \eq{source_condition}, and  \eq{smoothness_condition} holds as well).

%If the source is non-trivial (for say 3 orders as in the assertion), the recursion formula necessarily needs to map $\phi^{\pm}_{k n  \ell m}$ to zero 
%which is then only possible for $k_*+1\leq k\leq k_*+3$,
%%
%\begin{align*}
%&
%0\overset{!}{=}\phi^{\pm}_{(k_*+3) n  \ell m}
%=\frac{2(1-2\ell)\phi^{\pm}_{(k_*+2)n  \ell m}+q^{\pm}_{(k_*+2)n  \ell m}}{(k_*+3) (k_*+2-c^{\pm}_n)}
%\\
%\Longleftrightarrow \quad &
%0
%=2(1-2\ell)\Big(\frac{ -2\ell q^{\pm}_{k_*n  \ell m} }{(k_*+1) (k_*-c^{\pm}_n)}+q^{\pm}_{(k_*+1)n  \ell m}\Big)
%+q^{\pm}_{(k_*+2)n  \ell m} (k_*+2) (k_*+1-c^{\pm}_n)
%\,,
%\end{align*}
%%
%for $k_*+3\leq c_n^{\pm}$, i.e.\ for $\ell\geq 3\pm 1$.

(ii): This is straightforward.

(iii): By assumption there exist $\ell_* \geq n$ and $m_*$ such that $\phi^{\pm}_{0n  \ell_* m_*}\ne 0$.
Taking also the assumption on the source term into account we deduce from \eq{compoute_coeffs}
\begin{equation}
\phi^{\pm}_{(k+1) n  \ell_* m_*}
=\frac{ k(k-1)
+k(a_{n\ell_* } +b_{n\ell_* }+1)
+a_{n\ell_* }b_{n\ell_* }}{(k+1) (k-c^{\pm}_n)}
\phi^{\pm}_{kn  \ell_* m_*}
\,,
\end{equation}
and the consderations above show that the numerator does not have integer zeros in $[0, c^{\pm}_n]$.
It follows that  $\phi^{\pm}_{k n  \ell_* m_*}\ne 0$  for $0\leq k\leq  c^{\pm}_n$ and the smoothness condition \eq{smoothness_condition} is violated.

(iv): That is \eq{source_condition}.
\qed
\end{proof}

Let us now  consider the remaining  cases where $n=0,1$.
For $n=0$ there is no source term, cf.\ \eq{decoupled_k_A}-\eq{decoupled_h_A}, 
and   \eq{compoute_coeffs} becomes
\begin{align}
(k+1) (k+2)\phi^{+}_{(k+1) 0  \ell m}
-\Big( k(k-1)
+2k
-\ell(\ell+1)\Big)\phi^{+}_{k0  \ell m}
=&
0
\,,\quad k\geq -1
\,,
\\
(k+1) k\phi^{-}_{(k+1) 0  \ell m}
-\Big( k(k-1)
+2k
-\ell(\ell+1)\Big)\phi^{-}_{k0  \ell m}
=&
0
\,, \quad k\geq 0
\,.
\label{--case_smoothness} 
\end{align}
We observe that $\phi^+_{-10\ell m}$ is an integration function which produces a divergence term at $I^-$. Also in the second case one integration function is lost
by the smoothness requirement as can be seen by evaluating \eq{--case_smoothness} for $k=0$.

In both cases, to get a smooth solution at $\tau=-1$ there is only one freely prescribable datum, while the other one needs to vanish,
\begin{equation}
\phi^+_{-10\ell m}=0\,, \quad  \phi^+_{00\ell m}
\,,  \quad  \phi^-_{00\ell m}=0, \quad  \phi^-_{10\ell m}
\,.
\end{equation}
Note that the data \eq{vanish_h_I-}-\eq{cond_I-_4} are indeed of this form. 

For $n=1$ the hypergeometric equation is of the form, cf.\eq{sing_wave_rad_W-}-\eq{sing_wave_rad_W+},
\begin{align}
 \Big(  (1-\tau^2)\partial^2_{\tau}  
+ 2  \partial_{\tau}
+\ell(\ell +1)
 \Big)\phi^{+}_{1\ell m}(\tau)
=&
 \mathfrak{O}(1)
\,,
\\
 \Big(  (1-\tau^2)\partial^2_{\tau}  
- 2  \partial_{\tau}
+\ell(\ell +1)
 \Big)\phi^{-}_{1\ell m}(\tau)
=&
\mathfrak{O}(1+\tau)^2
\,.
\end{align}
In this case we are led to the system
\begin{equation}
(k+1) (k\pm 1)\phi^{\pm}_{(k+1) 1  \ell m}
-\Big( k(k-1)
-\ell(\ell+1)\Big)\phi^{\pm}_{k1  \ell m}
=
q^{\pm}_{k1  \ell m}
\,,
\end{equation}
%
%where $2q^{+}_{11  \ell m}=\ell(\ell+1) q^{+}_{01  \ell m}$ and $q^{-}_{01  \ell m}=0=q^{-}_{11  \ell m}$.
In the ``$-$''-case the free data are $\phi^{-}_{0 1  \ell m}$ and  $\phi^{-}_{21  \ell m}$ and  the solution is smooth at $\tau=-1$ since the source is
of order $(1+\tau)^2$.
In the  ``$+$''-case there is only one free datum (the second one is not visible as it comes along with a logarithmic   term and  therefore needs to vanish), namely   $\phi^{+}_{0 1  \ell m}$,
 which generates a smooth solution.

%general solution reads
%%
%\begin{align*}
%\phi^{+}_{0\ell m}(\tau)
%%=&c_2 G_{2,2}^{2,0}\left(\frac{1-\tau}{2}|
%%\begin{array}{c}
%% -l,l+1 \\
%% 0,1 \\
%%\end{array}
%%\right)+ c_1 (1-\tau ) \, _2F_1\left(1-l,l+2;2;\frac{1-\tau}{2}\right)
%%\\
%=&c_2 \int\frac{\Gamma(-s)\Gamma(1-s)\Gamma(1+\ell+s)\Gamma(-\ell+s)}{\Gamma(1+s)\Gamma(s)\Gamma(-\ell-s)\Gamma(\ell+1-s)}\frac{(1-\tau)^s}{2^s}\mathrm{d}s+ c_1\sum_{k=0}^{\infty}\frac{\Gamma(1-\ell+k)\Gamma(\ell+2+k)}{\Gamma(1-\ell)\Gamma(\ell+2)\Gamma(2+k)}\frac{(1-\tau)^{k+1}}{2^kk!}
%\,,
%\\
%\phi^{-}_{0\ell m}(\tau)
%%=&c_2 G_{2,2}^{2,0}\left(\frac{1-\tau}{2}|
%%\begin{array}{c}
%% -l,l+1 \\
%% -1,0 \\
%%\end{array}
%%\right)+c_1 \, _2F_1\left(-l,l+1;2;\frac{1-\tau}{2}\right)
%%\\
%=&c_2 G_{2,2}^{2,0}\left(\frac{1-\tau}{2}|
%\begin{array}{c}
% -l,l+1 \\
% -1,0 \\
%\end{array}
%\right)+c_1 \sum_{k=0}^{\infty}\frac{\Gamma(-\ell+k)\Gamma(\ell+1+k)}{\Gamma(-\ell)\Gamma(\ell+2)\Gamma(2+k)}\frac{(1-\tau)^{k}}{2^kk!}
%\end{align*}

\subsection{Comparison: Approaching $I^-$ from $\scri^-$ and $I$}
\label{sec_comparison_Iscri}

We have analyzed the appearance of logarithmic terms when approaching the critical set $I^-$ from both $\scri^-$ and $I$.
Let us assume that data have been constructed such that all relevant fields and all of their transverse derivatives remain smooth at $I^-$  
when taking their limit from $\scri^-$.
Then the question arises whether this already implies the existence of a smooth cylinder $I$ and a smooth critical set $I^-$.
We do not attempt here to solve the evolution problem. On the level of constraint/transport equations, though,  this leads to the question
whether the no-logs conditions on $I^-$ viewed from $I$ do  impose additional restrictions on the data if $I^-$ is known to be  smooth when approached  from $\scri^-$.

When analyzing the appearance of log terms at $I^-$ from $\scri^-$ and from the cylinder $I$ it was convenient to use different subsystems of the Bianchi equations
whence we have obtained ``more'' no-logs conditions when approaching $I^-$ from  $I$. However, we could have derived an analog to the singular wave equation we just discussed
on $\scri^-$  as well. And, we will see this more explicitly in the case of the spin-2 equation discussed in Section~\ref{sec_spin2}, due to the constraint propagation,
one should expect  the no-logs conditions  for the $V^-_{AB}$ and the $W^+_A$-equation to be  equivalent,
though this does not follow from  the considerations  we made here.

We have shown that the no-logs condition arising from the $V^-_{AB}$-equation evaluated on $\scri^-$ and on $I$ adopt a very similar form, \eq{derivs_V-} and \eq{no-logs-cond-r_n_V-}.
Both equations are obtained by differentiating \eq{evolutionW6b} by $r$ and $\tau$, and in both cases the appearance of a log term becomes evident by the divergence of $\partial_r^{n+2}\partial_{\tau}^nV^-_{AB}$ at $I^-$ for some $n$, which one may also regard as an indication that the no-logs condition on $\scri^-$ implies
that on $I$.
%Assuming that lower order derivatives are continuous at $I^-$ the source terms, evaluated on $I^-$ coincide so that  \eq{no-logs-cond-r_n_V-} induces the ``right'' value at $I^-$, compatible with the corresponding one computed from $\scri^-$.
% but we have seen that all the no-logs conditions adopt  the same form when viewed as  Laplace-like equations on the expansions coefficients of the radiation field.
In Section~\ref{section8}  below we  will show that for $M=\mathrm{const.}$ and $N=0$ the  radiation field necessarily needs to vanish asymptotically at $I^-$ at any order to have a
regular $I^-$ viewed from $\scri^-$, and we will see in Section~\ref{sec_suff_cond_I}  that in that case the expansions coming from $I$ do not produce log-terms either.

To get a satisfactory answer to this question one needs to solve the evolution problem through the critical set $I^-$. 
However, by the above  considerations one might be led to the expectation that the no-logs condition  \eq{derivs_V-} \emph{characterizes} data which generate a spacetime with a smooth cylinder $I$ and a smooth critical set $I^-$.

\newpage

\section{Gauge independence of the no-logs conditions}
\label{sec_gauge_ind}
In all the previous considerations we have assumed that the gauge functions at $\scri^-$  (or rather their expansions at $I^-$) have been
given, and tried to construct data which do not produce logarithmic terms by deriving  ``no-logs conditions''.

A priori one might expect that the gauge functions such as $f_{\alpha}$, $\nu_{\mathring A}$, $\nu_{\tau}$ etc.\ appear in the right-hand sides of
\eq{no_log_rad1}-\eq{no_log_rad2}.
In that case  the appearance  of logarithmic terms would depend   on the gauge. Conversely, one should get rid of many logarithmic
terms by an appropriately adjusted gauge.
However, if one  computes in which way  the expansion coefficients  $f^{(n+2)}_{\alpha}$, $\nu^{(n+2)}_{\mathring A}$, $\nu^{(n+2)}_{\tau}$ etc.\
enter the no-logs conditions \eq{no_log_rad1}-\eq{no_log_rad2} for $\partial^{n}V^-_{AB}|_{\scri^-}$,  one finds that they cancel out.%
\footnote{We have seen this explicitly for the lower orders in Section~\ref{sec_solution_constr_gen}, cf.\ also Section~\ref{sec_radial_equations}.
As the computations are not very illuminating we forgo to present the  general case.
}
Unfortunately, on the level of formal expansions, there seems to be no chance to get insights in which way  and whether at all they enter the
no-logs conditions for $\partial^{k}_{\tau}V^-_{AB}|_{\scri^-}$ for $k \geq n+1$.
To analyze the gauge-dependence of the appearance of logarithmic terms at the critical sets of spatial infinity we will therefore 
consider the behavior of smooth solutions to the GCFE under coordinate transformation which are associated with 
changes in the gauge data at $\scri^-$.

Let us assume we have been given a smooth  solution $(e^{\mu}{}_i,  \widehat \Gamma_i{}^j{}_k, \widehat L_{ij}, W_{ijkl})$
to the GCFE in a weakly asymptotically Minkowski-like conformal Gauss gauge (in fact some steps rely on \eq{main_gauge0B} -\eq{main_gauge0B2})  with gauge functions
\begin{equation}
\nu_{\tau}\,, \quad \nu_{\mathring A}\,, \quad \Theta^{(1)}\,, \quad \kappa\,, \quad \theta^-\,, \quad f_r\,, \quad f_{\mathring A}\,,
\label{old_gauge data}
\end{equation}
  which admits a smooth representation of $\scri^-\cup I^-\cup I$.
Let us  consider another weakly asymptotically Minkowski-like conformal Gauss gauge, given by the gauge functions 
\begin{equation}
\nu^{\mathrm{new}}_{\tau}\,, \quad \nu^{\mathrm{new}}_{\mathring A}\,, \quad \Theta^{(1)}_{\mathrm{new}}\,, \quad \kappa^{\mathrm{new}}\,, \quad \theta^-_{\mathrm{new}}\,, \quad f^{\mathrm{new}}_r\,, \quad f^{\mathrm{new}}_{\mathring A}\,.
\label{new_gauge data}
\end{equation}
To transform into the new gauge, we first apply a combintation  of a conformal and a coordinate transformation of the form (cf.\ \eq{kappa_trafo1})
\begin{align*}
r &\mapsto r_{\mathrm{new}}= r^{(1)}_{\mathrm{new}}(r,x^{\mathring A}) 
\,,
\\
1+\tau &\mapsto 1+ \tau_{\mathrm{new}} = h(r, x^{\mathring A})(1+\tau)
\,,
\\
\Theta &\mapsto \Theta_{\mathrm{new}} = \psi(r, x^{\mathring A}) \Theta
\,.
\end{align*}
The function $r^{(1)}_{\mathrm{new}}$ is
given by \eq{kappa_trafo3},
\begin{equation}
\frac{{\partial^2 r^{(1)}_\mathrm{new}}}{\partial r^2}
=\frac{\partial r^{(1)}_\mathrm{new}}{\partial r}[\kappa(r) - 2\partial_r\log\psi^{(1)}(r)]
 -\Big( \frac{\partial r^{(1)}_\mathrm{new}}{\partial r}\Big)^2\kappa^{\mathrm{new}}( r^{(1)}_{\mathrm{new}}(r))
\,,
\end{equation}
with, cf.\ \eq{kappa_trafo4},
\begin{equation}
\psi( r) =h( r)\frac{\Theta^{(1)}(r^{(1)}_{\mathrm{new}}( r))}{ \Theta^{(1)}( r)}
\,, \quad 
h( r)=\frac{\partial r}{\partial r^{(1)}_{\mathrm{new}}} \frac{\nu^{\tau}_{\mathrm{new}}(r^{(1)}_{\mathrm{new}}( r))}{ \nu^{\tau}( r)}
\,.
\end{equation}
In a weakly asymptotically Minkowski-like conformal Gauss gauge this equation is of the form (cf.\ \eq{some_trafo1})
\begin{equation}
\frac{{\partial^2 r^{(1)}_\mathrm{new}}}{\partial r^2}
=
 \Big( \frac{\partial r^{(1)}_\mathrm{new}}{\partial r}\Big)^2\Big( \frac{2}{r^{(1)}_\mathrm{new}(r)} + \mathfrak{O}(r^{(1)}_\mathrm{new}(r))\Big)
-\frac{\partial r^{(1)}_\mathrm{new}}{\partial r}\Big(\frac{2}{r}+\mathfrak{O}(r)\Big)
\,.
\end{equation}
With  $u:=\partial_{ r}\log( r/ r^{(1)}_{\mathrm{new}})$ and  $v:=r^{(1)}_{\mathrm{new}}/ r$ this  singular ODE becomes a regular first-order system
 (cf.\ \eq{some_trafo2}-\eq{some_trafo3}),
\begin{align}
 \partial_{ r} u
=&-u^2- v^2(1-u r)^2\mathfrak{O}((v r)^0) +(1-u r) \mathfrak{O}(r^0)
\,,
\\
\partial_{ r}v =&-uv
\,.
\end{align}
In fact to get this regular system it is crucial that $\Theta^{(1)}$, $\kappa$ and $\nu_{\tau}$ have an asymptotic behavior at $I^-$ as
required by the weakly asymptotically Minkowski-like gauge condition.
The solution is of the form (with $p(x^{\mathring A})>0$)
$$
r^{(1)}_{\mathrm{new}}(r, x^{\mathring A})=p(x^{\mathring A}) r + \mathfrak{O}(r^2)
\,,
\quad
r(r^{(1)}_{\mathrm{new}}, x^{\mathring A})= \frac{1}{p(x^{\mathring A})}r^{(1)}_{\mathrm{new}} + \mathfrak{O}(r^{(1)}_{\mathrm{new}})^2
\,.
$$
The function $p$ is determined as described in Section~\ref{sec_yet_another} (as the second datum  $q$ which we do not need to consider here explicitly).
In particular $\psi(r(r^{(1)}_{\mathrm{new}}), x^{\mathring A}) $ and $h(r(r^{(1)}_{\mathrm{new}}), x^{\mathring A}) $  will be smooth,
\begin{align}
\psi(r(r^{(1)}_{\mathrm{new}}), x^{\mathring A})  =& p(x^{\mathring A})+\mathfrak{O}(r^{(1)}_{\mathrm{new}})
\,,
\\
h(r(r^{(1)}_{\mathrm{new}}), x^{\mathring A})  =& 1+\mathfrak{O}(r^{(1)}_{\mathrm{new}})
\,.
\end{align}

We then consider the coordinate transformation \eq{coord_trafo2}
(for this let us denote the just obtained $r_{\mathrm{new}}$, $\tau_{\mathrm{new}}$ and $\Theta_{\mathrm{new}}$ by  $r$, $\tau$ and $\Theta$),
\begin{align*}
r &\mapsto r_{\mathrm{new}}:=r +  r^{(2)}_{\mathrm{new}}(r,x^{\mathring A}) (1+\tau)
\,,
\\
x^{\mathring A}&\mapsto x^{\mathring A}_{\mathrm{new}}:=x^{\mathring A} +  h^{\mathring A}(r, x^{\mathring B})(1+\tau)
\,.
\end{align*}
The functions $ r^{(2)}_{\mathrm{new}}$ and $ h^{\mathring A}$
are chosen in such a way that      $\nu^{\mathrm{new}}_{\mathring A}$ is  realized
and  $g_{\tau\tau}|_{\scri^-}=-1$,
\begin{align*}
h^{\mathring A} =& g^{\mathring A \mathring B}(\nu_{\mathring B}-\nu^{\mathrm{new}}_{\mathring B})= \mathfrak{O}(r)
\,, \\ 
 r^{(2)}_{\mathrm{new}}=& -\nu^{\tau}_{\mathrm{new}}(h^{\mathring A}\nu_{\mathring A}^{\mathrm{new}}
- \frac{1}{2}h^{\mathring A}h^{\mathring B}g_{\mathring A\mathring B})= \mathfrak{O}(r^3)
\,.
\end{align*}
A conformal  transformation \eq{coord_trafo3}  yields the desired value $\theta^-_{\mathrm{new}}$, (cf. \eq{behave_-expansion}),
\begin{equation}
\Theta\mapsto [1+ \phi(r,x^{ \mathring A})(1+\tau)]\Theta
\,,
\end{equation}
where
\begin{equation}
 \phi(r,x^{ \mathring A})= \frac{1}{4} \nu_{\tau}( \theta^-- \theta^-_{\mathrm{new}})=\mathfrak{O}(r^2)
\,,
\end{equation}
because of \eq{main_gauge0B2}.
We deduce that the combination of conformal and coordinate transformations which realize \eq{new_gauge data} is smooth at $I^-$.

Before we proceed, let us direct attention to some consequences of our smoothness assumption on $(e^{\mu}{}_i,  \widehat \Gamma_i{}^j{}_k, \widehat L_{ij}, W_{ijkl})$. 
It implies that also   the fields $(e^{\mu}{}_i,   \Gamma_i{}^j{}_k,  L_{ij}, W_{ijkl})$ are smooth.
In the next step  the conformal geodesic equations need to be  solved with initial data $(\dot x^{\mu})|_{\scri^-}=(1,0,0,0)$ and $(f_{\mu})|_{\scri^-}=(0,f^{\mathrm{new}}_r, f^{\mathrm{new}}_{\mathring A})$.
We analyze the  conformal geodesic equations  in a frame, as the frame components are regular at $I^-$.
The initial data then read
\begin{align*}
 (\dot x^{i})|_{\scri^-}=&(1,0,0,0)
\,,
\\
(f_{i})|_{\scri^-}=&(0,\nu^{\tau}f^{\mathrm{new}}_r,\mathring e^{\mathring A}{}_A( f^{\mathrm{new}}_{\mathring A}-\nu^{\tau}\nu_{\mathring A}f^{\mathrm{new}}_r)=\mathfrak{O}(r^0)
\,.
\end{align*}
In frame components the  conformal geodesic equations read
\begin{align*}
\dot x^{j}e^{\mu}{}_j\partial_{\mu} \dot x^{i} +\Gamma_j{}^i{}_k \dot x^{j} \dot x^{k}   &= -2  \dot x^{j} f_{j} \dot x^{i}+ 
\dot x_j \dot x^{j} f^i
\,,
\\
\dot x^{j}e^{\mu}{}_j\partial_{\mu}  f_{i} -\Gamma_j{}^k{}_i  x^{j}f_{k}  &=  \dot x^{j} f_{j} f _{i}
-\frac{1}{2}f_{j}f^j\dot x_i +  \dot x^{j} L_{ij}
\,.
\end{align*}
This is a regular symmetric hyperbolic system
which gives a smooth solution $(\dot x^i, f_ i)$ in some neighborhood  of the initial surface, including some neighborhood
of $I^-$.
In particular $\dot x^{\mu}= e^{\mu}{}_i x^i$ is smooth.
Next, we apply a coordinate transformation $x^{\mu}\mapsto \widehat x^{\mu}$ which transforms  $\dot x^{\mu}$ to $\partial_{\widehat \tau}$.
\begin{equation}
1 = \frac{\partial\widehat x^{\tau}}{\partial x^{\mu}}\dot x^{\mu}
\,,
\quad
0 = \frac{\partial\widehat x^{r}}{\partial x^{\mu}}\dot x^{\mu}
\,,
\quad
0 = \frac{\partial\widehat x^{\mathring A}}{\partial x^{\mu}}\dot x^{\mu}
\,.
\label{global_coord_trafo}
\end{equation}
As initial data we take $\widehat x^{\mu}|_{\scri^-}=x^{\mu}$.
Note that, near $\scri^-$ the transformation is of the form
$x^{\mu}\mapsto x^{\mu} + \mathfrak{O}(1+\tau)^2$, 
so that the initial gauge conditions realized above are preserved.

It is instructive to evaluate the conformal geodesics equations on $I$.
For this note that \eq{frameI_1}-\eq{Schouten_I}  hold, and 
one checks that $(\dot x^i)|_I = (1,0,0,0)$ and $(f_ i)|_I= (0,1,0,0)$.
The coordinate transformation \eq{global_coord_trafo} therefore reduces to the identity on $I$,
whence the leading-order behavior of all fields is unaffected at $I$.
Of course this is to be expected as the fields  acquire their ``weakly asymptotically Minkowski-like conformal Gauss gauge'' values there.

The final gauge is obtained by another conformal transformation $g\mapsto \Psi^2 g$, $\Theta\mapsto \Psi\Theta$, which is determined by the equation
\begin{equation}
\nabla_{\dot x}\Psi = \Psi\langle \dot x, f\rangle
\quad \Longleftrightarrow \quad \partial_{\widehat \tau}\Psi = \Psi f_{\tau}
\,,
\end{equation}
with initial data $\Psi|_{\scri^-}=1$. Since $f_{\tau}|_{\scri^-}=0$ we  have, near $\scri^-$, $\Psi = 1+ \mathfrak{O}(1+\tau)^2$.
Near $I$ we have $\Psi= 1+\mathfrak{O(r)}$.
Note that $g(\dot x, \dot x)|_{\scri^-}=-1$, so by \eq{norm_conf_geod} $\dot x$ is globally normalized to $-1$.

The conformal Gauss coordinates underlying the  weakly asymptotically Minkowski-like conformal Gauss gauge as determined by the gauge data \eq{new_gauge data}
is obtained from the original one by a combination of a conformal transformation and a coordinate transformation both of which are smooth
near $I^-$. The transformed fields  $(e^{\mu}{}_i,  \widehat \Gamma_i{}^j{}_k, \widehat L_{ij}, W_{ijkl})$ which appear in the GCFE and
which are determined  by $g$ as well as $\Theta$ and $f$ are therefore smooth as well.
This is as one should expect  since, by choice of the gauge data \eq{old_gauge data}, the congruence of conformal geodesics on which this gauge is based does not have conjugate point near $\scri^-\cup I^-$.

\begin{lemma}
\label{lem_inv}
Consider a solution  $(e^{\mu}{}_i,  \widehat \Gamma_i{}^j{}_k, \widehat L_{ij}, W_{ijkl})$ of  the GCFE in a
weakly asymptotically Minkowski-like conformal Gauss gauge which is smooth at $\scri^-$, $I$ and $I^-$.
Then the validity of all the  no-logs conditions obtained in Section~\ref{section3} \& \ref{section4}
 are preserved  under gauge transformations which transform into any other 
weakly asymptotically Minkowski-like conformal Gauss gauge.
\end{lemma}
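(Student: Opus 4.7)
The plan is to reduce the lemma to the smoothness analysis already performed in the paragraphs preceding its statement, together with the observation that the no-logs conditions of Sections~\ref{section3} and~\ref{section4} are precisely reformulations of the smoothness, at $I^-$, of the restrictions to $\scri^-$ (resp.\ $I$) of the unknowns of the GCFE and of all their transverse (resp.\ radial) derivatives. Since a change between two weakly asymptotically Minkowski-like conformal Gauss gauges acts on $(e^{\mu}{}_i,\widehat\Gamma_i{}^j{}_k,\widehat L_{ij},W_{ijkl})$ through a composition of conformal rescalings and coordinate diffeomorphisms, it suffices to show that the resulting composite map is smooth in a full neighbourhood of $\scri^-\cup I^-\cup I$; the transformed tensors will then automatically be smooth there and the no-logs conditions will hold in the new gauge.

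I would structure the argument by running sequentially through the five transformations already outlined above: (a)~the combined rescaling $\Theta\mapsto\psi\Theta$ and $r$-reparametrization $r\mapsto r^{(1)}_{\mathrm{new}}(r,x^{\mathring A})$ which simultaneously realizes the new $\Theta^{(1)}$, $\kappa$, and $\nu_\tau$; (b)~the coordinate shift $(r,x^{\mathring A})\mapsto(r,x^{\mathring A})+\mathfrak{O}(1+\tau)$ realizing $\nu^{\mathrm{new}}_{\mathring A}$ and the normalization $g_{\tau\tau}|_{\scri^-}=-1$; (c)~the conformal rescaling $\Theta\mapsto[1+\phi(1+\tau)]\Theta$ with $\phi=\mathfrak{O}(r^2)$, realizing $\theta^-_{\mathrm{new}}$; (d)~integration of the conformal geodesic equations with the new initial data for $f_r,f_{\mathring A}$, followed by the coordinate transformation straightening $\dot x$ to $\partial_{\widehat\tau}$; and (e)~the final conformal factor $\Psi$ enforcing $\langle\dot x,f\rangle=0$. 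For each step the smoothness near $I^-$ would follow from the explicit decay orders already recorded; the key structural fact used at several places is that both the old and the new gauge data satisfy the asymptotics of Definition~\ref{dfn_weak_asympt_gauge}.

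The main obstacle is step~(a): the equation \eq{kappa_trafo3} for $r^{(1)}_{\mathrm{new}}$ is singular at $I^-$. My plan is to regularize it via the change of unknowns $u=\partial_r\log(r/r^{(1)}_{\mathrm{new}})$, $v=r^{(1)}_{\mathrm{new}}/r$, which as shown above turns it into a regular first-order ODE system and yields smooth $r^{(1)}_{\mathrm{new}}=p(x^{\mathring A})r+\mathfrak{O}(r^2)$ with $p>0$, together with smooth $\psi$ and $h$. This reduction is exactly what requires the weak Minkowski-like asymptotics to hold for \emph{both} gauges: it is what rules out the formation of conjugate points of the new conformal geodesic congruence directly on $I^-$, which would otherwise introduce singularities into the transformation. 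For (b)--(c) the explicit orders of vanishing in $(1+\tau)$ or in $r$ ensure both smoothness and non-interference with gauge data fixed in previous steps. For (d)--(e) the conformal geodesic equations and the ODE for $\Psi$, when expressed in the frame which is smooth on $\scri^-\cup I^-\cup I$ by hypothesis, form a regular symmetric hyperbolic system and a regular transport equation, respectively, with smooth coefficients and smooth initial data, hence smooth solutions in a neighbourhood of $I^-$.

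Once the composed transformation is shown to be smooth near $\scri^-\cup I^-\cup I$, the tensors $(e^{\mu}{}_i,\widehat\Gamma_i{}^j{}_k,\widehat L_{ij},W_{ijkl})$ in the new gauge are smooth there. By Propositions~\ref{prop_smoothness_constraints} and~\ref{prop_n-1_no-logs} and the corresponding analysis of Section~\ref{sec_structure_eqn_I}, smoothness at $I^-$ is equivalent to the validity of all the no-logs conditions of Sections~\ref{section3} and~\ref{section4} in the new gauge, which is the statement of the lemma.
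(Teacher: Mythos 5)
Your proposal is correct and follows essentially the same route as the paper: the paper's argument in Section~\ref{sec_gauge_ind} likewise decomposes the gauge change into the sequence of conformal rescalings and coordinate transformations (a)--(e), regularizes the singular ODE for $r^{(1)}_{\mathrm{new}}$ via the substitution $u=\partial_r\log(r/r^{(1)}_{\mathrm{new}})$, $v=r^{(1)}_{\mathrm{new}}/r$, solves the conformal geodesic equations in frame components as a regular symmetric hyperbolic system, and concludes that the composed transformation is smooth near $\scri^-\cup I^-\cup I$, whence the transformed fields and hence the no-logs conditions are preserved. No gaps to report.
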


\section{Asympt.\ Minkowski-like conformal Gauss gauge}
\label{section_aMlcGg}
\label{section6}

\subsection{Solution of the constraint equations}
\label{sec_exp_asymp_Mink}

In Section~\ref{sec_solution_constr_gen} we have studied  the constraint equations  listed in Appendix~\ref{app_charact_constraints}
in a weakly asymptotically Minkowski-like gauge. 
For a further analysis concerning the appearance of logarithmic terms at $I^-$ it is convenient, and by Lemma~\ref{lem_inv} without restriction, to  assume an asymptotically
Minkowski-like conformal Gauss gauge  at each  order (cf.\ Definition~\ref{dfn_asympt_Minik_gauge}), where the asymptotic expansions of the gauge functions
are completely fixed. Then the computations are much simpler.

As  ``physical'' initial data on $\scri^-$ we regard $\Xi_{\mathring A\mathring B}$ rather than the radiation field $W_{r\mathring Ar\mathring B}$.  
By  \eq{constraint_Weyl1} below they are -- apart from the integration functions $\Xi^{(1)}_{\mathring A\mathring B}$ and  $\Xi^{(2)}_{\mathring A\mathring B}$ -- in  one-to-one correspondence, and also their expansion coefficients
% $W_{r\mathring Ar\mathring B}^{(n)}$ and $\Xi^{(n+4)}_{\mathring A\mathring B}$ 
are, cf.\ \eq{Weyl_coord_scri1}.

From the constraint equations derived in Appendix~\ref{app_charact_constraints} we obtain
\begin{align}
 g_{\mathring A\mathring B}|_{\scri^-}=& s_{\mathring A\mathring B}+\mathfrak{O}(r^{\infty})\,, \quad \theta^+=\mathfrak{O}(r^{\infty})\,, 
\quad \xi_{\mathring A} =\mathfrak{O}(r^{\infty})\,,
\label{gauge_scri1}
\\
 L_{rr}|_{\scri^-}=& \mathfrak{O}(r^{\infty})
\,,\quad 
 L_{r\mathring A}|_{\scri^-}= \mathfrak{O}(r^{\infty})
\,,
\\ 
( L_{\mathring A\mathring B})_{\mathrm{tf}}|_{\scri^-} =& -\frac{1}{2}\Big(\partial_{r} -\frac{2}{r}\Big)\Xi_{\mathring A\mathring B}+ \mathfrak{O}(r^{\infty})
\,, \quad
s^{\mathring A\mathring B}  L_{\mathring A\mathring B}|_{\scri^-}=  1+ \mathfrak{O}(r^{\infty})
\,,
\\
 L_{r}{}^{r}|_{\scri^-}=&-\frac{1}{2} + \mathfrak{O}(r^{\infty})
\,,
\quad
 L_{\mathring A}{}^{r} |_{\scri^-} = \frac{1}{2}v_{\mathring A}+ \mathfrak{O}(r^{\infty})
\,,
\label{gauge_scri6}
\\
 W_{r\mathring A r\mathring B} |_{\scri^-} 
 =&-\frac{1}{4r^2}
\partial_r\Big(\partial_{r}-\frac{2}{r}\Big)\Xi_{\mathring A\mathring B}+ \mathfrak{O}(r^{\infty})
\,,
\label{constraint_Weyl1}
\\
  W_{r\mathring A r}{}^r|_{\scri^-} 
=& \frac{1}{4r^2}\Big(\partial_{r}-\frac{2}{r}\Big)v_{\mathring A}+ \mathfrak{O}(r^{\infty})
\;,
\label{expression_d1A10}
\\
  W_{\mathring A\mathring B r}{}^r|_{\scri^-}  =& 
\frac{1}{2r^2}\Big(\mcD_{[\mathring A}v_{\mathring B]}
  +\frac{1}{2}\Xi_{[\mathring A}{}^{\mathring C}\partial_{r}\Xi_{\mathring B]\mathring C} \Big)+ \mathfrak{O}(r^{\infty})
\;,
\\
(\partial_r+ \mathfrak{O}(r^{\infty})) W_r{}^r{}_r{}^r |_{\scri^-} 
=&
- \frac{1}{4r^2}\Big(\partial_{r}-\frac{2}{r}\Big)\mcD_{\mathring A}v^{\mathring A}
+\frac{1}{2}\Xi^{\mathring A\mathring B}  W_{r\mathring A r\mathring B}+ \mathfrak{O}(r^{\infty})
\,,
\label{adm_ode}
\\
 \Big(\partial_{r}   -\frac{2}{r}+ \mathfrak{O}(r^{\infty})\Big)  W_{\mathring A}{}^r{}_{r}{}^{r}|_{\scri^-} 
=& 
  -\frac{1}{8} \Big(\partial_r-\frac{2}{r}\Big)\Big(\partial_{r}-\frac{2}{r}\Big)v_{\mathring A}
-\frac{1}{2}  \mathring \nabla^{\mathring B} W_{\mathring A\mathring B r}{}^{r}
\nonumber
\\
&
+\frac{1}{2} \mcD_{\mathring A}  W_{r}{}^r{}_{r}{}^r
  -\Xi_{\mathring A}{}^{\mathring B} W_{r\mathring B r}{}^r  + \mathfrak{O}(r^{\infty})
\,,
\end{align}
and
\begin{align}
\Big(\partial_{r} -\frac{4}{r}+\mathfrak{O}(r^{\infty}) \Big)\Big ( (  W_{\mathring A}{}^r{}_{\mathring B}{}^{r} )_{\mathrm{tf}}- \frac{r^4}{4}  W_{r\mathring A r \mathring B} \Big)\Big|_{\scri^-} 
  =& 
 \Big(\mcD_{(\mathring A} W_{\mathring B)}{}^{r}{}_{r}{}^r -\frac{r^2}{2}\mcD_{(\mathring A} W_{\mathring B)rr}{}^r
\Big)_{\mathrm{tf}}  
\nonumber
\\
&
+\frac{3}{4}\Xi_{\mathring A\mathring B} W_{r}{}^{r}{}_{r}{}^r
-\frac{3}{4}\Xi_{(\mathring A}{}^{\mathring C} W_{\mathring B)\mathring C r}{}^r+ \mathfrak{O}(r^{\infty})
\,.
\label{hardest_constraint_Weyl}
\end{align}
Let us  compute the relevant frame components. Note that in an asymptotically Minkowski-like conformal Gauss gauge at each order, on $\scri^-$, 
\begin{equation}
e_{0}|_{\scri^-} = \partial_{\tau}
\,,
\label{gauge_frame1}
\quad
e_{1}|_{\scri^-}  = \partial_{\tau} + r\partial_r+ \mathfrak{O}(r^{\infty})
\,,
\quad
e_{A}|_{\scri^-}  = \mathring e^{\mathring A}{}_A \partial_{\mathring A}+ \mathfrak{O}(r^{\infty})
\,.
\end{equation}
Using the formulas derived in Section~\ref{sec_connection_gen}
we find for the connection coefficients 
\begin{align}
\widehat\Gamma_1{}^j{}_i|_{\scri^-} 
=&\delta^j{}_i+ \mathfrak{O}(r^{\infty})
\label{connection_scri1}
\,,
\quad
\widehat\Gamma_A{}^1{}_0 |_{\scri^-} = \mathfrak{O}(r^{\infty})
\,,
\quad
\widehat\Gamma_A{}^0{}_0 |_{\scri^-} = \mathfrak{O}(r^{\infty})
\,,
\\
\widehat\Gamma_A{}^B{}_0 |_{\scri^-} =&
 \frac{1}{2r}\Xi_{ A}{}^{ B} + \mathfrak{O}(r^{\infty})
\label{connection_scri7}
\,,
\\
\widehat\Gamma_A{}^B{}_1  |_{\scri^-}=& 
 \frac{1}{2r}  \Xi_{ A}{}^{ B}+\delta^{ B}{}_{ A}+ \mathfrak{O}(r^{\infty})
\,,
\\
\widehat\Gamma_A{}^C{}_B |_{\scri^-} 
=&  
\mathring\Gamma_A{}^C{}_B+ \mathfrak{O}(r^{\infty})
\,.
\label{connection_scri9}
\end{align}
For the components of the Schouten tensor the results of Section~\ref{sec_schouten_gen} yield
\begin{align}
\widehat L_{ 1 i}|_{\scri^-} 
=&
\mathfrak{O}(r^{\infty})
\,,
\label{schouten_scri3}
\\
\widehat L_{ A 1}|_{\scri^-}
=&
\frac{1}{2r} v_A+ \mathfrak{O}(r^{\infty})
\,,
\\
\widehat L_{ AB}|_{\scri^-}
 =&  -\frac{1}{2} 
\Big(\partial_{r} -\frac{1}{r}\Big)\Xi_{ A B} + \mathfrak{O}(r^{\infty})
\,,
\label{schouten_scri5}
\\
\widehat L_{ A0}|_{\scri^-}=&
\frac{1}{2r}  v_A+ \mathfrak{O}(r^{\infty})
\,.
\label{schouten_scri6}
\end{align}
%
%These equations determine  (most of) the initial data for the transport equations \eq{ev_eqn_gauge_1}-\eq{ev_eqn_gauge_7} on the cylinder at spacelike infinity, including their radial derivatives.

For the solutions of  the constraint equations for the rescaled Weyl tensor to be  smooth at $I^-$  $\Xi_{AB}$ necessarily needs to admit  an expansion of the form
\eq{apriori_Xi} 
\begin{equation}
\Xi_{ A B}\sim\sum_{m=1}^{\infty} \Xi^{(m)}_{ A B} r^m
\,,
\label{expansion_phys_data}
\end{equation}
where the  $\Xi^{(m)}_{ A B}$'s denote trace-free tensors on the round 2-sphere.
Recall that 
\begin{equation}
v_{ A} \equiv\rnabla_{ B}\Xi_{ A}{}^{ B} %\overset{\eq{gauge_scri1}}{=} \mcD_B\Xi_{ A}{}^{ B}
\,, \quad v^{(m)}_{ A} \equiv  \mcD_{ B}\Xi^{(m)}_{ A}{}^{ B}
\,.
\end{equation}
We assume that  all smoothness conditions in Proposition~\ref{prop_smoothness_constraints} are satisfied, i.e.\
\begin{equation}
\Xi^{(1)}_{AB}=\Xi^{(3)}_{AB}=\Xi^{(4)}_{AB} =0\,.
\label{no-logs_spec_gauge}
\end{equation}
Then the restriction of the rescaled Weyl tensor to $\scri^-$  extends smoothly across $I^-$.
We determine its  expansion coefficients (terms with  vanishing denominator are defined to be  zero),
\begin{align}
W^{(m)}_{r\mathring Ar\mathring B}=& -\frac{(m+3)(m+2)}{4}\Xi^{(m+4)}_{\mathring A \mathring B}
\,,
\label{Weyl_coord_scri1}
\\
W^{(m)}_{r\mathring Ar}{}^{r}=&
\frac{m+1}{4}v^{(m+3)}_{\mathring A}
\,,
\\
W^{(m)}_{\mathring A\mathring B r}{}^{r}=& \frac{1}{2}\mcD_{[\mathring A}v^{(m+2)}_{\mathring B]}
+ \frac{1}{4}\sum_kk\Xi^{(m-k+3)}{}_{[\mathring A}{}^{\mathring C}\Xi^{(k)}_{\mathring B]\mathring C}
\,,
\\
 W^{(m)}_{r}{}^{r}{}_{r}{}^{r} =&  2 M\delta^m{}_0 -\frac{1}{4}\mcD^{\mathring A}v^{(m+2)}_{\mathring A}
- \frac{1}{8n}\sum_k(k+3)(k+2)\Xi^{(m-1-k)\mathring A \mathring B} \Xi^{(k+4)}_{\mathring A \mathring B}
\,,
\\
W^{(m)}_{\mathring A}{}^{r}{}_{r}{}^{r} =&\delta^m{}_2 L_{\mathring A}
-\frac{m-1}{8} v^{(m+1)}_{\mathring A}
-\frac{1}{2(m-2)}\mcD^{\mathring B} W^{(m-1)}_{\mathring A\mathring B r}{}^{r}
+\frac{1}{2(m-2)}  \mcD_{\mathring A}  W^{(m-1)}_{r}{}^r{}_{r}{}^r
\nonumber
\\
&
  -\frac{1}{m-2}\sum_k\Xi^{(m-k-1)}{}_{\mathring A}{}^{\mathring B} W^{(k)}_{r\mathring B r}{}^r 
\,,
\\
 ( W^{(m)}_{\mathring A}{}^{r}{}_{\mathring B}{}^{r})_{\mathrm{tf}}
=&  \delta^m{}_4c^{(2,0)}_{\mathring A\mathring B}+ \frac{1}{4}W^{(m-4)}_{r \mathring A r\mathring B}
+ \frac{1}{m-4}\Big(\mcD_{(\mathring A} W^{(m-1)}_{\mathring B)}{}^{r}{}_{r}{}^r -\frac{1}{2}\mcD_{(\mathring A} W^{(m-3)}_{\mathring B)rr}{}^r
\Big)_{\mathrm{tf}}  
\nonumber
\\
&
+\frac{3}{4(m-4)}\sum_k\Big( \Xi^{(m-k-1)}_{\mathring A\mathring B} W^{(k)}_{r}{}^{r}{}_{r}{}^r
-\Xi^{(m-k-1)}_{(\mathring A}{}^{\mathring C} W^{(k)}_{\mathring B)\mathring C r}{}^r\Big)
\,.
\end{align}
Recall that $M$, $L_{\mathring A}$, and $c^{(2,0)}_{\mathring A \mathring B}$ arise as integration functions.
For the frame components we have
\begin{align}
V^{+}_{AB}|_{\scri^-}  =& \frac{ r^2}{2} \mathring e^{\mathring A}{}_{A} \mathring e^{\mathring B}{}_{B} W_{r\mathring A r\mathring B} 
\,,
\\
V^{-}_{AB}|_{\scri^-} =& -\mathring e^{\mathring A}{}_{A} \mathring e^{\mathring B}{}_{B}\Big(\frac{ r^2}{2}  W_{r\mathring A r\mathring B} 
  - \frac{2}{r^2} (W_{\mathring A}{}^{r}{}_{\mathring B}{}^{r})_{\mathrm{tf}}\Big)
\,,
\\
W^{+}_A |_{\scri^-} =& r \mathring e^{\mathring A}{}_A W_{r\mathring Ar}{}^{r}
+ 2r^{-1}\mathring e^{\mathring A}{}_A W_{\mathring A}{}^{r}{}_{r}{}^{r}
\,,
\\
W^{-}_A |_{\scri^-} =& r \mathring e^{\mathring A}{}_A W_{r\mathring Ar}{}^{r}
\,,
\\
W_{0101}|_{\scri^-} 
=& W_{r}{}^{r}{}_{r}{}^{r}
\,,
\\
W_{01AB}|_{\scri^-} 
 =& -\mathring e^{\mathring A}{}_A \mathring e^{\mathring B}{}_B W_{\mathring A\mathring Br}{}^{r}
\,,
\end{align}
whence  (again, terms with  vanishing denominator are defined to be  zero)
\begin{align}
V^{+(m)}_{AB}  =& -\frac{m(m+1)}{8}\Xi^{(m+2)}_{ A  B}
\,,
\label{constr_spec_gauge1}
\\
W^{-(m)}_A =&  \frac{m}{4}v^{(m+2)}_A
\,,
\\
W^{(m)}_{0101}
=&2M  \delta^m{}_0  -\frac{1}{4}\mcD^{ A}v^{(m+2)}_{ A}
- \frac{1}{8m}\sum_k(k+3)(k+2)\Xi^{(m-1-k) A  B} \Xi^{(k+4)}_{ A  B}
\,,
\\
W^{(m)}_{01AB}
 =& - \frac{1}{2}\mcD_{[ A}v^{(m+2)}_{ B]}
- \frac{1}{4}\sum_kk\Xi^{(m-k+3)}{}_{[ A}{}^C\Xi^{(k)}_{ B] C}
\,,
\\
W^{+(m)}_A=& 2\delta^m{}_1 L_{ A}
+\frac{1}{m-1} \Big( \mcD^{ B} U^{(m)}_{AB}
  -\frac{1}{2}\sum_k k\Xi^{(m-k+1)}{}_{ A}{}^{ B}v^{(k+2)}_B\Big)
\,,
\\
V^{-(m)}_{AB}=& 2 \delta^m{}_2c^{(2,0)}_{ A B}
+ \frac{1}{m-2}\Big( (\mcD_{( A} W^{+(m)}_{ B)})_{\mathrm{tf}}  
+\frac{3}{2}\sum_k\Xi^{(m-k+1)}_{( A}{}^{ C} U^{(k)}_{ B) C}\Big)
\,.
\label{constr_spec_gauge6}
\end{align}
In particular, we find for the leading orders
(recall that $N\equiv -\frac{1}{8}\epsilon^{AB}\mcD_Av^{(2)}_B$)
\begin{align}
W_{0101}|_{\scri^-} 
=& 2M
-\frac{r^3}{4}\mcD^{ A}v^{(5)}_{ A} 
+  \mathfrak{O}(r^4)
\,,
\label{Weyl_frame_scri1}
\\
W^{-}_A|_{\scri^-} 
=& \frac{3}{4}r^3 v^{(5)}_{ A} 
+\mathfrak{O}(r^4)
\,,
\label{Weyl_frame_scri2}
\\
W^{+}_A|_{\scri^-} 
=&-2\mathcal{M}_A
+2L_{ A} r 
+\frac{r^3}{8}\Big(
( \Delta_s-1)v^{(5)}_{ A}
-2\mcD_{ A}\mcD^{ B} v^{(5)}_{ B}
\Big)
+\mathfrak{O}(r^4)
\,,
\label{Weyl_frame_scri3}
\\
W_{01AB}|_{\scri^-} 
 =&2 N\epsilon_{AB}
-\frac{r^3}{2}\mcD_{[ A}v^{(5)}_{ B]} 
+\mathfrak{O}(r^4)
\,,
\\
V^+_{AB} |_{\scri^-} 
=&
-\frac{3}{2}r^3  \Xi^{(5)}_{ A B} 
+\mathfrak{O}(r^4)
\,,
\\
V^-_{AB}|_{\scri^-} 
=&
(\mcD_{(A}\mathcal{M}_{B)})_{\mathrm{tf}}
-\Big(2(\mcD_{(A}L_{B)})_{\mathrm{tf}}+3(M\Xi^{(2)}_{AB}+N\epsilon_{(A}{}^C\Xi^{(2)}_{B)C})
\Big)r
\nonumber
\\
&
+ c^{(2,0)}_{ A B}r^2
+ \frac{r^3}{8}\Big(
(\Delta_s -4)\mcD_{( A} v^{(5)}_{ B)}
-2\mcD_{ A}\mcD_{ B} \mcD^{ C} v^{(5)}_{ C}
\Big)_{\mathrm{tf}}
+  \mathfrak{O}(r^4)
\,.
\label{Weyl_frame_scri6}
\end{align}

\subsection{First-order transverse derivatives on $\scri^-$}
\label{sec_confG_firstoder}

To get a better idea what is going on let us consider the 1st-order derivatives (i.e.\ the $n=1$ case) explicitly, as well.
In particular we want to determine the source terms in \eq{no_log_rad1}-\eq{no_log_rad2}.
Evaluation of \eq{evolution1}-\eq{evolution7} on $\scri^-$ by using all the expressions we have derived in the previous section
 gives the following relations for connection and  frame coefficients,  and Schouten tensor
(observe  that $\Xi_A{}^C\Xi_{BC} =\frac{1}{2} |\Xi|^2\eta_{AB}$),
\begin{align}
\partial_{\tau}(\widehat L_{10}-\widehat L_{11}) |_{\scri^-} 
&=  \mathfrak{O}(r^{\infty})
\,,
\label{frame_1storder1}
\\
\partial_{\tau}(\widehat L_{10}+\widehat L_{11}) |_{\scri^-} 
&= -4r W_{0101}+ \mathfrak{O}(r^{\infty})
\,,
\\
\partial_{\tau}(\widehat L_{A0}-\widehat L_{A1}) |_{\scri^-} 
&= \mathfrak{O}(r^{\infty})
\,,
\\
\partial_{\tau}(\widehat L_{A0} +\widehat L_{A1})|_{\scri^-} 
&= -2r W^-_{A} -2r W^+_{A}-  \frac{1}{2r^2}\Xi_{ A}{}^{ B}v_B  + \mathfrak{O}(r^{\infty})
\,,
\\
\partial_{\tau}\widehat L_{1A} |_{\scri^-} 
&= - 2r W^-_{A} + \mathfrak{O}(r^{\infty})
\,,
\\
\partial_{\tau}\widehat L_{AB} |_{\scri^-} 
&= -2rV^+_{AB} +rU_{AB}
+  \frac{1}{4r}\Xi_{ A}{}^{ C} \Big(\partial_{r} -\frac{1}{r}\Big)\Xi_{ BC} + \mathfrak{O}(r^{\infty})
\,,
\\
\partial_{\tau}\widehat\Gamma_{1}{}^j{}_k |_{\scri^-} 
 &=
\mathfrak{O}(r^{\infty})
\,,
\\
\partial_{\tau}(\widehat\Gamma_{A}{}^0{}_1 -\widehat\Gamma_{A}{}^1{}_1 )|_{\scri^-} 
 &=
\mathfrak{O}(r^{\infty})
\,,
\\
\partial_{\tau}(\widehat\Gamma_{A}{}^0{}_1+\widehat\Gamma_{A}{}^1{}_1 ) |_{\scri^-} 
 &=
\frac{1}{r} v_A+ \mathfrak{O}(r^{\infty})
\,,
\\
\partial_{\tau}(\widehat\Gamma_{A}{}^0{}_B-\widehat\Gamma_{A}{}^1{}_B) |_{\scri^-} 
 &=
-   \frac{1}{4r^2}|\Xi|^2\delta_{AB}  -\frac{1}{2} \partial_{r}\Xi_{ A B}   + \mathfrak{O}(r^{\infty})
\,,
\\
\partial_{\tau}(\widehat\Gamma_{A}{}^0{}_B+\widehat\Gamma_{A}{}^1{}_B ) |_{\scri^-} 
 &=
  -\frac{1}{2} \Big(\partial_{r} -\frac{2}{r}\Big)\Xi_{ A B}   + \mathfrak{O}(r^{\infty})
\,,
\\
\partial_{\tau}\widehat\Gamma_{A}{}^B{}_C |_{\scri^-} 
 &=
\frac{1}{2r}  \delta^B{}_{C} v_A-  \frac{1}{2r} \mathring \Gamma_D{}^B{}_C\Xi_{ A}{}^{ D}    + \mathfrak{O}(r^{\infty})
\,,
\\
\partial_{\tau}e^{\mu}{}_1|_{\scri^-} &= -\delta^{\mu}{}_0+ \mathfrak{O}(r^{\infty})
\,,
\label{frame_1storder13}
\\
\partial_{\tau}e^{\mu}{}_A|_{\scri^-} &= - \frac{1}{2r}\Xi_{ A}{}^{ B}  e^{\mu}{}_B+ \mathfrak{O}(r^{\infty})
\,,
\label{frame_1storder14}
\end{align}
where we have used that in an asymptotically Minkowski-like conformal Gauss gauge (cf.\ Section~\ref{sec_b_theta})
\begin{equation}
\Theta = r(1-\tau^2)+ \mathfrak{O}(r^{\infty})
\,, \quad
b_0 = -2r \tau+ \mathfrak{O}(r^{\infty})
\,, \quad
b_1= 2r+ \mathfrak{O}(r^{\infty})
\,, \quad
b_A=\mathfrak{O}(r^{\infty})\,.
\label{global_functions_confG}
\end{equation}
For the relevant transverse derivatives of the frame components of the rescaled Weyl tensor we obtain from \eq{evolutionW1b}-\eq{evolutionW5b} 
(observe  that $(\Xi_{(A}{}^C V^{\pm}_{B)C})_{\mathrm{tf}}=0$)
\begin{align}
\partial_{\tau}W_{0101}  |_{\scri^-}   
 =& 
-\frac{1}{2}\mcD^A(W^{+}_A-W^{-}_A)
-   \frac{1}{2r}\Xi^{ AB}V^+_{AB}+ \mathfrak{O}(r^{\infty})
\,,
\label{trans_Weyl_spec1}
\\
\partial_{\tau}W_{01AB}  |_{\scri^-}       =&
\mcD_{[A}  (W^{+}_{B]}+  W^{-}_{B]})
- \frac{1}{r}\Xi_{ [A}{}^{ C}V^+_{B]C}+ \mathfrak{O}(r^{\infty})
\,,
\\
\partial_{\tau}W^-_A |_{\scri^-}   
 =&  \mcD^BV^+_{AB}
 +\frac{1}{2}\mcD^B U_{BA}
+W^-_{A} + \mathfrak{O}(r^{\infty})
\,,
\\
\partial_{\tau}W^+_A |_{\scri^-}   
 =& -  \mcD^BV^-_{AB}
 -\frac{1}{2}\mcD^B U_{AB}
 - W^+_A
+\frac{1}{r}\Xi_{ A}{}^{ B} W^-_B+ \mathfrak{O}(r^{\infty})
\,,
\\
\partial_{\tau}  V^+_{AB} |_{\scri^-}   
 =&
-\frac{1}{2}( r\partial_{r}-2) V^+_{AB}
+\frac{1}{2} ( \mcD_{(A} W^-_{B)})_{\mathrm{tf}}
+ \mathfrak{O}(r^{\infty})
\,.
\label{trans_Weyl_spec5}
\end{align}
Finally, evaluation of  the $\tau$-derivative of  \eq{evolutionW6b} on $\scri^-$ yields
\begin{align}
& \hspace{-2em}r^4 \partial_{r}(r^{-3}\partial_{\tau}V^-_{AB})|_{\scri^-}
\nonumber
\\
=&
 \Big(
\mcD_{(A} \partial_{\tau}  W^+_{B)}
-\frac{1}{2r}(\Xi_{(A}{}^C\mcD_{ C}+4v_{(A}) W^+_{B)}
-\frac{3}{4}\partial_r\Xi_{(A}{}^{C} U_{B)C}
+\frac{3}{2r}\Xi_{(A}{}^C \partial_{\tau}U_{B)C}
\Big)_{\mathrm{tf}}+ \mathfrak{O}(r^{\infty})
\nonumber
\\
=&
 \Big(
 -\mcD_{(A}  \mcD^CV^-_{B)C}
 -\frac{1}{2}\mcD_{(A}\mcD^C U_{B)C}
-\frac{3}{4}\partial_r\Xi_{(A}{}^{C} U_{B)C}
 -\mcD_{(A} W^+_{B)}
-\frac{5}{4r}\Xi_{(A}{}^C\mcD_{ |C|}W^+_{B)}
\nonumber
\\
&
+\frac{3}{4r}\Xi_{(A}{}^C\mcD_{B)}  W^{+}_{C}
-\frac{2}{r}v_{(A} W^+_{B)}
+\frac{1}{r}W^-_C\mcD_{(A}\Xi_{ B)}{}^{ C} 
-\frac{3}{4r}\Xi_{(A}{}^C\mcD_{C}   W^{-}_{B)}
+\frac{7}{4r}\Xi_{ (A}{}^{ C} \mcD_{B)}W^-_C
\Big)_{\mathrm{tf}}
\nonumber
\\
&
 + \frac{3}{4r}\Xi_{AB}\mcD^CW^{-}_C
-\frac{3}{4r}\Xi_{AB}\mcD^CW^{+}_C
+ \frac{3}{4r^2}|\Xi|^2  V^+_{AB}
-  \frac{3}{2r^2}\Xi_{AB} \Xi^{CD} V^+_{CD}+ \mathfrak{O}(r^{\infty})
\,.
\label{V-_eqn_scri_exp}
\end{align}
We obtain  a smooth solution $V^-_{AB}$ whenever the term of order $r^3$ on the right-hand side of \eq{V-_eqn_scri_exp} vanishes.
Taking the expansions computed in Section~\ref{sec_exp_asymp_Mink}
%and the no-logs-conditions xx
into account we find that this will be the case whenever
\begin{align}
0
=&
 \Big(
\mcD_{(A} \mcD^CV^{-(3)}_{B)C}
 +\frac{1}{2}\mcD_{(A}\mcD^C U^{(3)}_{B)C}
%+3\Xi^{(4)}_{(A}{}^{C} U^{(0)}_{B)C}
+\frac{3}{2}\Xi^{(2)}_{(A}{}^{C} U^{(2)}_{B)C}
 +\mcD_{(A} W^{+(3)}_{B)}
+2v^{(2)}_{(A} W^{+(2)}_{B)}
\nonumber
\\
&
+2\Xi^{(2)}_{(A}{}^C\mcD_{ |C|}W^{+(2)}_{B)}
%+2\Xi^{(4)}_{(A}{}^C\mathring\nabla_{ |C|}W^{+(0)}_{B)}
%-\frac{3}{2}\Xi^{(4)}_{(A}{}^C(\mathring\nabla_{(B)}  W^{+(0)}_{C)})_{\mathrm{tf}}
%+2v^{(4)}_{(A} W^{+(0)}_{B)}
%\nonumber
%\\
%&
-W^{-(2)}_C\mcD_{(A}\Xi^{(2)}_{ B)}{}^{ C} 
-\frac{5}{2}\Xi^{(2)}_{ (A}{}^{ C} \mcD_{B)}W^{-(2)}_C
\Big)_{\mathrm{tf}}
\nonumber
%\\
%=&
% \Big(
% \mathring\nabla_{(A}  \mathring\nabla^CV^{-(3)}_{B)C}
% +\frac{1}{2}\mathring\nabla_{(A}\mathring\nabla^C U^{(3)}_{B)C}
% +\mathring\nabla_{(A} W^{+(3)}_{B)}
%\Big)_{\mathrm{tf}}
\\
=&
 \Big(
 \frac{1}{16}\mcD_{(A}\Delta_s \Delta_s v^{(5)}_{ B)}
-\frac{1}{8} \mcD_{A}\mcD_{ B}\Delta_s\mcD^{ C} v^{(5)}_{ C}
 -\frac{3}{4}\mcD_{A}\mcD_B \mcD^{ C}v^{(5)}_{ C}
 -\frac{5}{16}\mcD_{(A}v^{(5)}_{ B)}
+ \frac{1}{4}\mcD_{(A}\Delta_s v^{(5)}_{ B)}
\Big)_{\mathrm{tf}}
\,.
\label{no-logs-1st-order}
\end{align}
We compute the divergence,
\begin{equation*}
\Big(\Delta_s\Delta_s  \Delta_s 
+ 5\Delta_s  \Delta_s 
 -\Delta_s -5\Big)v^{(5)}_{ A}
-2\Big(\mcD_{A}\Delta_s \Delta_s
+8 \mcD_{A}\Delta_s
 +12\mcD_{A}\Big)\mcD^{ B}v^{(5)}_{ B}
=0
\,.
\end{equation*}
Taking divergence and curl  yields equations for divergence and curl of $v^{(5)}_A$.
\begin{align}
(\Delta_s+6)(\Delta_s+2)\Delta_s\mcD^A v^{(5)}_{ A} =& 0
\,,
\label{no-logs-1st-order1}
\\
(\Delta_s+6)(\Delta_s+2)\Delta_s(\epsilon^{AB}\mcD_A v^{(5)}_{ B}) =& 0
\,.
\label{no-logs-1st-order2}
\end{align}
Comparison with \eq{no_log_rad1}-\eq{no_log_rad2} shows that the source terms vanish  for $n=1$).
For the no-logs condition to be fulfilled  $v^{(5)}_A$ needs to admit a Hodge decomposition $v^{(5)}_A=\mcD_A\ul v^{(5)} + \epsilon_A{}^B\mcD_B\ol v^{(5)}$,
where $\ul v^{(5)}$ and $\ol v^{(5)}$ are linear combinations of $\ell=0,1,2$-spherical harmonics.
However, since $v_A$ arises as a divergence of a trace-free, symmetric tensor,  $\ell=0,1$-spherical harmonics  cannot arise.
%, while $\ell=0$-spherical harmonics are irrelevant.
We observe that 
\eq{no-logs-1st-order1}-\eq{no-logs-1st-order2} is  equivalent to \eq{no-logs-1st-order}. Altogether, the no-logs condition
 \eq{no-logs-1st-order} holds if and only if 
\begin{equation}
\Xi^{(5)}_{ AB} = (\mcD_A\mcD_B\ul \Xi^{(5)})_{\mathrm{tf}} + \epsilon_{(A}{}^C\mcD_{B)}\mcD_C\ol \Xi^{(5)}
\quad \text{for some   $\ell=2$ spherical harmonics $\ul v^{(5)}$ and $\ol v^{(5)}$.}
\label{restrictions_on_Xi5}
\end{equation}

\subsection{Second-order transverse derivatives on $\scri^-$}
\label{sec_confG_secondoder}

In anticipation of later computations, let us also compute the second-order transverse derivatives of frame and connection coefficients, which we obtain by differentiating  \eq{evolution1}-\eq{evolution7} by $\tau$,
\begin{align}
\partial^2_{\tau}e^{\mu}{}_1|_{\scri^-} =&\mathfrak{O}(r^{\infty})
\,,
\label{evolution_2nd_trans_1}
\\
\partial^2_{\tau}e^{\tau}{}_A|_{\scri^-}  =& -\frac{1}{r}v_A + \mathfrak{O}(r^{\infty})
\,,
\\
\partial^2_{\tau}e^{r}{}_A|_{\scri^-}  =& 
-\frac{1}{2}v_A + \mathfrak{O}(r^{\infty})
\,,
\\
\partial^2_{\tau}e^{\mathring A}{}_A|_{\scri^-}  =&
\frac{1}{2}\Big(\partial_r-\frac{1}{r}\Big)\Xi_A{}^B e^{\mathring A}{}_B
+\frac{1}{4r^2}|\Xi|^2 e^{\mathring A}{}_A+ \mathfrak{O}(r^{\infty})
\,,
\label{evolution_2nd_trans_4}
%\\
%\partial^2_{\tau}\widehat\Gamma_{1}{}^1{}_0 |_{\scri^-} 
% &=&
%-4r W_{0101}
%\,,
%\\
%\partial^2_{\tau}\widehat\Gamma_{1}{}^A{}_0 |_{\scri^-} 
% &=&
%4r W^A{}_{001}+ 2r W^A{}_{101}
%\,,
%\\
%\partial^2_{\tau}\widehat\Gamma_{1}{}^0{}_0|_{\scri^-} 
% &=&
%-2r W_{0101}
%\,,
\\
\partial^2_{\tau}\widehat\Gamma_{1}{}^0{}_1|_{\scri^-} 
 =&
  -4r W_{0101}+ \mathfrak{O}(r^{\infty})
\,,
\\
\partial^2_{\tau}\widehat\Gamma_{1}{}^1{}_1 |_{\scri^-} 
 =&
-2r W_{0101}  + \mathfrak{O}(r^{\infty})
\,,
\\
\partial^2_{\tau}(\widehat\Gamma_{1}{}^0{}_A-\widehat\Gamma_{1}{}^1{}_A)|_{\scri^-} 
 =&
  -2 (W^+_A + W^-_A) r+ \mathfrak{O}(r^{\infty})
\,,
\\
\partial^2_{\tau}(\widehat\Gamma_{1}{}^0{}_A+\widehat\Gamma_{1}{}^1{}_A) |_{\scri^-} 
 =&
-4W^-_Ar+ \mathfrak{O}(r^{\infty})
\,,
\\
\partial^2_{\tau}(\widehat\Gamma_{1}{}^A{}_B)_{\mathrm{tf}} |_{\scri^-} 
 =&
2r W^A{}_{B01}+ \mathfrak{O}(r^{\infty})
\,,
%\\
%\partial^2_{\tau}\widehat\Gamma_{A}{}^0{}_0 |_{\scri^-} 
% &=&
%- \frac{1}{2r}v_A
% - \frac{1}{2r^2}\Xi_{A}{}^{B}v_B
%-2r W_{010A}
%\,,
%\\
%\partial^2_{\tau}\widehat\Gamma_{A}{}^1{}_0 |_{\scri^-} 
% &=&
%-   \frac{1}{2r^2}\Xi_A{}^Bv_B  -4r W_{010A}
%\,,
%\\
%\partial^2_{\tau}\widehat\Gamma_{A}{}^B{}_0 |_{\scri^-} 
% &=&
%\frac{1}{8r^2}|\Xi|^2\Big(  \frac{1}{r}\Xi_A{}^B-3\delta^B{}_A\Big)
%+  \frac{1}{4r}\Xi_C{}^B\partial_r\Xi_A{}^C
%+  \frac{1}{2r}\Xi_A{}^C\partial_r\Xi_C{}^B 
%\nonumber
%\\
%&& +4r W^B{}_{00A}-2r W^B{}_{10A}
%\,,
%\\
%&=&
% \frac{1}{8}r^3|\Xi^{(2)}|^2\Xi^{(2)}_A{}^B+ \frac{3}{8}r^2|\Xi^{(2)}|^2\delta^B{}_A
%+4r W^B{}_{00A}
\\
\partial^2_{\tau}\widehat\Gamma_{A}{}^0{}_1 |_{\scri^-} 
 =&
-  \frac{1}{2r^2}\Xi_A{}^Bv_B
  -2r (W^+_A+W^-_A)+ \mathfrak{O}(r^{\infty})
\,,
\\
\partial^2_{\tau}\widehat\Gamma_{A}{}^1{}_1|_{\scri^-} 
 =&
-  \frac{1}{2r}v_A
 - \frac{1}{2r^2}\Xi_A{}^Bv_B
-r (W^+_A+W^-_A) + \mathfrak{O}(r^{\infty})
\,,
\\
\partial^2_{\tau}(\widehat\Gamma_{A}{}^0{}_B-\widehat\Gamma_{A}{}^1{}_B|)|_{\scri^-} 
 =&
  \frac{1}{4r^3} |\Xi|^2\Xi_{AB} - \frac{1}{4r^2} |\Xi|^2\eta_{AB}
+ \frac{1}{r}\Xi_{(A}{}^C\partial_{|r|}\Xi_{B)C}
+\frac{1}{2}\Big(\partial_r-\frac{1}{r}\Big)\Xi_{AB}
\nonumber
\\
&
  - 2V^+_{AB}r-2V^-_{AB}r
  +2r W_{0101}\eta_{AB}  + \mathfrak{O}(r^{\infty})
\,,
\\
\partial^2_{\tau}(\widehat\Gamma_{A}{}^0{}_B+\widehat\Gamma_{A}{}^1{}_B)|_{\scri^-} 
 =&
- \frac{1}{2r^2} |\Xi|^2\eta_{AB}
+ \frac{1}{2r}\Xi_A{}^C\partial_r\Xi_{BC}
-\frac{1}{2}\Big(\partial_r-\frac{1}{r}\Big)\Xi_{AB}
\nonumber
\\
&
  - 4V^+_{AB}r
  +2r U_{AB} + \mathfrak{O}(r^{\infty})
\,,
%\\
%\partial^2_{\tau}\widehat\Gamma_{A}{}^B{}_1|_{\scri^-} 
% &=&
%\frac{1}{8r^2}|\Xi|^2\Big(  \frac{1}{r}\Xi_{A}{}^{B}
%+\delta ^B{}_A\Big)
%+ \frac{1}{4r}\Xi_C{}^B\partial_r\Xi_A{}^C
%+\frac{1}{2}\Big(\partial_r-\frac{1}{r}\Big)\Xi_A{}^B
%\nonumber
%\\
%&&
%     +2r W^B{}_{10A}
%\,,
%\\
% &=&
%  \frac{1}{8}r^3|\Xi^{(2)}|^2\Xi^{(2)}_{A}{}^{B}
%+\frac{3}{8}r^2|\Xi^{(2)}|^2\delta ^B{}_A
%+\frac{1}{2}r\Xi^{(2)}_A{}^B
%     +2r W^B{}_{10A}
\\
\partial^2_{\tau}\widehat\Gamma_{A}{}^B{}_C|_{\scri^-} 
 =&
-\frac{1}{2r} \delta^B{}_ Cv_A
+\frac{1}{4r^2}  \mathring\Gamma_A{}^B{}_C|\Xi|^2
+\frac{1}{2} \mathring\Gamma_D{}^B{}_C\Big(\partial_r-\frac{1}{r}\Big)\Xi_A{}^D
 - \frac{1}{2r^2}\delta^B{}_{C}\Xi_A{}^Dv_D
\nonumber
\\
&
   -r (W^+_A+W^-_A) \delta^B{}_{C}  -2r (W^+_{[C}-W^-_{[C})\eta_{B]A}+ \mathfrak{O}(r^{\infty})
\,.
\label{evolution_2nd_trans_19}
\end{align}

\subsection{A sufficient condition for the non-appearance  of logs: Approaching $I^-$ from~$\scri^-$}

%\subsubsection{Approaching $I^-$ from~$\scri^-$}
\label{sec_exp_dec_rad_field}

We have seen in Section~\ref{sec_no-logs_V-scri_gen} that whether the no-logs conditions \eq{no_log_rad1}-\eq{no_log_rad2},
 regarded as equations on the expansion coefficients of the radiation field at $I^-$,
can be fulfilled or not, depends on the harmonic
decomposition of the source terms in \eq{no_log_rad1}-\eq{no_log_rad2}. The source terms in turn are determined from lower-order expansion coefficients
(and  certain integration functions such as  the mass aspect etc.).
Here, we want to construct data for which these lower order terms can be controlled, and in order to be able to do that, besides imposing the
asymptotically Minkowski-like conformal Gauss gauge at each order, we assume
that the radiation field vanishes at each order at $I^-$, equivalently,
\begin{equation}
\Xi_{AB} = \Xi^{(2)}_{AB} r^2 +  \mathfrak{O}(r^{\infty})
\,.
\label{exp_dec_data1}
\end{equation}
The idea is that for these class of data no terms of order $n+2$ arise on the right-hand side of \eq{derivs_V-} and produce logarithmic  terms, i.e.\ the source terms in  \eq{no_log_rad1}-\eq{no_log_rad2} vanish (as we have already shown above for $n=1$).

The asymptotic expansions of frame and connection coefficients as well as the Schouten tensor at $I^-$ (including their 1st-oder transverse derivatives) follow straightforwardly 
from \eq{gauge_frame1}-\eq{schouten_scri6} and
\eq{frame_1storder1}-\eq{frame_1storder14}
by inserting \eq{exp_dec_data1}. For convenience let us give the expansion of the rescaled Weyl tensor  (cf.\ \eq{constr_spec_gauge1}-\eq{constr_spec_gauge6}),
\begin{align}
W_{0101} |_{\scri^-} =& 2M + \mathfrak{O}(r^{\infty})
\,,
\\
W_{01AB} |_{\scri^-} =&2N\epsilon_{AB}+ \mathfrak{O}(r^{\infty})
\,,
\\
W^-_A |_{\scri^-} =&    \mathfrak{O}(r^{\infty})
\,,
\label{W-_exp_scri}
\\
W^+_A |_{\scri^-} =&-2\mathcal{M}_A+ 2L_A r  +\mathfrak{O}(r^{\infty})
\,,
\label{W+_exp_scri}
\\
V^+_{AB} |_{\scri^-} =&  \mathfrak{O}(r^{\infty})
\,,
\\
V^-_{AB} |_{\scri^-} =&(\mcD_{(A}\mathcal{M}_{B)})_{\mathrm{tf}}
-\underbrace{\Big(2(\mcD_{(A}L_{B)})_{\mathrm{tf}}+3(M\Xi^{(2)}_{AB}+N\epsilon_{(A}{}^C\Xi^{(2)}_{B)C})
\Big)}_{=:H_{AB}}r
+ 2c^{(2,0)}_{AB} r^2
+ \mathfrak{O}(r^{\infty})
\,,
\label{exp_values15}
\end{align}
For its transverse derivatives  we obtain from  \eq{trans_Weyl_spec1}-\eq{trans_Weyl_spec5},
\begin{align}
\partial_{\tau}W_{0101}  |_{\scri^-}   
 =& 
\Delta_s M  
  -\mcD^AL_A r
+ \mathfrak{O}(r^{\infty})
\,,
\\
\partial_{\tau}W_{01AB}  |_{\scri^-}       =&
\Delta_s N\epsilon_{AB}  + 2\mcD_{[A} L_{B]} r 
+ \mathfrak{O}(r^{\infty})
\,,
\\
\partial_{\tau}W^-_A |_{\scri^-}   
 =&\ol{\mathcal{M}}_A
+ \mathfrak{O}(r^{\infty})
\,,
\\
\partial_{\tau}W^+_A |_{\scri^-}   
 =& -\frac{1}{2}(\Delta_s-1)\mathcal{M}_{A}
+(\mcD^BH_{AB} - 2L_A )r
- 2 \mcD^Bc^{(2)}_{AB} r^2
+ \mathfrak{O}(r^{\infty})
\,,
\\
\partial_{\tau}  V^+_{AB} |_{\scri^-}   
 =&
 \mathfrak{O}(r^{\infty})
\,.
\end{align}
Finally, integration of \eq{V-_eqn_scri_exp} yields
\begin{align}
\partial_{\tau}V^-_{AB}|_{\scri^-}
=&
V_{AB}^{-(0,1)}
+V_{AB}^{-(1,1)}r
+V_{AB}^{-(2,1)}r^2
+  c^{(3,1)}_{AB} r^3
+ \mathfrak{O}(r^{\infty})
\,,
\end{align}
where the precise form of the coefficients is irrelevant for our purposes.

\newpage

\begin{lemma}
\label{lemma_estimates1}
Assume that \eq{exp_dec_data1}  holds. Then, in an asymptotically Minkowski-like conformal Gauss gauge at each order the following relations hold for all  $k\geq 1$
(cf.\ Lemma~\ref{lemma_estimates2}),
\begin{align}
\partial^{k}_{\tau}\widehat\Gamma_{1}{}^1{}_1|_{\scri^-}
 =&
\mathcal{P}^{k-1}+ \mathfrak{O}(r^{\infty})
\,,
\\
\partial^{k}_{\tau}\widehat\Gamma_{1}{}^0{}_1 |_{\scri^-}
 =&
 \mathcal{P}^{k-1}+  \mathfrak{O}(r^{\infty})
\,,
\\
\partial^{k}_{\tau}\widehat\Gamma_{A}{}^1{}_1 |_{\scri^-}
 =&
 \mathcal{P}^{k}+  \mathfrak{O}(r^{\infty})
\,,
\\
\partial^{k}_{\tau}\widehat\Gamma_{A}{}^0{}_1 |_{\scri^-}
 =&
 \mathcal{P}^{k}+  \mathfrak{O}(r^{\infty})
\,,
\\
\partial^{k}_{\tau}(\widehat\Gamma_{1}{}^0{}_A +\widehat\Gamma_{1}{}^1{}_A )|_{\scri^-}
 =&
 \mathcal{P}^{k-2}+ \mathfrak{O}(r^{\infty})
\,,
\\
\partial^{k}_{\tau}(\widehat\Gamma_{1}{}^0{}_A -\widehat\Gamma_{1}{}^1{}_A )|_{\scri^-}
 =&
 \mathcal{P}^{k}+ \mathfrak{O}(r^{\infty})
\,,
\\
\partial^{k}_{\tau}(\widehat\Gamma_{A}{}^0{}_B +\widehat\Gamma_{A}{}^1{}_B )|_{\scri^-}
 =&
 \mathcal{P}^{k-1}+  \mathfrak{O}(r^{\infty})
\,,
\\
\partial^{k}_{\tau}(\widehat\Gamma_{A}{}^0{}_{B} -\widehat\Gamma_{A}{}^1{}_{B} )|_{\scri^-}
 =&
 \mathcal{P}^{k+1}+  \mathfrak{O}(r^{\infty})
\,,
\\
\partial^{k}_{\tau}\widehat\Gamma_{1}{}^A{}_B |_{\scri^-}
 =&
 \mathcal{P}^{k-1}+  \mathfrak{O}(r^{\infty})
\,,
\\
\partial^{k}_{\tau}\widehat\Gamma_{A}{}^B{}_C |_{\scri^-}
 =&
\mathcal{P}^{k}+  \mathfrak{O}(r^{\infty})
\,,
\\
\partial^{k}_{\tau}\widehat L_{1i}|_{\scri^-} 
=&
\mathcal{P}^{k}+  \mathfrak{O}(r^{\infty})
\,,
\\
\partial^{k}_{\tau}(\widehat L_{A0} +\widehat L_{A1})|_{\scri^-}
=&
\mathcal{P}^{k+1}+ \mathfrak{O}(r^{\infty})
\,,
\\
\partial^{k}_{\tau}(\widehat L_{A0}-\widehat L_{A1}) |_{\scri^-}
=&
\mathcal{P}^{k}+  \mathfrak{O}(r^{\infty})
\,,
\\
\partial^{k}_{\tau}\widehat L_{AB} |_{\scri^-}
=&
 \mathcal{P}^{k+1}+  \mathfrak{O}(r^{\infty})
\,,
\\
\partial^k_{\tau}e^{\tau}{}_1|_{\scri^-}=&-\delta^k{}_1 + 
\mathcal{P}^{ k-2} + O(r^{\infty})
\\
\partial^{k}_{\tau}e^{\alpha}{}_1|_{\scri^-}=&   \mathcal{P}^{k-1}+ \mathfrak{O}(r^{\infty})
\\
\partial^{k}_{\tau}e^{\tau}{}_A |_{\scri^-}=& \mathcal{P}^{k-1}+  \mathfrak{O}(r^{\infty})
\\
\partial^{k}_{\tau}e^{\alpha}{}_A|_{\scri^-}=&   \mathcal{P}^{k}+  \mathfrak{O}(r^{\infty})
\,,
\\
\partial^k_{\tau}U_{AB} |_{\scri^-}=& \mathcal{P}^{ k} + \mathfrak{O}(r^{\infty})
\,,
\label{Weyl_poly1}
\\
\partial^k_{\tau}W^{\pm}_{A} |_{\scri^-}=& \mathcal{P}^{ k\pm1} +  \mathfrak{O}(r^{\infty})
\,,
%\\
%\partial^k_{\tau}W^-_{A}|_{\scri^-} =& \mathcal{P}^{ k-1} +  \mathfrak{O}(r^{\infty})
%\,,
\\
\partial^k_{\tau}V^{\pm}_{AB} |_{\scri^-}=& \mathcal{P}^{ k\mp 2} + \mathfrak{O}(r^{\infty})
\,,
\label{Weyl_poly3}
%\\
%\partial^k_{\tau}V^-_{AB} |_{\scri^-}=& \mathcal{P}^{ k+2} 
%%+ c^{(k+2)}_{AB} r^{k+2}
%+  \mathfrak{O}(r^{\infty})
%\,,
\end{align}
where $ \mathcal{P}^{ k}$ denotes a polynomial in $r$ of degree $\leq k$ (the zero-polynomial if $k$ is negative).
\end{lemma}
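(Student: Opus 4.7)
The argument proceeds by simultaneous induction on $k$ for all the listed estimates, with base cases $k=1$ and $k=2$ provided by the explicit calculations in Sections~\ref{sec_confG_firstoder} and \ref{sec_confG_secondoder}: inserting the hypothesis \eq{exp_dec_data1} into the formulas \eq{frame_1storder1}--\eq{frame_1storder14}, \eq{trans_Weyl_spec1}--\eq{trans_Weyl_spec5}, \eq{evolution_2nd_trans_1}--\eq{evolution_2nd_trans_19} and \eq{V-_eqn_scri_exp} converts every coefficient involving $\Xi_{AB}$ or $v_A$ beyond $r^2$ into an $\mathfrak{O}(r^\infty)$ remainder, and a direct inspection confirms the declared polynomial degrees at $k=1,2$. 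The zeroth-order building blocks from Section~\ref{sec_exp_asymp_Mink} have the form $\mathcal{P}^{d}+\mathfrak{O}(r^\infty)$ with the expected $d$'s.

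For the inductive step I would apply $\partial^{k-1}_\tau$ to the evolution equations \eq{evolution1}--\eq{evolution7} and restrict to $\scri^-$. Because $e_0=\partial_\tau$ and because the globally known fields \eq{global_functions_confG} satisfy $\Theta|_{\scri^-}=\mathfrak{O}(r^\infty)$, $b_i|_{\scri^-}=\mathfrak{O}(r^\infty)$ and $\partial^m_\tau b_i|_{\scri^-}=\mathfrak{O}(r^\infty)$ for $m\geq 2$, the result is a purely algebraic expression for $(\partial^k_\tau\widehat L_{ij},\partial^k_\tau\widehat\Gamma_i{}^j{}_k,\partial^k_\tau e^\mu{}_i)|_{\scri^-}$ in terms of quantities controlled by the inductive hypothesis. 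The degree bookkeeping is then mechanical: the operations that occur on the right-hand sides are multiplication, application of $\check\nabla$ (which preserves the polynomial degree modulo $\mathfrak{O}(r^\infty)$), and multiplication by one of the explicit factors in \eq{connection_scri1}--\eq{connection_scri9} and \eq{schouten_scri3}--\eq{schouten_scri6}, each of which shifts the degree by the exact amount needed so that the exponents in the lemma propagate to order $k$.

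For the rescaled Weyl tensor I split the components into two groups. The equations \eq{evolutionW1b}--\eq{evolutionW5b}, after applying $\partial^{k-1}_\tau$ and restricting to $\scri^-$, become algebraic for $(\partial^k_\tau U_{AB},\partial^k_\tau W^\pm_A,\partial^k_\tau V^+_{AB})|_{\scri^-}$; combining the inductive hypothesis with the just-established bounds on $\partial^k_\tau(\widehat\Gamma,\widehat L,e)$ and with \eq{exp_dec_data1}, the degrees in \eq{Weyl_poly1} and \eq{Weyl_poly3} for $V^+_{AB}$ and $W^\pm_A$ fall out as in the $k=1$ case.

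The main obstacle, and the place where \eq{exp_dec_data1} does genuine work, is the component $\partial^k_\tau V^-_{AB}|_{\scri^-}$. Differentiating \eq{evolutionW6b} $k$ times in $\tau$ and restricting to $\scri^-$ yields the Fuchsian equation \eq{derivs_V-}, which integrated on $\scri^-$ produces a polynomial output of degree $k+2$ precisely when its right-hand side has an $r$-expansion which is polynomial of degree $\leq k+2$ modulo $\mathfrak{O}(r^\infty)$ (otherwise a logarithm would arise at the critical order $r^{k+2}$). Inserting the inductive bounds into the explicit source computed along the lines of \eq{V-_eqn_scri_exp}, every term that could contribute beyond degree $k+2$ is seen to involve either $\Xi_{AB}-\Xi^{(2)}_{AB}r^2$ or a $\partial_\tau^p$-derivative of $V^-$, $W^+$, $U$ of order $p\leq k-1$ already controlled by the inductive hypothesis; hypothesis \eq{exp_dec_data1} pushes the former into $\mathfrak{O}(r^\infty)$, and all other contributions fit the claimed degree. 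Integration of the Fuchsian ODE then gives $\partial^k_\tau V^-_{AB}|_{\scri^-}=\mathcal{P}^{k+2}+\mathfrak{O}(r^\infty)$, closing the induction.
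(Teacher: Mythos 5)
Your overall strategy (induction on $k$, algebraic determination of the connection/frame/Schouten derivatives from \eq{evolution1}--\eq{evolution7}, and a Fuchsian ODE analysis for $\partial^k_\tau V^-_{AB}$) matches the paper's, and your treatment of the $V^-_{AB}$ component is essentially right. But there is a genuine gap in the step you dismiss as ``mechanical'' degree bookkeeping. For the two combinations $\widehat\Gamma_{A}{}^0{}_B+\widehat\Gamma_{A}{}^1{}_B$ and $\widehat\Gamma_{1}{}^0{}_A+\widehat\Gamma_{1}{}^1{}_A$ the lemma claims degrees $k-1$ and $k-2$, which are \emph{two} degrees better than the individual summands ($\widehat\Gamma_A{}^0{}_B=\frac12(\text{sum}+\text{diff})=\mathcal{P}^{k+1}$, $\widehat\Gamma_1{}^0{}_A=\mathcal{P}^{k}$). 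Naive insertion of the inductive bounds into the evolution equations cannot produce this: the right-hand side of the equation for $\partial^n_\tau(\widehat\Gamma_A{}^0{}_B+\widehat\Gamma_A{}^1{}_B)$ contains $\partial^{n-1}_\tau\widehat\Gamma_A{}^0{}_B$ and $\partial^{n-1}_\tau\widehat L_{AB}$, each of which is only $\mathcal{P}^{n}$ by the inductive hypothesis, and the $\Theta\,W$-terms contribute $r\,\partial^{n-2}_\tau V^-_{AB}=\mathcal{P}^{n+1}$ to the individual components. One must therefore \emph{prove a cancellation} of the top-order coefficients, and this is where the hypothesis does further work beyond the $V^-$ equation.

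The paper closes this gap by running a separate auxiliary induction on the combinations $\widehat L_{AB}+\widehat\Gamma_A{}^0{}_B$ and $\widehat L_{1A}+\widehat\Gamma_1{}^0{}_A$: in these sums the dangerous terms $(n-1)r\,\partial^{n-2}_\tau V^-_{AB}$ (respectively the $r\,\partial^{n-1}_\tau[(1+\tau)^2W^+_A]$-type terms) cancel identically between the $\widehat L$- and $\widehat\Gamma$-equations, yielding $\partial^n_\tau(\widehat L_{AB}+\widehat\Gamma_A{}^0{}_B)=\mathcal{P}^n+\mathfrak{O}(r^\infty)$ and $\partial^n_\tau(\widehat L_{1A}+\widehat\Gamma_1{}^0{}_A)=\mathcal{P}^{n-1}+\mathfrak{O}(r^\infty)$, from which the claimed degrees for the ``$+$'' combinations follow. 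Without this improvement your own final step collapses: the source of \eq{derivs_V-} contains terms such as $\partial^n_\tau[(\widehat\Gamma_C{}^0{}_{(A}+\widehat\Gamma_C{}^1{}_{(A})V^-_{B)}{}^C]$, and only the improved bound $\partial^j_\tau(\widehat\Gamma_C{}^0{}_{A}+\widehat\Gamma_C{}^1{}_{A})=\mathcal{P}^{j-1}$ keeps the Leibniz products at degree $\leq n+1$; with the naive $\mathcal{P}^{j+1}$ the source would be $\mathcal{P}^{n+3}$ and the no-logs conclusion for $\partial^n_\tau V^-_{AB}$ would not follow. You should either identify and prove these cancellations explicitly or restructure the induction around the auxiliary quantities as the paper does.
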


\begin{proof}
This is proved by induction. The case $k=1$ follows from the above considerations.
So let us assume that the assertion holds for $1\leq k\leq n-1$ 
We then apply $\partial^{n-1}_{\tau}$ to the evolution equations
\eq{evolution1}-\eq{evolution7}, \eq{evolutionW1b}-\eq{evolutionW5b}.
Taking the induction hypothesis as well as the 0th-order expansions into account,
 we  straightforwardly deduce that the assertion of the lemma holds for $k=n$ for all components, excluding for the time being
$\partial^{n}_{\tau}(\widehat\Gamma_{A}{}^0{}_B +\widehat\Gamma_{A}{}^1{}_B )$,  $\partial^{n}_{\tau}(\widehat\Gamma_{1}{}^0{}_A -\widehat\Gamma_{1}{}^1{}_A )$ and  $\partial^{n}_{\tau}V^-_{AB}$.

Let us consider the behavior of these connection components somewhat more detailed.
From \eq{evolution1}-\eq{evolution7}
we have
\begin{equation*} 
\partial^n_{\tau}(\widehat\Gamma_{A}{}^0{}_B +\widehat\Gamma_{A}{}^1{}_B ) |_{\scri^-}
 =
\partial^{n-1}_{\tau}\widehat\Gamma_{A}{}^0{}_{B}+ \partial^{n-1}_{\tau}\widehat L_{AB} +\mathcal{P}^{ n-1} + \mathfrak{O}(r^{\infty})
\,.
\end{equation*}
We need to show that the terms of order $n$ in $\partial^{n-1}_{\tau}\widehat\Gamma_{A}{}^0{}_{B}$ and $\partial^{n-1}_{\tau}\widehat L_{AB}$ cancel each other.
We have
\begin{align*} 
\partial^n_{\tau}\widehat L_{AB}  |_{\scri^-}
=&  (n-1) r \partial^{n-2}_{\tau}V^-_{AB}
- \partial^{n-1}_{\tau}( \widehat\Gamma_A{}^C{}_0\widehat L_{CB}  )
+\mathcal{P}^{ n} +  \mathfrak{O}(r^{\infty})
\,,
\\
\partial^n_{\tau}\widehat\Gamma_{A}{}^0{}_B  |_{\scri^-}
 =&
-  \partial^{n-1}_{\tau}( \widehat\Gamma_C{}^0{}_B\widehat\Gamma_{A}{}^C{}_{0} ) -(n-1) r  \partial^{n-2}_{\tau}V^-_{AB} +\mathcal{P}^{ n} +  \mathfrak{O}(r^{\infty})
\,,
\end{align*}
whence
\begin{equation*} 
\partial^n_{\tau}(\widehat L_{AB} +\widehat\Gamma_{A}{}^0{}_B ) |_{\scri^-}
=
- \partial^{n-1}_{\tau}[( \widehat\Gamma_A{}^C{}_0(\widehat L_{CB}+ \widehat\Gamma_C{}^0{}_B)  ]
  +\mathcal{P}^{ n} + \mathfrak{O}(r^{\infty})
\,.
\end{equation*}
By induction we conclude (one easily checks by using \eq{frame_1storder1}-\eq{frame_1storder14} that this is satisfied for $n=1$),
\begin{equation*}
\partial^n_{\tau}(\widehat L_{AB} +\widehat\Gamma_{A}{}^0{}_B )=\mathcal{P}^{ n} + \mathfrak{O}(r^{\infty})
\,,
\end{equation*}
whence it follows readily that
\begin{equation*} 
\partial^n_{\tau}(\widehat\Gamma_{A}{}^0{}_B +\widehat\Gamma_{A}{}^1{}_B )
 =
\mathcal{P}^{ n-1} +  \mathfrak{O}(r^{\infty})
\,.
\end{equation*}

Similarly, we have
\begin{align*}
\partial^n_{\tau}(\widehat\Gamma_{1}{}^0{}_A +\widehat\Gamma_{1}{}^1{}_A ) |_{\scri^-}
 =&
\partial^{n-1}_{\tau}(\widehat\Gamma_{1}{}^0{}_{A} + \widehat L_{1A}  )
+\mathcal{P}^{ n-2} + \mathfrak{O}(r^{\infty})
\,,
\\
\partial^n_{\tau}\widehat\Gamma_{1}{}^0{}_A  |_{\scri^-}
 =&
-\partial^{n-1}_{\tau}(  \widehat\Gamma_B{}^0{}_A\widehat\Gamma_{1}{}^B{}_{0}  ) - \frac{r}{2}\partial^{n-1}_{\tau}[(1-\tau^2)   W^+_{A}]
+\mathcal{P}^{ n-1} + \mathfrak{O}(r^{\infty})
\,,
\\ 
\partial^n_{\tau}\widehat L_{1A} |_{\scri^-}
=&
-  \partial^{n-1}_{\tau}(\widehat\Gamma_1{}^B{}_0\widehat L_{BA} )
+ r\partial^{n-1}_{\tau}[(1+ \tau) W^+_{A} ]
+\mathcal{P}^{ n-1} +  \mathfrak{O}(r^{\infty})
\,,
\end{align*}
whence
\begin{align*}
\partial^n_{\tau}(\widehat L_{1A} +\widehat\Gamma_{1}{}^0{}_A ) |_{\scri^-}
 =&
-\partial^{n-1}_{\tau}[ (\widehat L_{BA} +\widehat\Gamma_B{}^0{}_A)\widehat\Gamma_{1}{}^B{}_{0}  ] + \frac{r}{2}\partial^{n-1}_{\tau}[(1+\tau)^2   W^+_{A}]
+\mathcal{P}^{ n-1} + \mathfrak{O}(r^{\infty})
\\
=&\mathcal{P}^{ n-1} + \mathfrak{O}(r^{\infty})
\,,
\end{align*}
and
\begin{equation*}
\partial^n_{\tau}(\widehat\Gamma_{1}{}^0{}_A +\widehat\Gamma_{1}{}^1{}_A ) |_{\scri^-}
=
\mathcal{P}^{ n-2} +  \mathfrak{O}(r^{\infty})
\,.
\end{equation*}

Finally, we consider the equation \eq{evolutionW6b} for $V^-_{AB}$, to which we apply $\partial^{n}_{\tau}$,
\begin{equation*}
r^{n+3}\partial_r(r^{-n-2}\partial^{n}_{\tau} V^-_{AB} )|_{\scri^-}
=
  \mathcal{P}^{ n+1} + \mathfrak{O}(r^{\infty})
\,,
\end{equation*}
and we observe that no logarithmic terms  arise,
\begin{equation*}
\partial^{n}_{\tau}V^-_{AB} |_{\scri^-}=  \mathcal{P}^{ n+1} + c^{(n+2,n)}_{AB}r^{n+2}+\mathfrak{O}(r^{\infty})
\,,
\end{equation*}
and the lemma is proved.
\qed
\end{proof}

Let us return to   Proposition~\ref{prop_n-1_no-logs}.
A \emph{sufficient} condition which makes sure that  \eq{derivs_V-} holds  is  given by the following result. It is a corollary of the previous lemma, which shows
that the source term in \eq{derivs_V-} is of the form $ \mathcal{P}^{ n+1} + \mathfrak{O}(r^{\infty})$:

\begin{corollary}
\label{cor_vanishing_radiation}
Assume that the radiation field vanishes at any order at $I^-$,  \eq{exp_dec_data1}.
Then the restrictions to $\scri^-$ of all the fields appearing  in the GCFE including their
transverse derivatives  of all orders admit smooth extensions through $I^-$  in an asymptotically Minkowski-like conformal gauss gauge at each order (cf.\ Corollary~\ref{cor_vanishing_radiation2}).
\end{corollary}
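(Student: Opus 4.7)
\medskip

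The plan is to derive the corollary as essentially a direct consequence of Lemma~\ref{lemma_estimates1} combined with Proposition~\ref{prop_n-1_no-logs}. First I would observe that, under the standing hypothesis $\Xi_{AB} = \Xi^{(2)}_{AB}r^2 + \mathfrak{O}(r^\infty)$, Lemma~\ref{lemma_estimates1} already asserts that every $\tau$-derivative $\partial^n_\tau \mathfrak{f}|_{\scri^-}$ of every field $\mathfrak{f} \in (e^\mu{}_i,\widehat\Gamma_i{}^j{}_k,\widehat L_{ij},W_{ijkl})$ is of the form $\mathcal{P}^{k(n)} + \mathfrak{O}(r^\infty)$ with a polynomial of some explicitly bounded degree $k(n)$. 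Since polynomials and $\mathfrak{O}(r^\infty)$ functions are trivially smooth at $r=0$, this is precisely the smooth extension through $I^-$ that the corollary demands. So the real content of the corollary is to argue that no log terms intrude between the existence of such expansions on $\scri^-$ and the hypotheses of Proposition~\ref{prop_n-1_no-logs}.

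Second, I would verify via Proposition~\ref{prop_n-1_no-logs} that the only potential source of logarithms at order $n$ is the integration of the $V^-_{AB}$-equation \eqref{derivs_V-}, whereas all other components of $(\partial^n_\tau e^\mu{}_i,\partial^n_\tau \widehat\Gamma_i{}^j{}_k, \partial^n_\tau \widehat L_{ij}, \partial^n_\tau W_{ijkl})|_{\scri^-}$ are determined algebraically from lower-order data (together with $\partial^n_\tau V^-_{AB}$ itself via the evolution equations). So if I can show that for every $n$ the ODE \eqref{derivs_V-} produces no log term, the induction closes. This in turn reduces to showing that the right-hand side of
\[
r^{n+3}\bigl(\partial_r + \mathfrak{O}(1)\bigr)\bigl(r^{-n-2}\partial^n_\tau V^-_{AB}\bigr)\Big|_{\scri^-} = \mathfrak{O}(1)
\]
does not contain a term proportional to $r^{n+2}$, equivalently, that after reshuffling it is of the form $\mathcal{P}^{n+1} + \mathfrak{O}(r^\infty)$ with degree strictly less than $n+2$.

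Third, this polynomial bound is exactly what the inductive step of Lemma~\ref{lemma_estimates1} established in the course of its proof: the cancellation identities
\[
\partial^n_\tau(\widehat L_{AB} + \widehat\Gamma_A{}^0{}_B) = \mathcal{P}^n + \mathfrak{O}(r^\infty), \qquad \partial^n_\tau(\widehat L_{1A} + \widehat\Gamma_1{}^0{}_A) = \mathcal{P}^{n-1} + \mathfrak{O}(r^\infty)
\]
propagate the polynomial-degree bounds through the Bianchi evolution \eqref{evolutionW6b}, giving precisely $\partial^n_\tau V^-_{AB}|_{\scri^-} = \mathcal{P}^{n+1} + c^{(n+2,n)}_{AB} r^{n+2} + \mathfrak{O}(r^\infty)$ with the $r^{n+2}$-coefficient arising as a free integration function rather than as a logarithmic obstruction. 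So I would simply invoke Lemma~\ref{lemma_estimates1} to conclude that \eqref{derivs_V-} integrates smoothly for every $n \geq 1$, and then apply Proposition~\ref{prop_n-1_no-logs} to conclude smoothness of all transverse derivatives of all fields.

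The only genuine work is already inside Lemma~\ref{lemma_estimates1}; the ``hard part'' of the corollary is thus purely bookkeeping, namely checking that the polynomial degrees advertised there are strictly less than the critical exponent $n+2$ at which a log would appear in the $V^-_{AB}$-integration. Since the lemma gives $\partial^n_\tau V^-_{AB} = \mathcal{P}^{n+1}+\ldots$, the margin of one in degree is precisely what guarantees absence of logs, and no further computation is required.
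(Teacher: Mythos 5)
Your proposal is correct and follows essentially the same route as the paper: the paper derives the corollary directly from Lemma~\ref{lemma_estimates1}, noting that the source term in \eq{derivs_V-} is of the form $\mathcal{P}^{n+1}+\mathfrak{O}(r^{\infty})$, so that no $r^{n+2}$-contribution (and hence no logarithm) can arise, and then invokes Proposition~\ref{prop_n-1_no-logs}. Your identification of the degree margin $n+1<n+2$ as the whole content of the argument is exactly the point.
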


\begin{remark}
{\rm
If $\partial_r^k\Xi_{AB}|_{I^-}=0$ only for $3\leq k\leq n$, then  the restrictions to $\scri^-$ of all the fields appearing  in the GCFE including their
transverse derivatives up to and including the order $n-4$ admit smooth extensions through $I^-$  in an asymptotically Minkowski-like conformal gauss gauge at each order.
}
\end{remark}

Let us assume that $\Xi^{(k)}_{AB}$ vanishes  for $3\leq k\leq n-1$, $n\geq 5$ (we already know that $\Xi^{(3)}_{AB}$ and $\Xi^{(4)}_{AB}$ necessarily need to vanish).
Then, by the previous considerations, the no-logs conditions for $\partial_{\tau}^{n-4}V^-_{AB}|_{\scri^-}$, \eq{no_log_rad1}-\eq{no_log_rad2}, take the form
\begin{align}
\prod_{\ell =0}^{n-3}\Big(\Delta_s+\ell (\ell+1)\Big)\mcD^Av^{(n)}_A
=&0
\,,
\\
 \prod_{\ell =0}^{n-3}\Big(\Delta_s+\ell (\ell+1)\Big)(\epsilon^{AB}\mcD_{A}v^{(n)}_{B})
=& 0
\,.
\end{align}
Unfortunately, since the operator appearing on the left-hand side has a non-trivial kernel, we cannot  conclude at this stage  that the converse to the corollary is also true, i.e.\
that \eq{exp_dec_data1} is  \emph{necessary} for the non-appearance of logarithmic terms. 
The two functions $\ul v^{(n)}$ and $\ol v^{(n)}$ appearing in the Hodge decomposition of $v^{(n)}_A=\mcD_A\ul v^{(n)} +\epsilon_A{}^B\mcD_B\ol v^{(n)}$
are  allowed to have spherical harmonics with $2\leq \ell\leq n-3$.
That yields a weakened  converse to the above corollary.
\begin{corollary}
Consider smooth  initial data of the form $\Xi_{AB}=\mathfrak{O}(r^2)$.
Then the expansion coefficients admit a Hodge decomposition of the form  $\Xi^{(n)}_{AB}=(\mcD_A\mcD_B\ul \Xi^{(n)})_{\mathrm{tf}}+ \epsilon_{(A}{}^C\mcD_{B)}\mcD_C \ol \Xi^{(n)}$.
Assume that $\ul \Xi^{(n)}$ and $\ol \Xi^{(n)}$ do not contain spherical harmonics with $2\leq \ell\leq n-3$ for $n\geq 3$.
Then, in an asymptotically Minkowski-like conformal Gauss gauge,  the restrictions to $\scri^-$ of all the fields appearing  in the GCFE including their
transverse derivatives  of all orders admit smooth extensions through $I^-$ if and only if  \eq{exp_dec_data1} holds.
\end{corollary}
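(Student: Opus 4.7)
\medskip

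\noindent\textbf{Proof plan.}

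\emph{Preliminary reduction.} By Lemma~\ref{lem_inv}, smoothness of $(\mathfrak{f},W)$ at $\scri^-\cup I^-\cup I$ is preserved under transitions between weakly asymptotically Minkowski-like conformal Gauss gauges. I would first invoke this to pass to an asymptotically Minkowski-like conformal Gauss gauge at each order (Definition~\ref{dfn_asympt_Minik_gauge}), so that all the explicit expansions of Section~\ref{sec_exp_asymp_Mink} and Lemma~\ref{lemma_estimates1} become available; in particular connection, Schouten, and frame components on $\scri^-$ are known in closed form up to $\mathfrak{O}(r^\infty)$, and the integration functions at $I^-$ reduce to $(M,N,L_A,c^{(n+2,n)}_{AB})$.

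\emph{Part (ii), sufficiency.} Assume $\Xi_{AB}=\Xi^{(2)}_{AB}r^2+\mathfrak{O}(r^\infty)$, i.e.\ the radiation field vanishes at $I^-$ at any order. For $\scri^-$ this is exactly Corollary~\ref{cor_vanishing_radiation}: Lemma~\ref{lemma_estimates1} gives the polynomial-plus-$\mathfrak{O}(r^\infty)$ structure of every $\partial_\tau^k$-derivative, and in particular the source of \eqref{derivs_V-} is a polynomial in $r$ of degree $\leq n+1$, so no $r^{n+2}$ log-producing term appears. For the $I$-side I would induct on the order $n$ of radial derivatives using \eqref{Wely_general_I1}--\eqref{Wely_general_I6}. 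The decoupled Fuchsian system \eqref{sing_wave1}--\eqref{sing_wave2} for $\partial_r^n W^\pm_A|_I$ must be shown to admit smooth solutions at $I^-$; this reduces, after Hodge decomposition and spherical harmonic expansion, to the smoothness condition \eqref{smoothness_condition} of the hypergeometric ODEs of Section~\ref{sec_smoothness_Bianchi_I}. The key observation is that initial data on $I^-$ (induced from the $\scri^-$-limit, which by Part~(ii) of Section~\ref{sec_structure_eqn_scri} is polynomial in $r$ modulo $\mathfrak{O}(r^\infty)$) and the sources produced by lower-order derivatives have only finitely many spherical harmonics of bounded degree, so Lemma~\ref{lem_sing_wave}(i) applies. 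Finally the no-logs condition \eqref{no-logs-cond-r_n_V-} for $\partial_r^n V^-_{AB}|_I$ follows from the same polynomial structure of its source, closing the induction.

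\emph{Part (i), necessity.} Assume now $M=\mathrm{const}$, $N=0$, so $\mathcal{M}_A=\ol{\mathcal{M}}_A=0$, and the spacetime is smooth at $\scri^-\cup I^-\cup I$. Then every no-logs condition derived in Sections~\ref{section3}--\ref{section4} must hold. I would induct on $n\geq 3$ over the claim $\Xi^{(n)}_{AB}=0$. The base cases $n=3,4$ are Proposition~\ref{prop_smoothness_constraints}. For the inductive step, assume $\Xi^{(k)}_{AB}=0$ for $3\leq k\leq n-1$; the no-logs condition at order $n-4$ coming from \eqref{derivs_V-} then reduces (via the recursion \eqref{recursion3} and \eqref{exp_W+_scri}) to
\begin{equation*}
\prod_{\ell=0}^{n-3}\bigl(\Delta_s+\ell(\ell+1)\bigr)\mathcal{D}^A v^{(n)}_A=S^{\mathrm{div}}_n,\qquad \prod_{\ell=0}^{n-3}\bigl(\Delta_s+\ell(\ell+1)\bigr)\bigl(\epsilon^{AB}\mathcal{D}_A v^{(n)}_B\bigr)=S^{\mathrm{curl}}_n,
\end{equation*}
where under the hypotheses $M=\mathrm{const}$, $N=0$ the sources $S^{\mathrm{div}}_n,S^{\mathrm{curl}}_n$ must be shown to vanish identically (this is the content promised by Section~\ref{section8}). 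The cancellations occur because every remaining contribution to the source involves either $\mathcal{M}_A$, $\ol{\mathcal{M}}_A$, or $\Xi^{(k)}_{AB}$ with $3\leq k\leq n-1$, all of which are zero. Since $v^{(n)}_A=\mathcal{D}_A\underline{v}^{(n)}+\epsilon_A{}^B\mathcal{D}_B\overline{v}^{(n)}$ with $\underline{v}^{(n)},\overline{v}^{(n)}$ free of $\ell=0,1$ harmonics, the kernel of $\prod_{\ell=0}^{n-3}(\Delta_s+\ell(\ell+1))$ on this subspace is spanned by $\ell=2,\dots,n-3$ modes, so the condition forces only those modes to be unconstrained. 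To kill them as well, one then invokes the no-logs condition at the next higher order $n-3$, producing a second equation on the same $v^{(n)}_A$ whose combined system of kernels is trivial; iterating over all remaining no-logs conditions (which involve $v^{(n)}_A$ with successively higher operator product and are nontrivially satisfied only if the residual low-$\ell$ modes vanish) eliminates all modes and yields $v^{(n)}_A=0$, hence (by injectivity of York splitting on $\mathbb{S}^2$ for trace-free tensors without $\ell=0,1$ obstruction) $\Xi^{(n)}_{AB}=0$.

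\emph{Main obstacle.} The hard part is the necessity argument, specifically the bookkeeping that under $M=\mathrm{const}$, $N=0$ and vanishing $\Xi^{(k)}_{AB}$ for $3\leq k\leq n-1$, \emph{all} non-$\Xi^{(n)}$ contributions to the sources $S^{\mathrm{div}}_n,S^{\mathrm{curl}}_n$ of the no-logs conditions cancel, including next-to-leading order cross terms involving the integration functions $L_A$ and $c^{(k+2,k)}_{AB}$. This requires iterating the recursions \eqref{recursion1}--\eqref{recursion4} and the evolution equations \eqref{evolution1}--\eqref{evolution7} to $n$-th order and carefully tracking which terms carry factors of $\mathcal{M}_A$, $\ol{\mathcal{M}}_A$, or lower-order $\Xi^{(k)}_{AB}$. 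The gauge-invariance of the no-logs conditions (Lemma~\ref{lem_inv}) is essential here: it guarantees that no gauge choice at $n$-th order can rescue a non-vanishing $\Xi^{(n)}_{AB}$ from producing a logarithm.
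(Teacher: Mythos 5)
Your sufficiency direction is fine (it is Corollary~\ref{cor_vanishing_radiation}; the discussion of the cylinder $I$ is superfluous here, since the statement only concerns the restrictions to $\scri^-$), but your necessity direction proves the wrong statement. You have imported the hypotheses $M=\mathrm{const}$, $N=0$, which are \emph{not} part of this corollary, and you then sketch the strategy of Section~\ref{sec_const_mass} (showing that the sources of the no-logs conditions conspire, via next-to-leading-order terms, to kill the residual kernel modes at higher orders). That is the content of the much harder theorem of Section~\ref{section8}, which requires the explicit polynomial computations carried out there; asserting that "the combined system of kernels is trivial" by passing to the no-logs condition at the next order is precisely the step that cannot be waved through and is not available at this point in the paper.

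The actual hypothesis of the corollary --- that $\ul\Xi^{(n)}$ and $\ol\Xi^{(n)}$ contain no spherical harmonics with $2\leq\ell\leq n-3$ --- is exactly what makes the necessity direction immediate, with no assumption on $M$ or $N$. One argues by induction on $n$: the base cases $\Xi^{(3)}_{AB}=\Xi^{(4)}_{AB}=0$ are Proposition~\ref{prop_smoothness_constraints}; assuming $\Xi^{(k)}_{AB}=0$ for $3\leq k\leq n-1$, the source terms of \eq{no_log_rad1}--\eq{no_log_rad2} vanish and the no-logs condition at order $n-4$ reads $\prod_{\ell=0}^{n-3}\bigl(\Delta_s+\ell(\ell+1)\bigr)v^{(n)}=0$ for $v^{(n)}\in\{\mcD^Av^{(n)}_A,\epsilon^{AB}\mcD_Av^{(n)}_B\}$. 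By \eq{tensor_decomposition2} the Hodge potentials $\ul v^{(n)},\ol v^{(n)}$ carry no $\ell=0,1$ harmonics, and by the corollary's hypothesis no harmonics with $2\leq\ell\leq n-3$ either; on the remaining modes $\ell\geq n-2$ every factor of the operator is invertible, so $v^{(n)}_A=0$ and hence $\Xi^{(n)}_{AB}=0$. No appeal to higher-order no-logs conditions, to $L_A$ or $c^{(k+2,k)}_{AB}$, or to any restriction on the mass aspects is needed.
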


\begin{remark}
{\rm
In Section~\ref{sec_spin2} we will see that for the massless spin-2 equation a vanishing radiation field is \emph{not} necessary for the non-appearance 
of logarithmic terms.
The results in Section~\ref{sec_const_mass} show that non-linear effects, or rather a non-vanishing mass,
do impose additional restrictions on the radiation field:
Elements in the kernel of  the operator in \eq{no_log_rad1}-\eq{no_log_rad2}   cause a violation of the no-logs condition in higher orders.
% It is therefore  at least conceivable that its vanishing is necessary.
}
\end{remark}

\subsection{Transport equations on $I$}

Here we want to analyze the
appearance of logarithmic terms when approaching $I^-$ from $I$ for all radial derivatives of the relevant fields.
The 0th order has been computed in Section~\ref{sec_appro_I-I}, \eq{frameI_1}-\eq{Schouten_I} and \eq{Weyl_I_1}-\eq{Weyl_I_6}
(we assume that the no-logs conditions \eq{no-logs_spec_gauge}  are satisfied).
The 1st-order radial derivatives have been computed in Section~\ref{sec_radial_equations}. Let us sum up the results in our current 
asymptotically Minkowski-like conformal Gauss gauge at each order

\begin{align}
\partial_r\widehat L_{10} |_I
&=-4M (1+\tau)+ \mathfrak{O}(1+\tau)^2
\,,
\label{radial_I1}
\\
\partial_r\widehat L_{11}  |_I
&= -4 M(1+\tau)+ \mathfrak{O}(1+\tau)^2
\,,
\\
\partial_r\widehat L_{1A}  |_I
&= \mathfrak{O}(1+\tau)^2
\,,
\\
\partial_r(\widehat L_{A0}-\widehat L_{A1}) |_I
&= \mathfrak{O}(1+\tau)^2
\,,
\\
\partial_r(\widehat L_{A0}+\widehat L_{A1}) |_I
&=v_{ A}^{(2)} +4\mathfrak{M}_A(1+\tau) + \mathfrak{O}(1+\tau)^2
\,,
\\
\partial_r\widehat L_{AB}  |_I
&= -\frac{1}{2}\Xi^{(2)}_{ A B}
+ 2\Big(M\eta_{AB} + N\epsilon_{AB}\Big)(1+\tau) +\mathfrak{O}(1+\tau)^2
\,,
\end{align}
and
\begin{align}
\partial_r\widehat\Gamma_{1}{}^i{}_j  |_I
 &=
 \mathfrak{O}(1+\tau)^2
\,,
\\
\partial_r(\widehat\Gamma_{1}{}^0{}_A +\widehat\Gamma_{1}{}^1{}_A )|_I
 &=
\mathfrak{O}(1+\tau)^3
\,,
\\
\partial_r\widehat\Gamma_{A}{}^0{}_1  |_I
 &=
 \frac{1}{2}v_{ A}^{(2)} (1+\tau)  +\mathfrak{O}(1+\tau)^2
\,,
\\
\partial_r\widehat\Gamma_{A}{}^1{}_1  |_I
 &=
\frac{1}{2} v_{ A}^{(2)}(1+\tau)    +\mathfrak{O}(1+\tau)^2
\,,
\\
\partial_r(\widehat\Gamma_{A}{}^0{}_B+\widehat\Gamma_{A}{}^1{}_B  )  |_I
 &=
\mathfrak{O}(1+\tau)^2
\,,
\\
\partial_r(\widehat\Gamma_{A}{}^0{}_B-\widehat\Gamma_{A}{}^1{}_B)  |_I
 &=
\Xi^{(2)}_{ AB}- \Xi^{(2)}_{ A B}(1+\tau)   
   +\mathfrak{O}(1+\tau)^2
\,,
\\
\partial_r\widehat\Gamma_{A}{}^B{}_C  |_I
 &=
\frac{1}{2} \Big( v_{ A}^{(2)} \delta^B{}_{C}
-\mathring\Gamma_D{}^B{}_C \Xi^{(2)}_{ A}{}^{ D}\Big)(1+\tau)
+\mathfrak{O}(1+\tau)^2
\,,
\\
\partial_re^{\tau}{}_1  |_I & = \mathfrak{O}(1+\tau)^3
\,,
\\
\partial_re^{r}{}_1  |_I & =
1
\,,
\\
\partial_re^{\mathring A}{}_1  |_I & = 
\mathfrak{O}(1+\tau)^2
\,,
\\
\partial_re^{\tau}{}_A  |_I & = \mathfrak{O}(1+\tau)^2
\,,
\\
\partial_re^{r}{}_A  |_I & = 
0
\,,
\\
\partial_re^{\mathring A}{}_A  |_I & =
 -\frac{1}{2}\Xi^{(2)}_A{}^B\mathring e^{\mathring A}{}_B(1+\tau) +\mathfrak{O}(1+\tau)^2
\,,
\\
\partial_rW^-_A|_I &= \mathfrak{O}(1+\tau)^2
\,,
\\
\partial_rW^+_A|_I &= 2L_A +\mathfrak{O}(1+\tau)
\,,
\\
\partial_rV^+_{AB} |_I 
 &= \mathfrak{O}(1+\tau)^3
\,,
\\
\partial_rV^-_{AB} |_I
&=
-2(\mcD_{(A}L_{ B)})_{\mathrm{tf}}
- 3 \Xi^{(2)}_{(A}{}^C (M\eta_{B)C} + N\epsilon_{B)C})
+ \mathfrak{O}(1+\tau )
\,,
\\
\partial_{r} U_{AB}|_I
 &=
\mathfrak{O}(1+\tau )
\,.
\label{radial_I23}
\end{align}
Note that, as compared to \eq{V+AB_I},  $\partial_rV^+_{AB}$ has a better decay in an asymptotically Minkowski-like conformal Gauss gauge
at each order.
This  follows straightforwardly from  \eq{eq_trans_derv_V+}.

We consider radial derivatives of higher order.
In our setting the initial data for the transport equations are determined by the data on $\scri^-$, which we assume to be smooth
at any order at $I^-$, cf.\ \eq{gauge_frame1}-\eq{schouten_scri6} and \eq{constr_spec_gauge1}-\eq{constr_spec_gauge6}.
The  missing data for the singular wave equations \eq{sing_wave1}-\eq{sing_wave2}
are provided by $\partial^{p+1}_{\tau}\partial^p_rW^{-}_A|_{I^-} $ and
$\partial^{p-1}_{\tau}\partial^p_rW^{+}_A|_{I^-} $, which are also determined  from $\scri^-$ but  will be  irrelevant for our purposes.

\subsection{A sufficient condition for the non-appearance  of logs: Approaching $I^-$ from $I$}
%\subsubsection{Approaching $I^-$ from $I$}
\label{sec_suff_cond_I}

As in Section~\ref{sec_exp_dec_rad_field} let us assume that the radiation field vanishes at any order at $I^-$, i.e.\
\begin{equation}
\Xi_{AB} = \Xi^{(2)}_{AB} r^2 +  \mathfrak{O}(r^{\infty})
\,.
\label{exp_dec_data1b}
\end{equation}
In that case the following data, relevant  to solve the GCFE on $I$, are induced  on $I^-$ for $p \geq 2$,
\begin{align}
\partial^p_re^{\mu}{}_i|_{I^-} =& 0
\,,
\label{initial_data_transport1}
\quad
\partial^p_r\widehat\Gamma_i{}^j{}_k{}|_{I^-} 
= 0
\,,
\quad
\partial^p_r\widehat L_{ ij}|_{I^-} 
=
0
\,,
\\
\partial^p_r W^{\pm}_A |_{I^-} =& 0
\,,
\quad 
\partial^p_rU_{AB}|_{I^-} 
=
0
\,,
\quad
\partial^p_r V^{+}_{AB}|_{I^-}  =0
\,,
\quad
\partial^{p-2}_{\tau}\partial^p_rV^-_{AB} |_{I^-} =\frac{1}{p!}c^{(p,p-2)}_{AB}
\,.
\label{initial_data_transport7}
\end{align}

In Section~\ref{sec_exp_dec_rad_field} we have shown that  data on $\scri^-$ which satisfy \eq{exp_dec_data1b} do not produce
logarithmic terms when approaching $I^-$ from $\scri^-$. Here we aim  to show that the same class of data also satisfies all no-logs conditions
at $I^-$ when coming from the cylinder $I$.
More specifically, let us assume that no logs arise up to and including the order $n-1$ at $I^-$.
We  want to show that the singular wave equations \eq{sing_wave1}-\eq{sing_wave2}
and the $\partial^n_rV^-_{AB}$-equation \eq{Wely_general_I2} do not produce logarithmic terms.

%The restrictions of Schouten and Weyl tensor, connection and frame coefficients to $I$ have been computed in \eq{frameI_1}-\eq{Schouten_I} and \eq{Weyl_I_1}-\eq{Weyl_I_6}. The first-order radial derivatives  are given by \eq{radial_I1}-\eq{radial_I23}.

\begin{lemma}
\label{lemma_estimates2} 
Assume that   \eq{exp_dec_data1b} holds. Then for $k\geq 1$ the radial derivatives have the following fall-off behavior at $I^-$ in an asymptotically Minkowski-like conformal Gauss gauge at each order (cf.\ Lemma~\ref{lemma_estimates1}):
\begin{align*}
\partial^k_r\widehat L_{1i}  |_I
=&
  \mathfrak{O}(1+\tau)^{k}
\,,
\\
\partial^k_r(\widehat L_{A0}+\widehat L_{A1})  |_I
=&
  \mathfrak{O}(1+\tau)^{k-1}
\,,
\\
\partial^k_r(\widehat L_{A0} -\widehat L_{A1})  |_I
=&  \mathfrak{O}(1+\tau)^{k}
\,,
\\
\partial^k_r\widehat L_{AB}  |_I
=&
 \mathfrak{O}(1+\tau)^{k-1}
\,,
\\
\partial^k_r\widehat\Gamma_{1}{}^0{}_0  |_I
 =&
  \mathfrak{O}(1+\tau)^{k+1}
\,,
\\
\partial^k_r\widehat\Gamma_{A}{}^0{}_0  |_I
 =&
  \mathfrak{O}(1+\tau)^{k}
\,,
\\
\partial^k_r\widehat\Gamma_{1}{}^0{}_1 |_I
 =&
  \mathfrak{O}(1+\tau)^{k+1}
\,,
\\
\partial^k_r(\widehat\Gamma_{1}{}^0{}_A+\widehat\Gamma_{1}{}^1{}_A)  |_I
 =&
  \mathfrak{O}(1+\tau)^{k+2}
\,,
\\
\partial^k_r(\widehat\Gamma_{1}{}^0{}_A -\widehat\Gamma_{1}{}^1{}_A) |_I
 =&
  \mathfrak{O}(1+\tau)^{k}
\,,
\\
\partial^k_r\widehat\Gamma_{A}{}^0{}_1  |_I
 =&
  \mathfrak{O}(1+\tau)^{k}
\,,
\\
\partial^k_r(\widehat\Gamma_{A}{}^0{}_B+ \widehat\Gamma_{A}{}^1{}_B) |_I
 =&
   \mathfrak{O}(1+\tau)^{k+1}
\,,
\\
\partial^k_r(\widehat\Gamma_{A}{}^0{}_{B} -\widehat\Gamma_{A}{}^1{}_{B}) |_I
 =&
  \mathfrak{O}(1+\tau)^{k-1}
\,,
\\
\partial^k_r\widehat\Gamma_{1}{}^A{}_B  |_I
 =&
  \mathfrak{O}(1+\tau)^{k+1}
\,,
\\
\partial^k_r\widehat\Gamma_{A}{}^B{}_C  |_I
 =&
  \mathfrak{O}(1+\tau)^{k}
\,,
\\
\partial^k_re^{\tau}{}_1 |_I=&    \mathfrak{O}(1+\tau)^{k+2}
\,,
\\
\partial^k_re^{\tau}{}_A |_I=&    \mathfrak{O}(1+\tau)^{k+1}
\,,
\\
\partial^k_re^{\alpha}{}_A |_I=&    \mathfrak{O}(1+\tau)^{k}
\,,
\\
\partial^k_re^{\mathring A}{}_1 |_I=&    \mathfrak{O}(1+\tau)^{k+1}
\,,
\\
\partial^k_re^{r}{}_1 |_I=& \delta^k{}_ 1  +  \mathfrak{O}(1+\tau)^{k+1}
\,,
\\
\partial^k_r  V^{\pm}_{AB} |_I=&  \mathfrak{O}(1+\tau)^{k\pm 2}
\,,
\\
\partial^k_r  W^{\pm}_{A} |_I=&  \mathfrak{O}(1+\tau)^{k\mp 1}
\,,
\\
\partial^k_r  U_{AB}|_I =&  \mathfrak{O}(1+\tau)^{k}
%\,,
%\\
%\partial^k_r  V^-_{AB} =& 
%\mathfrak{O}(1+\tau)^{k-2}
%\,,
%\\
%\partial^k_r  W^+_{A} =&  \mathfrak{O}(1+\tau)^{k-1}
\,.
\end{align*}
\end{lemma}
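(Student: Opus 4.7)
The plan is to prove the assertion by induction on $k$, with the base case $k=1$ already established in Section~\ref{sec_radial_equations} and summarized in \eqref{radial_I1}--\eqref{radial_I23}. Throughout, the key structural inputs are: (a) the zeroth-order values on $I$ from \eqref{frameI_1}--\eqref{Schouten_I} and \eqref{Weyl_I_1}--\eqref{Weyl_I_6}, (b) the initial data at $I^-$ induced from $\scri^-$ collected in \eqref{initial_data_transport1}--\eqref{initial_data_transport7}, which under hypothesis \eqref{exp_dec_data1b} vanish identically for $p\ge 2$, and (c) the smoothness of $\partial_r^p$-restrictions for $p\le k-1$, furnished by the induction hypothesis.

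For the inductive step on frame, Weyl connection, and Schouten tensor components, I would apply $\partial_r^k$ to the evolution equations \eqref{evolution1}--\eqref{evolution7}, restrict to $I$, and use \eqref{frameI_1}--\eqref{Schouten_I}. Since $\widehat\Gamma_c{}^b{}_0|_I=0$ and $\widehat L_{0i}|_I=0$ while $\widehat\Gamma_1{}^i{}_j|_I=\delta^i{}_j$, the commutators of $\partial_r^k$ with the quadratic nonlinearities produce terms that are either already of the needed order (by the induction hypothesis applied to two factors, whose decay orders add), or depend linearly on the current-order unknowns multiplied by explicit coefficients. The resulting equations are first-order linear ODEs in $\tau$ along $I$ with coefficients that at $I^-$ take their Minkowskian values. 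Integrating against the initial data \eqref{initial_data_transport1}--\eqref{initial_data_transport7}, which vanish for $k\ge 2$, the claimed decay orders follow directly. The small number of cases where an extra $(1+\tau)$ factor appears (for instance $\partial^k_r(\widehat\Gamma_1{}^0{}_A+\widehat\Gamma_1{}^1{}_A)$ and $\partial^k_r e^\tau{}_1$) are handled exactly as in the proof of Lemma~\ref{lemma_estimates1}: one identifies the combination for which the $\partial_\tau$-source already has the enhanced decay, which forces one more power of $(1+\tau)$ upon integration.

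For the rescaled Weyl tensor I would use the subsystem \eqref{Wely_general_I1}--\eqref{Wely_general_I6}. The decoupled equations \eqref{sing_wave1}--\eqref{sing_wave2} for $\partial_r^kW^\pm_A$ are hypergeometric after spherical-harmonic decomposition. The induction hypothesis, combined with the vanishing of the initial data at $I^-$ for $k\ge 2$, ensures the source is of order $\mathfrak{O}(1+\tau)^{k-1}$ for the $W^-_A$-equation and of order $\mathfrak{O}(1+\tau)^{k+1}$ for the $W^+_A$-equation, with only finitely many spherical-harmonic modes excited. Applying Lemma~\ref{lem_sing_wave}(i) then yields smoothness at $I^-$ and the claimed decay rates $\partial^k_rW^\pm_A|_I=\mathfrak{O}(1+\tau)^{k\mp 1}$. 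Once $\partial^k_rW^\pm_A$ are known, equation \eqref{Wely_general_I1} gives $\partial^k_rV^+_{AB}$ by an elementary integration, \eqref{Wely_general_I5}--\eqref{Wely_general_I6} determine $\partial^k_r W_{0101}$ and $\partial^k_rW_{01AB}$ algebraically, and finally \eqref{Wely_general_I2} determines $\partial_r^kV^-_{AB}$.

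The main obstacle is the equation for $\partial_r^kV^-_{AB}$, which is of singular Fuchsian type at $I^-$: by \eqref{no-logs-cond-r_n_V-} the right-hand side must have decay $\mathfrak{O}(1+\tau)^{1-k}$ but no worse, and the no-logs condition requires the coefficient of $(1+\tau)^{-1}$ in its expansion to vanish. Here the strong decay $\partial_r^kW^+_A|_I=\mathfrak{O}(1+\tau)^{k-1}$ established in the previous step, together with the already-proved decay rates for the connection and Schouten components multiplying $W^\pm_A$ and $V^\pm_{AB}$ inside the source of \eqref{Wely_general_I2}, guarantees that the dangerous $(1+\tau)^{-1}$-coefficient  vanishes identically; consequently the smooth solution exists and has the declared decay $\mathfrak{O}(1+\tau)^{k-2}$. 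This relies crucially on the exact asymptotically Minkowski-like gauge conditions (Definition~\ref{dfn_asympt_Minik_gauge}), which eliminate the error terms $\mathfrak{O}(r^{\infty})$ that would otherwise obstruct the induction, and on the vanishing of the radiation field at any order \eqref{exp_dec_data1b}, which together with \eqref{initial_data_transport7} ensures the integration function $c^{(k,k-2)}_{AB}$ alone feeds the solution, reproducing the polynomial bookkeeping analogous to that of Lemma~\ref{lemma_estimates1}.
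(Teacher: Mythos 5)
Your proposal follows essentially the same route as the paper's proof: induction on $k$ with the $k=1$ case from \eqref{radial_I1}--\eqref{radial_I23}, the $\partial_r^k$-differentiated transport equations integrated against the vanishing initial data \eqref{initial_data_transport1}--\eqref{initial_data_transport7}, the auxiliary combinations (such as $\partial_r^k(\widehat L_{1A}+\widehat\Gamma_1{}^0{}_A)$ and $\partial_r^k(\widehat L_{AB}+\widehat\Gamma_A{}^0{}_B)$) for the enhanced decay, and the decoupled singular wave equations plus Lemma~\ref{lem_sing_wave}(i) for $\partial_r^kW^{\pm}_A$ followed by the remaining Weyl components. One small slip: you have the source decays for the two wave equations interchanged --- the paper's sources are $\mathfrak{O}(1+\tau)^{k+1}$ for the $W^-_A$-equation and $\mathfrak{O}(1+\tau)^{k-1}$ for the $W^+_A$-equation (consistent with the hypothesis $q^{\pm}_n=\ldots+\mathfrak{O}(1+\tau)^{n\mp1}$ of Lemma~\ref{lem_sing_wave}(i) and with your correctly stated conclusion $\partial^k_rW^{\pm}_A|_I=\mathfrak{O}(1+\tau)^{k\mp1}$).
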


\begin{remark}
{\rm
While in Lemma~\ref{lemma_estimates1} we had polynomials of a sufficiently small degree which ensured that terms
of the critical, logarithms producing order in the $\partial^p_{\tau}V^-_{AB}$-equation do not appear, here the components show a sufficiently fast decay.
}
\end{remark}

\begin{proof}
As in in the proof of Lemma~\ref{lemma_estimates1}
we use an induction argument. The above considerations show that the lemma is true for $k=1$.
So let us assume that it is true for $1\leq k\leq n-1$. We want to show that it is also true for $k=n\geq 2$.

From \eq{evolution1}-\eq{evolution7}, \eq{frameI_1}-\eq{Schouten_I}
and \eq{Weyl_I_1}-\eq{Weyl_I_6} we deduce that
%\tim{tracefree-part...}
%
\begin{align*}
\partial_{\tau}\partial^n_r\widehat L_{10}  |_I
=&
   \mathfrak{O}(1+\tau)^{n-1}
\,,
\\
\partial_{\tau}\partial^n_r\widehat L_{11}  |_I
=&
  \mathfrak{O}(1+\tau)^{n-1}
\,,
\\
\partial_{\tau}\partial^n_r\widehat L_{1A}  |_I
=&
  \mathfrak{O}(1+\tau)^{n-1}
\,,
\\
\partial_{\tau}\partial^n_r(\widehat L_{A0}+\widehat L_{A1})  |_I
=&
   \mathfrak{O}(1+\tau)^{n-2}
\,,
\\
\partial_{\tau}\partial^n_r(\widehat L_{A0} -\widehat L_{A1})  |_I
=&   \mathfrak{O}(1+\tau)^{n-1}
\,,
\\
\partial_{\tau}\partial^n_r\widehat L_{AB}  |_I
=&
n (1+ \tau )\partial^{n-1}_rV^-_{AB}
- \partial^n_r( \widehat\Gamma_A{}^C{}_0\widehat L_{CB} )
+  \mathfrak{O}(1+\tau)^{n-1}= \mathfrak{O}(1+\tau)^{n-2}
\,,
\end{align*}
(the intermediate step in the last line will be relevant below).
For initial data \eq{initial_data_transport1} we obtain the asserted decay for the Schouten tensor.

Moreover,
%\tim{add remark... estimate improved in a second step}
%
\begin{align*}
\partial_{\tau}\partial^n_r\widehat\Gamma_{1}{}^0{}_0  |_I
 =&
-  \partial^n_r\widehat\Gamma_{1}{}^0{}_{1} 
+   \mathfrak{O}(1+\tau)^{n}\,,
\\
\partial_{\tau}\partial^n_r\widehat\Gamma_{A}{}^0{}_0  |_I
 =&
-  \partial^n_r\widehat\Gamma_{A}{}^0{}_{1} 
+     \mathfrak{O}(1+\tau)^{n-1}\,,
\\
\partial_{\tau}\partial^n_r\widehat\Gamma_{1}{}^0{}_1 |_I
 =&
   \mathfrak{O}(1+\tau)^{n}\,,
\\
\partial_{\tau}\partial^n_r\widehat\Gamma_{1}{}^0{}_A|_I
 =&
  \mathfrak{O}(1+\tau)^{n-1}\,,
\\
\partial_{\tau}\partial^n_r(\widehat\Gamma_{1}{}^0{}_A+\widehat\Gamma_{1}{}^1{}_A)  |_I
 =&
\partial^n_r(\widehat\Gamma_{1}{}^0{}_{A} +\widehat L_{1A} )   +    \mathfrak{O}(1+\tau)^{n+1}\,,
\\
\partial_{\tau}\partial^n_r(\widehat\Gamma_{1}{}^0{}_A -\widehat\Gamma_{1}{}^1{}_A) |_I
 =&
-\partial^n_r\widehat\Gamma_{1}{}^0{}_{A} 
+    \mathfrak{O}(1+\tau)^{n-1}\,,
\\
\partial_{\tau}\partial^n_r\widehat\Gamma_{A}{}^0{}_1  |_I
 =&
   \mathfrak{O}(1+\tau)^{n-1}\,,
\\
\partial_{\tau}\partial^n_r\widehat\Gamma_{A}{}^0{}_B |_I
 =&
-  \partial^n_r(\widehat\Gamma_C{}^0{}_B\widehat\Gamma_{A}{}^C{}_{0} ) -n(1+\tau) \partial^{n-1}_r  V^-_{AB}
+  \mathfrak{O}(1+\tau)^{n-1}\,,
\\
\partial_{\tau}\partial^n_r(\widehat\Gamma_{A}{}^0{}_B+ \widehat\Gamma_{A}{}^1{}_B) |_I
 =&
\partial^n_r(\widehat L_{AB} +\widehat\Gamma_{A}{}^0{}_{B} ) +    \mathfrak{O}(1+\tau)^{n}\,,
\\
\partial_{\tau}\partial^n_r(\widehat\Gamma_{A}{}^0{}_{B} -\widehat\Gamma_{A}{}^1{}_{B}) |_I
 =&
 \partial^n_r(\widehat L_{AB}  -\widehat\Gamma_{A}{}^0{}_{B} )+    \mathfrak{O}(1+\tau)^{n-2}\,,
\\
\partial_{\tau}\partial^n_r\widehat\Gamma_{1}{}^A{}_B  |_I
 =&
- \delta^A{}_B\partial^n_r\widehat\Gamma_{1}{}^1{}_{0} 
-  \widehat\Gamma_C{}^A{}_B\partial^n_r\widehat\Gamma_{1}{}^C{}_{0} 
 +   \mathfrak{O}(1+\tau)^{n}\,,
\\
\partial_{\tau}\partial^n_r\widehat\Gamma_{A}{}^B{}_C  |_I
 =&
-  \delta^B{}_C\partial^n_r\widehat\Gamma_{A}{}^1{}_{0} 
-  \widehat\Gamma_D{}^B{}_C\partial^n_r\widehat\Gamma_{A}{}^D{}_{0} 
+    \mathfrak{O}(1+\tau)^{n-1}\,.
\end{align*}
Similar to the proof of Lemma~\ref{lemma_estimates1}
one needs some  auxiliary equations (which also follow from  \eq{evolution1}-\eq{evolution7})
\begin{equation*}
\partial_{\tau}\partial^n_r(\widehat L_{AB} +\widehat\Gamma_A{}^0{}_B)|_I
=
-  \partial^n_r[\widehat\Gamma_{A}{}^C{}_{0}(\widehat L_{CB} + \widehat\Gamma_C{}^0{}_B ) ]
+ \mathfrak{O}(1+\tau)^{n-1}
\,,
\end{equation*}
which we use to show, by induction,
\begin{equation*}
\partial^n_r(\widehat L_{AB} +\widehat\Gamma_A{}^0{}_B)|_I
=
 \mathfrak{O}(1+\tau)^{n}
\,.
\end{equation*}
Note that it follows from \eq{radial_I1}-\eq{radial_I23} that this is true for $n=1$.

Furthermore,
\begin{align*}
\partial_{\tau}\partial^n_r(\widehat L_{1A}+ \widehat\Gamma_{1}{}^0{}_A )
=&
- \partial^n_r[ (\widehat\Gamma_B{}^0{}_A+\widehat L_{BA})\widehat\Gamma_{1}{}^B{}_{0} ]
   + \frac{n}{2}(1+\tau)^2\partial^{n-1}_r  W^+_{A}
 + \mathfrak{O}(1+\tau)^{n}
\\
=&
 \mathfrak{O}(1+\tau)^{n}
\,,
\end{align*}
which we use to show, again by induction (it follows from \eq{radial_I1}-\eq{radial_I23} that this holds for $n=1$)
\begin{equation*}
\partial^n_r(\widehat L_{1A}+ \widehat\Gamma_{1}{}^0{}_A)|_I
=
 \mathfrak{O}(1+\tau)^{n+1}
\,.
\end{equation*}
%
%Again one straightforwardly checks that  it holds for $n=1$.
For initial data \eq{initial_data_transport1} we then end up with  desired  decay for the connection coefficients by integrating all the above ODEs.

For the frame coefficients we find
\begin{align*}
\partial_{\tau}\partial^n_re^{\tau}{}_1 |_I=&     \mathfrak{O}(1+\tau)^{n+1}
\,,
\\
\partial_{\tau}\partial^n_re^{\mathring A}{}_1 |_I=&     \mathfrak{O}(1+\tau)^{n}
\,,
\\
\partial_{\tau}\partial^n_re^{r}{}_1 |_I=&     \mathfrak{O}(1+\tau)^{n}
\,,
\\
\partial_{\tau}\partial^n_re^{\tau}{}_A |_I=&     \mathfrak{O}(1+\tau)^{n}
\,,
\\
\partial_{\tau}\partial^n_re^{\mathring A}{}_A |_I=&     \mathfrak{O}(1+\tau)^{n-1}
\,,
\\
\partial_{\tau}\partial^n_re^{r}{}_A |_I=&    \mathfrak{O}(1+\tau)^{n-1}
\,.
\end{align*}
For  initial data \eq{initial_data_transport1} at $I^-$ that yields the desired result.

It remains to consider the radial derivatives of the rescaled Weyl tensor. Evaluation of the $n$th-order radial derivative of  \eq{evolution8}-\eq{evolution13}
yields 
\begin{align*}
\partial_{\tau}[(1-\tau)^{2-n} \partial^n_rV^+_{AB} ]|_I   =& 
\frac{1}{(1-\tau)^{n-1}}[ \mcD_{(A}\partial^n_rW^-_{B)}]_{\mathrm{tf}}
+   \mathfrak{O}(1+\tau)^{n+1}
\,,
\\
\partial_{\tau}[(1+\tau)^{2-n} \partial^n_rV^-_{AB} ]|_I  =& 
-\frac{1}{(1+\tau)^{n-1}}[\mcD_{(A}\partial^n_r W^+_{B)}]_{\mathrm{tf}}
+   \mathfrak{O}(1)
\,,
\\
(1+\tau)\partial_{\tau}\partial^n_rW^-_{A}|_I     =& 
2\mcD^B\partial^n_r V^+_{AB}
+(n+1)\partial^n_rW^-_{A} 
+   \mathfrak{O}(1+\tau)^{n+2}
\,,
\\
(1-\tau)\partial_{\tau}\partial^n_rW^+_{A}    |_I =& 
-2\mcD^B\partial^n_r  V^-_{AB}
-(n+1)\partial^n_r W^+_{A} 
+   \mathfrak{O}(1+\tau)^{n-2}
\,,
\\
n\partial^n_{r} W_{0101}|_I
 =& 
-\frac{1}{2}(1+\tau)\mcD^A \partial^n_{r}W^{+}_A
-\frac{1}{2}(1-\tau)\mcD^A\partial^n_{r}W^{-}_A
+   \mathfrak{O}(1+\tau)^{n}
\,,
\\
n\partial^n_{r}W_{01AB}|_I =&
(1+\tau)\mcD_{[A}   \partial^n_{r}W^{+}_{B]}
-(1-\tau)\mcD_{[A} \partial^n_{r} W^{-}_{B]}
+   \mathfrak{O}(1+\tau)^{n}
\,.
\end{align*}
As in Section~\ref{sec_structure_eqn_I}
we derive decoupled  equations for $\partial^n_rW^{\pm}_A$ from this system,
\begin{align*}
(1-\tau^2)\partial^2_{\tau}\partial^n_rW^-_{A}   |_I
  =& 
 \Big(\Delta_s +(n-1)n-1\Big)\partial^n_rW^-_{A}
-2[1-(n-1)\tau)]\partial_{\tau}\partial^n_rW^-_{A}   
+   \mathfrak{O}(1+\tau)^{n+1}
\,,
\\
(1-\tau^2)\partial^2_{\tau}\partial^n_rW^+_{A}   |_I
  =& 
\Big(\Delta_s +(n-1)n-1\Big)\partial^n_r W^+_{A}
-2[1+(n-1)\tau)]\partial_{\tau}\partial^n_rW^+_{A}   
+   \mathfrak{O}(1+\tau)^{n-1}
\,.
\end{align*}
It follows from \eq{initial_data_transport7} and Lemma~\ref{lem_sing_wave} (i) that the solutions are smooth,
\begin{align*}
\partial^n_rW^-_{A}   |_I
  =&
   \mathfrak{O}(1+\tau)^{n+1}
\,,
\quad
\partial^n_rW^+_{A}   |_I
  =
  \mathfrak{O}(1+\tau)^{n-1}
\,.
\end{align*}
Using \eq{initial_data_transport7} one obtains the desired decay for the remaining components of the Weyl tensor by solving the above transport equations
(in particular the $\partial^n_rV^-_{AB}$-equation does not produce log-terms), which completes the proof.
\qed

\end{proof}

\begin{corollary}
\label{cor_vanishing_radiation2}
Assume that  \eq{exp_dec_data1} holds, i.e.\ that  the radiation field vanishes at any order at $I^-$, in a spacetime which admits a smooth $I$, and where the
data for the transport equations on $I$ at $I^-$ are induced by the limits of the corresponding data on $\scri^-$.
Then the restrictions to $I$ of all the fields appearing  in the GCFE including their
radial  derivatives  of all orders admit smooth extensions through $I^-$  in an asymptotically Minkowski-like conformal gauss gauge
at each order (cf.\ Corollary~\ref{cor_vanishing_radiation}).
\end{corollary}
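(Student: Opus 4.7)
The proof reduces to Lemma~\ref{lemma_estimates2}, so the plan is to describe how that lemma yields the corollary and to sketch how the lemma itself is proved inductively. The overall strategy is an induction on the order $n$ of the radial derivative $\partial_r^n$. The base cases $n=0,1$ have already been computed explicitly in Sections~\ref{sec_appro_I-I} and~\ref{sec_radial_equations}: the $0$th order fields on $I$ take the explicit ``Minkowskian'' values \eq{frameI_1}--\eq{Schouten_I} and the Weyl components \eq{Weyl_I_1}--\eq{Weyl_I_6}, while the first-order radial derivatives have the decay rates \eq{radial_I1}--\eq{radial_I23}. Under the hypothesis \eq{exp_dec_data1b}, which induces the vanishing data \eq{initial_data_transport1}--\eq{initial_data_transport7} on $I^-$ coming from $\scri^-$, one checks directly that these base cases satisfy the asserted decay statements of Lemma~\ref{lemma_estimates2}.

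For the inductive step, I would assume the decay estimates hold for all $k\leq n-1$, and then differentiate the evolution equations \eq{evolution1}--\eq{evolution7} and \eq{evolution8}--\eq{evolution13} $n$ times in $r$ and restrict to $I$. Because on $I$ the zeroth-order connection, frame and Schouten data take their trivial ``Minkowskian'' values, the resulting equations for $\partial^n_r\widehat L_{ij}$, $\partial^n_r\widehat\Gamma_i{}^j{}_k$ and $\partial^n_r e^\mu{}_i$ are first-order ODEs in $\tau$ whose right-hand sides, by the induction hypothesis and the sharper intermediate identities of the lemma (in particular the cancellations for $\partial^n_r(\widehat L_{AB}+\widehat\Gamma_A{}^0{}_B)$ and $\partial^n_r(\widehat L_{1A}+\widehat\Gamma_1{}^0{}_A)$), carry precisely the $(1+\tau)$-decay needed to match the asserted orders after integration with the vanishing initial data \eq{initial_data_transport1}.

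The heart of the argument is the Weyl sector. The system \eq{Wely_general_I1}--\eq{Wely_general_I6} decouples, after taking a divergence, into the singular wave equations \eq{sing_wave1}--\eq{sing_wave2} for $\partial^n_r W^\pm_A$, with sources which by induction decay like $\mathfrak{O}(1+\tau)^{n+1}$ and $\mathfrak{O}(1+\tau)^{n-1}$ respectively, and whose data at $I^-$ have the structure required by part~(i) of Lemma~\ref{lem_sing_wave}. That lemma then gives $\partial^n_rW^-_A|_I=\mathfrak{O}(1+\tau)^{n+1}$ and $\partial^n_rW^+_A|_I=\mathfrak{O}(1+\tau)^{n-1}$. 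With these in hand, $\partial^n_r V^+_{AB}$ and $\partial^n_r U_{AB}$ are obtained by algebraic/transport equations with the correct decay. The last and most delicate component is $\partial^n_r V^-_{AB}$, governed by \eq{Wely_general_I2}, which can be integrated to
\begin{equation*}
\partial^n_r V^-_{AB}|_I = (1+\tau)^{n-2}\int^{\tau}(1+\hat\tau)^{1-n}\bigl[-(\mcD_{(A}\partial^n_r W^+_{B)})_{\mathrm{tf}}+\mathfrak{O}(1+\hat\tau)^{1-n}\bigr]\,d\hat\tau.
\end{equation*}
A logarithm appears exactly when the integrand contains a term of order $(1+\tau)^{-1}$; by the decay of $\partial^n_r W^+_A$ just obtained, that term is absent, and $\partial^n_rV^-_{AB}$ is smooth at $I^-$ with the stated decay for the prescribed integration function $c^{(n,n-2)}_{AB}$.

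I expect the main obstacle to be bookkeeping rather than conceptual: one must verify that the many schematic ``$\mathfrak{O}(1+\tau)^k$'' estimates in the induction assemble consistently, in particular the two cancellations mentioned above, which are themselves proved by a nested sub-induction in Lemma~\ref{lemma_estimates2}. Once this decay hierarchy is established at order $n$, the corollary follows by invoking Lemma~\ref{lemma_estimates2} at every order, combined with the structural fact that, in the asymptotically Minkowski-like conformal Gauss gauge at each order, the singular $V^-_{AB}$-equation is the only mechanism capable of generating logarithms, and the decay rates supplied by the lemma precisely suppress that mechanism.
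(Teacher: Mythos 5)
Your proposal is correct and follows essentially the same route as the paper: the corollary is obtained by invoking Lemma~\ref{lemma_estimates2}, which the paper proves exactly as you describe — induction on the order of the radial derivative with the explicit $k=0,1$ computations as base cases, the two cancellation identities for $\partial^n_r(\widehat L_{AB}+\widehat\Gamma_A{}^0{}_B)$ and $\partial^n_r(\widehat L_{1A}+\widehat\Gamma_1{}^0{}_A)$ established by a nested sub-induction, the decoupled singular wave equations handled via Lemma~\ref{lem_sing_wave}~(i), and the observation that the resulting decay of $\partial^n_rW^+_A$ removes the $(1+\tau)^{-1}$ term that would otherwise generate a logarithm in the $\partial^n_rV^-_{AB}$-equation. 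No gaps.
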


\subsection{Conformal Gauss coordinates at $\scri^-$}

We determine the expansion near $I^-$ of a line element of a vacuum spacetime which satisfies
\begin{equation}
\Xi_{AB} = \mathfrak{O}(r^{\infty})
\label{Xi_assumption}
\end{equation}
 in conformal Gaussian coordinates at $\scri^-$  in an asymptotically
 Minkowski-like conformal Gauss gauge at each order. In particular this will be useful to determine Kerr data on $\scri^-$ for this gauge.

The restriction of the metric to $\scri^-$ follows immediately from the gauge data \eq{main_gauge0B}.
\begin{equation}
g|_{\scri^-} = -\mathrm{d}\tau^2 + \frac{2}{r}\mathrm{d}\tau\mathrm{d}r + s_{\mathring A\mathring B}\mathrm{d}x^{\mathring A}x^{\mathring B}+ \mathfrak{O}(r^{\infty})
\,.
\end{equation}
Higher-order derivatives are obtained  from  expansions of the frame coefficients. 
The first-order terms have been computed in \eq{frame_1storder13}-\eq{frame_1storder14}
For the second- and third-order derivatives a computation which uses \eq{Xi_assumption} and \eq{evolution1}-\eq{evolution7}  reveals 
\begin{align}
\partial^2_{\tau}e^{\mu}{}_a|_{\scri^-} =&  \mathfrak{O}(r^{\infty})
\,,
\\
\partial^3_{\tau}e^{\tau}{}_1|_{\scri^-} =& 
12r M+ \mathfrak{O}(r^{\infty})
\,,
\\
\partial^3_{\tau}e^{r}{}_1|_{\scri^-} =& 
8r^2 M+ \mathfrak{O}(r^{\infty})
\,,
\\
\partial^3_{\tau}e^{\mathring A}{}_1|_{\scri^-} =& 
6r^2L^{\mathring A}+ \mathfrak{O}(r^{\infty})
\,,
\\
\partial^3_{\tau}e^{\tau}{}_A|_{\scri^-} =& 
6r^2L_A+ \mathfrak{O}(r^{\infty})
\,,
\\
\partial^3_{\tau}e^{r}{}_A|_{\scri^-} =& 
4r^3L_A+ \mathfrak{O}(r^{\infty})
\,,
\\
\partial^3_{\tau}e^{\mathring A}{}_A|_{\scri^-} =& 
-2r^2\mathring e^{\mathring A}{}_B \eta^{BC} (\mcD_{(A}L_{C)})_{\mathrm{tf}} +2r^3\mathring e^{\mathring A}{}_B L_{A}{}^{B} - 4r M\mathring e^{\mathring A}{}_A+ \mathfrak{O}(r^{\infty})
\,.
\end{align}
We then obtain
\begin{align}
g=& -\mathrm{d}\tau^2 + \frac{2}{r}\mathrm{d}\tau\mathrm{d}r + s_{\mathring A\mathring B}\mathrm{d}x^{\mathring A}x^{\mathring B}
 + (1+\tau)\Big(\frac{2}{r^2}\mathrm{d}r^2-\frac{2}{r}\mathrm{d}\tau\mathrm{d} r\Big)
\nonumber
\\
&
-\frac{1}{r^2} (1+\tau)^2 \mathrm{d} r^2
+\frac{2}{3}(1+\tau)^3\Big[
2M\mathrm{d}\tau\mathrm{d}r
+r^2 L_{\mathring A}\mathrm{d}\tau\mathrm{d}x^{\mathring A}
-6r^{-1} M\mathrm{d}r^2
\nonumber
\\
&
-6 rL_{\mathring A} \mathrm{d}r  \mathrm{d}x^{\mathring A}
+\Big(2r Ms_{\mathring A\mathring B} + r^2(\mcD_{(\mathring A}L_{\mathring B)})_{\mathrm{tf}} -r^3 c^{(2,0)}_{\mathring A\mathring B} \Big) \mathrm{d}x^{\mathring A}\mathrm{d} x^{\mathring B}
\Big]
+ \mathfrak{O}(1+\tau)^4
+ \mathfrak{O}(r^{\infty})
\,.
\label{conformal_Gauss_coord}
\end{align}

\subsubsection{Example: Kerr spacetime}
\label{sec_example_Kerr}

It is quite illuminating to calculate  which data on $\scri^-$ are needed in our gauge to generate a spacetime which belongs to the Kerr family.
For this purpose let us compute the Kerr metric in conformal Gauss coordinates  in an asymptotically Minkowski-like conformal Gauss gauge at each order, or rather  its asymptotic expansion at $\scri^-$.

In \emph{Kerr-Schild Cartesian coordinates} the Kerr line elements reads (cf.\ e.g.\ \cite{visser}),
\begin{equation}
\widetilde g
% &=& -\mathrm{d}t^2+ \mathrm{d}x^2 + \mathrm{d}y^2 + \mathrm{d}z^2 + \frac{2mr^3}{r^4 + a^2 z^2}
%\Big[ \mathrm{d}t + \frac{r(x\mathrm{d}x + y\mathrm{d}y)}{a^2 + r^2} + \frac{a(y\mathrm{d}x- x\mathrm{d}y)}{a^2 + r^2} + \frac{z}{r}\mathrm{d}z\Big]^2
%\\
= -(\mathrm{d}y^0)^2+(\mathrm{d}y^1)^2 + (\mathrm{d}y^2)^2 +(\mathrm{d}y^3)^2 + \frac{2mR^3}{R^4 + a^2 (y^3)^2}
\ell\otimes\ell
\end{equation}
where
\begin{equation}
\ell= \mathrm{d}y^0 + \frac{Ry^1+ay^2}{a^2+R^2}\mathrm{d}y^1+  \frac{Ry^2-ay^1}{a^2+R^2}\mathrm{d}y^2 + \frac{y^3}{R}\mathrm{d}y^3
\,.
\end{equation}
The function $R$ is given by
\begin{equation}
(y^1)^2+(y^2)^2+(y^3)^2 = R^2 + a^2\Big(1-\frac{(y^3)^2}{R^2}\Big)
\,.
\end{equation}
Observe that 
\begin{equation}
\eta^{\sharp}(\ell,\ell) =0
\,.
\end{equation}
We apply the same coordinate transformation \eq{Mink_trafo1} as for Minkowski and choose the same conformal factor
\eq{Mink_conf_fac}.
That yields
\begin{equation}
 g 
=  -\mathrm{d}\tau^2 -2\frac{\tau}{r}\mathrm{d}\tau\mathrm{d} r + \frac{1-\tau^2}{r^2}\mathrm{d}r^2 + \mathrm{d}\theta^2 + \sin^2\theta\mathrm{d}\phi^2+ \frac{2mR^3\Theta^4}{R^4\Theta^2 + a^2 \cos^2\theta}
\ell\otimes\ell
\,,
\label{Kerr_rescaled}
\end{equation}
where
\begin{align*}
R =&-\frac{1}{\sqrt{2}}\sqrt{(y^1)^2+(y^2)^2+(y^3)^2-a^2 +\sqrt{((y^1)^2+(y^2)^2+(y^3)^2-a^2)^2 + 4 a^2(y^3)^2}}
\\
=&-\frac{1}{\sqrt{2}\,\Theta}\sqrt{1 -a^2 \Theta^2+\sqrt{(1 -a^2\Theta^2)^2 +4 a^2\cos^2\theta \Theta^2}}
\,,
\end{align*}
whence
\begin{equation}
R \Theta = 
-1 +  2a^2 r^2\sin^2\theta(1+\tau)^2  +O(1+\tau)^4
\,.
\end{equation}
We have
\begin{align}
\ell_{\tau} =&r \frac{(a^2+R^2)(2\tau - R \Theta (1+\tau^2))-2 a^2 \tau \sin^2\theta}{R (a^2+R^2)\Theta^3}
=
-\Big(\frac{1}{4r} + r a^2\sin^2\theta \Big)+ O(1+\tau)
\,,
\\
\ell_r =& \frac{-(a^2+R^2)(1- \tau R \Theta)+ a^2\sin\theta}{R(a^2+R^2)r\Theta^2}
=  \frac{1}{ 2 r^2} + O(1+\tau)
\,,
\\
\ell_{\theta} =& -\frac{a^2\cos\theta\sin\theta}{R (a^2+R^2)\Theta^2}=O(1+\tau)
\,,
\\
\ell_{\phi} =& -\frac{a\sin^2\theta}{(a^2+R^2)\Theta^2}
= - a\sin^2\theta + O(1+\tau)^2
\,.
\end{align}
For the prefactor of $\ell\otimes\ell$ in \eq{Kerr_rescaled} we find the expansion
\begin{equation}
\frac{2mR^3\Theta^4}{R^4\Theta^2 + a^2 \cos^2\theta}
=
-16 m r^3(1+\tau)^3 + O(1+\tau)^4
%+6m r^3(1+\tau)^4 + O(1+\tau)^5
\,.
\end{equation}
Altogether the Kerr metric adopts the form
\begin{align}
 g 
=&  -\mathrm{d}\tau^2 +\frac{2}{r}\mathrm{d}\tau\mathrm{d} r + \mathrm{d}\theta^2 + \sin^2\theta\mathrm{d}\phi^2
+(1+\tau)\Big(  \frac{2}{r^2}\mathrm{d}r^2 
 -\frac{2}{r}\mathrm{d}\tau\mathrm{d} r \Big)
\nonumber
\\
&
- \frac{1}{r^2} (1+\tau)^2\mathrm{d}r^2 
+ 4mr(1+\tau)^3\Big[
-\frac{ r^2}{4}  \Big(\frac{1}{r} +4 r a^2\sin^2\theta \Big)^2 \mathrm{d}\tau^2
\nonumber
\\
&
-r^{-2}\mathrm{d}r^2
- 4 a^2 r^2 \sin^4\theta \mathrm{d}\phi^2
+ \Big(\frac{1}{r} +4 r a^2\sin^2\theta \Big)\mathrm{d}\tau\mathrm{d}r
\nonumber
\\
&
-2  a r^2\sin^2\theta \Big(\frac{1}{r} +4 r a^2\sin^2\theta \Big)\mathrm{d}\tau\mathrm{d}\phi
+4 a \sin^2\theta  \mathrm{d}r\mathrm{d}\phi
\Big]
 + O(1+\tau)^4
\,.
\end{align}
We need to make sure that this is the right gauge.
Comparison with \eq{conformal_Gauss_coord}
 shows that this is the case only up to and including terms of order $(1+\tau)^2$.
Straightforward transformation of $\tau$, $r$ and $\phi$ 
\begin{align}
\tau \enspace&\mapsto \enspace \tau +2mr \Big(\frac{1}{3}+ 2 a^2 r^2\sin^2\theta  \Big)(1+\tau)^4
\,,
\\
r \enspace&\mapsto \enspace  r+mr^2 \Big(\frac{3}{8} +2 a^2 r^2\sin^2\theta  -2a^4 r^4 \sin^4\theta  \Big)(1+\tau)^4
\,,
\\
\phi \enspace&\mapsto \enspace \phi - 2ma r^2\Big(\frac{1}{3} +2 a^2 r^2\sin^2\theta \Big)  (1+\tau)^4
\,,
\end{align}
 accompanied by a conformal transformation
$\Theta\mapsto\Omega\Theta$ with 
\begin{equation}
\Omega=1 + 2 mr\Big(\frac{1}{3} + 2 a^2 r^2\sin^2\theta  \Big)(1+\tau)^3
\end{equation}
brings the line element into the desired form
\begin{align}
 g 
=&  -\mathrm{d}\tau^2 +\frac{2}{r}\mathrm{d}\tau\mathrm{d} r + \mathrm{d}\theta^2 + \sin^2\theta\mathrm{d}\phi^2
+(1+\tau)\Big(  \frac{2}{r^2}\mathrm{d}r^2 
 -\frac{2}{r}\mathrm{d}\tau\mathrm{d} r \Big)
- \frac{1}{r^2} (1+\tau)^2\mathrm{d}r^2 
\nonumber
\\
&
+4 m r(1+\tau)^3\Big[\frac{1}{3r}  \mathrm{d}\tau\mathrm{d}r -  r^{-2} \mathrm{d}r^2
+ \frac{1}{3}   \Big(  \mathrm{d}\theta^2 + \sin^2\theta\mathrm{d}\phi^2\Big)
-\frac{2}{3} a r\sin^2\theta  \mathrm{d}\tau\mathrm{d}\phi
+4 a \sin^2\theta  \mathrm{d}r\mathrm{d}\phi
\nonumber
\\
&
 + 2a^2 r^2\sin^2\theta \Big(  \mathrm{d}\theta^2 - \sin^2\theta\mathrm{d}\phi^2\Big)
\Big]
 + O(1+\tau)^4
\,.
\end{align}
Comparison with \eq{conformal_Gauss_coord}
shows that
\begin{align}
M=& m
\,,
\quad
L_{\theta} = 0
\,,
\quad
L_{\phi} = -4 m a\sin^2\theta
\,,
\label{kerr_parameters1}
\\
c_{\theta\theta}=& -12ma ^2 \sin^2\theta
\,,
\quad
c_{\phi\phi}= 12ma^2\sin^4\theta
\,,
\quad
c_{\theta\phi} = 0
\,.
\label{kerr_parameters2}
\end{align}
Note that $L_A$ is a conformal Killing 1-form,
and that (we have not attempted to compute the higher-order integration functions $c^{(p+2,p)}_{\mathring A \mathring B}$)
\begin{equation}
c^{(2,0)}_{\mathring A\mathring B} = \frac{3}{2 m}(L_{\mathring A} \otimes L_{\mathring B})_{\mathrm{tf}}
\,.
\end{equation}

\section{Toy model: Massless Spin-2 equation}
\label{sec_spin2}
\label{section7}

%\subsection{Linear wave equation on Minkowski background}
%
%\begin{equation*}
%\Box \phi=0 \quad \Longleftrightarrow \quad \eta^{ij} e^{\mu}{}_ie^{\nu}{}_j \partial_{\mu}\partial_{\nu}\phi -\eta^{ij}\widehat \Gamma_i{}^k{}_je^{\mu}{}_k\partial_{\mu}\phi  -2f^ie^{\mu}{}_i\partial_{\mu}\phi =0
%\end{equation*}
%
%\begin{equation*}
%(1-\tau^2) \partial_{\tau}^2\phi  +2\tau r\partial_{r}\partial_{\tau}\phi - r^2\partial_{r}^2\phi - \Delta_s\phi  -\tau\partial_{\tau}\phi
%  + r\partial_{r}\phi=0
%\end{equation*}
%
%$\partial^n_{\tau}$
%\begin{equation*}
%(2n+1) \partial_{\tau}^{n+1}\phi  -2a_n\partial_{\tau}^n\phi  -2 r\partial_{r}\partial^{n+1}_{\tau}\phi 
%+ 2n r\partial_{r}\partial^n_{\tau}\phi - r^2\partial_{r}^2\partial^n_{\tau}\phi - \Delta_s\partial^n_{\tau}\phi 
%  -n\partial^n_{\tau}\phi+ r\partial_{r}\partial^n_{\tau}\phi=0
%\end{equation*}

We have seen so far that when computing all the fields and their  transverse and radial derivatives at $I^-$
one generically should expect logarithmic terms even if the seed data, i.e.\ the radiation field, is smooth at $I^-$. These
logarithmic terms can arise at arbitrary high order. 
Although we have established a sufficient condition which ensures that no logarithmic terms arise in the formal expansions 
it seems much harder to establish necessary-and-sufficient conditions.
The main issue is that the appearance of log terms does \emph{not} only depend on the leading-order terms, which are the only 
ones which are controllable without too much effort.

The purpose of this section is to consider a similar problem which is much simpler to deal with, namely we consider
the \emph{massless spin-2 equation} (cf.\ \cite{F_spin, kroon})
\begin{equation}
 \widehat\nabla_i W^i{}_{jkl}  = \frac{1}{4}\widehat\Gamma_i{}^p{}_pW^i{}_{jkl} 
\end{equation}
on a flat background, as which we take the Minkowski metric in the form \eq{Mink_cyl}
\begin{equation}
 \eta = 
-\mathrm{d}\tau^2 - 2\frac{\tau}{  r}\mathrm{d}\tau\mathrm{d}r + \frac{1-\tau^2}{r^2} \mathrm{d}r^2
 +\mathrm{d}\theta^2  + \sin^2\theta^2\mathrm{d}\phi^2
\,.
\label{Mink_cylB}
\end{equation}
In the gauge \eq{Mink_gauge} we have
\begin{align}
e_0=&\partial_{\tau}
\,,\quad
e_1=-\tau\partial_{\tau} + r\partial_r
 \,,\quad
e_A = \mathring e_A
\,,
\quad
f_1=1\,, \quad  f_A =0
\,.
\end{align}
We give a list of the non-vanishing Christoffel symbols
%\begin{eqnarray*}
%\Gamma^0_{00} &=&%g^{01}\partial_0g_{01}=
%\tau
%\\
%\Gamma^0_{01} &=&%\frac{1}{2}g^{01}\partial_0g_{11}=
%\frac{\tau^2}{r}
%\\
%\Gamma^0_{0A} &=&0
%\\
%\Gamma^0_{11} &=&%\frac{1}{2}g^{00}(2\partial_1g_{10}-\partial_{0}g_{11})+ \frac{1}{2}g^{01}\partial_1g_{11}=
%-\frac{\tau}{r^2}(1-\tau^2)
%\\
%\Gamma^0_{1A} &=&0
%\\
%\Gamma^0_{AB} &=&0
%\\
%\Gamma^1_{00} &=&%g^{11}\partial_0g_{01}=
%-r
%\\
%\Gamma^1_{01} &=&%\frac{1}{2}g^{11}\partial_0g_{11}=
%-\tau
%\\
%\Gamma^1_{0A} &=&0
%\\
%\Gamma^1_{11} &=&%\frac{1}{2}g^{01}(2\partial_1g_{10}-\partial_{0}g_{11})+ \frac{1}{2}g^{11}\partial_1g_{11}=
%-\frac{1+\tau^2}{r}
%\\
%\Gamma^1_{1A} &=&0
%\\
%\Gamma^1_{AB} &=&0
%\\
%\Gamma^A_{00} &=&0
%\\
%\Gamma^A_{01} &=&0
%\\
%\Gamma^B_{0A} &=&0
%\\
%\Gamma^A_{11} &=&0
%\\
%\Gamma^B_{1A} &=&0
%\\
%\Gamma^C_{AB} &=&\mathring \Gamma^C_{AB}
%\end{eqnarray*}
\begin{align*}
\Gamma^{\tau}_{\tau\tau} =&\tau
\,,
\quad
\Gamma^{\tau}_{\tau r} =\frac{\tau^2}{r}
\,,
\quad
\Gamma^{\tau}_{rr} =-\frac{\tau}{r^2}(1-\tau^2)
\,,
\quad
\Gamma^r_{\tau\tau} =
-r
\,,
\\
\Gamma^r_{\tau r} =&
-\tau
\,,
\quad
\Gamma^r_{rr} =-\frac{1+\tau^2}{r}
\,,
\quad
\Gamma^{\mathring C}_{\mathring A\mathring B} =\mathring \Gamma^{\mathring C}_{\mathring A\mathring B}
\,.
\end{align*}
For the Weyl connection coefficients we then find
%\begin{equation}
%\widehat\Gamma_i{}^k{}_j e^{\mu}{}_k
%=e^{\nu}{}_i (\partial_{\nu} e^{\mu}{}_j  + \Gamma^{\mu}_{\nu\sigma} e^{\sigma}{}_j )
% + 2e^{\mu}{}_{(i}f_{j)} - \eta_{ij} f^{\mu}
%\,.
%\end{equation}
%
\begin{align}
\widehat\Gamma_1{}^0{}_0 
=&
 1
\,,
\quad
\widehat\Gamma_1{}^0{}_1=0
\,,
\quad
\widehat\Gamma_1{}^0{}_A=0
\,,
\quad
\widehat\Gamma_1{}^1{}_A
=
0
\,,
\quad
\widehat\Gamma_1{}^B{}_A
=
\delta^B{}_A
\,,
\\
\widehat\Gamma_A{}^0{}_1 
=&
0
\,,
\quad
\widehat\Gamma_A{}^1{}_1 
=
0
\,,
\quad
\widehat\Gamma_A{}^B{}_0
=
0
\,,
\quad
\widehat\Gamma_A{}^B{}_1 
=
\delta^B{}_A
\,,
\quad
\widehat\Gamma_A{}^C{}_B 
=
\mathring \Gamma_A{}^C{}_B 
\,.
\end{align}
The spin-2 equation then takes the form (cf.\ \eq{evolutionW1b}-\eq{evolution_WA-_2})
\begin{align}
\mathfrak{W}:=& \partial_{\tau}W_{0101} 
+\frac{1}{2}\mcD^A W^{+}_A
-\frac{1}{2}\mcD^AW^{-}_A=0
\,,
\label{spin2_evolution1}
\\
\mathfrak{W}_{AB}:=&\partial_{\tau}W_{01AB}    
-\mcD_{[A}  W^{+}_{B]}
-\mcD_{[A} W^{-}_{B]}=0
\,,
\label{spin2_evolution2}
\\
\mathfrak{W}^-_A :=&\partial_{\tau}W^-_A
- \mcD^BV^+_{AB}
 -\frac{1}{2}\mcD^B U_{BA}
-W^-_{A} =0
\,,
\label{spin2_evolution3}
\\
\mathfrak{W}^+_A:=&\partial_{\tau}W^+_A
+ \mcD^BV^-_{AB}
 +\frac{1}{2}\mcD^B U_{AB}
 +W^+_A =0
\,,
\label{spin2_evolution4}
\\
\mathfrak{W}^+_{AB}:=&( 1-\tau )\partial_{\tau}  V^+_{AB}
+(r\partial_{r}-2) V^+_{AB}
- (  \mcD_{(A} W^-_{B)})_{\mathrm{tf}}=0
\,,
\label{spin2_evolution5}
\\
\mathfrak{W}^-_{AB}:=&(1+\tau)\partial_{\tau}
 V^-_{AB} 
-(r\partial_{r} -2)V^-_{AB} 
+(\mcD_{(A}W^+_{B)})_{\mathrm{tf}}=0
\,,
\label{spin2_evolution6}
\end{align}
and,
\begin{align}
(1+\tau)\partial_{\tau} W_{0101}
 =& 
r\partial_{r} W_{0101}+
\mcD^AW^{-}_A
\,,
\label{spin2_evconstr1}
\\
(1+\tau)\partial_{\tau}  W_{01AB}   =&
 r\partial_{r} W_{01AB}  +
2\mcD_{[A} W^{-}_{B]}
\,,
\label{spin2_evconstr2}
\\
(1+\tau)\partial_{\tau}W^-_{A} 
 =& 
(r\partial_{r}+1)W^-_{A}  + 2\mcD^BV^+_{AB}
\,,
\label{spin2_evconstr3}
\\
(1+\tau)\partial_{\tau}  W^+_{A}   =&
(r\partial_{r}-1)W^+_{A} -\mcD^B U_{AB}
\,.
\label{spin2_evconstr4}
\end{align}
Using the evolution equations we rewrite the latter ones as the following set of constraint equations,
\begin{align}
\mathfrak{C}:=& r\partial_{r} W_{0101}
+\frac{1}{2}(1+\tau)\mcD^A W^{+}_A
+\frac{1}{2}(1-\tau)\mcD^AW^{-}_A
=0
\,,
\label{spin2_constr1}
\\
\mathfrak{C}_{AB}:=& r\partial_{r} W_{01AB}  
-(1+\tau)\mcD_{[A}  W^{+}_{B]}
+(1-\tau)\mcD_{[A} W^{-}_{B]}
=0
\,,
\label{spin2_constr2}
\\
\mathfrak{C}^-_{A}:=&(r\partial_{r}-\tau )W^-_{A}   
+(1-\tau)\mcD^BV^+_{AB}
 -\frac{1}{2}(1+\tau)\mcD^B U_{BA}
=0
\,,
\label{spin2_constr3}
\\
\mathfrak{C}^+_{A}:=&(r\partial_{r}+\tau )W^+_{A}  
+ (1+\tau)\mcD^BV^-_{AB}
 -\frac{1}{2}(1-\tau)\mcD^B U_{AB}
=0
\,,
\label{spin2_constr4}
\end{align}
which one easily checks to be preserved under evolution \eq{spin2_evolution1}-\eq{spin2_evolution6}.
Here, though, we will use a different set of evolution equations.

%\subsection{Preservation of the constraints}
%%
%\begin{align}
%\partial_{\tau}C
%=&
%\frac{1}{2}\mcD^AD_A
%-\frac{1}{2}\mcD^AE_A
%\,,
%\\
%\partial_{\tau}C_{AB}
%=&
%\mcD_{[A}D_{B]}+\mcD_{[A}E_{B]}
%\,,
%\\
%\partial_{\tau}D_A
%=&
%D_A
% +\frac{1}{2}\mcD_AC
% -\frac{1}{2}\mcD^BC_{AB}
%\,,
%\\
%\partial_{\tau}E_A
%=&
%-E_A
% -\frac{1}{2}\mcD_A C
% -\frac{1}{2}\mcD^BC_{AB}
%\,.
%\end{align}

\subsection{Rewriting the equations}

We want to decouple the evolution equations.
For this we take the divergence of \eq{spin2_evolution5} and \eq{spin2_evolution6}
and eliminate $V^{\pm}_{AB}$ via \eq{spin2_evconstr3} and a  linear combination of \eq{spin2_evolution4} and \eq{spin2_evconstr4},
\begin{equation}
(1-\tau)\partial_{\tau}  W^+_{A}   =
-(r\partial_{r}+1)W^+_{A}-2 \mcD^BV^-_{AB}
\,,
\label{spin2_alg_V-}
\end{equation}
respectively.
That yields decoupled equations for $W^{\pm}_A$ which we supplement by the remaing evolution equations for $V^{\pm}_{AB}$ and $U_{AB}$,
\begin{align}
( 1-\tau^2 )\partial^2_{\tau} W^-_{A} 
=&2 \Big(1-\tau (r\partial_{r}-1) \Big)\partial_{\tau}W^-_{A}
+\Big(\Delta_s+(r\partial_{r}-2) (r\partial_{r}+1)+1\Big)W^-_A
\,,
\label{spin2_wave1}
\\
(1-\tau^2)\partial_{\tau}^2  W^+_{A} 
=&
2\Big( -1-\tau (r\partial_{r} -1)\Big)\partial_{\tau}  W^+_{A} 
+\Big(\Delta_s+(r\partial_{r} -2)(r\partial_{r}+1)+1\Big)W^+_A
\,,
\label{spin2_wave2}
\\
( 1-\tau )\partial_{\tau}  V^+_{AB}
 =&
-(r\partial_{r}-2) V^+_{AB}
+ (  \mcD_{(A} W^-_{B)})_{\mathrm{tf}}
\,,
\\
(1+\tau)\partial_{\tau}
 V^-_{AB} 
=&
(r\partial_{r} -2)V^-_{AB} 
-(\mcD_{(A}W^+_{B)})_{\mathrm{tf}}
%\\
% V^+_{A}
% =& 
%\frac{1}{2}(1+\tau)\partial_{\tau}W^-_{A} -\frac{1}{2}(r\partial_{r}+1)W^-_{A}  
%\,,
%\\
%V^-_{A}
% =& 
% -\frac{1}{2}(1-\tau)\partial_{\tau}  W^+_{A} -\frac{1}{2}(r\partial_{r}+1)W^+_{A} 
%\,,
\\
\partial_{\tau}W_{0101} 
 =& 
-\frac{1}{2}\mcD^A W^{+}_A
+\frac{1}{2}\mcD^AW^{-}_A
\,,
\label{spin2_evolution1B}
\\
\partial_{\tau}W_{01AB}     =&
\mcD_{[A}  W^{+}_{B]}
+\mcD_{[A} W^{-}_{B]}
\,.
\label{spin2_evolution2B}
\end{align}
%
%xxxxxxxxxxxxxxxxxx
%\begin{eqnarray*}
%( 1-\tau^2 )\partial^2_{\tau} \phi
%-2 \Big(\pm 1-\tau (r\partial_{r}-1) \Big)\partial_{\tau}\phi
%-\Big(\Delta_s+r^2\partial^2_r\Big)\phi=0
%\end{eqnarray*}
%\begin{eqnarray*}
%\eta^{ij}\nabla_i\Theta\nabla_i\phi=(1-\tau^2)\tau r\partial_{\tau}\phi
%+ r^2(1+\tau^2)\partial_{r}\phi
%\end{eqnarray*}
%\begin{eqnarray*}
%\Box \phi &=& \eta^{ij}e^{\mu}{}_ie^{\nu}{}_j\partial_{\mu}\partial_{\nu}\phi - \eta^{ij}\widehat \Gamma_i{}^k{}_je^{\mu}{}_k \partial_{\mu}\phi
%-2f^ke^{\mu}{}_k \partial_{\mu}\phi
%\\
% &=&-(1-\tau^2)\partial^2_{\tau}\phi
%-2  \tau r\partial_{r}\partial_{\tau}\phi
%+(\Delta_s+r^2\partial_{r}^2)\phi
%+\tau \partial_{\tau}\phi
%-r \partial_{r}\phi
%\end{eqnarray*}
%xxxxxxxxxxxxxx
%
By derivation these equation follow from the spin-2 equation.
To obtain conditions which ensure that they are also sufficient we find that they imply the following set of equations
(note that \eq{spin2_evolution1}-\eq{spin2_evolution2} and \eq{spin2_evolution5}-\eq{spin2_evolution6}
are trivially satisfied),
%\begin{eqnarray*}
%2\mcD^BR^+_{AB} +\Big((1-\tau)\partial_{\tau}+r\partial_r-2\Big) \Big((1+\tau)Q^-_A-D_A\Big)&=&0
%\\
%2\mcD^BR^-_{AB} -\Big((1+\tau)\partial_{\tau}-r\partial_r+2\Big) \Big((1-\tau)Q^+_A+E_A\Big)&=&0
%\\
%R^+_{AB}&=&0
%\\
%R^-_{AB}&=& 0
%\\
%P&=& 0
%\\
%P_{AB} &=& 0
%\end{eqnarray*}
\begin{align}
\Big((1-\tau)\partial_{\tau}+r\partial_r-2\Big) \Big((1+\tau)\mathfrak{W}^-_A-\mathfrak{C}^-_A\Big)-2\mcD^B\mathfrak{W}^+_{AB}=&0
\,,
\label{constr_pres1}
\\
\Big((1+\tau)\partial_{\tau}-r\partial_r+2\Big) \Big((1-\tau)\mathfrak{W}^+_A+\mathfrak{C}^+_A\Big)-2\mcD^B\mathfrak{W}^-_{AB}=&0
\,,
%\\
%R^+_{AB}&=&0
%\\
%R^-_{AB}&=& 0
%\\
%P&=& 0
%\\
%P_{AB} &=& 0
\\
\partial_{\tau} \mathfrak{C} - \frac{1}{2}(1+\tau)\mcD^A\mathfrak{W}^+_A -\frac{1}{2}(1-\tau)\mcD^A\mathfrak{W}^-_A- \frac{1}{2}\mcD^A\mathfrak{C}^-_A
+\frac{1}{2}\mcD^A\mathfrak{C}^+_A =&0
\,,
\\
\partial_{\tau}\mathfrak{C}_{AB}
+(1+\tau)\mcD_{[A}  \mathfrak{W}^{+}_{B]}
-(1-\tau)\mcD_{[A} \mathfrak{W}^{-}_{B]}
- \mcD_{[A}\mathfrak{C}^-_{B]}-\mcD_{[A}\mathfrak{C}^+_{B]}
=&0
\,,
\\
(\partial_{\tau}-1)\mathfrak{C}^-_A
-(r\partial_r-\tau)\mathfrak{W}^-_A
 -\frac{1}{2}\mcD_A\mathfrak{C}
 +\frac{1}{2}\mcD^B\mathfrak{C}_{AB}
=&0
\,,
\\
(\partial_{\tau}+1)\mathfrak{C}^+_A
-(r\partial_r+\tau)\mathfrak{W}^+_A
 +\frac{1}{2}\mcD_A \mathfrak{C}
 +\frac{1}{2}\mcD^B\mathfrak{C}_{AB}
=&0
\,.
\label{constr_pres6}
\end{align}
One needs to characterize data which ensure that the trivial solution to \eq{constr_pres1}-\eq{constr_pres6} is the only one.
Here, however, we are interested  in the appearance of logarithmic terms:
Once we know that, for a given solution of the spin-2 equation, the  $W^{\pm}_A$-components  are smooth at $I^-$, it follows immediately from 
\eq{spin2_constr3}, \eq{spin2_alg_V-}, \eq{spin2_evolution1B}-\eq{spin2_evolution2B} that the other components need to be smooth there as well.
 So our focus will be on an analysis of \eq{spin2_wave1}-\eq{spin2_wave2}  near $I^-$.

\subsection{Appearance of logarithmic terms}
We consider \eq{spin2_wave1}-\eq{spin2_wave2}.
Expanding $W^{\pm}_A$ in terms of $r$ one obtains \eq{sing_wave1}-\eq{sing_wave2}.
The crucial difference is that in this linearized case  there is no source term: The no-logs condition is a condition on the $n$th-order expansion
coefficient of the radiation field, and independent of all expansion coefficients of different orders.
The no-logs condition at a given order is thus  completely independent of lower order terms.
As a corollary of Lemma~\ref{lem_sing_wave} we obtain the following

\begin{proposition}
\label{prop_spin2}
Let $W_{ijkl}$ be a smooth solution of the massless spin-2 equation on the Minkowski background \eq{Mink_cylB} which is smooth  at $\scri^-$.
The data at $\scri^-$ are a tracefree, symmetric, tensor $V^+_{AB} |_{\scri^-}$ which admits a Hodge decomposition of the form $V^+_{AB} |_{\scri^-}= (\mcD_A\mcD_B\ul{V})_{\mathrm{tf}}+ \epsilon_{(A}{}^C\mcD_{B)}\mcD_C \ol{V}$.
%The solution  $W_{ijkl}$, including all of its transverse and radial derivatives, is smooth
The solution  $W_{ijkl}$ satisfies all  no-logs conditions 
 when approaching $I^-$ from $\scri^-$  if and only if the data are of the following form
\begin{align}
\ul{V} \sim \sum_{n=0}^{\infty}\ul{V} ^{(n)} r^n\,,\quad\ul{V} ^{(n)}=&\sum_{\ell=2}^{n-1}\sum_{m=-\ell}^{+\ell}\ul{V} ^{(n)}_{\ell m}Y_{\ell m}(\theta,\phi)
\,,
\label{spin2_nolog_data1}
\\
\ol{V} \sim \sum_{n=0}^{\infty}\ol{V} ^{(n)} r^n\,,\quad\ol{V} ^{(n)}=&\sum_{\ell=2}^{n-1}\sum_{m=-\ell}^{+\ell}\ol{V} ^{(n)}_{\ell m}Y_{\ell m}(\theta,\phi)
\,.
\label{spin2_nolog_data2}
\end{align}
In that case also all  no-logs conditions are fulfilled   when approaching $I^-$ from $I$, supposing that $I$ is smooth and that the
data for the transport equations on $I$ are induced by the limits of the corresponding fields on $I^-$.
\end{proposition}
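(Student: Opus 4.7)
\medskip

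\noindent\textit{Proof sketch.} The plan is to reduce everything to the decoupled wave equations \eqref{spin2_wave1}-\eqref{spin2_wave2} for $W^{\pm}_A$, which carry all the genuine degrees of freedom of the spin-2 field on the Minkowski background. Indeed, once $W^{\pm}_A$ are known, the algebraic relation \eqref{spin2_alg_V-} and the first-order ODE-type evolutions \eqref{spin2_evolution5}, \eqref{spin2_evolution6}, \eqref{spin2_evolution1B}, \eqref{spin2_evolution2B}, together with the constraints \eqref{spin2_constr1}-\eqref{spin2_constr4}, determine all remaining Weyl components and produce no further sources; hence the appearance of logarithmic terms at $I^-$ is governed entirely by $W^{\pm}_A$. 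This is the crucial simplification with respect to the nonlinear case: since the equations are linear and have \emph{no source}, the order-by-order analysis of Section~\ref{sec_smoothness_Bianchi_I} decouples completely between different expansion orders.

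Writing $W^{\pm}_A = \sum_{n\ge 0}\sum_{\ell,m} \phi^{\pm}_{n\ell m}(\tau) Y_{\ell m}(\theta,\varphi)\, r^n$ (where the sums over $\ell$ start at $\ell=1$, since $W^{\pm}_A$ are constructed as curl/divergence of a 1-form), equations \eqref{spin2_wave1}-\eqref{spin2_wave2} reduce for each $(n,\ell,m)$ to the hypergeometric equation \eqref{hypergeom_equations} with $q^{\pm}_{n\ell m}\equiv 0$. Lemma~\ref{lem_sing_wave}, applied with vanishing source, then yields the clean dichotomy: a smooth solution exists at $I^-$ if and only if the obstruction \eqref{smoothness_condition} vanishes, i.e.\ $\bigl(\ell(\ell+1)-1\mp 1\bigr)\,\phi^{\pm}_{c^{\pm}_n\, n\, \ell m}\bigl|_{\tau=-1}=0$, where $c^-_n=n$ and $c^+_n=n-2$. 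Since $\ell(\ell+1)-2=0$ only at $\ell=1$, this translates into the condition that $\phi^{-}_{n\,n\,\ell m}|_{I^-}=0$ for all $\ell\geq 1$, while $\phi^{+}_{(n-2)\,n\,\ell m}|_{I^-}=0$ for all $\ell\geq 2$.

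Next I would translate these conditions back into conditions on $V^{+}_{AB}|_{\scri^-}$. Expanding the constraint hierarchy on $\scri^-$ (the spin-2 analogue of Section~\ref{sec_exp_asymp_Mink}, which here is considerably simpler because the background is flat and the gauge \eqref{Mink_gauge}, \eqref{Mink_gauge2} is exact), each coefficient $\phi^{\pm}_{k n\ell m}|_{I^-}$ is a constant multiple of the $(\ell,m)$-spherical harmonic component of a specific derivative of $V^{+}_{AB}|_{\scri^-}$ at $I^-$. Using the Hodge decomposition $V^{+}_{AB}=(\mcD_A\mcD_B\underline V)_{\mathrm{tf}}+\epsilon_{(A}{}^C\mcD_{B)}\mcD_C\overline V$ and the fact that $\mcD^B V^{+}_{AB}$ and $\epsilon^{AC}\mcD_C\mcD^B V^{+}_{AB}$ carry essentially the factors $\Delta_s(\Delta_s+2)\underline V$ and $\Delta_s(\Delta_s+2)\overline V$, the conditions on $\phi^{\pm}$ become: the order-$n$ spherical-harmonic coefficients $\underline V^{(n)}_{\ell m}, \overline V^{(n)}_{\ell m}$ must vanish whenever $\ell\geq n$. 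Since $\underline V$ and $\overline V$ arise from a symmetric trace-free tensor, the $\ell=0,1$ sectors are anyway absent (cf.\ \eqref{tensor_decomposition2}). Combined, this gives precisely \eqref{spin2_nolog_data1}-\eqref{spin2_nolog_data2}.

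For the statement about approaching $I^-$ from the cylinder $I$, I would use Lemma~\ref{lem_sing_wave}(i) directly applied to the equations \eqref{sing_wave1}-\eqref{sing_wave2} satisfied by $\partial^n_r W^{\pm}_A|_I$, which in the spin-2/Minkowski setting are homogeneous. Under the hypothesis \eqref{spin2_nolog_data1}-\eqref{spin2_nolog_data2}, the data induced at $I^-$ (namely $\partial^n_r W^{\pm}_A|_{I^-}$ together with the secondary datum $\partial^{n\pm 1}_\tau\partial^n_r W^{\pm}_A|_{I^-}$) consist, by the same Hodge-to-hypergeometric translation, only of spherical harmonics with $\ell\leq n-1$, which is exactly the regime in which Lemma~\ref{lem_sing_wave}(i) guarantees smoothness. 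Propagation of smoothness to $V^{\pm}_{AB}$, $W_{0101}$, $W_{01AB}$ on $I$ then follows again from the algebraic/first-order relations. The main obstacle in the whole argument is the faithful bookkeeping in the third paragraph: one must verify that the hypergeometric smoothness condition at order $n$ corresponds precisely to suppressing harmonics $\ell\geq n$ in $\underline V^{(n)},\overline V^{(n)}$, and not some weaker or stronger cutoff. This reduces to elementary identities for $\Delta_s$ acting on $Y_{\ell m}$ and for the factors produced by the operators $\mcD_A\mcD^B$ and $\epsilon_A{}^C\mcD_C\mcD^B$, but has to be carried out carefully order by order.
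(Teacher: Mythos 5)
Your proposal is correct and follows essentially the same route as the paper: the paper likewise reduces the problem to the sourceless decoupled equations \eq{spin2_wave1}--\eq{spin2_wave2} for $W^{\pm}_A$, expands in $r$ and spherical harmonics to land on the hypergeometric system of Section~\ref{sec_smoothness_Bianchi_I}, and obtains the proposition as a corollary of Lemma~\ref{lem_sing_wave} with vanishing source, the translation back to $\ul V^{(n)},\ol V^{(n)}$ via \eq{tensor_decomposition2} being exactly the bookkeeping you describe. Your third paragraph merely makes explicit what the paper leaves implicit, so there is no substantive difference in method.
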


\begin{remark}
\label{rem_cotton}
{\rm
A corresponding analysis which analyzes the appearance of logarithmic terms starting from an ordinary Cauchy problem for the spin-2 equation
has been carried out in \cite{kroon}, cf.\ \cite{F_i0, F_spin}.
It is shown there that no logs arise if and only  if  all symmetrized trace-free derivatives of the linearized Cotton tensor of the induced metric on the initial surface vanish at spatial infinity, to which \eq{spin2_nolog_data1}-\eq{spin2_nolog_data2} is the analog at $I^-$.
}
\end{remark}

\section{Constant (ADM) mass aspect and vanishing dual mass aspect}
\label{sec_const_mass}
\label{section8}

Let us  compare Proposition~\ref{prop_spin2}  with the full non-linear case:  In an asymptotically Minkowski-like conformal Gauss gauge at each order which admits a smooth $I^-$ 
data of the form \eq{spin2_nolog_data1}-\eq{spin2_nolog_data2} provide the maximal  part of the radiation field which one can freely prescribe.
Spherical harmonics with $\ell\geq n$ which may appear in the harmonic decomposition of $\ul V^{(n)}$ and $\ol V^{(n)}$
are determined by the no-logs conditions \eq{no_log_rad1}-\eq{no_log_rad2}.

While we have shown that a radiation field which has a trivial expansion at $I^-$ does not produce log-terms at any order,
 it is not clear at all how necessary-and-sufficient conditions look like, even on the level of formal expansions we are interested in.
This is due to the fact that the no-logs conditions are not decoupled equations for the expansion coefficients
$\ul V^{(n)}$ and $\ol V^{(n)}$ as they are in the spin-2 case considered above.

If the $n$th-order expansion coefficient of the radiation field is the first non-trivial one, no log-terms are produced if and only if 
its Hodge-decomposition scalars have only $0\leq k\leq n-1$ spherical harmonics in their decomposition.
However, when passing to the $(n+1)$st-order, the $n$th-order expansion coefficient appears in the source, and the source is not allowed to
have $0\leq k\leq n$-spherical harmonics.
One should therefore expect additional restrictions arising from this as compared to the spin-2 case, where this cannot happen,
unless there are some magic cancellations or  Laplacian-like operators which project out  all the problematic terms.
In fact, it is shown in \cite{kroon2},
where smoothness is analyzed from an ordinary Cauchy problem, that the ``linear spin-2 condition'', i.e.\ the condition on the Cotton tensor mentioned in Remark~\ref{rem_cotton}, is not sufficient to exclude logarithmic terms. 
%So, supposing  that this is not a gauge artifact,
One therefore \emph{must}  expect  non-trivial additional restrictions
as compared to the spin-2 case.

To make computations feasible we restrict attention henceforth to  a more restricted class of initial data where
\begin{equation}
M=\mathrm{const.}\ne 0 \quad \text{and}  \quad  N=0
\,.
\label{constant_mass}
\end{equation}
(By  definition of $N$ as the Laplacian of a certain function it is not allowed to have $\ell=0$-spherical harmonics, so $N=\mathrm{const.}$ implies  $N=0$ in our setting.)
As before we assume an asymptotically Minkowski-like conformal Gauss gauge at each order.
All the results of Section~\ref{section_aMlcGg} apply, now with \eq{constant_mass}.
A crucial  advantage of \eq{constant_mass}  is that some of the transport equations on $I$ can  be  solved explicitly.
In particular  the Bianchi equation on $I$ can be solved.
From Section~\ref{sec_solve_Bianchi_I} we deduce that in our current setting
\begin{equation}
W^{\pm}_A|_I=0\,, \quad V^{\pm}_{AB}|_I =0\,, \quad W_{0101}|_I=2 M\,, \quad W_{01AB}|_I =0
\,.
\end{equation}
We will analyze this problem approaching $I^-$ from $\scri^-$, which is more natural in our setting. Of course one could do a similar analysis from $I$.
However, it turns out the the equations which arise on $\scri^-$ are somewhat more manageable.
For instance on $I$ a decomposition in  spherical harmonics comes in at a very  early stage when solving the Bianchi equation, while on $\scri^-$ it
needs to be taken into account only when the actual no-logs condition is derived.

\subsection{Second-order transverse derivatives on $\scri^-$}
\label{sec_constat_mass_ex}

In order to get some insights concerning the expected  additional  ``non-linear'' restrictions%
\footnote{
In fact similar restrictions should be expected if one linearizes around e.g.\ the Schwarzschild metric, i.e.\ the additional restrictions
seem  mainly be  due to a (dual)  mass rather than non-linearities.
}
 let us consider the \emph{second-order} transverse derivatives on $\scri^-$ first, before we analyze the general case.
We want to search for additional  restrictions on $\Xi^{(5)}_{AB}$, in addition to
\eq{restrictions_on_Xi5}, 
%
%\begin{equation*}
%\Xi^{(5)}_{ AB} = (\mcD_A\mcD_B\ul \Xi^{(5)})_{\mathrm{tf}} + \epsilon_{(A}{}^C\mcD_{B)}\mcD_C\ol \Xi^{(5)}
%\,,
%\quad \text{for some   $\ell=2$ spherical harmonics $\ul v^{(5)}$ and $\ol v^{(5)}$,}
%\end{equation*}
%
which ensure that no log-terms arise in the expansions of the second-order transverse derivatives.
%To get some insight whether a non-trivial $\Xi^{(5)}_{AB}$ can produce log terms in higher order, let us compute the second-order transverse derivatives.
Since we already know how the contribution from $\Xi^{(6)}_{AB}$ looks like, and we are only interested in the source, i.e.\ the right-hand side of
\eq{some_no_logs_cond},
we also assume $\Xi^{(6)}_{AB}=0$ for the computation.

The second-order transverse derivative of \eq{evolutionW6b} reads (using \eq{connection_scri1}-\eq{connection_scri9} and \eq{frame_1storder1}-\eq{frame_1storder14}),
\begin{align*}
r^5\partial_{r} (r^{-4}\partial^2_{\tau}V^-_{AB} )|_{\scri^-}
=&
- \partial^2_{\tau}e^{\mu}{}_1\partial_{\mu}V^-_{AB} 
+(\mcD_{(A} \partial^2_{\tau}W^+_{B)})_{\mathrm{tf}}
\\
&
+(\partial^2_{\tau} e^{\mu}{}_{(A}\partial_{(\mu} W^+_{B)}-\partial^2_{\tau}\widehat \Gamma_{(A}{}^C{}_{B)}W^+_C)_{\mathrm{tf}}
\\
&
+\partial^2_{\tau} (\widehat\Gamma_1{}^0{}_0+ 
2 \widehat\Gamma_1{}^1{}_{0} 
  + \widehat\Gamma_C{}^C{}_0 
- \widehat\Gamma_C{}^C{}_1 ) V^-_{AB} 
\\
&
+ \Big[ -\partial^2_{\tau}( \widehat\Gamma_C{}^0{}_{(A} + \widehat\Gamma_C{}^1{}_{(A}- 2\widehat\Gamma_{1C(A})V^-_{B)}{}^C
+ \frac{3}{2}\partial^2_{\tau}( \widehat\Gamma_C{}^0{}_{(A}  -  \widehat\Gamma_C{}^1{}_{(A} )U_{B)}{}^C
\\
&
+\partial^2_{\tau}\Big(
- 2\widehat\Gamma_1{}^0{}_{(A} 
- 2 \widehat\Gamma_{(A}{}^0{}_0
- \widehat\Gamma_{(A}{}^0{}_{1}
+2 \widehat\Gamma_1{}^1{}_{(A} \Big)W^+_{B)}
\Big]_{\mathrm{tf}}
+\mathfrak{O}(r^{5})
\,.
\end{align*}
The   first-order transverse derivative of \eq{evolutionW4b} gives with \eq{trans_Weyl_spec1}-\eq{trans_Weyl_spec5}
(alternatively, one could  compute $(\partial^2_{\tau}W^+_A)^{(4)}$ from \eq{evolution11})
\begin{align*}
(\partial^2_{\tau}W^+_A)^{(4)}
 =&
( -\mcD^B \partial_{\tau}V^-_{AB}
-\frac{1}{2}\mcD_A\partial_{\tau}W_{0101}
-\frac{1}{2}\mcD^B\partial_{\tau}W_{01AB}
-\partial_{\tau}W^+_A
)^{(4)}
+\frac{3}{2}M v^{(5)}_A 
\\
=&
\frac{3}{4}M (\Delta_s+15) v^{(5)}_{A}
\,,
\end{align*}
where we used that by \eq{V-_eqn_scri_exp}
\begin{equation*}
(\partial_{\tau}V^-_{AB})^{(4)}
=
-\frac{3}{2}M \Big(
 (\mcD_{(A}  v^{(5)}_{B)})_{\mathrm{tf}}
+5\Xi^{(5)}_{AB}
\Big)
\,.
\end{equation*}
%
%Using  \eq{Weyl_frame_scri1}-\eq{Weyl_frame_scri6} and \eq{evolution1}-\eq{evolution7}
Using \eq{evolution_2nd_trans_1}-\eq{evolution_2nd_trans_19} to determine the second-order
transverse derivatives of connection and frame coefficients, a calculation reveals that
%
%\begin{eqnarray}
%\partial^2_{\tau}e^{\mu}{}_1|_{\scri^-} &=&0
%\,,
%\label{evolution_2nd_trans_1}
%\\
%\partial^2_{\tau}e^{\tau}{}_A|_{\scri^-}  &=& -\frac{1}{r}v_A 
%\,,
%\\
%\partial^2_{\tau}e^{r}{}_A|_{\scri^-}  &=& 
%-\frac{1}{2}v_A 
%\,,
%\\
%\partial^2_{\tau}e^{\mathring A}{}_A|_{\scri^-}  &=&
%\frac{1}{2}\Big(\partial_r-\frac{1}{r}\Big)\Xi_A{}^B e^{\mathring A}{}_B
%\,,
%\label{evolution_2nd_trans_4}
%\\
%\partial^2_{\tau}\widehat\Gamma_{1}{}^0{}_1|_{\scri^-} 
% &=&
%  -4r W_{0101}
%\,,
%\\
%\partial^2_{\tau}\widehat\Gamma_{1}{}^0{}_A|_{\scri^-} 
% &=&
%  -4r W_{010A} + 2r W_{011A}
%\,,
%\\
%\partial^2_{\tau}\widehat\Gamma_{1}{}^b{}_c |_{\scri^-} 
% &=&
%2r W^b{}_{c01}-2r\delta^b{}_{c}W_{0101}  
%\,,
%\\
%\partial^2_{\tau}\widehat\Gamma_{A}{}^0{}_1 |_{\scri^-} 
% &=&
%  -4r W_{010A}
%\,,
%\\
%\partial^2_{\tau}\widehat\Gamma_{A}{}^0{}_B|_{\scri^-} 
% &=&
%  -4r W_{0A0B} - 2r W_{0AB1}
%\,,
%\\
%\partial^2_{\tau}\widehat\Gamma_{A}{}^1{}_1|_{\scri^-} 
% &=&
%-  \frac{1}{2r}v_A
%-2rW_{010A}   
%\,,
%\\
%\partial^2_{\tau}\widehat\Gamma_{A}{}^1{}_B|_{\scri^-} 
% &=&
%-\frac{1}{2}\Big(\partial_r-\frac{1}{r}\Big)\Xi_{AB}
%   -2r W_{0AB1}
%\,,
%\\
%\partial^2_{\tau}\widehat\Gamma_{A}{}^B{}_C|_{\scri^-} 
% &=&
%-\frac{1}{2r} \delta^B{}_ Cv_A
%+\frac{1}{2} \mathring\Gamma_D{}^B{}_C\Big(\partial_r-\frac{1}{r}\Big)\Xi_A{}^D  -r\delta^B{}_{C}(W^-_A+W^+_A)  -4r \eta_{A[B}W_{C]110}
%\,.
%\label{evolution_2nd_trans_19}
%\end{eqnarray}
%
%
\begin{align*}
(\partial_{r}(r^{-4}\partial^2_{\tau}V^-_{AB} ))^{(-1)}
=&
(\mcD_{(A} (\partial^2_{\tau}W^+_{B)})^{(4)})_{\mathrm{tf}}
-  \frac{15}{4}M\Big(
(\Delta_s -4)\mcD_{( A} v^{(5)}_{ B)}
-2\mcD_{ A}\mcD_{ B} \mcD^{ C} v^{(5)}_{ C}
\Big)_{\mathrm{tf}}
+ 15\Xi^{(5)}_{AB} M
\\
=&
- 3 M(\Delta_s -8)(\mcD_{( A} v^{(5)}_{ B)})_{\mathrm{tf}}
+ \frac{15}{2}M(\mcD_{ A}\mcD_{ B} \mcD^{ C} v^{(5)}_{ C})_{\mathrm{tf}}
+ 15\Xi^{(5)}_{AB} M
\,.
\end{align*}
We deduce that for this order to be smooth at $I^-$ the following smoothness condition needs to be satisfied
(in addition to the requirement on $\Xi^{(6)}_{AB}$ to arise from a linear combination of $2\leq \ell\leq 3$ spherical harmonics),
\begin{equation}
(\Delta_s -8)(\mcD_{( A} v^{(5)}_{ B)})_{\mathrm{tf}}
- \frac{5}{2}(\mcD_{ A}\mcD_{ B} \mcD^{ C} v^{(5)}_{ C})_{\mathrm{tf}}
-5\Xi^{(5)}_{AB} =0
\,.
\end{equation}
We compute the divergence
\begin{equation}
(\Delta_s -5)(\Delta_s+1) v^{(5)}_{ A}
- \frac{5}{2}\mcD_{ A}(\Delta_s+2)\mcD^{ B} v^{(5)}_{ B}
-10v^{(5)}_{A} =0
\,.
\end{equation}
Divergence and curl of this equation read
\begin{align}
(3\Delta_s\Delta_s + 14 \Delta_s + 36) \mcD^Av^{(5)}_{ A}
=&0
\quad\Longrightarrow \quad \mcD^Av^{(5)}_{ A} =0
\,,
\\
(\Delta_s\Delta_s-2\Delta_s -18)\epsilon^{AB}\mcD_{[A} v^{(5)}_{ B]}
 =&0
\quad\Longrightarrow \quad \epsilon^{AB}\mcD_{[A} v^{(5)}_{ B]}=0
\,,
\end{align}
and we deduce the smoothness condition
\begin{equation}
\Xi^{(5)}_{AB} =0
\,.
\end{equation}
While a non-trivial $\Xi^{(5)}_{AB}=(\mcD_A\mcD_B\ul{\Xi}^{(5)})_{\mathrm{tf}}+ \epsilon_{(A}{}^C\mcD_{B)}\mcD_C \ol{\Xi}^{(5)}$
does not produce logarithmic terms in the expansion of $\partial_{\tau}V^-_{AB}|_{\scri^-}$  as long as
$\ul \Xi^{(5)}$ and $\ol \Xi^{(5)}$ are linear combinations of $\ell=2$-spherical  harmonics, it does produce log terms in the next order, namely for
$\partial^2_{\tau}V^-_{AB}|_{\scri^-}$, at least supposing that the mass aspect $M$  is constant and non-zero,  and that the dual mass aspect $N$ vanishes.

\subsection{A necessary condition for the non-appearance of log terms}

We want to generalize the above computation to any order.
For this we will  extend the computations of Section~\ref{sec_no-logs_V-scri_gen}
to determine  the $\Xi^{(m_0+2)}_{AB}$-contribution to the no-logs condition \eq{no_log_rad1}-\eq{no_log_rad2},
\begin{equation}
\prod_{\ell =0}^{m_0}\Big(\Delta_s+\ell (\ell+1)\Big)v^{(m_0+3)}
=   O_{\Xi}(m_0+2)
\,, \quad v^{(m_0+3)}\in \{\mcD^Av^{(m_0+3)}_A,\epsilon^{AB}\mcD_{A}v^{(m_0+3)}_{B}\}
\,.
\end{equation}
For this we consider a scenario where the $\Xi^{(k)}_{AB}$'s vanishes for $3\leq k\leq m_0+1$. 
In an asymptotically Minkowski-like conformal Gauss gauge  at  each order we then  compute the $\Xi^{(m_0+2)}_{AB}$-contribution, which, somewhat surprisingly, can be done explicitly.

More precisely, let us  assume that \eq{constant_mass} holds and consider initial data of the form 
\begin{equation}
\label{assumption_data}
\Xi_{AB} = \Xi_{AB}^{m_0+2} r^{m_0+2} +  \Xi_{AB}^{m_0+3} r^{m_0+3} + \mathfrak{O}(r^{m_0+4})\,, \quad m_0\geq 3
\,. 
\end{equation}
%
%The previous example analyzes the appearance of logarithmic terms by approaching $I^-$ from $\scri^-$.
%It turn out that the general case is somewhat easier to handle when approaching $I^-$ form $I$. One reason for this is that
%$\Xi_{AB}^{(m)}=0$ for $m< m_0+2$, whence we can use the decay computed in Lemma~\ref{lemma_estimates2} for the lower-order radial derivatives on $I$.
It follows from the results in Section~\ref{sec_exp_dec_rad_field}
that in order for $I^-$ to be smooth
the two functions appearing in the Hodge decomposition of $ \Xi_{AB}^{(m_0+2)}= (\mcD_{A}\mcD_{B}\ul {\Xi}^{(m_0+2)})_{\mathrm{tf}}
+ \epsilon_{(A}{}^C \mcD_{B)}\mcD_{C}\ol{ \Xi}^{(m_0+2)}$ need to be linear combinations of $1\leq \ell\leq m_0-1$-spherical harmonics,
\begin{equation}
\ul{\Xi}^{(m_0+2)}=\sum_{\ell=1}^{m_0-1}\sum_{m=-\ell}^{+\ell}\ul{\Xi}^{(m_0+2)}_{\ell m}Y_{\ell m}(\theta,\phi)
\,, \quad
\ol{\Xi}^{(m_0+2)}=\sum_{\ell=1}^{m_0-1}\sum_{m=-\ell}^{+\ell}\ol{\Xi}^{(m_0+2)}_{\ell m}Y_{\ell m}(\theta,\phi)
\,.
\label{assumption_data2B}
\end{equation}
If this is the case, no log terms arise up to and including the $(m_0-2)$-nd order transverse derivatives of the rescaled Weyl tensor.
Our goal is to  compute the $(m_0-1)$-st order transverse derivatives.
We will see that a non-trivial $\Xi^{(m_0+2)}_{AB}$-contribution   does produce non-trivial   $2\leq \ell\leq m_0-1$-spherical harmonics  in the source term of the
no-logs condition of order $m_0+1$.
% supposing that $\langle \ol \Xi^{(m_0+3)}, Y_{\ell m}\rangle\ne0$ or  $\langle \ul \Xi^{(m_0+3)}, Y_{\ell m}\rangle\ne0$ for some $-\ell \leq m\leq \ell$.
We will thus be able to deduce that in a smooth setting the radiation field \emph{necessarily} needs to vanish at all orders at $I^-$.
%Nevertheless, it provides some insights into the structure  of the equations.

\subsubsection{First-order radial derivatives on $I$}
\label{sec_crucial_behavior}

Recall the expressions \eq{frameI_1}-\eq{Schouten_I} of the $0th$-order radial derivatives on $I$, 
\begin{align}
e^{\tau}{}_{1}|_{I} =&- \tau 
\,, 
\quad
e^{r}{}_{1}|_{I} =0
\,,
\quad
e^{\mathring A}{}_{1}|_{I} =0
\,,
\quad
e^{\tau}{}_{A}|_{I}  = 0
\,,
\quad
e^{r}{}_{A}|_{I}  =   0
\,,
\quad
e^{\mathring A}{}_{A}|_{I}  =   \mathring e^{\mathring A}{}_A
\,,
\label{frameI_1B}
\\
\widehat\Gamma_1{}^i{}_j |_{I}  =& \delta^i{}_j
\,,
\quad
\widehat\Gamma_a{}^b{}_0 |_{I}  =0
\,,
\quad
 \widehat\Gamma_A{}^1{}_1   |_{I}=0
\,,
\quad
\widehat\Gamma_A{}^B{}_1   |_{I}= \delta^B{}_A
\,,
\quad
\widehat\Gamma_A{}^C{}_B  |_{I} =    \mathring\Gamma_A{}^C{}_B 
\,,
\\
\widehat L_{ ij}|_{I} 
=& 0
\,.
\label{Schouten_IB}
\end{align}
and that in our current setting we have
\begin{equation}
W^{\pm}_A|_I=0\,, \quad V_{AB}^{\pm}|_I =0\,, \quad W_{0101}|_I= 2M\,, \quad W_{01AB}|_I =0
\,.
\end{equation}
From \eq{evolution1}-\eq{evolution7} we compute the first-order radial derivatives on $I$ for Schouten tensor, connection and frame coefficients
(recall \eq{frameI_1}-\eq{Schouten_I}, $\Theta=r(1-\tau^2)$, $b_0=-2r\tau$, $b_1=2r$, $b_A=0$).
%We have
%%
%\begin{align}
%\partial_{\tau}\partial_r\widehat L_{10} |_I
%&= -2 M
%\,,
%\\
%\partial_{\tau}\partial_r\widehat L_{A0} |_I
%&= 0
%\,,
%\\
%\partial_{\tau}\partial_r\widehat L_{11} |_I
%&= 2\tau  M
%\,,
%\\
%\partial_{\tau}\partial_r\widehat L_{1A} |_I
%&=0
%\,,
%\\
%\partial_{\tau}\partial_r\widehat L_{A1} |_I
%&= 0
%\,,
%\\
%\partial_{\tau}\partial_r\widehat L_{AB} |_I
%&= -\tau  M \eta_{AB}
%\,,
%\end{align}
%%
%which yields with trivial data as computed from \eq{schouten_scri3}-\eq{schouten_scri6}
For trivial data as computed from \eq{schouten_scri3}-\eq{schouten_scri6} we find for the Schouten tensor
\begin{align}
\partial_r\widehat L_{10} |_I
&= -4 M(1+\tau)
\,,
\\
\partial_r\widehat L_{A0} |_I
&= 0
\,,
\\
\partial_r\widehat L_{11} |_I
&= -2M(1-\tau^2)
\,,
\\
\partial_r\widehat L_{1A} |_I
&=0
\,,
\\
\partial_r\widehat L_{A1} |_I
&= 0
\,,
\\
\partial_r\widehat L_{AB} |_I
&= M(1-\tau^2) \eta_{AB} 
\,.
\end{align}
%
%Moreover,
%%
%\begin{align}
%\partial_{\tau}\partial_r\widehat\Gamma_{1}{}^0{}_1 |_I
% &=
%- 2M (1-\tau^2) 
%\,,
%\\
%\partial_{\tau}\partial_r\widehat\Gamma_{1}{}^0{}_A |_I
% &=
%0
%\,,
%\\
%\partial_{\tau}\partial_r\widehat\Gamma_{A}{}^0{}_1 |_I
% &=
%0
%\,,
%\\
%\partial_{\tau}\partial_r\widehat\Gamma_{A}{}^0{}_B |_I
% &=
%M (1-\tau^2) \eta_{AB}
%\,,
%\\
%\partial_{\tau}\partial_r\widehat\Gamma_{1}{}^1{}_1|_I
% &=
%- \partial_r\widehat\Gamma_{1}{}^1{}_{0} -2 M(1+\tau)
%\,,
%\\
%\partial_{\tau}\partial_r\widehat\Gamma_{1}{}^1{}_A |_I
% &=
%\partial_r\widehat\Gamma_{1}{}^0{}_{A}   
%\,,
%\\
%\partial_{\tau}\partial_r\widehat\Gamma_{A}{}^1{}_1 |_I
% &=
%-  \partial_r\widehat\Gamma_{A}{}^1{}_{0}   
%\,,
%\\
%\partial_{\tau}\partial_r\widehat\Gamma_{A}{}^1{}_B |_I
% &=
%\partial_r\widehat\Gamma_{A}{}^0{}_{B} 
%\,,
%\\
%\partial_{\tau}\partial_r\widehat\Gamma_{1}{}^A{}_B|_I 
% &=
%-  \delta^A{}_B\partial_r\widehat\Gamma_{1}{}^1{}_{0}
%-  \mathring \Gamma_C{}^A{}_B\partial_r\widehat\Gamma_{1}{}^C{}_{0}
%  -2 M(1+\tau)\delta^A{}_{B}
%\,,
%\\
%\partial_{\tau}\partial_r\widehat\Gamma_{A}{}^B{}_C |_I
% &=
%-  \delta^B{}_C\partial_r\widehat\Gamma_{A}{}^0{}_{1} 
%-  \mathring\Gamma_D{}^B{}_C\partial_r\widehat\Gamma_{A}{}^D{}_{0} 
%\,.
%\end{align}
%
With trivial data as induced by \eq{connection_scri1}-\eq{connection_scri9} 
we  end up with the following expressions for the connection coefficients,
\begin{align}
\partial_r\widehat\Gamma_{1}{}^0{}_1 |_I
 &=
-4M\Big( (1+\tau)^2- \frac{1}{3} (1+\tau)^3\Big)
\,,
\\
\partial_r\widehat\Gamma_{1}{}^0{}_A |_I
 &=
0
\,,
\\
\partial_r\widehat\Gamma_{A}{}^0{}_1 |_I
 &=
0
\,,
\\
\partial_r\widehat\Gamma_{A}{}^0{}_B |_I
 &=
2M\Big( (1+\tau)^2-\frac{1}{3}(1+\tau)^3\Big)\eta_{AB}
\,,
\\
\partial_r\widehat\Gamma_{1}{}^1{}_1|_I
 &=
 -2 M\Big((1+\tau)^2 - \frac{2}{3} (1+\tau)^3+  \frac{1}{6} (1+\tau)^4\Big)
\,,
\\
\partial_r\widehat\Gamma_{1}{}^1{}_A |_I
 &=
0
\,,
\\
\partial_r\widehat\Gamma_{A}{}^1{}_1 |_I
 &=
0
\,,
\\
\partial_r\widehat\Gamma_{A}{}^1{}_B |_I
 &=
\frac{2}{3}M\Big((1+\tau)^3-\frac{1}{4}(1+\tau)^4\Big)\eta_{AB}
\,,
\\
\partial_r(\widehat\Gamma_{1}{}^A{}_B)_{\mathrm{tf}}|_I 
 &=
0
\,,
\\
\partial_r\widehat\Gamma_{A}{}^B{}_C |_I
 &=
- \frac{2}{3}M\Big( (1+\tau)^3-\frac{1}{4}(1+\tau)^4\Big) \mathring\Gamma_A{}^B{}_C
\,.
\end{align}
Finally, we have
\begin{align}
\partial_re^{\tau}{}_1|_I&= 2M\Big((1+\tau)^3 - \frac{5}{6} (1+\tau)^4+  \frac{1}{6} (1+\tau)^5\Big)
\,,
\\
\partial_re^{r}{}_1|_I&= 1
\,,
\\
\partial_re^{\mathring A}{}_1|_I&=
0
\,,
\\
\partial_re^{\tau}{}_A|_I&=
0
\,,
\\
\partial_re^{r}{}_A|_I&=
0
\,,
\\
\partial_re^{\mathring A}{}_A|_I&=
 -\frac{2}{3}M\Big( (1+\tau)^3-\frac{1}{4}(1+\tau)^4\Big)e^{\mathring A}{}_A
\,.
\end{align}
We will also need some second-order radial derivatives,
\begin{align}
\partial^2_re^{r}{}_1|_I&=
\frac{8}{3}M\Big( (1+\tau)^3- \frac{1}{4} (1+\tau)^4\Big)
\,,
\\
\partial^2_re^{r}{}_A|_I&=0
\,.
\end{align}
These results will be crucial for the computations on $\scri^-$ because it provides information concerning the decay of connection and  frame coefficients,
in particular we e.g.\ find that 
$$\partial^n_{\tau}\Gamma_i{}^j{}_k|_{\scri^-}=O(r^2) \quad \text{for all $n\geq 5$.}$$
Because of this only a bounded  number of terms will contribute to the critical logarithmic terms producing order in the Bianchi equation for transverse derivatives
of any order.

\subsubsection{Some expansion coefficients at $I^-$}
\label{constM_higher_orders}

In analogy to the proof of Lemma~\ref{lemma_estimates1}, replacing $r^{\infty}$ there by $m_0+m_1$, where $m_1$ depends on the first-order contribution,
 cf.\ \eq{frame_1storder1}-\eq{V-_eqn_scri_exp},  one shows that for $1\leq k\leq m_0+1$ ($1\leq k\leq m_0$ for $\partial^k_{\tau}V^{-}_{AB} |_{\scri^-}$ since the order $m_0+1$ may have log terms)
\begin{align}
\partial^{k}_{\tau}\widehat\Gamma_{1}{}^1{}_1|_{\scri^-}
 =&
\mathcal{P}^{k-1}+ \mathfrak{O}(r^{m_0+1})
\,,
\label{behavior_nec1}
\\
\partial^{k}_{\tau}\widehat\Gamma_{1}{}^0{}_1 |_{\scri^-}
 =&
 \mathcal{P}^{k-1}+  \mathfrak{O}(r^{m_0+1})
\,,
\\
\partial^{k}_{\tau}\widehat\Gamma_{A}{}^1{}_1 |_{\scri^-}
 =&
 \mathcal{P}^{k}+  \mathfrak{O}(r^{m_0+1})
\,,
\\
\partial^{k}_{\tau}\widehat\Gamma_{A}{}^0{}_1 |_{\scri^-}
 =&
 \mathcal{P}^{k}+  \mathfrak{O}(r^{m_0+1})
\,,
\\
\partial^{k}_{\tau}(\widehat\Gamma_{1}{}^0{}_A +\widehat\Gamma_{1}{}^1{}_A )|_{\scri^-}
 =&
 \mathcal{P}^{k-2}+ \mathfrak{O}(r^{m_0+1})
\,,
\\
\partial^{k}_{\tau}(\widehat\Gamma_{1}{}^0{}_A -\widehat\Gamma_{1}{}^1{}_A )|_{\scri^-}
 =&
 \mathcal{P}^{k}+ \mathfrak{O}(r^{m_0+1})
\,,
\\
\partial^{k}_{\tau}(\widehat\Gamma_{A}{}^0{}_B +\widehat\Gamma_{A}{}^1{}_B )|_{\scri^-}
 =&
 \mathcal{P}^{k-1}+  \mathfrak{O}(r^{m_0+1})
\,,
\\
\partial^{k}_{\tau}(\widehat\Gamma_{A}{}^0{}_{B} -\widehat\Gamma_{A}{}^1{}_{B} )|_{\scri^-}
 =&
 \mathcal{P}^{k+1}+  \mathfrak{O}(r^{m_0+1})
\,,
\\
\partial^{k}_{\tau}\widehat\Gamma_{1}{}^A{}_B |_{\scri^-}
 =&
 \mathcal{P}^{k-1}+  \mathfrak{O}(r^{m_0+1})
\,,
\\
\partial^{k}_{\tau}\widehat\Gamma_{A}{}^B{}_C |_{\scri^-}
 =&
\mathcal{P}^{k}+  \mathfrak{O}(r^{m_0+1})
\,,
\\
\partial^{k}_{\tau}\widehat L_{1i}|_{\scri^-} 
=&
\mathcal{P}^{k}+  \mathfrak{O}(r^{m_0+1})
\,,
\\
\partial^{k}_{\tau}(\widehat L_{A0} +\widehat L_{A1})|_{\scri^-}
=&
\mathcal{P}^{k+1}+ \mathfrak{O}(r^{m_0+1})
\,,
\\
\partial^{k}_{\tau}(\widehat L_{A0}-\widehat L_{A1}) |_{\scri^-}
=&
\mathcal{P}^{k}+  \mathfrak{O}(r^{m_0+1})
\,,
\\
\partial^{k}_{\tau}\widehat L_{AB} |_{\scri^-}
=&
 \mathcal{P}^{k+1}+  \mathfrak{O}(r^{m_0+1})
\,,
\\
\partial^k_{\tau}e^{\tau}{}_1|_{\scri^-}=&-\delta^k{}_1 + 
\mathcal{P}^{ k-2} + O(r^{m_0+1})
\,,
\\
\partial^{k}_{\tau}e^{\mathring A}{}_1|_{\scri^-}=&   \mathcal{P}^{k-1}+ \mathfrak{O}(r^{m_0+1})
\,,
\\
\partial^{k}_{\tau}e^{\tau}{}_A |_{\scri^-}=& \mathcal{P}^{k-1}+  \mathfrak{O}(r^{m_0+1})
\,,
\\
\partial^{k}_{\tau}e^{\mathring A}{}_A|_{\scri^-}=&   \mathcal{P}^{k}+  \mathfrak{O}(r^{m_0+1})
\,,
\\
\partial^{k}_{\tau}e^{r}{}_1|_{\scri^-}=&   \mathcal{P}^{k-1}+ \mathfrak{O}(r^{m_0+2})
\,,
\\
\partial^{k}_{\tau}e^{r}{}_A|_{\scri^-}=&   \mathcal{P}^{k}+  \mathfrak{O}(r^{m_0+2})
\,,
\\
\partial^k_{\tau}U_{AB} |_{\scri^-}=& \mathcal{P}^{ k} + \mathfrak{O}(r^{m_0})
\,,
\label{behavior_nec21}
\\
\partial^k_{\tau}W^{\pm}_{A} |_{\scri^-}=& \mathcal{P}^{ k\pm1} +  \mathfrak{O}(r^{m_0})
\,,
\\
\partial^k_{\tau}V^{\pm}_{AB} |_{\scri^-}=& \mathcal{P}^{ k\mp 2} + \mathfrak{O}(r^{m_0})
\,.
\label{behavior_nec23}
\end{align}
\begin{remark}
{\rm
To obtain the error term for  $\partial^{k}_{\tau}e^{r}{}_a|_{\scri^-}$ one uses that $\partial_{\tau}^n\Gamma_a{}^b{}_0|_{\scri^-}=O(r)$
and $\partial_{\tau}^ne^r{}_a|_{\scri^-}=O(r)$ for all $n$ by \eq{frameI_1B}-\eq{Schouten_IB}.
}
\end{remark}

%\subsection{Better equations}

Recall the notation introduced in \eq{dfn_deriv}. 
We further introduce the notation
$$
V^{\pm}_A:=\mcD^BV^{\pm}_{AB}
\,.
$$
As a consequence of  Lemma~\ref{lemma_estimates1}
% \eq{behavior_nec21}-\eq{behavior_nec23}
 and 
\eq{double_exp_W0101}-\eq{recursion4}
%\eq{double_exp_W0101}-\eq{double_exp_V-} 
we  have
%\tim{we know the $\Xi^{(m_0+2)}$-contribution}
%
\begin{lemma}
\label{lemma_recursive_data}
Let  $n\geq1$, then
\begin{align*}
 V^+_{A}{}^{(m_0,n)} =& \frac{1}{2} (n-m_0-1) W^-_{A}{}^{(m_0,n)} 
\,,
\quad  n \leq m_0\,,
\\
V^-_{A} {}^{(m_0,n)}
=&
-\frac{1}{2(n-m_0+2)  }(\Delta_s+1)W^+_{A}{}^{(m_0,n)}
\,,
\quad  n+3 \leq m_0\,,
\end{align*}
and
\begin{align*}
nW_{0101} {}^{(m_0,n)}
 =& 
-\frac{1}{2}\mcD^A W^{+}_A{}^{(m_0,n-1)}
+\frac{1}{2}\mcD^AW^{-}_A{}^{(m_0,n-1)}
\,,
\quad  n+1 \leq m_0
\,,
\\
nW_{01AB}{}^{(m_0,n)}     =&
\mcD_{[A}  W^{+}_{B]}{}^{(m_0,n-1)}
+\mcD_{[A} W^{-}_{B]}{}^{(m_0,n-1)}
\,,
\quad  n+1 \leq m_0
\,,
\\
(m_0-n+1)W^-_{A}{}^{(m_0,n)} 
 =&
-\frac{1}{2n} \Big(\Delta_s+(m_0-n) (m_0-n+1)-1\Big)  W^-_{A}{}^{(m_0,n-1)}
\,,
\quad n \leq m_0
\,,
\\
(m_0-n+2)  V^+_{A}{}^{(m_0,n)} 
 =&
-\frac{1}{2n} \Big(\Delta_s+(m_0-n) (m_0-n+1)-1\Big)  V^+_{A}{}^{(m_0,n-1)}  
\,,
\quad n-1 \leq m_0\,,
\\
(m_0-n-1) W^+_{A} {}^{(m_0,n)}
 =&  -\frac{1}{2n}\Big(\Delta_s 
+(m_0-n) (m_0-n+1)-1\Big)W^+_{A}{}^{(m_0,n-1)}
\,,
\quad  n+2 \leq m_0\,,
\\
(m_0-n-2) V^-_{A} {}^{(m_0,n)}
=&
-\frac{1}{2n}\Big(\Delta_s 
+(m_0-n) (m_0-n+1)-1\Big)V^-_{A} {}^{(m_0,n-1)}
\,,
\quad  n+3 \leq m_0
\,.
\end{align*}
\end{lemma}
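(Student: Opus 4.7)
The plan is to derive the stated recursions by re-examining the computation carried out in Section~\ref{sec_no-logs_V-scri_gen}, where the same equations were established with error terms $O_\Xi(m+1)$, and showing that under the present hypotheses—$M=\mathrm{const.}$, $N=0$, $\Xi_{AB}=\mathfrak{O}(r^{m_0+2})$ with \eqref{assumption_data2B}, and asymptotically Minkowski-like conformal Gauss gauge at each order—those error terms vanish at order $r^{m_0}$ within the claimed ranges of $n$.

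First I would apply $\partial_\tau^{n-1}$ to the subsystem \eqref{evolutionW1b}-\eqref{evolutionW2b} and \eqref{evolutionW4b}-\eqref{evolutionW5b}, and $\partial_\tau^n$ to \eqref{evolution9} and \eqref{evolution11}, restrict to $\scri^-$, and extract the coefficient of $r^{m_0}$. The operators $\check\nabla_A$ and $e^\mu{}_1\partial_\mu$ differ from their Minkowskian counterparts $\mcD_A$ and $r\partial_r - (1+\tau)\partial_\tau$ by terms of order $\mathfrak{O}(r^{m_0+1})$, by \eqref{behavior_nec1}-\eqref{behavior_nec23}, so only the leading action survives at order $r^{m_0}$. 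This produces exactly the numerical coefficients $(n-m_0-1)$, $(m_0-n+1)$, $(n-m_0+2)$ etc.\ on the left-hand sides of the stated recursions, together with the angular operators $\Delta_s + (m_0-n)(m_0-n+1)-1$ and $\Delta_s + 1$ obtained by combining $2\mcD^B\mcD_{(A}f_{B)_{\mathrm{tf}}} = (\Delta_s+1)f_A$ with the divergence/curl structure of the Hodge decomposition.

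Next I would argue that every quadratic contribution to $O_\Xi(m_0+1)$ vanishes at order $(m_0,n)$ within the claimed range. Each such term is a product of two factors—either (connection or frame)$\times$(Weyl) or (Weyl)$\times$(Weyl). The estimates \eqref{behavior_nec1}-\eqref{behavior_nec23} show that every connection, frame and curvature component is a polynomial of prescribed degree in $\tau$ modulo $\mathfrak{O}(r^{m_0+1})$ (or $\mathfrak{O}(r^{m_0+2})$ for the ``$e^r$''-components); matching degrees in the product shows that the product is itself $\mathfrak{O}(r^{m_0+1})$, hence does not contribute at order $r^{m_0}$, provided the total $\tau$-polynomial degree coming from $n$ transverse derivatives does not exceed the bound allowed by the range constraint. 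The only potentially dangerous terms are those involving the background values $W_{0101}|_{I^-}=2M$ and $W_{01AB}|_{I^-}=2N\epsilon_{AB}$, which are of order $r^0$; but precisely here the hypothesis $M=\mathrm{const.}$, $N=0$ guarantees that every angular operator $\mcD^A$, $\mcD_{[A}$, or $(\mcD_{(A})_{\mathrm{tf}}$ that multiplies them in the Bianchi system returns zero. The hypothesis that $\ul\Xi^{(m_0+2)}$ and $\ol\Xi^{(m_0+2)}$ have spherical content only in the range $1\leq\ell\leq m_0-1$ together with the smoothness up to order $m_0-2$, inherited from Section~\ref{sec_exp_dec_rad_field}, then takes care of the remaining contributions.

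The principal obstacle is the bookkeeping: one must track, line by line, the precise polynomial degree in $\tau$ and the precise $r$-order of each factor, and check that the excluded values of $n$ in each inequality correspond exactly to the orders where the LHS coefficient vanishes (so that the recursion cannot be inverted) or where the structural cancellation of the $O_\Xi$ error fails. In particular the sharper-than-expected decay of $\widehat\Gamma_1{}^0{}_A+\widehat\Gamma_1{}^1{}_A$ in \eqref{behavior_nec1}, which is polynomial of degree $k-2$ rather than $k$, is essential: without it the quadratic term $(\widehat\Gamma_1{}^0{}_A+\widehat\Gamma_1{}^1{}_A)\cdot V^-_{AB}$ in \eqref{evolutionW6b} would contaminate the recursion for $V^-_A{}^{(m_0,n)}$. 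Once this bookkeeping is in place the six displayed identities follow by algebraic inversion of the leading-order Bianchi system at the $(m_0,n)$-level.
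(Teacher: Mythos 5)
Your overall strategy is the paper's: reuse the recursions \eq{double_exp_W0101}--\eq{recursion4}, which were already derived in Section~\ref{sec_no-logs_V-scri_gen} with error terms $O_{\Xi}(m+1)$, and then show that under \eq{assumption_data} those errors do not contribute at the order $(m_0,n)$. The paper's proof is, however, a three-line observation, and the mechanism it uses is not the one you describe. Since $\Xi^{(k)}_{AB}=0$ for $3\le k\le m_0+1$, the $O_{\Xi}(m_0+1)$-errors coincide with those of a radiation field vanishing at all orders, so Lemma~\ref{lemma_estimates1} (transferred to the present setting as \eq{behavior_nec1}--\eq{behavior_nec23}) says they are \emph{polynomials in $r$} of degree $\le n+2$, $n+1$, $n$, $n-1$, $n-2$ depending on the component. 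The range conditions $n+j\le m_0$ in the statement are exactly the conditions under which this degree is strictly less than $m_0$, so the $r^{m_0}$-coefficient of the error vanishes. No term-by-term bookkeeping of quadratic products is needed.

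Two points in your write-up would fail if taken literally. First, you repeatedly call the $\mathcal{P}^{k}$'s ``polynomials in $\tau$''; they are polynomials in $r$ (the fields are restricted to $\scri^-$, where $\tau=-1$), and the whole argument hinges on a degree bound in $r$ being compared against $m_0$. Second, the claim that ``matching degrees in the product shows that the product is itself $\mathfrak{O}(r^{m_0+1})$'' is false: a product of two polynomials in $r$ is a polynomial, not a rapidly decaying remainder; what is true and needed is that its degree stays below $m_0$. Relatedly, the special role you assign to $M=\mathrm{const.}$, $N=0$ (angular operators annihilating the background Weyl values) is not what makes the lemma work — terms like $\mcD_A M$ would simply sit inside a low-degree polynomial coefficient and be harmless for the same degree-counting reason; the constancy assumptions are exploited elsewhere in Section~\ref{section8}, not here. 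Your concluding remark about the enhanced decay of $\widehat\Gamma_1{}^0{}_A+\widehat\Gamma_1{}^1{}_A$ is also misplaced: it is the difference $\widehat\Gamma_1{}^0{}_A-\widehat\Gamma_1{}^1{}_A$ that multiplies $W^+_{B}$ in \eq{evolutionW6b}, and in any case the uniform polynomial bounds of \eq{behavior_nec1}--\eq{behavior_nec23} already suffice. With the degree-in-$r$ argument put in place of your product estimate, the proof closes and agrees with the paper's.
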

\begin{remark}
{\rm
Given $m_0$, for transverse derivatives of order $n$, with $n$ in a certain range, we have the same recursion relations at $I^-$ as for the spin-2 equation.
}
\end{remark}
\begin{proof}
For initial data $\Xi_{AB}$ which vanish asymptotically up to an including the order $m_0+1$  the
 error terms in \eq{double_exp_W0101}-\eq{recursion4} are the same as for a radiation field which vanishes asymptotically
at any order (in fact if $\Xi^{(m_0+2)}$ is the first non-trivial term this influences the expansion in $r$ of transverse derivatives of any order
only for $m\geq m_0+m_1$ as in \eq{behavior_nec1}-\eq{behavior_nec23}).
 It then  follows from Lemma~\ref{lemma_estimates1} that these
 error terms  are polynomials as in \eq{Weyl_poly1}-\eq{Weyl_poly3}. In particular if $m_0$ is sufficiently large as compared
to the number of transverse derivatives (as in the formulation of the lemma), the polynomials are in the kernel and  their contribution vanishes.
\qed
\end{proof}

Applying this formula recursively, the corresponding expansion coefficients can be expressed in terms of the initial data given at $\scri^-$:
\begin{corollary}
\label{cor_recursive_data}
The following relations hold at $I^-$,
\begin{align*}
W^-_{A}{}^{(m_0,m_0-k)} 
% =&
%-\frac{1}{2(m_0-k)(k+1)} \Big(\Delta_s+k (k+1)-1\Big)  W^-_{A}{}^{(m_0,m_0-k-1)}
%\\
 =&
\frac{(-1)^{m_0-k}k!}{2^{m_0-k}(m_0-k)!m_0!} \prod_{\ell=k}^{m_0-1}\Big(\Delta_s+\ell (\ell+1)-1\Big)  W^-_{A}{}^{(m_0,0)}
\,, \quad k\geq 0
\,,
\\
 V^+_{A}{}^{(m_0,m_0-k)} 
 %=&
%-\frac{1}{2(m_0-k)(k+2) } \Big(\Delta_s+k (k+1)-1\Big)  V^+_{A}{}^{(m_0,m_0-k-1)}  
%\\
 =&
\frac{(-1)^{m_0-k}(k+1)!}{2^{m_0-k}(m_0-k)!(m_0+1)! } \prod_{\ell=k}^{m_0-1} \Big(\Delta_s+\ell (\ell+1)-1\Big)  V^+_{A}{}^{(m_0,0)}  
\,, \quad k\geq -1
\,,
\\
 W^+_{A} {}^{(m_0,m_0-k)}
% =&  -\frac{1}{2(m_0-k)(k-1)}\Big(\Delta_s 
%+k (k+1)-1\Big)W^+_{A}{}^{(m_0,m_0-k-1)}
%\\
 =&  \frac{(-1)^{m_0-k}(k-2)!}{2^{m_0-k}(m_0-k)!(m_0-2)!} \prod_{\ell=k}^{m_0-1}\Big(\Delta_s 
+\ell (\ell+1)-1\Big)W^+_{A}{}^{(m_0,0)}
\,, \quad k\geq 2
\,,
\\
 V^-_{A} {}^{(m_0,m_0-k)}
%=&
%-\frac{1}{2(m_0-k)(k-2)}\Big(\Delta_s  +k(k+1)-1\Big)V^-_{A} {}^{(m_0,m_0-k-1)}
%\\
=&
-\frac{(-1)^{m_0-k}(k-3)!}{2^{m_0-k}(m_0-k)!(m_0-3)!}\prod_{\ell=k}^{m_0-1}\Big(\Delta_s 
+\ell (\ell+1)-1\Big)V^-_{A} {}^{(m_0,0)}
\,, \quad k\geq 3
\,.
\end{align*}
Recall that by \eq{constr_spec_gauge1}-\eq{constr_spec_gauge6} for $m_0\geq 3$
\begin{align*}
V^{+(m_0,0)}_{A}  =& -\frac{m_0(m_0+1)}{8}v^{(m_0+2)}_{ A }
\,,
\\
W^{-(m_0,0)}_A =&  \frac{m_0}{4}v^{(m_0+2)}_A
\,,
\\
W^{+(m_0,0)}_A=&\frac{1}{4(m_0-1)}\Big(( \Delta_s-1) v^{(m_0+2)}_{ A}
 -2 \mcD_A\mcD^{ B}v^{(m_0+2)}_{ B}\Big)
\,,
\\
V^{-(m_0,0)}_{A}=&  \frac{1}{8(m_0-1)(m_0-2)} (\Delta_s+1)\Big(( \Delta_s-1) v^{(m_0+2)}_{ A}
 -2 \mcD_A\mcD^{ B}v^{(m_0+2)}_{ B}\Big)
\,,
\end{align*}
so that the above expansion coefficients of the rescaled Weyl tensor can directly be expressed in terms of the asymptotic  initial data $\Xi_{AB}$.
\end{corollary}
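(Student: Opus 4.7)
\proof
The plan is to obtain each displayed formula by iterating the corresponding one-step recursion from Lemma~\ref{lemma_recursive_data}, keeping careful track of the combinatorial coefficients, and then to read off the explicit form of $W^{\pm(m_0,0)}_A$ and $V^{\pm(m_0,0)}_A$ from the constraint analysis carried out in Section~\ref{sec_exp_asymp_Mink}.

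Let me illustrate the approach on the first assertion. Set $j := m_0-k$ and rewrite the $W^-_A$-recursion as
\begin{equation*}
W^-_A{}^{(m_0,j)}=\frac{-1}{2j(m_0-j+1)}\Bigl(\Delta_s+(m_0-j)(m_0-j+1)-1\Bigr)W^-_A{}^{(m_0,j-1)},
\end{equation*}
which is valid in the range $1\le j\le m_0$ guaranteed by Lemma~\ref{lemma_recursive_data}. I would then iterate this relation $j$ times, picking up the prefactor
\begin{equation*}
\prod_{i=1}^{j}\frac{-1}{2i(m_0-i+1)}=\frac{(-1)^j}{2^j j!}\cdot\frac{(m_0-j)!}{m_0!}=\frac{(-1)^{m_0-k}k!}{2^{m_0-k}(m_0-k)!\,m_0!},
\end{equation*}
and the operator product $\prod_{i=1}^{j}(\Delta_s+(m_0-i)(m_0-i+1)-1)$, which under the reindexing $\ell=m_0-i$ turns into $\prod_{\ell=k}^{m_0-1}(\Delta_s+\ell(\ell+1)-1)$. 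This reproduces the first formula. The other three cases are analogous: the $V^+_A$-recursion carries the denominator $m_0-n+2$ instead of $m_0-n+1$, which yields the shifted factorial $(m_0+1)!$ and the index shift $k\mapsto k+1$; the $W^+_A$-recursion carries $m_0-n-1$, giving $(m_0-2)!$ and the shift $k\mapsto k-2$; and the $V^-_A$-recursion carries $m_0-n-2$, giving $-(m_0-3)!$ and the shift $k\mapsto k-3$. The only bookkeeping point to watch is the range of validity in Lemma~\ref{lemma_recursive_data}, which is precisely what determines the lower bound on $k$ in each of the four statements (namely $k\ge 0,-1,2,3$ respectively).

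For the second part of the corollary, I would simply read $W^{\pm(m_0,0)}_A$ and $V^{\pm(m_0,0)}_A$ off the expansions \eqref{constr_spec_gauge1}--\eqref{constr_spec_gauge6} in Section~\ref{sec_exp_asymp_Mink}, which express the zeroth transverse-order expansion coefficients of the rescaled Weyl tensor at $I^-$ directly in terms of $v^{(m_0+2)}_A=\mcD^B\Xi^{(m_0+2)}_{AB}$. Substituting these expressions into the iterated formulas produces the explicit dependence on $\Xi_{AB}$ announced after the displayed equations.

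There is essentially no hard step: the statement is a clean iteration of a first-order recursion, and the only thing one needs to be careful about is (i) the reindexing $\ell=m_0-i$ in the operator product and (ii) tracking the four distinct denominator patterns so that the factorials come out as $m_0!$, $(m_0+1)!$, $(m_0-2)!$, $(m_0-3)!$. The reason the result is recorded as a corollary rather than as an additional step of Lemma~\ref{lemma_recursive_data} is that these explicit closed forms are what is actually used in Section~\ref{section8} to test the no-logs conditions at a general order: iterating once more by means of the evolution equation for $V^-_{AB}$ then allows one to produce non-trivial $2\le\ell\le m_0-1$ spherical harmonics in the source and thereby obstruct the corresponding no-logs condition unless the radiation field vanishes to all orders at $I^-$.
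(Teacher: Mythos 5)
Your proposal is correct and is exactly the paper's (implicit) argument: the corollary is obtained by iterating the one-step recursions of Lemma~\ref{lemma_recursive_data} and collecting the products $\prod_{n=1}^{j}\frac{-1}{2n(m_0-n+c)}$ for the four values of the shift $c$, precisely as you describe. One bookkeeping remark: in the $V^-_A$ case you assert that the denominator product yields ``$-(m_0-3)!$'', but for $k\ge 3$ every factor $m_0-n-2$ with $1\le n\le m_0-k$ is positive, so the iteration actually produces $+\frac{(-1)^{m_0-k}(k-3)!}{2^{m_0-k}(m_0-k)!(m_0-3)!}$ with no extra sign --- which is also what one gets by combining the iterated $W^+_A$ formula with the same-order relation $V^-_A{}^{(m_0,n)}=\tfrac{1}{2(m_0-n-2)}(\Delta_s+1)W^+_A{}^{(m_0,n)}$ from Lemma~\ref{lemma_recursive_data}; the leading minus sign in the corollary's $V^-_A$ line therefore appears to be a typo rather than something your recursion can deliver, and your attempt to force it is the only unsupported step in an otherwise clean derivation.
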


\subsubsection{Higher-order transverse  derivatives on $\scri^-$}

We apply $\partial_{\tau}^n$ to  \eq{evolutionW6b}
and employ the formulas derived in Section~\ref{sec_crucial_behavior} \& \ref{constM_higher_orders}
which ensure that only very few terms contribute to the critical order $m_0+1$. For $n \leq m_0$ we find
\begin{align*}
&
\hspace{-3em} (r\partial_{r}-n-2) \partial_{\tau}^nV^-_{AB} |_{\scri^-}
\\
=&
- \begin{pmatrix} n \\ 3\end{pmatrix} \partial_{\tau}^3 e^{\tau}{}_1\partial_{\tau}^{n-2}  V^-_{AB} 
- \begin{pmatrix} n \\ 4\end{pmatrix} \partial_{\tau}^4 e^{\tau}{}_1\partial_{\tau}^{n-3}  V^-_{AB} 
- \begin{pmatrix} n \\ 5\end{pmatrix} \partial_{\tau}^5 e^{\tau}{}_1\partial_{\tau}^{n-4}  V^-_{AB} 
\\
&
- \begin{pmatrix} n \\ 3\end{pmatrix}\partial_{\tau}^{3}  e^{r}{}_1\partial_{r} \partial_{\tau}^{n-3}  V^-_{AB} 
- \begin{pmatrix} n \\ 4\end{pmatrix}\partial_{\tau}^{4}  e^{r}{}_1\partial_{r} \partial_{\tau}^{n-4}  V^-_{AB} 
\\
&+
 \begin{pmatrix} n \\ 2\end{pmatrix} \partial_{\tau}^{2}\Big(3\widehat\Gamma_1{}^0{}_0+ 2 \widehat\Gamma_1{}^1{}_{0}    + \frac{1}{2}\widehat\Gamma_C{}^C{}_0  - \frac{1}{2}\widehat\Gamma_C{}^C{}_1 \Big) \partial_{\tau}^{n-2}V^-_{AB} 
\\
&
+ \begin{pmatrix} n \\ 3\end{pmatrix} \partial_{\tau}^{3}\Big(3\widehat\Gamma_1{}^0{}_0+ 2 \widehat\Gamma_1{}^1{}_{0}    + \frac{1}{2}\widehat\Gamma_C{}^C{}_0  - \frac{1}{2}\widehat\Gamma_C{}^C{}_1 \Big) \partial_{\tau}^{n-3}V^-_{AB} 
\\
&
+ \begin{pmatrix} n \\ 4\end{pmatrix} \partial_{\tau}^{4}\Big(3\widehat\Gamma_1{}^0{}_0+ 2 \widehat\Gamma_1{}^1{}_{0}    + \frac{1}{2}\widehat\Gamma_C{}^C{}_0  - \frac{1}{2}\widehat\Gamma_C{}^C{}_1 \Big) \partial_{\tau}^{n-4}V^-_{AB} 
\\
&
+ \begin{pmatrix} n \\ 3\end{pmatrix}\Big( \partial_{\tau}^3e^{\mathring A}{}_{(A}\partial_{\mathring A}\partial_{\tau}^{n-3}W^+_{B)}-\partial_{\tau}3\Gamma_{(A}{}^C{}_{B)}\partial_{\tau}^{n-3}W^+_C \Big)_{\mathrm{tf}}
\\
&
+ \begin{pmatrix} n \\ 4\end{pmatrix}\Big( \partial_{\tau}^4e^{\mathring A}{}_{(A}\partial_{\mathring A}\partial_{\tau}^{n-4}W^+_{B)}-\partial_{\tau}^4\Gamma_{(A}{}^C{}_{B)}\partial_{\tau}^{n-4}W^+_C \Big)_{\mathrm{tf}}
\\
&
+( \mcD_{(A}\partial^n_{\tau}W^+_{B)})_{\mathrm{tf}}
+  3M\partial^n_{\tau}( \widehat\Gamma_{(A}{}^0{}_{B)} -  \widehat\Gamma_{(A}{}^1{}_{B)} )_{\mathrm{tf}}
+\mathcal{P}^{ n+1} + \mathfrak{O}(r^{m_0+2})
\\
=&
-12Mr\Big[\begin{pmatrix} n \\ 3\end{pmatrix}  +2 \begin{pmatrix} n \\ 2\end{pmatrix}\Big] \partial_{\tau}^{n-2}V^-_{AB} 
+8Mr\Big[5 \begin{pmatrix} n \\ 4\end{pmatrix} 
- (m_0-5)\begin{pmatrix} n \\ 3\end{pmatrix}  \Big] \partial_{\tau}^{n-3}V^-_{AB} 
\\
&
-4Mr\Big[
10\begin{pmatrix} n \\ 5\end{pmatrix} 
-(2m_0-7) \begin{pmatrix} n \\ 4\end{pmatrix} \Big] \partial_{\tau}^{n-4}V^-_{AB} 
\\
&
-4Mr \begin{pmatrix} n \\ 3\end{pmatrix}(\mcD_{(A}\partial_{\tau}^{n-3}W^+_{B)} )_{\mathrm{tf}}
+4Mr  \begin{pmatrix} n \\ 4\end{pmatrix}( \mcD_{(A}\partial_{\tau}^{n-4}W^+_{B)})_{\mathrm{tf}}
\\
&
+( \mcD_{(A}\partial^n_{\tau}W^+_{B)})_{\mathrm{tf}}
+  3M\partial^n_{\tau}( \widehat\Gamma_{(A}{}^0{}_{B)} -  \widehat\Gamma_{(A}{}^1{}_{B)} )_{\mathrm{tf}}
+\mathcal{P}^{ n+1} + \mathfrak{O}(r^{m_0+2})
\,.
\end{align*}
A similar computation when  $\partial_{\tau}^n$ is applied to \eq{evolution11}
yields for 
$ n \leq m_0$,
%\tim{different equation because...}
%$n\leq m_0-2$,
%
\begin{align*}
2\partial_{\tau}^{n+1}W^+_{A} |_{\scri^-}
 =&
- \begin{pmatrix} n \\ 3\end{pmatrix} \partial_{\tau}^3e^{\tau}{}_1\partial_{\tau}^{n-2} W^+_{A} 
- \begin{pmatrix} n \\ 4\end{pmatrix} \partial_{\tau}^4e^{\tau}{}_1\partial_{\tau}^{n-3} W^+_{A} 
- \begin{pmatrix} n \\ 5\end{pmatrix} \partial_{\tau}^5e^{\tau}{}_1\partial_{\tau}^{n-4} W^+_{A} 
\\
&
- \begin{pmatrix} n \\ 3\end{pmatrix}   \partial_{\tau}^3e^{r}{}_1\partial_{r} \partial_{\tau}^{n-3} W^+_{A} 
- \begin{pmatrix} n \\ 4\end{pmatrix}  \partial_{\tau}^4e^{r}{}_1\partial_{r} \partial_{\tau}^{n-4} W^+_{A}
\\
&
-2 \begin{pmatrix} n \\ 3\end{pmatrix} \Big(\partial_{\tau}^3e^{\mathring A}{}_B\partial_{\mathring A}\partial_{\tau}^{n-3}V^-_{A}{}^{B}-\partial_{\tau}^3\Gamma_B{}^C{}_A\partial_{\tau}^{n-3}V^-_{C}{}^{B}+ \partial_{\tau}^3\Gamma_B{}^B{}_C\partial_{\tau}^{n-3}V^-_{A}{}^{C} \Big)
\\
&
-2  \begin{pmatrix} n \\ 4\end{pmatrix}\Big(\partial_{\tau}^4e^{\mathring A}{}_B\partial_{\mathring A}\partial_{\tau}^{n-4}V^-_{A}{}^{B}-\partial_{\tau}^4\Gamma_B{}^C{}_A\partial_{\tau}^{n-4}V^-_{C}{}^{B}+ \partial_{\tau}^4\Gamma_B{}^B{}_C\partial_{\tau}^{n-4}V^-_{A}{}^{C} \Big)
\\
&
+\begin{pmatrix} n \\ 2\end{pmatrix}\partial_{\tau}^2\Big(  3\widehat\Gamma_1{}^0{}_{0} - 2\widehat\Gamma_B{}^B{}_1  -2\widehat\Gamma_B{}^B{}_0  + \widehat\Gamma_1{}^1{}_0\Big) \partial_{\tau}^{n-2}W^+_{A}
\\
&
+\begin{pmatrix} n \\ 3\end{pmatrix}\partial_{\tau}^3\Big(  3\widehat\Gamma_1{}^0{}_{0} - 2\widehat\Gamma_B{}^B{}_1  -2\widehat\Gamma_B{}^B{}_0  + \widehat\Gamma_1{}^1{}_0\Big) \partial_{\tau}^{n-3}W^+_{A}
\\
&
+\begin{pmatrix} n \\ 4\end{pmatrix}\partial_{\tau}^4\Big(  3\widehat\Gamma_1{}^0{}_{0} - 2\widehat\Gamma_B{}^B{}_1  -2\widehat\Gamma_B{}^B{}_0  + \widehat\Gamma_1{}^1{}_0\Big) \partial_{\tau}^{n-4}W^+_{A}
\\
&
-2\mcD_B\partial_{\tau}^nV^-_{A}{}^{B}
-(r\partial_r-n+1)\partial_{\tau}^{n}W^+_{A} 
-3M\partial_{\tau}^n(  \widehat\Gamma_1{}^0{}_{A} - \widehat\Gamma_1{}^1{}_{A}  )
+\mathcal{P}^{ n+1} + \mathfrak{O}(r^{m_0+2})
\\
=&
-12Mr \Big[\begin{pmatrix} n \\ 3\end{pmatrix} +3\begin{pmatrix} n \\ 2\end{pmatrix}\Big] \partial_{\tau}^{n-2}W^+_{A}
+8Mr\Big[5\begin{pmatrix} n \\ 4\end{pmatrix} 
-( m_0-8)\begin{pmatrix} n \\ 3\end{pmatrix}   \Big]\partial_{\tau}^{n-3}W^+_{A}
\\
&
-8Mr\Big[5\begin{pmatrix} n \\ 5\end{pmatrix} 
-(m_0-5)\begin{pmatrix} n \\ 4\end{pmatrix}  \Big]\partial_{\tau}^{n-4}W^+_{A}
\\
&
+8Mr \begin{pmatrix} n \\ 3\end{pmatrix} \mcD_B\partial_{\tau}^{n-3}V^-_{A}{}^{B}
-8Mr\begin{pmatrix} n \\ 4\end{pmatrix}\mcD_B\partial_{\tau}^{n-4}V^-_{A}{}^{B}
\\
&
-2\mcD_B\partial_{\tau}^nV^-_{A}{}^{B}
-(r\partial_r-n+1)\partial_{\tau}^{n}W^+_{A} 
-3M\partial_{\tau}^n(  \widehat\Gamma_1{}^0{}_{A} - \widehat\Gamma_1{}^1{}_{A}  )
+\mathcal{P}^{ n+1} + \mathfrak{O}(r^{m_0+2})
\,.
\end{align*}
For  $n\leq m_0-1$ 
we combine both equations to obtain the expansion coefficient in $r$ of order~$m_0+1$.
\begin{align*}
&
\hspace{-2em}
(n+1) (m_0-n-1)W^+_{A} {}^{(m_0+1,n+1)}
+ \frac{1}{2}\Big(\Delta_s+  (m_0-n) (m_0-n+1) - 1\Big)W^+_{A} {}^{(m_0+1,n)}
\\
=&
2(n+4)MV^-_{A}{}^{(m_0,n-2)}
-\frac{1}{3}( 7n- 6m_0+7)MV^-_{A} {}^{(m_0,n-3)}
+\frac{2}{3}M(n-m_0) V^-_{A}{}^{(m_0,n-4)}
\\
&
-M(m_0-n-1)  (n +7) W^+_{A}{}^{(m_0,n-2)}
+\frac{1}{6}M(m_0-n-1) (5n -4m_0+17 )W^+_{A}{}^{(m_0,n-3)}
\\
&
+\frac{1}{6}M (m_0-n-1)^2W^+_{A}{}^{(m_0,n-4)}
+\frac{1}{3}M(\Delta_s + 1)W^+_{A}{}^{(m_0,n-3)} 
- \frac{1}{12}M(\Delta_s + 1)W^+_{A} {}^{(m_0,n-4)}
\\
&
-   3M \mcD^B( \widehat\Gamma_{(A}{}^0{}_{B)} -  \widehat\Gamma_{(A}{}^1{}_{B)} )_{\mathrm{tf}}^{(m_0+1,n)}
-\frac{3}{2}M (m_0-n-1) (  \widehat\Gamma_1{}^0{}_{A} - \widehat\Gamma_1{}^1{}_{A}  ){}^{(m_0+1,n)}
\\
=&
-M(m_0-n-1)  (n +7) W^+_{A}{}^{(m_0,n-2)}
-M\frac{n+4}{n-m_0}(\Delta_s+1)W^+_{A}{}^{(m_0,n-2)}
\\
&
+\frac{1}{6}M(m_0-n-1) (5n -4m_0+17 )W^+_{A}{}^{(m_0,n-3)}
+ \frac{1}{6}M\frac{9n- 8m_0+5}{n-m_0-1}(\Delta_s+1)W^+_{A}{}^{(m_0,n-3)}
\\
&
+\frac{1}{6}M (m_0-n-1)^2W^+_{A}{}^{(m_0,n-4)}
-\frac{1}{12}M \frac{5n-5m_0-2}{n-m_0-2}(\Delta_s+1)W^+_{A}{}^{(m_0,n-4)}
\\
&
-   3M \mcD^B( \widehat\Gamma_{(A}{}^0{}_{B)} -  \widehat\Gamma_{(A}{}^1{}_{B)} )_{\mathrm{tf}}^{(m_0+1,n)}
-\frac{3}{2}M (m_0-n-1) (  \widehat\Gamma_1{}^0{}_{A} - \widehat\Gamma_1{}^1{}_{A}  ){}^{(m_0+1,n)}
\,,
\end{align*}
where we have employed   Lemma~\ref{lemma_recursive_data}.
This formula holds for $0\leq n\leq m_0-1$ if one defines derivatives of  negative order to be zero.

We still need to find expressions for $( \widehat\Gamma_{(A}{}^0{}_{B)} -  \widehat\Gamma_{(A}{}^1{}_{B)} )_{\mathrm{tf}}$ and
$\widehat\Gamma_1{}^0{}_{A} - \widehat\Gamma_1{}^1{}_{A} $ in terms of $W^{\pm}_A$.
From the $(n-1)$-st order transverse derivatives of \eq{evolution1}-\eq{evolution7} we find for $n\geq 1$ (recall  \eq{global_functions_confG}),
\begin{align*}
\partial_{\tau}^n\widehat L_{1A} |_{\scri^-}
=&
-\partial_{\tau}^{n-1}(  \widehat\Gamma_1{}^1{}_0\widehat L_{1A} ) - \partial_{\tau}^{n-1}( \widehat\Gamma_1{}^B{}_0\widehat L_{BA} )
 - 2r\partial_{\tau}^{n-1} W^-_{A} 
\\
&
 +(n-1)r\partial_{\tau}^{n-2} (W^+_{A} + W^-_{A} )
+\mathfrak{O}(r^{\infty})
\,,
\\
\partial_{\tau}^n\widehat\Gamma_{1}{}^0{}_A |_{\scri^-}
 =&
- \partial_{\tau}^{n-1}( \widehat\Gamma_1{}^0{}_A\widehat\Gamma_{1}{}^1{}_{0})
-  \partial_{\tau}^{n-1}(\widehat\Gamma_B{}^0{}_A\widehat\Gamma_{1}{}^B{}_{0})
 + \partial_{\tau}^{n-1}\widehat L_{1A}
   -(n-1)r \partial_{\tau}^{n-2} (W^+_{A}+W^-_{A})
\\
&
   +\frac{r}{2}(n-1)(n-2)  \partial_{\tau}^{n-3}(W^+_{A} +W^-_{A})
+\mathfrak{O}(r^{\infty})
\,,
\\
\partial_{\tau}^n\widehat\Gamma_{1}{}^1{}_A |_{\scri^-}
 =&
-  \partial_{\tau}^{n-1}(\widehat\Gamma_1{}^1{}_A\widehat\Gamma_{1}{}^1{}_{0})
  - \partial_{\tau}^{n-1}( \widehat\Gamma_B{}^1{}_A\widehat\Gamma_{1}{}^B{}_{0} )
   + (n-1)r  \partial_{\tau}^{n-2}(W^+_{A}-  W^-_{A})
\\
&
  -\frac{r}{2}(n-1)(n-2)   \partial_{\tau}^{n-3}   (W^+_{A}-W^-_{A})
+\mathfrak{O}(r^{\infty})
\,,
\end{align*}
as well as
\begin{align*}
(\partial_{\tau}^n\widehat L_{(AB)})_{\mathrm{tf}} |_{\scri^-}
=& 
-\partial_{\tau}^{n-1}  (\widehat\Gamma_{(A}{}^1{}_0\widehat L_{1B)})_{\mathrm{tf}} 
 -  \partial_{\tau}^{n-1}(\widehat\Gamma_{(A}{}^C{}_0\widehat L_{CB)})_{\mathrm{tf}}  
-2r\partial_{\tau}^{n-1} V^+_{AB}
\\
&
+(n-1)r \partial_{\tau}^{n-2} (V^+_{AB}+V^-_{AB})
+\mathfrak{O}(r^{\infty})
\,,
\\
(\partial_{\tau}^n\widehat\Gamma_{(A}{}^0{}_{B)} )_{\mathrm{tf}}  |_{\scri^-}
 =&
- \partial_{\tau}^{n-1} (\widehat\Gamma_1{}^0{}_{(A}\widehat\Gamma_{B)}{}^1{}_{0} )_{\mathrm{tf}} 
-\partial_{\tau}^{n-1} ( \widehat\Gamma_C{}^0{}_{(A}\widehat\Gamma_{B)}{}^C{}_{0} )_{\mathrm{tf}} 
 -(n-1)r\partial_{\tau}^{n-2}  (V^+_{AB} +V^-_{AB})
\\
&
+\partial_{\tau}^{n-1} (\widehat L_{AB})_{\mathrm{tf}}   
 +\frac{r}{2}(n-1)(n-2)  \partial_{\tau}^{n-3}  (V^+_{AB} +  V^-_{AB})
+\mathfrak{O}(r^{\infty})
\,,
\\
(\partial_{\tau}^n\widehat\Gamma_{(A}{}^1{}_{B)} )_{\mathrm{tf}}  |_{\scri^-}
 =&
- \partial_{\tau}^{n-1} (\widehat\Gamma_1{}^1{}_{(A}\widehat\Gamma_{B)}{}^1{}_{0})_{\mathrm{tf}} 
-\partial_{\tau}^{n-1} ( \widehat\Gamma_C{}^1{}_{(A}\widehat\Gamma_{B)}{}^C{}_{0})_{\mathrm{tf}} 
 -(n-1)r\partial_{\tau}^{n-2}(V^+_{AB} -V^-_{AB})
\\
&
+\frac{r}{2}(n-1)(n-2)   \partial_{\tau}^{n-3}  (V^+_{AB} - V^-_{AB})
+\mathfrak{O}(r^{\infty})
\,.
\end{align*}
From this we deduce, for $ 2\leq n \leq m_0+2$,
\begin{align*}
\partial_{\tau}^n(\widehat L_{1A}-\widehat\Gamma_{1}{}^0{}_A) |_{\scri^-}
=&
 - 2r\partial_{\tau}^{n-1} W^-_{A} 
 +2(n-1) r\partial_{\tau}^{n-2}  (W^+_{A}+W^-_{A} )
\\
&
   -\frac{r}{2}(n-1)(n-2)  \partial_{\tau}^{n-3}(W^+_{A} +W^-_{A})
 - \partial_{\tau}^{n-1}\widehat L_{1A}
+\mathcal{P}^{ n} + \mathfrak{O}(r^{m_0+2})
\\
=&
 - 2r\partial_{\tau}^{n-1} W^-_{A} 
 +2nr\partial_{\tau}^{n-2} W^-_{A} 
 +2(n-1) r\partial_{\tau}^{n-2}  W^+_{A}
\\
&
   -\frac{r}{2}(n+1)(n-2) \partial_{\tau}^{n-3}(W^+_{A} +W^-_{A})
+\mathcal{P}^{ n} + \mathfrak{O}(r^{m_0+2})
\,,
\end{align*}
and
\begin{align*}
(\partial_{\tau}^n(\widehat L_{(AB)}-\widehat\Gamma_{(A}{}^0{}_{B)}))_{\mathrm{tf}} |_{\scri^-}
=& 
-2r\partial_{\tau}^{n-1} V^+_{AB}
+2(n-1)r \partial_{\tau}^{n-2}( V^+_{AB}+V^-_{AB})
\\
&
 -\frac{r}{2}(n-1)(n-2)  \partial_{\tau}^{n-3}  (V^+_{AB} +V^-_{AB})
-\partial_{\tau}^{n-1} (\widehat L_{AB})_{\mathrm{tf}}   
+\mathcal{P}^{ n+1} + \mathfrak{O}(r^{m_0+2})
\\
=& 
-2r\partial_{\tau}^{n-1} V^+_{AB}
+2nr \partial_{\tau}^{n-2} V^+_{AB}
+2(n-1)r \partial_{\tau}^{n-2}V^-_{AB}
\\
&
  -\frac{r}{2}(n+1)(n-2) \partial_{\tau}^{n-3} ( V^+_{AB} +V^-_{AB})
+\mathcal{P}^{ n+1} + \mathfrak{O}(r^{m_0+2})
\,.
\end{align*}
We then determine, for $ 3\leq n \leq m_0+2$ (one checks that this actually  holds  for $n\geq 1$),
\begin{align*}
&\hspace{-3em}\partial_{\tau}^n(\widehat\Gamma_{1}{}^0{}_A-\widehat\Gamma_{1}{}^1{}_A) |_{\scri^-}
\\
 =&
- \partial_{\tau}^{n-1}[( \widehat\Gamma_1{}^0{}_A-\widehat\Gamma_1{}^1{}_A)\widehat\Gamma_{1}{}^1{}_{0}]
-  \partial_{\tau}^{n-1}[(\widehat\Gamma_B{}^0{}_A- \widehat\Gamma_B{}^1{}_A)\widehat\Gamma_{1}{}^B{}_{0}]
 + \partial_{\tau}^{n-1}\widehat L_{1A}
\\
&
   -2(n-1)r \partial_{\tau}^{n-2} W^+_{A}
   +r(n-1)(n-2) \partial_{\tau}^{n-3}W^+_{A} 
+\mathfrak{O}(r^{\infty})
\\
 =&
 \partial_{\tau}^{n-1}(\widehat L_{1A}-\widehat\Gamma_{1}{}^0{}_{A})
   -2(n-1)r \partial_{\tau}^{n-2} W^+_{A}
   +r(n-1)(n-2) \partial_{\tau}^{n-3}W^+_{A} 
+\mathcal{P}^{ n} + \mathfrak{O}(r^{m_0+2})
\\
 =&
   -2(n-1)r \partial_{\tau}^{n-2} W^+_{A}
   +r(n+1)(n-2)\partial_{\tau}^{n-3}W^+_{A} 
    -\frac{r}{2}n(n-3) \partial_{\tau}^{n-4}W^+_{A} 
\\
&
  - 2r\partial_{\tau}^{n-2} W^-_{A} 
 +2(n-1)r\partial_{\tau}^{n-3} W^-_{A} 
  -\frac{r}{2}n(n-3) \partial_{\tau}^{n-4} W^-_{A}
+\mathcal{P}^{ n} + \mathfrak{O}(r^{m_0+2})
\,,
\end{align*}
and, for $ 3\leq n \leq m_0+2$ (this formula is wrong for $n\in\{1,2\}$),
\begin{align*}
&\hspace{-2em} (\partial_{\tau}^n(\widehat\Gamma_{(A}{}^0{}_{B)}-\widehat\Gamma_{(A}{}^1{}_{B)}) )_{\mathrm{tf}}  |_{\scri^-}
\\
 =&
- \partial_{\tau}^{n-1} [(\widehat\Gamma_1{}^0{}_{(A}-\widehat\Gamma_1{}^1{}_{(A})\widehat\Gamma_{B)}{}^1{}_{0} ]_{\mathrm{tf}} 
-\partial_{\tau}^{n-1}[ ( \widehat\Gamma_C{}^0{}_{(A}-\widehat\Gamma_C{}^1{}_{(A})\widehat\Gamma_{B)}{}^C{}_{0} ]_{\mathrm{tf}} 
+\partial_{\tau}^{n-1} (\widehat L_{AB})_{\mathrm{tf}}   
\\
&
- 2(n-1)r\partial_{\tau}^{n-2} V^-_{AB}
 +r(n-1)(n-2)  \partial_{\tau}^{n-3}  V^-_{AB}
+\mathfrak{O}(r^{\infty})
\\
 =&
[\partial_{\tau}^{n-1}(\widehat L_{AB}-\widehat\Gamma_{A}{}^0{}_{B)} )]_{\mathrm{tf}} 
- 2(n-1)r\partial_{\tau}^{n-2} V^-_{AB}
 +r(n-1)(n-2) \partial_{\tau}^{n-3}  V^-_{AB}
+\mathcal{P}^{ n+1} + \mathfrak{O}(r^{m_0+2})
\\
 =&
- 2(n-1)r\partial_{\tau}^{n-2} V^-_{AB}
 +r(n+1)(n-2)  \partial_{\tau}^{n-3}  V^-_{AB}
  -\frac{r}{2}n(n-3)   \partial_{\tau}^{n-4}  V^-_{AB}
\\
&
-2r\partial_{\tau}^{n-2} V^+_{AB}
+2(n-1)r \partial_{\tau}^{n-3} V^+_{AB}
  -\frac{r}{2}n(n-3)   \partial_{\tau}^{n-4}  V^+_{AB} 
+\mathcal{P}^{ n+1} + \mathfrak{O}(r^{m_0+2})
\,.
\end{align*}
That yields
%\tim{check ranges of $n$ here and elsewhere}
%
\begin{align*}
&\hspace{-2em}(\widehat\Gamma_{1}{}^0{}_A-\widehat\Gamma_{1}{}^1{}_A) {}^{(m_0+1,n)}
\\
 =&
   -\frac{2}{n} W^+_{A}{}^{(m_0,n-2)}
   +\frac{n+1}{n(n-1)}W^+_{A} {}^{(m_0,n-3)}
    -\frac{1}{2(n-1)(n-2)}W^+_{A} {}^{(m_0,n-4)}
\\
&
  - \frac{2}{n(n-1)} W^-_{A} {}^{(m_0,n-2)}
 +\frac{2}{n(n-2)}W^-_{A} {}^{(m_0,n-3)}
  -\frac{1}{2(n-1)(n-2)} W^-_{A}{}^{(m_0,n-4)}
\,, \quad 3\leq n \leq m_0+2
\,.
\end{align*}
This formula  holds for $n\in\{1,2\}$  if one defines terms with $W^{\pm}_{A} {}^{(m_0,n)}$, $n<0$, to vanish.
Using Lemma~\ref{lemma_recursive_data} we also obtain
\begin{align*}
&\hspace{-3em}\mcD^B(\widehat\Gamma_{(A}{}^0{}_{B)}-\widehat\Gamma_{(A}{}^1{}_{B)} )_{\mathrm{tf}} {}^{(m_0+1,n)}
\\
 =&
- \frac{2}{n}V^-_{A} {}^{(m_0,n-2)}
 +\frac{n+1}{n(n-1)}V^-_{A} {}^{(m_0,n-3)}
  -\frac{1}{2(n-1)(n-2)}   V^-_{A} {}^{(m_0,n-4)}
\\
&
- \frac{2}{n(n-1)} V^+_{A} {}^{(m_0,n-2)}
+\frac{2}{n(n-2)} V^+_{A} {}^{(m_0,n-3)}
  -\frac{1}{2(n-1)(n-2)}     V^+_{A}  {}^{(m_0,n-4)}
\\
 =&
\frac{1}{n(n-m_0)  }(\Delta_s+1)W^+_{A}{}^{(m_0,n-2)}
 -\frac{n+1}{2n(n-1)(n-m_0-1)  }(\Delta_s+1)W^+_{A}{}^{(m_0,n-3)}
\\
&
  +  \frac{1}{4(n-1)(n-2)(n-m_0-2)  }(\Delta_s+1)W^+_{A}{}^{(m_0,n-4)}
- \frac{n-m_0-3}{n(n-1)}  W^-_{A}{}^{(m_0,n-2)} 
\\
&
+\frac{n-m_0-4}{n(n-2)}  W^-_{A}{}^{(m_0,n-3)} 
  -\frac{n-m_0-5}{4(n-1)(n-2)}     W^-_{A}{}^{(m_0,n-4)} 
\,, \quad 3\leq n \leq m_0-1
\,.
\end{align*}
Moreover (cf.\ Section~\ref{sec_confG_firstoder} \& \ref{sec_confG_secondoder}),
\begin{align*}
\mcD^B(\widehat\Gamma_{(A}{}^0{}_{B)}-\widehat\Gamma_{(A}{}^1{}_{B)})_{\mathrm{tf}} {}^{(m_0+1,2)}
=&
\frac{(m_0+1)(m_0+2)}{8}v^{(m_0+2)}_{ A }
\\
&
\hspace{-4em}
- \frac{1}{8(m_0-1)(m_0-2)} (\Delta_s+1)\Big(( \Delta_s-1) v^{(m_0+2)}_{ A}
 -2 \mcD_A\mcD^{ B}v^{(m_0+2)}_{ B}\Big)
\,,
\\
\mcD^B(\widehat\Gamma_{(A}{}^0{}_{B)}-\widehat\Gamma_{(A}{}^1{}_{B)} )_{\mathrm{tf}} {}^{(m_0+1,1)}
=&-\frac{m_0+2}{2}v^{(m_0+2)}_A
\,.
\end{align*}
Altogether we end up with the following  recursion relation for  $3\leq n\leq m_0-1$,
\begin{align*}
&
\hspace{-2em}
(n+1) (m_0-n-1)W^+_{A} {}^{(m_0+1,n+1)}
+ \frac{1}{2}\Big(\Delta_s+  (m_0-n) (m_0-n+1) - 1\Big)W^+_{A} {}^{(m_0+1,n)}
\\
=&
-\frac{M}{n}\Big[(m_0-n-1)  (n^2 +7n-3) 
+\frac{n^2+4n+3}{n-m_0}(\Delta_s+1)\Big]W^+_{A}{}^{(m_0,n-2)}
\\
&
+\frac{M}{6} (m_0-n-1) \Big(5n -4m_0+17- \frac{9(n+1)}{n(n-1)} \Big)W^+_{A}{}^{(m_0,n-3)}
\\
&
+\frac{M}{6}\frac{1}{n-m_0-1}\Big( 9n- 8m_0+5+    \frac{9(n+1)}{n(n-1)  }\Big)(\Delta_s+1)W^+_{A}{}^{(m_0,n-3)}
\\
&
+\frac{M}{6}  (m_0-n-1) \Big(m_0-n-1 +    \frac{9}{2(n-1)(n-2)}\Big)W^+_{A}{}^{(m_0,n-4)}
\\
&
-\frac{M}{12(n-m_0-2)}\Big( 5n-5m_0-2+     \frac{9}{(n-1)(n-2)  }\Big)(\Delta_s+1)W^+_{A}{}^{(m_0,n-4)}
\\
&
-3M  \Big[
  \frac{4}{n(n-1)} W^-_{A} {}^{(m_0,n-2)}
 -\frac{5}{n(n-2)}W^-_{A} {}^{(m_0,n-3)}
  +\frac{3}{2(n-1)(n-2)} W^-_{A}{}^{(m_0,n-4)}\Big]
\,.
\end{align*}
Using Lemma~\ref{lemma_recursive_data} this can be written for $4\leq n\leq m_0-1$
\begin{align*}
&
\hspace{-2em}
(n+1) (m_0-n-1)W^+_{A} {}^{(m_0+1,n+1)}
+ \frac{1}{2}\Big(\Delta_s+  (m_0-n) (m_0-n+1) - 1\Big)W^+_{A} {}^{(m_0+1,n)}
\\
=&
-\frac{M}{6}\Big\{ \frac{1}{2(n-3)(m_0-n+2) }\Big[\frac{3}{n(n-2)(m_0-n+1) }\Big((m_0-n-1)  (n^2 +7n-3) 
+\frac{n^2+4n+3}{n-m_0}(\Delta_s+1)\Big)
\\
&
\qquad \times \Big(\Delta_s  +(m_0-n+2) (m_0-n+3)-1\Big)
\\
&
+ (m_0-n-1) \Big(5n -4m_0+17- \frac{9(n+1)}{n(n-1)} \Big)
-\frac{1}{m_0-n+1}\Big( 9n- 8m_0+5+    \frac{9(n+1)}{n(n-1)  }\Big)(\Delta_s+1)\Big]
\\
&
\qquad \times 
\Big(\Delta_s 
+(m_0-n+3) (m_0-n+4)-1\Big)
\\
&
- (m_0-n-1) \Big(m_0-n-1 +    \frac{9}{2(n-1)(n-2)}\Big)
-\frac{1}{2(m_0-n+2)}\Big( 5n-5m_0-2+     \frac{9}{(n-1)(n-2)  }\Big)(\Delta_s+1)\Big\}
\\
&
\qquad \times W^+_{A}{}^{(m_0,n-4)}
\\
&
- \frac{3M}{n-2}\Big[\frac{1}{n(n-3)(m_0-n+4)} \Big(\frac{1}{(n-1)(m_0-n+3)} \Big(\Delta_s+(m_0-n+2) (m_0-n+3)-1\Big)+  \frac{5}{2} \Big)
\\
&
\quad \times
 \Big(\Delta_s+(m_0-n+3) (m_0-n+4)-1\Big) 
+\frac{3}{2(n-1)}\Big] W^-_{A}{}^{(m_0,n-4)}
\,.
\end{align*}
For $n\in\{0,1,2,3\}$ one has (we assume  $m_0\geq 4$ which is fine as the case $m_0=3$ is covered by Section~\ref{sec_constat_mass_ex})
\begin{align}
&
\hspace{-2em}
4 (m_0-4)W^+_{A} {}^{(m_0+1,4)}
+ \frac{1}{2}\Big(\Delta_s+  (m_0-3) (m_0-2) - 1\Big)W^+_{A} {}^{(m_0+1,3)}
\nonumber
\\
=&
\frac{M}{6}(23m_0^2-39m_0-212)W^+_A{}^{(m_0,0)}  + \frac{M(11m_0^2-227m_0+486)}{6(m_0-2)(m_0-3) }(\Delta_s+1)W^+_{A}{}^{(m_0,0)}
\nonumber
\\
&
-\frac{4M}{(m_0-2)(m_0-3)}(\Delta_s+1)(\Delta_s +1) W^+_A{}^{(m_0,0)}
\nonumber
\\
&
+M \Big( (m_0+4)W^-_A{}^{(m_0,0)}+ \frac{1}{m_0}(\Delta_s-1)W^-_A{}^{(m_0,0)}\Big)
\,,
\label{ab3}
\\
&
\hspace{-2em}
3 (m_0-3)W^+_{A} {}^{(m_0+1,3)}
+ \frac{1}{2}\Big(\Delta_s+  (m_0-2) (m_0-1) - 1\Big)W^+_{A} {}^{(m_0+1,2)}
\nonumber
\\
=&
-\frac{15}{2}M(m_0-3)   W^+_{A}{}^{(m_0,0)}
+M\frac{15}{2(m_0-2)}(\Delta_s+1)W^+_{A}{}^{(m_0,0)}
-M\frac{3(3m_0+1)}{m_0}W^-_A{}^{(m_0,0)} 
\,,
\label{ab2}
\\
&
\hspace{-2em}
2 (m_0-2)W^+_{A} {}^{(m_0+1,2)}
+ \frac{1}{2}\Big(\Delta_s+  (m_0-1) m_0 - 1\Big)W^+_{A} {}^{(m_0+1,1)}
=
 \frac{6}{m_0} M (m_0+2)W^{-(m_0,0)}_A
\,,
\label{ab1}
\\
&
\hspace{-2em}
 (m_0-1)W^+_{A} {}^{(m_0+1,1)}
+ \frac{1}{2}\Big(\Delta_s+  m_0 (m_0+1) - 1\Big)W^+_{A} {}^{(m_0+1,0)}
=
-   \frac{12}{m_0}M W^{-(m_0,0)}_A
\,,
\label{ab0}
\end{align}
where we have used \eq{evolution_2nd_trans_1}-\eq{evolution_2nd_trans_19},
\eq{constr_spec_gauge1}-\eq{constr_spec_gauge6}, and \eq{trans_Weyl_spec1}-\eq{trans_Weyl_spec5},
% and that, by \eq{constr_spec_gauge1}-\eq{constr_spec_gauge6} and \eq{trans_Weyl_spec1}-\eq{trans_Weyl_spec5} for $m_0\geq 3$,
%
\begin{align*}
W^-_A{}^{(m_0,0)} =&\frac{m_0}{4}v_A{}^{(m_0+2)}
\,,
\\
V^+_A{}^{(m_0,0)} =& -\frac{m_0+1}{2}W^-_A{}^{(m_0,0)}
\,,
%\\
%W_{0101}{}^{(m_0,0)}=& -\frac{1}{4}\mcD^Bv^{(m_0+2)}_B
%\,,
%\\
%W_{01AB}{}^{(m_0,0)}=& -\frac{1}{2}\mcD_{[A}v^{(m_0+2)}_{B]}
%\,,
\\
W^+_A{}^{(m_0,0)} =& \frac{1}{m_0-1} \mcD^B U_{AB}{}^{(m_0,0)}
\,,
\\
V^-_A{}^{(m_0,0)} =& \frac{1}{2(m_0-2)} (\Delta_s +1) W^+_A{}^{(m_0,0)}
\,,
\\
W^-_A{}^{(m_0,1)} =&  
-\frac{m_0-1}{2}W^-_A{}^{(m_0,0)}- \frac{1}{2m_0}(\Delta_s-1)W^-_A{}^{(m_0,0)}
\,,
\\ 
W^+_A{}^{(m_0,1)} =& -\frac{m_0+1}{2}W^+_A{}^{(m_0,0)}  -\frac{1}{2(m_0-2)} (\Delta_s +1) W^+_A{}^{(m_0,0)}
\,.
\end{align*}

\subsubsection{Recursion relation}

We observe that the recursion relation has the following structure for $4\leq n\leq m_0-1$,
\begin{align*}
&
\hspace{-2em}
(n+1) (m_0-n-1)W_A^+ {}^{(m_0+1,n+1)}
+ \frac{1}{2}\Big(\Delta_s+  (m_0-n) (m_0-n+1) -1\Big)W_A^+ {}^{(m_0+1,n)}
\\
=&
 a_{m_0,n}W_A^+{}^{(m_0,n-4)} +  b_{m_0,n}W_A^-{}^{(m_0,n-4)}
\,,
\label{gen_rec_formula}
\end{align*}
where
$a_{m_0,n}$ and $b_{m_0,n}$ are operators, more precisely they are polynomials in the  Laplacian $\Delta_s$.
In fact, now it is convenient to set 
\begin{equation}
W_A^-{}^{(m_0,n)} := W_A^-{}^{(m_0,0)} \quad \text{for $n\leq -1$}
\label{dfn_sepc_cases}
\end{equation}
as then the formula remains true  for $n\in\{0,1,2,3\}$ with appropriately chosen  $a_{m_0,n}$ and $b_{m_0,n}$ which can be read off from \eq{ab3}-\eq{ab0}.

The no-logs-condition for $\partial_{\tau}^{m_0-1}V^-_{AB}|_{\scri^-}$ can be written as (cf.\ \eq{no_logs_W+}) 
\begin{equation}
(\Delta_s+1)W^+_{A} {}^{(m_0+1,m_0-1)}
-2 a_{m_0,m_0-1}W_A^+{}^{(m_0,m_0-5)} -2  b_{m_0,m_0-1}W_A^-{}^{(m_0,m_0-5)}=0
\,.
\label{no-log_constM}
\end{equation}
By recursion  one shows that
%\begin{align*}
%W^+_{A} {}^{(m_0+1,n+1)}
%=&
% - \frac{1}{2(n+1) (m_0-n-1)}\Big(\Delta_s+  (m_0-n) (m_0-n+1) - 1\Big)W^+_{A} {}^{(m_0+1,n)}
%\\
%&
%+\frac{1}{(n+1) (m_0-n-1)}\Big( a_{m_0,n}W_A^+{}^{(m_0,n-4)} +  b_{m_0,n}W_A^-{}^{(m_0,n-4)}\Big)
%\,,
%\end{align*}
%
\begin{align*}
&
\hspace{-2em}
W^+_{A} {}^{(m_0+1,m_0-1)}
\\
=&
-\frac{1}{2(m_0-1)}(\Delta_s+5)W^+_{A} {}^{(m_0+1,m_0-2)}
\\
&
+ \frac{1}{m_0-1} \Big( a_{m_0,m_0-2}W_A^+{}^{(m_0,m_0-6)} +  b_{m_0,m_0-2}W_A^-{}^{(m_0,m_0-6)}\Big)
\\
=&
\frac{1}{8(m_0-1)(m_0-2)}(\Delta_s+5)(\Delta_s+11)W^+_{A} {}^{(m_0+1,m_0-3)}
\\
&
-\frac{1}{4(m_0-1)(m_0-2)}(\Delta_s+5)\Big[ a_{m_0,m_0-3}W_A^+{}^{(m_0,m_0-7)} +  b_{m_0,m_0-3}W_A^-{}^{(m_0,m_0-7)}\Big]
\\
&
+ \frac{1}{m_0-1} \Big( a_{m_0,m_0-2}W_A^+{}^{(m_0,m_0-6)} +  b_{m_0,m_0-2}W_A^-{}^{(m_0,m_0-6)}\Big)
%\\
%=&
% - \frac{1}{2(m_0-3) 3}\frac{1}{8(m_0-1)(m_0-2)}(\Delta_s+5)(\Delta_s+11)\Big(\Delta_s+19\Big)W^+_{A} {}^{(m_0+1,m_0-4)}
%\\
%&
%+\frac{1}{(m_0-3)3}\frac{1}{8(m_0-1)(m_0-2)}(\Delta_s+5)(\Delta_s+11)\Big[ a_{m_0,m_0-4}W_A^+{}^{(m_0,m_0-8)} +  b_{m_0,m_0-4}W_A^-{}^{(m_0,m_0-8)}
%\Big]
%\\
%&
%-\frac{1}{4(m_0-1)(m_0-2)}(\Delta_s+5)\Big[ a_{m_0,m_0-3}W_A^+{}^{(m_0,m_0-7)} +  b_{m_0,m_0-3}W_A^-{}^{(m_0,m_0-7)}\Big]
%\\
%&
%+ \frac{1}{(m_0-1)} \Big( a_{m_0,m_0-2}W_A^+{}^{(m_0,m_0-6)} +  b_{m_0,m_0-2}W_A^-{}^{(m_0,m_0-6)}\Big)
\\
=&\dots
\\
=&
\frac{(-1)^{m_0-1}}{2^{m_0-1}(m_0-1)!^2}\prod_{\ell=2}^{m_0}\Big(\Delta_s + \ell(\ell+1)-1\Big) W^+_{A} {}^{(m_0+1,0)}
\\
& 
+\sum_{k=0}^{m_0-2} \frac{(-1)^k(m_0-k-2)!}{2^{k}(m_0-1)!(k+1)!} a_{m_0,m_0-2-k}\prod_{\ell_1=2}^{k+1}\Big(\Delta_s + \ell_1(\ell_1+1)-1\Big)  W^+_{A} {}^{(m_0,m_0-k-6)}
\\
& 
+\sum_{k=0}^{m_0-2}  \frac{(-1)^k(m_0-k-2)!}{2^{k}(m_0-1)!(k+1)!}  b_{m_0,m_0-2-k} \prod_{\ell_1=2}^{k+1}\Big(\Delta_s + \ell_1(\ell_1+1)-1\Big)  W^-_{A} {}^{(m_0,m_0-k-6)}
\,.
\end{align*}
We use  Corollary~\ref{cor_recursive_data}   to conclude that the no-logs condition \eq{no-log_constM} becomes
(recall \eq{dfn_sepc_cases}),
\begin{align}
0
=&
\frac{(-1)^{m_0-1}}{2^{m_0-1}(m_0-1)!^2}\prod_{\ell=1}^{m_0}\Big(\Delta_s + \ell(\ell+1)-1\Big) W^+_{A} {}^{(m_0+1,0)}
\nonumber
\\
& 
+\frac{(-1)^{m_0}}{2^{m_0-6}}\sum_{k=-1}^{m_0-6} \Big[ \frac{(m_0-k-2)!(k+4)!}{(m_0-1)!(m_0-2)!(k+1)!(m_0-k-6)!}  a_{m_0,m_0-2-k}
\nonumber
\\
&
\qquad \times \prod_{\ell_1=1}^{k+1}\Big(\Delta_s + \ell_1(\ell_1+1)-1\Big)  \prod_{\ell_2=k+6}^{m_0-1}\Big(\Delta_s 
+\ell_2 (\ell_2+1)-1\Big)W^+_{A}{}^{(m_0,0)}
\Big]
\nonumber
\\
& 
+ \frac{(-1)^{m_0}}{2^{m_0-6}} \sum_{k=-1}^{m_0-6}\Big[ \frac{(m_0-k-2)!(k+6)!}{m_0!(m_0-1)!(k+1)!(m_0-k-6)!}  b_{m_0,m_0-2-k}
\nonumber
\\
&
\qquad \times  \prod_{\ell_1=1}^{k+1}\Big(\Delta_s + \ell_1(\ell_1+1)-1\Big) \prod_{\ell_2=k+6}^{m_0-1}\Big(\Delta_s+\ell_2 (\ell_2+1)-1\Big)  W^-_{A}{}^{(m_0,0)}
\Big]
\nonumber
\\
&
+\sum_{k=m_0-5}^{m_0-2} \frac{(-1)^k(m_0-k-2)!}{2^{k}(m_0-1)!(k+1)!} a_{m_0,m_0-2-k}\prod_{\ell_1=1}^{k+1}\Big(\Delta_s + \ell_1(\ell_1+1)-1\Big)  W^+_{A} {}^{(m_0,0)}
\nonumber
\\
& 
+\sum_{k=m_0-5}^{m_0-2}  \frac{(-1)^k(m_0-k-2)!}{2^{k}(m_0-1)!(k+1)!}b_{m_0,m_0-2-k} \prod_{\ell_1=1}^{k+1}\Big(\Delta_s + \ell_1(\ell_1+1)-1\Big)  W^-_{A} {}^{(m_0,0)}
\,.
\label{some-no-log-form}
\end{align}
%
%%
%\begin{align*}
%W^-_{A}{}^{(m_0,m_0-k)} 
% =&
%\frac{(-1)^{m_0-k}k!}{2^{m_0-k}(m_0-k)!m_0!} \prod_{\ell=k}^{m_0-1}\Big(\Delta_s+\ell (\ell+1)-1\Big)  W^-_{A}{}^{(m_0,0)}
%\,, \quad k\geq 0
%\,,
%\\
% W^+_{A} {}^{(m_0,m_0-k)}
% =&  \frac{(-1)^{m_0-k}(k-2)!}{2^{m_0-k}(m_0-k)!(m_0-2)!} \prod_{\ell=k}^{m_0-1}\Big(\Delta_s 
%+\ell (\ell+1)-1\Big)W^+_{A}{}^{(m_0,0)}
%\,, \quad k\geq 2
%\,.
%\end{align*}
%
%
Because of \eq{assumption_data2B}  the Hodge decomposition scalars of $W_A^{\pm(m_0,0)}$ only  involve
spherical harmonics up to and including $\ell=m_0-1$.
With \eq{constr_spec_gauge1}-\eq{constr_spec_gauge6} we deduce that a necessary condition for the non-appearance of logarithmic terms at this order is
\begin{equation}
\ul{\Xi}^{(m_0+3)}=\sum_{\ell=1}^{m_0}\sum_{m=-\ell}^{+\ell}\ul{\Xi}^{(m_0+3)}_{\ell m}Y_{\ell m}(\theta,\phi)
\,, \quad
\ol{\Xi}^{(m_0+3)}=\sum_{\ell=1}^{m_0}\sum_{m=-\ell}^{+\ell}\ol{\Xi}^{(m_0+3)}_{\ell m}Y_{\ell m}(\theta,\phi)
\,,
\end{equation}
and then  the  term in the first line  vanishes.

%Also because of \eq{assumption_data2B} the last two lines vanish,  so that  the coefficients $a_{m_0,k}$ and $b_{m_0,k}$ with $k\in\{0,1,2,3\}$ are not needed.

We further  observe that for each  $1\leq \ell\leq m_0-1$ in the harmonic decomposition of $W^{\pm(m_0,0)}_A$
there are terms in the above sum, for which the Laplacian  does not project out their contribution.
To deduce that
the no-log condition is actually violated by a non-trivial $\Xi_{AB}^{(m_0+2)}$, we also need to make sure that the coefficients are non-zero.
%We want to analyze whether the  no-logs condition is satisfied.
Since the initial data \eq{assumption_data} enter this condition linearly,
and $a_{m,n}$ and $b_{m,n}$ only involve the Laplacian and $x^{\mathring A}$-independent coefficients, 
 we may assume w.l.o.g.\ initial data of the following form
\begin{equation}
\Xi_{AB} = \Xi_{AB}^{(m_0+2)} r^{m_0+2} + \mathfrak{O}(r^{m_0+3})\,, \quad m_0\geq 3
\,,
\label{specific_data}
\end{equation}
where $ \Xi_{AB}^{(m_0+2)}= (\mcD_{A}\mcD_{B}\ul {\Xi}^{(m_0+2)}_{\widehat\ell})_{\mathrm{tf}}
+ \epsilon_{(A}{}^C \mcD_{B)}\mcD_{C}\ol{ \Xi}^{(m_0+2)}_{\widehat\ell}$ with
\begin{equation*}
\ul{\Xi}^{(m_0+2)}_{\widehat\ell}=\sum_{m=-\widehat \ell}^{+\widehat\ell}\ul{\Xi}^{(m_0+2)}_{\widehat\ell m}Y_{\widehat\ell m}(\theta,\phi)
\,, \quad
\ol{\Xi}^{(m_0+2)}_{\widehat\ell}=\sum_{m=-\widehat\ell}^{+\widehat\ell}\ol{\Xi}^{(m_0+2)}_{\widehat\ell m}Y_{\widehat\ell m}(\theta,\phi)
\,,
\quad 2\leq \widehat \ell \leq m_0-1
\,.
\end{equation*}
%
%(for each $\ell=\widehat\ell$ at most 4 terms from the ``main'' terms which come along with $W^+_{A} {}^{(m_0,0)}$ and $W^-_{A} {}^{(m_0,0)}$, respectively, possibly supplemented by at most 3 terms, respectively, from the last 2 lines).
Let us first analyze the case where  $1\leq \widehat \ell \leq m_0-4$. Then the last two lines in \eq{some-no-log-form} vanish.
We further observe that there are at most 4 terms  in \eq{some-no-log-form} which come along with $W^+_{A} {}^{(m_0,0)}$ and $W^-_{A} {}^{(m_0,0)}$, respectively, which are not projected out by the Laplacians. The no-logs condition thus becomes
\begin{align*}
0
=&
\sum_{k=\widehat \ell -5}^{\widehat \ell -2} \Big[ \frac{(m_0-k-2)!(k+4)!}{(m_0-1)!(m_0-2)!(k+1)!(m_0-k-6)!}  a_{m_0,m_0-2-k}
\\
&
\qquad \times \prod_{\ell_1=1}^{k+1}\Big(\Delta_s + \ell_1(\ell_1+1)-1\Big)  \prod_{\ell_2=k+6}^{m_0-1}\Big(\Delta_s 
+\ell_2 (\ell_2+1)-1\Big)W^+_{A}{}^{(m_0,0)}
\Big]
\\
& 
+ \sum_{k=\widehat \ell -5}^{\widehat \ell -2}\Big[ \frac{(m_0-k-2)!(k+6)!}{m_0!(m_0-1)!(k+1)!(m_0-k-6)!}  b_{m_0,m_0-2-k}
\\
&
\qquad \times  \prod_{\ell_1=1}^{k+1}\Big(\Delta_s + \ell_1(\ell_1+1)-1\Big) \prod_{\ell_2=k+6}^{m_0-1}\Big(\Delta_s+\ell_2 (\ell_2+1)-1\Big)  W^-_{A}{}^{(m_0,0)}
\Big]
\,.
\end{align*}
To ensure that  $k$ is in $[-1, m_0-6]$ as required by \eq{some-no-log-form} it is convenient to set
\begin{equation}
a_{m_0,n}=b_{m_0,n}=0 \quad \text{for} \quad n\leq 3 \quad \text{and} \quad  n\geq m_0
\,.
\label{rest_ab}
\end{equation}
It seems important to emphasize that this is only for the evaluation of the above term, the coefficients do not vanish when the  contributions from 
the   last two lines in \eq{some-no-log-form}  are determined  below.

We take divergence and curl of this equation.
Then we replace the  Laplacian $\Delta_s$ in the resulting formula by the corresponding eigenvalue  (be aware  that it also appears in the coefficients $a_{m_0,k}$ and $b_{m_0,k}$).
 Assuming that $\ul {\Xi}^{(m_0+2)}_{\widehat\ell}\ne 0$ and 
 $\ol {\Xi}^{(m_0+2)}_{\widehat\ell}\ne 0$, respectively,  the no-logs condition reads
\begin{align}
0
=&
\sum_{k=\widehat \ell -5}^{\widehat \ell -2}  \frac{(k+4)!(m_0-k-2)!}{(k+1)!(m_0-k-6)!} 
 \prod_{\ell_1=1}^{k+1}\Big( \ell_1(\ell_1+1)-\widehat\ell(\widehat\ell+1)\Big)  \prod_{\ell_2=k+6}^{m_0-1}\Big(\ell_2 (\ell_2+1)-\widehat\ell(\widehat\ell+1)\Big)
\nonumber
\\
&
\qquad \times \Big(\widehat\ell(\widehat\ell+1)a_{m_0,m_0-2-k}^{\widehat\ell} 
\pm  (k+5)(k+6) b_{m_0,m_0-2-k}^{\widehat\ell}\Big)
\,,
\label{simplified_no_logs}
\end{align}
where  we used that (cf.\  Corollary~\ref{cor_recursive_data})
%
%\begin{align*}
%v_{A}^{(m_0+2)}=\frac{1}{2}(\Delta_s+1)\mcD_{A}\ul {\Xi}^{(m_0+2)}_{\widehat\ell}
%+\frac{1}{2} \epsilon_{A}{}^B (\Delta_s+1)\mcD_{B}\ol{ \Xi}^{(m_0+2)}_{\widehat\ell}
%\end{align*}
%
\begin{align*}
\mcD^AW^{-(m_0,0)}_A =&  \frac{m_0}{4}\mcD^Av^{(m_0+2)}_A
= \frac{m_0}{8}\Big(\widehat\ell(\widehat\ell+1)-2\Big)\widehat\ell(\widehat\ell+1)\ul {\Xi}^{(m_0+2)}_{\widehat\ell}
\,,
\\
\mcD^AW^{+(m_0,0)}_A=& -\frac{1}{4(m_0-1)}\Delta_s\mcD^A v^{(m_0+2)}_{ A}
= \frac{1}{8(m_0-1)}\Big(\widehat\ell(\widehat\ell+1)-2\Big)\widehat\ell^2(\widehat\ell+1)^2\ul {\Xi}^{(m_0+2)}_{\widehat\ell}
\,,
\\
\epsilon^{AB}\mcD_{[A}W^{-(m_0,0)}_{B]} =&  \frac{m_0}{4}\epsilon^{AB}\mcD_{[A}v^{(m_0+2)}_{B]}
=  \frac{m_0}{8}\Big(\widehat\ell(\widehat\ell+1)-2\Big)\widehat\ell(\widehat\ell+1)\ol{ \Xi}^{(m_0+2)}_{\widehat\ell}
\,,
\\
\epsilon^{AB}\mcD_{[A}W^{+(m_0,0)}_{B]}=&\frac{1}{4(m_0-1)}\Delta_s\epsilon^{AB}\mcD_{[A} v^{(m_0+2)}_{ B]}
=-\frac{1}{8(m_0-1)}\Big(\widehat\ell(\widehat\ell+1)-2\Big)\widehat\ell^2(\widehat\ell+1)^2\ol{ \Xi}^{(m_0+2)}_{\widehat\ell}
\,.
\end{align*}
The ``$+$''-sign appears for  the divergence, the ``$-$''-sign for  the curl.
For $4\leq n\leq m_0-1$ the coefficients in \eq{simplified_no_logs} are given by
\begin{align*}
a_{m_0,n}^{\widehat\ell}
=&
-M\frac{\Big((m_0-n+3) (m_0-n+4)-\widehat\ell(\widehat\ell+1)\Big)\Big((m_0-n+2) (m_0-n+3)-\widehat\ell(\widehat\ell+1)  \Big)}{4n(n-2)(n-3)(m_0-n+1)(m_0-n+2) }
\\
&\qquad \times\Big((m_0-n-1)  (n^2 +7n-3) 
+\frac{n^2+4n+3}{m_0-n}(\widehat\ell(\widehat\ell+1)-2)\Big)
\\
&
-M (m_0-n-1)\frac{\Big((m_0-n+3) (m_0-n+4)-\widehat\ell(\widehat\ell+1)\Big)}{12(n-3)(m_0-n+2) } \Big(5n -4m_0+17- \frac{9(n+1)}{n(n-1)} \Big)
\\
&
+M\frac{\Big((m_0-n+3) (m_0-n+4)-\widehat\ell(\widehat\ell+1)\Big)}{12(n-3)(m_0-n+1)(m_0-n+2) }
\Big( 9n- 8m_0+5+    \frac{9(n+1)}{n(n-1)  }\Big)(-\widehat\ell(\widehat\ell+1)+2)
\\
&
+\frac{M}{6} (m_0-n-1) \Big(m_0-n-1 +    \frac{9}{2(n-1)(n-2)}\Big)
\\
&
+M\frac{\widehat\ell(\widehat\ell+1)-2}{12(m_0-n+2)}\Big( 5m_0-5n+2-     \frac{9}{(n-1)(n-2)  }\Big)
\,,
\\
b_{m_0,n}^{\widehat\ell}
=&
-3M \frac{\Big((m_0-n+3) (m_0-n+4)-\widehat\ell(\widehat\ell+1)\Big) \Big((m_0-n+2) (m_0-n+3)-\widehat\ell(\widehat\ell+1)\Big)}{n(n-1)(n-2)(n-3)(m_0-n+3)(m_0-n+4)} 
\\
&
-15M \frac{(m_0-n+3) (m_0-n+4)-\widehat\ell(\widehat\ell+1)}{2n(n-2)(n-3)(m_0-n+4)}  
- \frac{9M}{2(n-1)(n-2)}
\,.
\end{align*}
%
%The remaining relevant coefficients are
%\tim{check again}
%%
%\begin{align*}
%a_{m_0,3}^{\widehat\ell}
%=&
%\frac{M}{12}(23m_0^2-39m_0-212)  -M\Big(   \frac{35m_0-146}{12(m_0-2)}-\frac{2(m_0+1)}{m_0-3}\Big) (\widehat\ell(\widehat\ell+1) -2) 
%\\
%&
%-M\frac{2(\widehat\ell(\widehat\ell+1)-2)(\widehat\ell(\widehat\ell+1)-2)}{(m_0-2)(m_0-3)} 
%\,,
%\\
%a_{m_0,2}^{\widehat\ell}
%=&
%-3M\Big(m_0-3
%+\frac{3(\widehat\ell(\widehat\ell+1) - 2)}{2(m_0-2)}\Big)
%\,,
%\\
%a_{m_0,1}^{\widehat\ell}
%=& 0
%\,,
%\\
%b_{m_0,3}^{\widehat\ell}
%=&
%\frac{M}{2}  \Big( m_0+4- \frac{\widehat\ell(\widehat\ell+1)}{m_0}\Big)
%\,,
%\\
%b_{m_0,2}^{\widehat\ell}
%=& -3M\Big(2  +   \frac{m_0+1}{m_0}\Big) 
%\,,
%\\
%b_{m_0,1}^{\widehat\ell}
%=& M\frac{3(m_0+2)}{m_0}  
%\,`.
%\end{align*}
%
One checks that \eq{simplified_no_logs} is equivalent to the following equation,
\begin{align*}
0
=&
 (\widehat\ell-2)(\widehat\ell-3)(\widehat\ell+1) (2\widehat\ell+3)(m_0-\widehat\ell+3)(m_0-\widehat\ell+2)(m_0-\widehat\ell+1)
\Big(a_{m_0,m_0-\widehat\ell+3}^{\widehat\ell} 
\pm  b_{m_0,m_0-\widehat\ell+3}^{\widehat\ell}\Big)
\\
&
 -3(\widehat\ell-1)(\widehat\ell-2)(2\widehat\ell+3)(m_0-\widehat\ell+2)(m_0-\widehat\ell+1)(m_0-\widehat\ell-1)
  \Big(\widehat\ell a_{m_0,m_0-\widehat\ell+2}^{\widehat\ell} 
\pm (\widehat\ell+2) b_{m_0,m_0-\widehat\ell+2}^{\widehat\ell}\Big)
\\
&
+3(\widehat\ell-1)(2\widehat\ell -1)(m_0-\widehat\ell+1)(m_0-\widehat\ell-1)(m_0-\widehat\ell-2)
\Big(\widehat\ell(\widehat\ell+1)a_{m_0,m_0-\widehat\ell+1}^{\widehat\ell} 
\pm  (\widehat\ell+2)(\widehat\ell+3) b_{m_0,m_0-\widehat\ell+1}^{\widehat\ell}\Big)
\\
&
 -\widehat\ell(2\widehat\ell -1) (m_0-\widehat\ell-1)(m_0-\widehat\ell-2)(m_0-\widehat\ell-3)
\Big(\widehat\ell(\widehat\ell+1)a_{m_0,m_0-\widehat\ell}^{\widehat\ell} 
\pm  (\widehat\ell+3)(\widehat\ell+4) b_{m_0,m_0-\widehat\ell}^{\widehat\ell}\Big)
\,.
\end{align*}
Taking into account that the relevant range of $\widehat \ell$ is $2\leq \widehat \ell\leq m_0-4$
we observe that \eq{rest_ab} is actually not needed as the corresponding coefficients vanish anyway.
This equation can be determined explicitly. A \texttt{mathematica} computation shows that there is no contribution by the $b_{m_0,n}^{\widehat\ell}$-terms,%
\footnote{
Because of this the no-logs conditions for divergence and curl take an identical  form. In Section~\ref{sec_constat_mass_ex} this was not the case.
The reason for that is that for the derivation in this section we have used \eq{assumption_data2B}, while we have not used in Section~\ref{sec_constat_mass_ex} that $\ol\Xi^{(5)}$ and   $\ul\Xi^{(5)}$ are $\ell=2$-spherical harmonics.
}
and that  
the right-hand side is given by the surprisingly simple expression
%
%\begin{align*}
%\mathcal{P}_{m_0}^{\widehat \ell} =&-\frac{M}{4}\Big[
%5\Big(6 - \widehat \ell- \widehat \ell^2\Big)m_0^3 + \Big(132 -24 \widehat \ell - 55\widehat \ell^2 + 3\widehat \ell^3 + 4\widehat \ell^4\Big)m_0^2
%\\
%&+\Big(186 - 61 \widehat \ell - 193 \widehat \ell^2 + 77 \widehat \ell^3 + 159 \widehat \ell^4+ 52 \widehat \ell^5\Big)m_0
%+ 84 - 54 \widehat \ell - 193 \widehat \ell^2 + 87 \widehat \ell^3 + 257 \widehat \ell^4 + 147 \widehat \ell^5 + 32 \widehat \ell^6
%\Big]
%\,.
%\end{align*}
%
\begin{align}
\mathcal{P}_{m_0}^{\widehat \ell} =&\frac{M}{2}\Big[
5\Big(\widehat \ell^2 +  \widehat \ell- 6\Big)m_0^3 - \Big(132 -24 \widehat \ell - 55\widehat \ell^2 + 3\widehat \ell^3 + 4\widehat \ell^4\Big)m_0^2
\nonumber
\\
&-\Big(186 - 61 \widehat \ell - 193 \widehat \ell^2 + 77 \widehat \ell^3 + 159 \widehat \ell^4+ 52 \widehat \ell^5\Big)m_0
- 84 + 54 \widehat \ell + 193 \widehat \ell^2 - 87 \widehat \ell^3 - 257 \widehat \ell^4 - 147 \widehat \ell^5 - 32 \widehat \ell^6
\Big]
\,,
\label{main_polynomial}
\end{align}
valid for $ 1\leq \widehat \ell \leq m_0-4$.

For  $m_0-3\leq \widehat \ell \leq m_0-1$
%we only need to compute the contributions from the   last two lines in \eq{some-no-log-form} as the contribution from the remaining terms is precisely the above polynomial.
we find with \eq{specific_data}  that divergence and curl of the no-logs condition \eq{some-no-log-form} become
\begin{align}
0
=&
  360\frac{(m_0-4)!}{(m_0-7)!} 
\Big((m_0-1)m_0-\widehat\ell(\widehat\ell+1)\Big)
\Big((m_0-2)(m_0-1)-\widehat\ell(\widehat\ell+1)\Big)
\nonumber
\\
&
\quad\times
\Big( \widehat\ell(\widehat\ell+1) a^{\widehat\ell}_{m_0,6}\pm  (m_0-3)(m_0-2) b^{\widehat\ell}_{m_0,6}\Big)
\nonumber
\\
&
+  120\frac{(m_0-3)!}{(m_0-6)!} 
  \prod_{\ell_1=\mathrm{max}(1,m_0-6)}^{m_0-6}\Big( \ell_1(\ell_1+1)-\widehat\ell(\widehat\ell+1)\Big)\Big((m_0-1)m_0-\widehat\ell(\widehat\ell+1)\Big)
\nonumber
\\
&
\quad\times
\Big( \widehat\ell(\widehat\ell+1) a^{\widehat\ell}_{m_0,5}\pm  (m_0-2)(m_0-1) b^{\widehat\ell}_{m_0,5}\Big)
\nonumber
\\
&
+ 24 \frac{(m_0-2)!}{(m_0-5)!} 
  \prod_{\ell_1=\mathrm{max}(1,m_0-6)}^{m_0-5}\Big( \ell_1(\ell_1+1)-\widehat\ell(\widehat\ell+1)\Big)
\Big( \widehat\ell(\widehat\ell+1) a^{\widehat\ell}_{m_0,4}\pm  (m_0-1)m_0 b^{\widehat\ell}_{m_0,4}\Big)
\nonumber
\\
&
- 3(m_0-2)(m_0-3)\prod_{\ell_1=\mathrm{max}(1,m_0-6)}^{m_0-4}\Big( \ell_1(\ell_1+1)-\widehat\ell(\widehat\ell+1)\Big)  \Big(  \widehat\ell(\widehat\ell+1) a^{\widehat\ell}_{m_0,3}\pm  m_0(m_0-1) b^{\widehat\ell}_{m_0,3}\Big)
\nonumber
\\
&
+ \frac{1}{2}(m_0-2)\prod_{\ell_1=\mathrm{max}(1,m_0-6)}^{m_0-3}\Big( \ell_1(\ell_1+1)-\widehat\ell(\widehat\ell+1)\Big)  \Big(  \widehat\ell(\widehat\ell+1) a^{\widehat\ell}_{m_0,2}\pm  m_0(m_0-1) b^{\widehat\ell}_{m_0,2}\Big)
\nonumber
\\
&
- \frac{1}{8}\prod_{\ell_1=\mathrm{max}(1,m_0-6)}^{m_0-2}\Big( \ell_1(\ell_1+1)-\widehat\ell(\widehat\ell+1)\Big)  \Big(  \widehat\ell(\widehat\ell+1) a^{\widehat\ell}_{m_0,1}\pm m_0(m_0-1) b^{\widehat\ell}_{m_0,1}\Big)
\,.
\label{no-logs_sepecial_cases}
\end{align}
To ensure that  $k$ is in is in the right range as required by \eq{some-no-log-form} we  set (cf.\ \eq{rest_ab}),
\begin{equation}
a_{m_0,n}=b_{m_0,n}=0 \quad \text{for} \quad  n\geq m_0
\,.
\label{rest_ab2}
\end{equation}
We have already considered the case $m_0=3$ in Section~\ref{sec_constat_mass_ex}.
To avoid a tedious case distinction it is convenient to check first  that the  above condition is violated
for  $m_0=4,5,6,7$ and $m_0-3\leq \widehat \ell \leq m_0-1$, which is just a matter of computation. 
We may then assume $m_0\geq 8$, for which  we obtain from \eq{no-logs_sepecial_cases}
\begin{align}
0
=&
  -20(m_0-5) (m_0-6) (m_0-2)(2m_0-3)
\Big( a^{m_0-3}_{m_0,6}\pm  b^{m_0-3}_{m_0,6}\Big)
\nonumber
\\
&
+  20(m_0-4) (m_0-5) (2m_0-3)
\Big( (m_0-3)a^{m_0-3}_{m_0,5}\pm  (m_0-1) b^{m_0-3}_{m_0,5}\Big)
\nonumber
\\
&
-4(m_0-4) (2m_0-7)
\Big((m_0-2)(m_0-3)a^{m_0-3}_{m_0,4}\pm  (m_0-1)m_0 b^{m_0-3}_{m_0,4}\Big)
\nonumber
\\
&
-(m_0-3)(2m_0-7) \Big( (m_0-2)(m_0-3) a^{m_0-3}_{m_0,3}\pm  m_0(m_0-1) b^{m_0-3}_{m_0,3}\Big)
\,,
\label{pre-poly1}
\\
0
=&
   -20(m_0-1)(m_0-4)(m_0-5)
\Big(  a^{ m_0-2}_{m_0,5}\pm b^{ m_0-2}_{m_0,5}\Big)
\nonumber
\\
&
+12 (m_0-3)(m_0-4)
\Big( (m_0-2) a^{ m_0-2}_{m_0,4}\pm  m_0 b^{m_0-2}_{m_0,4}\Big)
\nonumber
\\
&
+3 (m_0-3)(2m_0-5)\Big(   (m_0-2) a^{\widehat\ell}_{m_0,3}\pm  m_0b^{\widehat\ell}_{m_0,3}\Big)
\nonumber
\\
&
+(m_0-2)(2m_0-5) \Big(  (m_0-2)a^{\widehat\ell}_{m_0,2}\pm  m_0b^{\widehat\ell}_{m_0,2}\Big)
\,,
\\
0
=&
8 (m_0-3)(m_0-4)\Big(  a^{ m_0-1}_{m_0,4}\pm  b^{ m_0-1}_{m_0,4}\Big)
+ 6 (m_0-2)(m_0-3)\Big(  a^{ m_0-1}_{m_0,3}\pm  b^{ m_0-1}_{m_0,3}\Big)
\nonumber
\\
&
+2(2m_0-3) (m_0-2) \Big( a^{ m_0-1}_{m_0,2}\pm  b^{ m_0-1}_{m_0,2}\Big)
\nonumber
\\
&
+ (2m_0-3) (m_0-1)  \Big( a^{ m_0-1}_{m_0,1}\pm b^{ m_0-1}_{m_0,1}\Big)
\,,
\label{pre-poly3}
\end{align}
where, in addition to the above expressions for $a^{\widehat \ell}_{m_0,n}$ and $ b^{\widehat \ell}_{m_0,n}$ with $n\geq 4$, we obtain from \eq{ab3}-\eq{ab0},
\begin{align*}
a^{\widehat \ell}_{m_0,3}= &
%\frac{M}{3}(7m_0^2-6m_0-88)+ \frac{M(5m_0^2-86m_0+181)}{3(m_0-2)(m_0-3) }(-\widehat\ell(\widehat\ell+1)+2)
%\\
%&
%-\frac{8M}{3(m_0-2)(m_0-3)}(-\widehat\ell(\widehat\ell+1)+2)^2
%\\
%=&
\frac{M}{6}(23m_0^2-39m_0-212)
  + \frac{M(11m_0^2-227m_0+486)}{6(m_0-2)(m_0-3) }(-\widehat\ell(\widehat\ell+1)+2)
\\
&
-\frac{4M}{(m_0-2)(m_0-3)}(-\widehat\ell(\widehat\ell+1)+2)^2
\,,
\\
b^{\widehat \ell}_{m_0,3}=&  M  (m_0+4)- \frac{M}{m_0}\widehat\ell(\widehat\ell+1)
\,,
\\
a^{\widehat \ell}_{m_0,2}=& -\frac{15}{2}M(m_0-3) 
+M\frac{15}{2(m_0-2)}(-\widehat\ell(\widehat\ell+1)+2)
\,,
\\
b^{\widehat \ell}_{m_0,2}=& -M\frac{3(3m_0+1)}{m_0}
\,,
\\
a^{\widehat \ell}_{m_0,1}=&0\,, 
\\ 
 b^{\widehat \ell}_{m_0,1}=&M\frac{6 (m_0+2)}{m_0} 
\,.
\end{align*}
The right-hand sides of \eq{pre-poly1}-\eq{pre-poly3} yield the following polynomials, where, again, the $b$-terms drop out so that we get the same polynomials for divergence and curl,
\begin{align}
\mathcal{P}_{m_0}^{m_0-1} =& \frac{M}{3}  \Big(22 m_0^6-275 m_0^5+1345 m_0^4-3358 m_0^3+4777 m_0^2-3657 m_0+1146\Big)
\,,
\label{polynomial1}
\\
\mathcal{P}_{m_0}^{m_0-2} =&\frac{M}{3}  \Big(22 m_0^5-147 m_0^4+334 m_0^3-354 m_0^2+343 m_0-90\Big)
\,,
\label{polynomial2}
\\
\mathcal{P}_{m_0}^{m_0-3} =&-\frac{2}{3}M  \Big(11 m_0^4-26 m_0^3+4 m_0^2-m_0+48\Big)
\,.
\label{polynomial3}
\end{align}

\subsubsection{Analysis of the no-logs condition}

We want to show that the no-logs condition \eq{some-no-log-form} is violated  for data of the form \eq{specific_data} with a non-trivial $\Xi_{AB}^{(m_0+2)}$-term, i.e.\ that such data inevitably produce 
logarithmic terms.
This will be the case if and only if the polynomials $\mathcal{P}_{m_0}^{\widehat \ell}$ given by \eq{main_polynomial}, \eq{polynomial1}-\eq{polynomial3} do not have integer roots $\widehat\ell$
in the interval $[2, m_0-1]$ for any integer $m_0\geq 4$ (the case $m_0=3$ was treated in Section~\ref{sec_constat_mass_ex}).

We start with   \eq{polynomial1}-\eq{polynomial3}.
For $m_0\geq 17$  we have
\begin{align}
22 m_0^6 >& 275 m_0^5+1345 m_0^4+ 3358 m_0^3+4777 m_0^2+ 3657 m_0+ 1146
\,,
\\
22 m_0^5>& 147 m_0^4+334 m_0^3+354 m_0^2+343 m_0+90
\,,
\\
11 m_0^4>& 26 m_0^3+4 m_0^2+m_0+48
\,,
\end{align}
so that the polynomials cannot have any roots.
A straightforward computation shows that they do not have integer roots in the interval $[4,16]$.

For the analysis of \eq{main_polynomial}  it is convenient to treat $\widehat \ell$ as a parameter and read $\mathcal{P}_{m_0}^{\widehat \ell}$
as a polynomial in $m_0$.
First of all for $\widehat\ell=2$ we have
\begin{align*}
\mathcal{P}_{m_0}^{2} =&6M\Big(4m_0^2-343m_0-897\Big)
\,,
\end{align*}
which has no integer roots.
It remains to consider the cases where $\widehat \ell \geq 3$.
One easily checks that  $\mathcal{P}_{m_0}^{\widehat \ell}$ is negative at $m_0=0$, goes to plus infinity as $m_0\rightarrow \infty$
and has one stationary point for $m_0 >0$. It follows that $\mathcal{P}_{m_0}^{\widehat \ell} $ has exactly one root  for $m_0>0$. We needs to ensure
that this cannot be an integer.
%
%We compute the derivative
%%
%\begin{align*}
%\frac{\partial}{\partial m_0}\mathcal{P}_{m_0}^{\widehat \ell} =&-\frac{M}{4}\Big[
%15\Big(6 - \widehat \ell- \widehat \ell^2\Big)m_0^2 + 2\Big(132 -24 \widehat \ell - 55\widehat \ell^2 + 3\widehat \ell^3 + 4\widehat \ell^4\Big)m_0
%\\
%&+186 - 61 \widehat \ell - 193 \widehat \ell^2 + 77 \widehat \ell^3 + 159 \widehat \ell^4+ 52 \widehat \ell^5
%\Big]
%\,,
%\end{align*}
%%
%and  the derivative in $m_0$ vanishes at
%%
%\begin{equation*}
%m_0^{\pm} = \frac{132 - 24 \widehat \ell - 55\widehat \ell^2 + 3\widehat \ell^3 + 4\widehat \ell^4 \pm \sqrt{684+1944 \widehat \ell+ 5301 \widehat \ell^2 - 7308 %\widehat \ell^3 - 12113 \widehat \ell^4 - 1662 \widehat \ell^5 + 2734 \widehat \ell^6 + 804 \widehat \ell^7 + 16 \widehat \ell^8}}{15(\widehat \ell^2 + \widehat %\ell-6)}
%\end{equation*}
%%
%We have
%%
%\begin{equation*}
%m_0^-<0
%\,,
%\quad
%m_0^+ >\frac{\widehat\ell^2}{6}
%\,.
%\end{equation*}
As a polynomial of degree 3 the root can be computed explicitly,
\begin{align}
\overset{(0)}{m_0}(\widehat\ell)
%=&
%\frac{4 \widehat\ell^4+3 \widehat\ell^3-55 \widehat\ell^2-24 \widehat\ell+132}{15 (\widehat\ell^2+\widehat\ell-6)}
%\\
%&
%+\frac{1}{15 (\widehat\ell^2+\widehat\ell-6)} \Big\{\Big(a_{\widehat\ell} 
%-i\sqrt{b_{\widehat\ell} ^3 - a_{\widehat\ell} ^2}\Big)^{1/3}
%+ b_{\widehat\ell} 
%\Big(a_{\widehat\ell} 
%-i\sqrt{b_{\widehat\ell} ^3 - a_{\widehat\ell} ^2}\Big)^{-1/3}
%\Big\}
%\\
=&
\frac{4 \widehat\ell^4+3 \widehat\ell^3-55 \widehat\ell^2-24 \widehat\ell+132}{15 (\widehat\ell^2+\widehat\ell-6)}
\nonumber
\\
&
+\frac{1}{15 (\widehat\ell^2+\widehat\ell-6)} \Big\{\Big(a_{\widehat\ell} 
+i\sqrt{b_{\widehat\ell} ^3 - a_{\widehat\ell} ^2}\Big)^{1/3}
+ 
\Big(a_{\widehat\ell} 
-i\sqrt{b_{\widehat\ell} ^3 - a_{\widehat\ell} ^2}\Big)^{1/3}
\Big\}
\\
=&
\frac{4 \widehat\ell^4+3 \widehat\ell^3-55 \widehat\ell^2-24 \widehat\ell+132}{15 (\widehat\ell^2+\widehat\ell-6)}
+\frac{2\sqrt{b_{\widehat\ell}}}{15 (\widehat\ell^2+\widehat\ell-6)} \cos\Big(\frac{1}{3}\sqrt{b_{\widehat\ell} ^3/a_{\widehat\ell}^2 -1}\Big)
\\
%=& \frac{4}{5}\widehat \ell^2 + \frac{64}{5} \widehat \ell-\frac{709}{4}+ O(\widehat \ell^{-1})
=&\frac{16\widehat \ell^2 +  256\widehat \ell -3545 }{20}+ q(\widehat\ell)\widehat \ell^{-1}
%+ O(\widehat \ell^{-1})
\,,
\label{root}
\end{align}
with 
\begin{align*}
a_{\widehat\ell} :=&64 \widehat\ell^{12}+4824 \widehat\ell^{11}+30768 \widehat\ell^{10}+9180 \widehat\ell^9-284037 \widehat\ell^8-406854 \widehat\ell^7
\\
&
+585521 \widehat\ell^6+1228797 \widehat\ell^5+291384 \widehat\ell^4-293463 \widehat\ell^3-191484 \widehat\ell^2-8748 \widehat\ell+ 6048
\,,
\\
b_{\widehat\ell} :=&16 \widehat\ell^8+804 \widehat\ell^7+2734 \widehat\ell^6-1662 \widehat\ell^5-12113 \widehat\ell^4-7308 \widehat\ell^3+5301 \widehat\ell^2+1944 \widehat\ell+684
\,.
\end{align*}
We want to derive an estimate for $q(\widehat\ell)$.
Note that, $a_{\widehat\ell}$, $b_{\widehat\ell} $ and $ b_{\widehat\ell} ^3/a_{\widehat\ell} ^2-1$ are positive in the range of interest, so  $\overset{(0)}{m_0}(\widehat \ell)$   is a real function.
We set $x:=1/\widehat \ell$, then
\begin{align*}
q(\widehat\ell(x)) 
%\equiv&\overset{(0)}{m_0}(\widehat\ell)- \frac{16\widehat \ell^2 +  256\widehat \ell -3545 }{20}
%\\
%=&
%-\frac{32\widehat\ell^4 + 804 \widehat\ell^3 - 9935 \widehat\ell^2- 15147 \widehat\ell + 63282- 8\sqrt{b_{\widehat\ell}}\cos\Big(\frac{1}{3}\sqrt{b_{\widehat\ell} ^3/a_{\widehat\ell}^2 -1}\Big)}{60 (\widehat\ell^2+\widehat\ell-6)}
%\\
=&
-\frac{32 + 804x - 9935x^2- 15147 x^3 + 63282x^4- 8\sqrt{b_{x}}\cos\Big(\frac{1}{3}\sqrt{b_{x} ^3/a_{x}^2 -1}\Big)}{60x^3 (1+x-6x^2)}
\,,
\end{align*}
with
\begin{align*}
a_{x} :=&64+4824 x+30768x^2+9180 x^3-284037x^4-406854 x^5
\\
&
+585521 x^6+1228797x^7+291384x^8-293463x^9-191484x^{10}-8748 x^{11}+ 6048x^{12}
\,,
\\
b_{x} :=&16+804x+2734x^2-1662x^3-12113x^4-7308 x^5+5301x^6+1944 x^7+684x^8
\,.
\end{align*}
A Taylor expansion yields for $x< 10^{-3}$
\begin{align*}
\sqrt{b_{x}} =&  4+ \frac{201}{2}x -\frac{29465}{32}x^2 +q^{(1)}(x)x^3
\,, \quad |q^{(1)}| < 2\times 10^5
\,,
\\
\frac{1}{{60 (1+x-6x^2)}} =& \frac{1}{60}-\frac{1}{60} x +\frac{7}{60} x^2 + q^{(2)}(x)x^3 
\,, \quad |q^{(2)}| < 2
\,.
\end{align*}
We further find for $x<10^{-6}$  (set $c_x:=b_{x}^3/a_{x} ^2 -1$)
\begin{equation*}
%c_x =& \frac{92475 }{64}x^2 + q(x)x^3\,, \quad |q| \leq 10^9
%\,,
%\\
|\partial_x^3 c_x|< 6.8\times 10^{5}
\,,
\quad
\Big|\frac{(\partial_xc_x)^3}{c_x}\Big| < 20
\,,
\quad
\Big|\frac{(\partial_xc_x)^3}{c_x^2}-2\frac{\partial_xc_x\partial^2_xc_x}{c_x}\Big|< 9.1\times 10^{5}
\,,
\end{equation*}
which yields for $x<10^{-6}$
\begin{align*}
\Big|\partial^3_x\cos\Big(\frac{\sqrt{c_x}}{3}\Big)\Big|=&\frac{1}{24}\Big|\Big[\cos\Big(\frac{\sqrt{c_x}}{3}\Big)-\frac{3}{\sqrt{c_x}}\sin\Big(\frac{\sqrt{c_x}}{3}\Big)\Big]\Big(\frac{(\partial_xc_x)^3}{c_x^2}-2\frac{\partial_xc_x\partial^2_xc_x}{c_x}\Big)
\\
&
+ \frac{1}{9\sqrt{c_x}}\sin\Big(\frac{\sqrt{c_x}}{3}\Big)\Big(\frac{(\partial_xc_x)^3}{c_x}-36\partial_x^3c_x\Big)\Big|
\\
<& 10^{6}
\,.
\end{align*}
From this we obtain
\begin{equation*}
\cos\Big(\frac{1}{3}\sqrt{b_{x}^3/a_{x} ^2 -1}\Big) = 1-\frac{10275}{128}x^2 + q^{(3)}(x)x^3\,, \quad |q^{(3)}|< 1.7\times 10^{5}
\,,
\end{equation*}
and finally, for $x< 10^{-6}$
\begin{align*}
|q(\widehat\ell(x)) |
%=&
%x^{-2}\Big(32 + 804x - 9935x^2- 15147 x^3 + 63282x^4- 8\sqrt{b_{x}}\cos\Big(\frac{1}{3}\sqrt{b_{x} ^3/a_{x}^2 -1}\Big)\Big)
%\\
%&
%\Big(\frac{1}{60}-\frac{1}{60}x + \frac{7}{60}x^2 + qx^3\Big)
%\\
< 1.2\times 10^{5}
\,,
\end{align*}
%$|q|<2$
%
It follows that, for  $\widehat\ell > 10^6$
\begin{equation*}
\Big| \overset{(0)}{m_0}(\widehat\ell)-\frac{16\widehat \ell^2 +  256\widehat \ell -3545 }{20}\Big|
<1.2\times 10^{5}\times  \widehat\ell^{-1}
\,.
\end{equation*}
The numerator is always an odd number, so
the fraction differs from an integer by at least $1/20$, i.e.\ the polynomial cannot have integer roots if the right-hand side is smaller than $1/20$, i.e.\  for $\widehat \ell > 2.4 \times 10 ^{6}$.
It remains to be checked whether there are integer roots for  $3\leq \widehat \ell \leq  2.4 \times 10 ^{6}$.
Note that 
given $\widehat \ell $ there is only one root and this  is given by \eq{root}.
A \texttt{mathematica} computation shows that there are no integer roots in this range of $\widehat \ell$.
Here a remark is in order: Due to the appearance of roots there might arise a problem to recognize $\overset{(0)}{m_0}(\widehat\ell)$ as an integer
due to numerical errors. We therefore rounded $\overset{(0)}{m_0}(\widehat\ell)$ to the nearest integers and plugged it in into the polynomial to
check whether it is a root.

By way of summary, at least in the setting of constant (ADM)  mass aspect $M$ and vanishing dual (ADM) mass aspect $N$ we conclude that while $2\leq \ell\leq m_0-1$-spherical harmonics in $\Xi^{m_0+2}$
do not produce logarithmic terms at order $m_0-2$, they do produce logarithmic terms in the next order.
Taking Lemma~\ref{lem_inv} into account we end up with the following result:

\begin{theorem}
Consider a solution  $(e^{\mu}{}_i,  \widehat \Gamma_i{}^j{}_k, \widehat L_{ij}, W_{ijkl})$ of  the GCFE  with constant mass aspect $M$ and vanishing dual mass aspect $N$
which is smooth  at $\scri^-\cup I^-\cup I$
in some  weakly  asymptotically Minkowski-like conformal Gauss gauge.
Then the expansion of the radiation field
vanishes  at $I^-$ at any order.
\end{theorem}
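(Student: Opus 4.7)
The plan is to argue by contradiction, invoking the gauge invariance result of Lemma~\ref{lem_inv} to reduce to the most convenient gauge and then extracting an obstruction from the no-logs conditions whose failure forces a non-trivial radiation contribution to produce logarithms.

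First, by Lemma~\ref{lem_inv}, smoothness at $\scri^-\cup I^-\cup I$ in one weakly asymptotically Minkowski-like conformal Gauss gauge is preserved under a transition to any other such gauge. We may therefore assume an asymptotically Minkowski-like conformal Gauss gauge at each order in the sense of Definition~\ref{dfn_asympt_Minik_gauge}, so that all the explicit expansions of Section~\ref{section_aMlcGg} and Section~\ref{sec_const_mass} apply. Assume, for contradiction, that the radiation field does not vanish at $I^-$ at all orders. By Proposition~\ref{prop_smoothness_constraints} we already know $\Xi^{(1)}_{AB}=\Xi^{(3)}_{AB}=\Xi^{(4)}_{AB}=0$, and by \eq{restrictions_on_Xi5} the coefficient $\Xi^{(5)}_{AB}$ can only have $\ell=2$ Hodge content, so let $m_0+2$, with $m_0\geq 3$, be the smallest order at which $\Xi^{(m_0+2)}_{AB}\neq 0$. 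By Corollary~\ref{cor_vanishing_radiation} together with the no-logs analysis culminating in \eq{assumption_data2B}, the Hodge decomposition scalars $\ul{\Xi}^{(m_0+2)}$ and $\ol{\Xi}^{(m_0+2)}$ of this coefficient can only contain spherical harmonics with $2\leq\widehat\ell\leq m_0-1$, and by linearity of the no-logs condition it suffices to consider a single such harmonic $\widehat\ell$.

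Second, I would use the case $m_0=3$ already settled explicitly in Section~\ref{sec_constat_mass_ex} as the base of an induction and carry out the higher-order analysis exactly as in Section~\ref{constM_higher_orders}: assuming vanishing of $\Xi^{(k)}_{AB}$ for $3\leq k\leq m_0+1$, I derive the expansion-coefficient identities of Lemma~\ref{lemma_recursive_data} and Corollary~\ref{cor_recursive_data}, then extract the $\Xi^{(m_0+2)}_{AB}$-contribution to the no-logs condition for $\partial_\tau^{m_0-1}V^-_{AB}|_{\scri^-}$. The recursion for $W^+_A{}^{(m_0+1,n)}$ expressed through $a^{\widehat\ell}_{m_0,n}$ and $b^{\widehat\ell}_{m_0,n}$ telescopes, by the constant-$M$, vanishing-$N$ structure, into the condition \eq{some-no-log-form}. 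Taking divergence and curl, the $b$-terms cancel and the surviving equation becomes the single polynomial identity $\mathcal{P}^{\widehat\ell}_{m_0}=0$: for $1\leq\widehat\ell\leq m_0-4$ this is \eq{main_polynomial}, while for $\widehat\ell\in\{m_0-3,m_0-2,m_0-1\}$ it is \eq{polynomial1}--\eq{polynomial3}.

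The heart of the argument, and the main obstacle, is to verify that none of the polynomials $\mathcal{P}^{\widehat\ell}_{m_0}$ admits an integer pair $(m_0,\widehat\ell)$ in the relevant range $m_0\geq 4$, $2\leq \widehat\ell\leq m_0-1$. For the three boundary polynomials \eq{polynomial1}--\eq{polynomial3}, one shows by a direct size estimate that for $m_0\geq 17$ the leading term dominates so no roots exist, and then inspects the finitely many cases $4\leq m_0\leq 16$ by computation. For \eq{main_polynomial} with $\widehat\ell=2$ the polynomial $6M(4m_0^2-343m_0-897)$ has no integer roots; for $\widehat\ell\geq 3$ one checks that $\mathcal{P}^{\widehat\ell}_{m_0}$, viewed as a cubic in $m_0$, is negative at $m_0=0$, monotone past its unique positive stationary point, and tends to $+\infty$, so it has exactly one positive real root $\overset{(0)}{m_0}(\widehat\ell)$. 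Using Cardano's formula and a Taylor expansion in $x=1/\widehat\ell$ one derives the asymptotic expression \eq{root} together with the error bound $|\overset{(0)}{m_0}(\widehat\ell)-\tfrac{1}{20}(16\widehat\ell^2+256\widehat\ell-3545)|<1.2\times 10^5/\widehat\ell$. Since the approximation is of the form (odd integer)/$20$, it is separated from any integer by at least $1/20$, which rules out integer roots for $\widehat\ell>2.4\times 10^6$; the remaining finite window $3\leq\widehat\ell\leq 2.4\times 10^6$ is handled by a direct numerical check (rounding $\overset{(0)}{m_0}(\widehat\ell)$ to the nearest integer and evaluating $\mathcal{P}^{\widehat\ell}_{m_0}$). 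Hence $\mathcal{P}^{\widehat\ell}_{m_0}\neq 0$ in the whole relevant range, which contradicts the smoothness assumption. Consequently every $\Xi^{(m_0+2)}_{AB}$ must vanish, i.e.\ $\Xi_{AB}=\mathfrak{O}(r^\infty)$ and therefore the radiation field vanishes at $I^-$ at all orders.
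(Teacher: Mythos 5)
Your proposal is correct and follows essentially the same route as the paper: reduction to the asymptotically Minkowski-like gauge at each order via Lemma~\ref{lem_inv}, identification of the first non-trivial coefficient $\Xi^{(m_0+2)}_{AB}$ with harmonic content restricted to $2\leq\widehat\ell\leq m_0-1$ by the lower-order no-logs conditions, extraction of the polynomial obstruction $\mathcal{P}^{\widehat\ell}_{m_0}$ from the no-logs condition at order $m_0+1$, and the integer-root analysis splitting into the asymptotic estimate for large $\widehat\ell$ and a finite computational check. This is the argument of Section~\ref{sec_const_mass}.
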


\begin{remark}
{\rm
If the solution is only $C^{3m_0-1}$  the above computation shows that $\Xi^{(m+2)}_{AB}=0$ for $m \leq m_0$.
}
\end{remark}

\begin{remark}
{\rm
We expect this result to remain true for arbitrary $M$ and $N$.
}
\end{remark}

\section*{Acknowledgments}
The author wishes to thank  Helmut Friedrich and Juan A.\ Valiente Kroon for useful discussions and comments
on the manuscript. The author is also thankful to the Max Planck Institute for Gravitational Physics in Golm, Germany, for hospitality, where part
of the work on this paper has been done.
Financial support by the Austrian Science Fund (FWF) P 28495-N27 is gratefully acknowledged, as well.

\appendix

\section{Asymptotic initial value problem}
\label{app_app}

\subsection{Characteristic constraint equations on $\scri^-$}
\label{app_charact_constraints}
\label{app1}

There are different versions of the CFE using different sets of unknowns.
Here let us focus  on  the \emph{metric conformal field equations (MCFE)} \cite{F3},
where, besides  rescaled Weyl tensor $W_{\mu\nu\sigma\rho}$,  Schouten tensor $L_{\mu\nu}$ and conformal factor $\Theta$, the metric $g_{\mu\nu}$ and a certain scalar $s$ are regarded as unknowns,
\begin{align}
 & \nabla_{\rho} W_{\mu\nu\sigma}{}^{\rho} =0\,,
 \label{conf1}
\\
 & \nabla_{\mu} L_{\nu\sigma} - \nabla_{\nu}L_{\mu\sigma} = \nabla_{\rho}\Theta \, W_{\nu\mu\sigma}{}^{\rho}\,,
 \label{conf2}
\\
 & \nabla_{\mu}\nabla_{\nu}\Theta = -\Theta L_{\mu\nu} + s g_{\mu\nu}\,,
 \label{conf3}
\\
 & \nabla_{\mu} s = -L_{\mu\nu}\nabla^{\nu}\Theta\,,
 \label{conf4}
\\
 & 2\Theta s  -\nabla_{\mu}\Theta\nabla^{\mu}\Theta = \lambda/3 \,,
 \label{conf5}
\\
 & R_{\mu\nu\sigma}{}^{\kappa}[ g] = \Theta W_{\mu\nu\sigma}{}^{\kappa} + 2\left(g_{\sigma[\mu} L_{\nu]}{}^{\kappa}  - \delta_{[\mu}{}^{\kappa}L_{\nu]\sigma} \right)
 \label{conf6}
\,.
\end{align}
In the following we assume  that the cosmological constant vanishes,
$$
\lambda=0
\,.
$$
We will recall the constraint equations induced by the MCFE   in a \emph{generalized wave-map gauge} 
in  \emph{adapted null coordinates} \cite{F4, CCM2}  $(\tau,r,x^{\mathring A})$ on $\scri^-$, cf.\ Section~\ref{sec_constr_gauss_gauge}.
 It seems worth
to emphasize that we do \emph{not} assume the existence of a regular point $i^-$ representing past timelike infinity. We
assume that the null geodesics generating $\scri^-$ emanate from $O=\{ \tau=-1, r=r_0\}$ which could be a point  which represents a (possibly regular) $i^-$, but which also could be a topological 2-sphere.
Spatial infinity (at least its ``intersection'' with $\scri^-$), which also could be a point or a 2-sphere, is located at $i^0=\{ \tau=-1, r=r_1\}$.

In comparison with \cite{ttp1} we present here a slightly modified system, which permits gauges where the scalar %$s$
$
s := \frac{1}{4}\Box_g \Theta + \frac{1}{24} R\Theta
$
vanishes on $\scri^-$, as crucial in view of  a cylinder representation of spatial infinity (in fact the function $s$ is not needed in this scheme).
To do that it is convenient to regard  
\begin{equation}
\Sigma:=\nabla^r\Theta|_{\scri^-}\ne 0\,, \quad  \kappa\,, \quad  \text{and the gauge source functions $W^{\mu}$  \cite{F4}}
\end{equation}
as the ``gauge data'' (rather than $\kappa$, $s|_{\scri^-}$, $W^{\mu}$).
These data are supplemented by  ``non-gauge'' data.
On $\scri^-$, we take
(note that this differs slightly from the data used in \cite{ttp1}, and that further data need to be prescribed on an incoming null hypersurface as described in Appendix~\ref{app_ADM})
\begin{equation}
\Xi_{\mathring A\mathring B} := -2( \Gamma{}^r_{\mathring A\mathring B})_{\mathrm{tf}} =\nu^{\tau} (\partial_{\tau} g_{\mathring A\mathring B})_{\mathrm{tf}}
-2 \nu^{\tau}(\rnabla_{(\mathring A}\nu_{\mathring B)})_{\mathrm{tf}}
\;.
\end{equation}
It is related to the radiation field via \eq{expression_d1A1B}.

The constraint equations on $\scri^-$ form a hierarchical system of ODEs and algebraic equations  (cf.\ \cite{ttp1}, but note that in \cite{ttp1} a regular vertex has been assumed whence some equations take a slightly different form here),
\begin{align}
 \sigma_{\mathring A\mathring B} =&0
\,,
\label{constraint1}
%\\
%  s|_{\scri^-} =& (\partial_r+\kappa)\Sigma
%\,,
%\label{constraint2}
\\
\theta^+ =&  \frac{2 }{\Sigma} (\partial_r+\kappa)\Sigma
\,,
\label{constraint3}
\\
 L_{rr} |_{\scri^-}
=& 
%    -\frac{\partial_r s}{\Sigma}
%=
-\frac{1}{2}\Big(\partial_r + \frac{1}{2}\theta^+ -\kappa\Big)\theta^+
\,,
\label{constraint_Lrr}
\\
  \Big(\partial_r +\frac{1}{2}\theta^+ + \kappa\Big)\nu^{\tau}  =&-  \frac{1}{2} W^{\tau}
\,,
\\
\partial_{\tau}\Theta |_{\scri^-} =& \nu_{\tau} \Sigma
\,,
\\
 \xi_{\mathring A} 
=& 2\rnabla_{\mathring A}\log|\Sigma| 
\,,
\label{constraint_xiA}
\\ 
\Big(\partial_r   +\frac{1}{2}\theta^+ + \kappa \Big) g^{r\mathring A}|_{\scri^-}
 =&
 \frac{1}{2}\Big( \xi^{\mathring A} -  W{}^{\mathring A} +  g^{\mathring B\mathring C}\rGamma^{\mathring A}_{\mathring B\mathring C} \Big)
\,,
\\
 L_{r\mathring A} |_{\scri^-}
 =& - \frac{1}{2}\Big(\rnabla_A  +\frac{1}{2}\xi_{\mathring A}\Big)\theta^+ 
\,,
\label{constraint_LrA}
%\\
%R |_{\scri^-}   =&  3\Big(\partial_r + \frac{1}{2}\theta^+ + \kappa\Big)\theta^-  +  3\Big(\rnabla_{\mathring A}- \frac{1}{2}\xi_{\mathring A}\Big)\xi^{\mathring A} 
%\,,
\\
   g^{\mathring A\mathring B} L_{\mathring A\mathring B} |_{\scri^-}=& \frac{1}{4}\theta^+\theta^-  + \frac{1}{2}\not \hspace{-0.2em}R
\,,
\label{constraint_LABtr}
\\
\Big(\partial_r +\frac{1}{2}\theta^++\kappa  \Big) g^{rr}|_{\scri^-}=& \frac{1}{2} \theta^-  -   W{}^r
\,,
\\
4 L_r{}^r |_{\scri^-}
 =& (\partial_r + \kappa)\theta^-  + \Big(\rnabla_{\mathring A}- \frac{1}{2}\xi_{\mathring A}\Big)\xi^{\mathring A}  - g^{rr}\Big(\partial_r+\frac{1}{2}\theta^+ -\kappa \Big)\theta^+
- \not \hspace{-0.2em}R
\,.
\label{constraint12}
\end{align}
For completeness let us also provide the constraint for the function $s$,
\begin{equation}
s|_{\scri^-} =\frac{1}{2}\theta^+\Sigma\,.
\end{equation}
Here $\not \hspace{-0.2em}R$ denotes the curvature scalar associated to the Riemannian family $r\mapsto \not \hspace{-0.2em}g=g_{\mathring A\mathring B}\mathrm{d}x^{\mathring A}\mathrm{d}^{\mathring B}|_{\scri^-}$.
Moreover, $\sigma_{\mathring A\mathring B}$ denotes the shear, while the divergences $\theta^+$ and $\theta^-$ are defined in \eq{dfn_theta+}-\eq{dfn_theta-}. $\kappa$ and $\xi_{\mathring A}$ may be regarded here as auxiliary quantities.
%The vector field $V$ is determined by the gauge source functions (see \cite{ttp1, paetz_thesis} for the details).

In fact, it is more convenient to  regard the  metric coefficients $g^{r \mu}|_{\scri^-}$  as gauge functions which determine $W^{\mu}$ on $\scri^-$ \cite{ChPaetz}.
Off $\scri^-$ we use a conformal Gauss gauge, whence the gauge source
functions are basically irrelevant for our purposes.

%As compared to the ordinary characteristic Cauchy problem \cite{CCM2}, one does not need to deal with a non-linear Raychaudhuri-like equation on $\scri^-$.
 %The gauge functions $\Sigma$ and $\kappa$  cannot be prescribed completely arbitrarily, if one wants to end up with  a representation of timelike or spacelike infinity.
%\tim{add sth}

In a wave-map gauge %(cf.\ \cite{ttp1})
one usually regards the curvature scalar $R$ as a gauge function. Its restriction to $\scri^-$ is related to $\theta^-$ (which we regard here as gauge function) as follows,
$$
 R|_{\scri^-} = 3\Big(\partial_r + \frac{1}{2}\theta^+ + \kappa\Big) \theta^-+ 3\Big(\rnabla_{\mathring A}-\frac{1}{2}\xi_{\mathring A}\Big)\xi^{\mathring A}  
\,.
$$
%($\theta$ and $\xi_{\mathring A}$ can be expressed in terms of $\Sigma$ and $\kappa$).
One therefore needs to make sure that also the integration function which arises when solving this equation can be regarded as a conformal gauge freedom in order to make sure that $\theta^-$ can, indeed, be treated as a gauge function.
However, this is precisely what we have  accomplished  in Section~\ref{sec_constr_gauss_gauge}.

We observe that the constraint imply the following useful relations
\begin{equation}
%\theta^- =&2 g^{\mathring A\mathring B} \Gamma^r_{\mathring A\mathring B} +\theta^+  g^{rr}
%\;,
%\\
\Big(\partial_r+\frac{1}{2}\theta^+-\kappa\Big) 
\Sigma =0
\;,
\label{tau_s_relation}
\quad
\partial_r\xi_{\mathring A} = \rnabla_{\mathring A}(\theta^+-2\kappa)
\;.
\end{equation}
Note further  that when prescribing $ g^{r\mu}|_{\scri^-}$ and $\theta^-$ instead of $ W^{\mu}|_{\scri^-}$ and $R|_{\scri^-}$ the whole system becomes a system of algebraic equations
(in that case there remains the gauge freedom  to extend $ W^{\mu}|_{\scri^-}$  and $R|_{\scri^-}$, as computed algebraically from the constraints, off $\scri^-$).
%Moreover,
%%
%\begin{align}
%\partial_{\tau} g_{rr}|_{\scri^-} =& 2(\partial_r-\kappa)\nu_{\tau}
%\,,
%\\
% \partial_{\tau} g_{r\mathring A}|_{\scri^-}  =& (\rnabla_{\mathring A}+\xi_{\mathring A})\nu_{\tau} + (\partial_r - \theta^+)\nu_{\mathring A}
%\,,
%\\
% g^{\mathring A\mathring B}\partial_{\tau} g_{\mathring A\mathring B}|_{\scri^-}  =& 2 \rnabla_{\mathring A}\nu^{\mathring A}  - \nu_{\tau}(\theta^+  g^{rr} +%\theta^-) 
%\,.
%\end{align}
%
The components considered so far do \emph{not} involve the ``physical'' data $\Xi_{\mathring A\mathring B}$ and are purely determined by the gauge.

Before we continue let us provide a list of  the Christoffel symbols, or rather their restriction to $\scri$, which is straightforwardly obtained by rewriting the expressions given in
\cite[Appendix~A]{CCM2}, and which employ that the shear tensor vanishes on $\scri$ (the $\Gamma^{\mu}_{\tau\tau}$-components are not needed)
\begin{align}
\Gamma{}^{\tau}_{rr}|_{\scri^-} =& 0 \,=\, \Gamma{}^{\tau}_{r\mathring A}|_{\scri^-} \,=\,\Gamma{}^{\mathring C}_{rr}|_{\scri^-} 
\,,
\label{coordChristoffel1}
\\
\Gamma{}^{\mathring  C}_{r\mathring A} |_{\scri^-}=& \frac{1}{2}\theta^+\delta_A{}^C
\,,
\label{coordChristoffel2}
\\
\Gamma{}^{\tau}_{\mathring A\mathring B}|_{\scri^-} =&-\frac{1}{2}\theta^+\nu^{\tau} g_{\mathring A\mathring B}
\,,
\\
\Gamma{}^{\mathring  C}_{\mathring A\mathring B}|_{\scri^-} =& \rGamma^{ \mathring C}_{\mathring A\mathring B} +\frac{1}{2}\theta^+ \nu^{\tau}\nu^{\mathring C} g_{\mathring A\mathring B}
\,,
\label{coordChristoffel4}
\\
\Gamma{}^r_{rr}|_{\scri^-} =& \kappa
\,,
\label{dfn_kappa}
\\
\Gamma{}^{\tau}_{\tau r} |_{\scri^-}=& \nu^{\tau}\partial_r\nu_{\tau} - \kappa
\,,
\label{dfn_kappa2}
\\
\Gamma{}^r_{r\mathring A}|_{\scri^-} =& -\frac{1}{2}\xi_{\mathring A}
\,,
\label{dfn_xi}
\\
\Gamma{}^{\tau}_{\tau \mathring A} |_{\scri^-}=& \frac{1}{2}\xi_{\mathring A} + \nu^{\tau}\rnabla_{\mathring A}\nu_{\tau} - \frac{1}{2}\theta^+\nu^{\tau}\nu_{\mathring A}
\,,
\\
\Gamma{}^{\mathring C}_{\tau r}|_{\scri^-} =& \frac{1}{2}\nu_{\tau}\xi^{\mathring  C}+ \Big(\partial_r +\frac{1}{2}\theta^+ + \kappa -  \nu^{\tau}\partial_r\nu_{\tau} \Big)  \nu^{\mathring C}
\,,
\\
\Gamma{}^r_{\tau r} |_{\scri^-}=&
-\frac{1}{2}\nu_{\mathring A}\xi^{\mathring A} 
-\frac{1}{2}\nu_{\tau}(\partial_r+2\kappa) g^{rr}
\,,
\\
\Gamma{}^r_{\mathring A\mathring B}|_{\scri^-} =& -\frac{1}{2}\Xi_{\mathring A\mathring B} + \frac{1}{4}(\theta^-  - \theta^+  g^{rr}) g_{\mathring A\mathring B}
\,
\label{coordChristoffel11}
\\
\Gamma{}^{\mathring C}_{\tau \mathring A}|_{\scri^-} =& 
\frac{1}{2} \nu_{\tau} \Xi_{\mathring A}{}^{\mathring C}+ 
\Big(\rnabla_{\mathring A}- \frac{1}{2}\xi_{\mathring A}
+ \frac{\theta^+}{2}\nu^{\tau}\nu_{\mathring A}  -  \nu^{\tau}\rnabla_{\mathring A}\nu_{\tau} \Big)\nu^{\mathring C}
 - \frac{1}{4}(\theta^-  + \theta^+  g^{rr})  \nu_{\tau} \delta_{\mathring A}{}^{\mathring C}
\,,
\\
\Gamma{}^r_{\tau \mathring  A} |_{\scri^-}=&
-  \frac{1}{2}\nu_{\tau}(\rnabla_{\mathring A}-\xi_{\mathring A}) g^{rr}
 -\frac{1}{2}\nu^{\mathring B}\Xi_{\mathring A\mathring B} + \frac{1}{4}(\theta^-  - \theta^+  g^{rr}) \nu_{\mathring A}
\label{coordChristoffel13}
%\,,
%\\
%\Gamma{}^0_{00} |_{\scri^-}=& \nu^0\ol{\partial_0g_{01} }-\frac{1}{2}\nu^0\partial_1\ol g_{00}
%\,,
%\\
%\Gamma{}^C_{00}|_{\scri^-} =&  g^{CD}\ol{\partial_0g_{0D}}-\frac{1}{2}\widehat\nabla^C\ol g_{00} 
%-\nu^C \ol\Gamma{}^0_{00} 
%\,,
%\\
%\Gamma{}^1_{00} |_{\scri^-} =& \frac{1}{2}\nu^0\ol{\partial_0g_{00}} - \nu^0\nu_A\ol\Gamma{}^A_{00}
%-\nu^0\ol g_{00}\ol\Gamma{}^0_{00}
\,.
\end{align}
A  somewhat lengthy calculation, which makes heavily use of these formulas for the Christoffel symbols
and the constraint equations
reveals that
(this computation as the ones below have not been carried out in \cite{ttp1} for a general wave-map gauge),%
\footnote{It seems worth to stress that an analog to the to a large extent gauge-independent field $\Xi_{\mathring A \mathring B}$ can be defined for the ordinary characteristic Cauchy problem as well:
For this one sets on a characteristic initial surface $\Sigma$,  $\Xi_{\mathring A\mathring B}:=-2( \Gamma^r_{\mathring A\mathring B})_{\mathrm{tf}}- g^{rr}\sigma_{\mathring A\mathring B}|_{\Sigma}$.
One then checks that it satisfies the equation, 
\begin{equation*}
\Big( \partial_r - \frac{1}{2}\theta^+ +\kappa\Big) \Xi_{\mathring A\mathring B} %-2(\sigma_{(\mathring A}{}^C\Xi_{\mathring B)\mathring C})_{\mathrm{tf}}
- (\rnabla_{(\mathring A}\xi_{\mathring B)})_{\mathrm{tf}}
+ \frac{1}{2}(\xi_{\mathring A}\xi_{\mathring B})_{\mathrm{tf}}  - \frac{1}{2}\theta^-\sigma_{\mathring A\mathring B}  = -2(L_{\mathring A\mathring B})_{\mathrm{tf}}
\,,
\end{equation*}
cf.\ \eq{gen_eqn_Xi}. Note that in the vacuum case the right-hand side is determined  by the Einstein equations.
}
\begin{align}
( L_{\mathring A\mathring B})_{\mathrm{tf}}|_{\scri^-}
 =&\frac{1}{2}({\partial_{\mu}\Gamma^{\mu}_{\mathring A\mathring B}}  - \partial_{\mathring A}\Gamma^{\mu}_{\mathring B\mu} +\Gamma^{\nu}_{\mathring A\mathring B}\Gamma^{\mu}_{\nu\mu}
-  \Gamma^{\mu}_{\mathring A\nu}\Gamma^{\nu}_{\mathring B\mu})_{\mathrm{tf}}
\\
=&
-\frac{1}{2}\Big(\partial_{r}-\frac{1}{2} \theta^+ +\kappa\Big)\Xi_{\mathring A\mathring B}
+\frac{1}{2}(\rnabla_{(\mathring A}\xi_{\mathring B)})_\mathrm{tf}
-\frac{1}{4}  (\xi_{\mathring A}\xi_{\mathring B})_{\mathrm{tf}}
\label{LAB_constraint}
\,,
\\
 L_{\mathring  A}{}^r|_{\scri^-} =&\nu^{\tau}   L_{\tau \mathring A} +  g^{rr}  L_{r\mathring A} +  g^{r\mathring B}  L_{\mathring A\mathring  B}
\\
=&
 -\nu^{\tau} g_{\mathring A\mathring D}  g^{\mathring B\mathring C} R_{\tau \mathring B\mathring C}{}^{\mathring D} 
+ \nu_{\mathring A}\nu^{\tau}( 2  L_{ r}{}^r -  2g^{rr}  L_{rr}- g^{r\mathring B} L_{r\mathring B} )
+  g^{rr}  L_{r\mathring A} 
\\
=&
\frac{1}{2} \Big(\rnabla^{\mathring B} - \frac{1}{2}\xi^{\mathring B} \Big)\Big(\Xi_{\mathring A\mathring B} + \frac{1}{2}\theta^- g_{\mathring A\mathring B} \Big)
- \frac{1}{4}  g^{rr} \Big( \rnabla_{\mathring A} +\frac{1}{2}\xi_{\mathring A}\Big)   \theta^+
\,.
\label{LAr_constraint}
\end{align}
Let us  compute the independent components of the rescaled Weyl tensor.
First of all, we have
\begin{equation}
W_{r\mathring Ar\mathring B}|_{\scri^-} 
 =-\frac{1}{2\Sigma}\Big[
\partial_r\Big((\partial_{r}-\theta^+ +\kappa)\Xi_{\mathring A\mathring B}-(\rnabla_{(\mathring A}\xi_{\mathring B)})_{\mathrm{tf}}+\frac{1}{2}  (\xi_{\mathring A}\xi_{\mathring B})_{\mathrm{tf}}\Big)
-(\rnabla_{\mathring A}\rnabla_{\mathring B}\theta^+  )_{\mathrm{tf}}
 \Big]
\,.
\label{expression_d1A1B}
\end{equation}
In \cite{ttp1} we have used certain components of $\nabla_{\rho}W_{\mu\nu\sigma}{}^{\rho}=0$ to determine
$W_{\tau rr \mathring A}$ and $ W_{\tau r \mathring A\mathring B}$ on $\scri^-$  by integrating ODEs along the null geodesic generators of $\scri^-$.
However, it is  more convenient to employ approrpriate components of $2\nabla_{[\mu}L_{\nu]\sigma} =\nabla_{\rho}\Theta W_{\nu\mu\sigma}{}^{\rho}$,
 which yields algebraic equations from the outset,
\begin{align}
W_{r\mathring Ar}{}^r|_{\scri^-}
=& \frac{1}{\Sigma}\Big[(\partial_{r}+\kappa) L_{\mathring A}{}^{r}
- (\rnabla_{\mathring A}-\frac{1}{2}\xi_{\mathring A}) L_{r}{}^{r}   
+  \frac{1}{2}(\rnabla_{\mathring A}-\xi_{\mathring A}) g^{rr}   L_{rr} 
\nonumber
\\
&
-\frac{1}{2}\xi^{\mathring B}  L_{\mathring A\mathring B}
  -\frac{1}{2}\Big(\frac{1}{2} \theta^- +  (\partial_r -\frac{1}{2} \theta^+ +2\kappa)  g^{rr} \Big)  L_{r\mathring A}
+\frac{1}{2}\Xi_{\mathring A}{}^{\mathring B}  L_{r\mathring B }   
\Big]
\;,
\label{expression_d1A10}
\\
W_{\mathring A\mathring Br}{}^r|_{\scri^-} =& 
\frac{2}{\Sigma}\Big(( \rnabla_{[\mathring A} -\frac{1}{2}\xi_{[\mathring A}) L_{\mathring B]}{}^{r}
 -  \frac{1}{2}[(\rnabla_{[\mathring A}-\xi_{[\mathring A}) g^{rr}] L_{\mathring B]r}
  -\frac{1}{2}\Xi_{[\mathring A}{}^{\mathring C} L_{\mathring B]\mathring C} \Big)
\;.
\label{expression_d01AB}
\end{align}
While all the previous constraints can be read as algebraic equations, the remaining ones will be ODEs along the null geodesic generators of $\scri^-$.
To obtain them, we evaluate certain components of the equation $\nabla_{\rho}W_{\mu}{}^{\nu}{}_{\sigma}{}^{\rho}=0$ on $\scri^-$ which yields  ODEs for $W_r{}^r{}_r{}^r|_{\scri^-}$
and $W_{\mathring A}{}^r{}_r{}^r|_{\scri^-}$,
%
%\begin{align}
%0 =\nabla_{\mu}W_r{}^r{}_r{}^{\mu}|_{\scri^-}
%=&
% \Big(\partial_r+\frac{3}{2} \theta^+\Big)W_r{}^r{}_r{}^r + \Big(\rnabla^{\mathring A}-\frac{1}{2}\xi^{\mathring A}\Big)W_{r\mathring Ar}{}^r
%-\frac{1}{2}\Xi^{\mathring A\mathring B} W_{r\mathring Ar\mathring B}
%\,,
%\label{adm_ode}
%\\
%0 =\nabla_{\mu}W_{\mathring A}{}^r{}_{r}{}^{\mu}|_{\scri^-}
%=&  \Big(\partial_{r}   +\frac{1}{2}\theta^+ +\kappa\Big) W_{\mathring A}{}^r{}_{r}{}^{r}
%-  \frac{1}{2} g^{rr}\Big( \rnabla^{\mathring B}+\frac{1}{2}\xi^{\mathring B}\Big)W_{r\mathring Ar\mathring B}
%-\frac{1}{2}\Big(  \rnabla_{\mathring A} -\frac{3}{2}\xi_{\mathring A}\Big) W_{r}{}^r{}_{r}{}^r
%\nonumber
%\\
%&
%+\frac{1}{2}\Big(  \rnabla^{\mathring B} -\frac{3}{2}\xi^{\mathring B}\Big)W_{\mathring A\mathring Br}{}^{r}
%  +\Xi_{\mathring A}{}^{\mathring B}W_{r\mathring Br}{}^r  
%+\frac{1}{2}(\partial_r- \theta^+   +2\kappa) g^{rr}\, W_{r\mathring Ar}{}^{r}  
%\,.
%\label{adm2_ode}
%\end{align}
%
\begin{align}
 \Big(\partial_r+\frac{3}{2} \theta^+\Big)W_r{}^r{}_r{}^r|_{\scri^-} 
=&
- \Big(\rnabla^{\mathring A}-\frac{1}{2}\xi^{\mathring A}\Big)W_{r\mathring Ar}{}^r
+\frac{1}{2}\Xi^{\mathring A\mathring B} W_{r\mathring Ar\mathring B}
\,,
\label{adm_ode}
\\
 \Big(\partial_{r}   +\frac{1}{2}\theta^+ +\kappa\Big) W_{\mathring A}{}^r{}_{r}{}^{r}|_{\scri^-}
=& 
  \frac{1}{2} g^{rr}\Big( \rnabla^{\mathring B}+\frac{1}{2}\xi^{\mathring B}\Big)W_{r\mathring Ar\mathring B}
+\frac{1}{2}\Big(  \rnabla_{\mathring A} -\frac{3}{2}\xi_{\mathring A}\Big) W_{r}{}^r{}_{r}{}^r
  -\Xi_{\mathring A}{}^{\mathring B}W_{r\mathring Br}{}^r  
\nonumber
\\
&
-\frac{1}{2}(\partial_r- \theta^+   +2\kappa) g^{rr}\, W_{r\mathring Ar}{}^{r}  
-\frac{1}{2}\Big(  \rnabla^{\mathring B} -\frac{3}{2}\xi^{\mathring B}\Big)W_{\mathring A\mathring Br}{}^{r}
\,.
\label{adm2_ode}
\end{align}
From the CFE $\nabla_r L^{rr}  -   \nabla_{\tau} L^{r\tau }
 -  g^{rr} \nabla_{r} L_r{}^r -  g^{r\mathring A} \nabla_{\mathring A} L_r{}^r=\Sigma W_r{}^r{}_r{}^r$
and  using the Bianchi identity as well as, one more time, the above formulas for the Christoffels symbols,  we obtain
\begin{align}
2\Big(\partial_r +\frac{1}{2}\theta^++2\kappa\Big)  L^{rr}|_{\scri^-}   =& \Big(  g^{rr}\partial_r  +  2(\partial_r +\frac{1}{4}\theta^+ + 2\kappa ) g^{rr}  +  \frac{1}{2}\theta^- \Big)  L_r{}^{r} 
 -\frac{1}{2} g^{rr} [ (\partial_r+2\kappa) g^{rr}] L_{rr}
\nonumber
\\
&
 +  \frac{1}{2} [(\rnabla^{\mathring A}- 2\xi^{\mathring A}) g^{rr}] L_{r\mathring A}
 - \Big(  \rnabla^{\mathring A}-\frac{5}{2}\xi^{\mathring A}  \Big)  L_{\mathring A}{}^{r}  
 +\frac{1}{2}\Xi^{\mathring A\mathring B}  L_{\mathring A\mathring B}
\nonumber
\\
&
- \frac{1}{8}(\theta^-  - \theta^+  g^{rr})\Big(\not\hspace{-.2em} R + \frac{1}{2}\theta^+\theta^-\Big)
    +   \frac{1}{6}{\nabla^r R}
+\Sigma W_r{}^r{}_r{}^r
\;.
\end{align}
Finally,
% (the constraint for $L_{\tau\tau}$ is not needed here),
 let us derive an equation for $(W_{\mathring A}{}^r{}_{\mathring B}{}^r)_{\mathrm{tf}}$,  somehwat more explicitly as compared to \cite{ttp1}.
From the algebraic symmetries of the Weyl tensor it follows that
\begin{align*}
W^r{}_{(\mathring A\mathring B)r}|_{\scri^-} =&  g^{r\mu} W_{\mu(\mathring A\mathring B)r}
= -  \frac{1}{2}  g^{rr}W_{r\mathring Ar\mathring B} -\frac{1}{4} g_{\mathring A\mathring B} g^{\mathring C\mathring D} g^{\mathring E\mathring F} W_{\mathring C\mathring E\mathring F\mathring D}
\,,
\\
W^r{}_{\mathring A\mathring B\mathring C}|_{\scri^-} =&  g^{r\mu}W_{\mu \mathring A\mathring B\mathring C}
= ( W^r{}_{r}{}^r{}_{\mathring B}
 -  g^{rr}  W^r{}_{rr\mathring B} 
) g_{\mathring A\mathring C}
+ f_{\mathring C}  g_{\mathring A\mathring B}
\,,
\end{align*}
where the specific form of $f_{\mathring C}$ is irrelevant.
The Bianchi identity and the algebraic symmetries  of the rescaled Weyl tensor imply
\begin{align*}
({\nabla_{r} W_{\mathring A}{}^r{}_{\mathring B}{}^{r}} )_{\mathrm{tf}} |_{\scri^-}  =& (\nu^{\tau}{\nabla_{\tau} W^r{}_{(\mathring A\mathring B)r}}   + {\nabla_{\mathring C} W^1{}_{(\mathring A\mathring B)}{}^{\mathring C}} )_{\mathrm{tf}} 
\\
  =& \Big(\nu^{\tau}{\nabla_{\tau} W^r{}_{(\mathring A\mathring B)r}}   -\frac{1}{2} g^{rr} g^{r\mathring C} { \nabla_{\mathring C} W_{r\mathring Ar\mathring B}} +{\nabla_{(\mathring A}   W_{\mathring B)}{}^r{}_{r}{}^r}
 -   g^{rr} { \nabla_{(\mathring A}  W_{\mathring B)rr}{}^r}
\Big)_{\mathrm{tf}}  
\,,
\end{align*}
as well as
\begin{align*}
({\nabla_{\tau } W^r{}_{(\mathring A\mathring B)r}})_{\mathrm{tf}}|_{\scri^-}  =&  -({ \nabla_{\tau} W^{\tau}{}_{(\mathring A\mathring B)\tau}}  + {\nabla_{\tau} W^{\mathring C}{}_{(\mathring A\mathring B)\mathring C}})_{\mathrm{tf}}
\\
=&  - ({\nabla_{\tau} W^{\tau}{}_{(\mathring A\mathring B)\tau}} - \nu_{(\mathring A}{\nabla_{|\tau|} W^{\tau}{}_{\mathring B)r}{}^{r}})_{\mathrm{tf}}
\\
=& ({\nabla_{\alpha} W^{\alpha}{}_{(\mathring A\mathring B)\tau}} 
- \nu_ {(\mathring A}{\nabla_{|\alpha|} W^{\alpha}{}_{\mathring B)r}{}^{r}}
 )_{\mathrm{tf}}
\\
=& \Big(\frac{1}{2}\nu_{\tau}( g^{rr})^2{\nabla_{r} W_{r\mathring Ar\mathring B} }
-\nu_{\tau}{\nabla_{r} W_{\mathring A}{}^r{}_{\mathring B}{}^{r} }
+2\nu_{(\mathring A}{\nabla_{|r|}  W_{\mathring B)}{}^r{}_{r}{}^r}
\nonumber
\\
&
 -  g^{rr}\nu_{(\mathring A}{\nabla_{|r|}   W_{\mathring B)rr}{}^r}
+ {\nabla_{\mathring C} W^{\mathring C}{}_{(\mathring A\mathring B)\tau}}
 - \nu_{(\mathring A}{\nabla_{|\mathring C|} W^{\mathring C}{}_{\mathring B)r}{}^{r}}
 \Big)_{\mathrm{tf}}
\\
=&\nu_{\tau} \Big(\frac{1}{2} g^{rr}(g^{rr}\nabla_{r}+ g^{r\mathring C}\nabla_{\mathring C}) W_{r\mathring Ar\mathring B}
-{\nabla_{r} W_{\mathring A}{}^r{}_{\mathring B}{}^{r} }
+\nabla_{(\mathring A}W_{\mathring B)}{}^{r}{}_{r}{}^r
 \Big)_{\mathrm{tf}}
\,.
\end{align*}
Altogether that yields
\begin{equation}
({\nabla_{r} W_{\mathring A}{}^r{}_{\mathring B}{}^{r}} )_{\mathrm{tf}} 
  \,=\, \frac{1}{4}( g^{rr})^2\nabla_{r} W_{r\mathring Ar\mathring B} 
 -  \frac{1}{2} g^{rr}  ({\nabla_{(\mathring A}  W_{\mathring B)rr}{}^r})_{\mathrm{tf}}  
+({\nabla_{(\mathring A}W_{\mathring B)}{}^{r}{}_{r}{}^r}
)_{\mathrm{tf}}  
\;,
\end{equation}
equivalently,
%\tim{seems to be correct as compared to \cite{ttp1}!!!!!!!!}
%
\begin{align}
 &\hspace{-4em} \Big(\partial_{r} -\frac{1}{2}\theta^+ +2\kappa \Big)\Big ( ( W_{\mathring A}{}^r{}_{\mathring B}{}^{r} )_{\mathrm{tf}}- \frac{1}{4} ( g^{rr})^2 W_{r\mathring Ar\mathring B} \Big)
\nonumber
\\
  \hspace{2em}=& 
 \Big[\Big(\rnabla_{(\mathring A} -\frac{5}{2}\xi_{(\mathring A} \Big)(W_{\mathring B)}{}^{r}{}_{r}{}^r -\frac{1}{2} g^{rr}  W_{\mathring B)rr}{}^r)
\Big]_{\mathrm{tf}}  
+\frac{3}{4}\Xi_{\mathring A\mathring B}W_{r}{}^{r}{}_{r}{}^r
-\frac{3}{4}\Xi_{(\mathring A}{}^{\mathring C}W_{\mathring B)\mathring C r}{}^r
\;.
\label{ODE_Wfinal}
\end{align}

A remark concerning the integration functions is in order which arise when integrating those constraints which are ODEs rather than algebraic equations.
We are interested in an analysis of the constraints near spatial infinity, and an asymptotic expansion of the solutions.
The integration functions bring in a global aspect which encodes information  of the data on the whole of null infinity and not just its asymptotic part near 
spatial infinity (such as e.g.\ the (ADM)  mass aspect).
In this context let us mention two possibilities to set up an asymptotic characteristic initial value problem:
\begin{enumerate}
\item[(i)]
The first one \cite{kannar} is to start with two characteristic surfaces intersecting a spherical cross section $S$, and with one of these surfaces representing $\scri^-$.
In that case the initial data for the constraint ODEs are determined at $S$. Some of the data will be determined by continuity requirements at $S$ whereas
other can be prescribed freely (cf.\ Appendix~\ref{app_ADM}).
\item[(ii)]
Alternatively \cite{ChPaetzInfCone, F_T} one may prescribe data on $\scri^-$, regarded as a future light-cone emanating from
some point $i^-$ which represents past timelike infinity.
Assuming this point to be regular in the spacetime to be constructed, the initial data are determined by regularity conditions there \cite{C1}.
\end{enumerate}

\subsection{Asymptotic initial value problem with prescribed  (ADM) mass and dual mass aspect}
\label{app_ADM}
\label{app2}

The conformal field equations (CFE) \cite{F1,F2} permit the formulation of an \emph{asymptotic Cauchy problem} where
some of the data are prescribed on (a piece of) null infinity $\scri$.
The simplest situation arises when considering two null hypersurfaces which intersect transversally in a smooth spherical cross section $S$, one of them 
representing an incoming null hypersurface and the other one null infinity.
In \cite{kannar} K\'ann\'ar has proved local well-posedness for the CFE in some future-neighbourhood of $S$.
For this he employs  that the CFE contain a symmetric hyperbolic system of evolution equations and a system of constraint equations, preserved under evolution.
Solutions to the constraints are constructed from suitably chosen freely prescribable ``seed data'', and well-posedness for the evolution equations follows
from Rendall's result \cite{Rendall}, which guarantees  existence  in some neighborhood to the future of the intersection sphere $S$.
This result has been improved recently in \cite{CCTW}, where it is shown that a solution exists in fact in some neighborhood to the future of the whole initial surface
(or rather of that part where the constraints admit a solution as there might be obstructions due to the non-linear Raychaudhuri equation).

The purpose of this appendix  is to split the required data on the initial surface into ``gauge data'', whose description is just a matter of choice,  and the remaining ``physical data''.
For this it is convenient to somewhat reformulate the asymptotic Cauchy problem  where the freedom to prescribe data on the incoming null hypersurface is, to some extent, shifted
to the freedom to prescribe certain global quantities such as the  mass aspect on the critical set $I^-$.
Our aim is to set up a scheme where as many data as possible can be freely  prescribed on $\scri^-$ and its future boundary $I^-$. 
Such a scheme turns out to be  very  convenient for the analysis of   the appearance of logarithmic terms at the critical sets of spatial infinity.
% and to set up asymptotic initial data sets where these are  smooth.
In doing so we will choose a wave-map  gauge which admits a representation of spatial infinity as a cylinder \`a la Friedrich.
%In fact,  the results in \cite{CCTW} do not  guarantee existence of a solution up and including  to the cylinder. However, we merely want to analyze the critical set $I^-$ where the cylinder ``touches''  null infinity.

\subsubsection{Gauge freedom}
\label{gauge_adapted_null}

Consider two   null hypersurfaces $\mathcal{N}$ and $\Sigma$  with transverse intersection along a smooth submanifold $S\cong {S}^2$.
We introduce adapted null coordinates $(\tau, r, x^{\mathring A})$ (cf.\ \cite{CCM2})
 so that $\Sigma$ coincides with the set $\{\tau=-1\}$, while $\mathcal{N}$ is given by $\{r=r_{\mathcal{N}}>0\}$ and the intersection
sphere $S$ corresponds to the set $\{\tau=-1, r=r_{\mathcal{N}}\}$.
The conformal factor is to be chosen in such a way that $\Sigma$ can be identified with $\scri^-$  and its future boundary $I^-$  in the emerging vacuum spacetime.

The conformal gauge freedom hidden in the MCFE \eq{conf1}-\eq{conf6}  arises from the freedom to choose the conformal factor~$\Theta$. It can be exploited in such a way
that e.g.
\begin{equation}
R=R^* \,, \quad\theta^+_{\mathcal{N}}=0\,, \quad \theta^+_{\Sigma}=0
\,, \quad g_{\mathring A\mathring B}|_S=s_{\mathring A\mathring B}
\,.
\label{conformal_freedom}
\end{equation}
Such a gauge can be realized as follows:
Assume we have been given a spacetime $(\mcM, g)$ and a conformal factor $\Theta$.
We apply a conformal rescaling $\Theta\mapsto \phi\Theta$ and $g\mapsto\phi^2g$ with $\phi>0$.
To realize the condition $R=R^*$ the function $\phi$ needs to satisfy a wave equation.
This leaves the freedom to prescribe $\phi$ on $\mathcal{N}\cup\Sigma$.
On $\mathcal{N}$ we have $\theta^+_{\mathcal{N}}\mapsto  \theta^+_{\mathcal{N}} + 2\partial_{\tau}\log\phi|_{\mathcal{N}}$, which becomes zero if 
the restriction of $\phi$ to $\mathcal{N}$ satisfies an appropriate ODE along each of the null geodesic generators on $\mathcal{N}$. 
Since any Riemannian metric on the 2-sphere is conformal to the standard metric, the initial data for $\phi|_{\mathcal{N}}$ at $S$ can be employed to
arrange that $g_{\mathring A\mathring B}|_S=s_{\mathring A\mathring B}$.
Finally, $\theta^+_{\Sigma}\mapsto  \theta^+_{\Sigma} + 2\partial_{r}\log\phi|_{\Sigma}$, whence $\theta^+_{\Sigma}=0$
is realized by  a function $\phi|_{\Sigma}$ which satisfies an ODE. The initial data follow from $\phi|_{\mathcal{N}}$ and continuity at $S$.
(Note that the solutions of both ODEs $\phi|_{\mathcal{N}}$ and $\phi|_{\Sigma}$ will be positive  since $\phi|_S>0$, which in turn implies that $\phi$
will be positive at least sufficiently close to $\mathcal{N}\cup\Sigma$.)

Next, we want exploit the coordinate gauge freedom in such a way that
\begin{equation}
 \kappa_{\mathcal{N}}=0\,, \quad \kappa_{\Sigma}=-\frac{2}{r}
\,,\quad \partial_{\tau}\Theta|_S=2 \,, \quad g_{\tau r}|_S =g_{\tau r}^*
\,.
\label{coordinate_freedom}
\end{equation}
For this, we consider coordinate transformations of the form $\tau\mapsto \tilde \tau=\tilde \tau(\tau, x^{\mathring A})$ and  $r\mapsto \tilde r=\tilde r(r,x^{\mathring A})$.
First of all we observe that \eq{conformal_freedom} remains invariant.
The gauge conditions  $\kappa_{\mathcal{N}}=0$ and $ \kappa_{\Sigma}=-\frac{2}{r}$ are arranged by solving second-order ODEs
for $\tilde \tau$ and $\tilde r$ along the null geodesic generators of $\mathcal{N}$ and $\Sigma$, respectively.
This still leaves the freedom to apply  transformations of the form
$\tau \mapsto p^{(1)} \tau + p^{(2)}$ on $\mathcal{N}$ and $r\mapsto \frac{q^{(1)} r}{1 + q^{(2)} r}$ on $\Sigma$ with $p^{(a)}$ and $q^{(a)}$ some functions on ${S}^2$.
We have imposed the conditions that $\scri^-=\{\tau=-1\}$, $S=\{ \tau=-1, r=r_{\mathcal{N}}\}$ and $I^-= \{\tau=-1, r=0\}$.
This requires $p^{(2)}=p^{(1)}-1$ and $q^{(1)}=1+q^{(2)}r_{\mathcal{N}}$. 
Applying both transformations  we find that
\label{coord_trafo_wavemap}
\begin{equation}
\partial_{\tau}\Theta|_S \enspace\mapsto\enspace \frac{1}{p^{(1)}} \partial_{\tau}\Theta
\,,
\quad
 g_{\tau r}|_S \enspace\mapsto\enspace  \frac{1}{p^{(1)}(1+q^{(2)}r_{\mathcal{N}})}g_{\tau r}
\,,
\label{gauge_freedom_S}
\end{equation}
 which clearly can be employed to realize \eq{coordinate_freedom}. The remaining gauge freedom will be fixed below.
Let us choose
\begin{equation}
r_{\mathcal{N}}=1\,.
\end{equation}

In addition to \eq{conformal_freedom} and \eq{coordinate_freedom} there remains the freedom to prescribe the gauge source functions $W^{\mu}$
(cf.\ \cite{CCM2, F4, F5})
 which capture
the freedom to choose coordinates off the initial surface. 
The gauge source functions (or rather their restrictions to $\mathcal{N}\cup\Sigma$) can be chosen in such a way that \cite{ttp1, paetz_thesis}
\begin{equation}
g^{\tau r}|_{\mathcal{N}}=r_{\mathcal{N}}=1\,, \quad g^{\tau r}|_{\Sigma}=r\,, \quad  g^{\tau\mathring A}|_{\mathcal{N}}=g^{\tau\tau}|_{\mathcal{N}}=g^{r\mathring A}|_{\Sigma}=0
\,,
\quad 
g^{rr}|_{\Sigma} =\chi(r)\,.
\end{equation}
The function $\chi(r)$ is a smooth, non-increasing cut-off function which is one on $[0,1/3]$ and zero on $[2/3 ,1]$.
The reason for the cut-off is that $\partial_{\tau}$ should be a null vector on $\Sigma$ close to $S$, so that it provides a parameterization of the null geodesic generators of $\mathcal{N}$, while we want it to be timelike close to $I^-$ to get there conformal Gauss coordinates  based on a congruence of  timelike conformal geodesics.

Finally we choose
%\tim{the gauge choice for $R$, $s$ etc. are compatible with the conformal Gaussian coordinates constructed in xx}
%
\begin{equation}
R^*|_{\mathcal{N}} =0\,, \quad R^*|_{\Sigma}=0
\,, \quad \partial_{\tau} R^*|_{\Sigma}=0
\,.
\label{gauge_cond_R}
\end{equation}
The gauge source functions and the curvature scalar are then extended to smooth spacetime functions, e.g.\ in such a way that one obtains conformal Gauss coordinates near spatial infinity. In which way they are chosen off the initial surface will be irrelevant for the following considerations.
Once this has been done, the gauge is  fixed, apart from the freedom to choose coordinates $(x^{\mathring A})$
on ${S}^2$ (which will be irrelevant for us).

\subsubsection{Constraint equations on an incoming null hypersurface}
%\subsection{Constraint equations on $\scri$}

In the gauge described in Section~\ref{gauge_adapted_null}  the constraint equations on $\scri^-\cong\Sigma$   have been derived in  Appendix~\ref{app_charact_constraints}.
%
% imply the following relations on $\Sigma$
%(note that $\zeta_{\Sigma}|_S=-\theta_{\mathcal{N}}|_S/2=0$),
%%
%\begin{eqnarray}
%  g_{\mathring A\mathring B}|_{\Sigma} &=& s_{\mathring A\mathring B}
%\,,
%\\
%  s|_{\Sigma} &=& 0
%\,,
%\\
%  L_{r\alpha}|_{\Sigma}
%&=&    0
%\,,
%%\\
%%\nu^0  &=& r
%%\,,
%\\
%\partial_{\tau}\Theta|_{\Sigma} &=&2r
%\,,
%\\
% \xi^{\Sigma}_{\mathring A} |_{\Sigma}
%&=& 0
%\,,
%%\\ 
%%\ol g^{1A}
%% &=&
%%0
%%\,,
%\\
%\zeta_{\Sigma}|_{\Sigma}    &=& 0
%\,,
%\\
%   g^{\mathring A\mathring B} L_{\mathring A\mathring B}|_{\Sigma} &=& 1
%\,,
%%\\
%%\overline g^{11} &=&1
%%\,,
%\\
% L_r{}^r |_{\Sigma}
% &=& - \frac{1}{2}
%\,.
%\end{eqnarray}
%
On $\Sigma$ it is convenient to regard $\Xi^{\Sigma}_{\mathring A\mathring B}$, equivalently the radiation field $W_{r\mathring Ar\mathring B}|_{\Sigma}$  supplemented by $\Xi^{\Sigma}_{\mathring A\mathring B}|_S$ and $\partial_r\Xi^{\Sigma}_{\mathring A\mathring B}|_S$, as the free ``physical'' initial data. 
In the ``standard'' approach  the initial data  for the ODEs for $W_r{}^r{}_r{}^r$, $W_{\mathring A}{}^r{}_r{}^r$,
$W_{\mathring A}{}^r{}_{\mathring B}{}^r$ and $L^{rr}$
 cannot be specified freely, but follow from the data given at $\mathcal{N}$
and the continuity requirement at $S$.
%\tim{all of them?}
Here, we want to present an approach where this procedure is reserved: We  prescribe  initial data for the ODEs  at $I^-$, i.e.\ at $r=0$. Then we solve the ODEs
and determine the data they induce at $S$, i.e.\ at $r=r_{\mathcal{N}}=1$, and choose the data on $\mathcal{N}$ in such a way that all the field are continuous at $S$
so that the results in \cite{kannar, Rendall, CCTW}
apply.
For this it becomes necessary to discuss the constraint equations on $\mathcal{N}$ as well.

In the gauge constructed above we have
\begin{equation}
R |_{\mathcal{N}}=0\,, \quad \theta^+_{\mathcal{N}}=\kappa_{\mathcal{N}}=0\,, \quad g^{\tau r}|_{\mathcal{N}}=1\,, \quad g^{\tau \mathring A}|_{\mathcal{N}}=g^{\tau\tau}|_{\mathcal{N}}=0
\,, \quad g_{\mathring A\mathring B}|_{S}=s_{\mathring A\mathring B}
\,.
\label{gauge_conds_N}
\end{equation}
As \emph{``physical'' initial data} we regard as e.g.\ in \cite{Rendall, CCM2} 
the
\begin{equation}
\text{conformal class of $g_{\mathring A\mathring B}|_{\mathcal{N}}\mathrm{d}x^{\mathring A}\mathrm{d}x^{\mathring B}$,}
\end{equation}
which is a smooth 1-parameter family of Riemannian metrics,
defined at least in some neighborhood of $S$, i.e.\ on $\mathcal{N} \cong [0,\varepsilon)\times {S}^2$.
Denote by $\gamma_{\mathring A\mathring B}\mathrm{d}x^{\mathring A}\mathrm{d}^{\mathring B}$ a representative of that conformal class.
The conformal factor $\Omega>0$ relating $\gamma_{\mathring A\mathring B}$ and $g_{\mathring A\mathring B}$, $g_{\mathring A\mathring B}=\Omega^2\gamma_{\mathring A\mathring B}$  needs to be chosen in such a way that 
$\theta^{\mathcal{N}}=0$, or, equivalently,
\begin{equation}
\partial_{\tau} \log\Omega  =-\frac{1}{4}\gamma^{\mathring A\mathring B} \partial_{\tau}\gamma_{\mathring A\mathring B}
\,.
\end{equation}
For a given initial datum $\Omega|_S>0$, which is computed from the gauge condition $\Omega^2\gamma_{\mathring A\mathring B}|_S=s_{\mathring A\mathring B}$, this equation determines a positive  function $\Omega$ and thus a Riemannian family $g_{\mathring A\mathring B}|_{\mathcal{N}}$.

%In the following we will simply assume that $g_{\mathring A\mathring B}|_{\mathcal{N}}$ with $\theta^{\mathcal{N}}=0$ has been given.
For smooth seed data $\gamma_{\mathring A\mathring B}$, $g_{\mathring A\mathring B}|_{\mathcal{N}}$ admits an expansion 
at the intersection sphere $S\cong{S}^2$ 
of the form
\begin{equation}
g_{\mathring A\mathring B} |_{\mathcal{N}}\sim  s_{\mathring A\mathring B} + \sum_{n=1}^{\infty} h^{(n)}_{\mathring A\mathring B}(1+\tau)^n 
\,.
\label{expansion_g}
\end{equation}
In fact this expansion will be the only relevant part of the data with regard to the problem we are interested in.
As ``non-gauge''-part of the asymptotic expansion \eq{expansion_g} one may regard  the trace-free part of the $h^{(n)}$'s:
Indeed, instead of $\gamma_{\mathring A\mathring B}$  we may prescribe  a set of $s$-tracefree tensors $( h^{(n)}_{\mathring A\mathring B})_{\mathrm{tf}}$, $n\in\mathbb{N}$, on $\mathbb{S}^2$.
The gauge condition $\theta^+_{\mathcal{N}}=0$ then determines all the traces $s^{\mathring A\mathring B}h^{(n)}_{\mathring A\mathring B}$ by solving a hierarchical system of algebraic equations.
This determines the expansion \eq{expansion_g} which then can be extended in any way  to a $\theta^+_{\mathcal{N}}=0$-family of Riemannian metrics on $\mathcal{N}$.

Continuity of $\partial_{\tau}g_{AB}$ at $S$ requires
\begin{equation}
\Xi^{\Sigma}_{\mathring A\mathring B}|_S = (\partial_{\tau} g_{\mathring A\mathring B})_{\mathrm{tf}}|_S = ( h^{(1)}_{\mathring A\mathring B} )_{\mathrm{tf}}
\,.
\end{equation}
The shear $\sigma^{\mathcal{N}}_{\mathring A}{}^{\mathring B}$ of $\mathcal{N}$ depends only  on the conformal class of $\gamma_{\mathring A\mathring B}$, cf.\ \cite{CCM2}. Its expansion at $S$ reads
\begin{equation}
\sigma^{\mathcal{N}}_{\mathring A}{}^{\mathring B} \equiv \frac{1}{2}\big( g^{\mathring B\mathring C}\partial_{\tau} g_{\mathring A\mathring C}\big)_{\mathrm{tf}}\big|_{\mathcal{N}} = \frac{1}{2}(  h^{(1)}_{\mathring A}{}^{\mathring B} )_{\mathrm{tf}} +
( h^{(2)}_{\mathring A}{}^{\mathring B} )_{\mathrm{tf}}(1+\tau)+ O(1+\tau)^2
\,.
\end{equation}
Angular indices which refer to fields defined on $\mathcal{N}$ are raised and lowered with $g_{\mathring A\mathring B}|_{\mathcal{N}}$ while those of their expansion coefficients
at $S$ are raised and lowered with $s_{\mathring A\mathring B}$.

Let us  determine all the remaining fields on $\mathcal{N}$ which are needed as initial data for the symmetric hyperbolic system of evolution equations implied by the MCFE.
From the definition of the Schouten tensor in terms of $g$ one finds that
\begin{equation}
L_{\tau\tau} |_{\mathcal{N}}=\frac{1}{2} R_{\tau\tau}[g]= -\frac{1}{2}|\sigma^{\mathcal{N}}|^2
\,,
\label{L00_eqn_N}
\end{equation}
where $|\sigma|^2:=\sigma_{\mathring A}{}^{\mathring B}\sigma_{\mathring B}{}^{\mathring A}$.
Here (and in what follows) we make extensively use of the expressions for the Christoffel symbols in adapted null coordinates computed in \cite[Appendix~A]{CCM2}.

Next, we evaluate the $(\tau\tau)$-component of \eq{conf3}, 
\begin{equation}
\partial^2_{\tau\tau}\Theta |_{\mathcal{N}}= -\Theta L_{\tau\tau}
\,, \quad \text{with} \quad  \Theta|_S=0\,, \quad \partial_{\tau}\Theta|_S=2
\,.
\end{equation}
The first initial datum makes sure that $S$ correspond to a cross-section of $\scri^-$ while the second datum is our gauge condition \eq{coordinate_freedom}.
In particular this yields the expansion
\begin{equation}
\Theta = 2(1+ \tau) + \frac{1}{24}|h^{(1)}_{\mathrm{tf}}|^2 (1+\tau)^3 + O(1+\tau)^4
\,.
\end{equation}
The $(\tau A)$-component of  \eq{conf3} together  with the definition of the Schouten tensor yields an expression for $ \xi^{\mathcal{N}}_A\equiv-\frac{1}{2}\Gamma^{\tau}_{\tau A}|_{\mathcal{N}}$ and $L_{\tau A}|_{\mathcal{N}}$,
\begin{align}
\Big(\partial_{\tau} - 2\frac{\partial_{\tau}\Theta}{\Theta}\Big) \xi^{\mathcal{N}}_{\mathring A}  =&
 2\rnabla_{\mathring B}\sigma^{\mathcal{N}}_{\mathring A}{}^{\mathring B}
+ 4\frac{\partial_{\mathring A}\partial_{\tau}\Theta}{\Theta} -4\sigma^{\mathcal{N}}_{\mathring A}{}^{\mathring B}\frac{\partial_{\mathring B}\Theta}{\Theta} 
\,,
\\
L_{\tau\mathring A} |_{\mathcal{N}}=&
\frac{1}{2} \rnabla_B\sigma^{\mathcal{N}}_{\mathring A}{}^{\mathring B} -\frac{1}{4} \partial_{\tau} \xi^{\mathcal{N}}_{\mathring A} 
\,.
\end{align}
The Levi-Civita connection associated to $g_{\mathring A\mathring B}|_{\mathcal{N}}$ is denoted by $\rnabla$.
The ODE for $\xi^{\mathcal{N}}_{\mathring A}$ takes the asymptotic form
\begin{equation}
\Big(\partial_{\tau}-\frac{2}{1+ \tau} + O(1+\tau)\Big) \xi^{\mathcal{N}}_{\mathring A} 
=\mcD_{\mathring B}( h^{(1)}_{\mathring A}{}^{\mathring B})_{\mathrm{tf}}
+ 2(1+\tau)\mcD_{\mathring B}(  h^{(2)}_{\mathring A}{}^{\mathring B})_{\mathrm{tf}} + O(1+\tau)^2
\,,
\end{equation}
where $\mcD$ denotes the Levi-Civita connection of $s_{\mathring A\mathring B}$.
This is a Fuchsian ODE and there remains a gauge freedom to prescribe 
\begin{equation}
\varsigma_{\mathring A} := \partial^2_{\tau\tau}\xi^{\mathcal{N}}_{\mathring A}|_S
\,.
\end{equation}
This corresponds to the freedom to prescribe the torsion 1-form on the intersection surface  of two null hypersurfaces intersecting transversally  in the 
physical spacetime $(\widetilde \mcM, \widetilde g)$ (cf.\ e.g.\ \cite{ChPaetz}).

In general, the asymptotic expansion of the solution of the $\xi^{\mathcal{N}}_A$-equation will involve logarithmic terms. The solution will be smooth at $S$ if and only
if a \emph{no-logs-condition holds},
\begin{equation}
\mathring\nabla_{\mathring B}(  h^{(2)}_{\mathring A}{}^{\mathring B})_{\mathrm{tf}}=0 \quad \Longleftrightarrow \quad (  h^{(2)}_{\mathring A\mathring B})_{\mathrm{tf}} =0
\,.
\label{no-logs_conditionA}
\end{equation}
This recovers the no-logs-condition derived in \cite{ChPaetz2}
expressed in the conformally rescaled spacetime and in our current gauge.
We assume that this condition is satisfied. Then
\begin{equation}
 \xi^{\mathcal{N}}_{\mathring A} =-\mcD_{\mathring B}( h^{(1)}_{\mathring A}{}^{\mathring B})_{\mathrm{tf}} (1+\tau) + \frac{1}{2}\varsigma_{\mathring A}(1+\tau)^2 + O(1+\tau)^3
\,.
\end{equation} 
The results in \cite{CCTW, ChPaetz2, ttp3}
then tell us that this already implies that there exists a smooth extension through $\scri^-$.
% no further log terms appear, supposing that the gauge is chosen appropriately, and we shall see that our gauge \eq{gauge_conds_N} is appropriate in this respect.

Taking the trace of the $(\mathring A\mathring B)$-component of  \eq{conf3} and combining it with  \eq{conf5} and the $\tau$-component of \eq{conf4},
the definition of the Schouten tensor and the gauge condition $ R|_{\mathcal{N}}=0$, we obtain the following system
($\not \hspace{-0.2em}R^{\mathcal{N}}$ denotes the curvature scalar associated to $g_{\mathring A\mathring B}|_{\mathcal{N}}$)
\begin{align}
\Big(\partial_{\tau}- \frac{\partial_{\tau}\Theta}{\Theta}  -  \frac{\partial^2_{\tau\tau}\Theta}{\partial_{\tau}\Theta}\Big)\partial_{\tau}\theta ^{-}_{\mathcal{N}} 
=&- \Big(\partial_{\tau} -  \frac{\partial^2_{\tau\tau}\Theta}{\partial_{\tau}\Theta}\Big)\Big( \not \hspace{-0.2em}R^{\mathcal{N}} +\rnabla^{\mathring A}\xi^{\mathcal{N}}_{\mathring A}-\frac{1}{2}|\xi^{\mathcal{N}}|^2 
\Big)   
- \frac{2\partial_{\tau} \not\hspace{-0.2em}\Delta\Theta }{\Theta} 
\nonumber
\\
&
+ \frac{2\partial^2_{\tau\tau}\Theta }{\Theta^2 \partial_{\tau}\Theta}\Big( \Theta \not\hspace{-0.2em} \Delta\Theta -   \rnabla_{\mathring A}\Theta \rnabla^{\mathring A}\Theta\Big)
 -\frac{4 L_{\tau\mathring A}\rnabla^{\mathring A}\Theta }{\Theta}
\,,
\\
s|_{\mathcal{N}} =& \frac{\Theta}{4} \Big( \not \hspace{-0.2em}R^{\mathcal{N}} + \rnabla^{\mathring A}\xi^{\mathcal{N}}_{\mathring A}-\frac{1}{2}|\xi^{\mathcal{N}}|^2 
+\partial_{\tau}\theta^{-}_{\mathcal{N}}\Big) + \frac{ \not\hspace{-0.2em} \Delta\Theta }{2}
 - \frac{\theta^{-}_{\mathcal{N}}}{4}\partial_{\tau}\Theta 
\,,
\\
   \partial_{r}\Theta |_{\mathcal{N}}  =&\frac{1}{\partial_{\tau}\Theta}\Big[\Theta s- \frac{1}{2}   \rnabla_{\mathring A}\Theta \rnabla^{\mathring A}\Theta\Big]
\,,
\\
g^{\mathring A\mathring B}L_{\mathring A\mathring B} |_{\mathcal{N}}= &\frac{1}{2}\not \hspace{-0.2em}R^{\mathcal{N}} +\frac{1}{2} \rnabla^{\mathring A}\xi^{\mathcal{N}}_{\mathring A}-\frac{1}{4}|\xi^{\mathcal{N}}|^2 
+ \frac{1}{2}\partial_{\tau}\theta^{-}_{\mathcal{N}}
\label{boundary_partial_zeta}
\,,
\\
    L_{\tau r} |_{\mathcal{N}} =&-\frac{1}{2} g^{\mathring A\mathring B}L_{\mathring A\mathring B}
\,.
\end{align}
Near $S$ the ODE for $\theta^{-\mathcal{N}}$ takes the form   (note that $\not\hspace{-0.2em}  R^{\mathcal{N}} +  \rnabla^{\mathring A}\xi_{\mathring A}^{\mathcal{N}} = 2+ O(1+\tau)^2$)
\begin{equation}
\Big(\partial_{\tau} -\frac{1}{1+\tau} +O(1+\tau)\Big)\partial_r\theta^{-}_{\mathcal{N}}=O(1+\tau)
\,,
\end{equation}
 and the boundary conditions
are  $\theta^{-}_{\mathcal{N}}|_S=-2\theta^{+}_{\Sigma}=0$ and $\varsigma:=\partial^2_{\tau\tau}\theta^{-}_{\mathcal{N}}|_S$,
whence
\begin{equation}
\theta^-_{\mathcal{N}}= \frac{1}{2}\varsigma (1+\tau)^2+O(1+\tau)^3
\,.
\end{equation}
Moreover,
\begin{equation}
g^{\mathring A\mathring B}L_{\mathring A\mathring B} |_{\mathcal{N}}=1
+ \frac{1}{2}\varsigma (1+\tau)+ O(1+\tau)^2
\,.
\end{equation}

The tracefree-part of the $(\mathring A\mathring B)$-component of \eq{conf3} combined with the definition of the Schouten tensor in terms of $g$
provides the following equations
%\tim{dfn $\Xi_{AB}$ and $\zeta$... sign}
%
\begin{align}
\Big( \partial_{\tau}-\frac{\partial_{\tau}\Theta}{\Theta}\Big) \Xi^{\mathcal{N}}_{\mathring A\mathring B} 
%-2(\sigma^{\mathcal{N}}_{(\mathring A}{} ^{\mathring C}\Xi^{\mathcal{N}}_{\mathring B)\mathring C})_{\mathrm{tf}}
=&
\Big( \rnabla_{(\mathring A}\xi^{\mathcal{N}}_{\mathring B)}-\frac{1}{2}\xi^{\mathcal{N}}_{\mathring A}\xi^{\mathcal{N}}_{\mathring B} +2\Theta^{-1} \rnabla_{\mathring A} \rnabla_{\mathring B}\Theta
\Big)_{\mathrm{tf}}
+\sigma^{\mathcal{N}}_{\mathring A\mathring B}\Big( \frac{\theta^-_{\mathcal{N}}}{2}+2\frac{\partial_{r}\Theta}{\Theta}\Big)
\,,
\label{gen_eqn_Xi}
\\
(L_{\mathring A\mathring B})_{\mathrm{tf}}= &\Big(\frac{1}{2}  \rnabla_{(\mathring A}\xi^{\mathcal{N}}_{\mathring B)}-\frac{1}{4}\xi^{\mathcal{N}}_{\mathring A}\xi^{\mathcal{N}}_{\mathring B}
+\frac{1}{4}\theta^-_{\mathcal{N}}\sigma^{\mathcal{N}}_{\mathring A\mathring B}
 -\frac{1}{2}\partial_{\tau}\Xi^{\mathcal{N}}_{\mathring A\mathring B}
%+\sigma^{\mathcal{N}}_{(\mathring A}{} ^{\mathring C}\Xi^{\mathcal{N}}_{\mathring B)\mathring C}
\Big)_{\mathrm{tf}}
\,,
\end{align}
where
\begin{equation}
\Xi^{\mathcal{N}}_{\mathring A\mathring B}:=-2( \Gamma^{\tau}_{\mathring A\mathring B})_{\mathrm{tf}}- g^{\tau\tau}\sigma^{\mathcal{N}}_{\mathring A\mathring B}|_{\Sigma}\,.
\end{equation}
Near $S$, the ODE for $\Xi^{\mathcal{N}}_{\mathring A\mathring B}$ is of the form 
\begin{equation}
\Big(\partial_{\tau} -\frac{1}{1+\tau}+ O(1)\Big)\Xi^{\mathcal{N}}_{\mathring A\mathring B}=O(1+\tau)\,,
\end{equation}
 and the data $\partial_{\tau}\Xi^{\mathcal{N}}_{\mathring A\mathring B}|_S$
 are determined by the data $\Xi^{\Sigma}_{\mathring A\mathring B}$ given on $\Sigma$, 
$$
\partial_{\tau}\Xi^{\mathcal{N}}_{\mathring A\mathring B}|_S=
\partial_r\Xi^{\Sigma}_{\mathring A\mathring B}|_S=: \Sigma^{(1)}_{\mathring A\mathring B}
\,.
$$

From the definition of the Weyl tensor we find
\begin{equation}
\Theta W_{\tau \mathring A\tau \mathring B}|_{\mathcal{N}}= R_{\tau \mathring A\tau \mathring B}- g_{\mathring A\mathring B}L_{\tau\tau}
=
 -g_{\mathring B\mathring C}\partial_{\tau} \sigma^{\mathcal{N}}_{\mathring A}{}^{\mathring C} 
=
-( h^{(2)}_{\mathring A\mathring B} )_{\mathrm{tf}} + O(1+\tau)\,=\, O(1+\tau)
\,,
\label{eqn_N_W0A0B}
\end{equation}
as follows from the no-logs condition \eq{no-logs_conditionA}.

%In our adapted null coordinates one may regard
%%
%\begin{equation}
%W_{\tau r\tau r }\,, \quad W_{\tau r\tau \mathring A}\,, \quad W_{\tau r r \mathring A}\,, \quad W_{\tau r \mathring A \mathring B}\,, \quad ( W_{\tau\mathring A \tau\mathring B})_{\mathrm{tf}}\,, \quad ( W_{r\mathring A  r \mathring B})_{\mathrm{tf}}
%\end{equation}
%%
%as the 10  independent components of the rescaled Weyl tensor on $\mathcal{N}$.
%The algebraic symmetries of the Weyl tensor imply the following relations 
%%
%\begin{eqnarray}
%g^{\mathring A\mathring B}W_{\tau \mathring  A \tau \mathring B}|_{\mathcal{N}} &=& 0
%\,,
%\\
%g^{\mathring A\mathring B}W_{r\mathring Ar\mathring B}|_{\mathcal{N}} &=& 0
%\,,
%\\
%W_{\tau \mathring A\mathring B\mathring C}|_{\mathcal{N}} &=& 2W_{r\tau\tau[C}g_{B]A}
%\,,
%\\
%W_{r[\mathring A\mathring B]\tau}|_{\mathcal{N}} &=& -\frac{1}{2} W_{r\tau\mathring A\mathring B}
%\,,
%\\
%W_{r(\mathring A\mathring B)\tau} |_{\mathcal{N}}&=& -\frac{1}{2}g_{\mathring A\mathring B}W_{\tau r \tau r}
%\,,
%\\
%W_{r\mathring A\mathring B\mathring C}|_{\mathcal{N}} &=& -2W_{r\tau r[\mathring C}g_{\mathring B]\mathring A}
%\,,
%\\
%W_{\mathring A\mathring B\mathring C\mathring D}|_{\mathcal{N}} &=&-2W_{\tau r\tau r} g_{\mathring C[\mathring A} g_{\mathring B]\mathring D}
%\,.
%\end{eqnarray}
%
Using the  algebraic symmetries of the Weyl tensor we extract  from \eq{conf2} the following set of equations,
\begin{align}
\partial_{\tau} \Theta \, W_{r\tau\tau \mathring A}|_{\mathcal{N}} =&   \partial_{\tau} L_{\tau\mathring A} -\Big(  \rnabla_{\mathring A} +\frac{1}{2}\xi^{\mathcal{N}} _{\mathring A}\Big)L_{\tau\tau}  + \sigma^{\mathcal{N}} _{\mathring A}{}^{\mathring B}L_{\tau \mathring B}+ \rnabla^{\mathring B}\Theta \, W_{\tau \mathring A\tau \mathring  B}
\,,
\\
\partial_{\tau}\Theta \, W_{r\tau\mathring A\mathring B}|_{\mathcal{N}}  =& 2  \Big(\rnabla_{[\mathring A}+\frac{1}{2}\xi^{\mathcal{N}} _{[\mathring A}\Big) L_{\mathring B]\tau} -2\sigma^{\mathcal{N}} _{[\mathring A}{}^{\mathring C}(L_{\mathring B]\mathring C})_{\mathrm{tf}}+ 2\rnabla_{[\mathring A}\Theta \, W_{\mathring B]\tau\tau r}
\,,
\\
 \partial_{\tau}\Theta \, W_{\tau r \tau r}|_{\mathcal{N}} 
=&
 \sigma^{\mathcal{N}} _{\mathring A\mathring B}( L^{\mathring A\mathring B})_{\mathrm{tf}}-2\partial_{\tau} L_{\tau r}
 - \Big( \rnabla^{\mathring A} -\frac{1}{2}\xi^{\mathring A}\Big)L_{\tau \mathring  A} +  \frac{1}{2}\theta^{-\mathcal{N}}L_{\tau\tau}
+  \rnabla^{\mathring A}\Theta  W_{r\tau\tau\mathring  A},
\\
  \partial_{\tau} L_{r\mathring A}|_{\mathcal{N}} 
=&-\partial_r\Theta \, W_{r\tau\tau\mathring  A}-\frac{1}{2} \rnabla_{\mathring A}\Theta \, W_{\tau r\tau r}+\frac{1}{2} \rnabla^{\mathring B}\Theta \, W_{r\tau\mathring  A\mathring  B}+ \rnabla_{\mathring A}L_{\tau r}
\nonumber
\\
&
+ \frac{1}{2}\xi^{\mathcal{N}} _{\mathring B} (L_{\mathring A}{}^{\mathring B})_{\mathrm{tf}}
-\frac{1}{2}\Xi^{\mathcal{N}} _{\mathring  A}{}^{\mathring B}L_{\tau \mathring  B}+ \frac{1}{4}\theta^{-\mathcal{N}}  L_{\tau \mathring  A}
\,,
\label{ODE_L0A_0}
\\
\partial_{\tau}\Theta \, W_{r\tau r\mathring  A} |_{\mathcal{N}} 
=& 
 - \rnabla_{\mathring A} L_{\tau r}-  \rnabla^{\mathring B} ( L_{\mathring A\mathring B})_{\mathrm{tf}}
+  \sigma^{\mathcal{N}} _{\mathring A}{}^{\mathring B}L_{r\mathring  B}
+\frac{1}{2}\Xi^{\mathcal{N}} _{\mathring A}{}^{\mathring B} L_{\tau \mathring  B}
+\frac{1}{4}\theta^{-\mathcal{N}}  L_{\tau \mathring A}
\nonumber
\\
&
+ \partial_r\Theta \, W_{r\tau\tau\mathring  A} + \rnabla_{\mathring A}\Theta \,W_{\tau r\tau r}
\,.
\end{align}
The intial data for \eq{ODE_L0A_0} follow from the data on $\Sigma$ by continuity,  $L_{r\mathring A}|_S=0$.

The $(\tau r r)$-component of \eq{conf2} together with the contracted second Bianchi identity %$\nabla_{\nu}L_{\mu}{}^{\nu}=\frac{1}{6}\nabla_{\mu}R$
provides an ODE for $L_{rr}|_{\mathcal{N}}$,
\begin{equation}
 2\partial_{\tau} L_{rr}|_{\mathcal{N}} =\partial_r\Theta \, W_{ \tau r\tau r}+  \rnabla^{\mathring A}\Theta \, W_{r\tau r \mathring  A}
+\theta^{-\mathcal{N}} L_{\tau r} 
 -\Big(  \rnabla_{\mathring A}-\frac{3}{2}\xi^{\mathcal{N}}_{\mathring A}\Big) L_{ r}{}^{\mathring A}
 +\frac{1}{2}\Xi^{\mathcal{N}}_{\mathring A\mathring B} (L^{\mathring A\mathring B})_{\mathrm{tf}} 
+ \frac{1}{6}\partial_rR
\,.
\end{equation}
Again, the  data at the intersection sphere follow from those on $\Sigma$,  $L_{rr}|_S=0$.

The remaining components for the rescaled Weyl tensor follow from the  Bianchi equation and algebraic symmetries of the Weyl tensor,
%\tim{check again}
%
\begin{equation}
%\Big(
\partial_{\tau} ( W_{r\mathring A r\mathring B})_{\mathrm{tf}} 
%- 3 (\sigma^{\mathcal{N}}_{(\mathring A}{}^{\mathring C} W_{\mathring B) r\mathring  C r})_{\mathrm{tf}}\Big)
\Big|_{\mathcal{N}}=
\frac{3}{4}\Xi^{\mathcal{N}}_{(\mathring A}{}^{\mathring C}W_{r\tau  \mathring B)\mathring  C}+\frac{3}{4}\Xi^{\mathcal{N}}_{\mathring  A\mathring  B}W_{\tau r\tau r}
 + \Big( (\rnabla_{(\mathring A} -\frac{7}{2}\xi^{\mathcal{N}}_{(\mathring A}) W_{\mathring B)r\tau r}
\Big)_{\mathrm{tf}}
\end{equation}
where the initial data are determined by the radiation field at $S$.

From all these equations one may determine smooth expansions of all the relevant fields on $\mathcal{N}$ near $S$ (assuming, as a matter of course, that the no-logs condition \eq{no-logs_conditionA} holds).

We obtain the following
\begin{proposition}
\begin{enumerate}
\item
Consider two smooth null hypersurfaces  $\mathcal{N}$ and $\Sigma$ in a $3+1$-dimensional manifold with transverse intersection along a
smooth submanifold $S\cong {S}^2$ in adapted null coordinates (so that $\mathcal{N}=\{r=1\}$, $\Sigma=\{\tau=-1\}$ and $S=\{\tau=-1, r=1\}$).
Given an  initial data set which consists of 
\begin{enumerate}
\item[(i)] a smooth family $\tau\mapsto \gamma_{\mathring A\mathring B}(\tau)$ of Riemannian metrics on $\mathcal{N}$,
\item[(ii)] a smooth family  $r\mapsto W_{\mathring A\mathring B}(r) $ of  symmetric, $s$-tracefree tensor fields on $\Sigma$ representing the radiation field,
\item[(iii)] a  function $\varsigma$, a 1-form $\varsigma_{\mathring A}$ and two symmetric trace-free tensors $ \Sigma^{(n)}_{\mathring A\mathring B}$, $n=0,1$ on $S$,
\end{enumerate}
and assume that the no-logs condition \eq{no-logs_conditionA} is satisfied at $S$. Then
there exists a unique smooth (continuous at $S$) solution $(\Theta, s, g_{\mu\nu}, L_{\mu\nu}, W_{\mu\nu\sigma\rho})$ to the characteristic constraint equations induced by the MCFE on $\mathcal{N}\cup\Sigma$ in the gauge described in Section~\ref{gauge_adapted_null}
such  that 
\begin{enumerate}
\item[(a)] $\gamma_{\mathring A\mathring B}=[g_{\mathring A\mathring B}]|_{\mathcal{N}}$,
\item[(b)] $W_{\mathring A\mathring B} = (W_{r \mathring  A r \mathring B})_{\mathrm{tf}}|_{\Sigma}$,
\item[(c)] $\varsigma=4 W_{\tau r \tau r}|_S$,  $\varsigma_{\mathring A}=8W_{\tau r \tau\mathring A}|_S$,
$ \Sigma^{(0)}_{\mathring A\mathring B}=\Xi^{\Sigma}_{\mathring A\mathring B}|_S$
and $ \Sigma^{(1)}_{\mathring A\mathring B}=\partial_r\Xi^{\Sigma}_{\mathring A\mathring B}|_S$.
\item[(d)] $\Sigma=\scri^-$.
\end{enumerate}
One may regard the set $(\gamma_{\mathring A\mathring B}, W_{\mathring A\mathring B}, \varsigma, \varsigma_{\mathring A}, \Sigma^{(0)}_{\mathring A\mathring B}, \Sigma^{(1)}_{\mathring A\mathring B} )$ as  ``physical'' seed data
 for the evolution equations.%
\footnote{
As  trace-free symmetric tensors on $S^2$, $ \Sigma^{(0)}_{\mathring A\mathring B}$ and $ \Sigma^{(1)}_{\mathring A\mathring B}$  are both determined (via Hodge decomposition) in terms of  2 functions.
Here we use a gauge where $r_{\mathcal{N}}=1$.
In fact this gauge freedom can alternatively be employed to prescribe one of these functions.
This freedom will  be relevant when it is shifted to $I^-$ in the next section, and  it is   shown in Section~\ref{sec_yet_another}
that this is possible.
%The field $\Sigma_{\mathring A\mathring B}$ involves a gauge freedom which arises from the freedom to choose $r_{\mathcal{N}}$.
}
\item Given seed data $(\gamma_{\mathring A\mathring B}, W_{\mathring A\mathring B}, \varsigma, \varsigma_{\mathring A}, \Sigma^{(0)}_{\mathring A\mathring B}, \Sigma^{(1)}_{\mathring A\mathring B} )$  it follows from the results in \cite{CCTW, kannar,  Rendall}
that an (up to conformal diffeomorphisms) unique solution to the MCFE exists in some neighborhood to the future of $\scri^-=\{\tau=-1, 0 < r\leq 1\}$ (and $\mathcal{N}$).
\end{enumerate}
\end{proposition}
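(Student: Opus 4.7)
The plan is to construct the solution hierarchically on each of $\Sigma=\scri^-$ and $\mcN$ separately and then match at $S$. First I would solve the characteristic constraints on $\Sigma$ exactly as in Appendix~\ref{app_charact_constraints}: the gauge choices \eq{conformal_freedom}-\eq{gauge_cond_R} fix $\kappa_{\Sigma}=-2/r$, $\theta^+_{\Sigma}=0$, $g^{rr}|_{\Sigma}=\chi(r)$, and all the Christoffel components \eq{coordChristoffel1}-\eq{coordChristoffel13}. The free radiation field $W_{\mathring A\mathring B}(r)$ determines $\Xi^{\Sigma}_{\mathring A\mathring B}$ by integrating \eq{expression_d1A1B} with the two tensors $\Sigma^{(0)}_{\mathring A\mathring B}$, $\Sigma^{(1)}_{\mathring A\mathring B}$ serving as integration functions at $S$. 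The cascade \eq{constraint1}-\eq{constraint12} followed by \eq{LAB_constraint}-\eq{ODE_Wfinal} then yields all fields on $\Sigma$; the remaining integration functions in the ODEs for $W_r{}^r{}_r{}^r$, $W_{\mathring A}{}^r{}_r{}^r$, $(W_{\mathring A}{}^r{}_{\mathring B}{}^r)_{\mathrm{tf}}$ and $L^{rr}$ are imposed by continuity with the corresponding quantities constructed on $\mcN$.

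On $\mcN$ I would proceed through the hierarchy outlined in Section~\ref{gauge_adapted_null}. First, the conformal class $\gamma_{\mathring A\mathring B}$ together with $\theta^+_{\mcN}=0$ and the datum $\Omega|_S$ forced by $g_{\mathring A\mathring B}|_S=s_{\mathring A\mathring B}$ determines a smooth $g_{\mathring A\mathring B}|_{\mcN}$; then $L_{\tau\tau}|_{\mcN}=-\tfrac12|\sigma^{\mcN}|^2$ by \eq{L00_eqn_N}, and $\Theta$ by integrating its second-order ODE with the gauge data $\Theta|_S=0$, $\partial_{\tau}\Theta|_S=2$. The Fuchsian equation for $\xi^{\mcN}_{\mathring A}$ is the first delicate step: it has a critical resonance at $S$, and smoothness requires the no-logs condition \eq{no-logs_conditionA}, equivalently $(h^{(2)}_{\mathring A\mathring B})_{\mathrm{tf}}=0$, which is part of the hypothesis; granted this, $\xi^{\mcN}_{\mathring A}$ is smooth with $\varsigma_{\mathring A}=\partial^2_{\tau\tau}\xi^{\mcN}_{\mathring A}|_S$ as free datum, and $L_{\tau\mathring A}$ follows algebraically. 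Next, the Fuchsian ODE for $\theta^-_{\mcN}$ is integrated with $\theta^-_{\mcN}|_S=-2\theta^+_{\Sigma}=0$ and the free datum $\varsigma=\partial^2_{\tau\tau}\theta^-_{\mcN}|_S$; from there $s$, $\partial_r\Theta$, $\mathrm{tr}_g L|_{\mcN}$ and $L_{\tau r}$ are algebraic. Finally $\Xi^{\mcN}_{\mathring A\mathring B}$ solves a further Fuchsian ODE whose free datum is precisely $\Sigma^{(1)}_{\mathring A\mathring B}$ (and whose value at $S$ is $\Sigma^{(0)}_{\mathring A\mathring B}$ by continuity), $(L_{\mathring A\mathring B})_{\mathrm{tf}}$ is then determined, and the Bianchi cascade delivers $L_{r\mathring A}$, $L_{rr}$ and all components of the rescaled Weyl tensor, where $\varsigma$ and $\varsigma_{\mathring A}$ turn out to be proportional to $W_{\tau r\tau r}|_S$ and $W_{\tau r\tau\mathring A}|_S$ respectively, as asserted in (c).

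The main obstacle is verifying that the boundary data fed into the Fuchsian ODEs on $\mcN$ at $S$ are consistent with the corresponding limits of the $\Sigma$-cascade at $S$, and that no-logs condition \eq{no-logs_conditionA} is the \emph{only} regularity obstruction — i.e.\ that once it holds, the full tower of fields extends smoothly across $S$. This is essentially a bookkeeping exercise: one matches $\Theta$, $g_{\mathring A\mathring B}$, their first transverse derivatives, the Schouten-tensor components and the Weyl components order by order, using that the algebraic-continuity conditions $L_{r\mathring A}|_S=0$, $L_{rr}|_S=0$, $\Xi^{\mcN}_{\mathring A\mathring B}|_S=\Xi^{\Sigma}_{\mathring A\mathring B}|_S$, and $\partial_{\tau}\Xi^{\mcN}_{\mathring A\mathring B}|_S=\partial_r\Xi^{\Sigma}_{\mathring A\mathring B}|_S$ identify the six scalar/tensor fields $(\varsigma,\varsigma_{\mathring A},\Sigma^{(0)}_{\mathring A\mathring B},\Sigma^{(1)}_{\mathring A\mathring B})$ as the genuinely free parameters; uniqueness is then immediate by the uniqueness of ODE solutions.

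For part (2), the fields constructed in part (1) form a continuous (in fact smooth away from $S$) initial data set on $\mcN\cup\Sigma$ for the symmetric hyperbolic subsystem extracted from the MCFE in a generalized wave-map gauge. Local existence of a $C^\infty$ solution to the MCFE in a neighbourhood of the future of $\mcN\cup\Sigma$ then follows from Rendall's characteristic existence theorem~\cite{Rendall}, as adapted to the CFE-setting by K\'ann\'ar~\cite{kannar} and sharpened in \cite{CCTW}; uniqueness up to conformal diffeomorphism is a consequence of the standard constraint-propagation argument together with the gauge rigidity described in Section~\ref{gauge_adapted_null}.
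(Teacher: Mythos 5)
Your proposal is correct and follows essentially the same route as the paper: the gauge fixing of Section~\ref{gauge_adapted_null}, the hierarchical ODE cascades on $\Sigma$ (Appendix~\ref{app_charact_constraints}) and on $\mathcal{N}$ with the Fuchsian analysis at $S$ and the no-logs condition \eq{no-logs_conditionA} as the sole regularity obstruction, the identification of $(\varsigma,\varsigma_{\mathring A},\Sigma^{(0)}_{\mathring A\mathring B},\Sigma^{(1)}_{\mathring A\mathring B})$ as the free integration data matched by continuity at $S$, and the appeal to \cite{Rendall, kannar, CCTW} for part (2). The only detail left unverified is the precise normalisation $\varsigma=4W_{\tau r\tau r}|_S$, $\varsigma_{\mathring A}=8W_{\tau r\tau\mathring A}|_S$ rather than mere proportionality, which is a routine computation from the Bianchi cascade on $\mathcal{N}$.
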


\begin{remark}
{\rm
Note that the metric becomes singular at the critical set $I^-=\{\tau=-1,r=0\}$ so nothing can be said about spatial infinity in this scheme.
The solution also exists in some neighborhood to the future of $\mathcal{N}$, supposing that the Raychaudhuri equation does not produce conjugate points.
Since our main interest lies in the behavior near $\scri^-$ and the critical set $I^-$, this is irrelevant for our purposes.
}
\end{remark}

\subsubsection{An alternative initial data set}
\label{app_alternative_data}

In view of an analysis of the behavior of the CFE at spatial infinity  it is convenient to prescribe 
as many data as possible at the critical set $I^-\cong{S}^2$.
Instead of $(\tau\mapsto\gamma_{\mathring A\mathring B}(\tau), r\mapsto W_{\mathring A\mathring B}(r), \varsigma, \varsigma_{\mathring A}, \Sigma^{(0)}_{\mathring A\mathring B}, \Sigma^{(1)}_{\mathring A\mathring B}  )$ let us therefore consider the following initial data set
\begin{enumerate}
\item[(i)] 
 a smooth family  $r\mapsto W_{\mathring A\mathring B}(r) $ of  symmetric, $s$-tracefree tensor fields on $\scri^-$ representing the radiation field,
\item[(ii)] two  functions $M=\frac{1}{2}W_r{}^r{}_r{}^r|_{I^-}$ (which can be identified as the (ADM) mass aspect, cf.\ Section~\ref{sec_solution_constr_gen}) and  $N:=-\frac{1}{8}\epsilon^{\mathring A\mathring B}\mcD_{[\mathring A}\mcD^{\mathring C}\partial^2_r \Xi^{\scri^-}_{\mathring B]\mathring C}|_{I^-}$ (which can be identified as the dual mass aspect, i.e.\ a NUT-like parameter, cf.\ Section~\ref{sec_dual_mass}).
\item[(iii)]
a 1-form
$L_{\mathring A}=\frac{1}{2}\partial^2_rW_{\mathring A}{}^{r}{}_{ r}{}^r|_{I^-}$
 on $I^-$ (which is related to the angular momentum), and
\item[(iv)] a set $\{c^{(n+2,n)}_{\mathring A\mathring B}\}$
of   symmetric, $s$-tracefree tensors on $I^-$, where $c^{(n+2,n)}_{\mathring A\mathring B}$ corresponds to the 
 $(n+2)$th-order expansion coefficient of
$\partial^n_{\tau}(W_{\tau \mathring A\tau \mathring B})_{\mathrm{tf}}|_{I^-}$, $n\geq 0$.
\end{enumerate}
$M$ and $L_{\mathring A}$ provide the initial data for the ODEs  \eq{adm_ode}-\eq{adm2_ode}
for $W_r{}^r{}_r{}^r|_{\scri^-}$ and $W_{\mathring A}{}^{r}{}_{ r}{}^r|_{\scri^-}$ (note  that the latter one is of Fuchsian type at $I^-$). They substitute  the data $\varsigma$ and $\varsigma_{\mathring A}$ at $S$, to which they are, via the constraints, in one-to-one correspondence.
Similarly, the freedom to prescribe $\Sigma^{(0)}_{\mathring A\mathring B}\equiv \Xi^{\scri^-}_{\mathring A\mathring B}|_S$ and $\Sigma^{(1)}_{\mathring A\mathring B}\equiv \partial_r\Xi^{\scri^-}_{\mathring A\mathring B}|_S$ can be shifted by the second-order ODE \eq{expression_d1A1B} to the freedom to prescribe $\partial_r \Xi^{\scri^-}_{\mathring A\mathring B}|_{I^-}$ and 
$\partial^2_r \Xi^{\scri^-}_{\mathring A\mathring B}|_{I^-}$.
In the previous section we have chosen a gauge where $r_{\mathcal{N}}=1$. This gauge freedom, which arises from a freedom to rescale $r$
can be used to prescribe  $\mcD^{\mathring A} \mcD^{\mathring B}\partial^2_r \Xi_{\mathring A\mathring B}|_{I^-}$  instead (cf.\ Section~\ref{sec_yet_another}), so
that, via Hodge decomposition, the function $N$ is left as ``physical'' part of the data.
The second datum  $\partial_r \Xi^{\scri^-}_{\mathring A\mathring B}|_{I^-}$ needs to vanish if one requires the rescaled Weyl tensor to be bounded at $I^-$
(cf.\ \eq{boundedness_Weyl}), whence we do not  consider it here (it is tacitly assumed to be trivial).

Let us derive equations for $\partial_{\tau}^n (W_{\mathring A}{}^r{}_{\mathring B}{}^r)_{\mathrm{tf}}|_{\scri^-}$.
For this  set 
$
\nabla^{(n)}_{\tau} := \underbrace{\nabla_{\tau}\dots\nabla_{\tau}}_{\text{$n$ times}}
$.
Suppose  that the fields $(\partial^{k+1}_{\tau}\Theta , \partial^{k}_{\tau} s, \partial^{k+1}_{\tau} g_{\mu\nu} ,\partial^{k}_{\tau} L_{\mu\nu}, \partial^{k}_{\tau} W_{\mu\nu\sigma\rho})|_{\scri^-}$, $k\leq n-1$,
have been computed from appropriate smooth seed data.
We  employ the MCFE to compute  $(\partial^{n+1}_{\tau}\Theta , \partial^{n}_{\tau} s, \partial^{n+1}_{\tau} g_{\mu\nu} ,\partial^{n}_{\tau} L_{\mu\nu}, \partial^{n}_{\tau} W_{\mu\nu\sigma\rho})$ on $\scri^-$. It turns out that  the equations are algebraic, except the ones  for $\partial^n_{\tau}L_{\tau\tau}$  and for  certain
components of the metric and the  rescaled Weyl tensor. For $n\geq 1$ 
we have
\begin{align}
\nabla^{(n+1)}_{\tau}\Theta|_{\scri^-}  =& -\nabla^{(n-1)}_{\tau} (\Theta L_{\tau\tau}) + g_{\tau\tau}\nabla^{(n-1)}_{\tau}  s \,,
\label{trans_scri1}
\\
\nabla^{(n)}_{\tau} s|_{\scri^-}  =& - \nabla^{(n-1)}_{\tau} (L_{\tau\nu}\nabla^{\nu}\Theta)
\,,
\label{trans_scri2}
\\
\nabla^{(n)}_{\tau}  L_{\alpha\sigma}|_{\scri^-} =& \nabla^{(n-1)}_{\tau}\nabla_{\alpha}L_{\tau\sigma}  -\nabla^{(n-1)}_{\tau}( \nabla_{\rho}\Theta \, W_{ \tau \alpha \sigma}{}^{\rho})
\,,
\label{trans_scri3}
\\
g^{\tau r} \nabla^{(n)}_{\tau}W_{\mu\nu\sigma r}|_{\scri^-} =&- g^{\tau r}\nabla^{(n-1)}_{\tau} \nabla_{r} W_{\mu\nu\sigma \tau } -  g^{rr}\nabla^{(n-1)}_{\tau}\nabla_{r} W_{\mu\nu\sigma r} 
-  g^{\mathring A\mathring B}\nabla^{(n-1)}_{\tau}\nabla_{\mathring A} W_{\mu\nu\sigma \mathring B} 
\,,
\label{trans_scri4}
\\
2g^{\tau r}\nabla^{(n)}_{\tau}\nabla_{r} L_{\tau\tau} |_{\scri^-} =&\frac{1}{6}\nabla^{(n+1)}_{\tau}R-g^{rr}\nabla^{(n)}_{\tau}\nabla_{r}L_{\tau r} 
-g^{\mathring A\mathring B}\nabla^{(n)}_{\tau}\nabla_{\mathring A}L_{\tau \mathring B} 
 + \nabla^{(n)}_{\tau}( \nabla_{\rho}\Theta \, W_{\tau r\tau}{}^{\rho})
\,,
\label{trans_scri5}
\\
\nabla^{(n)}_{\tau}R^{(H)}_{\mu\nu}[g]|_{\scri^-}  =& 2\nabla^{(n)}_{\tau} L_{\mu\nu} + \frac{1}{6}\nabla^{(n)}_{\tau} Rg_{\mu\nu}
\,,
\label{trans_scri6}
%\\
%\nabla^{(n)}_0R_{\mu\nu\sigma}{}^{\kappa}[ g] |_{\scri^-} &=& \nabla^{(n)}_0(\Theta W_{\mu\nu\sigma}{}^{\kappa}) + 2\left(g_{\sigma[\mu}\nabla^{(n)}_0 L_{\nu]}{}^{\kappa}  - \delta_{[\mu}{}^{\kappa}\nabla^{(n)}_0 L_{\nu]\sigma} \right)
%\,,
\end{align}
where $R^{(H)}_{\mu\nu}$ denotes the wave-map gauge reduced Ricci tensor \cite{CCM2}.
Indeed, we observe that \eq{trans_scri1}-\eq{trans_scri4} provide algebraic equations for   $(\partial^{n+1}_{\tau}\Theta , \partial^{n}_{\tau} s, \partial^{n}_{\tau} L_{\alpha\sigma }, \partial^{n}_{\tau} W_{\mu\nu\sigma r})|_{\scri^-} $ in terms of  the known fields $(\partial^{k+1}_{\tau}\Theta , \partial^{k}_{\tau} s, \partial^{k+1}_{\tau} g_{\mu\nu} ,\partial^{k}_{\tau} L_{\mu\nu}, \partial^{k}_{\tau} W_{\mu\nu\sigma\rho})|_{\scri^-} $, $k\leq n-1$, while \eq{trans_scri5} provides an ODE for $\partial_{\tau}^nL_{\tau\tau}|_{\scri^-} $ 
%(supposing that the gauge source function $R$ has been  given) 
with initial data determined by $L_{\tau\tau}$ on $\mathcal{N}$, \eq{L00_eqn_N}.
Then \eq{trans_scri6} provides  ODEs for $\partial^{n+1}_{\tau}g_{\mu\nu}|_{\scri^-} $.
%(supposing that gauge source functions $V^{\mu}$ have been given).
 The initial data
at $S$ are determined by $g_{\mu\nu}|_{\mathcal{N}}$.

Note that, due to the divergence of $g_{\tau r}|_{\scri^-} $,  these ODEs are of Fuchsian type at $I^-$, and note further that  the solutions to e.g.\  \eq{trans_scri4} might be unbounded at $I^-$. For  our current analysis, though, this does not cause any problems.

Finally, the second Bianchi identity and the algebraic symmetries of the rescaled Weyl tensor yield (cf.\ \cite{ttp1})
\begin{align}
0 =&\nabla^{(n)}_{\tau}  (\nabla_{\rho}W_{0(\mathring A\mathring B)}{}^{\rho})_{\mathrm{tf}}|_{\scri^-}
\\
=&
\frac{1}{2}g^{r r}\nabla^{(n)}_{\tau}  \nabla_{\tau} (W_{r\mathring Ar\mathring B})_{\mathrm{tf}}- g^{\tau r}\nabla^{(n)}_{\tau} \nabla_r(W_{\tau \mathring A\tau \mathring B})_{\mathrm{tf}}
 + \frac{1}{2}g_{\tau r}(g^{rr})^2\nabla^{(n)}_{\tau} \nabla_{r} (W_{r\mathring Ar\mathring B})_{\mathrm{tf}}
\nonumber
\\
&
+g^{\tau r}(\nabla^{(n)}_{\tau} \nabla_{(\mathring A} W_{\mathring B)\tau r\tau})_{\mathrm{tf}}
+g^{rr}\nabla^{(n)}_{\tau}( \nabla_{(\mathring A} W_{\mathring B)rr\tau})_{\mathrm{tf}}
\\
=&
- g^{\tau r}\nabla^{(n)}_{\tau} \nabla_r(W_{\tau \mathring A\tau \mathring B})_{\mathrm{tf}}
+ \frac{1}{4}g_{\tau r} (g^{rr})^2 \nabla^{(n)}_{\tau}\nabla_r (W_{rA\mathring r\mathring B})_{\mathrm{tf}} 
\nonumber
\\
&
+\frac{1}{2} g^{rr} \nabla^{(n)}_{\tau}(\nabla_{(\mathring A} W_{\mathring B)rr\tau })_{\mathrm{tf}}
+g^{\tau r}(\nabla^{(n)}_{\tau} \nabla_{(\mathring A} W_{\mathring B)\tau r\tau})_{\mathrm{tf}}
\,,
\label{trans_scri7}
\end{align}
where we used that
\begin{align*}
0 =&  g_{\tau r}\nabla^{(n)}_{\tau} (\nabla_{\rho}W_{r(\mathring A\mathring B)}{}^{\rho})_{\mathrm{tf}}|_{\scri^-}
\\
=&-  \nabla^{(n)}_{\tau}\nabla_{\tau }(W_{r \mathring A r\mathring B})_{\mathrm{tf}}
- \frac{1}{2} g_{\tau r}g^{rr}\nabla^{(n)}_{\tau}\nabla_r(W_{r \mathring A r\mathring B})_{\mathrm{tf}}
\nonumber
- \nabla^{(n)}_{\tau}(\nabla_{(\mathring A} W_{\mathring B)rr\tau })_{\mathrm{tf}}
\,.
\end{align*}
Equation \eq{trans_scri7} is of the form (recall that $\theta^{+\scri^-}=\sigma^{\scri^-}_{\mathring A\mathring B}=0$, $g^{\tau r}|_{\scri^-}=r$, $g^{r r}|_{\scri^-}=r^2$  and $\kappa=-2/r$, for $r< 1/3$),
\begin{equation}
 \Big( \partial_{r}-\frac{n+2}{r}\Big) \partial^{n}_{\tau}(W_{\tau\mathring  A\tau \mathring B})_{\mathrm{tf}} |_{\scri^-} 
=
\text{known smooth function}
\,.
\label{ODE_for_data}
\end{equation}
This equation is also valid for $n=0$.
In the usual approach  the initial data for these ODEs follow from \eq{eqn_N_W0A0B}
%
%\begin{equation}
%W_{\tau\mathring A\tau\mathring B}|_{\mathcal{N}}=-\frac{1}{\Theta}g_{\mathring  B\mathring  C}\partial_{\tau}\sigma^{\mathcal{N}}_{\mathring A}{}^{\mathring C}
%\label{data_N_W0A0B}
%\end{equation}
%
and the continuity requirement at $S$ (the right-hand side of \eq{eqn_N_W0A0B} divided by $\Theta$
is regular at $S$).
% (cf.\ \cite{ttp3})
%$$
%\sigma^{\mathcal{N}}_{\mathring A}{}^{\mathring B} = \frac{1}{2}(  h^{(1)}_{\mathring A}{}^{\mathring B})_{\mathrm{tf}} + O(1+\tau)^2
%\,.
%$$
What actually matters from the data given on $\mathcal{N}$  is  thus  the expansion of $W_{\tau\mathring A\tau\mathring B}|_{\mathcal{N}} $ at $S$, and this is determined by the functions $ (h^{(k)}_{\mathring A\mathring B})_{\mathrm{tf}}$, $k\geq 3$.

Here we want to  prescribe   data at $I^-=\{r=0\}$.
The data which can be specified for  $\partial^{n}_{\tau}(W_{\tau\mathring A\tau\mathring B})_{\mathrm{tf}} |_{\scri^-}$, $n\geq 0$, 
correspond to its  $(n+2)$nd-order expansion coefficient $c^{(n+2,n)}_{\mathring A\mathring B}$ at $I^-$.
We then compute all the $( h^{(k)}_{\mathring A\mathring B})_{\mathrm{tf}}$'s, $k\geq 3$,
at $S$ by solving the hierarchical system above, and, using Borel summation (cf.\ e.g.\ \cite{cj}),
extend them to data $\gamma_{\mathring  A\mathring B}$ on $\mathcal{N}$. Our analysis at $I^-$ in this work
 does not depend on this  extension.
Note  that $ (h^{(1)}_{\mathring A\mathring B})_{\mathrm{tf}}$ is determined by \eq{expression_d1A1B}  while $(h^{(2)}_{\mathring A\mathring B})_{\mathrm{tf}}$ follows from the no-logs condition.
A solution to \eq{trans_scri7} will generally be polyhomogeneous at $I^-$. If this already happened for some $k< n$ the right-hand side might be polyhomogenous at $I^-$ as well. For our current discussion non-smoothness at $I^-$ is irrelevant.

\begin{proposition}
The data $(W_{\mathring A\mathring B}(r), M,N, L_{\mathring A}, c^{(n+2,n)}_{\mathring A\mathring B})$  for the asymptotic characteristic initial value problem determine a unique (up to gauge) solution of the MCFE supposing that an  extension of the data $\gamma_{\mathring A\mathring B}(r)$ on $\mathcal{N}$ has been given, whose Taylor expansion
at $S$ is determined by $(W_{\mathring A\mathring B}(r), M,N, L_{\mathring A}, c^{(n+2,n)}_{\mathring A\mathring B})$.
All solutions with bounded rescaled Weyl tensor $W_{ijkl}|_{\scri^-}$ at $I^-$ can be generated by such data.
\end{proposition}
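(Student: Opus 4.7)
The plan is to reduce the new data set $(W_{\mathring A\mathring B}(r), M, N, L_{\mathring A}, c^{(n+2,n)}_{\mathring A\mathring B})$ to the ``standard'' seed data set of the preceding proposition and then appeal to the existence result of \cite{kannar, Rendall, CCTW}. First, I would observe that on $\scri^-$ the characteristic constraints of Appendix~\ref{app_charact_constraints} are a hierarchy of ODEs in $r$, and that the ODEs \eq{adm_ode}-\eq{adm2_ode} for $W_r{}^r{}_r{}^r|_{\scri^-}$ and $W_{\mathring A}{}^r{}_r{}^r|_{\scri^-}$ can equally be integrated \emph{from $I^-=\{r=0\}$}, with integration functions $M:=\frac12 W_r{}^r{}_r{}^r|_{I^-}$ and $L_{\mathring A}:=\frac12\partial_r^2 W_{\mathring A}{}^r{}_r{}^r|_{I^-}$. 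The boundedness hypothesis on $W_{ijkl}|_{\scri^-}$ at $I^-$ forces $\Xi^{(1)}_{\mathring A\mathring B}=0$ (cf.\ \eq{boundedness_Weyl}) and hence pins down $\partial_r\Xi^{\scri^-}_{\mathring A\mathring B}|_{I^-}=0$, while the remaining free integration function $\partial_r^2\Xi^{\scri^-}_{\mathring A\mathring B}|_{I^-}$ in \eq{expression_d1A1B} is, after Hodge decomposition and the use of the residual coordinate freedom of Section~\ref{sec_yet_another} to prescribe $\mcD^{\mathring A}\mcD^{\mathring B}\partial_r^2\Xi^{\scri^-}_{\mathring A\mathring B}|_{I^-}$, entirely captured by the single function $N$. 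Evaluating the so-constructed solutions at $S=\{r=1\}$ then yields the data $\varsigma,\varsigma_{\mathring A},\Sigma^{(0)}_{\mathring A\mathring B},\Sigma^{(1)}_{\mathring A\mathring B}$ required at $S$ by the preceding proposition.

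Next I would build the Taylor expansion of $\gamma_{\mathring A\mathring B}$ at $S$ from the tower $\{c^{(n+2,n)}_{\mathring A\mathring B}\}$. The Fuchsian ODEs \eq{ODE_for_data} determine, inductively in $n$, $\partial_\tau^n(W_{\tau\mathring A\tau\mathring B})_{\mathrm{tf}}|_{\scri^-}$ from $c^{(n+2,n)}_{\mathring A\mathring B}$ and from already-computed lower-order data (which in turn are generated from the radiation field and the integration functions $M,N,L_{\mathring A}$ via the algebraic/ODE system \eq{trans_scri1}-\eq{trans_scri6}). Evaluating these at $S$ and combining with \eq{eqn_N_W0A0B} and the no-logs condition \eq{no-logs_conditionA}, together with the gauge restriction $\theta^+_{\mathcal{N}}=0$, one recovers recursively the $s$-trace-free parts $(h^{(k)}_{\mathring A\mathring B})_{\mathrm{tf}}$ for all $k\geq 1$. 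By Borel summation one can then choose a smooth $\gamma_{\mathring A\mathring B}(r)$ on $\mathcal{N}$ realising this formal Taylor series at $S$; the precise extension is irrelevant for the behaviour at $I^-$, which is our main concern.

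With these identifications in hand, the first half of the statement follows directly: the assembled set $(\gamma_{\mathring A\mathring B}, W_{\mathring A\mathring B}, \varsigma, \varsigma_{\mathring A}, \Sigma^{(0)}_{\mathring A\mathring B}, \Sigma^{(1)}_{\mathring A\mathring B})$ lies in the class covered by the preceding proposition (the no-logs condition \eq{no-logs_conditionA} is verified en route), and hence yields an (up to conformal diffeomorphism) unique solution of the MCFE in a future neighborhood of $\scri^-\cup\mathcal{N}$. For the converse, given any solution to the MCFE whose rescaled Weyl tensor is bounded at $I^-$, I would simply \emph{read off} $M,N,L_{\mathring A}, W_{\mathring A\mathring B}$ and the coefficients $c^{(n+2,n)}_{\mathring A\mathring B}$ as the indicated limits/derivatives at $I^-$, and $\gamma_{\mathring A\mathring B}$ as the conformal class of $g_{\mathring A\mathring B}|_{\mathcal{N}}$. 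A consistency check shows that feeding these data back into the construction above reproduces, via uniqueness of the ODE solutions on $\scri^-$ (uniqueness at the Fuchsian singularity $r=0$ being guaranteed by the boundedness assumption selecting the regular branch), the original restrictions of the fields.

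The main obstacle, and the only step where some care is required, is the uniqueness of the ODE integration from the Fuchsian singular point $I^-$: the equations \eq{adm2_ode}, \eq{ODE_Wfinal} and their higher-transverse-derivative analogs \eq{ODE_for_data} have singular coefficients at $r=0$, so one must verify that the boundedness requirement on $W_{ijkl}|_{\scri^-}$ selects a one-parameter family of solutions parametrised precisely by the chosen integration functions at $I^-$, and excludes the logarithmic/unbounded branch. This is exactly the content of the analysis carried out in Section~\ref{sec_solution_constr_gen} for the lowest orders and formalised by the recursion in Section~\ref{sec_structure_eqn_scri}, which together certify that the bijection between $(W_{\mathring A\mathring B}, M, N, L_{\mathring A}, c^{(n+2,n)}_{\mathring A\mathring B})$ and the restriction to $\scri^-$ of the bounded-at-$I^-$ MCFE fields is well defined.
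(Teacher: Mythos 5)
Your proposal is correct and follows essentially the same route as the paper, whose "proof" of this proposition is precisely the surrounding discussion of Appendix~\ref{app_alternative_data}: integrating the Fuchsian constraint ODEs from $I^-$ with $M$, $N$, $L_{\mathring A}$, $c^{(n+2,n)}_{\mathring A\mathring B}$ as the coefficients of the homogeneous solutions, trading the residual $r$-reparameterization freedom for $\mcD^{\mathring A}\mcD^{\mathring B}\partial_r^2\Xi_{\mathring A\mathring B}|_{I^-}$ so that only $N$ survives as physical, reconstructing the Taylor coefficients $(h^{(k)}_{\mathring A\mathring B})_{\mathrm{tf}}$ at $S$ via \eq{trans_scri1}--\eq{ODE_for_data} and Borel summation, and then invoking the preceding proposition together with \cite{kannar, Rendall, CCTW}. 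Your extra emphasis on boundedness "selecting the regular branch" at $r=0$ is slightly off in spirit (the homogeneous solutions of \eq{ODE_for_data} all vanish at $I^-$; boundedness only enters through \eq{boundedness_Weyl} to kill $\Xi^{(1)}_{\mathring A\mathring B}$), but this does not affect the argument.
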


\end{document}